\newcommand{\Stab}{\mathbf{Stab}} \newcommand{\stab}{\Stab}
\newcommand{\T}{\mathbf{T}}
\newcommand{\U}{\mathrm{U}}
\newcommand{\maxcsp}{\text{\sc Max-CSP}\xspace}
\newcommand{\maxcut}{\text{\sc Max-Cut}\xspace}
\newcommand{\qmaxcut}{\text{\sc Quantum} \text{\sc Max-Cut}\xspace}
\newcommand{\heis}{\text{\sc QMax-Cut}\xspace}
\newcommand{\prodval}{\text{\sc Prod}\xspace}
\newcommand{\hsdp}{\text{\sc SDP}_{\text{\sc QMC}}\xspace}
\newcommand{\mcsdp}{\text{\sc SDP}_{\text{\sc MC}}\xspace}
\newcommand{\prodsdp}{\text{\sc SDP}_{\text{\sc Prod}}\xspace}
\newcommand{\val}{\text{\sc Val}\xspace}
\newcommand{\Normed}[1]{\frac{#1}{\NORM{#1}}}
\newcommand{\inr}[2]{\langle #1, #2 \rangle}
\newcommand{\spatial}[3]{\mathcal{S}_{#1,#2} #3}
\newcommand{\valued}[3]{\mathcal{V}_{#1,#2} #3}
\newcommand{\ggraph}{\mathcal{G}}
\newcommand{\hgraph}{\mathcal{H}}
\newcommand{\agw}{\alpha_{\mathrm{GW}}}
\newcommand{\agp}{\alpha_{\mathrm{GP}}}
\newcommand{\abov}{\alpha_{\mathrm{BOV}}}
\newcommand{\apk}{\alpha_{\mathrm{GP}}}
\newcommand{\rgw}{\rho_{\mathrm{GW}}}
\newcommand{\rgp}{\rho_{\mathrm{GP}}}
\newcommand{\rbov}{\rho_{\mathrm{BOV}}}
\newcommand{\rpk}{\rho_{\mathrm{GP}}}
\newcommand{\unifS}{\omega}
\newcommand{\ultraS}{\tilde \omega}
\newcommand{\jnote}[1]{}
\newcommand{\ynote}[1]{}
\newcommand{\knote}[1]{}
\newcommand{\onote}[1]{}
\newcommand{\jnnote}[1]{}
\newcommand{\pE}{\widetilde{\E}}
\let\div\relax
\DeclareMathOperator{\div}{div}
\begin{document}

\title{Unique Games hardness of Quantum Max-Cut,\\
and a conjectured vector-valued Borell's inequality}

\date{}

\author{
\and Yeongwoo Hwang\thanks{yeongwoo@cs.utexas.edu}\\
 \small{\sl The University of Texas at Austin}
\and Joe Neeman\thanks{jneeman@math.utexas.edu}\\
 \small{\sl The University of Texas at Austin}
\and Ojas Parekh\thanks{odparek@sandia.gov}\\
 \small{\sl Sandia National Laboratories}
\and Kevin Thompson\thanks{kevthom@sandia.gov}\\
 \small{\sl Sandia National Laboratories}
\and John Wright\thanks{jswright@berkeley.edu}\\
 \small{\sl The University of California, Berkeley}
}

\maketitle

\begin{abstract}
The \emph{Gaussian noise stability} of a function $f:\R^n \rightarrow \{-1, 1\}$
is the expected value of $f(\bx) \cdot f(\by)$ over $\rho$-correlated Gaussian random variables~$\bx$ and~$\by$.
Borell's inequality states that for $-1 \leq \rho \leq 0$, this is minimized by the mean-zero halfspace $f(x) = \mathrm{sign}(x_1)$.
In this work, we conjecture that a natural generalization of this result holds for functions $f:\R^n \rightarrow S^{k-1}$
which output $k$-dimensional unit vectors.
Our main conjecture, which we call the \emph{vector-valued Borell's inequality}, asserts that  the expectation $\E_{\bx \sim_\rho \by}\langle f(\bx), f(\by)\rangle$
is minimized by the function $f(x) = x_{\leq k} / \Vert x_{\leq k} \Vert$, where $x_{\leq k} = (x_1, \ldots, x_k)$.
We give several pieces of evidence in favor of this conjecture,
including a proof that it does indeed hold in the special case of $n = k$.

As an application of this conjecture,
we show that it implies several hardness of approximation results for a special case of the local Hamiltonian problem related to the anti-ferromagnetic Heisenberg model
known as Quantum Max-Cut.
This can be viewed as a natural quantum analogue of the classical Max-Cut problem
and has been proposed as a useful testbed for developing algorithms.
We show the following, assuming the vector-valued Borell's inequality:
\begin{enumerate}
\item There exists an integrality gap of $0.498$ for the basic SDP, matching the rounding algorithm of Gharibian and Parekh~\cite{GP19}.
	Combined with the work of Anshu, Gosset, and Morenz~\cite{AGM20}, this shows that the basic SDP does not achieve the optimal approximation ratio.
\item It is Unique Games-hard (UG-hard) to compute a $(0.956+\eps)$-approximation to the value of the best product state, matching an approximation algorithm due to Briët, Oliveira, and Vallentin~\cite{BOV10}.  
\item It is UG-hard to compute a $(0.956+\eps)$-approximation to the value of the best (possibly entangled) state. 
\end{enumerate}
Our results also apply to the problem of Rank-$k$ \maxcut
considered by Briët, Oliveira, and Vallentin~\cite{BOV10}
and show that it is UG-hard to outperform their approximation algorithm
for any fixed~$k$, again assuming our conjecture.
\end{abstract}

\thispagestyle{empty}

\newpage
\thispagestyle{empty}

\begingroup
\hypersetup{linktocpage}
\tableofcontents
\endgroup

\thispagestyle{empty}

\newpage

\part{Introduction}

\setcounter{page}{1}

\section{Introduction}
Over the last~$30$ years, starting with the proof of the PCP theorem~\cite{AS98,ALM+98},
researchers have gained a nearly complete understanding of the approximability of classical constraint satisfaction problems (CSPs),
modulo the still-unproven Unique Games Conjecture (UGC) of Khot~\cite{Kho02}.
Due to work of Raghavendra~\cite{Rag08,Rag09},
we know that for each CSP,
there is a ``canonical algorithm'' which achieves a certain approximation ratio~$\alpha > 0$,
and that it is~$\NP$-hard for any polynomial-time algorithm to do better than~$\alpha$, assuming the UGC.
This canonical algorithm is based on the ``basic'' \emph{semidefinite programming (SDP)} relaxation of the CSP,
which is itself derived from the second level\ignore{\footnote{A level-$l$ (or degree-$l$) SoS is a sum of polynomials, each of which are squares, of degree at most $l$.}} of the \emph{sum of squares (SoS)} hierarchy. 

However, for the \emph{quantum} analogue of CSPs, known as the \emph{local Hamiltonian problem}, our understanding remains incomplete.
The quantum analogue of the PCP theorem remains a conjecture,
and there is no general theory of optimal algorithms, be they quantum or classical.
In addition,
there is a natural quantum analogue of the SoS hierarchy,
but in spite of recent progress in analyzing it,
many basic questions remain open,
such as how best to round its solutions.

On top of this, approximation algorithms for the local Hamiltonian problem
face an additional challenge not present for classical CSPs, namely
what kind of quantum state should they output?
The local Hamiltonian problem is an optimization problem over $n$-qubit states;
however, actually outputting a general $n$-qubit state is infeasible on a classical computer,
as it requires exponentially many bits to describe.
Instead, algorithms typically output states from a subset of quantum states, called an \emph{ansatz},
which can be  efficiently represented on a classical computer.
By far the most popular is the ansatz of \emph{product states},
i.e.\ states of the form $\ket{\psi_1} \ot \cdots \ot \ket{\psi_n}$,
which possess no entanglement
but often give surprisingly good approximations to the optimal value~\cite{BH16}.
In general, though, we lack a theory of optimal ansatzes for polynomial-time classical algorithms.

To study these questions, we focus on a special case of the local Hamiltonian problem known as \qmaxcut,
which has been suggested as a useful testbed for designing approximation algorithms~\cite{GP19}.
\qmaxcut is a natural maximization variant of the \emph{anti-ferromagnetic Heisenberg XYZ model},
a classic family of $\QMA$-complete~\cite{CM16,PM17} 2-local Hamiltonians first investigated by Heisenberg~\cite{Hei28} which models magnetic systems where interacting particles have opposing spins.
The constraints in \qmaxcut involve pairs of qubits and, loosely speaking, enforce that the two qubits have opposing values in each of the Pauli~$X$, $Y$, and $Z$ bases. 
As a result, it can be viewed as a quantum analogue of the classical \maxcut problem.

For the special case of finding the best product state in the \qmaxcut problem,
an SDP algorithm of Briët, Oliveira, and Vallentin~\cite{BOV10} gives an approximation ratio of~$0.956$.
For the more general case of finding the best (possibly entangled) state,
Gharibian and Parekh~\cite{GP19} gave an algorithm with approximation ratio~$0.498$.
Their algorithm, which uses the product state ansatz, is based on rounding the corresponding ``basic'' SDP,
derived from the second level of the \emph{noncommutative sum of squares (ncSoS)} hierarchy,
a variant of the SoS hierarchy for optimization problems over matrices.
Following this work, Anshu, Gosset, and Morenz~\cite{AGM20} gave an algorithm with an improved approximation ratio of~$0.531$.
To achieve this,
they used an ansatz which is more expressive than product states---tensor products of one- and two-qubit states---
as it is known that product states cannot surpass approximation ratio $0.5$.
Parekh and Thompson~\cite{PT21}
then showed that a similar algorithm could be captured by the level-4 ncSoS hierarchy,
with a slightly improved  approximation ratio of~$0.533$.

The starting point of our work is a well-known statement from Gaussian geometry
called \emph{Borell's inequality}~\cite{Bor:85}, which has played a central role
in the inapproximability of the classical \maxcut problem
ever since it was introduced to theoretical computer science
in the work of Mossel, O'Donnell, and Oleszkiewicz~\cite{MOO10}.
The earlier work of Khot, Kindler, Mossel, and O'Donnell~\cite{KKMO07}
had shown that the Goemans-Williamson SDP algorithm~\cite{GW95}
is the optimal polynomial-time approximation algorithm for \maxcut under the UGC,
assuming an unproven statement they dubbed ``Majority is Stablest'' and left as a conjecture.
One of the main results of~\cite{MOO10} was to supply a proof of this Majority is Stablest conjecture,
which they did by translating it to Gaussian geometry and applying Borell's inequality,
thereby settling the UG-hardness of \maxcut.
In addition, Borell's inequality serves as an important technical ingredient
in the analysis of the standard SDP integrality gap instance for \maxcut~\cite{FS02,OW08}.
The first contribution of our work is to identify and conjecture a natural generalization of Borell's inequality
suitable for applications to \qmaxcut.

\begin{conjecture}[Vector-valued Borell's inequality; negative~$\rho$ case]\label{conj:vector-borell-intro}
Let $k \leq n$ be positive integers and $-1\leq \rho \leq 0$.
Let $f : \R^n \rightarrow S^{k-1}$.
In addition, let $f_{\mathrm{opt}}:\R^n\rightarrow S^{k-1}$ be defined by $f_{\mathrm{opt}}(x) = x_{\leq k}/\Vert x_{\leq k}\Vert$,
where $x_{\leq k} = (x_1, \ldots, x_k)$.
Then
\begin{equation*}
    \E_{\bu \sim_\rho \bv} \langle f(\bu), f(\bv)\rangle \geq \E_{\bu \sim_\rho \bv}\langle f_{\mathrm{opt}}(\bu), f_{\mathrm{opt}}(\bv)\rangle,
\end{equation*}
where $\bu \sim_\rho \bv$ means that $\bu$ and $\bv$ are $\rho$-correlated Gaussian vectors.
\end{conjecture}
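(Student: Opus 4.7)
The plan is to tackle the conjecture in two stages: first establish the base case $n=k$, then attempt to reduce the general case $n>k$ to it.

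For the base case $n=k$, the target function $f_{\mathrm{opt}}(x)=x/\Vert x\Vert$ is the unique (up to post-composition with a rotation) $O(k)$-equivariant map from $\R^k$ to $S^{k-1}$, which strongly suggests exploiting symmetry. My first step would be a rotation-averaging reduction to $O(k)$-equivariant $f$ (replacing $f(x)$ with $\int_{O(k)} R^{-1} f(Rx)\,dR$ and checking that the stability functional is non-increasing under this projection, using the diagonal rotation-invariance of the joint law of $(\bu,\bv)$), followed by a Hermite / spherical-harmonics decomposition. Concretely, expanding $f=(f_1,\ldots,f_k)$ in Hermite polynomials,
\begin{equation*}
\E_{\bu\sim_\rho\bv}\langle f(\bu),f(\bv)\rangle \;=\; \sum_{d\ge 0}\rho^d\,W_d(f),
\qquad W_d(f):=\sum_{i=1}^{k}\sum_{|S|=d}\widehat{f_i}(S)^2,
\end{equation*}
and the pointwise constraint $\Vert f(x)\Vert^2=1$ forces $\sum_d W_d(f)=1$. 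Since $\rho\in[-1,0]$ the minimization favors concentrating Fourier mass on odd degrees, so the task reduces to characterizing which weight profiles are realizable by sphere-valued maps. The hope is that $f_{\mathrm{opt}}$ realizes the extremal profile, provable by an Euler--Lagrange / variational argument against the unit-sphere constraint combined with the equivariance reduction above, possibly with an additional convexity check at second order to rule out other critical points.

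For the general case $n>k$, the natural strategy is to integrate out the extra inputs. Given $f:\R^n\to S^{k-1}$ define $g(y)=\E_{\bz}[f(y,\bz)]$ for $y\in\R^k$ (with $\bz$ a standard Gaussian on $\R^{n-k}$) and $\tilde g(y)=g(y)/\Vert g(y)\Vert$. If one could prove the monotonicity
\begin{equation*}
\E_{\bu\sim_\rho\bv}\langle f(\bu),f(\bv)\rangle \;\ge\; \E_{\bu'\sim_\rho\bv'}\langle \tilde g(\bu'),\tilde g(\bv')\rangle,
\end{equation*}
then applying the base case to $\tilde g$ would close the argument. Intuitively, integrating out extra input coordinates should only wash out structure that lowers stability, but making this rigorous is obstructed by the nonlinear renormalization: $\tilde g$ is not a linear functional of $f$, so one cannot propagate the inequality through conditional expectation by standard convexity.

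The main obstacle is precisely this tension between the pointwise sphere constraint and the standard toolkit for Borell-type inequalities. Ehrhard symmetrization requires a total order on the outputs; the Mossel--Neeman Ornstein--Uhlenbeck semigroup interpolation does not preserve $\Vert f(x)\Vert=1$, so one must project back after each infinitesimal step and control the error; and Eldan's stochastic-localization coupling is awkward to close when the comparison function is angular rather than a sign pattern. I therefore expect to succeed only on the base case $n=k$, matching what the abstract announces, and to present the general inequality as a conjecture supported by the \qmaxcut applications derived in the rest of the paper.
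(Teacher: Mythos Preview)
Your overall framing is right---the paper also only proves the $n=k$ case and leaves the full conjecture open---but your plan for the $n=k$ case has a real gap, and the paper's route is quite different.

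\textbf{The Hermite approach.} The paper explicitly considers and rejects the Hermite decomposition you propose: the expansion of $f_{\mathrm{opt}}(x)=x/\Vert x\Vert$ in the Hermite basis is complicated (it is odd but spread over all odd degrees), so comparing an arbitrary sphere-valued $f$ to $f_{\mathrm{opt}}$ via the weight profile $(W_d)_d$ is difficult. Your ``characterize which profiles are realizable by sphere-valued maps'' step is exactly the hard part, and an Euler--Lagrange argument would at best locate critical points, not the global minimizer.

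\textbf{The rotation-averaging step.} Replacing $f$ by $\int_{O(k)} R^{-1}f(Rx)\,dR$ does not preserve the constraint $f(x)\in S^{k-1}$; the average lands in $B^k$. Even working in $B^k$, you would need the averaging to \emph{decrease} stability (so that the minimizer is equivariant), but $\stab_\rho$ for $\rho<0$ is an indefinite quadratic form, and the required monotonicity does not follow from convexity or from the fact that each $f_R$ has the same stability as $f$.

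\textbf{What the paper does instead.} Write $x=r\cdot u$ with $r=\Vert x\Vert$ and $u\in S^{n-1}$, and set $f_r(u)=f(ru)$. For fixed radii $r,s$, the conditional law of $(\bu,\bv)$ given $(\br,\bs)=(r,s)$ induces a noise operator on the sphere of the form $\U_g$ with $g$ monotone in $\langle u,v\rangle$. On the sphere, $f_{\mathrm{opt}}$ \emph{is} purely degree-$1$ (spherical harmonics, not Hermite), and the Funk--Hecke formula plus a Gegenbauer estimate shows that the degree-$1$ eigenvalue $\lambda_1^{r,s}$ is the most negative. This gives, for each $(r,s)$,
\[
\E\langle f_r(\bu),f_s(\bv)\rangle \ge \langle \E f_r,\E f_s\rangle + \lambda_1^{r,s},
\]
with equality for the identity map. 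The extra ``mean'' term is handled by a separate decoupling trick: writing $\rho$-correlated $(\bx,\by)$ through a common seed so that $\Vert\bx\Vert$ and $\Vert-\by\Vert$ are conditionally i.i.d., one shows $\E_{\br,\bs}\langle \E f_{\br},\E f_{\bs}\rangle\ge 0$. Averaging over $(\br,\bs)$ then yields the inequality.

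\textbf{For $n>k$.} Your diagnosis is accurate. The paper attempts a dimension reduction via first- and second-order variational (perturbation) arguments, testing against constant vector fields, but the sign works out only for $\rho\ge 0$; for $\rho<0$ the second-variation inequality flips and no contradiction is obtained. So the paper, like you, leaves the general case open.
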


We also conjecture that the optimizers $f_{\mathrm{opt}}$ are unique, up to some natural symmetries; see
\Cref{conj:vector-borell} for the full statement.

The original Borell's inequality is when $k = 1$; in this case $f_{\mathrm{opt}}$
is just defined as $f_{\mathrm{opt}}(x) = \sgn(x_1)$, and the quantity $\E_{\bu \sim_\rho \bv}[f(\bu) f(\bv)]$
is known as the \emph{Gaussian noise stability} of $f$. When the sphere $S^{k-1}$ is replaced by
the probablity simplex $\Delta^{k-1}$, the analogue of \cref{conj:vector-borell-intro} is
a well-known open problem---known as the ``Peace Sign Conjecture''~\cite{IM12}---that was
recently solved when $\rho$ is positive but sufficiently close to zero~\cite{HT20}.

Although we are unable to prove \cref{conj:vector-borell-intro},
we are able to give evidence in favor of it.
For our first piece of evidence, we show that \cref{conj:vector-borell-intro} is true when $n = k$,
which we prove via a spectral argument.
For our second piece of evidence, we instead consider the ``positive $\rho$ case'' of $0 \leq \rho \leq 1$,
in which we would like to identify the function
which \emph{maximizes} the expression $ \E_{\bu \sim_\rho \bv} \langle f(\bu), f(\bv)\rangle$
rather than minimizes it.
In this case, it is clear that our conjectured optimizer $f_{\mathrm{opt}}(x) = x_{\leq k}/\Vert x_{\leq k}\Vert$ is \emph{not} the maximizer, as any constant function will have value~1, but it is plausible that $f_{\mathrm{opt}}$ is still the maximizer among all functions which are mean-zero.
For our second piece of evidence, then,
we show a dimensionality reduction statement in the positive $\rho$ case, implying that
any maximizing mean-zero $f$ is ``intrinsically'' $k$-dimensional;
in other words, it suffices to only consider the $n = k$ case. 
To show this, we extend the recent calculus of variations approach for proving Borell's inequality due to Heilman and Tarter~\cite{HT20}
to output dimensions of size~$k$ larger than~$1$.
Unfortunately, since our reduction to the $n = k$ case only holds for positive $\rho$,
and our proof of the $n = k$ case only holds for negative $\rho$,
these two pieces do not combine to give a full proof of the conjecture.
For more details, along with formal statements of these results, see \cref{part:borell}.

Next, we show that the vector-valued Borell's inequality allows us to address several questions related to the optimality of the SDP algorithms for \qmaxcut.

\begin{theorem}[Main results, informal]\mbox{}
Suppose that the vector-valued Borell's inequality is true.
Then the following statements hold.
\begin{enumerate}
\item There exists an integrality gap of $0.498$ for the basic SDP, matching the rounding algorithm of Gharibian and Parekh~\cite{GP19}.
	This shows that the product state ansatz is optimal for the basic SDP.
	Combined with the work of Anshu, Gosset, and Morenz~\cite{AGM20},
	this also shows that the basic SDP does not achieve the optimal approximation ratio.
\item It is Unique-Games-hard (UG-hard) to compute a $(0.956+\eps)$-approximation to the value of the best product state, matching an approximation algorithm due to Briët, Oliveira, and Vallentin~\cite{BOV10}.  More generally, for any fixed~$k$, it is UG-hard to outperform the approximation algorithm of Briët, Oliveira, and Vallentin~\cite{BOV10} for Rank-$k$ \maxcut ($k=3$ corresponding to the aforementioned $0.956$-approximation).
\item It is UG-hard to compute a $(0.956+\eps)$-approximation to the value of the best (possibly entangled) state. 
\end{enumerate}\label{thm:informal-intro}
\end{theorem}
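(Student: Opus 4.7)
The theorem bundles three statements whose common engine is \Cref{conj:vector-borell-intro}, which controls soundness by lower-bounding $\E\langle f(\bu),f(\bv)\rangle$ by its value at $f_{\mathrm{opt}}$. My plan is to prove item~1 via a Gaussian gap instance and items~2 and~3 via the Khot--Kindler--Mossel--O'Donnell (KKMO) dictatorship-to-UG reduction.

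For item~1, I would build the \qmaxcut analogue of the Feige--Schechtman gap instance: sample vertices from the Gaussian measure on $\R^n$ and place an $XX+YY+ZZ$ Heisenberg constraint on each pair weighted by the correlated Gaussian kernel at parameter $\rho$. An explicit $\hsdp$ solution, combining normalized vertex vectors with the quantum-specific relaxation variables, can be evaluated in closed form. Every product-state assignment corresponds to a function $f:\R^n\to S^2$ on the Bloch sphere, and each Heisenberg constraint contributes $\tfrac14(1-\langle f(\bu),f(\bv)\rangle)$; applying \Cref{conj:vector-borell-intro} with $k=3$ upper-bounds the best product-state value by the value at $f_{\mathrm{opt}}$, a direct Gaussian integral. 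Optimizing over $\rho$ tunes the ratio (best product)$/\hsdp$ to $\agp$, matching the \cite{GP19} rounding; combining with the non-product-state improvement of \cite{AGM20} completes the item.

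For items~2 and~3 I would run the KKMO/MOO pipeline with a dictatorship test whose tables are $f:\{-1,1\}^N\to S^{k-1}$ and whose reward on a $\rho$-correlated pair is $\tfrac12(1-\langle f(\bx),f(\by)\rangle)$. The dictator $f(x)=e_j x_j$ achieves reward $\tfrac12(1-\rho)$, supplying completeness. For soundness I would use the vector-valued invariance principle of Isaksson--Mossel~\cite{IM12} to replace any low-influence Boolean table by a Gaussian one (up to $o_\tau(1)$ error), then invoke \Cref{conj:vector-borell-intro} to lower-bound $\E\langle f(\bx),f(\by)\rangle$ by the $f_{\mathrm{opt}}$ value; optimizing $\rho$ gives exactly $\abov$ for each $k$. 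The standard UG-to-dictatorship reduction then yields UG-hardness of Rank-$k$ \maxcut and, via the $XX+YY+ZZ$ product-state embedding, of \qmaxcut restricted to product states, proving item~2. For item~3 the remaining task is to show that on the hard instances the best entangled-state value exceeds the best product-state value by at most $o(1)$, which I would attempt by comparing $\hsdp$ and $\prodsdp$ directly on these specific graphs or by applying the \cite{GP19} rounding of $\hsdp$ to a product state and certifying tightness.

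The principal obstacle is the final step of item~3: \Cref{conj:vector-borell-intro} constrains only sphere-valued (product-state) assignments, so closing the entangled-versus-product gap on the UG-hard instance requires a genuinely quantum input---presumably a monogamy-of-entanglement or ncSoS rounding statement tailored to the graphs produced by the reduction. A secondary challenge is the non-smoothness of projection onto $S^{k-1}$ near the origin, which forces a truncation or mollification step before the vector-valued invariance principle can be applied cleanly.
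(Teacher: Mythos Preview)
Your overall plan matches the paper's: Gaussian gap instance for item~1, and the KKMO pipeline with the Isaksson--Mossel vector invariance principle plus \Cref{conj:vector-borell-intro} for items~2 and~3. But there is one real gap, and it bites in item~1 as well as in item~3.

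For item~1, the integrality gap you must exhibit is $\heis(G)/\hsdp(G)$, not $\prodval(G)/\hsdp(G)$. Borell only upper-bounds the product-state value; since $\prodval(G)\le\heis(G)$, your argument as written does not upper-bound the numerator you need. The paper closes this gap---and the identical gap in item~3---via the Brand\~ao--Harrow product-state approximation theorem~\cite{BH16}: on graphs whose vertices all have small conditional transition probabilities (informally, high effective degree), $\heis(G)\le\prodval(G)+o(1)$. The Gaussian graph has infinite degree, and the graphs produced by the UG reduction have effective degree growing with the label-set size~$M$, so in both cases~\cite{BH16} applies after discretization and a parameter calculation. Your tentative mention of monogamy-of-entanglement is exactly the right instinct; the concrete tool is~\cite{BH16}. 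By contrast, your two alternative proposals for item~3---comparing $\hsdp$ to $\prodsdp$, or invoking the GP rounding of $\hsdp$---only yield inequalities in the wrong direction and cannot upper-bound $\heis$.

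On the secondary technicalities: the paper handles the non-smoothness of projection to $S^{k-1}$ essentially as you anticipate (smooth via $\T_{1-\gamma}$, apply invariance, then round back to $B^k$). One device you do not mention: because the conjecture is stated only for $\rho\le 0$, the paper first replaces $f$ by its odd part $\tfrac12(f(x)-f(-x))$, which can only decrease $\Stab_\rho[f]$ when $\rho\le 0$ and lets one flip to positive~$\rho$ where the invariance bookkeeping is cleaner.
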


To our knowledge, this is the first constant-factor hardness of approximation result
for a natural family of local Hamiltonians
which does not already contain a hard-to-approximate classical CSP as a special case,
modulo our conjectured vector-valued Borell's inequality.
We also show that our conjecture implies sharp inapproximability results for \qmaxcut with respect to product states.
Finally, in a striking departure from classical CSPs, in which the level-2 SoS relaxation is optimal under the UGC,
we show that our conjecture implies that the level-4 ncSoS relaxation for \qmaxcut strictly improves upon the level-2 relaxation.
Our results highlight the importance of gaining a better understanding of the level-4 ncSoS relaxation,
as well as the importance of settling \cref{conj:vector-borell-intro}.
The relevance of Conjecture~\ref{conj:vector-borell-intro} to our main results is
discussed in more detail in Section~\ref{sec:tech-overview-overview} below.

\ignore{
Indeed, without additional assumptions on the form of the Hamiltonian, these results establish that \emph{we should not expect} to find a better product state than that achievable by~\cite{BOV10} for the kind of approximation studied.  The first question is also interesting from a physical perspective because it points to limitations on ``local'' theories for describing for describing quantum states.  A feasible solution to the SDP studied in~\cite{BGKT19, GP19} can be interpreted as a ``quasi'' quantum state which respects all $1$-local observables but not some $2$-local observables.  Our results imply performance bounds on these ``low-order theories.''

We also highlight the usefulness of semidefinite programming
as a concrete algorithmic framework for studying the local Hamiltonian problem.
For example,
although answering questions such as ``what is the optimal ansatz?''
might currently be out of reach,
we have shown that they can be answered for a fixed level of the ncSoS hierarchy.
In addition, they suggest that understanding the level-4 ncSoS relaxation is a natural future direction.
}

\paragraph{Related work.}
An earlier draft of this work incorrectly claimed a complete proof of the vector-valued Borell's inequality.
The bug was in the proof of the dimensionality reduction step, which we incorrectly claimed held for both positive \emph{and} negative $\rho$.
We thank Steve Heilman for pointing out the error in this proof.
Subsequent to the original posting of this work,
Parekh and Thompson gave an algorithm for \qmaxcut based on the level-4 ncSoS relaxation
which achieves a $0.5$-approximation and uses the product state ansatz~\cite{PT22}.
This is optimal, as there are simple graphs of value 1 in which product states achieve value at most 0.5~\cite{GP19}.
Combined with our \cref{thm:informal-intro} and assuming the vector-valued Borell's inequality,
this shows that level-4 ncSoS outperforms level-2, even when one is restricted to using the product state ansatz.

In the local Hamiltonian problem,
one is given a set of constraints on a system of~$n$ qubits,
and the objective is to find the ``ground state energy,'',
i.e.\ the optimum energy of a state under these constraints.
This is a central problem in both the fields of condensed matter physics and quantum computing,
and the study of this problem has led to a rich exchange of ideas between the two; see~\cite{GHLS15} for an excellent survey on the topic.
Its importance in condensed matter physics stems from the fact that many interesting real-world systems can be modeled as local Hamiltonians, with the ground state energy corresponding to the energy of the system at zero temperature.
In quantum computing, it is the canonical $\QMA$-complete problem~\cite{KSV02}, and is thus intractable to solve exactly with a classical computer unless $\mathsf{BPP} = \QMA$.

Classical CSPs form a special case of the local Hamiltonian problem,
and so the classical PCP theorem applies to local Hamiltonians as well.
This means that given a general instance of the local Hamiltonian problem, it is $\NP$-hard (though not necessarily $\NP$-complete) to estimate its ground-state energy to a certain constant accuracy.
The \emph{quantum PCP conjecture}~\cite{AAV13} asserts that this task is in fact $\QMA$-complete.
One of the key differences between these two possibilities
is that every local Hamiltonian has an efficient quantum witness of its ground state energy,
namely its $n$-qubit ground state,
but it may not have an efficient \emph{classical} (i.e.\ $\NP$) witness,
as the ground state need not be efficiently representable on a classical computer.

Researchers have designed classical approximation algorithms
for various classes of $2$-local Hamiltonian problems,
the majority of which use the product state ansatz.
These include algorithms
for Hamiltonians whose local terms are positive semi-definite~\cite{GK12,HEP20,PT20,PT22},
traceless Hamiltonians~\cite{HM17,BGKT19,NST09,PT20},
and fermionic Hamiltonians~\cite{BGKT19,HOD22}.
In condensed matter physics, the use of product states as an ansatz is widespread and
 known as \emph{mean-field theory}~\cite{GHLS15};
 in quantum chemistry, it is known as the \emph{Hartree-Fock method}~\cite{BH16}.
 The ubiquity of this ansatz
 stems both from the ease with which it can be analyzed
 and from folklore that product states well-approximate ground states
 in some situations.
This folklore was formalized in the work of Brandão and Harrow~\cite{BH16},
which showed that
product states give a good approximation to the ground states of local Hamiltonians
whose interaction graphs are high degree or sufficiently good expanders,
due to monogamy of entanglement.
As a result, these Hamiltonians cannot serve as hard instances
for the quantum PCP conjecture,
as product states have an efficient classical description.
They then used this structural result
to design approximation algorithms for Hamiltonians
with interaction graphs which are either planar, dense, or have low threshold rank.
This is an example of how the study of approximation algorithms
can shed light on the limitations of the quantum PCP conjecture.

To our knowledge, the only classical approximation algorithms
which do not use the product state ansatz are the aforementioned algorithms of~\cite{AGM20,PT21}, which use tensor products of one- and two-qubit states
(and which still rely heavily on the product state algorithms),
and the intriguing recent algorithms of~\cite{AGM20,AGKS21},
which use an ansatz consisting of states of the form $\ket{u} = U\ket{\psi}$,
where $\ket{\psi}$ is a product state and $U$ is a low-depth quantum circuit.
These two works give several settings in which product states can be ``improved''
by post-processing them with a low-depth quantum circuit;
for example, the former work~\cite{AGM20} shows an algorithm using this ansatz
which is guaranteed to outperform any product state algorithms on degree-3 and 4
instances of \qmaxcut.

\ignore{
To see why this task is in~$\QMA$,
note that every local Hamiltonian has an efficient \emph{quantum} witness for its ground state energy,
namely its lowest energy state, which requires only~$n$ qubits.
As describe above, though, such a state need not be efficiently representable on a classical computer,
and so this problem is not known to be in~$\NP$.
We note that the question of whether each local Hamiltonian
has an efficient classical witness seems related to the question of what the optimal ansatz is\knote{Is it worth mentioning QMA vs. QCMA here?},
though we are not aware of any formal connection between the two.
One distinction is that the ansatz should be able to be efficiently optimized over,
whereas this need not be the case with an efficient classical witness.

Researchers have designed classical approximation algorithms for various classes of maximum $2$-local Hamiltonian problems, which we summarize in \Cref{table:approx_algs}.  As $2$-local Hamiltonian generalizes classical $2$-CSP, most of these problem classes have natural classical counterparts (see Table 1 in~\cite{PT20}).  Most of these approximation algorithms use a product state ansatz, and the more recent works generate an approximate solution using an SDP relaxation of either the optimal energy or optimal energy over product states.}

\section{Technical overview}

In this section, we give a technical overview of our results.
We begin with definitions of the problems we consider.
Then, we state SDP relaxations for these problems
and rounding algorithms for these SDPs that have been considered in the literature.
Finally, we state our results formally and give an overview of our proofs.

\subsection{The \maxcut problem}

The classical analogue of the \qmaxcut problem
is the \maxcut problem.
The simplest of all nontrivial CSPs,
this is the problem of partitioning the vertices of a graph into two sets in order to maximize the number of edges crossing the partition.

\begin{notation}
We use \textbf{boldface} to denote random variables.
\end{notation}

\begin{definition}[Weighted graph]
A \emph{weighted graph} $G = (V, E, w)$ is an undirected graph with weights on the edges specified by $w: E \rightarrow \R^{\geq 0}$.
The weights specify a probability distribution on the edges, i.e.\ $\sum_{e \in E} w(e) = 1$.
We write $\boldsymbol{e} \sim E$ or $(\bu, \bv) \sim E$ for a random edge sampled from this distribution.
We generally use \emph{graph} as shorthand for weighted graph.
\end{definition}

\begin{definition}[\maxcut]\label{def:max-cut}
Given a graph $G = (V, E, w)$, a \emph{cut} is a function $f:V\rightarrow\{\pm 1\}$.
The \emph{value} of the cut is
\begin{equation*}
\E_{(\bu, \bv) \sim E}[ \tfrac{1}{2} - \tfrac{1}{2} f(\bu) f(\bv)].
\end{equation*}
\maxcut is the problem of finding the value of the largest cut, i.e.\ the quantity
\begin{equation*}
\maxcut(G) = \max_{f:V \rightarrow \{\pm 1\}} \E_{(\bu, \bv) \sim E}[ \tfrac{1}{2} - \tfrac{1}{2} f(\bu) f(\bv)].
\end{equation*}
\end{definition}

\maxcut appears on Karp's original list of $\NP$-complete problems~\cite{Kar72},
and it is $\NP$-hard to approximate to a factor better than $\tfrac{16}{17}$~\cite{TSSW00}.
Goemans and Williams gave an algorithm using the basic SDP with approximation ratio $0.878$~\cite{GW95},
and this SDP algorithm was later shown to be optimal by~\cite{KKMO07}, at least assuming the UGC.
Even without assuming the UGC, though,
it is known that the SoS hierarchy
is still the optimal algorithm among a large class of algorithms for \maxcut,
namely those given by polynomial-size SDPs~\cite{LRS15}.

\subsection{The \qmaxcut problem}

\ignore{

The particular subset of Heisenberg Hamiltonians we study will be defined using a weighted graph.  An edge will correspond to an interaction term, and the ``weight'' of an edge will correspond to the strength of that particular term.  

\begin{notation}
We use \textbf{boldface} to denote random variables.
\end{notation}

\begin{definition}[Weighted graph]
A \emph{weighted graph} $G = (V, E, w)$ is an undirected graph with weights on the edges specified by $w: E \rightarrow \R^{\geq 0}$.
The weights are assumed to sum to~$1$, i.e.\ $\sum_{e \in E} w(e) = 1$, and so they specify a probability distribution on the edges.
We will write $\boldsymbol{e} \sim E$ or $(\bu, \bv) \sim E$ for a random edge sampled from this distribution.
We will often use the word \emph{graph} as shorthand for weighted graph.
\end{definition}

The classical analogue of the Heisenberg model
is the \maxcut problem.

\begin{definition}[\maxcut]\label{def:max-cut}
Given a graph $G = (V, E, w)$, a \emph{cut} is a function $f:V\rightarrow\{\pm 1\}$.
The \emph{value} of the cut is
\begin{equation*}
\E_{(\bu, \bv) \sim E}[ \tfrac{1}{2} - \tfrac{1}{2} f(\bu) f(\bv)].
\end{equation*}
\maxcut is the problem of finding the value of the largest cut, i.e.\ the quantity
\begin{equation*}
\maxcut(G) = \max_{f:V \rightarrow \{\pm 1\}} \E_{(\bu, \bv) \sim E}[ \tfrac{1}{2} - \tfrac{1}{2} f(\bu) f(\bv)].
\end{equation*}
\end{definition}

}

\ignore{
Next, we define our central problem of interest, the anti-ferrogmagnetic Heisenberg XYZ model.  
Consistent with our interpretation of this model as a ``quantum max cut,'' we will follow the works of \cite{GP19,AGM20} and define it as a maximization problem with an additional identity term,
rather than as a minimization problem.  Although finding a ground state of the latter is equivalent to finding a maximum-energy state of the former, the two problems differ in terms of approximability, where a $O(\log(n))$-approximation~\cite{BGKT19} is the best known for the minimization version that is of more traditional physical interest.
}

The main focus of this work is the \qmaxcut problem,
a special case of the local Hamiltonian problem first introduced by Gharibian and Parekh in~\cite{GP19}.
Although the local Hamiltonian problem is typically stated as a minimization problem,
they instead defined \qmaxcut to be a maximization problem,
as this makes it more convenient to study from an approximation algorithms perspective and resembles \maxcut.
As stated earlier,
\qmaxcut is a natural maximization variant of the anti-ferromagnetic Heisenberg XYZ model;
we discuss this viewpoint, as well as the Heisenberg model in greater detail, in~\Cref{sec:qmaxcut-heisenberg}.

\begin{definition}[The \qmaxcut interaction]
The \emph{\qmaxcut interaction} is the $2$-qubit operator
$
h = \tfrac{1}{4}(I \ot I - X \ot X - Y \ot Y - Z \ot Z).
$
\end{definition}

Here~$X$, $Y$, and $Z$ refer to the standard \emph{Pauli matrices}.
Intuitively, the \qmaxcut interaction, when applied to a pair of qubits,
enforces that they are opposites in the~$X$, $Y$, and~$Z$ bases.

\begin{definition}[\qmaxcut]
Let $G = (V, E, w)$ be a graph known as the \emph{interaction graph}.
The corresponding instance of the \emph{\qmaxcut problem} is the matrix which acts on $(\C^2)^{\ot V}$
given by
\begin{equation*}
H_G
= \sum_{(u, v) \in E} w_{u, v} \cdot h_{u, v}
=\E_{(\bu, \bv) \sim E} h_{\bu, \bv}.
\end{equation*}
Here, $h_{\bu, \bv}\in \mathbb{C}^{2^{|V|} \times 2^{|V|}}$ is shorthand for $h_{\bu, \bv} \ot I_{V \setminus \{\bu, \bv\}}$, where $h_{\bu,\bv}$ is the \qmaxcut interaction applied to the qubits~$\bu$ and~$\bv$. 
\end{definition}


\begin{definition}[Energy]
Let $H_G$ be an instance of \qmaxcut.
Given a state $\ket{\psi} \in (\C^2)^{\otimes V}$,
its \emph{value} or \emph{energy}
is the quantity $\bra{\psi} H_G \ket{\psi}$.
The \emph{maximum energy} of $H_G$, also referred to as its \emph{value}, is
\begin{equation*}
\heis(G) = \lambda_{\mathrm{max}} (H_G) = \max_{\ket{\psi} \in (\C^2)^{\otimes V}} \bra{\psi} H_G \ket{\psi}.
\end{equation*}
\end{definition}

To see one way \qmaxcut and \maxcut are related, if we let $D_G$ be the diagonal matrix consisting of the diagonal entries of $H_G$ in the computational basis, then $\lambda_{\mathrm{max}} (D_G)$ is half the value of the maximum cut in $G$.

The second problem we consider is the special case of \qmaxcut
when the optimization is only over product states, which has been a common approach in approximation algorithms for the local Hamiltonian problem.

\begin{definition}[Product state value]
The \emph{product state value of $H_G$} is
\begin{equation*}
\prodval(G) = \max_{\forall v \in V, \ket{\psi_v} \in \C^2} \bra{\psi_G} H_G \ket{\psi_G}\text{, where } \ket{\psi_G} = \otimes_{v \in V} \ket{\psi_v}.
\end{equation*}
\end{definition}

There is an alternative expression for the product state value which we will find convenient to use.
It is related to the \emph{Bloch sphere} representation of qubits;
see \Cref{sec:product-states} for a proof.

\begin{definition}[Balls and spheres]
Given a dimension $d \geq 1$, the $d$-dimensional unit ball and sphere are given by $B^d = \{x \in \R^d \mid \Vert x \Vert \leq 1\}$,
and $S^{d-1} = \{x \in \R^d \mid \Vert x \Vert = 1\}$, respectively.
\end{definition}

\begin{proposition}[Rewriting the product state value]\label{prop:rewrite-product}
\begin{equation*}
\prodval(G) = \max_{f:V \rightarrow S^2} \E_{(\bu, \bv) \sim E}[ \tfrac{1}{4} - \tfrac{1}{4} \langle f(\bu), f(\bv)\rangle].
\end{equation*}
\end{proposition}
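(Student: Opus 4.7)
The plan is to use the Bloch sphere parametrization of single-qubit pure states to rewrite $\bra{\psi_v} \sigma \ket{\psi_v}$ for each Pauli $\sigma \in \{X, Y, Z\}$ as an inner product with a unit vector in $\R^3$, and then evaluate the product-state expectation of the \qmaxcut interaction term by term.

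First I would recall the standard fact that every pure qubit state $\ket{\psi_v} \in \C^2$ corresponds (up to an irrelevant global phase) to a unique Bloch vector $r_v \in S^2$ satisfying the identity
\begin{equation*}
\ket{\psi_v}\bra{\psi_v} = \tfrac{1}{2}\bigl(I + r_v^{(1)} X + r_v^{(2)} Y + r_v^{(3)} Z\bigr),
\end{equation*}
and that this correspondence is a bijection between pure states modulo global phase and $S^2$. Using $\mathrm{tr}(\sigma) = 0$ and $\mathrm{tr}(\sigma^2) = 2$ for $\sigma \in \{X, Y, Z\}$, it then follows that $\bra{\psi_v} \sigma_i \ket{\psi_v} = r_v^{(i)}$ for each $i \in \{1, 2, 3\}$.

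Next I would plug this into the definition of $h_{u,v}$ on a product state $\ket{\psi_G} = \bigotimes_v \ket{\psi_v}$. Since $h_{u,v}$ acts as the identity outside the pair $(u, v)$, the expectation factorizes, and for each Pauli $\sigma \in \{X, Y, Z\}$ the contribution $\bra{\psi_G} \sigma_u \otimes \sigma_v \ket{\psi_G}$ equals $\bra{\psi_u} \sigma \ket{\psi_u} \cdot \bra{\psi_v} \sigma \ket{\psi_v}$. By the Bloch identity above, this is $r_u^{(i)} r_v^{(i)}$ for the corresponding coordinate $i$. Summing the three Pauli contributions, the expectation of $h_{u,v}$ becomes
\begin{equation*}
\bra{\psi_G} h_{u,v} \ket{\psi_G} = \tfrac{1}{4}\bigl(1 - \langle r_u, r_v\rangle\bigr).
\end{equation*}
Averaging over $(\bu, \bv) \sim E$ gives exactly the right-hand side of the proposition with $f(v) = r_v$.

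Finally I would observe that the Bloch correspondence is a bijection onto $S^2$, so maximizing over all product states $\ket{\psi_G}$ is the same as maximizing over all functions $f: V \to S^2$. Both inequalities ($\leq$ and $\geq$) then follow by reading the identity in opposite directions. There is no real obstacle here; the only mild subtlety is verifying the Pauli trace identities cleanly, but these are standard and can be dispatched in a line.
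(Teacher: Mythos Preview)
The proposal is correct and follows essentially the same approach as the paper: both use the Bloch sphere bijection between pure qubit states and $S^2$, then expand the expectation of $h_{u,v}$ on a product state using the Pauli trace identities to obtain $\tfrac{1}{4}(1-\langle r_u, r_v\rangle)$. The only cosmetic difference is that the paper computes via $\tr[h_{u,v}\cdot\ket{\psi_u}\bra{\psi_u}\otimes\ket{\psi_v}\bra{\psi_v}]$ directly, whereas you first isolate $\bra{\psi_v}\sigma\ket{\psi_v}=r_v^{(i)}$ and then multiply; these are the same computation.
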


Note the similarity to the definition of \maxcut (\Cref{def:max-cut}):
aside from the extra factor of $\tfrac{1}{2}$,
the key distinction is that the function~$f$
has range $S^2$ rather than $S^0 = \{-1, 1\}$.
As a result, the product state value can be viewed as an additional 
quantum generalization of \maxcut.
We may further generalize \maxcut by allowing $f$ to have range $S^{k-1}$,
yielding the rank-constrained version of \maxcut studied by Briët, Oliveira, and Vallentin~\cite{BOV10}.
\begin{definition}[Rank-$k$ \maxcut]\label{def:rank-k-maxcut}
\begin{equation*}
\maxcut_k(G) = \max_{f:V \rightarrow S^{k-1}} \E_{(\bu, \bv) \sim E}[ \tfrac{1}{2} - \tfrac{1}{2} \langle f(\bu), f(\bv)\rangle].
\end{equation*}
\end{definition}
Note that $\maxcut_3(G) = 2 \cdot \prodval(G)$.
Indeed, what we refer to as the BOV algorithm for the product state value
was actually originally stated in~\cite{BOV10} as an algorithm for $\maxcut_3(G)$,
though it applies equally well to both cases.

\subsection{Semidefinite programming relaxations}
A standard approach for solving \maxcut
is through its SDP relaxation.

\begin{definition}[The \maxcut SDP]\label{def:mcsdp}
Let $G = (V, E, w)$ be an $n$-vertex graph.
The value of the \maxcut SDP is
\begin{equation*}
\mcsdp(G) = \max_{f:V\rightarrow S^{n-1}} \E_{(\bu, \bv) \sim E}[ \tfrac{1}{2} - \tfrac{1}{2} \langle f(\bu), f(\bv)\rangle].
\end{equation*}
\end{definition}

This is a relaxation because the optimal objective of the \maxcut SDP is at least as large as $\maxcut(G)$ for all graphs~$G$.
It can be approximated in polynomial time,
meaning one can compute an $f:V \rightarrow S^{n-1}$ of value $\mcsdp(G) - \eps$
in time $\poly(n) \cdot \log(1/\eps)$\footnote{The usual considerations show arbitrary additive approximation in polynomial time, e.g.~see~\cite{VB96}.}.
In addition, it is equivalent to the level-2 SoS relaxation for \maxcut.
(Note also that $\mcsdp(G) = \maxcut_n(G)$.)

There is a similar SDP relaxation for the product state value of \qmaxcut.

\begin{definition}[The product state SDP]
\begin{equation*}
\prodsdp(G) = \max_{f:V\rightarrow S^{n-1}} \E_{(\bu, \bv) \sim E}[ \tfrac{1}{4} - \tfrac{1}{4} \langle f(\bu), f(\bv)\rangle].
\end{equation*}
\end{definition}
\noindent
This is in fact the SDP relaxation given by level-2 of the SoS hierarchy applied to the product state value. 
Note that $\prodsdp(G) = \tfrac{1}{2} \mcsdp(G)$.

Now we state the SDP relaxation we will use for the maximum energy of \qmaxcut,
which is equivalent to the level-2 ncSoS relaxation of the maximum energy.
\begin{definition}[The \qmaxcut SDP]\label{def:heis-sdp}
\begin{equation*}
\hsdp(G)
= \max_{f:V\rightarrow S^{n-1}} \E_{(\bu, \bv) \sim E}[ \tfrac{1}{4} - \tfrac{3}{4} \langle f(\bu), f(\bv)\rangle].
\end{equation*}
\end{definition}
\noindent
Deriving this requires a bit more care than either of the \maxcut or product state SDPs.
Indeed, it is not even obvious at first glance that this is a legitimate relaxation!
We include a proof of this fact in \Cref{sec:sdp_proofs} and a discussion of the ncSoS hierarchy in \Cref{sec:ncsos}.
To our knowledge, we are the first to observe this particularly simple form of the \qmaxcut SDP.
Note that $\hsdp(G) = \tfrac{3}{2} \cdot \mcsdp(G) - \tfrac{1}{2}$.

It may be surprising that these three very different optimization problems
yield such similar SDPs.
Indeed, the optimizing function $f:V\rightarrow S^{n-1}$ is the same in each of them.
In spite of this, however,
the quality of the SDP relaxation is different in each case
because each case features a different objective function
to compare the SDP against.

\subsection{Rounding algorithms}\label{sec:rounding}

Most SDP algorithms work by computing the optimal SDP solution
and then converting it to a solution to the original optimization problem in a process known as \emph{rounding}.
We will look at the standard Goemans-Williamson algorithm,
used to round the \maxcut SDP,
and two generalizations of this algorithm, used to round the product state and \qmaxcut SDPs.

\paragraph{Halfspace rounding.}
The Goemans-Williamson algorithm~\cite{GW95}
uses a procedure called ``halfspace rounding''
to round
the $\maxcut$ SDP optimum $f_{\mathrm{SDP}}:V \rightarrow S^{n-1}$
into a random cut $\boldf:V \rightarrow \{-1, 1\}$.
Halfspace rounding works as follows:
\begin{enumerate}
\item Sample a random vector $\bz = (\bz_1, \ldots, \bz_n)$ from the $n$-dimensional Gaussian distribution.
\item For each $u \in V$, set $\boldf(u)$ equal to
	\begin{equation*}
	\boldf(u) := \sgn(\langle \bz, f_{\mathrm{SDP}}(u)\rangle) = \frac{\langle \bz, f_{\mathrm{SDP}}(u)\rangle}{|\langle \bz, f_{\mathrm{SDP}}(u)\rangle |}.
	\end{equation*}
\end{enumerate}
Goemans and Williamson showed that for each $u, v \in V$,
\begin{equation}\label{eq:gw-ineq}
\E_{\boldf}[\tfrac{1}{2}-\tfrac{1}{2}\boldf(u) \boldf(v)]
\geq \agw \cdot[\tfrac{1}{2}-\tfrac{1}{2}\langle f_{\mathrm{SDP}}(u), f_{\mathrm{SDP}}(v)\rangle],
\end{equation}
where $\agw = 0.878567$.
Taking an average over the edges in~$G$,
this shows that the expected value of~$\boldf$ is at least $\agw \cdot \mcsdp(G)$,
and hence at least $\agw \cdot \maxcut(G)$.
This shows that the Goemans-Williamson algorithm is an $\agw$-approximation algorithm.

\paragraph{Projection rounding.}
Briët, Oliveira, and Vallentin~\cite{BOV10} suggested a generalization of halfspace rounding,
which we refer to as ``projection rounding'',
in order to round solutions of $\prodsdp(G)$.
The goal is to convert an SDP solution $f_{\mathrm{SDP}}:V \rightarrow S^{n-1}$
into a function $\boldf:V \rightarrow S^{2}$,
which can then be converted into a product state via~\Cref{def:bloch_vectors}.
Projection rounding works as follows:
\begin{enumerate}
\item Sample a random $3 \times n$ matrix $\bZ$ consisting of $3n$ i.i.d. standard Gaussians.
\item For each $u \in V$, set $\boldf(u) = \bZ f_{\mathrm{SDP}}(u) / \Vert \bZ f_{\mathrm{SDP}}(u)\Vert_2$.
\end{enumerate}
Briët, Oliveira, and Vallentin showed that for each $u, v \in V$,
\begin{equation}\label{eq:bov-ineq}
\E_{\boldf}[\tfrac{1}{4}-\tfrac{1}{4}\langle\boldf(u), \boldf(v)\rangle]
\geq \abov \cdot[\tfrac{1}{4}-\tfrac{1}{4}\langle f_{\mathrm{SDP}}(u), f_{\mathrm{SDP}}(v)\rangle],
\end{equation}
where $\abov = 0.956$.
With the same reasoning as above,
they conclude the following.

\begin{theorem}[Performance of the BOV algorithm~\cite{BOV10}]\label{thm:bov-thm}
The Briët-Oliveira-Vallentin algorithm for the product state value achieves approximation ratio $\abov$.
\end{theorem}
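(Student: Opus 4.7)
The plan is to combine the per-edge guarantee~\eqref{eq:bov-ineq} of Briët, Oliveira, and Vallentin with the observation that $\prodsdp$ relaxes $\prodval$, and to use \Cref{prop:rewrite-product} to translate between $S^2$-valued functions and product states.

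First I would fix an optimal SDP solution $f_{\mathrm{SDP}}:V \to S^{n-1}$ attaining $\prodsdp(G)$. Note that $\prodsdp(G) \geq \prodval(G)$, since every Bloch-sphere assignment $f:V \to S^2$ is feasible for the SDP via the natural embedding $S^2 \hookrightarrow S^{n-1}$ obtained by padding with zero coordinates. Running the projection-rounding procedure of \Cref{sec:rounding} on $f_{\mathrm{SDP}}$ produces a random function $\boldf:V \to S^2$, and by \Cref{prop:rewrite-product} this $\boldf$ corresponds to a random product state $\ket{\psi_{\boldf}} = \bigotimes_{v \in V}\ket{\psi_{\boldf(v)}}$ whose energy equals $\E_{(\bu,\bv) \sim E}[\tfrac{1}{4} - \tfrac{1}{4}\langle \boldf(\bu), \boldf(\bv)\rangle]$.

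The main step is then to average the per-edge bound~\eqref{eq:bov-ineq} over $(\bu, \bv) \sim E$ and swap the order of expectations by linearity:
\begin{equation*}
\E_{\boldf}\E_{(\bu,\bv) \sim E}\!\left[\tfrac{1}{4} - \tfrac{1}{4}\langle \boldf(\bu), \boldf(\bv)\rangle\right] \geq \abov \cdot \E_{(\bu,\bv) \sim E}\!\left[\tfrac{1}{4} - \tfrac{1}{4}\langle f_{\mathrm{SDP}}(\bu), f_{\mathrm{SDP}}(\bv)\rangle\right] = \abov \cdot \prodsdp(G).
\end{equation*}
Hence some realization of $\boldf$ produces a product state of energy at least $\abov \cdot \prodsdp(G) \geq \abov \cdot \prodval(G)$; a deterministic witness can be extracted by standard derandomization, e.g.\ by the method of conditional expectations over the Gaussian entries of $\bZ$.

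There is essentially no obstacle beyond the per-edge inequality~\eqref{eq:bov-ineq} itself, which is the technical heart of~\cite{BOV10} and is quoted as a black box. Everything else reduces to bookkeeping around the SDP relaxation and the Bloch-sphere parametrization of qubits, so the theorem follows almost immediately once these pieces are lined up.
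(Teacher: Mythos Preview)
Your proposal is correct and matches the paper's own argument essentially line for line: the paper simply averages the per-edge bound~\eqref{eq:bov-ineq} over $(\bu,\bv)\sim E$ to get $\E_{\boldf}[\text{value}] \ge \abov\cdot\prodsdp(G) \ge \abov\cdot\prodval(G)$, exactly as you do. The derandomization remark is extra---the paper treats the approximation ratio in expectation and does not bother with it---but it is harmless.
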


Next, Gharibian and Parekh~\cite{GP19}
used projection rounding to round solutions of $\hsdp(G)$ into product states.
Like~\cite{BOV10}, this involves rounding $f_{\mathrm{SDP}}:V \rightarrow S^{n-1}$,
now the solution of $\hsdp(G)$,
into $\boldf:V \rightarrow S^{2}$.
They establish the inequality
\begin{equation}\label{eq:gp-ineq}
\E_{\boldf}[\tfrac{1}{4}-\tfrac{1}{4}\langle \boldf(u), \boldf(v)\rangle]
\geq \agp \cdot[\tfrac{1}{4}-\tfrac{3}{4}\langle f_{\mathrm{SDP}}(u), f_{\mathrm{SDP}}(v)\rangle],
\end{equation}
where $\alpha_{\mathrm{GP}} = 0.498$.
Note that this inequality possesses an asymmetry
not present in \cref{eq:gw-ineq,eq:bov-ineq}:
the coefficient of $\boldf(u) \boldf(v)$ on the left-hand side is $\tfrac{1}{4}$
but the coefficient of $\langle f_{\mathrm{SDP}}(u), f_{\mathrm{SDP}}(v)\rangle$ on the right-hand side is $\tfrac{3}{4}$.
This asymmetry comes about because they are solving the SDP relaxation for the maximum energy,
which is an optimization over all states,
but only rounding into the set of product states.
Nevertheless, this yields the following theorem.

\begin{theorem}[Performance of the GP algorithm~\cite{GP19}]
The Gharibian-Parekh algorithm for \qmaxcut
achieves approximation ratio $\agp$.
\end{theorem}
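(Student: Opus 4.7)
The plan is to lift the per-edge inequality~\eqref{eq:gp-ineq} to a global approximation guarantee, using the fact that $\hsdp$ is a relaxation of $\heis$. First, I would record the (routine) fact $\hsdp(G) \geq \heis(G)$ for every graph $G$, which is established in~\Cref{sec:sdp_proofs}. Next, given an instance $G = (V, E, w)$, I would run the Gharibian--Parekh algorithm: compute $f_{\mathrm{SDP}}:V \to S^{n-1}$ attaining value $\hsdp(G)$ up to arbitrary additive error in polynomial time, apply projection rounding to obtain a random $\boldf:V \to S^{2}$, and identify $\boldf$ with a random product state $\ket{\psi_{\boldf}} = \bigotimes_{v \in V}\ket{\psi_{\boldf(v)}}$ via the Bloch sphere correspondence of~\Cref{def:bloch_vectors}.

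With this setup, the theorem reduces to a short calculation. Using~\Cref{prop:rewrite-product}, exchanging the order of expectations, and applying~\eqref{eq:gp-ineq} edgewise,
\begin{align*}
\E_{\boldf}\bra{\psi_{\boldf}} H_G \ket{\psi_{\boldf}}
&= \E_{(\bu,\bv)\sim E}\,\E_{\boldf}\bigl[\tfrac{1}{4} - \tfrac{1}{4}\langle \boldf(\bu), \boldf(\bv)\rangle\bigr] \\
&\geq \agp \cdot \E_{(\bu,\bv)\sim E}\bigl[\tfrac{1}{4} - \tfrac{3}{4}\langle f_{\mathrm{SDP}}(\bu), f_{\mathrm{SDP}}(\bv)\rangle\bigr] \\
&= \agp \cdot \hsdp(G) \;\geq\; \agp \cdot \heis(G).
\end{align*}
Thus the algorithm outputs, in expectation, a product state of energy at least $\agp \cdot \heis(G)$, which is the claimed approximation guarantee. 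Standard repetition or conditional-expectation derandomization converts this expected-value statement into a deterministic guarantee.

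The main obstacle is not this averaging step but the per-edge inequality~\eqref{eq:gp-ineq} itself, which is supplied by~\cite{GP19}. Reproducing it amounts to computing a closed form for $\E_{\boldf}[\langle \boldf(u), \boldf(v)\rangle]$ as a function of $\rho := \langle f_{\mathrm{SDP}}(u), f_{\mathrm{SDP}}(v)\rangle \in [-1,1]$ under projection rounding, and then minimizing the resulting univariate ratio $\bigl(\tfrac{1}{4} - \tfrac{1}{4} g(\rho)\bigr)/\bigl(\tfrac{1}{4} - \tfrac{3}{4}\rho\bigr)$ over $\rho$ to locate the worst-case constant $\agp$. The coefficient asymmetry ($\tfrac{1}{4}$ on the product-state side versus $\tfrac{3}{4}$ on the SDP side) is precisely what makes this constant much smaller than the analogous BOV constant $\abov \approx 0.956$: rounding into product states genuinely loses against the entangled SDP value, and $\agp$ quantifies the loss.
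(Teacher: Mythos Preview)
Your proposal is correct and matches the paper's approach: the paper simply states that the per-edge inequality~\eqref{eq:gp-ineq} (imported from~\cite{GP19}) ``yields'' the theorem, and you have written out precisely the standard edge-averaging plus $\hsdp(G)\ge\heis(G)$ argument that makes this implication explicit. There is nothing more to add.
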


\subsection{SDP and algorithmic gaps}

The quality of an SDP relaxation is traditionally measured through its \emph{integrality gap}.

\begin{definition}[Integrality gap]\label{def:integrality-gap}
Let $\mathcal{P}$ denote a maximization problem
and let $\mathrm{SDP}(\cdot)$ be a semidefinite programming relaxation for $\calP$.
Given an instance~$\calI$ of $\calP$, its \emph{integrality gap} is the quantity
\begin{equation*}
\mathrm{GapSDP}(\calI) = \frac{\mathrm{OPT}(\calI)}{\mathrm{SDP}(\calI)}.
\end{equation*}
The \emph{integrality gap} of the SDP is defined to be the minimum integrality gap among all instances,
i.e.
\begin{equation*}
\inf_{\text{instances $\calI$}}\{\mathrm{GapSDP}(\calI)\}.
\end{equation*}
\end{definition}

The integrality gap of an SDP serves as a bound on the approximation ratio of any algorithm based on rounding its solutions.
This is because one typically analyzes a rounding algorithm by comparing the value of its solution
to the value of the SDP, as done for the rounding algorithms in \Cref{sec:rounding}.
We refer to this as the \emph{standard analysis} of rounding algorithms.
We therefore typically view a rounding algorithm as optimal if its worst-case performance matches the integrality gap.

Usually, though, one actually cares about how the rounded solution compares to the optimal value,
not the SDP value.
To show that one has given a tight analysis of an algorithm's approximation ratio,
one must actually exhibit a matching \emph{algorithmic gap}.

\begin{definition}[Algorithmic gap]
Let $\mathcal{P}$ denote a maximization problem.
Let $A$ be an approximation algorithm for $\calP$,
and let $A(\calI)$ be the expected value of the solution it outputs on input~$\calI$.
Given an instance~$\calI$, its \emph{algorithmic gap} is the quantity
\begin{equation*}
\mathrm{Gap}_{A}(\calI) = \frac{A(\calI)}{\mathrm{OPT}(\calI)}.
\end{equation*}
The \emph{algorithmic gap} of~$A$ is defined to be the minimum algorithmic gap among all instances,
i.e.
\begin{equation*}
\inf_{\text{instances $\calI$}}\{\mathrm{Gap}_{A}(\calI)\}.
\end{equation*}
\end{definition}

\subsection{Our results}

Our first set of results are a pair of integrality gaps for the \qmaxcut
and product state SDPs.

\begin{restatable}[Integrality gap for the \qmaxcut SDP]{theorem}{sdpintgap}\label{thm:integrality-gap-heisenberg}
Assuming \cref{conj:vector-borell-intro},
the \qmaxcut semidefinite program $\hsdp(G)$ has integrality gap $\alpha_{\mathrm{GP}}$.
\end{restatable}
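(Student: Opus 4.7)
The plan is to construct a Feige--Schechtman-style Gaussian integrality-gap family~\cite{FS02} matching the approximation ratio of the Gharibian--Parekh rounding. Define $\mu(\rho) := \E_{\bu \sim_\rho \bv}\inr{\bu/\Vert\bu\Vert}{\bv/\Vert\bv\Vert}$ for $\rho$-correlated standard Gaussians $\bu, \bv \in \R^3$; unpacking inequality~\eqref{eq:gp-ineq} for projection rounding shows that
\[ \agp \;=\; \inf_{\rho \in [-1, 0]} \frac{\tfrac14 - \tfrac14 \mu(\rho)}{\tfrac14 - \tfrac34 \rho}, \]
so I would first fix a correlation $\rho^* \in [-1, 0]$ attaining this infimum, since this is the ``worst-case'' slope at which the GP analysis is tight.

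The instance $\ggraph_{\rho^*, n}$ is then a fine discretization of the Gaussian graph on $\R^n$: vertices are sampled from the standard Gaussian measure on $\R^n$, and edge weights approximate the distribution of $\rho^*$-correlated Gaussian pairs. The SDP lower bound is immediate: the natural assignment $f_{\mathrm{SDP}}(x) = x/\Vert x \Vert$ is spherical-unit-valued, and by Gaussian concentration in high dimension the expected inner product of two $\rho^*$-correlated Gaussians normalized to the unit sphere tends to $\rho^*$, giving $\hsdp(\ggraph_{\rho^*, n}) \geq \tfrac14 - \tfrac34 \rho^* - o_n(1)$.

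For the matching upper bound on $\qmaxcut$, I would proceed in two steps. First, invoke the vector-valued Borell's inequality (\cref{conj:vector-borell-intro}) with $k = 3$: every $f : V \to S^2$ satisfies $\E_{(\bu, \bv) \sim E}\inr{f(\bu)}{f(\bv)} \geq \mu(\rho^*) - o_n(1)$, so by \Cref{prop:rewrite-product},
\[ \prodval(\ggraph_{\rho^*, n}) \;\leq\; \tfrac14 - \tfrac14 \mu(\rho^*) + o_n(1). \]
Second, use a Brand\~ao--Harrow-style monogamy-of-entanglement argument~\cite{BH16} to upgrade this to a bound on the full quantum value: $\qmaxcut(\ggraph_{\rho^*, n}) \leq \prodval(\ggraph_{\rho^*, n}) + o_n(1)$. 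This is plausible because the discretized Gaussian graph is essentially a complete graph (with Gaussian-weighted edges) and hence an extremely strong expander, precisely the regime where their bound is sharpest. Combining both estimates and taking $n \to \infty$ gives $\qmaxcut/\hsdp \to \agp$, which together with the $\agp$ lower bound on this ratio from the GP algorithm itself pins the integrality gap at exactly $\agp$.

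The main obstacle is the monogamy step: Brand\~ao--Harrow provides an additive product-state approximation whose error shrinks with degree and spectral expansion, so one must calibrate the discretization so that this error stays $o_n(1)$ while $\hsdp(\ggraph_{\rho^*, n})$ remains $\Theta(1)$. An appealing alternative that sidesteps monogamy is to exploit the $O(n)$-invariance of the Gaussian graph directly: the interaction measure is rotation-invariant, so one may restrict attention to quantum states whose two-body marginals depend only on a single correlation parameter, reducing the maximization to a one-parameter spectral computation that should directly reproduce $\tfrac14 - \tfrac14 \mu(\rho^*)$. Either route would complete the matching upper bound and establish the theorem.
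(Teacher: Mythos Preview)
Your proposal is correct and follows essentially the same route as the paper: the Gaussian graph at correlation $\rho^* = \rgp$, the SDP lower bound via $f_{\mathrm{SDP}}(x) = x/\Vert x\Vert$, the product-state upper bound via the vector-valued Borell inequality (your $\mu(\rho)$ is the paper's $F^*(3,\rho)$), and the upgrade to the full quantum value via Brand\~ao--Harrow. The paper handles the ``main obstacle'' you flag by a vertex-replication trick in the discretization step that drives the Brand\~ao--Harrow error to zero; your alternative $O(n)$-invariance idea is not pursued (and would be delicate to make rigorous for entangled states on an infinite system), but it is not needed.
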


Assuming the vector-valued Borell's inequality,
this matches the approximation ratio of the GP algorithm
and shows that projection rounding is optimal for $\hsdp(G)$.
In addition, it shows that product states are the optimal ansatz
for this SDP, as the GP algorithm outputs product states.
Finally, it implies that the GP algorithm is strictly worse than the algorithms of~\cite{AGM20,PT21},
and that level-$4$ of the ncSoS hierarchy
strictly improves upon level-$2$ of the ncSoS hierarchy.

\begin{restatable}[Integrality gap for product state SDP]{theorem}{prodintgap}\label{thm:integrality-gap-prod}
Assuming \cref{conj:vector-borell-intro},
the product state semidefinite program $\prodsdp(G)$ has integrality gap $\abov$.
\end{restatable}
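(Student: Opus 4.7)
The plan is to bound the integrality gap above and below by $\abov$. The lower bound on the gap is immediate: applying the BOV rounding inequality \eqref{eq:bov-ineq} edge-by-edge to an optimal solution of $\prodsdp(G)$ and averaging over $(\bu,\bv)\sim E$ yields $\prodval(G) \geq \abov \cdot \prodsdp(G)$ for every graph $G$, so $\mathrm{GapSDP}(G) \geq \abov$ unconditionally; this half does not invoke the conjecture.

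The matching upper bound on the integrality gap is where \cref{conj:vector-borell-intro} enters, via a family of Gaussian-graph instances in the tradition of the Feige--Schechtman integrality gap for \maxcut. Write $\Theta_3(\rho) = \E_{\bu\sim_\rho\bv}\langle \bu_{\leq 3}/\Vert \bu_{\leq 3}\Vert,\, \bv_{\leq 3}/\Vert \bv_{\leq 3}\Vert\rangle$, so that the \cite{BOV10} analysis gives $\abov = \inf_{\rho}(1-\Theta_3(\rho))/(1-\rho)$; a direct computation from \cite{BOV10} shows that the infimum is attained at some $\rho^* \in [-1,0]$, safely inside the conjecture's regime. For each large $n$, let $G_n$ be a sufficiently fine finite discretization of $\R^n$ weighted by the standard Gaussian measure, with edge distribution given by sampling $\rho^*$-correlated Gaussian pairs $(\bu,\bv)$ and rounding to their nearest discretization points. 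The SDP embedding $f_{\mathrm{SDP}}(x) = x/\Vert x \Vert$ has $\E\langle f_{\mathrm{SDP}}(\bu),f_{\mathrm{SDP}}(\bv)\rangle \to \rho^*$ as $n\to\infty$ by concentration of $\Vert \bu\Vert,\Vert \bv\Vert$ around $\sqrt{n}$, giving $\prodsdp(G_n) \geq \tfrac{1}{4}(1-\rho^*)-o_n(1)$. In the other direction, any candidate $f:V(G_n)\to S^2$ lifts by nearest-neighbor assignment to a measurable $\hat f:\R^n\to S^2$; the vector-valued Borell's inequality at $k=3$ and $\rho=\rho^*$ gives $\E\langle \hat f(\bu),\hat f(\bv)\rangle \geq \Theta_3(\rho^*)$, and the lift preserves expected inner products up to discretization error, so $\prodval(G_n) \leq \tfrac{1}{4}(1-\Theta_3(\rho^*))+o_n(1)$. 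Sending $n\to\infty$ drives $\mathrm{GapSDP}(G_n)$ down to $(1-\Theta_3(\rho^*))/(1-\rho^*)=\abov$, closing the gap.

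The principal technical obstacle is the discretization bookkeeping: one must produce a genuine finite weighted graph whose SDP and product-state optima agree with the continuous Gaussian model to within $o_n(1)$, and verify that the nearest-neighbor lift of $f:V(G_n)\to S^2$ to $\hat f:\R^n\to S^2$ transports the Borell inequality on Gaussian space back to the expectation over the discrete edge distribution. These are standard in the Feige--Schechtman spirit but must be executed with a vector-valued target rather than a sign; the same construction (with different coefficients) is presumably what yields \Cref{thm:integrality-gap-heisenberg} for $\hsdp$. A secondary checkpoint is confirming, via the explicit \cite{BOV10} analysis, that the worst-case correlation $\rho^*$ indeed lies in $[-1,0]$ rather than being approached only as $\rho\to 1^-$; were it the latter, one would need the unproven positive-$\rho$ version of the conjecture in place of \cref{conj:vector-borell-intro}.
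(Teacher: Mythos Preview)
Your proposal is correct and follows essentially the same approach as the paper. The paper likewise uses the $\rho$-correlated Gaussian graph at $\rho=\rbov$, lower-bounds $\prodsdp$ via the identity embedding $x\mapsto x/\Vert x\Vert$, upper-bounds $\prodval$ by $\tfrac14-\tfrac14 F^*(3,\rho)$ via \cref{conj:vector-borell-intro}, and passes to a finite graph through a discretization lemma; the lower bound $\mathrm{GapSDP}(G)\geq\abov$ comes from the BOV rounding exactly as you say, and the paper also verifies that the worst-case correlation $\rbov$ lies in $(-1,0)$.
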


This matches the approximation ratio of the BOV algorithm
and shows that projection rounding is optimal for $\prodsdp(G)$, assuming~\cref{conj:vector-borell-intro}.
Next, we show an algorithmic gap for the product state SDP.
This shows that the ``standard analysis'' of the BOV algorithm is sharp,
and so its approximation ratio is $\abov$ exactly.
We note that this result is unconditional, and therefore not reliant on~\cref{conj:vector-borell-intro}.

\begin{restatable}[Algorithmic gap for product state SDP]{theorem}{algogap}\label{thm:algo-gap}
The Briët-Oliveira-Vallentin algorithm has algorithmic gap $\abov$.
\end{restatable}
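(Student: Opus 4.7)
The direction $\mathrm{Gap}_{\mathrm{BOV}}(G) \ge \abov$ for every instance $G$ is immediate by averaging the edgewise BOV inequality~\cref{eq:bov-ineq} over edges and using $\prodval(G) \le \prodsdp(G)$: one obtains $\mathrm{BOV}(G) \ge \abov \cdot \prodsdp(G) \ge \abov \cdot \prodval(G)$. The substantive task is to exhibit a sequence of instances $\{G_N\}$ with $\mathrm{BOV}(G_N)/\prodval(G_N) \to \abov$, and my plan is to do this via a sphere-based Feige--Schechtman--style construction.

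First I would identify the tight angle. The projection-rounding function $g(\rho) := \E_{\bZ}\langle \boldf(u), \boldf(v)\rangle$, defined for unit vectors with $\langle f_{\mathrm{SDP}}(u), f_{\mathrm{SDP}}(v)\rangle = \rho$, has an explicit hypergeometric form, and by~\cite{BOV10} the BOV approximation ratio is $\abov = \min_\rho (1-g(\rho))/(1-\rho)$, attained at some $\rho^\star \in [-1, 0]$. Next I would build a sphere instance: let $V_N$ be an $\eps_N$-net of $S^2$ with $\eps_N \to 0$, and weight edges by the joint density of two $\rho^\star$-correlated points on $S^2$. Because $V_N \subset S^2 \subset S^{|V_N|-1}$, the identity embedding $f_{\mathrm{id}}(u) = u$ is a valid solution to both the product-state optimization and the SDP, with value $\tfrac14(1-\rho^\star) + o(1)$. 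By the unconditional $n = k = 3$ case of~\cref{conj:vector-borell-intro} proved earlier in the paper, $f_{\mathrm{id}}$ is asymptotically optimal for $\prodval$, so $\prodval(G_N) = \tfrac14(1-\rho^\star) + o(1)$. For the SDP, I would use the rotational symmetry of the edge distribution on $S^2$: averaging any candidate SDP solution $f: V_N \to S^{|V_N|-1}$ over rotations of the sphere does not decrease its objective, and the resulting rotation-equivariant solution can be shown to reduce to an essentially three-dimensional one, so $\prodsdp(G_N) = \tfrac14(1-\rho^\star) + o(1)$ as well. Running BOV on $f_{\mathrm{id}}$ then outputs value $\tfrac14(1-g(\rho^\star)) + o(1)$ by the definition of $g$, so $\mathrm{BOV}(G_N)/\prodval(G_N) \to \abov$, matching the easy lower bound.

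The main technical obstacle is the last step above: ruling out higher-dimensional SDP solutions on the sphere instance without invoking the full vector-valued Borell's inequality for $n > k$, which is precisely the conjecture we are trying to avoid assuming. The symmetrization/averaging idea is natural but requires a careful representation-theoretic or harmonic-analytic argument to make rigorous—roughly, one decomposes an arbitrary SDP solution into rotational Fourier modes on $S^2$ and checks that only the degree-one component contributes to the objective on the specified edge distribution. As a fallback if this argument proves delicate, one could work instead with complete graphs $K_n$ with the regular simplex embedding, where one has unconditionally $\prodsdp(K_n) = \prodval(K_n) = \tfrac14(1 + \tfrac{1}{n-1})$ and $\mathrm{BOV}(K_n) = \tfrac14(1 - g(-\tfrac{1}{n-1}))$; this realizes algorithmic gaps along the discrete sequence $\{-1/(n-1) : n \ge 2\}$, from which perturbation instances interpolating between consecutive $K_n$'s would be needed to drive the infimum down to $\abov$ exactly.
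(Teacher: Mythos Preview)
Your high-level plan and the identification of the obstacle are both on target, but the paper takes a substantially simpler route that sidesteps the obstacle entirely: it uses the \emph{noisy hypercube} $\hgraph^n_\rho$ rather than a sphere construction. The two crucial facts there are both one-liners. First, the embedding $f_{\mathrm{ident}}(x)=x/\sqrt{n}$ is SDP-optimal because for any $f\colon\{-1,1\}^n\to S^{N-1}$ and $\rho\le 0$ one has $\Stab_\rho[f]=\sum_S\rho^{|S|}\|\widehat f(S)\|_2^2\ge\rho\sum_S\|\widehat f(S)\|_2^2=\rho$, so $\prodsdp(\hgraph^n_\rho)=\tfrac14(1-\rho)$. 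Second---and this is the idea you are missing---the \emph{embedded dictators} $f(x)=(x_i,0,0)$ are genuine product-state assignments achieving exactly this same value, so $\prodval(\hgraph^n_\rho)=\tfrac14(1-\rho)$ as well, with no Borell-type input whatsoever. Feeding $f_{\mathrm{ident}}$ to projection rounding and using a Chernoff bound to concentrate $\langle\bx,\by\rangle/n$ near $\rho$ gives output value $\tfrac14(1-F^*(3,\rho))+o(1)$; take $\rho=\rbov$ and remove self-loops. Your invocation of the $n=k=3$ case of the conjecture is unnecessary in any event: for the algorithmic gap you only need a \emph{lower} bound on $\prodval$, which the candidate assignment already provides.

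Your sphere construction hits a real wall at exactly the step you flag. On a rotation-invariant graph on $S^2$, showing the identity is SDP-optimal is \emph{not} a symmetrization statement; it is the spectral claim that degree-$1$ spherical harmonics carry the minimum eigenvalue of the relevant noise operator. For the fixed-angle graph at $\rho^\star$ this becomes $P_d(\rho^\star)\ge\rho^\star$ for all Legendre polynomials $P_d$, a nontrivial inequality that fails at some angles (e.g.\ near $0$) and would need its own proof at $\rho^\star=\rbov$. If instead you use Gaussian-induced noise on $S^2$, the expected endpoint inner product is $F^*(3,\rho)$ rather than $\rho$, so the three quantities you need to line up (SDP value, product-state value, worst-case rounding angle) no longer coincide. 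The complete-graph fallback only realizes correlations $-1/(n-1)$ and does not reach $\rbov$; interpolating between them is a new construction, not a patch. The hypercube avoids all of this because dictators simultaneously witness the product-state optimum and the SDP optimum.
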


Finally, we prove a Unique Games-hardness result for the product state value.
This uses the standard framework of~\cite{KKMO07,Rag08} for translating SDP integrality gaps into inapproximability results.
We apply this framework to the integrality gap from \Cref{thm:integrality-gap-prod}.

\begin{theorem}[Inapproximability of the product state value]\label{thm:main-inapprox}
Assuming \cref{conj:vector-borell-intro} and the Unique Games Conjecture,
it is $\NP$-hard to approximate $\prodval(G)$ to within a factor of $\abov+\eps$, for all $\eps > 0$.
\end{theorem}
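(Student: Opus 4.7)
The plan is to follow the by-now standard template of Khot-Kindler-Mossel-O'Donnell~\cite{KKMO07} and Raghavendra~\cite{Rag08}, which converts an SDP integrality gap into a Unique Games-based inapproximability result via a dictatorship test. Starting from the gap instance $\mathcal{I}$ produced in the proof of \cref{thm:integrality-gap-prod}, I would build a product-state \qmaxcut dictatorship test on functions $f:\Omega^L\rightarrow S^2$ with two properties: \emph{completeness}, meaning that coordinate ``dictator'' functions achieve test value $\prodsdp(\mathcal{I})$, and \emph{soundness}, meaning that any $f$ with all low-degree coordinate influences at most $\tau$ achieves test value at most $\prodval(\mathcal{I})+o_\tau(1)$. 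Since the integrality gap forces $\prodval(\mathcal{I})/\prodsdp(\mathcal{I})=\abov+o(1)$, Raghavendra's framework then compiles this test into an $\NP$-hardness result for distinguishing these two values of $\prodval$, yielding the claimed $(\abov+\eps)$-inapproximability.

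Concretely, the test has vertex set $\Omega^L$, where $\Omega$ is a finite-support discretization of the Gaussian distribution underlying the gap instance, and its edges query pairs $(\bx,\by)$ of $\rho$-correlated strings whose correlation matches the SDP vector inner products in $\mathcal{I}$. The test value of $f$ is $\E_{(\bx,\by)}[\tfrac14-\tfrac14\langle f(\bx),f(\by)\rangle]$. Completeness for the dictator $f(x)=\phi(x_i)$, where $\phi:\Omega\rightarrow S^2$ is the spherical labelling obtained by BOV rounding of the SDP solution on $\mathcal{I}$, holds essentially tautologically, since the test then collapses to a single $\rho$-correlated pair and reproduces the per-edge SDP value on $\mathcal{I}$. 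Soundness is where the vector-valued Borell's inequality enters: a vector-valued invariance principle in the spirit of~\cite{MOO10,IM12}, applied componentwise to the three real-valued coordinates of $f$ and followed by reprojection onto $S^2$, replaces $f$ by a Gaussian-space function $g:\R^{|\Omega|L}\rightarrow S^2$ whose pairwise correlation $\E[\langle g(\bu),g(\bv)\rangle]$ differs from $\E[\langle f(\bx),f(\by)\rangle]$ by at most $o_\tau(1)$. Then \cref{conj:vector-borell-intro} with $k=3$ and the relevant negative $\rho$ lower-bounds $\E[\langle g(\bu),g(\bv)\rangle]$ by the inner product of the canonical optimizer $f_{\mathrm{opt}}(x)=x_{\leq 3}/\Vert x_{\leq 3}\Vert$, and a short computation identifies this lower bound with the value of the best product state on $\mathcal{I}$.

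The main obstacle is the vector-valued invariance step: unlike the scalar setting of~\cite{MOO10}, the target space $S^2$ is curved and the normalization $x\mapsto x/\Vert x\Vert$ is non-Lipschitz near the origin, so truncation and smoothing arguments must be combined carefully in order to transfer both the test value and the mean-zero structure needed to apply Borell. Prior work on rank-$k$ \maxcut~\cite{BOV10,IM12} provides the relevant techniques, and a careful bookkeeping of error terms should suffice. Once the dictatorship test with the claimed completeness and soundness is in hand, the final step is a routine application of Raghavendra's~\cite{Rag08} UG composition, in which the test is ``unrolled'' along the constraints of a hard Unique Games instance to produce a graph $G$ whose product-state value distinguishes the completeness and soundness regimes, thereby yielding the stated $\NP$-hardness under UGC and \cref{conj:vector-borell-intro}.
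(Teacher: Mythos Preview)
Your high-level strategy---a dictatorship test whose soundness is proved via a vector-valued invariance principle followed by \cref{conj:vector-borell-intro}, and then composed with a Unique Games instance in the KKMO style---is exactly what the paper does. But two details in your proposal diverge from the paper and one of them is a genuine gap.

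First, your completeness argument is not right as stated. You propose dictators of the form $f(x)=\phi(x_i)$ where $\phi:\Omega\to S^2$ is the BOV rounding of the SDP solution on the gap instance. But then the dictator's value is precisely the \emph{rounded} value $\tfrac14-\tfrac14 F^*(3,\rho)$, not the SDP value $\tfrac14-\tfrac14\rho$, and the completeness/soundness gap collapses. The paper avoids this entirely by not going through the Raghavendra-style ``generic gap instance'' machinery: it works directly on the Boolean noisy hypercube $\{-1,1\}^M$ with $\rho$-correlated edges, and uses \emph{embedded dictators} $f(x)=(x_i,0,0)$, which achieve value exactly $\tfrac14-\tfrac14\rho$ (\cref{lem:embedded-dictators}). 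This is the direct analogue of the original KKMO test for \maxcut, and it makes completeness trivial.

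Second, for soundness the paper uses a maneuver you do not mention and which substantially simplifies the invariance step. Before invoking the invariance principle, it reduces to \emph{odd} functions (\cref{lem:decreasing-stab}): replacing $f$ by $\tfrac12(f(x)-f(-x))$ only decreases $\Stab_\rho$ for $\rho\le 0$ and does not increase influences. For odd $g$ one has $\Stab_\rho[g]=-\Stab_{-\rho}[g]=-\E\bigl[\|\T_{\sqrt{-\rho}}g\|_2^2\bigr]$, so the quantity to transfer via invariance is $\E[\Psi(\T_{\sqrt{-\rho}}g)]$ with $\Psi(v)=\min(\|v\|_2^2,1)$, which is $2$-Lipschitz. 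This sidesteps the non-Lipschitz normalization issue you flag, and afterwards one rounds back to $B^3$ and applies \cref{cor:vecor-borell-positive-rho} (the odd, positive-$\rho$ consequence of the conjecture). Your proposed route of ``invariance componentwise then reproject onto $S^2$'' would also need to control the error from reprojection and preserve enough structure to apply Borell; the odd-function trick is what makes this clean.
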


This shows that the BOV algorithm is optimal, assuming \cref{conj:vector-borell-intro} and the UGC.
Next, we observe that the \qmaxcut instances which occur in this proof
have interaction graphs of high degree.
Hence, by \cite{BH16} their product state value is roughly identical to their maximum energy.
As a consequence, we also derive a Unique-Games hardness result for the maximum energy.

\begin{theorem}[Inapproximability of \qmaxcut]\label{thm:inapprox_general_state}
Assuming \cref{conj:vector-borell-intro} and the Unique Games Conjecture,
it is $\NP$-hard to approximate $\heis(G)$ to within a factor of $\abov+\eps$, for all $\eps > 0$.
\end{theorem}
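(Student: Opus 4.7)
\textbf{Proof plan for Theorem~\ref{thm:inapprox_general_state}.}
The plan is to deduce this from Theorem~\ref{thm:main-inapprox} by a black-box reduction using the Brandão--Harrow ``mean-field'' bound~\cite{BH16}. The key observation is already flagged in the paragraph preceding the statement: for interaction graphs $G$ whose minimum degree $d$ is large (or which are sufficiently good expanders), Brandão--Harrow guarantees
\[
0 \;\leq\; \heis(G) - \prodval(G) \;\leq\; \delta(d),
\]
where $\delta(d) \to 0$ as $d \to \infty$. Since by definition $\heis(G) \geq \prodval(G)$, this means that on high-degree instances the two quantities agree up to a vanishing additive error.

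First, I would invoke the proof of Theorem~\ref{thm:main-inapprox}. This proof proceeds via the KKMO--Raghavendra template: one takes the integrality gap instance of $\prodsdp$ coming from \Cref{thm:integrality-gap-prod} (itself built from the vector-valued Borell configuration) and composes it with a Unique Games instance by ``long-code'' gadgets. This produces, for every $\eps > 0$, a family of \qmaxcut instances $G$ such that, assuming the UGC, it is NP-hard to distinguish between $\prodval(G) \geq s$ and $\prodval(G) \leq \abov \cdot s + \eps'$ for some constant $s > 0$ (bounded away from zero, since $s$ is essentially the soundness of the Borell gap and is of order~$1/4$) and $\eps'$ arbitrarily small in terms of $\eps$.

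The second step is to ensure the interaction graphs produced above have large minimum degree. Inspection of the long-code reduction shows that the degree of each vertex is essentially the number of long-code coordinates times the support size of the noise operator, so by choosing the long-code dimension large enough (which only affects the size of the instance polynomially for any fixed UG alphabet), one may drive $d \to \infty$ without changing the completeness/soundness parameters. Applying Brandão--Harrow to this family gives
\[
\heis(G) \;=\; \prodval(G) + o(1).
\]
Therefore, in the YES case $\heis(G) \geq s$, and in the NO case $\heis(G) \leq \abov \cdot s + \eps' + o(1)$. Since $s$ is a positive constant, choosing the long-code dimension and $\eps'$ appropriately, the additive $o(1)$ slack can be absorbed into $\eps \cdot s$, yielding a multiplicative gap of $\abov + \eps$ for $\heis(G)$.

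The main obstacle is the bookkeeping in the second step: one must verify that the standard KKMO--Raghavendra reduction, when instantiated with the \qmaxcut integrality-gap gadget, does produce graphs satisfying the hypotheses of~\cite{BH16} (either large minimum degree or sufficient expansion), and that the parameters can be tuned so that the Brandão--Harrow error $\delta(d)$ is smaller than $\eps \cdot s$ while the reduction itself remains polynomial-time and preserves the claimed completeness and soundness. This is essentially a parameter-pushing exercise once the two ingredients—\Cref{thm:main-inapprox} and~\cite{BH16}—are in hand, but it is the only place where any real calculation is required.
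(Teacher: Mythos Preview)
Your proposal is correct and matches the paper's approach: in the proof of Theorem~\ref{thm:ug-hardness}, the paper runs the KKMO-style reduction (establishing the product-state hardness) and then, in a final step, applies Brand\~ao--Harrow to the resulting instances, with the UG alphabet size $M$ chosen large enough to drive the BH error below $\eps$. The only technical refinement beyond what you anticipate is that the reduction produces \emph{weighted} graphs (with self-loops that must first be removed), so the paper invokes the non-uniform version of Brand\~ao--Harrow (Corollary~\ref{cor:BH-nonuniform-easy-to-use}) rather than a minimum-degree statement; this is precisely the ``bookkeeping'' you flag.
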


This is our one result which is \emph{not} tight, to our knowledge,
as the best known approximation for $\qmaxcut$
achieves approximation $0.533$~\cite{PT21}, which is less than $\abov=0.956$.
The difficulty is that $\hsdp(G)$ is not an optimal SDP,
as it is outperformed by the algorithms of~\cite{AGM20,PT21},
and so we cannot convert an integrality gap for it into a UG-hardness result.

We also generalize our results for the product state value
to hold for $\maxcut_k$ for any fixed~$k$. See~\Cref{sec:rank-constrained} for more details.

\ignore{
We also considered the problem of whether the GP algorithm is optimal,
at least among all algorithms using the product state ansatz.
This potentially \emph{could} be done by converting~\Cref{thm:integrality-gap-heisenberg}
into an inapproximability result using the framework of~\cite{KKMO07,Rag08}.
However, we failed to do so after repeated attempts.

\begin{remark}[Results for $\maxcut_k$] We have $\prodval(G) = \frac{1}{2}\maxcut_3(G)$ and $\prodsdp(G) = \frac{1}{2}\mcsdp(G)$, so the above results may be viewed as results for $\maxcut_3(G)$.  These results generalize to $\maxcut_k(G)$ for $k \geq 3$ using $\mcsdp(G)$ as a relaxation for $\maxcut_k(G)$ (i.e., $\maxcut_k(G) \leq \mcsdp(G)$, for $k \geq 1$).  Although most of our key results are proven for the case of general $k$, for some of them we do focus on the $k=3$ case for the sake of exposition.\onote{John and Yeongwoo, please modify this statement to more accurately reflect what we do.}
\end{remark}
}

\subsection{Proof overview}\label{sec:tech-overview-overview}

Our proof of \Cref{thm:integrality-gap-heisenberg,thm:integrality-gap-prod}
is inspired by a well-known integrality gap construction for the \maxcut SDP due to Feige and Schechtman~\cite{FS02}
which achieves an integrality gap of~$\agw$.
We will begin with an overview of this construction
and a related construction called the ``Gaussian graph'',
and then we will discuss how to modify these constructions
to give integrality gaps for the two \qmaxcut SDPs.

\paragraph{Integrality gap for the \maxcut SDP.}
The construction of the Feige-Schechtman \maxcut integrality gap
is motivated by the following two desiderata.
\begin{enumerate}
\item \label{item:desiderata-one}
	Given an optimal solution  $f_{\mathrm{SDP}}:V \rightarrow S^{n-1}$ to the \maxcut SDP,
	halfspace rounding outputs a random solution~$\boldf$ 
	with expected value exactly equal to $\agw \cdot \mcsdp(G)$.
\item \label{item:desiderata-two}
	This random solution~$\boldf$ is always an optimal cut, i.e.\ it has value $\maxcut(G)$.
\end{enumerate}

As we saw in \Cref{sec:rounding},
the random solution~$\boldf$ output by halfspace rounding
 has expected value at least $\agw \cdot \mcsdp(G)$.
 Hence, if \Cref{item:desiderata-one} were not true,
 one of these solutions would have value strictly bigger than $\agw \cdot \mcsdp(G)$,
 contradicting~$G$ being an integrality gap.
 Likewise, if \Cref{item:desiderata-two} were not true,
 there would exist a cut with value at least~$\agw \cdot \mcsdp(G)$.
 
 Now we use \Cref{item:desiderata-one} to derive a constraint on $f_{\mathrm{SDP}}$.
 Recall from \Cref{eq:gw-ineq} that the value of $\boldf$
is at least $\agw$ times the value of $f_{\mathrm{OPT}}$ in the SDP \emph{for each edge $(u,v)$}.
 In other words,
 \begin{equation}\label{eq:gw-ineq-restated}
\E_{\boldf}[\tfrac{1}{2}-\tfrac{1}{2}\boldf(u) \boldf(v)]
\geq \agw \cdot[\tfrac{1}{2}-\tfrac{1}{2}\langle f_{\mathrm{SDP}}(u), f_{\mathrm{SDP}}(v)\rangle],
\end{equation}
This means that if \Cref{item:desiderata-one} is true,
\Cref{eq:gw-ineq-restated} must be satisfied with equality, for each edge $(u, v)$.
To see what this implies, we first recall the proof of \Cref{eq:gw-ineq-restated}.
Letting $\rho_{u,v}$ denote the inner product $\rho_{u,v} = \langle f_{\mathrm{SDP}}(u), f_{\mathrm{SDP}}(v)\rangle$,
there is an exact formula for the left-hand side, namely
\begin{equation*}
\E_{\boldf}[\tfrac{1}{2}-\tfrac{1}{2}\boldf(u) \boldf(v)]
= \frac{\arccos(\rho_{u,v})}{\pi}.
\end{equation*}
See \cite{GW95} for a proof of this fact.
Then \Cref{eq:gw-ineq-restated} follows as a consequence of the statement 
\begin{equation*}
\min_{-1\leq \rho \leq 1} \frac{\arccos(\rho)/\pi}{\tfrac{1}{2}-\tfrac{1}{2}\rho} = \agw.
\end{equation*}
There is in fact a \emph{unique} minimizer of this expression,
which we write as $\rgw \approx -0.69$.
As a result, \Cref{eq:gw-ineq-restated} is satisfied with equality if and only if $\rho_{u,v} = \rgw$.
Thus, \Cref{item:desiderata-one} implies that $\rho_{u,v} = \rgw$ for each edge $(u,v)$.

Motivated by this, Feige and Schectman consider the \emph{$n$-dimensional sphere graph} $\mathcal{S}^{n-1}_{\rgw}$.
This is an infinite graph with vertex set $S^{n-1}$
in which two vertices $u, v \in S^{n-1}$ are connected whenever $\langle u, v\rangle \approx \rgw$.
There is a natural SDP embedding of the sphere graph 
$f_{\mathrm{SDP}}:S^{n-1} \rightarrow S^{n-1}$,
in which $f_{\mathrm{SDP}}(u) = u$,
for each $u \in S^{n-1}$.
It has value
\begin{equation*}
\E_{(\bu, \bv) \sim E}[\tfrac{1}{2}-\tfrac{1}{2}\langle f_{\mathrm{SDP}}(\bu), f_{\mathrm{SDP}}(\bv)\rangle]
= \E_{(\bu, \bv) \sim E}[\tfrac{1}{2}-\tfrac{1}{2}\langle \bu, \bv \rangle]
\approx \tfrac{1}{2} - \tfrac{1}{2} \rgw.
\end{equation*}
Thus, $\mcsdp(\mathcal{S}^{n-1}) \gtrsim \tfrac{1}{2} - \tfrac{1}{2} \rgw$.
In addition, this graph satisfies our two desiderata:

\begin{enumerate}
\item By construction, $\langle f_{\mathrm{SDP}}(u), f_{\mathrm{SDP}}(v)\rangle \approx \rgw$
	for each edge $(u, v)$.
	Thus, hyperplane rounding will produce a random cut $\boldf$ with average value $\approx \agw \cdot (\tfrac{1}{2} 	- \tfrac{1}{2} \rgw)$.
\item Each cut $\boldf$ is of the form $\boldf(u) = \sgn(\langle \bz, f_{\mathrm{SDP}}(u) \rangle) = \sgn(\langle \bz, u \rangle)$, where~$\bz$ is a random Gaussian.
	By rotational symmetry, all of these cuts have the same value,
	and in particular they have the same value as the case when $\bz = e_1$,
	i.e.\ the cut $f_{\mathrm{opt}}(u) = \sgn(u_1)$.
	The main technical argument of~\cite{FS02} is that this is in fact the optimal cut.
	In other words, for every function $f : S^{n-1} \rightarrow \{-1, 1\}$,
	\begin{equation}\label{eq:whatevs}
	\E_{(\bu, \bv) \sim E}[f(\bu) f(\bv)] \geq \E_{(\bu, \bv) \sim E}[f_{\mathrm{opt}}(\bu) f_{\mathrm{opt}}(\bv)].
	\end{equation}
\end{enumerate}
\noindent
Together, these imply that $\mathcal{S}^{n-1}_{\rgw}$
has integrality gap~$\agw$.

\paragraph{Moving to the Gaussian graph.}
We will actually use a second, related construction of the integrality gap called the \emph{Gaussian graph},
which will turn out to be more convenient to analyze in our case.
It is defined as follows.

\begin{definition}[$\rho$-correlated Gaussian graph]
Let $n$ be a positive integer and $-1\leq \rho \leq 1$.
We define the \emph{$\rho$-correlated Gaussian graph}
to be the infinite graph $\ggraph^n_\rho$ with vertex set $\R^n$
in which a random edge $(\bu, \bv)$ is distributed as two $\rho$-correlated Gaussian vectors.
\end{definition}

For large~$n$, the Gaussian graph, when scaled by a factor of $\tfrac{1}{\sqrt{n}}$, behaves like the sphere graph.
For example, if $\bu$ is a random Gaussian vector,
then $\tfrac{1}{\sqrt{n}}\bu$ is close to a unit vector with high probability.
In addition, if $(\bu, \bv)$ is a random edge, then $\langle \tfrac{1}{\sqrt{n}}\bu, \tfrac{1}{\sqrt{n}}\bv \rangle \approx \rho$
with high probability.
As a result, an argument similar to above shows that it has an integrality gap of $\agw$.

The optimal SDP assignment is the function $f_{\mathrm{SDP}}(u) = u/\Vert u \Vert$.
Hyperplane rounding produces a cut of the form $\boldf(u) = \sgn(\langle \bz, f_{\mathrm{SDP}}(u)\rangle)$,
which is equivalent (up to rotation) to the assignment $f_{\mathrm{opt}}(u) = \sgn(u_1)$.
The analogous statement to \Cref{eq:whatevs} above
that $f_{\mathrm{opt}}$ is optimal is a well-known result in Gaussian geometry
known as \emph{Borell's inequality} or \emph{Borell's isoperimetric theorem}.
There is a version of this theorem for positive and negative~$\rho$,
but we will only need to state the negative~$\rho$ case 
as the integrality gap only requires the $\rgw \approx -0.69$ case. 

\begin{theorem}[Borell's isoperimetric theorem, negative $\rho$ case]\label{thm:borell}
Let $n$ be a positive integer and $-1\leq \rho \leq 0$.
Let $f : \R^n \rightarrow \{-1, 1\}$.
In addition, let $f_{\mathrm{opt}}:\R^n\rightarrow \{-1, 1\}$ be defined by $f_{\mathrm{opt}}(x) = \sgn(x_1)$.
Then
\begin{equation*}
\E_{\bu \sim_\rho \bv}[f(\bu) f(\bv)] \geq \E_{\bu \sim_\rho \bv}[f_{\mathrm{opt}}(\bu) f_{\mathrm{opt}}(\bv)].
\end{equation*}
\end{theorem}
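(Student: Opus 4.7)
I would follow Borell's classical approach via Gaussian symmetrization. The first step is to rewrite the problem set-theoretically: setting $A = f^{-1}(\{1\})$ so that $f = 2\mathbf{1}_A - 1$, a direct expansion combined with the exchangeability of $\rho$-correlated Gaussians yields
$$
\E_{\bu \sim_\rho \bv}[f(\bu) f(\bv)] \;=\; 1 - 4\, \Pr[\bu \in A,\ \bv \in A^c].
$$
It therefore suffices to show that, for $\rho \in [-1, 0]$, the anti-correlation probability $\Pr[\bu \in A,\ \bv \in A^c]$ over a measurable set $A \subseteq \R^n$ is maximized by the mean-zero halfspace $A = \{x : x_1 > 0\}$.

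The heart of the argument is Ehrhard (Gaussian) symmetrization. For a unit vector $v \in S^{n-1}$, define $S_v A$ by replacing each line-slice $\{t \in \R : x + tv \in A\}$ (for each $x \perp v$) with a half-line of the same one-dimensional Gaussian measure; this operation preserves the Gaussian measure $\gamma(A)$. I would prove, by conditioning on the $n-1$ coordinates orthogonal to $v$, that $S_v$ can only increase the anti-correlation probability when $\rho \le 0$. The conditional statement reduces to a two-dimensional Gaussian rearrangement inequality, which I would establish by direct computation on the bivariate Gaussian density; this is where Borell's original argument is most delicate. Iterating $S_v$ along a countable dense set of directions and passing to an $L^1$ limit, a standard compactness argument forces the extremizer to be a halfspace.

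By rotational invariance of $\gamma$, I may then assume the extremal halfspace has the form $H_c = \{x : x_1 \leq c\}$. Letting $\bu_1, \bv_1$ denote the first coordinates of $\bu, \bv$ (a pair of $\rho$-correlated standard 1D Gaussians), a one-dimensional calculation shows that $c \mapsto \Pr[\bu_1 \leq c,\ \bv_1 > c]$ is symmetric in $c \mapsto -c$ (using the distributional symmetry $(\bu, \bv) \mapsto (-\bu, -\bv)$), vanishes as $|c| \to \infty$, and is checked by differentiation to be maximized at $c = 0$ whenever $\rho \leq 0$. This identifies the minimizer of the noise stability as $f_{\mathrm{opt}}(x) = \sgn(x_1)$, completing the proof. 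The main obstacle is the symmetrization step: unlike Lebesgue rearrangement, Gaussian rearrangement does not have a one-line proof, and establishing anti-correlation monotonicity under $S_v$ requires a careful 2D computation on the bivariate Gaussian density. A cleaner modern alternative—useful as a backup—would be to replace symmetrization with an Ornstein-Uhlenbeck heat-flow monotonicity argument in the spirit of Mossel and Neeman, bypassing the rearrangement step at the cost of invoking some semigroup machinery.
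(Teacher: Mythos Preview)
The paper does not supply its own proof of this statement: Theorem~\ref{thm:borell} is invoked as a classical result due to Borell~\cite{Bor:85}, and the paper's own technical contributions concern the vector-valued generalization (Conjecture~\ref{conj:vector-borell}). Your symmetrization sketch is essentially Borell's original argument, and the overall architecture---reduce to sets via $f = 2\mathbf{1}_A - 1$, show halfspaces are extremal at each fixed measure via Ehrhard symmetrization, then optimize the threshold---is sound.

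It is worth noting that the paper's own techniques for the vector-valued problem take a completely different route. In the full-dimensional case $k=n$ (Theorem~\ref{thm:n-dim-borell}, which specializes to the $n=1$ instance of Borell's theorem), the paper decomposes $\R^n$ into spherical shells, expands in spherical harmonics on each shell, and uses a purely spectral argument (the Funk--Hecke formula and Gegenbauer polynomial estimates) to show the identity map is optimal shell-by-shell. This avoids symmetrization entirely but is tailored to the case where the output dimension equals the input dimension; it does not directly yield the $k=1$, general-$n$ case that you are proving. Your symmetrization approach, by contrast, handles the scalar case in all ambient dimensions but---as the paper explicitly notes in Section~\ref{sec:dim-reduction}---does not extend to vector-valued outputs because there is no clear way to symmetrize a function into $S^{k-1}$. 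So the two approaches are genuinely complementary rather than interchangeable. Your mention of the Mossel--Neeman semigroup alternative is also apt; the paper cites this line of work~\cite{MN:15,Eld:15} and remarks that those methods, too, fail to extend to the vector-valued setting because they would force the wrong optimizer under a mean constraint.
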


\paragraph{Integrality gap for \qmaxcut.}
The integrality gaps we design for the product state and \qmaxcut SDPs
are constructed along very similar lines to the \maxcut integrality gaps.
In both cases, our goal is to show that projection rounding is the optimal rounding algorithm,
which motivates us to study the conditions in which projection rounding performs worst.
As we saw in \Cref{sec:rounding}, projection rounding applied to the product state and \qmaxcut SDPs
yields a random function $\boldf:V \rightarrow S^2$ satisfying the following conditions, for each edge $(u,v)$:
\begin{align}
\E_{\boldf}[\tfrac{1}{4}-\tfrac{1}{4}\langle\boldf(u), \boldf(v)\rangle]
&\geq \abov \cdot[\tfrac{1}{4}-\tfrac{1}{4}\langle f_{\mathrm{SDP}}(u), f_{\mathrm{SDP}}(v)\rangle],\label{eq:bov-ineq-restated}\\
\E_{\boldf}[\tfrac{1}{4}-\tfrac{3}{4}\langle\boldf(u), \boldf(v)\rangle]
&\geq \agp \cdot[\tfrac{1}{4}-\tfrac{3}{4}\langle f_{\mathrm{SDP}}(u), f_{\mathrm{SDP}}(v)\rangle],\label{eq:pk-ineq-restated}
\end{align}
respectively.
As in the case of \maxcut,
the left-hand sides of these equations depend only on the inner product $\rho_{u, v} = \langle f_{\mathrm{SDP}}(u), f_{\mathrm{SDP}}(v)\rangle$.
This was shown by~\cite{BOV10},
who gave the following exact expression for this quantity:
\begin{equation}\label{eq:might-reference-later}
\E_{\boldf} \langle \boldf(u), \boldf(v)\rangle 
= \frac{2}{3}\left(\frac{\Gamma(2)}{\Gamma(3/2)}\right)^2\rho_{u, v}\cdot\,_2F_1\left(1/2,1/2;5/2;\rho_{u, v}^2\right),
\end{equation}
where $_2F_1(\cdot, \cdot;\cdot;\cdot)$ is the Gaussian hypergeometric function.
Thus, one can compute the approximation ratios $\abov$ and $\agp$
by finding the ``worst case'' values of $\rho_{u,v}$.
For \Cref{eq:bov-ineq-restated}, this is $\rbov \approx -0.584$;
for \Cref{eq:pk-ineq-restated}, this is $\rgp \approx -0.97$.

This suggests finding a graph for the product state SDP and the \qmaxcut SDP
in which $\rho_{u, v} = \rbov$ and $\rho_{u, v} = \rgp$ for each edge $(u, v)$, respectively.
Again, we will use the $\rho$-correlated Gaussian graph,
where again the optimum SDP assignment is the function $f_{\mathrm{SDP}}(u) = u/\Vert u \Vert$.
Projection rounding produces a solution of the form $\boldf(u) = \bZ u / \Vert \bZ u \Vert$,
where $\bZ$ is a random $3 \times n$ Gaussian matrix.
When $n$ is large, this is roughly equivalent to projecting $u$ onto a random 3-dimensional subspace,
in which case it is equivalent (up to rotation) to $f_{\mathrm{opt}}(u) = (u_1, u_2, u_3) / \Vert (u_1, u_2, u_3)\Vert$.

We must now show that $f_{\mathrm{opt}}$ is indeed the optimal solution.
In the case of the product state value, we must show that it is the best among all product states;
equivalently, among all functions $f:\R^n \rightarrow S^2$.
On the other hand, for the case of the ground state energy, we must show that it is the best among all quantum states,
which need not be product.
Fortunately, the Gaussian graph is of high (in fact, infinite) degree,
and so by \cite{BH16} the optimal state is a product state.
\onote{We will need to qualify this since the graph is weighted.}
The optimality of $f_{\mathrm{opt}}$ then follows from our conjectured vector-valued
analogue of Borell's isoperimetric theorem (\cref{conj:vector-borell-intro},
in the case $k=3$).

\section{Conclusion and open questions}

In this work, we have made progress on understanding the approximability of \qmaxcut.
However, there are many interesting questions which remain open,
such as finding the optimal approximation ratio.
We list these below.

\begin{enumerate}
\item Most obviously, is the vector-valued Borell's inequality true?
\item Does there exist an algorithmic gap instance for the GP algorithm with algorithmic gap~$\agp$?
	We believe there is but were unable to find one.
	The key difficulty seems to be that an algorithmic gap instance should be a low-degree graph with an entangled maximum energy state.
	Otherwise, the optimizing state would be close to a product state,
	and in this case the GP algorithm matches the $\abov$ approximation ratio of the BOV algorithm.
	But finding explicit examples of families of \qmaxcut instances with entangled maximum energy states
	for which we can even compute their optimum value is a difficult problem,
	and only a few such examples are known (see~\cite{F17, M13}).
\item Can we perform an optimal analysis of the level-4 ncSoS relaxation of \qmaxcut?
	This would involve designing a rounding algorithm and finding an integrality gap which matches its performance,
	as well as identifying the optimal ansatz, which would need to be more powerful than product states.
	Inspired by the work of~\cite{AGM20}, Parekh and Thompson~\cite{PT21} have considered tensor products of one- and two-qubit states,
	but it is unclear whether this is the optimal ansatz.
\item Could the level-4 ncSoS relaxation actually be optimal for \qmaxcut? 
\item The BOV algorithm is essentially the optimal algorithm for \qmaxcut on high-degree graphs.
	What about low-degree graphs? Can we design improved approximation algorithms in this case as well? Recently, for example, Anshu, Gosset, Morenz Korol and Soleimanifar have demonstrated a way to improve the objective value of a given product state assuming the graph is low degree \cite{AGKS21}.
\item Are there \emph{quantum} approximation algorithms for \qmaxcut whose approximation ratios we can analyze?
\item Can we prove a hardness of approximation result for \qmaxcut which improves upon our~\Cref{thm:inapprox_general_state}?
	This would involve showing a reduction from the Unique Games problem
	which outputs low-degree instances of \qmaxcut.
	This is because high-degree instances have maximum energy states
	which can be approximated by product states,
	and so the BOV algorithm produces an $\abov$-approximation in this case.
	However, traditional Unique Games reductions typically produce high-degree instances,
	and so it seems like new techniques might be needed.
	This is related to our difficulty in producing algorithmic gap instances with entangled maximum energy states,
	and so designing algorithmic gap instances might be a good first step.
\item Is it $\QMA$-hard to approximate \qmaxcut?
	Proving this unconditionally would also prove the quantum PCP conjecture, putting it beyond the range of current techniques.
	But it might be possible to show this \emph{assuming} the quantum PCP conjecture is true.
\item Since the Heisenberg model has an analytic solution only for certain graphs (see for example the ``Bethe Ansatz'' ~\cite{Bet31}), physicists use a set of heuristic algorithms for approximating the ground state of Hamiltonians from the Heisenberg model~\cite{B30, A88, F79}.
	Can we find a rigorous theoretical justification to support the success of these heuristics in practice?
\item How well do the techniques used to design approximation algorithms for \qmaxcut
	carry over to other families of Hamiltonians?
	 One natural family is the set of Hamiltonians in which each local term is a projective matrix.  An $\alpha$-approximation for this case gives an
	 $\alpha$-approximation for any Hamiltonian with positive-semidefinite local terms. 
	 This was considered in \cite{PT20} where it was shown that a rounding algorithm akin to that used in \cite{GP19} also applies to this case.  This was followed up by an approach in \cite{PT22} that applies to more general ansatzes.
	 However, the analysis in these works is not tight so proving the actual performance of the algorithm is an open question.
\end{enumerate}

\section*{Acknowledgments}

We would like to thank Anurag Anshu, Srinivasan Arunachalam, and Penghui Yao
for their substantial contributions to this paper.
We would also like to thank Steve Heilman for pointing out a bug in a previous draft of the paper.

\ignore{
\begin{table}[ht]
\caption{Approximation Algorithms for maximum $2$-local Hamiltonian}\label{table:approx_algs}
\centering
\begin{tabular}{c | c}
\hline\hline
 Problem class & Approximation factor \\
\hline
 \, & \textcolor{red}{0.956}~\cite{BOV10}$^a$\\
 Heisenberg XYZ & \textcolor{red}{0.498}~\cite{GP19}\\
 \, & 0.531~\cite{AGM20}\\
\, & 0.533~\cite{PT21}\\
\hline
 \, & \textcolor{red}{$1/2-\epsilon$} (for dense instances)~\cite{GK12}\\
 Positive semi-definite terms & \textcolor{red}{0.328}~\cite{HEP20}\\
 \, & \textcolor{red}{0.387} (\textcolor{red}{0.467} for maximally entangled)~\cite{PT20}\\
 \hline 

 \, & \textcolor{red}{$\Omega(1/\ell)$}~\cite{HM17}\\
 Traceless & \textcolor{red}{$\Omega(1/\log(n))$}~\cite{BGKT19} \\
 \, & PTAS (for planar instances)~\cite{NST09}\\
\, & \textcolor{red}{0.187} (for bipartite instances)~\cite{PT20}\\
\hline 
 Dense or & \,\\
 Planar or & \textcolor{red}{PTAS}~\cite{BH16}\\
 Low threshold rank & \,\\
\hline 
Fermionic Hamiltonians & \textcolor{red}{$\Omega(1/(n\log(n)))$} \cite{BGKT19}\\
\hline
\end{tabular}

{\footnotesize The first approximation factor is relative to best product state and not comparable to others.  \textcolor{red}{Red} approximation factors use a product state ansatz.  The number of qubits or the number of Fermionic modes is $n$, and $\ell$ is the max number of $2$-local terms a qubit participates in.  PTAS refers to polynomial-time approximation schemes.  }
\end{table}
}
\newpage 
\part{A vector-valued Borell's inequality}\label{part:borell}
In this part, we discuss our main conjecture, \Cref{conj:vector-borell-intro}, in detail
and prove various partial results.

We are primarily concerned with vector-valued functions, $f : \R^n \to B^k$,
where $B^k$ is the unit ball in $\R^k$. Writing $(f_1,\ldots,f_k) = f$ for the coordinate
functions of $f$, note that if $f: \R^n \to B^k$ then each $f_i$ takes values in $[-1, 1]$
and is therefore square-integrable under
the standard Gaussian measure $\gamma$ (i.e. $f_i \in L^2(\gamma))$.  Generally if a function $f$ is square integrable on a space $A$ with measure $\eta$ we will denote this as $f\in L^2(A, \eta)$ or as $f\in L^2(\eta)$ if the space is clear from context.
We say that a vector-valued function $f: \R^n \to \R^k$ is square-integrable
(i.e.\ belongs to $L^2(\gamma)$) if each $f_i \in L^2(\gamma)$.

\paragraph{Gaussian variables.} We will make use of standard notations for multivariate Gaussian random variables: If $\mu\in \mathbb{R}^n$ and $\Sigma\in \mathbb{R}^{n\times n} \succeq 0$ then $\bx\sim \normal(\mu, \Sigma)$ if $\bx$ is a multivariate normal satisfying $\E[\bx_i]=\mu_i$ and $\E[(\bx_i-\mu_i)(\bx_j-\mu_j)]=\Sigma_{i, j}$.  

\begin{definition}[$\rho$-correlated Gaussians]
Let $\bx$ and $\by$ be Gaussian random variables taking values in $\mathbb{R}^n$ and $-1 \leq \rho \leq 1$ be a parameter.
The variables $\bx$ and $\by$ are \emph{$\rho$-correlated Gaussians}, denoted $\bx\sim_\rho \by$, if
\begin{equation*}
(\bx, \by) \sim \normal\!\left(\mathbf{0}, \begin{bmatrix} I_n & \rho I_n \\ \rho I_n & I_n\end{bmatrix} \right),
\end{equation*}
where $\mathbf{0}$ is a vector of zeros of the appropriate size and $I_n$ is the $n\times n$ identity matrix.
\end{definition}

\begin{definition}[Noise operator for vector-valued functions]
\label{def:noise-operator}
Let $f:\R^n \rightarrow \R^k$ be a square-integrable vector-valued function
and $-1 \leq \rho \leq 1$ be a parameter.
Then the \emph{Gaussian noise operator}, $\U_\rho$ is defined as
\begin{equation*}
\U_\rho f(x) = \E_{\bx \sim_\rho \by}[f(\by) \mid \bx = x].   
\end{equation*}
We observe that if $f = (f_1, \ldots, f_k)$ then $\U_\rho f = (\U_\rho f_1,\ldots,\U_\rho f_k)$.
\end{definition}

\begin{definition}[Noise stability for vector-valued functions]\label{def:noise-stab}
Let $f:\R^n \rightarrow \R^k$ be a square-integrable vector-valued function
and $-1 \leq \rho \leq 1$ be a parameter.
Then the \emph{noise stability of $f$ at $\rho$} is
\begin{equation*}
\stab_\rho[f] = \E_{\bx \sim_{\rho} \by}\langle f(\bx), f(\by)\rangle = \E_{\bx \sim \normal(0,1)^n} \langle f(\bx), \U_{\rho} f(\bx) \rangle.
\end{equation*}
\end{definition}


The following vector-valued version of Borell's inequality (for negative
$\rho$) is our main conjecture.

\begin{conjecture}\label{conj:vector-borell}
    Let $1 \le k \le n$ be positive integers and let $-1 \le \rho \le 0$. For all measurable
    functions $f: \R^n \to B^k$,
    \[
        \stab_\rho[f] \geq \stab_\rho[f_{\mathrm{opt}}]
    \]
where $f_{\mathrm{opt}}(x) = x_{\le k} / \|x_{\le k}\|$ and $x_{\le k} = (x_1, \dots, x_k)$.

Moreover, this optimizer is unique up to orthogonal transformations: for any $f: \R^n \to B^k$
for which $\stab_\rho[f] = \stab_\rho[f_\mathrm{opt}]$, there is an orthogonal $n \times n$ matrix $M$
such that $f(x) = f_{\mathrm{opt}}(Mx)$ almost surely.
\end{conjecture}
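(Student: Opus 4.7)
The plan is to attack \Cref{conj:vector-borell} in three stages: symmetrize the function $f$, reduce to the diagonal case $n = k$, and finally resolve the symmetric problem by spectral analysis on the Hermite decomposition.

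\emph{Stage 1 (symmetrization).} The $\rho$-correlated Gaussian distribution on $\R^n \times \R^n$ is invariant under the diagonal action $(x, y) \mapsto (Qx, Qy)$ for any $Q \in O(n)$, while $\stab_\rho[\cdot]$ is a quadratic functional of $f$. Hence, for $\mathbf{R}$ Haar-random in $O(n - k)$, the averaged function
\[
\widetilde{f}(x) \;=\; \E_{\mathbf{R}}\!\bigl[\, f\bigl(\mathrm{diag}(I_k, \mathbf{R})\, x\bigr)\,\bigr]
\]
satisfies $\widetilde f : \R^n \to B^k$ (by convexity of the ball) and $\stab_\rho[\widetilde f] \le \stab_\rho[f]$ (by Jensen's inequality). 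This reduces the problem to functions of the form $\widetilde f(x) = g\bigl(x_{\le k}, \|x_{>k}\|\bigr)$ that depend on the last $n - k$ coordinates only through their norm.

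\emph{Stage 2 (reduction to $n = k$).} The goal is to show that $\stab_\rho[\widetilde f]$ is minimized, over valid choices of $g$, when $g$ does not depend on $r := \|x_{>k}\|$ at all, so that the problem descends to a function on $\R^k$. I would attempt this by expanding the radial dependence against Laguerre-type polynomials and analyzing the quadratic form induced on the radial modes by the joint distribution of $(\bx_{\le k}, \by_{\le k}, \|\bx_{>k}\|, \|\by_{>k}\|)$. This is the main obstacle: as the paper highlights, Heilman--Tarter-style calculus of variations gives such a reduction when $\rho \ge 0$, but the sign of the relevant second variation flips for $\rho < 0$, so a genuinely new idea seems necessary here. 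A possible route is to look for a monotone interpolation between $g$ and its radial average that decreases $\stab_\rho$ along the way, analogous to heat-flow arguments for scalar Borell but adapted to vector-valued outputs.

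\emph{Stage 3 (the $n = k$ case).} Apply a further averaging, now over the diagonal subgroup $\{(Q, Q) : Q \in O(n)\}$ acting simultaneously on input and output; as before, this does not increase $\stab_\rho[\cdot]$. The resulting function is rotation-equivariant, $f(Qx) = Q f(x)$, which forces $f(x) = \phi(\|x\|)\, x/\|x\|$ for some $\phi : [0, \infty) \to [-1, 1]$. The stability becomes the one-dimensional quadratic form
\[
\stab_\rho[f] \;=\; \E\!\left[\, \phi(\|\bx\|)\, \phi(\|\by\|)\, \frac{\langle \bx, \by\rangle}{\|\bx\|\,\|\by\|}\,\right],
\]
and the task reduces to showing $\phi \equiv 1$ is the minimizer. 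This should be tractable by diagonalizing the induced integral kernel on $L^2([0,\infty), \chi_n)$, where $\chi_n$ is the chi distribution of $\|\bx\|$, and checking that the leading radial eigenfunction is the one compatible with $\phi \equiv 1$. For uniqueness, equality cases in Jensen at each averaging step force $f$ to already be equivariant up to an orthogonal change of variables $M$, and equality in the final spectral step pins down $f(x) = f_{\mathrm{opt}}(Mx)$ almost everywhere.
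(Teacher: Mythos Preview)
This statement is a \emph{conjecture} that the paper explicitly does not prove; it only establishes the special case $n = k$ (via a spherical-harmonic decomposition on radial shells, not via symmetrization) and a dimension-reduction step valid only for $\rho \ge 0$. So there is no full proof to compare against, and your Stage~2 correctly identifies the open gap.

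Stages~1 and~3, however, contain a genuine error: the claimed inequality $\stab_\rho[\widetilde f] \le \stab_\rho[f]$ is false for $\rho < 0$. The quadratic form $f \mapsto \langle f, \U_\rho f\rangle$ is positive semidefinite only when $\rho \ge 0$; for $\rho < 0$ its Hermite eigenvalues $\rho^{|\alpha|}$ alternate in sign, so Jensen gives no one-sided bound, and averaging can \emph{increase} stability. Concretely, take $n = 2$, $k = 1$, and $f(x_1, x_2) = \sgn(x_2)$: your Stage-1 average over $O(1)$ acting on $x_2$ yields $\widetilde f \equiv 0$, hence $\stab_\rho[\widetilde f] = 0 > \tfrac{2}{\pi}\arcsin(\rho) = \stab_\rho[f]$. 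The same failure occurs in Stage~3: with $n = k = 2$ and $f(x) = (\sgn(x_1), 0)$, the equivariant average is $\bar f(x) = \tfrac{2}{\pi}\, x/\|x\|$, and comparing the linear-in-$\rho$ terms ($\tfrac{1}{\pi}\rho$ versus $\tfrac{2}{\pi}\rho$) gives $\stab_\rho[\bar f] > \stab_\rho[f]$ for small $\rho < 0$. This sign obstruction---that convexity/averaging arguments point the wrong way when $\rho$ is negative---is precisely what prevents the paper's own dimension reduction (\Cref{sec:dim-reduction}) from extending to $\rho < 0$; see the discussion at the end of that section. The paper's proof of the $n=k$ case avoids symmetrization entirely, working instead shell-by-shell with the Funk--Hecke eigenvalue bounds of \Cref{sec:spherical}.
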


Note that this recovers Theorem~\ref{thm:borell} in the case that $k=1$, and also that the
assumption $f: \R^n \to B^k$ could be replaced by $f: \R^n \to S^{k-1}$ without having much effect:
since the optimal function $f: \R^n \to B^k$ actually takes values in $S^{k-1}$,
the optimal value is the same if we restrict the optimization to functions $f: \R^n \to S^{k-1}$.

Although our main application of \Cref{conj:vector-borell} involves negative $\rho$, we will
also consider a version with positive $\rho$. In this case, one needs to add the assumption
that $\E[f] = 0$; without this constraint, constant functions like $f(x) = (1, \dots, 0)$ are maximizers.

\begin{conjecture}\label{conj:vector-borell-positive-rho}
    Let $1 \le k \le n$ be positive integers and let $0 \le \rho \le 1$. For all measurable
    functions $f: \R^n \to B^k$ satisfying $\E[f(\bx)] = 0$,
    \[
        \stab_\rho[f] \leq \stab_\rho[f_{\mathrm{opt}}]
    \]
where $f_{\mathrm{opt}}(x) = x_{\le k} / \|x_{\le k}\|$ and $x_{\le k} = (x_1, \dots, x_k)$.

Moreover, this optimizer is unique up to orthogonal transformations: for any $f: \R^n \to B^k$
for which $\stab_\rho[f] = \stab_\rho[f_\mathrm{opt}]$, there is an orthogonal $n \times n$ matrix $M$
such that $f(x) = f_{\mathrm{opt}}(Mx)$ almost surely.
\end{conjecture}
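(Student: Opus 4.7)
The plan is to combine a calculus-of-variations dimensionality reduction with a symmetrization analysis of the resulting $n=k$ problem. I focus on $0 < \rho < 1$; the boundary cases $\rho \in \{0, 1\}$ are trivial.

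First I would establish existence of a maximizer $f^\star: \R^n \to B^k$ by extracting a weak-$L^2$ limit of a maximizing sequence, using that the pointwise constraint $\|f(x)\| \le 1$ and the mean-zero constraint are both weakly closed, and using the Hermite expansion $\U_\rho = \sum_{d \ge 0} \rho^d P_d$ (with a truncation at high degrees) to control $\stab_\rho$ under weak convergence. Writing the Lagrangian for the constrained optimization, the Euler-Lagrange conditions at $f^\star$ take the form
\[
\U_\rho f^\star(x) = \mu(x) f^\star(x) + \lambda, \qquad \mu(x) \ge 0, \qquad \mu(x)\bigl(1 - \|f^\star(x)\|^2\bigr) = 0,
\]
where $\lambda \in \R^k$ is the multiplier for the mean-zero constraint and $\mu(x)$ is the multiplier for the pointwise norm constraint.

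Next I would perform dimensionality reduction using the Ornstein-Uhlenbeck commutation $\partial_i \U_\rho = \rho \, \U_\rho \partial_i$. Differentiating the stationarity condition in direction $e_i$ and pairing against $\partial_i f^\star$ yields for each $i$ a quadratic form in $\partial_i f^\star$ whose sign is controlled jointly by $\mu$ and by the second variation at $f^\star$. Diagonalizing the $n \times n$ matrix $\E[\langle \partial_i f^\star, \partial_j f^\star\rangle]$ and rotating the input coordinates accordingly, the second-order maximality condition together with the positivity of $\rho$ forces all but $k$ of the resulting ``principal variances'' to vanish, so that after rotation $f^\star$ depends only on $x_1, \ldots, x_k$. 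Positivity of $\rho$ enters crucially here: the analogous computation with $\rho < 0$ produces a form of the opposite sign and no reduction, which is precisely the step where the earlier incorrect argument broke down.

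The problem now reduced to $n = k$, I would exploit the joint action of $O(k)$ on source and target: for $M \in O(k)$, the map $f_M(x) := M^\top f^\star(M x)$ has the same stability value, the same mean, and satisfies the same pointwise norm constraint. Averaging over $M$ and invoking an extreme-point argument reduces consideration to $O(k)$-equivariant maps of the form $f^\star(x) = g(\|x\|) \, x / \|x\|$ for some measurable $g: [0, \infty) \to [-1, 1]$. Expanding $\stab_\rho[f^\star]$ via radial Hermite polynomials with the Mehler kernel, the problem becomes a one-dimensional variational question in $g$: the claim is that $g \equiv 1$ (corresponding exactly to $f_{\mathrm{opt}}$) is the unique maximizer, from which uniqueness of $f^\star$ up to orthogonal transformations follows by tracking the symmetrization step.

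The main obstacle I anticipate is this last one-dimensional optimization. Reducing to an explicit integral expression in $g$ is routine, but showing that $g \equiv 1$ is the strict maximizer amounts to a sharp monotonicity statement about the Mehler kernel restricted to the radial sector, which I do not see how to establish without additional input (and which, if available, would likely also shed light on the Peace Sign Conjecture). A secondary obstacle is handling the degenerate region where $\|f^\star(x)\| < 1$ on a set of positive measure: there the complementary slackness condition gives $\mu \equiv 0$ and the Euler-Lagrange relation degenerates, so one would need a separate argument showing that an optimal $f^\star$ must saturate the norm constraint almost everywhere. Finally, as the paper notes, even a full proof along these lines would not yield the negative-$\rho$ version needed for the \qmaxcut application.
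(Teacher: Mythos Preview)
This statement is a conjecture that the paper does not prove. What the paper does establish for $\rho>0$ is your dimension-reduction step (Theorem~\ref{thm:dimension-reduction}), carried out via essentially the same variational machinery you sketch: first-order conditions as in Lemma~\ref{lem:first-order-conditions}, second variation along constant vector fields, and the commutation $\partial_i \U_\rho = \rho\,\U_\rho \partial_i$ packaged as Lemmas~\ref{lem:Q-simpler} and~\ref{lem:direction-independent}. Through the reduction to $n = k$ your plan and the paper's coincide, and that much is sound.

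The gap is in your symmetrization. Averaging $f_M(x) = M^\top f^\star(Mx)$ over $M \in O(k)$ does not produce an optimizer: $\stab_\rho$ is a strictly convex quadratic form in $f$ for $0 < \rho < 1$, so the orbit average has \emph{strictly smaller} stability unless $f^\star$ was already equivariant. An extreme-point argument does not repair this --- there is no reason an extreme point of $\{f : \|f(x)\| \le 1,\ \E f = 0\}$ should be $O(k)$-equivariant, and since the set of maximizers is a level set of a strictly convex function it is not convex, so compact-group fixed-point theorems do not apply either. Ironically, what you flag as the main obstacle, the one-dimensional optimization in $g$, is not an obstacle at all: once $f(x) = g(\|x\|)\,x/\|x\|$ one computes $\stab_\rho[f] = \E_{\br,\bs}[g(\br)g(\bs)\,\lambda_1^{\br,\bs}]$ with $\lambda_1^{r,s} \ge 0$ by Corollary~\ref{cor:eigenvalue-bounds} (non-decreasing case), and $|g| \le 1$ gives $1 - g(r)g(s) \ge 0$ pointwise, so $g \equiv 1$ is trivially optimal. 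The whole difficulty is \emph{reaching} the radial form. The paper instead attacks $n=k$ by a shell decomposition plus spherical harmonics (Theorem~\ref{thm:n-dim-borell}); this succeeds for $\rho < 0$ but fails for $\rho > 0$ precisely because the shell-mean term $\langle \E[f_r], \E[f_s]\rangle$ enters with the wrong sign (see the discussion immediately after that theorem). That is the same obstruction as yours, phrased differently: one cannot force the shell means to vanish.
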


\subsection*{Proposed strategy}

There are a few notable difficulties in establishing \Cref{conj:vector-borell}, compared
with the scalar-valued case. Recall in particular that Borell's theorem is also known when
the expectation of $f$ is constrained: among all functions $f: \R^n \to [-1, 1]$ with $\E[f] = v \in [-1, 1]$,
the noise stability is minimized (for negative $\rho$) by a linear threshold function
with the appropriate expectation, i.e.\ a function of the form $f_{\mathrm{opt}}(x) = \sgn(\inr{a}{x} + b)$.
The formulation in Theorem~\ref{thm:borell} is then recovered just by noting that as a function of the
constraint $v \in [-1, 1]$, the optimal noise stability is minimized when $v = 0$.
In the vector-valued situation $k \ge 2$, we do not solve a constrained version
of the problem. In fact, we do not even know of a good guess for the optimal stability among
functions with $\E[f] = v \ne 0$. The guess that comes from naively extrapolating the $k=1$ solution,
$f(x) = (x_{\le k} - a) / \|x_{\le k} - a\|$ appears not to be optimal.
This inconvenient fact rules out certain proof techniques that work for the scalar-valued case---specifically, recent approaches using $\rho$-convexity~\cite{MN:15} and stochastic calculus~\cite{Eld:15}---because if those methods had worked, they would have also shown optimality
of $f(x) = (x_{\le k} - a) / \|x_{\le k} - a\|$ in the constrained version.
On the other hand, we also do not know how to exploit the older symmetrization-based approaches~\cite{Bor:85}
because of the difficulty of symmetrizing vector-valued functions.

We will propose a three-step strategy for Conjectures~\ref{conj:vector-borell}
and~\ref{conj:vector-borell-positive-rho}. The reason that we have been unable to
complete the proof is that one of the three steps is only proven for positive $\rho$,
while another step is only proven for negative $\rho$.

The first step is to
consider noise stability for functions $f: S^{n-1} \to S^{n-1}$ and we prove
that (in both positive-$\rho$ and negative-$\rho$ cases)
the function $f(x) = x$ has optimal noise stability.  The argument here is
purely spectral: having shown that the eigenvectors of our noise operator
(modified appropriately to live on the sphere) are spherical harmonics, we
expand the function $f$ in the basis of spherical harmonics and show that the
optimal thing to do is to put all the ``weight'' on ``level-1'' coefficients.
One interesting feature of this argument is that it doesn't require $f$ to take
values in $S^{n-1}$: we show that $f(x) = x$ has optimal noise stability among
all functions $f: S^{n-1} \to \R^n$ satisfying $\E[\|f(\bx)\|^2] = 1$.
This step of the proof also works for $0 < \rho \le 1$: in this range, the noise
stability is \emph{maximized}, among functions with $\E[f] = 0$, by $f(x) = x$.

The second step of the proof is to consider functions $f: \R^n \to S^{n-1}$.
We do this by decomposing $\R^n$ into radial ``shells'' and applying the spherical
argument on each shell. This step requires $-1 \le \rho \le 0$.

The final step, which applies only to $0 \le \rho \le 1$
is a kind of
dimension reduction. Specifically, we show that if $f: \R^n \to S^{k-1}$ has optimal
stability then $f$ is ``essentially $k$-dimensional'' in the sense that up to a change of coordinates
there is a function $g: \R^k \to S^{k-1}$ such that $f(x) = g(x_1, \dots, x_k)$. This
essentially reduces the problem to the case of functions $f: \R^k \to S^{k-1}$,
which was already handled in the second step. This step of the proof
uses tools from the
calculus of variations. Essentially, we show that if $f: \R^n \to S^{k-1}$ is
not essentially $k$-dimensional then it can be modified in a way that improves the noise stability.
Arguments of this kind go back to McGonagle and Ross~\cite{MR:15} in the setting
of the Gaussian isoperimetric problem. They were developed in the vector-valued (but still
isoperimetric) setting by Milman and Neeman~\cite{MN:18}, and then applied
to noise stability by Heilman and Tarter~\cite{HT20}.

Note that by combining the first two steps (which both apply for $-1 \le \rho \le 0$) we
can show that \Cref{conj:vector-borell} holds in the case $k = n$.

\section{The spherical case}\label{sec:spherical}

Here we consider the case of $f: S^{n-1} \to B^n$.
Since we want to work with Gaussian noise
on $\R^n$, this shell decomposition imposes a specific noise operator on $S^{n-1}$.
In this section, we will work with a more general noise operator that includes
the ones we will need in later sections.

The ``uniform'' measure on the sphere will be denoted $\unifS$: this is the
unique rotationally invariant probability measure on $S^{n-1}$. If $\bu \sim
\unifS$, we will write $\ultraS$ for the distribution of $\bu_1$ (which is
the same as the distribution of $\inr{v}{\bu}$ for any $v \in S^{n-1}$. Note
that $\ultraS$ is a density on $[-1, 1]$, with density
\begin{equation}\label{eq:ultraspheric-density}
    d\ultraS(t) = \frac{1}{Z_n} (1 - t^2)^\frac{n-3}{2} \, dt,
\end{equation}
where $Z_n$ is a normalizing constant. The proof of Lemma 4.17 in~\cite{FE12} depicts how $d\ultraS$ arises from $d\unifS$.

For a function $g: [-1, 1] \to \R$ satisfying
\begin{equation}\label{eq:g-integrability}
    \int_{-1}^1 |g(t)| \,d\ultraS(t) < \infty,
\end{equation}
define the operator $\U_g$, acting on functions $f: S^{n-1} \to \R^k$ by
\begin{equation}\label{eq:Ug}
    \U_g f(u) = \int_{S^{n-1}} g(\inr uv) f(v) \,d\unifS(v).
\end{equation}
To make the definition fully rigorous, note that the integrability
condition~\eqref{eq:g-integrability} implies that if $f$ is bounded
then $\U_g f$ is defined pointwise. Then Jensen's inequality implies
that $\|\U_g f\|_{L^p(\unifS)} \le C \|\U_g f\|_{L^p(\unifS)}$
for every bounded $f$ -- with $C$ being the left hand side of~\eqref{eq:g-integrability} --
and since bounded functions are dense in $L^p(\unifS)$ it follows that $\U_g$
can be uniquely extended to an operator $L^2(\unifS) \to L^2(\unifS)$.

We will be interested in non-negative $g$, and it might also be convenient to imagine $g$ as
integrating to 1; i.e., with $\int_{-1}^1 g(t)\, d\ultraS(t) = 1$.
In this case $\U_g f(u)$ is an average of values of $f$, much like our Gaussian noise operator $\U_\rho$.

The main result of this section is that if $g$ is monotonic then the function $f(x) = x$ is optimally stable.
\begin{theorem}\label{thm:spherical-noise}
    If $g: [-1, 1] \to [0, \infty)$ satisfies~\eqref{eq:g-integrability} and is non-decreasing then
    for every $f: S^{n-1} \to \R^n$ with $\E_{\bu \sim \unifS} [\|f(\bu)\|^2] = 1$,
    \[
        \E_{\bu \sim \unifS} \inr{f(\bu)}{\U_g f(\bu)}
        \ge
        \E_{\bu \sim \unifS} \inr{f_{\mathrm{opt}}(\bu)}{\U_g f_{\mathrm{opt}}(\bu)},
    \]
    where $f_{\mathrm{opt}}(u) = u$.

    On the other hand, if $g$ is non-\emph{increasing} then 
    for every $f: S^{n-1} \to \R^n$ with $\E_{\bu \sim \unifS} [\|f(\bu)\|^2] = 1$
    and $\E_{\bu \sim \unifS}[f] = 0$,
    \[
        \E_{\bu \sim \unifS} \inr{f(\bu)}{\U_g f(\bu)}
        \le
        \E_{\bu \sim \unifS} \inr{f_{\mathrm{opt}}(\bu)}{\U_g f_{\mathrm{opt}}(\bu)}.
    \]
\end{theorem}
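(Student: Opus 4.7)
The strategy is purely spectral, leveraging the Funk-Hecke theorem. Since the kernel of $\U_g$ depends only on the inner product $\inr{u}{v}$, the operator commutes with the natural $O(n)$-action on $S^{n-1}$ and thus preserves the decomposition $L^2(\unifS) = \bigoplus_{\ell \ge 0} \mathcal{H}_\ell$ into spaces of degree-$\ell$ spherical harmonics. On each $\mathcal{H}_\ell$ it acts as a scalar $\lambda_\ell$, which Funk-Hecke gives explicitly as
\[ \lambda_\ell = \int_{-1}^{1} g(t)\, R_\ell(t)\, d\ultraS(t), \]
where $R_\ell = C_\ell^{(n-2)/2} / C_\ell^{(n-2)/2}(1)$ is the normalized Gegenbauer polynomial.

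The next step is to expand each coordinate $f_i$ of $f : S^{n-1} \to \R^n$ in spherical harmonics as $f_i = \sum_{\ell \ge 0} f_i^{=\ell}$ and invoke Parseval's identity to rewrite both sides of the desired inequality:
\[ \E_{\bu \sim \unifS} \|f(\bu)\|^2 = \sum_{i,\ell} \|f_i^{=\ell}\|_{L^2(\unifS)}^2, \qquad \E_{\bu \sim \unifS}\inr{f(\bu)}{\U_g f(\bu)} = \sum_{i,\ell} \lambda_\ell\, \|f_i^{=\ell}\|_{L^2(\unifS)}^2. \]
Since each coordinate $u_i$ of $f_{\mathrm{opt}}(u) = u$ is already a pure degree-one spherical harmonic, all the Fourier mass of $f_{\mathrm{opt}}$ sits at level $1$, yielding $\E \inr{f_{\mathrm{opt}}}{\U_g f_{\mathrm{opt}}} = \lambda_1$ and $\E \|f_{\mathrm{opt}}\|^2 = 1$.

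Granted these reductions, the theorem collapses to an eigenvalue ordering: for non-decreasing $g$ one must show $\lambda_1 \le \lambda_\ell$ for every $\ell \ge 0$, and for non-increasing $g$ the reverse ordering $\lambda_1 \ge \lambda_\ell$ for every $\ell \ge 1$ (here the mean-zero hypothesis is exactly what is needed to suppress the $\ell = 0$ contribution). Once this comparison is in hand, both halves of the theorem follow immediately by replacing each $\lambda_\ell$ in the Parseval expansion with $\lambda_1$ and invoking $\E \|f\|^2 = 1$. The uniqueness clause of \Cref{conj:vector-borell} would then be recovered by tracking the equality case: a tight Parseval bound forces the Fourier mass of $f$ to concentrate entirely at level $1$, so $f(u) = M u$ for an $n \times n$ matrix $M$, and the $L^2$ normalization together with the range constraint forces $M$ to be orthogonal.

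The main obstacle I anticipate is the eigenvalue ordering itself, which is an analytic statement about orthogonal polynomials rather than a statement about general functions. My plan is to reduce to threshold kernels: write any non-negative monotone $g$ as a positive mixture of indicator functions $\mathbf{1}_{t \ge t_0}$, so that it suffices to sign the quantity $\int_{t_0}^{1} (R_\ell(t) - R_1(t))\, d\ultraS(t)$ for each $\ell$ and each $t_0 \in [-1,1]$. This can be attacked through the classical recurrence $(1 - t^2) R_\ell'(t) = c_\ell(R_{\ell-1}(t) - t R_\ell(t))$ combined with integration by parts at the endpoint $t_0$; proving that the resulting sign pattern holds uniformly in $\ell$ is where the bulk of the technical work would lie.
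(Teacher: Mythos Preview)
Your spectral approach is exactly the paper's: expand each coordinate of $f$ in spherical harmonics via Funk--Hecke, reduce the quadratic form to $\sum_{\ell} \lambda_\ell\, \|f^{=\ell}\|^2$, and prove the needed eigenvalue comparison by writing monotone $g$ as a positive superposition of threshold indicators. The paper handles the indicator case in its ``Key Gegenbauer lemma'' via the closed-form antiderivative $\int (1-t^2)^{\alpha-1/2} C^{(\alpha)}_d(t)\,dt \propto (1-t^2)^{\alpha+1/2} C^{(\alpha+1)}_{d-1}(t)$ together with the pointwise bound $|C^{(\alpha)}_d(t)| \le C^{(\alpha)}_d(1)$, which is your integration-by-parts plan in packaged form.

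The one genuine issue is that the eigenvalue orderings you wrote down are false, because the monotonicity hypotheses in the theorem statement are swapped relative to what actually holds (the paper's own proof silently uses the correct pairing). For non-\emph{increasing} $g$ one gets $\lambda_1 \le 0$, $\lambda_0 \ge 0$, and $|\lambda_\ell| \le -\lambda_1$ for $\ell \ge 2$, so $\lambda_1$ is the global minimum and the lower bound $\E\inr{f}{\U_g f} \ge \lambda_1$ follows with no mean-zero assumption. For non-\emph{decreasing} $g$ the signs flip and $\lambda_1$ is the maximum among $\ell \ge 1$, giving the upper bound once $\E[f]=0$ kills the $\ell=0$ term. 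If you attempt to prove ``$\lambda_1 \le \lambda_\ell$ for all $\ell$ when $g$ is non-decreasing'' via your threshold reduction, it already fails at $\ell=2$ for $g=\mathbf 1_{[0,1]}$, where $\lambda_2 = 0 < \lambda_1$. Swap the two cases and your outline goes through verbatim.
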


\subsection{Spherical harmonics}

We prove Theorem~\ref{thm:spherical-noise} by decomposing each coordinate of the vector-valued function $f$ into spherical harmonics.  We suggest~\cite{FE12,D13} as introductory references for spherical harmonics.

\begin{definition}[Spherical harmonics]
A \emph{homogeneous polynomial of degree $d$} is a function $p:\R^n \rightarrow \R$ expressible as a linear combination of degree-$d$ monomials.
We define the sets $\{\calH_d: d=0, 1, \dots\}$ by first
setting $\calH_0$ to be the set of constant functions $S^{n-1} \to \R$,
and then inductively defining $\calH_d$ to be the functions $S^{n-1} \to \R$
that can be represented as homogeneous polynomials of degree $d$, and
which are orthogonal to $\bigoplus_{k=0}^{d-1} \calH_k$.
The elements of $\calH_d$ are called degree-$d$ spherical harmonics.
\end{definition}

One subtlety (that will not be particularly important for us) is that distinct polynomials may give rise
to the same function $S^{n-1} \to \R$; for example, the constant function $f(u) = 1$
can be written both as the constant polynomial 1 and as the degree-2 polynomial $u_1^2 + \cdots + u_n^2$,
which evaluates to 1 on the sphere.
The name \emph{spherical harmonics} comes from the fact that $\calH_d$ can
be equivalently defined as the set of homogeneous degree-$d$ polynomials $p$
that are \emph{harmonic} in the sense that $\sum_{i=1}^n \frac{\partial^2}{\partial x_i^2} p(x) = 0$.

The first important thing about the spaces $\calH_d$ is that they form an
orthogonal decomposition of $L^2(S^{n-1},\unifS)$: the $\calH_d$ are orthogonal in the
sense that if $f \in \calH_d$ and $g \in \calH_{d'}$ for $d \ne d'$ then
$\E_{\bu \sim \unifS} [f(\bu) g(\bu)] = 0$; they decompose
$L^2(S^{n-1},\unifS)$ in the sense that if $H_d: L^2(S^{n-1},\unifS) \to \calH_d$ is the
orthogonal projection operator then $f = \sum_{d \ge 0} H_d f$ for every $f \in
L^2(S^{n-1},\unifS)$~\cite{D13}.
In this sense, the decomposition into spherical harmonics is analogous to the decomposition
of a Boolean function into Fourier levels~\cite{OD14}, or the decomposition of a function in $L^2(\R^n, \gamma)$
into Hermite levels~\cite{OD14}.

Then second important thing about the spaces $\calH_d$ is the Funk-Hecke formula, which essentially
says that noise operators like our $\U_g$ act diagonally on spherical harmonics
(much like the usual Boolean and Gaussian noise operators act diagonally on the Fourier
and Hermite bases respectively).

\begin{theorem}[Funk-Hecke formula~\cite{H1917}; see~\cite{D13}, Theorem 2.9]\label{thm:funk-hecke}
    For any $g: [-1, 1] \to \R$ satisfying~\eqref{eq:g-integrability}, let $\U_g$
    be defined as in~\eqref{eq:Ug} (with $k=1$). Then for every $d \in \{0, 1, \dots\}$ there
    exists a $\lambda_d$ such that for every $f \in \calH_d$ and every $u \in S^{n-1}$,
    \[
        \U_g f(u) = \lambda_d f(u).
    \]
    In other words, spherical harmonics are the eigenvalues of $\U_g$ and the eigenvalues
    depend only on the degree $d$.
\end{theorem}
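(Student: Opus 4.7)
My plan is to exploit the rotational symmetry of $\U_g$ via a Schur's lemma argument. The first step is to verify that $\U_g$ commutes with the natural unitary action of $O(n)$ on $L^2(S^{n-1},\unifS)$: changing variables by $v \mapsto R v$ in the defining integral and using the rotation-invariance of $\unifS$ gives $\U_g(f \circ R)(u) = (\U_g f)(R u)$ for every $R \in O(n)$. Thus $\U_g$ is an intertwiner for the $O(n)$-representation.

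The second step is to recognize the spherical-harmonic decomposition $L^2(S^{n-1},\unifS) = \bigoplus_{d \ge 0} \mathcal{H}_d$ as the isotypic decomposition under $O(n)$. Each $\mathcal{H}_d$ is $O(n)$-invariant, because rotations preserve harmonicity, homogeneous degree, and (because the $L^2$ inner product is rotation-invariant) the orthogonality conditions that define $\mathcal{H}_d$. Classically each $\mathcal{H}_d$ is in fact an \emph{irreducible} real representation of $O(n)$, and the various $\mathcal{H}_d$ are pairwise non-isomorphic (one can distinguish them via their Laplace--Beltrami eigenvalue $-d(d+n-2)$). Granting these two facts, Schur's lemma forces any intertwiner and in particular $\U_g$ to vanish between distinct $\mathcal{H}_d$ and to act as a scalar $\lambda_d$ on each one. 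Self-adjointness of $\U_g$, which follows from the symmetry of the kernel $g(\inr{u}{v})$ in $u$ and $v$, guarantees that this scalar is real, completing the proof.

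The main obstacle is justifying the irreducibility of $\mathcal{H}_d$ as an $O(n)$-representation. My preferred approach is to use the \emph{zonal harmonic}: one shows that the subspace of $O(n-1)$-invariant vectors inside $\mathcal{H}_d$ (where $O(n-1)$ is the stabilizer of $e_1$) is one-dimensional, spanned by $u \mapsto P_d(\inr{u}{e_1})$ for a Gegenbauer polynomial $P_d$. Given any nonzero $O(n)$-invariant subspace $V \subseteq \mathcal{H}_d$, averaging an appropriate element of $V$ over $O(n-1)$ (with Haar measure) yields a nonzero $O(n-1)$-fixed vector inside $V$, which must then be a scalar multiple of the zonal harmonic; transitivity of $O(n)$ on $S^{n-1}$ then ensures that the $O(n)$-orbit of this zonal harmonic spans all of $\mathcal{H}_d$, so $V = \mathcal{H}_d$. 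A fully hands-on alternative is to skip the abstract irreducibility argument and instead use the addition formula $P_d(\inr{u}{v}) = c_{n,d} \sum_i Y_i(u)\overline{Y_i(v)}$ for an orthonormal basis $\{Y_i\}$ of $\mathcal{H}_d$, together with a Gegenbauer expansion of $g$, to compute $\U_g Y_j(u)$ directly. This alternative has the added benefit of producing the explicit formula $\lambda_d = \frac{1}{P_d(1)}\int_{-1}^{1} g(t)\,P_d(t)\, d\ultraS(t)$ for the eigenvalue, which should be useful later when analyzing monotonicity of $\lambda_d$ in $d$ for the applications in Theorem~\ref{thm:spherical-noise}.
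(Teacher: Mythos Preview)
The paper does not prove this theorem; it is stated as a classical result with references to \cite{H1917} and \cite{D13}, and is used as a black box in the subsequent analysis. So there is no ``paper's own proof'' to compare against.

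That said, your proposal is a correct and standard route to Funk--Hecke. The Schur's-lemma argument via $O(n)$-equivariance and irreducibility of each $\mathcal{H}_d$ is exactly how many modern treatments organize the proof, and your sketch of irreducibility (one-dimensional space of $O(n-1)$-fixed vectors plus transitivity) is the right idea. Your ``hands-on alternative'' via the addition formula is in fact closer in spirit to what the paper does immediately \emph{after} stating the theorem: the paper observes that because $\lambda_d$ depends only on $d$, one can compute it by evaluating $\U_g$ on the zonal harmonic $u \mapsto C_d^{(\alpha)}(u_1)$ at $u = e_1$, yielding $\lambda_d = \E_{\bt \sim \ultraS}\!\big[C_d^{(\alpha)}(\bt)\,g(\bt)/C_d^{(\alpha)}(1)\big]$, which is precisely the formula you wrote down. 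So your second approach anticipates the paper's Corollary~\ref{cor:eigenvalue-formula}, which is indeed the input to the monotonicity analysis in Lemma~\ref{lem:key-gegenbauer} and Corollary~\ref{cor:eigenvalue-bounds}.
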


One particularly nice feature of the Funk-Hecke formula is that because the eigenvalues
depend only on the degree $d$, we can compute $\lambda_d$ by choosing the most
convenient $f \in \calH_d$ and $u \in S^{n-1}$. This leads us to the Gegenbauer polynomials,
a family of univariate polynomials that capture the \emph{zonal} spherical harmonics (see~\cite{D13}, Theorem 2.6), those depending only on one direction.

\begin{definition}[Gegenbauer polynomials]\label{def:gegenbauer}
Let $\alpha > -\tfrac{1}{2}$ and $d$ be a nonnegative integer.
The \emph{Gegenbauer polynomial} with Gegenbauer index~$\alpha$ and degree~$d$ is a univariate, real polynomial denoted $C^{(\alpha)}_d$.  The Gegenbauer polynomials correspond to the zonal spherical harmonics of interest to us when $\alpha = \frac{n-2}{2}$ and $n \geq 3$, and we will henceforth make these assumptions on $\alpha$ and $n$.
\end{definition}

Gegenbauer polynomials may be defined recursively, using generating functions, in terms of the Gaussian hypergeometric function, or as special cases of other polynomials (see~\cite{AS72}, Chapter 22).  We will only need a few properties of them.

\begin{proposition}[Properties of the Gegenbauer polynomials]\label{prop:gegenbauer}
\mbox{}
\begin{enumerate}
\item \label{item:geg-low-deg} (\cite{AS72}, 22.4.2) 
We have the following explicit formulas for low-degrees:
\begin{equation*}
C^{(\alpha)}_0(t) = 1\text{, and }
C^{(\alpha)}_1(t) = 2 \alpha t.
\end{equation*}
\item \label{item:geq-inequality} (\cite{AS72}, 22.14.2, \cite{S39}, Theorem 7.4.1)
	For $-1 \leq t \leq 1$ and $\alpha > 0$,
	\begin{equation*}
	|C^{(\alpha)}_d(t)| \leq C^{(\alpha)}_d(1) = \frac{(2 \alpha)_d}{d!},
	\end{equation*}
    with a strict inequality if $d \ge 1$ and $-1 < t < 1$.
\item \label{item:geg-integral} (\cite{AS72}, 22.13.2)
        For each integer $d \geq 0$, define the quantity
	\begin{equation*}
	\mathrm{ratio}_d(t) = \frac{C^{(\alpha)}_d(t)}{C^{(\alpha)}_d(1)} \cdot (1-t^2)^{\alpha - \tfrac{1}{2}}.
	\end{equation*}
	Then
\begin{equation*}
\int \mathrm{ratio}_d(t) \, dt = - \frac{2 (1-t^2)^{\alpha + \tfrac{1}{2}} \alpha}{d (d + 2\alpha)} \cdot \frac{C^{(\alpha + 1)}_{d-1}(t)}{C_d^{(\alpha)}(1)}.
\end{equation*}
\item \label{item:geg-harmonic} (\cite{D13}, Theorem 2.6)
    For each integer $d \ge 0$, the function $S^{n-1} \to \R$ defined by
    $u \mapsto C_d^{(\alpha)}(u_1)$ belongs to $\calH_d$.
\end{enumerate}
\end{proposition}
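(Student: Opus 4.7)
The plan is to derive all four items from the generating function
\[
\sum_{d \ge 0} C^{(\alpha)}_d(t)\, z^d \;=\; (1 - 2tz + z^2)^{-\alpha},
\]
which I take as the working definition of $C^{(\alpha)}_d$. Item~\ref{item:geg-low-deg} and the evaluation $C^{(\alpha)}_d(1) = (2\alpha)_d/d!$ in item~\ref{item:geq-inequality} are immediate: expanding to order $z^1$ reads off $C^{(\alpha)}_0(t) = 1$ and $C^{(\alpha)}_1(t) = 2\alpha t$, while setting $t = 1$ collapses the right-hand side to $(1 - z)^{-2\alpha} = \sum_d \frac{(2\alpha)_d}{d!} z^d$.

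For the uniform bound $|C^{(\alpha)}_d(t)| \le C^{(\alpha)}_d(1)$ on $[-1, 1]$ in item~\ref{item:geq-inequality}, I would use the classical Laplace-type integral representation
\[
C^{(\alpha)}_d(\cos\theta) \;=\; \frac{\Gamma(\alpha + \tfrac12)}{\sqrt{\pi}\,\Gamma(\alpha)} \cdot \frac{(2\alpha)_d}{d!} \int_0^{\pi} \bigl(\cos\theta + i \sin\theta \cos\phi\bigr)^d (\sin\phi)^{2\alpha - 1}\, d\phi,
\]
valid for $\alpha > 0$. The triangle inequality together with $|\cos\theta + i \sin\theta \cos\phi|^2 = 1 - \sin^2\theta \sin^2\phi \le 1$ gives the bound, and strict inequality for $0 < \theta < \pi$ and $d \ge 1$ follows because the complex integrand cannot have constant argument on a positive-measure set of $\phi$.

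The main technical step is item~\ref{item:geg-integral}. I would first establish the parameter-shift identity $\frac{d}{dt} C^{(\alpha)}_d(t) = 2\alpha\, C^{(\alpha+1)}_{d-1}(t)$, which falls out of differentiating the generating function in $t$ and matching coefficients of $z^d$. Then I would differentiate the claimed antiderivative via the product rule: one term comes from $\frac{d}{dt}(1-t^2)^{\alpha + 1/2} = -(2\alpha + 1) t (1-t^2)^{\alpha - 1/2}$, the other from applying the shift identity to $C^{(\alpha+1)}_{d-1}$. The resulting expression factors as $(1-t^2)^{\alpha - 1/2}$ times
\[
\frac{2\alpha}{d(d + 2\alpha)\,C^{(\alpha)}_d(1)} \bigl[(2\alpha+1)\, t\, C^{(\alpha+1)}_{d-1}(t) - 2(\alpha + 1)(1 - t^2)\, C^{(\alpha+2)}_{d-2}(t)\bigr],
\]
and the Gegenbauer ODE $(1 - t^2) y'' - (2\alpha + 1) t y' + d(d + 2\alpha) y = 0$ with $y = C^{(\alpha)}_d$, combined with the shift identity applied twice to rewrite $y'$ and $y''$, shows that the bracketed expression equals $\frac{d(d + 2\alpha)}{2\alpha}\, C^{(\alpha)}_d(t)$. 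The whole product collapses to $\mathrm{ratio}_d(t)$, as desired. This routine but somewhat tedious bookkeeping is the main obstacle in the proposition.

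Finally, item~\ref{item:geg-harmonic} is a consequence of the Poisson-kernel identity at $\alpha = (n-2)/2$:
\[
|rv - u|^{-(n-2)} \;=\; (1 - 2 r \inr{u}{v} + r^2)^{-\alpha} \;=\; \sum_{d \ge 0} r^d\, C^{(\alpha)}_d(\inr{u}{v})
\]
for $u, v \in S^{n-1}$ and small $r$. The left-hand side, viewed as a function of $v \in \R^n$, is the Newtonian kernel centered at $u/r$ and is therefore harmonic away from that singularity. Extracting the coefficient of $r^d$ produces a polynomial in $v$ that is homogeneous of degree $d$ (since $C^{(\alpha)}_d$ is a polynomial of degree $d$ in its argument and $\inr{u}{v}$ is linear in $v$) and harmonic; restricting to $\|v\| = 1$ and specializing $u = e_1$ shows $v \mapsto C^{(\alpha)}_d(v_1) \in \calH_d$.
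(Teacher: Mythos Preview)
The paper does not actually prove this proposition: each item is stated with a citation (to Abramowitz--Stegun, Szeg\H{o}, or Dai--Xu) and left as a quoted fact. Your proposal therefore goes well beyond what the paper does, supplying self-contained arguments from the generating function. Items~\ref{item:geg-low-deg}--\ref{item:geg-integral} are handled correctly; in particular your verification of item~\ref{item:geg-integral} via the shift identity $\frac{d}{dt}C^{(\alpha)}_d = 2\alpha\, C^{(\alpha+1)}_{d-1}$ and the Gegenbauer ODE is clean and complete.

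There is, however, a slip in item~\ref{item:geg-harmonic}. You write that the coefficient of $r^d$, namely $C^{(\alpha)}_d(\inr{u}{v})$, is ``a polynomial in $v$ that is homogeneous of degree $d$ (since $C^{(\alpha)}_d$ is a polynomial of degree $d$ in its argument and $\inr{u}{v}$ is linear in $v$).'' This is false as stated: for instance $C^{(\alpha)}_2(t) = 2\alpha(\alpha+1)t^2 - \alpha$ has a constant term, so $C^{(\alpha)}_2(\inr{u}{v})$ is not homogeneous in $v$. The fix is to work with the variable $x = rv \in \R^n$ rather than $r$ alone. Since $C^{(\alpha)}_d$ has only monomials of the same parity as $d$, the function $P_d(x) := |x|^d\, C^{(\alpha)}_d(\inr{u}{x}/|x|)$ \emph{is} a homogeneous polynomial of degree $d$ (each term $|x|^{d-j}\inr{u}{x}^j$ has $d-j$ even). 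Then $|u - x|^{-(n-2)} = \sum_d P_d(x)$ is a harmonic function on $\{|x| < 1\}$, and the homogeneous components of a harmonic function are themselves harmonic; restricting $P_d$ to the sphere and specializing $u = e_1$ gives the claim. Your Poisson-kernel approach is the right one---only this homogeneity step needs to be restated.
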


\subsection{Eigenvalues of the noise operator}

The last property of \Cref{prop:gegenbauer} shows the relevance of Gegenbauer polynomials to the computation of $\lambda_d$:
letting $h(u) = C_d^{(\alpha)}(u_1)$, we have
\[
    \lambda_d = \frac{\U_g h(e_1)}{h(e_1)} = \frac{\E_{\bu \sim \unifS}[h(\bu) g(\inr{\bu}{e_1})]}{h(e_1)}
    = \E_{\bt \sim \ultraS} \left[\frac{C_d^{(\alpha)}(\bt)}{C_d^{(\alpha)}(1)} g(\bt)\right].
\]
Recalling the formula~\eqref{eq:ultraspheric-density} for the density of $\ultraS$, we conclude:

\begin{corollary}[Eigenvalues of $\U_g$]\label{cor:eigenvalue-formula}
    \[
        \lambda_d = \frac{1}{Z_n} \int_{-1}^1 \mathrm{ratio}_d(t) g(t) \, dt.
    \]
\end{corollary}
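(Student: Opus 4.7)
The plan is to apply the Funk-Hecke formula (Theorem~\ref{thm:funk-hecke}) to the test function $h(u) := C_d^{(\alpha)}(u_1)$, which is a degree-$d$ spherical harmonic by Property~\ref{item:geg-harmonic} of Proposition~\ref{prop:gegenbauer}. Funk-Hecke guarantees $\U_g h = \lambda_d\, h$, so evaluating at $u = e_1$ gives $\U_g h(e_1) = \lambda_d \cdot C_d^{(\alpha)}(1)$. By Property~\ref{item:geq-inequality} we have $C_d^{(\alpha)}(1) = (2\alpha)_d/d! \ne 0$, so rearranging yields $\lambda_d = \U_g h(e_1)/C_d^{(\alpha)}(1)$. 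The task therefore reduces to computing the scalar $\U_g h(e_1)$ explicitly.

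Expanding the definition~\eqref{eq:Ug} at $u = e_1$ gives
\[
    \U_g h(e_1) = \int_{S^{n-1}} g(v_1)\, C_d^{(\alpha)}(v_1) \, d\unifS(v).
\]
The integrand depends on $v$ only through $v_1$, whose marginal law under $\unifS$ is by construction the measure $\ultraS$. Substituting the explicit density $d\ultraS(t) = \frac{1}{Z_n}(1-t^2)^{(n-3)/2}\,dt$ from equation~\eqref{eq:ultraspheric-density} collapses this to the one-dimensional integral
\[
    \U_g h(e_1) = \frac{1}{Z_n}\int_{-1}^1 C_d^{(\alpha)}(t)\, g(t)\, (1-t^2)^{(n-3)/2}\,dt.
\]

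The normalization $\alpha = (n-2)/2$ from Definition~\ref{def:gegenbauer} rewrites the exponent $(n-3)/2$ as $\alpha - 1/2$, so dividing by $C_d^{(\alpha)}(1)$ recognizes the integrand as $\mathrm{ratio}_d(t)\, g(t)$ and produces the claimed formula. There is no real obstacle here: all of the conceptual content has already been packaged into Funk-Hecke (which supplies the diagonal action on each spherical harmonic degree) and into the explicit form of the ultraspherical density. The only bookkeeping is the parameter match $\alpha = (n-2)/2$, which translates the measure-theoretic exponent $(n-3)/2$ to the Gegenbauer-weight exponent $\alpha - 1/2$, and the use of $C_d^{(\alpha)}(1) \ne 0$ to legitimately solve for $\lambda_d$.
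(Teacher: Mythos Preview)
Your proposal is correct and follows essentially the same approach as the paper: apply Funk--Hecke to the zonal harmonic $h(u) = C_d^{(\alpha)}(u_1)$, evaluate at $e_1$, reduce the spherical integral to a one-dimensional integral against $\ultraS$ via the marginal, and substitute the explicit density~\eqref{eq:ultraspheric-density}. Your write-up is in fact slightly more detailed than the paper's, which presents this computation in a single displayed line immediately before the corollary.
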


\begin{remark}[The 2-dimensional case] As noted in \Cref{def:gegenbauer}, we have thus far required $n \geq 3$.  When $n=2$, spherical harmonics reduce to Fourier series.  If $t = \cos(\theta)$, we may define $C_d(t) := \cos(d\theta) = T_d(t)$, where the latter is the degree-$d$ Chebyshev polynomial of the first kind.  Analogues of the properties in \Cref{prop:gegenbauer} hold in this case, allowing us to recover our results for $n=2$.
\end{remark}

Since we are interested in comparing $\lambda_d$ as $d$ varies, the factor $\frac{1}{Z_n}$
is unimportant for us.
The key bound that we will need essentially amounts to considering the case of the indicator function $g = 1_{[-1, t]}$, defined to be $1$ on $[-1,t]$ and $0$ elsewhere.

\begin{lemma}[Key Gegenbauer lemma]\label{lem:key-gegenbauer}
For each integer $d \geq 0$, define the quantity
\begin{equation*}
\nu_d(t) = \int_{-1}^t \mathrm{ratio}_d(w)\, dw,
\end{equation*}
where $-1 < t < 1$. Then $|\nu_d(t)| < - \nu_1(t)$ for $d \geq 1$.
In addition, $\nu_1(t)\leq 0$.
\end{lemma}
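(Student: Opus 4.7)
The plan is to obtain $\nu_d(t)$ in closed form using property~\ref{item:geg-integral} of \Cref{prop:gegenbauer}, then compare to $\nu_1(t)$ by bounding the Gegenbauer polynomial appearing in that closed form via property~\ref{item:geq-inequality}.

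First I would compute $\nu_1(t)$ directly. Using $C_1^{(\alpha)}(t) = 2\alpha t$ from property~\ref{item:geg-low-deg}, the integrand $\mathrm{ratio}_1(w) = w(1-w^2)^{\alpha - 1/2}$ has the elementary antiderivative $-\frac{(1-w^2)^{\alpha+1/2}}{2\alpha+1}$. Since $\alpha > -1/2$, this antiderivative vanishes at $w = -1$, so
\[
\nu_1(t) = -\frac{(1-t^2)^{\alpha+1/2}}{2\alpha+1} \le 0,
\]
with equality only at $t = \pm 1$. This proves the second assertion.

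Next I would apply property~\ref{item:geg-integral} to obtain a closed form for $\nu_d(t)$ for general $d \ge 1$. The stated antiderivative of $\mathrm{ratio}_d$ is $-\frac{2(1-w^2)^{\alpha+1/2}\alpha}{d(d+2\alpha)}\cdot\frac{C^{(\alpha+1)}_{d-1}(w)}{C_d^{(\alpha)}(1)}$, which again vanishes at $w = -1$, giving
\[
\nu_d(t) = -\frac{2(1-t^2)^{\alpha+1/2}\alpha}{d(d+2\alpha)}\cdot\frac{C^{(\alpha+1)}_{d-1}(t)}{C_d^{(\alpha)}(1)}.
\]
Applying property~\ref{item:geq-inequality} to $C_{d-1}^{(\alpha+1)}$ (with index $\alpha+1 > 0$) gives $|C_{d-1}^{(\alpha+1)}(t)| \le C_{d-1}^{(\alpha+1)}(1) = \frac{(2\alpha+2)_{d-1}}{(d-1)!}$, with strict inequality whenever $d \ge 2$ and $-1 < t < 1$.

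The final step is the algebraic simplification. Using the explicit values $C_d^{(\alpha)}(1) = (2\alpha)_d/d!$ and $C_{d-1}^{(\alpha+1)}(1) = (2\alpha+2)_{d-1}/(d-1)!$, the Pochhammer telescoping $(2\alpha)_d = 2\alpha(2\alpha+1)(2\alpha+2)_{d-1}/(2\alpha+d)$ yields
\[
\frac{C_{d-1}^{(\alpha+1)}(1)}{C_d^{(\alpha)}(1)} = \frac{d(2\alpha+d)}{2\alpha(2\alpha+1)}.
\]
Plugging this into the bound for $|\nu_d(t)|$, the factor $\frac{2\alpha}{d(d+2\alpha)}$ cancels perfectly to leave $\frac{(1-t^2)^{\alpha+1/2}}{2\alpha+1} = -\nu_1(t)$. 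Thus $|\nu_d(t)| \le -\nu_1(t)$, with strict inequality for $d \ge 2$ (via property~\ref{item:geq-inequality}) and equality at $d = 1$. The only subtlety is that the stated inequality is strict for all $d \ge 1$, which does not hold at $d = 1$; I would flag this as a mild misstatement and read the conclusion as strict for $d \ge 2$ and an equality for $d = 1$, which is exactly what is used downstream. The only real obstacle is the arithmetic in the Pochhammer cancellation; everything else is a direct appeal to the catalogued properties of Gegenbauer polynomials.
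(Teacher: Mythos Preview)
Your proof is correct and follows essentially the same route as the paper's: evaluate $\nu_d$ in closed form via property~\ref{item:geg-integral}, bound $|C^{(\alpha+1)}_{d-1}(t)|$ by its value at $1$ via property~\ref{item:geq-inequality}, and then cancel the Pochhammer symbols to recover $-\nu_1(t)$. The only cosmetic difference is that the paper obtains $\nu_1$ by specializing the general antiderivative (using $C_0^{(\alpha+1)}=1$), while you integrate $w(1-w^2)^{\alpha-1/2}$ directly; these are of course equivalent. Your observation about the $d=1$ case is also correct: the strict inequality in property~\ref{item:geq-inequality} applies to $C^{(\alpha+1)}_{d-1}$ only when $d-1\ge 1$, so the lemma's strict bound genuinely holds for $d\ge 2$ and is an equality at $d=1$, which is exactly what \Cref{cor:eigenvalue-bounds} uses.
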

\begin{proof}
By \Cref{item:geg-integral} of the Gegenbauer properties,
\begin{equation*}
\nu_d(t) = - \frac{2 (1-t^2)^{\alpha + \tfrac{1}{2}} \alpha}{d (d + 2\alpha)} \cdot \frac{C^{(\alpha + 1)}_{d-1}(t)}{C_d^{(\alpha)}(1)}.
\end{equation*}
Using \Cref{item:geg-low-deg}, we can simplify the $d=1$ case as follows:
\begin{equation*}
\nu_1(t) 
= - \frac{2 (1-t^2)^{\alpha + \tfrac{1}{2}} \alpha}{ (1 + 2\alpha)} \cdot \frac{C^{(\alpha + 1)}_{0}(t)}{C_1^{(\alpha)}(1)}
= - \frac{(1-t^2)^{\alpha + \tfrac{1}{2}}}{ (1 + 2\alpha)}.
\end{equation*}
This is clearly $\leq 0$, as $\alpha > -\tfrac{1}{2}$.
Finally, by \Cref{item:geq-inequality}, we have the bound
\begin{align*}
|\nu_d(t)|
&= \frac{2 (1-t^2)^{\alpha + \tfrac{1}{2}} \alpha}{d (d + 2\alpha)} \cdot \frac{|C^{(\alpha + 1)}_{d-1}(t)|}{C_d^{(\alpha)}(1)}\\
&< \frac{2 (1-t^2)^{\alpha + \tfrac{1}{2}} \alpha}{d (d + 2\alpha)} \cdot \frac{C^{(\alpha + 1)}_{d-1}(1)}{C_d^{(\alpha)}(1)}\\
&= \frac{2 (1-t^2)^{\alpha + \tfrac{1}{2}} \alpha}{d (d + 2\alpha)} \cdot \frac{(2\alpha + 2)_{d-1} \cdot d!}{(d-1)! \cdot (2\alpha)_d}\\
&= \frac{(1-t^2)^{\alpha + \tfrac{1}{2}}}{(2\alpha + 1)}= - \nu_1(t).
\end{align*}
This completes the proof.
\end{proof}

Once we have considered the case of $g = 1_{[-1, t]}$, all other monotonic cases follow simply by
expressing monotonic functions as linear combinations of indicator functions:

\begin{corollary}\label{cor:eigenvalue-bounds}
    In the setting of \Cref{thm:spherical-noise}, if $g$ is non-increasing then $\lambda_1 \le 0$
    and $|\lambda_d| < -\lambda_1$ for all $d \ge 2$. On the other hand, if $g$ is non-decreasing
    then $\lambda_1 \ge 0$ and $|\lambda_d| < \lambda_1$ for all $d \ge 2$.
\end{corollary}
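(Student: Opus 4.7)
The plan is to reduce the bounds on $\lambda_d$ to the pointwise bounds on $\nu_d(t)$ supplied by Lemma~\ref{lem:key-gegenbauer} via integration by parts. Starting from the formula $\lambda_d = \frac{1}{Z_n} \int_{-1}^1 \mathrm{ratio}_d(t)\, g(t)\, dt$ of Corollary~\ref{cor:eigenvalue-formula} and noting $\nu_d(t) = \int_{-1}^t \mathrm{ratio}_d(w)\, dw$ is an antiderivative of $\mathrm{ratio}_d$, the natural move is integration by parts against the monotone function $g$.

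First I would check the boundary behaviour. By construction $\nu_d(-1) = 0$, and at $t = 1$ the explicit formula in the proof of Lemma~\ref{lem:key-gegenbauer} contains the factor $(1 - t^2)^{\alpha + 1/2}$, which vanishes at $t = 1$ for every $\alpha > -\tfrac{1}{2}$, so $\nu_d(1) = 0$ as well, for all $d \ge 1$. Consequently both boundary terms disappear and, using a Lebesgue--Stieltjes integration by parts so as not to assume that $g$ is differentiable, one obtains
\[
    \lambda_d \;=\; -\frac{1}{Z_n}\int_{-1}^1 \nu_d(t)\, dg(t).
\]
When $g$ is non-increasing, $-dg$ is a non-negative Borel measure on $[-1,1]$; when $g$ is non-decreasing, $dg$ itself is. The approximation step (smoothing $g$ by convolution so the standard fundamental-theorem-of-calculus integration by parts applies, then passing to the limit using the boundedness of $\nu_d$) is routine since $\nu_d$ is continuous on $[-1,1]$.

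Now I would feed in Lemma~\ref{lem:key-gegenbauer}, which gives $\nu_1(t) \le 0$ on $[-1,1]$ and $|\nu_d(t)| < -\nu_1(t)$ for $-1 < t < 1$ when $d \ge 2$. If $g$ is non-increasing then with the measure $\mu := -dg \ge 0$ we get $\lambda_1 = \tfrac{1}{Z_n}\int \nu_1(t)\, d\mu(t) \le 0$, while for $d \ge 2$,
\[
    |\lambda_d| \;\le\; \frac{1}{Z_n}\int_{-1}^1 |\nu_d(t)|\, d\mu(t)
    \;<\; \frac{1}{Z_n}\int_{-1}^1 (-\nu_1(t))\, d\mu(t) \;=\; -\lambda_1,
\]
with the strict inequality holding as long as $\mu$ assigns positive mass to $(-1,1)$, i.e.\ as long as $g$ is not constant (the constant case being vacuous, since both sides are $0$). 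The non-decreasing case is symmetric: take $\mu = dg \ge 0$, note $\lambda_1 = -\tfrac{1}{Z_n}\int \nu_1\, d\mu \ge 0$ and repeat the bound to obtain $|\lambda_d| < \lambda_1$ for $d \ge 2$.

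There is no real obstacle here beyond the technicality of writing integration by parts when $g$ is only monotone rather than smooth; the Lebesgue--Stieltjes formulation handles it cleanly, and the rest is a direct appeal to the Key Gegenbauer Lemma. The substantive content of the corollary is already in Lemma~\ref{lem:key-gegenbauer}; this step merely transfers a pointwise inequality on antiderivatives into a spectral comparison for $\U_g$.
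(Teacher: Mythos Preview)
Your proposal is correct and is essentially the same argument as the paper's, just phrased differently: the paper writes the monotone $g$ as a superposition of indicators $g(t) = \int 1_{[-1,s]}(t)\,d\mu(s)$ and applies Fubini to get $\lambda_d = \tfrac{1}{Z_n}\int \nu_d(s)\,d\mu(s)$, which is exactly your Lebesgue--Stieltjes integration by parts with $\mu = -dg$ (and the paper likewise uses $\nu_d(1)=0$ in the non-decreasing case). Your observation that the strict inequality requires $g$ to be nonconstant is a fair caveat that the paper leaves implicit.
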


\begin{proof}
    If $g: [-1, 1] \to [0, \infty)$ is non-increasing then in can be written as a linear combination
    of non-increasing indicator functions: there is a measure $\mu$ on $[-1, 1]$ such that
    \[
        g(t) = \int_{-1}^1 1_{[-1, s]}(t) \, d\mu(s).
    \]
    By \Cref{cor:eigenvalue-formula} and Fubini's theorem,
    \[
        \lambda_d = \frac{1}{Z_n} \int_{-1}^1 \int_{-1}^1 \mathrm{ratio}_d(t) 1_{[-1, s]}(t) \, d\mu(s) \, dt
        = \frac{1}{Z_n} \int_{-1}^1 \nu_d(s)\, d\mu(s),
    \]
    where $\nu_d$ is defined as in \Cref{lem:key-gegenbauer}. The claim
    then follows from \Cref{lem:key-gegenbauer}.

    For the case of non-decreasing $g$, note that $\nu_d(1) = 0$ for all $d \ge
    1$, for example because of \Cref{item:geg-harmonic} and the fact that
    spherical harmonics of degree $d \ge 1$ are orthogonal to constant
    functions (which are the spherical harmonics of degree $0$).
    Then we represent $g$ a linear combination of non-decreasing indicator functions
    by choosing $\mu$ such that
    \[
        g(t) = \int_{-1}^1 1_{[s, 1]}(t) \, d\mu(s) = \int_{-1}^1 1 - 1_{[-1, s]}(t) \, d\mu(s);
    \]
    and finally, we have
    \[
        \lambda_d = \frac{1}{Z_n} \int_{-1}^1 \int_{-1}^1 \mathrm{ratio}_d(t) (1 - 1_{[-1, s]}(t)) \, d\mu(s) \, dt
        = -\frac{1}{Z_n} \int_{-1}^1 \nu_d(s)\, d\mu(s),
    \]
    and we conclude as before using \Cref{lem:key-gegenbauer}.
\end{proof}

Finally, \Cref{thm:spherical-noise} follows from \Cref{cor:eigenvalue-bounds} simply by decomposing
the function $f$ in spherical harmonics.

\begin{proof}[Proof of \Cref{thm:spherical-noise}]
    Assume first that $g$ is non-increasing, and choose $f: S^{n-1} \to \R^n$ with $\E[\|f\|^2] = 1$.
    Recall that $H_d: L^2(S^{n-1}) \to \calH_d$ is the orthogonal projection onto degree-$d$
    spherical harmonics. We extend $H_d$ to act on vector-valued functions coordinate-wise,
    so that if $f_1, \dots, f_n$ are the coordinate functions of $f$ then $H_d f = (H_d f_1, \dots, H_d f_n)$.  We also have $\U_g f = (\U_g f_1, \ldots, \U_g f_n)$.
    Recall that $f = \sum_{d \ge 0} H_d f$; then \Cref{thm:funk-hecke} implies that
    \[
        \E_{\bu \sim \unifS} \inr{f(\bu)}{\U_g f(\bu)}
        = \E_{\bu \sim \unifS} \left\langle
            \sum_{d \ge 0} (H_d f)(\bu),
            \sum_{d' \ge 0} (\U_g H_{d'} f)(\bu)
        \right\rangle
        = \sum_{d \ge 0} \lambda_d \E_{\bu \sim \unifS}[\|H_d f(\bu)\|^2],
    \]
    where the cross-terms with $d \ne d'$ vanished because of the orthogonality
    of spherical harmonics.

    On the other hand, the orthogonality of the decomposition $f = \sum_{d \ge 0} H_d f$ implies
    that
    \[
        \sum_{d \ge 0} \lambda_d \E_{\bu \sim \unifS} [\|H_d f(\bu)\|^2] = \E_{\bu \sim \unifS} [\|f(\bu)\|^2] = 1.
    \]
    Since (by \Cref{cor:eigenvalue-bounds} and the fact that $\lambda_0 = \E_{\bt \sim \ultraS}[g(\bt)] \ge 0$)
    $\lambda_1$ is the most-negative of all eigenvalues,
    \[
        \E_{\bu \sim \unifS} \inr{f(\bu)}{\U_g f(\bu)} \ge \lambda_1.
    \]
    Since $f(u) = u$ is a degree-1 spherical harmonic, we get equality in this case. This
    completes the proof for non-increasing $g$.

    When $g$ is non-decreasing, the argument is the same except that the assumption $\E[f] = 0$ implies
    that $H_0 f = 0$, and then \Cref{cor:eigenvalue-bounds} implies that $\lambda_1$ is the most positive
    among all remaining eigenvalues. Therefore,
    \[
        \E_{\bu \sim \unifS} \inr{f(\bu)}{\U_g f(\bu)} \le \lambda_1,
    \]
    and as before we have equality for $f(u) = u$.
\end{proof}

\section{The full-dimensional case}

Here we consider the case when $f$ is an assignment from $\R^n$ to the ball $B^k$ when $k=n$, and we consider a correlation parameter
$-1 \le \rho \le 0$.

\begin{theorem}[Vector-valued Borell's inequality; $n$-dimensional outputs]\label{thm:n-dim-borell}
Let $f:\R^n \rightarrow B^n$.
In addition, let $f_{\mathrm{opt}} :\R^n \rightarrow B^n$
be defined by $f_{\mathrm{opt}}(x) = x / \Vert x\Vert$. Let $-1 \leq \rho \leq 0$. Then
\begin{equation*}
		\stab_\rho[f] \geq \stab_\rho[f_{\mathrm{opt}}].
\end{equation*}
Moreover, if $\stab_\rho[f] = \stab_\rho[f_{\mathrm{opt}}]$ then there is an orthogonal
matrix $M$ such that $f(x) = f_{\mathrm{opt}}(Mx)$ almost surely.
\end{theorem}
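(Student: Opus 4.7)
The plan is to decompose $f$ into its spherical average and its spherical fluctuation, handle each separately on every radial shell, and integrate. Set $\bar f(x) := \E_{\bu \sim \unifS}[f(\|x\|\bu)]$ and $\tilde f := f - \bar f$, so that $\bar f$ is a radial $\R^n$-valued function while $\tilde f$ has zero spherical mean on every shell. A short computation using Funk--Hecke shows that the cross term $\E_{\bx \sim_\rho \by}\inr{\bar f(\bx)}{\tilde f(\by)}$ vanishes: upon conditioning on $(r_1, r_2) := (\|\bx\|, \|\by\|)$, $\bar f(\bx)$ is constant on the shell of radius $r_1$, while the conditional expectation of $\tilde f(r_2 \bv)$ given the angular part $\bu$ of $\bx$ is a sum of spherical harmonics of degree $\geq 1$ applied to $\bu$, whose $\bu$-average therefore vanishes. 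Hence $\stab_\rho[f] = \stab_\rho[\bar f] + \stab_\rho[\tilde f]$.

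The radial part is a free lunch: each coordinate of $\bar f$ is an $O(n)$-invariant scalar function on $\R^n$, so its multivariate Hermite expansion involves only terms of even total degree; since $\U_\rho$ acts on the degree-$d$ Hermite component by $\rho^d$, we obtain $\stab_\rho[\bar f] = \sum_i \sum_{k \geq 0} \rho^{2k}\,\|\bar f_i^{=2k}\|_2^2 \geq 0$ for every $\rho \in [-1, 1]$.

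The heart of the proof is showing $\stab_\rho[\tilde f] \geq \stab_\rho[f_{\mathrm{opt}}]$. Conditional on $(r_1, r_2)$, the angular distribution of $(\bu, \bv)$ is rotation-invariant with density proportional to $\exp\!\big(\tfrac{\rho r_1 r_2}{1-\rho^2}\inr{u}{v}\big)$, so the conditional expectation from $\bv$ to $\bu$ is an instance of the spherical operator $\U_g$ of~\eqref{eq:Ug} with kernel $g_{r_1, r_2}(t) \propto \exp\!\big(\tfrac{\rho r_1 r_2}{1-\rho^2} t\big)$, non-increasing in $t$ for $\rho \leq 0$ and normalized so that $\lambda_0 = 1$. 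Writing $\tilde f_r(u) := \tilde f(ru)$ and $a_d(r) := \sqrt{\E_{\bu \sim \unifS}\|H_d \tilde f_r(\bu)\|^2}$, the Funk--Hecke formula and orthogonality of spherical harmonics give
\[
    \E_{\bu \sim \unifS}\!\inr{\tilde f_{r_1}(\bu)}{\U_{g_{r_1,r_2}} \tilde f_{r_2}(\bu)}
    = \sum_{d\geq 1} \lambda_d(r_1,r_2)\,\E_{\bu \sim \unifS}\!\inr{H_d \tilde f_{r_1}(\bu)}{H_d \tilde f_{r_2}(\bu)},
\]
where the $d = 0$ term drops out because $\tilde f$ has zero spherical mean. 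Combining $|\lambda_d| \leq -\lambda_1$ for $d \geq 2$ (from \Cref{cor:eigenvalue-bounds}) with Cauchy--Schwarz on each summand lower-bounds this by $\lambda_1(r_1, r_2) \sum_{d \geq 1} a_d(r_1) a_d(r_2)$; a second Cauchy--Schwarz together with $\sum_{d \geq 1} a_d(r)^2 = \E\|\tilde f_r\|^2 \leq 1$ gives $\sum_{d \geq 1} a_d(r_1) a_d(r_2) \leq 1$, so (since $\lambda_1 \leq 0$) the whole sum is at least $\lambda_1(r_1, r_2)$. Because $f_{\mathrm{opt}, r}(u) = u$ is a pure degree-$1$ spherical harmonic with $a_1 \equiv 1$ (achieving this bound with equality), integrating over $(r_1, r_2)$ yields $\stab_\rho[\tilde f] \geq \E[\lambda_1(r_1,r_2)] = \stab_\rho[f_{\mathrm{opt}}]$.

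For the uniqueness claim, trace equality through each step. When $\rho \neq 0$, equality in $\stab_\rho[\bar f] \geq 0$ forces every $\|\bar f_i^{=2k}\|_2 = 0$, hence $\bar f \equiv 0$; the strict inequality for $d \geq 2$ in \Cref{lem:key-gegenbauer} forces $H_d \tilde f_r = 0$ for a.e.\ $r$ and every $d \geq 2$; equality in Cauchy--Schwarz forces $H_1 \tilde f_{r_1}$ and $H_1 \tilde f_{r_2}$ to be positive multiples of a common degree-$1$ $\R^n$-valued harmonic, which (together with the pointwise bound $\|f\| \leq 1$ and the saturation $\E\|\tilde f_r\|^2 = 1$) finally yields $f(x) = Mx/\|x\|$ almost surely for some $M \in O(n)$. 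The main technical hurdles I expect are the careful cross-term cancellation in the decomposition step (which mixes the radial and angular structures of the correlated Gaussian law) and the bookkeeping of equality cases for uniqueness; the core quantitative estimate itself is a direct consequence of \Cref{cor:eigenvalue-bounds} and two Cauchy--Schwarz applications.
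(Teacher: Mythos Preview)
Your proposal is correct and follows essentially the same architecture as the paper's proof: condition on the radii $(r_1,r_2)$, expand each shell function in spherical harmonics, use \Cref{cor:eigenvalue-bounds} together with two Cauchy--Schwarz inequalities to bound the degree-$\geq 1$ part by $\lambda_1(r_1,r_2)$, and observe that $f_{\mathrm{opt}}$ saturates this bound. The paper packages the shell estimate as \Cref{lem:degree-0-and-1-lower-bound}, which gives $\E\inr{f_{r}(\bu)}{f_{s}(\bv)}\geq \inr{\E f_r}{\E f_s}+\lambda_1^{r,s}$ for the undecomposed $f$, whereas you split off the spherical mean $\bar f$ at the outset and apply the same estimate to the mean-zero piece $\tilde f$; these are equivalent reorderings.

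The one genuinely different ingredient is the nonnegativity of the ``radial'' term $\E_{\br,\bs}\inr{g(\br)}{g(\bs)}$ (your $\stab_\rho[\bar f]\geq 0$). The paper proves this by a coupling: write $\bx=\sqrt{-\rho}\,\bz+\sqrt{1+\rho}\,\bz'$ and $\by=-(\sqrt{-\rho}\,\bz+\sqrt{1+\rho}\,\bz'')$, so that conditional on $\bz$ the vectors $\bx$ and $-\by$ are i.i.d., and then use $\|{-\by}\|=\|\by\|$ to turn the cross term into a conditional square. Your argument instead notes that each coordinate of $\bar f$ is radial, hence even under $x\mapsto -x$, hence supported on even Hermite levels, giving $\stab_\rho[\bar f_i]=\sum_{k\geq 0}\rho^{2k}\|\bar f_i^{=2k}\|_2^2\geq 0$ directly. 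Both work; your route is more elementary and makes the equality case ($\bar f\equiv 0$ when $\rho\neq 0$) immediate, while the paper's coupling is perhaps more suggestive of the underlying probabilistic structure. The uniqueness analysis you sketch aligns with the paper's, which traces equality through the same chain of Cauchy--Schwarz and eigenvalue inequalities.
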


Our goal is to lower bound the expression
\begin{equation*}
\E_{\bx \sim_{\rho} \by} \langle f(\bx), f(\by)\rangle.
\end{equation*}
Fourier analysis is a natural tool to bring to bear on this problem.  Although it is natural to consider the Hermite polynomials since the expectation is ``diagonal'' in this set of polynomials (e.g.~\cite{OD14}, Proposition 11.33), the optimal~$f_{\mathrm{opt}}$'s expansion in the Hermite basis of polynomials is complicated (see~\cite{PT20} for a Hermite expansion yielding that of $f_{\mathrm{opt}}$).
As a result, it is difficult to compare the value of $f$ to the value of $f_{\mathrm{opt}}$ using the Hermite basis.

Instead, we reparameterize $\bx$ as $\bx = \br \cdot \bu$,
where $\br$ is the length (or radius) of~$\bx$ and $\bu$ is the unit vector in the direction of~$\bx$.
Similarly, we will reparameterize $\by$ as $\by = \bs \cdot \bv$.
For each value~$r$ that the random variable~$\br$ may take, we will think of $f$ as specifying a separate function on the unit sphere $S^{n-1}$.
We denote this function as $f_{r}:S^{n-1} \rightarrow B^n$ and define it by
\begin{equation*}
f_{r}(\bu) := f(r \cdot \bu) = f(\bx).
\end{equation*}
Using this, we can rewrite our original expectation as
\begin{equation}\label{eq:reparameterized-expectation}
\E_{\br, \bs} \E_{\bu, \bv} \langle f_{\br}(\bu), f_{\bs}(\bv)\rangle.
\end{equation}
What is nice about this reparameterization is that it simplifies our optimizer $f_{\mathrm{opt}}$.
In particular, for each fixed $r \geq 0$, $(f_{\mathrm{opt}})_r(u)$ is simply equal to $u$.

To analyze \Cref{eq:reparameterized-expectation}, we first condition on fixed values of $r, s \geq 0$.
This gives the expression
\begin{equation*}
\E_{\bu, \bv} \langle f_{r}(\bu), f_{s}(\bv)\rangle.
\end{equation*}
This is just an expectation involving two functions on the sphere
(under a distribution on~$\bu$ and~$\bv$ described in the next section). If we take $\U_{\rho}^{r,s}$ as the standard Gaussian noise operator (\Cref{def:noise-operator}) conditioned on $r$,$s$, then we can then further rewrite the above expectation as
\begin{equation}\label{eq:introduce-noise-operator}
\E_{\bu} \langle f_{r}(\bu), \U_\rho^{r, s} f_{s}(\bu)\rangle,
\end{equation}
where we may think of $\U_{\rho}^{r, s} f_s(u)$ as the average of $f_s(\bv)$ over a random~$\bv$, conditioned on~$r$, $s$, and $\bu=u$. 
This noise operator turns out to fall into the setting that we considered in the
previous section, and so applying \Cref{thm:spherical-noise} for each fixed $r,s$ will allow us to
prove \Cref{thm:n-dim-borell}.

\subsection{The induced noise operator}

If $(\bx, \by)$ are $\rho$-correlated random variables then the probability density function (PDF) can be written as
\begin{equation*}
    G_\rho(x, y)= \frac 1A_\rho e^{-\frac{\Vert x \Vert^2+\Vert y \Vert^2- 2 \rho\langle x, y \rangle }{2(1-\rho^2)}}
    = \frac 1A_\rho e^{-\frac{\Vert x \Vert^2+\Vert y \Vert^2}{2(1-\rho^2)}} e^{\frac{\rho r s\langle u, v \rangle}{(1-\rho^2)}},
\end{equation*}
where $A_\rho$ is a normalizing constant.
When we reparameterize according to $(r, s, u, v)$ we obtain:
\begin{equation*}
    G_\rho(r, s, u, v)
    = \frac 1A_\rho (rs)^{n-1} e^{-\frac{r^2+s^2}{2(1-\rho^2)}}e^{\frac{\rho r s \langle u, v \rangle}{(1-\rho^2)}},
\end{equation*}
where the $(rs)^{n-1}$ factor arises from the change of variables.

We will be more interested, however, in the conditional distributions:
\begin{definition}[Conditioned correlated Gaussians] 
We denote by $G^{r, s}_\rho(u,v)$ the PDF of $(\bu,\bv)$, with respect to the measure $\omega$,
conditioned on the values~$r$,~$s$.  We write $(\bu,\bv) \sim \normal^{r, s}_\rho$ for correlated random variables drawn from this distribution.

We denote by $G^{r, s}_\rho(v \mid u)$ the PDF of $\bv$, with respect to the measure $\omega$,
conditioned on the values~$r$,~$s$, and~$u$. This can be written as
\begin{align}\label{eq:conditioned_pdf}
G^{r, s}_\rho(v \mid u)
&= \frac{1}{A^{r, s}_\rho } e^{\frac{\rho r s \langle u, v\rangle}{ (1-\rho^2)}},
\end{align}
where $A^{r, s}_\rho$ is a normalizing constant that depends on $r$, $s$, and $\rho$.  
\end{definition}

We note that \eqref{eq:conditioned_pdf} depends only on the quantity $\langle u, v\rangle$,
and it is monotonically decreasing in this quantity because $r,s \geq 0$ and $\rho \leq 0$.

\begin{definition}[Conditioned Gaussian noise operator]\label{def:noise-operator-on-shell}
The \emph{conditioned Gaussian noise operator} is an operator on $L_2(S^{n-1},\unifS)$ which acts on a function $f:S^{n-1}\rightarrow \mathbb{R}$ as:
\begin{equation*}
    \U^{r, s}_\rho f(u) = \E_{(\bu,\bv) \sim \normal^{r,s}_\rho}[f(\bv) \mid \bu = u] = \int_{S^{n-1}} G^{r,s}_\rho(v \mid u) f(v)\, d\unifS(v).
\end{equation*}
\end{definition}

Note in particular that for every $r,s \ge 0$ and for every $-1 \le \rho \le 0$, $\U^{r,s}_\rho$ is
a noise operator of the form~\eqref{eq:Ug}, for the non-increasing function $g(t) = \frac{1}{A^{r,s}_\rho} e^{\frac{\rho r s t}{1-\rho^2}}$.  We have that for any $u$,
\begin{equation*}
1 = \int_{S^{n-1}} G^{r,s}_\rho(v \mid u)\, d\unifS(v) = \int^1_{-1} g(t)\,d\ultraS(t),
\end{equation*}
by the definitions of $\unifS$ and $\ultraS$, demonstrating that $g \geq 0$ satisfies \eqref{eq:g-integrability}.  Recalling from \Cref{thm:funk-hecke} that the eigenfunctions of $\U^{r,s}_\rho$ are spherical harmonics,
let $\lambda_d^{r,s}$ be the eigenvalue corresponding to $\calH_d$; i.e., $\U^{r,s}_\rho h = \lambda_d^{r,s} h$
for all $h \in \calH_d$.
Then \Cref{cor:eigenvalue-bounds} implies that
\begin{equation}\label{eq:lambda1-bound}
    \lambda_1^{r,s} \le -|\lambda_d|^{r,s} \text{ for every $d \ge 0$},
\end{equation}
with equality only if $d = 1$.
Moreover, the fact that $G^{r,s}_\rho$ is a probability density implies that $\lambda_0^{r,s} = 1$.

The following lemma gives our main lower-bound.
It shows that if $f_r$ and $f_s$ have mean zero,
then the average inner product is lower-bounded by $\lambda_1^{r,s}$,
exactly the value that the optimizer $f_{\mathrm{opt}}$ would achieve.
However, when they are not mean-zero,
they can outperform the optimizer;
consider $f_r = (1, 0, \ldots, 0)$ and $f_s = (-1, 0, \ldots, 0)$,
which have average inner-product $-1$.
To compensate for this, the lemma includes a correction factor depending on the means of~$f_r$ and~$f_s$.

\begin{lemma}[Main lower bound]\label{lem:degree-0-and-1-lower-bound}
\begin{equation*}
\E_{(\bu, \bv) \sim \normal_\rho^{r, s}} \langle f_r(\bu), f_s(\bv)\rangle
\geq \langle \E[f_r], \E[f_s]\rangle + \lambda_1^{r,s},
\end{equation*}
with equality if and only if there is an orthogonal matrix $M$ so that $f_r(u) = f_s(u) = M u$.
\end{lemma}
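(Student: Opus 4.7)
The plan is to combine the spherical harmonic decomposition with the Funk--Hecke-based eigenvalue bounds from Corollary \ref{cor:eigenvalue-bounds} (applied to the noise operator $\U_\rho^{r,s}$, which is of the form \eqref{eq:Ug} with a non-increasing kernel). Let $H_d$ denote the projection onto degree-$d$ spherical harmonics, extended coordinatewise so that $H_d f_r$ is an $\R^n$-valued function. Since $f_r, f_s \in L^2(S^{n-1};\R^n)$, we may write $f_r = \sum_{d \ge 0} H_d f_r$ and $f_s = \sum_{d \ge 0} H_d f_s$. By Theorem \ref{thm:funk-hecke} (applied coordinatewise), $\U_\rho^{r,s}$ acts as multiplication by $\lambda_d^{r,s}$ on $\calH_d$, and the orthogonality of spherical harmonics of different degrees gives
\[
    \E_{(\bu,\bv) \sim \normal_\rho^{r,s}} \langle f_r(\bu), f_s(\bv)\rangle
    = \E_{\bu} \langle f_r(\bu), \U_\rho^{r,s} f_s(\bu)\rangle
    = \sum_{d \ge 0} \lambda_d^{r,s} \, \langle H_d f_r, H_d f_s\rangle_{L^2(\unifS)}.
\]

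Next, I would isolate the $d=0$ contribution. Since $\lambda_0^{r,s} = 1$ and $H_0 f_r = \E[f_r]$, $H_0 f_s = \E[f_s]$, this term equals $\langle \E[f_r], \E[f_s]\rangle$. It remains to show $\sum_{d \ge 1} \lambda_d^{r,s} \langle H_d f_r, H_d f_s\rangle \ge \lambda_1^{r,s}$. Using Cauchy--Schwarz in $L^2(S^{n-1}; \R^n)$ on each term, then the eigenvalue bound $|\lambda_d^{r,s}| \le -\lambda_1^{r,s} = |\lambda_1^{r,s}|$ from \eqref{eq:lambda1-bound}, and finally Cauchy--Schwarz in the index $d$:
\[
    \sum_{d \ge 1} \lambda_d^{r,s} \langle H_d f_r, H_d f_s\rangle
    \;\ge\; -|\lambda_1^{r,s}| \sum_{d \ge 1} \|H_d f_r\| \|H_d f_s\|
    \;\ge\; -|\lambda_1^{r,s}| \sqrt{\textstyle\sum_{d \ge 1} \|H_d f_r\|^2} \sqrt{\textstyle\sum_{d \ge 1} \|H_d f_s\|^2}.
\]
Since $\sum_{d \ge 0} \|H_d f_r\|^2 = \E\|f_r(\bu)\|^2 \le 1$ (and similarly for $f_s$), because $f_r, f_s$ take values in $B^n$, the right-hand side is bounded below by $-|\lambda_1^{r,s}| = \lambda_1^{r,s}$, yielding the claimed bound.

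For the equality characterization, I would trace through each inequality in the chain above. Equality in the final step forces $\E\|f_r\|^2 = \E\|f_s\|^2 = 1$ and $\|H_0 f_r\| = \|H_0 f_s\| = 0$; the strict inequality $|\lambda_d^{r,s}| < |\lambda_1^{r,s}|$ for $d \ge 2$ forces $\|H_d f_r\| \|H_d f_s\| = 0$ for $d \ge 2$; combined with $\sum_{d \ge 1} \|H_d f_r\|^2 = 1$ this pins $\|H_1 f_r\| = \|H_1 f_s\| = 1$ and $H_d f_r = H_d f_s = 0$ for $d \ne 1$. Hence $f_r(u) = M_r u$ and $f_s(u) = M_s u$ for linear maps $M_r, M_s$, and the norm condition $\|f_r(u)\| = 1$ on $S^{n-1}$ forces $M_r, M_s$ to be orthogonal. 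Finally, equality in the pointwise Cauchy--Schwarz for $d=1$ (with the correct sign, since $\lambda_1^{r,s} < 0$) forces $H_1 f_r = c H_1 f_s$ with $c > 0$, and comparing operator norms gives $c=1$ so $M_r = M_s$.

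No single step is a genuine obstacle; the mildly delicate part is the equality analysis, where one must track which Cauchy--Schwarz step contributes which constraint (in particular, the fact that the mean must vanish comes from squeezing $\sum_{d\ge 1}\|H_d f_r\|^2 \le \E\|f_r\|^2 - \|\E f_r\|^2$ against the upper bound of $1$). A minor corner case is when $\lambda_1^{r,s} = 0$, which occurs only at $\rho = 0$ or $r=0$ or $s=0$; the inequality is then trivial and the equality statement vacuous, so this causes no issue.
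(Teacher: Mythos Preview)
Your proposal is correct and follows essentially the same path as the paper: expand in spherical harmonics, peel off the degree-$0$ term, apply Cauchy--Schwarz termwise together with $|\lambda_d^{r,s}|\le|\lambda_1^{r,s}|$, then Cauchy--Schwarz over the index $d$, and finally use $\sum_d\|H_d f\|^2\le 1$. The only point worth tightening in your equality analysis is that $\|H_d f_r\|\,\|H_d f_s\|=0$ for $d\ge 2$ together with $\sum_{d\ge 1}\|H_d f_r\|^2=1$ does not by itself force $H_d f_r=0$ for $d\ge 2$ (e.g.\ $f_r$ supported on degree~$2$ and $f_s$ on degree~$1$); you also need the proportionality of the sequences $(\|H_d f_r\|)_{d\ge1}$ and $(\|H_d f_s\|)_{d\ge1}$ coming from equality in the second Cauchy--Schwarz, after which your argument goes through exactly as written.
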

\begin{proof}
    Recalling that $H_d$ is the orthogonal projection onto $\calH_d$ and that $U^{r,s}_\rho h = \lambda^{r,s}_d h$
    for all $h \in \calH_d$, we have
\begin{align*}
\E_{(\bu, \bv) \sim \normal_\rho^{r, s}} \langle f_r(\bu), f_s(\bv)\rangle
&= \E_{\bu \sim \unifS} \inr{f_r(\bu)}{\U_\rho^{r,s} f_s(\bu)} \\
&= \sum_{d \ge 0} \E_{\bu \sim \unifS} \inr{f_r(\bu)}{\U_\rho^{r,s} H_d f_s(\bu)} \\
&= \sum_{d \ge 0} \lambda_d^{r,s} \E_{\bu \sim \unifS} \inr{f_r(\bu)}{H_d f_s(\bu)} \\
&= \sum_{d \ge 0} \lambda_d^{r,s} \E_{\bu \sim \unifS} \inr{H_d f_r(\bu)}{H_d f_s(\bu)}
\end{align*}
Recall that $\lambda^{r, s}_0 = 1$ and that $H_0 f_r$ is the constant
function $\E_{\bu \sim \unifS}[f_r(\bu)]$; then
\[
\E_{(\bu, \bv) \sim \normal_\rho^{r, s}} \langle f_r(\bu), f_s(\bv)\rangle
= \langle \E[f_r], \E[f_s]\rangle + \sum_{d \ge 1} \lambda_d^{r,s} \E_{\bu \sim \unifS} \inr{H_d f_r(\bu)}{H_d f_s(\bu)}.
\]
For the second term, Cauchy-Schwarz (twice) and~\eqref{eq:lambda1-bound} imply that
\begin{align}
\label{eq:same-dim-first-cs}
\sum_{d \ge 1} \lambda_d^{r,s} \E_{\bu \sim \unifS} \inr{H_d f_r(\bu)}{H_d f_s(\bu)}
& \ge -\sum_{d \ge 1} |\lambda_d^{r,s}| \sqrt{\E [\|H_d f_r\|^2] \E[\|H_d f_s\|^2]} \\
\label{eq:same-dim-lambda}
& \ge \lambda_1^{r,s} \sum_{d \ge 1} \sqrt{\E [\|H_d f_r\|^2] \E[\|H_d f_s\|^2]} \\
\notag
& \ge \lambda_1^{r,s} \sqrt{\sum_{d \ge 1} \E [\|H_d f_r\|^2] \cdot \sum_{d \ge 1} \E[\|H_d f_s\|^2]}
\end{align}
Finally, recalling that $\sum_{d \ge 0} \E [\|H_d f_r\|^2] = \E[\|f_r\|^2] \le 1$, we have
\[
    \sum_{d \ge 1} \lambda_d^{r,s} \E_{\bu \sim \unifS} \inr{H_d f_r(\bu)}{H_d f_s(\bu)} \ge \lambda_1^{r,s}.
\]
This completes the proof of the inequality.

The equality simply follows because if $f_r(u) = f_s(u) = u$
then all inequalities in this proof are equalities.
By the equality cases in~\eqref{eq:lambda1-bound},
we have equality in~\eqref{eq:same-dim-lambda} if and only if $f_r$ and $f_s$ are both affine functions:
$f_r(u) = \E[f_r] + M_r u$ for some $n \times n$
matrix $M_r$ and $f_s(u) = \E[f_s] + M_s u$ for some $n \times n$ matrix $M_s$.
Then we have equality in~\eqref{eq:same-dim-first-cs} if and only if $M_s$ is a
non-negative scalar multiple of $M_r$.
Because $f_r(u)$ takes values in $B^n$, we must have $\|M_r\|_{\textrm{op}} \le 1$ and $\|M_s\|_{\textrm{op}} \le 1$.
But in order to have equality in $\E[\|H_1 f_r\|^2] \le 1$, we must have $\|M_r\|_2^2 = 1$, and so $M_r$
is an orthogonal matrix. Similarly $M_s$ must be an orthogonal matrix, and since it is a non-negative multiple
of $M_r$ they must be equal. Finally, $\E[\|f_r\|^2] = \|\E[f_r]\|^2 + \E[\|H_1 f_r\|^2]  \le 1$,
and so if $\E[\|H_1 f_1\|^2]= 1$ then we must have $\E[f_r] = 0$; similarly for $\E[f_s]$.
\end{proof}

Now we prove \Cref{thm:n-dim-borell}.
At a high-level, the correction term in \Cref{lem:degree-0-and-1-lower-bound} showed that one can improve on the optimizer
for fixed~$r$ and~$s$ by using nonzero means $\E[f_r]$ and $\E[f_s]$.
What we will now show is that although this is true for fixed~$r$ and~$s$,
when averaged over random~$\br$ and~$\bs$ this correction term no longer helps.
In other words, we will show that $\E_{\br, \bs} \langle \E_{\bu}[f_{\br}(\bu)], \E_{\bu}[f_{\bs}(\bu)]\rangle$ is nonnegative,
and so it can only increase the average inner product.

\begin{proof}[Proof of \Cref{thm:n-dim-borell}]
Our goal is to lower-bound
\begin{equation}\label{eq:first-equation-in-big-proof}
\E_{\bx \sim_\rho \by} \langle f(\bx), f(\by)\rangle
 = \E_{\br, \bs} \E_{\bu, \bv} \langle f_{\br}(\bu), f_{\bs}(\bv)\rangle.
\end{equation}
Setting $g(r) = \E_{\bu} f_r(\bu)$,
\Cref{lem:degree-0-and-1-lower-bound} implies that this is at least
\begin{equation*}
\eqref{eq:first-equation-in-big-proof}
\geq \E_{\br, \bs}[\langle g(\br), g(\bs)\rangle + \lambda_1^{\br,\bs}]
= \E_{\br, \bs}\langle g(\br), g(\bs) \rangle + \E_{\br, \bs} \lambda_1^{\br,\bs}.
\end{equation*}
The second term is exactly the value of our conjectured optimizer via~\Cref{lem:degree-0-and-1-lower-bound}.
As a result, it suffices to show that the first term is nonnegative.
We will begin by rewriting it as
\begin{equation}\label{eq:first-rewrite}
    \E_{\bx \sim_\rho \by} \langle g(\Vert \bx \Vert), g(\Vert \by \Vert)\rangle.
\end{equation}
Consider the following method of drawing two $\rho$-correlated strings $\bx$ and $\by$:
first, sample $\bz, \bz', \bz'' \sim \normal(0,1)^n$.
Next, set
\begin{equation*}
\bx = \rho' \cdot\bz + \sqrt{1 - (\rho')^2} \cdot\bz',
\qquad
\by = - (\rho'\cdot \bz + \sqrt{1 - (\rho')^2} \cdot\bz''),
\end{equation*}
where $\rho' = \sqrt{-\rho}$.
Then conditioned on $\bz$,
$\bx$ and $-\by$ are independent and identically distributed random variables.
Hence, we can write
\begin{equation*}
\eqref{eq:first-rewrite}
= \E_{\bz} \E_{\bx, \by} \langle g(\Vert \bx \Vert),  g(\Vert \by \Vert)\rangle
= \E_{\bz} \langle \E_{\bx} g(\Vert \bx \Vert), \E_{\by}g(\Vert \by \Vert)\rangle
=\E_{\bz} \langle \E_{\bx} g(\Vert \bx \Vert), \E_{\by}g(\Vert -\by \Vert)\rangle.
\end{equation*}
Note that the last equality holds because $\Vert -\by \Vert=\Vert \by \Vert$.  For each~$\bz$, the two terms in the inner product are equal, and so this is nonnegative.
This completes the proof of the inequality.

To see the equality cases, recall that we applied the bound of \Cref{lem:degree-0-and-1-lower-bound}
for every $r$ and $s$. If equality is attained in the inequality, we must have equality
in \Cref{lem:degree-0-and-1-lower-bound} for almost every $r$ and $s$. It follows that the matrix $M$
of \Cref{lem:degree-0-and-1-lower-bound} must be independent of $r$ and $s$, and the claimed characterization
of equality cases follows.
\end{proof}

\subsection{The positive-$\rho$ case}
We assumed in this section that $\rho \le 0$. In the case $\rho > 0$, the Gaussian noise model induces
a spherical noise model of the form~\eqref{eq:Ug} with an \emph{increasing} function $g$.
By the results of Section~\ref{sec:spherical},
\begin{equation}\label{eq:positive-rho-eigen-inequality}
\lambda_1^{r,s} \ge |\lambda_d^{r,s}|
\end{equation}
for all $d \ge 2$, and so \Cref{lem:degree-0-and-1-lower-bound} may be extended
to the $\rho > 0$ case, with the opposite inequality. The problem comes from
the first term on the right hand side of \Cref{lem:degree-0-and-1-lower-bound};
this term has a non-negative sign, which is is our favor when $\rho < 0$
but against us when $\rho > 0$. It is possible that this
non-negative term is cancelled out by the difference between the two sides
of~\eqref{positive-rho-eigen-inequality}, but we were not able to show this.

\section{Dimension reduction}
\label{sec:dim-reduction}

We will eventually be concerned with $3$-dimensional assignments to points
which lie in a $n$-dimensional sphere, $S^{n-1}$. \Cref{thm:n-dim-borell} shows
if we are allowed $n$-dimensional assignments, $f_\mathrm{opt}$ minimizes the
noise stability for negative $\rho$. In this section, we will show that the optimization
over $k$-dimensional assignments (for $k \le n$) reduces to the optimization
over $n$-dimensional assignments, \emph{but only for non-negative $\rho$}.
We do this by showing that optimally stable functions are ``at most $k$-dimensional,'' in the
sense that they can be defined on $\R^k$ and not on $\R^n$.

We say that $f: \R^n \to B^k$ is optimally stable with parameter $\rho \in [0, 1]$ if
    $\rho > 0$ and $\E_{\bx \sim_\rho \by}[\inr{f(\bx)}{f(\by)}]$ is maximal among all
        functions $f: \R^n \to B^k$ with $\E_{\bx}[f(\bx)] = 0$.

\begin{theorem}\label{thm:dimension-reduction}
    For every $n, k \ge 1$ and every $\rho \in [0, 1]$, there is an optimally stable function $f$.
    Moreover, if $k \le n$ and $\rho \in (0, 1)$ then for every optimally stable function $f$,
    after a change of coordinates on $\R^n$,
    $f(x)$ depends only on $x_1, \dots, x_k$.
\end{theorem}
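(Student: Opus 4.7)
\emph{Existence.} The plan is to apply a standard weak-compactness argument in $L^2(\gamma;\R^k)$. Let $\mathcal{F} = \{ f \in L^2(\R^n,\gamma;\R^k) : \|f(x)\|\le 1 \text{ a.e.},\ \E_\gamma[f]=0\}$. This set is convex, $L^2$-bounded, and weakly closed (the mean-zero constraint is linear, and the pointwise bound $\|f\|\le 1$ is preserved by weak limits up to a.e.\ subsequential agreement), hence weakly compact by Banach--Alaoglu. The functional $F(f)=\stab_\rho[f]=\inr{f}{\U_\rho f}$ is weakly continuous on $\mathcal{F}$ when $\rho\in[0,1)$: by \Cref{def:noise-operator} and the Hermite decomposition, $\U_\rho$ has eigenvalues $\rho^d\to 0$ and is thus compact, so $f_n\rightharpoonup f$ implies $\U_\rho f_n\to\U_\rho f$ in $L^2$, and hence $\inr{f_n}{\U_\rho f_n}\to\inr{f}{\U_\rho f}$. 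For $\rho=1$, $F(f)=\|f\|_2^2$ and the maximum $1$ is attained by any $S^{k-1}$-valued $f$ with $\E f=0$. Either way, a maximizer exists.

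\emph{Reduction to the sphere.} Fix $\rho\in(0,1)$ and $k\le n$, and let $f$ be optimally stable. The first step is to show that $\|f(x)\|=1$ almost surely, so that we may regard $f:\R^n\to S^{k-1}$. The Euler--Lagrange condition at the optimum produces a nonnegative Lagrange multiplier $\lambda(x)$ for the constraint $\|f\|\le 1$ and a vector multiplier $c\in\R^k$ for $\E f=0$, with $\U_\rho f(x)=\lambda(x) f(x)+c$ and complementary slackness $\lambda(x)(1-\|f(x)\|^2)=0$. If $E=\{\|f\|<1\}$ has positive measure, then $\lambda\equiv 0$ on $E$, so $\U_\rho f\equiv c$ on $E$; but $\U_\rho f$ is real-analytic (convolution against a Gaussian kernel), so $\U_\rho f\equiv c$ everywhere, giving $\stab_\rho[f]=\inr{\E f}{c}=0$. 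This contradicts optimality, since for $\rho>0$ the function $f_{\mathrm{opt}}$ has strictly positive stability. So $\|f\|=1$ a.e.

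\emph{Dimension reduction via second variation.} On the sphere, perturb by tangent fields $\phi:\R^n\to\R^k$ with $\inr{\phi(x)}{f(x)}=0$ and $\E\phi=0$, using the spherical geodesic $f_t(x)=\cos(t\|\phi\|)f(x)+\sin(t\|\phi\|)\phi(x)/\|\phi(x)\|$. A direct computation, combined with the Euler--Lagrange relation, reduces the second-variation inequality $\tfrac{d^2}{dt^2}\big|_0\stab_\rho[f_t]\le 0$ to
\[
    \E\bigl[\inr{\phi}{\U_\rho\phi}\bigr]\;\leq\;\E\bigl[\|\phi\|^2\,\inr{f}{\U_\rho f}\bigr].
\]
I plan to exploit this by following the calculus-of-variations strategy of McGonagle--Ross, Milman--Neeman, and Heilman--Tarter~\cite{HT20}: the generators $X_{ij}=x_i\partial_j-x_j\partial_i$ of rotations on $\R^n$ produce tangent fields $X_{ij}f$ (since $\inr{X_{ij}f}{f}=\tfrac12 X_{ij}\|f\|^2=0$), and because $X_{ij}$ commutes with $\U_\rho$ one obtains $\U_\rho(X_{ij}f)=(X_{ij}\lambda)\,f+\lambda(X_{ij}f)$. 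Combining these rotational tests with constant fields $w\in\R^k$ projected onto $T_{f(x)}S^{k-1}$ and with smooth scalar profiles yields, after careful bookkeeping, an $n\times n$ positive-semidefinite bilinear form of rank at most $k$ whose kernel is an $(n-k)$-dimensional subspace $W\subset\R^n$ along which $Df$ vanishes identically. Choosing coordinates so that $W=\mathrm{span}(e_{k+1},\ldots,e_n)$ gives the claim.

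\emph{Main obstacle.} The technical core is this last second-variation argument: assembling the right family of $\phi$'s, tracking the cancellations coming from $\U_\rho f=\lambda f+c$ and $[X_{ij},\U_\rho]=0$, and concluding that a \emph{single} $(n-k)$-dimensional subspace lies in $\ker Df(x)$ for a.e.\ $x$. The inequality produced by the second variation depends on the sign of $\U_\rho$-pairings, which is precisely why this step only works for $\rho>0$: for $\rho<0$ the inequality reverses in the relevant spectral components and the rigidity argument breaks down, leaving the conjecture open in its full form.
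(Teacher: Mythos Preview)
Your existence argument is essentially the paper's: weak compactness of the constraint set together with compactness of $\U_\rho$ (eigenvalues $\rho^d \to 0$) gives continuity of the objective along weakly convergent sequences. Fine.

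The dimension-reduction step, however, diverges from the paper in a way that leaves a genuine gap. The paper does \emph{not} use rotational generators $X_{ij}$ or tangent fields on $S^{k-1}$; it uses \emph{spatial translations} $f_t(x)=f(x-tw)$ for constant $w\in\R^n$. The structure of the argument is then: the map $L(w)=\left.\tfrac{d}{dt}\right|_{t=0}\E[f_t]\in\R^k$ is linear in $w$, so $\ker L$ has dimension at least $n-k$; for $w\in\ker L$ the second-variation (index form) $Q(w)$ must be $\le 0$ by optimality; but a direct computation using the first-order condition $|\U_\rho f-\lambda/2|\,f=\U_\rho f-\lambda/2$ collapses $Q(w)$ to $2\tfrac{\rho-1}{\rho}\sum_i\E[f_i\,\div_\gamma((D_w\U_\rho f_i)w)]$, and this last expectation is always $\le 0$ with equality iff $f$ is independent of $w$ (this uses $D_w\U_\rho=\rho\,\U_\rho D_w$). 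Hence $Q(w)\ge 0$, forcing $Q(w)=0$ and $D_w f\equiv 0$ for every $w\in\ker L$. Your rotational fields $X_{ij}f$ do commute with $\U_\rho$, but plugging them into the second-variation inequality produces a near-tautology ($\E[\lambda\|X_{ij}f\|^2]\le\E[\lambda\|X_{ij}f\|^2]$ after using $\U_\rho f=\lambda f+c$), not dimension information; your sentence ``after careful bookkeeping \ldots\ a rank $\le k$ form'' is where the actual argument would have to live, and it is not supplied. The constant output fields $w\in\R^k$ you mention are also not mean-zero after projection, so they do not fit your own framework without further correction.

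Two smaller points. First, your second-variation inequality has the wrong right-hand side: the Lagrangian term contributes, and the correct weight is $|\U_\rho f-c|$, not $\langle f,\U_\rho f\rangle$ (they differ by $\langle f,c\rangle$). Second, you write $Df$ and $X_{ij}f$ as if $f$ were differentiable, but an optimally stable $f$ need not be. The paper devotes substantial effort to this: all derivatives are pushed onto $\U_\rho f$ (always smooth) via integrated-by-parts identities, and the non-smooth case is recovered by uniform approximation lemmas. Your ``reduction to the sphere'' via analyticity of $\U_\rho f$ is a nice observation, but it does not by itself grant $f$ any regularity.
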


Let's address the existence part first, because it's easier.
\begin{proof}[Proof of existence in Theorem~\ref{thm:dimension-reduction}]
    When $\rho \in \{0, 1\}$, existence is trivial because every function is optimally stable;
    from now on, assume $\rho \in (0, 1)$.

    Choose an optimizing sequence $f_n$, i.e.\ a sequence of functions $f_n: \R^n \to B^k$
    such that $\E[f_n] = 0$ and $\E[\inr{f(\bx)}{f(\by)}]$ converges to the optimal value.
    Since $f_n$ are uniformly bounded in $L^2(\gamma)$,
    after passing to a subsequence we may assume that $f_n$ converges weakly to, say, $f$.
    By testing weak convergence against a constant function, it follows that $\E[f] = 0$.

    Recall that the noise operator $\U_\rho: L^2(\gamma) \to L^2(\gamma)$ is compact -- for example, because
    it acts diagonally on the Hermite basis, with eigenvalues that converge to zero.
    It follows that $\U_\rho f_n$ converges strongly in $L^2(\gamma)$ to $\U_\rho f$ and hence
    \begin{align*}
        \E_{\bx \sim_\rho \by} [\inr{f_n(\bx)}{f_n(\by)}]
        &= \E_{\bx} [\inr{f_n(\bx)}{\U_\rho f_n(\bx)}] \\
        &= \E[\inr{f_n}{\U_\rho f_n - \U_\rho f}] + \E[\inr{f_n}{\U_\rho f}] \\
        &\to \E[\inr{f}{\U_\rho f}],
    \end{align*}
    where the first term converged to zero because $\|f_n\|$ is bounded and $\|\U_\rho f_n - \U_\rho f\| \to 0$,
    and the second term converged to $\E[\inr{f}{\U_\rho f}]$ by the weak convergence of $f_n$.
    Since $\E[\inr{f}{\U_\rho f}] = \E_{\bx \sim_\rho \by} [\inr{f(\bx)}{f(\by)]}$, the limit function $f$
    is optimally stable.
\end{proof}

For a differentiable function $f: \R^n \to \R^k$, we write $Df(x)$ for the $k \times n$ matrix
of partial derivatives at the point $x \in \R^n$. For $v \in \R^n$, we will write $D_v f(x) \in \R^k$
for the directional derivative of $f$ in the direction $v$. Of course, $D_v f(x)$ is just an abbreviation
for $(D f(x)) \cdot v$.

\subsection{Outline of the dimension reduction}

The main idea behind the proof of Theorem~\ref{thm:dimension-reduction} is perturbative: we
show that if the function depends on more than $k$ coordinates, there is a perturbation
$\tilde f$ of $f$ that satisfies $\E[\tilde f] = 0$ but has a better noise stability.
We will consider two families of perturbations: ``value'' perturbations of the
form $\tilde f(x) = f(x) + \epsilon \psi(x) + o(\epsilon)$, and ``spatial''
perturbations of the form $\tilde f(x) = f(x + \epsilon \Psi(x) + o(\epsilon))$; our final perturbation will
be a combination of these.

The perturbation $\tilde f$ will never be written down very explicitly. In most
of our analysis, we will rather consider a one-parameter family $f_\epsilon$ of
perturbations, and we will establish the \emph{existence} of a good perturbation
by studying the derivatives of $f_\epsilon$ at $\epsilon = 0$.

There are many technical details, partly because we are considering an infinite-dimensional
optimization problem (over all $f: \R^n \to B^k$) and partly because the a priori the optimal
functions could be almost arbitrarily nasty. However, most of our arguments have simple analogues
for finite-dimensional constrained optimization. In particular, suppose that we are trying
to maximize a differentiable function $\psi: \R^m \to \R$ while obeying the constraint
$g(x) = 0$, for a differentiable $g: \R^m \to \R^k$. Classical Lagrangian
theory for this problem implies that if $x_0 \in \R^m$ is a maximizer and $D g(x_0)$ has rank $k$ then
there is some $\lambda \in \R^k$ such that $D \psi(x_0) = \lambda^T Dg(x_0)$: if this were not the case,
there would be a curve $c: [-\delta, \delta] \to \R^n$ with $c(0) = x_0$,
$g(c(t)) \equiv 0$,
and $\left.\frac{d}{dt}\right|_{t=0} \psi(c(t)) \ne 0$, contradicting the maximality of $x_0$.

The classical theory extends to second-order (at least, if $\psi$ and $g$ are twice-differentiable): if
$x_0$ is a maximizer and $D g(x_0)$ has rank $k$
then the matrix $D^2 \psi - \sum_i \lambda_i D^2 g_i$ acts negatively
on the kernel of $D g(x_0)$ (where $\lambda = (\lambda_1, \dots, \lambda_k)$ is the one whose existence
was guaranteed by the first-order theory).
This is essentially the constrained-optimization analogue of the statement that a function has
a negative-semidefinite Hessian at a maximizer, and it can be proven by showing that if
it fails to hold then there is a curve $c: [-\delta, \delta] \to \R^n$ with $c(0) = c_0$,
$g(c(t)) \equiv 0$,
and $\left.\frac{d^2}{dt^2}\right|_{t=0} \psi(c(t)) > 0$, contradicting the maximality of $x_0$.

To prove Theorem~\ref{thm:dimension-reduction}, we first find analogues of the first- and second-order
variational principles above. For the first-order conditions, we show (Lemma~\ref{lem:first-variation-lagrangian})
that there exists $\lambda \in \R^k$ such that
\begin{equation}\label{eq:first-order-condition}
    |\U_\rho f - \lambda/2| f = \U_\rho f - \lambda/2.
\end{equation}
For the second-order conditions, we show that for the same $\lambda$ and for any nice
enough vector field $\Psi: \R^n \to \R^n$ satisfying $\E [D_{\Psi(\bx)} f(\bx)] = 0$,
\begin{equation}\label{eq:second-variation-for-smooth}
    \E_{\bx \sim_\rho \by} [\inr{D_{\Psi(\bx)} f(\bx)}{D_{\Psi(\by)} f(\by)}]
    - \E_\bx[|\U_\rho f - \lambda/2| \cdot |D_{\Psi(\bx)} f(\bx)|^2] \le 0.
\end{equation}
Note that the expression above is a quadratic function of the vector field $\Psi$, which
can be though of as a ``direction'' along which we perturb $f$. In particular, our second-order
condition really says -- as in the finite-dimensional case -- that a certain quadratic form
acts non-positively on a certain subspace.

Finally, we test~\eqref{eq:second-variation-for-smooth} by substituting constant vector fields
$\Psi(x) \equiv v \in \R^n$, and show that either
\[
    \E_{\bx \sim_\rho \by} [\inr{D_{v} f(\bx)}{D_{v} f(\by)}]
    - \E_\bx[|\U_\rho f - \lambda/2| \cdot |D_{v} f(\bx)|^2] > 0
\]
or $D_v f \equiv 0$. Hence, for every $v \in \R^n$, $\E[D_v f(\bx)] = 0$ implies $D_v f \equiv 0$.
The function $v \mapsto \E[D_v f(\bx)]$ is linear, so if $W \subset \R^n$ is its kernel then $W$ has
codimension at least $k$. After applying a change of variables so that $\mathrm{span} \{e_1, \dots, e_k\} \subseteq W^\perp$ 
the fact that $D_v f \equiv 0$ for $v \in W$ implies that $f$ is a function only of $x_1, \dots, x_k$.

\subsection{Technicalities}

One problem with the outline above is that we wrote ``$D_v f$'' several times, but no one
told us that the optimal function $f$ was differentiable.

We get around this difficulty by exploiting the ``smoothness'' of our objectives and constraints.
For example, we don't care so much about the derivatives of $f$ as we do about how $\E[f_\epsilon]$
changes as we vary $\epsilon$. But $\E[f_\epsilon]$ has as many derivatives (in $\epsilon$) as we wish,
because we may write $\E[f_\epsilon]= \int f(x + \epsilon \Psi(x) + o(\epsilon)) \frac{d\gamma}{dx}\, dx$
and then use a change of variables to pass the spatial perturbation onto the (very smooth)
Gaussian density. Organizing the computations with this explicit change of variables is tedious,
so what we actually do is to first derive our perturbative formulas for smooth
functions $f$, then integrate by parts to push the derivatives onto $\frac{d\gamma}{dx}$. We then
get formulas that make sense for non-smooth $f$; we show that they actually hold for non-smooth $f$
by taking smooth approximations.

The rest of this section is about the integration-by-parts formulas and uniform approximations
that make everything go through rigorously. In particular, we prove several non-smooth analogues of
statements that are trivial for differentiable functions.

For a $\calC^1$ vector field $W$, define
\[
    \div_\gamma W(x) = \div W(x) - \inr{W(x)}{x}.
\]
Note that this satisfies the product rule $\div_\gamma (fW) = f \div_\gamma W + \nabla_W f$ for $\calC^1$
functions $f: \R^n \to \R$.
The point of this definition is the formula
\[
    \int \div_\gamma W \, d\gamma = 0
\]
for compactly supported $W$. Using the product rule, this is equivalent to
\begin{equation}\label{eq:derivative}
    \int f \div_\gamma W\, d\gamma = - \int \nabla_W f \, d\gamma
\end{equation}
for compactly supported $W$ and/or $f$. Now think of the left hand side
as \emph{defining} the derivative of $f$ in a weak sense, noting that the left
hand side makes sense for non-smooth $f$.

Because of the way~\eqref{eq:derivative} expresses derivatives of $f$ in
terms of derivatives of $W$, we will need to impose regularity on the vector
fields $W$ that we consider.
\begin{definition}
    A vector field $W$ is \emph{tame} if it's bounded, $\calC^\infty$-smooth, and if
    its derivatives of all orders are bounded.
\end{definition}

Next, we define our spatial perturbations and our main tool for approximating it by smooth functions:
let $W: \R^n \to \R^n$ be a tame vector field
and let $\{F_t: t \in \R\}$ be the flow
along $W$, defined as the unique function satisfying $F_0(x) = x$ and
\[
    \frac{dF_t(x)}{dt} = W(F_t(x))
\]
for all $t \in \R$ and $x \in \R^n$. Then $F_t$ is a $\calC^\infty$ diffeomorphism for all $t$.
Given an optimal function $f: \R^n \to \R^k$, we may consider the competitor function
$\spatial tWf$ given by
\[
    (\spatial tWf)(x) = f(F_t^{-1}(x)).
\]

It is well-known that
functions in $L^2(\gamma)$ can be approximated (for example, by truncating and mollifying)
using smooth functions. The point here is that we can do this approximation in such
a way that it also applies \emph{uniformly in $t$} to the spatial perturbations $\spatial tWf$.

\begin{lemma}\label{lem:uniform-approximation}
    If $f: \R^n \to \R^k$ is bounded and $W$ is tame then there is a sequence uniformly bounded functions $f_n \in \calC^\infty_c$
    such that
    \[
        \sup_{t \in [-1, 1]} \|\spatial tWf - \spatial tW{f_n}\|_{L_2(\gamma)} \to 0.
    \]
\end{lemma}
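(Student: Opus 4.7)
The plan is to reduce the problem to finding approximants in a single weighted $L^2$ space that dominates the family of pullback measures $(\spatial tWf)_*\gamma$ as $t$ ranges over $[-1,1]$. Concretely, for each $t \in [-1,1]$, change variables via $y = F_t^{-1}(x)$, so that
\begin{equation*}
\|\spatial tWf - \spatial tWf_n\|_{L^2(\gamma)}^2 = \int_{\R^n} |f(y)-f_n(y)|^2 \, d\mu_t(y),
\qquad
d\mu_t(y) := \varphi(F_t(y))\,|\det DF_t(y)|\,dy,
\end{equation*}
where $\varphi$ is the standard Gaussian density. So it suffices to find $\calC_c^\infty$ functions $f_n$, with $\|f_n\|_\infty \le \|f\|_\infty$, that approximate $f$ in $L^2(\mu^\ast)$ for a single finite measure $\mu^\ast$ dominating all the $\mu_t$.

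To construct $\mu^\ast$, I will use that $W$ is tame. First, since $\|W\|_\infty < \infty$, Gr\"onwall (or just direct integration of $\dot F_t = W(F_t)$) gives $|F_t(y) - y| \le \|W\|_\infty |t|$, hence $|F_t(y)|^2 \ge |y|^2 - 2\|W\|_\infty |y|$ for $|t|\le 1$, so $\varphi(F_t(y)) \le C_1\exp(-|y|^2/2 + C_2 |y|)$. Second, Jacobi's formula gives $\frac{d}{dt}\log|\det DF_t(y)| = (\div W)(F_t(y))$, and since $\div W$ is bounded we get $|\det DF_t(y)| \le e^{\|\div W\|_\infty}$. Combining these yields a uniform pointwise bound $d\mu_t \le d\mu^\ast$ for $t\in[-1,1]$, where
\begin{equation*}
d\mu^\ast(y) := C_3 \exp\!\bigl(-|y|^2/2 + C_2 |y|\bigr)\,dy,
\end{equation*}
which is a finite Borel measure on $\R^n$ with a smooth, strictly positive, rapidly decaying density.

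Given the dominating $\mu^\ast$, the approximation step is routine. Let $\chi_R \in \calC_c^\infty$ be a smooth cutoff with $\chi_R \equiv 1$ on $B(0,R)$, $\mathrm{supp}\,\chi_R \subseteq B(0,R+1)$, and $0 \le \chi_R \le 1$. Since $f$ is bounded and $\mu^\ast$ is finite, dominated convergence gives $\|(1-\chi_R)f\|_{L^2(\mu^\ast)} \to 0$ as $R\to\infty$. Then, for a standard mollifier $\phi_\epsilon$, the function $\phi_\epsilon \ast (\chi_R f)$ lies in $\calC_c^\infty$, satisfies $\|\phi_\epsilon \ast (\chi_R f)\|_\infty \le \|f\|_\infty$, and converges to $\chi_R f$ in $L^2(\mathrm{Leb}|_{\mathrm{supp}\,\chi_{R+1}})$, hence in $L^2(\mu^\ast)$ because $\mu^\ast$ has a bounded density on compact sets. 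A diagonal choice $f_n := \phi_{\epsilon_n}\ast(\chi_{R_n}f)$ with $R_n\to\infty$ and $\epsilon_n\to 0$ suitably gives $\|f-f_n\|_{L^2(\mu^\ast)}\to 0$ and $\|f_n\|_\infty \le \|f\|_\infty$, which combined with the change-of-variables identity and the domination $\mu_t \le \mu^\ast$ yields the claim.

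The only place needing care, and the main obstacle, is the uniform-in-$t$ control of $\mu_t$: one must verify that the flow of a tame vector field produces pullback Gaussian densities that can be dominated by a \emph{single} finite measure independently of $t$. This is where the boundedness of $W$ (for the Gaussian weight) and the boundedness of $\div W$ (for the Jacobian) are both essential; tameness is what makes both available simultaneously. Everything else is a standard truncation-plus-mollification argument, and the bounded $L^\infty$ norm of the approximants follows automatically from the construction.
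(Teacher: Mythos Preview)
Your proposal is correct and follows essentially the same approach as the paper: change variables to push the $t$-dependence onto the measure, dominate the family $\{\mu_t\}_{t\in[-1,1]}$ by a single finite measure (the paper defines $\tilde\phi(x)=\sup_{t\in[-1,1]}|DF_t(x)|\phi(F_t(x))$ rather than writing down an explicit majorant, but the integrability argument is the same), and then approximate once in $L^2$ of that measure. Your write-up is in fact more detailed than the paper's---you make the Jacobian bound explicit via Jacobi's formula and spell out the truncation-plus-mollification construction, whereas the paper just asserts that such an approximating sequence exists.
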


\begin{proof}
    Using a change of variables, we can write
    \[
        \|\spatial tWg\|_{L_2(\gamma)}^2
        = \int \|g(x)\|^2 |D F_t(x)| \phi(F_t(x))\, dx,
    \]
    where $|D F_t|$ denotes the Jacobian determinant of $F_t$.
    Now define $\tilde \phi(x) = \sup_{t \in [-1, 1]} |D F_t(x)| \phi(F_t(x))$. Then
    $\tilde \phi$ is integrable: $|D F_t(x)|$ is uniformly bounded for $t \in [-1, 1]$;
    also $|F_t(x) - x|$ is uniformly bounded and so
    \[
        \phi(F_t(x)) \le C \exp(-((|x| - C)_+)^2/2)
    \]
    for some $C$, which is integrable.
    Define the finite measure $d \tilde \gamma = \tilde \phi \, dx$; note that our definition of
    $\tilde \phi$ ensures that
    \[
        \|\spatial tWg\|_{L_2(\gamma)} \le \|g\|_{L_2(\tilde \gamma)}
    \]
    for every $t \in [-1, 1]$.

    Finally, take a uniformly bounded sequence of functions $f_n \in \calC^\infty_c$ such that
    $f_n \to f$ in $L^2(\tilde \gamma)$. Then the claim follows, because
    \[
        \sup_{t \in [-1, 1]} \|\spatial tWf - \spatial tW{f_n}\|_{L_2(\gamma)}
        =
        \sup_{t \in [-1, 1]} \|\spatial tW{(f - f_n)}\|_{L_2(\gamma)}
        \le
        \|f - f_n\|_{L_2(\tilde \gamma)}.
    \]
\end{proof}

If $f: \R^n \to S^{k-1}$ is differentiable
then for any $x, v \in \R^n$, $D_v f(x)$ is tangent to $S^{k-1}$ at $f(x)$; or in other words,
$\inr{D_v f}{f} \equiv 0$. Here is an analogue for certain non-smooth $f$.

\begin{lemma}\label{lem:tangential-derivative}
    Suppose $f: \R^n \to B^k$ is measurable and $\psi: \R^n \to [0, \infty)$ is a bounded, Lipschitz function
    that is differentiable on $\{\psi > 0\}$.
    Assume that $f(x) \in S^{k-1}$ whenever $\psi(x) > 0$, and that $\psi f$ has a uniformly bounded derivative. Then, for every tame vector field $W$,
    \[
        \sum_i \int f_i \div_\gamma (\psi f_i W) \, d\gamma = 0.
    \]
\end{lemma}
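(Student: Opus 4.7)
\medskip

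\noindent\textbf{Proof proposal.} The plan is to establish the pointwise (almost everywhere) identity
\[
    \sum_i f_i \div_\gamma(\psi f_i W) = \div_\gamma(\psi W)
\]
and then integrate, using the standard fact that $\int \div_\gamma V \, d\gamma = 0$ for any bounded Lipschitz vector field $V$ with sufficient control (which applies to $V = \psi W$, since $\psi$ is bounded Lipschitz and $W$ is tame). Once the pointwise identity is proved, the lemma is immediate.

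To derive the identity, first note that $\psi f_i W$ is the product of a bounded Lipschitz scalar function ($\psi f_i$, by hypothesis) and a tame vector field, hence Lipschitz with uniformly bounded derivative; by Rademacher it admits a classical gradient almost everywhere, and the product rule gives, a.e.,
\[
    \div_\gamma(\psi f_i W) = \psi f_i \div_\gamma W + \inr{W}{\nabla(\psi f_i)}.
\]
Multiplying by $f_i$ and summing,
\[
    \sum_i f_i \div_\gamma(\psi f_i W) = \psi\, |f|^2\, \div_\gamma W + \inr{W}{\textstyle\sum_i f_i \nabla(\psi f_i)}.
\]
The first step is to observe that $\psi |f|^2 = \psi$ everywhere: on $U := \{\psi > 0\}$ we have $|f|^2 = 1$ by hypothesis, and on $U^c$ the factor $\psi$ vanishes. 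The second step is to show $\sum_i f_i \nabla(\psi f_i) = \nabla \psi$ almost everywhere. On $U$, $\psi$ and (consequently) $f = (\psi f)/\psi$ are classically differentiable, so $\nabla(\psi f_i) = f_i \nabla \psi + \psi \nabla f_i$, and
\[
    \sum_i f_i \nabla(\psi f_i) = |f|^2 \nabla \psi + \tfrac{\psi}{2} \nabla|f|^2 = \nabla \psi,
\]
since $|f|^2 \equiv 1$ on the open set $U$. On $U^c$, both sides vanish a.e.: at a Lebesgue density point $x$ of $U^c$ where $\psi$ is differentiable, $\nabla \psi(x) = 0$ (because $\psi \ge 0$ attains its minimum on a set of full density at $x$), and similarly $\nabla(\psi f_i)(x) = 0$ whenever $\psi f_i$ is differentiable at such a point, since $|\psi f_i| \le \psi$ vanishes on a set of full density. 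Combining the two steps yields the desired identity $\sum_i f_i \div_\gamma(\psi f_i W) = \psi \div_\gamma W + \inr{W}{\nabla \psi} = \div_\gamma(\psi W)$ a.e.

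The main obstacle is the second half of the density-point argument on $U^c$: one needs to know that $\nabla(\psi f_i)$ vanishes a.e.\ on $\{\psi = 0\}$ despite the product $\psi f_i$ only being assumed Lipschitz (not $\calC^1$). This is where the hypothesis that $\psi f$ has a uniformly bounded derivative (rather than merely being bounded) is used, via Rademacher's theorem together with the fact that a Lipschitz function vanishing on a set of Lebesgue density one at $x$ has zero derivative at $x$ (when the derivative exists). With this in hand, integrating the a.e.\ identity against $\gamma$ and invoking the Gaussian divergence theorem for $\psi W$ (justified by smooth compactly supported approximation and dominated convergence, using that $\psi W$ is bounded Lipschitz and $\gamma$ has Gaussian decay) completes the proof.
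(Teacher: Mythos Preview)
Your proof is correct and takes a genuinely different route from the paper's. The paper does not attempt a pointwise identity; instead it builds explicit $\calC^1$ approximations $f^\epsilon = (\psi/\psi^\epsilon) f$ with $\psi^\epsilon$ a smoothed-out version of $\psi$ bounded below by $\epsilon/2$. These approximants equal $f$ on $\{\psi \ge \epsilon\}$, where $|f^\epsilon|=1$ forces $\inr{D_W f^\epsilon}{f}=0$; integrating by parts against the smooth $f^\epsilon$ reduces the sum to an integral over $\{0<\psi<\epsilon\}$, which is then shown to vanish as $\epsilon\to 0$ using the (crude) derivative bound $|D_v f^\epsilon|\le 4C^2/\epsilon$ balanced against the factor $\psi \le \epsilon$ in the integrand. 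Your argument bypasses all of this by invoking Rademacher's theorem and the standard fact that a Lipschitz function has vanishing gradient a.e.\ on any level set, which cleanly handles both $\nabla\psi$ and $\nabla(\psi f_i)$ on $\{\psi=0\}$; the result is the pointwise identity $\sum_i f_i \div_\gamma(\psi f_i W)=\div_\gamma(\psi W)$, after which the Gaussian divergence theorem (extended from compact support by the cutoff-and-dominated-convergence argument you outline) finishes. Your approach is shorter and more structural; the paper's is more self-contained in that it avoids citing the Lipschitz level-set lemma and keeps everything at the level of explicit $\calC^1$ calculus.
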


(To see why this is an analogue of the easy fact above, note that integration by parts shows that
the left hand side is $-\int \psi \inr{D_W f}{f}\, d\gamma$ in the case of smooth $f$.)

\begin{proof}
    Since $\div_\gamma(\psi f_i W)$ is integrable, by the dominated convergence theorem it suffices to find
    uniformly bounded functions $f^\epsilon$ converging pointwise to $f$ such that
    \begin{equation}\label{eq:tangential-derivative-goal}
        \lim_{\epsilon \to 0}
        \sum_i \int f_i^\epsilon \div_\gamma (\psi f_i W) \, d\gamma = 0.
    \end{equation}
    Fix a constant $C \ge 1$ large enough to be larger than the the uniform bound on $\psi$ and the Lipschitz constants of both $\psi$ and $\psi f$.

    For $\epsilon > 0$, let $\eta^\epsilon: [0, \infty) \to [0, \infty)$ be a $\calC^1$ function satisfying
    \begin{itemize}
        \item $\eta^\epsilon(s) = s$ for $s \ge \epsilon$,
        \item $\eta^\epsilon(s) \ge \frac{\epsilon}{2}$ for all $s$,
        \item $(\eta^\epsilon)'(s) \le 1$ for all $s$,
        \item $(\eta^\epsilon)'(0) = 0$.
    \end{itemize}
    Now define $\psi^\epsilon = \eta^\epsilon \circ \psi$ and $f^\epsilon = \frac{\psi}{\psi^\epsilon} f$. Note
    that $f^\epsilon = f$ whenever $\psi \ge \epsilon$. Moreover, $\psi^\epsilon$ is $\calC^1$
    with
    \[
        |D_v \psi^\epsilon| \le |(\eta^\epsilon)' \circ \psi| |D_v \psi| \le C
    \]
    for any unit vector $v$.
    Since $\eta^\epsilon \ge \frac{\epsilon}{2}$, it follows that $f^\epsilon$ is $\calC^1$ with
    \begin{equation}\label{eq:tangential-derivative-bound}
        |D_v f^\epsilon| \le \frac{2}{\epsilon} |D_v (\psi f)| + |\psi f| \frac{|D_v \psi^\epsilon|}{\epsilon/2} \le \frac{4C^2}{\epsilon}
    \end{equation}
    for every unit vector $v$.

    Since $f^\epsilon = f$ whenever $\psi \ge \epsilon$, we have $|f^\epsilon(x)| = 1$ on $\{\psi \ge \epsilon\}$.
    It follows then that $\inr{D_v f^\epsilon(x)}{f(x)} \equiv 0$ on $\{\psi \ge \epsilon\}$.
    Therefore,
    \begin{align*}
        \sum_i \int f_i^\epsilon \div_\gamma (\psi f_i W) \, d\gamma
        &= -\int \inr{D_W f^\epsilon}{\psi f} \, d\gamma \\
        &= -\int_{\{\psi < \epsilon\}} \inr{D_W f^\epsilon}{\psi f} \, d\gamma.
    \end{align*}
    By~\eqref{eq:tangential-derivative-bound},
    \[
        \left|\int_{\{\psi < \epsilon\}} \inr{D_W f^\epsilon}{\psi f} \, d\gamma\right|
        \le 4 C^2 \int_{\{\psi < \epsilon\}} \frac{\psi |W|}{\epsilon} \, d\gamma
        \le 4 C^2 \int_{\{0 < \psi < \epsilon\}} |W| \, d\gamma.
    \]
    Since $|W|$ is uniformly bounded, the final bound converges to zero as $\epsilon \to 0$.
    This establishes~\eqref{eq:tangential-derivative-goal} and thus completes the proof.
\end{proof}

Here's a simple bound on the derivatives of $\U_\rho g$ for any $L^2$ function $g$.
\begin{lemma}\label{lem:gradient-bound}
    For any $g \in L^2(\gamma)$ and any $-1 < \rho < 1$, $\U_\rho g$ is $\calC^\infty$ smooth and satisfies
    \[
        \E[\|\nabla^k \U_\rho g\|^2] \le C(\rho, k) \E [g^2]
    \]
    for some constant $C(\rho, k) < \infty$, where $\|\nabla^k g\|_2^2$
    denotes the sum of squares of all $k$th order partial derivatives of $g$.
\end{lemma}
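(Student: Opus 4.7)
The plan is to prove the two claims -- smoothness and the $L^2$ bound -- by largely independent arguments: smoothness via the explicit integral kernel representation of $\U_\rho$, and the bound via the Hermite expansion, which diagonalizes $\U_\rho$ and interacts cleanly with differentiation.

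For smoothness, I would start from the identity
\[
\U_\rho g(x) = \int_{\R^n} g(y)\, K_\rho(x,y)\, dy,\qquad K_\rho(x,y) = \frac{\exp\!\left(-|y-\rho x|^2/(2(1-\rho^2))\right)}{(2\pi(1-\rho^2))^{n/2}},
\]
which holds because the conditional law of $\by$ given $\bx = x$ under $\bx \sim_\rho \by$ is Gaussian with mean $\rho x$ and covariance $(1-\rho^2) I$. Every $x$-derivative of $K_\rho(x,y)$ is a polynomial in $(y-\rho x)$ times $K_\rho(x,y)$, and Cauchy-Schwarz against $\gamma$ bounds $\int |g(y)\,\partial_x^\beta K_\rho(x,y)|\, dy$ by $\|g\|_{L^2(\gamma)}$ times a factor that is locally bounded in $x$. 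Dominated convergence therefore justifies differentiation under the integral sign to all orders, showing $\U_\rho g \in \calC^\infty(\R^n)$.

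For the $L^2$ estimate, I would expand $g = \sum_\alpha \hat g(\alpha)\, h_\alpha$ in the orthonormal multivariate normalized Hermite basis. The two facts I need are $\U_\rho h_\alpha = \rho^{|\alpha|} h_\alpha$ and the derivative rule $\partial^\beta h_\alpha = \bigl(\prod_i \alpha_i!/(\alpha_i-\beta_i)!\bigr)^{1/2} h_{\alpha-\beta}$ (interpreted as zero when any $\alpha_i < \beta_i$). Combining these with orthonormality of $\{h_{\alpha-\beta}\}_\alpha$ over distinct $\alpha$ gives, for any fixed multi-index $\beta$ with $|\beta|=k$,
\[
\E[|\partial^\beta \U_\rho g|^2] = \sum_\alpha \rho^{2|\alpha|} \prod_i \frac{\alpha_i!}{(\alpha_i-\beta_i)!}\, \hat g(\alpha)^2 \le \sum_\alpha \rho^{2|\alpha|} |\alpha|^{k}\, \hat g(\alpha)^2 \le M(\rho,k)\, \E[g^2],
\]
where the middle inequality uses $\prod_i \alpha_i^{\beta_i} \le |\alpha|^{|\beta|}$, and $M(\rho,k) := \sup_{m \ge 0} m^{k} \rho^{2m}$ is finite precisely because $|\rho| < 1$. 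Summing over the finitely many multi-indices $\beta$ with $|\beta| = k$ that appear in $\|\nabla^k \U_\rho g\|^2$ proves the bound with $C(\rho,k) = n^k M(\rho,k)$.

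The only subtle step is verifying that the formal Hermite series for $\partial^\beta \U_\rho g$ really is the classical derivative, since one cannot in general differentiate an $L^2$-convergent series termwise. I would handle this by combining the pointwise smoothness from the first step with the $L^2$-membership established above and matching the two against compactly supported smooth test functions via integration by parts. I do not expect any other real obstacle; the rest is bookkeeping.
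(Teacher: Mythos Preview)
Your proposal is correct and follows essentially the same approach as the paper: the integral kernel representation for $\calC^\infty$ smoothness, and the Hermite expansion together with the bound $\sup_{m\ge 0} m^k \rho^{2m} < \infty$ for the $L^2$ estimate. The paper handles higher $k$ by induction on the first-derivative case rather than computing $\partial^\beta$ directly, and does not discuss the termwise-differentiation subtlety you flag, but these are minor presentational differences, not a different route.
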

\begin{proof}
    With the change of variables $z = \rho x + \sqrt{1-\rho^2} y$, we can write
    \begin{align*}
        \U_\rho g(x)
        &= (2\pi)^{-n/2} \int g(\rho x + \sqrt{1-\rho^2} y) e^{-|y|^2/2}\, dy \\
        &= (2\pi(1-\rho^2))^{-n/2} \int g(z) e^{-\frac{|z - \rho x|^2}{2(1-\rho^2)}} \, dz.
    \end{align*}
    This last formula is clearly differentiable in $x$.

    To show the claimed bound, recall that if $H_\alpha$ are the orthonormal
    Hermite functions (where $\alpha$ is a multi-index) then
    $\frac{\partial}{\partial x_i} H_\alpha = \sqrt{\alpha_i} H_{\alpha - e_i}$.
    Hence, if $g = \sum_\alpha H_\alpha \hat g_\alpha$ is the Hermite expansion
    of $g$ then
    \[
        \E\left[\Big(\frac{\partial}{\partial x_i} \U_\rho g\Big)^2\right]
    = \sum_\alpha \alpha_i e^{-2 \rho |\alpha|} \hat g_\alpha^2
    \]
    and so
    \[
        \E\left[|\nabla \U_\rho g|^2\right]
    = \sum_\alpha |\alpha| e^{-2 \rho |\alpha|} \hat g_\alpha^2
    \]
    The claimed inequality for $k=1$
    follows because, for $x \ge 0$, $x e^{-\rho x}$ is bounded by a constant depending on $\rho$;
    for larger $k$ it follows by induction on $k$.
\end{proof}

Here is an integrated-by-parts version of the obvious fact that if $\E[|D_w f|^2] = 0$ then $f(x)$
is ``independent of $w$'' in the sense that $f(x) = f(y)$ whenever $x$ and $y$ differ by a multiple of $w$.
\begin{lemma}\label{lem:direction-independent}
    For $f \in L^2(\gamma)$, $w \in \R^n$, and $0 < \rho < 1$,
    \[
        \sum_i \E [f_i \div_\gamma((D_w \U_\rho f_i) w)] \le 0,
    \]
    with equality if and only if there is a function $g: w^\perp \to \R$ with $f(x) = g(\Pi_{w^\perp} x)$
    almost surely.
\end{lemma}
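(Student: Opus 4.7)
The plan is to reduce the inequality to a one-line spectral computation in the Hermite basis. By the rotational invariance of $\gamma$ and $\U_\rho$, and because both sides depend bilinearly on $w$, I may assume without loss of generality that $w = e_1$ (a general $w$ only rescales both sides by $\|w\|^2 \ge 0$). Applying the product rule for $\div_\gamma$ to the vector field $h\,e_1$ gives $\div_\gamma(h\,e_1) = D_1 h - x_1 h$, and specializing to $h = D_1 \U_\rho f_i$ yields
\[
\div_\gamma\!\bigl((D_1 \U_\rho f_i)\,e_1\bigr) = D_1^2 \U_\rho f_i - x_1 D_1 \U_\rho f_i = -D_1^\ast D_1 \U_\rho f_i,
\]
where $D_1^\ast = x_1 - D_1$ is the $L^2(\gamma)$-adjoint of $D_1$. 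Setting $N_1 := D_1^\ast D_1$, I recognize this as the number operator in direction $e_1$, which is diagonal in the Hermite basis with $N_1 H_\alpha = \alpha_1 H_\alpha$.

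Now I expand each coordinate as $f_i = \sum_\alpha \hat f_{i,\alpha} H_\alpha$. Since $\U_\rho H_\alpha = \rho^{|\alpha|} H_\alpha$, the operator $N_1 \U_\rho$ is diagonal in the Hermite basis with eigenvalues $\alpha_1 \rho^{|\alpha|}$, which are uniformly bounded for $\rho \in (0, 1)$; hence $N_1 \U_\rho$ is a bounded self-adjoint operator on $L^2(\gamma)$. By Parseval,
\[
\E\bigl[f_i\,\div_\gamma((D_1 \U_\rho f_i)\,e_1)\bigr] = -\E[f_i\, N_1 \U_\rho f_i] = -\sum_\alpha \alpha_1 \rho^{|\alpha|}\, \hat f_{i,\alpha}^2,
\]
and summing over $i$ gives
\[
\sum_i \E\bigl[f_i\,\div_\gamma((D_1 \U_\rho f_i)\,e_1)\bigr] = -\sum_{i,\alpha} \alpha_1 \rho^{|\alpha|}\, \hat f_{i,\alpha}^2 \le 0,
\]
as desired. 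Equality forces $\hat f_{i,\alpha} = 0$ whenever $\alpha_1 \ge 1$, which means each $f_i$ lies in the closed subspace of $L^2(\gamma)$ spanned by Hermite polynomials with $\alpha_1 = 0$; this is precisely the subspace of functions of $(x_2, \dots, x_n)$ alone, so $f(x) = g(\Pi_{w^\perp} x)$ almost surely for some measurable $g$.

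The main technical point to handle carefully is the functional identity $\div_\gamma((D_1 \U_\rho f_i)\,e_1) = -N_1 \U_\rho f_i$ in $L^2(\gamma)$ without assuming smoothness of $f_i$. The right-hand side is a bounded operator on $L^2(\gamma)$, and the left-hand side is well-defined pointwise because $\U_\rho f_i$ is $\calC^\infty$ with all derivatives in $L^2(\gamma)$ by \Cref{lem:gradient-bound}. For Hermite polynomial inputs $f_i = H_\alpha$ the identity is a direct calculation using $D_1 H_\alpha = \sqrt{\alpha_1}\, H_{\alpha - e_1}$; the general case follows by approximating $f_i$ in $L^2(\gamma)$ by linear combinations of Hermite polynomials and passing to the limit, using \Cref{lem:gradient-bound} to handle convergence of the derivatives of $\U_\rho f_i$ and the boundedness of $N_1 \U_\rho$ on the spectral side.
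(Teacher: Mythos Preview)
Your proof is correct. The route differs slightly from the paper's: rather than diagonalizing directly, the paper introduces a smoothing parameter $s<1$, integrates by parts against $\U_s f_i$, and uses semigroup identities to rewrite the quantity as $-\E\|D_w \U_{\sqrt{\rho s}} f\|_2^2$, then sends $s\to 1$ to obtain $-\E\|D_w \U_{\sqrt{\rho}} f\|_2^2$. In Hermite coordinates with $w=e_1$ this is exactly your $-\sum_{i,\alpha}\alpha_1\rho^{|\alpha|}\hat f_{i,\alpha}^2$, so the two arguments land on the same formula. Your version is a bit more direct: by observing up front that $N_1\U_\rho$ is a bounded operator on $L^2(\gamma)$ you avoid the auxiliary limit. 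The paper's approach has the minor advantage of being coordinate-free until the equality case (where it too resorts to Hermite coefficients), but otherwise there is no substantive difference. One small remark: your final paragraph about approximation is not really needed, since $\U_\rho f_i$ is genuinely smooth by \Cref{lem:gradient-bound}, so the pointwise identity $\div_\gamma((D_1\U_\rho f_i)e_1)=-N_1\U_\rho f_i$ holds without any limiting argument.
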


\begin{proof}
    Fix $s < 1$; since $\U_s f$ is sufficiently smooth (e.g.\ by Lemma~\ref{lem:gradient-bound}),
    \begin{align*}
        \sum_i \E [\U_s f_i \div_\gamma((D_w \U_\rho f_i) w)] 
        &= -\sum_i \E[\inr{D_w \U_s f_i}{D_w \U_\rho f_i}] \\
        &= -\sum_i \E[\inr{D_w \U_s f_i}{D_w \U_{\sqrt{\rho/s}} \U_{\sqrt{\rho s}} f_i}] \\
        &= -\sqrt{\frac s\rho} \sum_i \E[\inr{D_w \U_s f_i}{\U_{\sqrt{\rho/s}} D_w \U_{\sqrt{\rho s}} f_i}] \\
        &= -\sqrt{\frac s\rho} \sum_i \E[\inr{\U_{\sqrt{\rho/s}} D_w \U_s f_i}{ D_w \U_{\sqrt{\rho s}} f_i}] \\
        &= - \sum_i \E[\inr{D_w \U_{\sqrt{\rho s}} f_i}{ D_w \U_{\sqrt{\rho s}} f_i}] \\
        &= -\E \|D_w \U_{\sqrt{\rho s}} f\|_2^2.
    \end{align*}
    Taking the limit as $s \to 1$, we obtain the identity
    \[
        \sum_i \E [f_i \div_\gamma((D_w \U_\rho f_i) w)] 
        = -\sum_i \E[\inr{D_w f_i}{D_w \U_\rho f_i}] = -\E \|D_w \U_{\sqrt{\rho}} f\|_2^2
    \]
    for any $f \in L^2(\gamma)$.
    The non-positivity claim
    follows easily, and it is also clear that zero is attained if and only if
    $\U_{\sqrt \rho} f$ is independent of $w$. To see that the same is true for $f$,
    suppose without loss of generality that $w = e_i$. Then $f$ is independent of $w$ if and only if
    $f$'s Hermite coefficients $\hat f_\alpha$ are zero whenever $\alpha_i > 0$. Since
    $\U_{\sqrt \rho}$ acts diagonally and non-degenerately on the Hermite basis, $f$ is independent
    of $w$ if and only if $\U_{\sqrt \rho} f$ is.
\end{proof}

\subsection{The first-order conditions}

To derive the first-order optimality condition~\eqref{eq:first-order-condition}, we introduce
the ``value'' perturbations described in the outline.
For $f: \R^n \to \R^k$ and a vector field $W: \R^n \to \R^k$,
define (for $t \in \R$)
\[
    (\valued tWf)(x) = \tilde N(f(x) + t W(x)),
\]
where $\tilde N(x) = x/\max\{1, \|x\|\}$.

\begin{lemma}
    For any measurable $f: \R^n \to B^k$ and any bounded, measurable vector field $W: \R^n \to \R^k$,
    \[
        \left.\frac{d}{dt}\right|_{t=0} \E[\valued tWf] = \E[W - \inr{f}{W}_+ f 1_{\{\|f\| = 1\}}],
    \]
    where $a_+ = \max\{a, 0\}$.
\end{lemma}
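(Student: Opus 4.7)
The plan is to compute the pointwise right derivative of $t \mapsto \valued tWf(x)$ at $t=0$ for almost every $x$, and then apply the dominated convergence theorem to interchange the limit with the expectation. The key structural observation that makes the interchange clean is that $\tilde N$ is the nearest-point projection onto the closed convex set $B^k$ and is therefore $1$-Lipschitz, so
\[
\|\valued tWf(x) - f(x)\| = \|\tilde N(f(x) + tW(x)) - \tilde N(f(x))\| \le t \|W(x)\|.
\]
Since $W$ is bounded, the difference quotients $(\valued tWf - f)/t$ are dominated uniformly in $t$ by the integrable function $\|W\|$, so we may interchange.

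For the pointwise computation, I would split into three regions depending on the behavior of $\|f(x) + tW(x)\|$ near $t=0$. First, if $\|f(x)\| < 1$, then by continuity $\|f(x) + tW(x)\| < 1$ for all sufficiently small $t$, so $\valued tWf = f + tW$ with derivative $W$. Next, on $\{\|f(x)\| = 1\}$ use the expansion
\[
\|f(x) + tW(x)\|^2 = 1 + 2t\,\inr{f(x)}{W(x)} + t^2\|W(x)\|^2
\]
to see that for $t > 0$ small, $\|f + tW\| > 1$ precisely when $\inr{f}{W} > 0$. In the regime $\inr{f}{W} > 0$ the map $\valued tWf = (f + tW)/\|f + tW\|$ is smooth in $t$ and the standard quotient rule gives right derivative $W - \inr{f}{W}\,f$. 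In the remaining case $\inr{f}{W} \le 0$ we have $\|f + tW\| \le 1$ for $t > 0$ small, hence $\valued tWf = f + tW$ with right derivative $W$. Combining these three cases, the pointwise right derivative is exactly
\[
W - \inr{f}{W}_+ f \cdot 1_{\{\|f\| = 1\}},
\]
since $\inr{f}{W}_+$ picks out the positive-inner-product case and vanishes elsewhere on the sphere, while on $\{\|f\|<1\}$ the indicator kills the correction term. Plugging this into the dominated convergence theorem yields the claimed formula.

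The only mild subtlety — which is not really an obstacle but should be flagged — is that $\tilde N$ has a crease along the sphere $\{\|x\| = 1\}$, so on the set $\{\|f\| = 1,\, \inr{f}{W} < 0\}$ the map $t \mapsto \valued tWf(x)$ has distinct left and right derivatives at $t = 0$ (the left derivative is $W - \inr{f}{W} f$, the right derivative is $W$). The formula asserted in the lemma therefore computes the one-sided derivative from the right, which is the quantity actually relevant to the perturbative arguments to come: we will always use $\valued tWf$ as a family indexed by small $t \ge 0$ to probe directions of improvement, so only the right derivative is needed.
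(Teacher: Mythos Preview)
Your approach is essentially the same as the paper's: compute the pointwise derivative by cases and interchange limit and expectation using the uniform boundedness of $W$. The paper uses a uniform $O(t^2)$ Taylor remainder instead of your Lipschitz bound, but the effect is identical.

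One small slip: in the boundary case $\|f\|=1$ with $\inr{f}{W}=0$ (and $W\ne 0$) you assert $\|f+tW\|\le 1$ for small $t>0$, but in fact $\|f+tW\|^2 = 1 + t^2\|W\|^2 > 1$. This is harmless, since normalizing then gives $(f+tW)(1+O(t^2)) = f + tW + O(t^2)$, so the right derivative is still $W = W - \inr{f}{W}_+ f$ and your formula is unaffected. Your remark about the one-sided nature of the derivative is more explicit than the paper, which simply writes down the Taylor expansion without flagging that it is only valid for $t\ge 0$.
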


\begin{proof}
    If $\|f(x)\| < 1$, $\tilde N(f(x) + t W(x)) = f(x) + t W(x)$ for sufficiently small $t$.
    On the other hand, if $\|f(x)\| = 1$ then
    Taylor expansion gives
    \begin{equation}\label{eq:taylor-on-values}
        \tilde N(f(x) + t W(x)) = f(x) + tW(x) -
        t\inr{W(x)}{f(x)}_+ f(x) + O(t^2)
    \end{equation}
    for any $x \in \R^n$.
    The $O(t^2)$ term
    is uniform in $x$ because we assume $W$ to be uniformly bounded, and hence
    \[
        \frac{\E[\tilde N(f + t W) - f]}{t} = \E[W - \inr{f}{W}_+ f 1_{\{\|f\| = 1\}}] + O(t),
    \]
    and the claim follows by taking the limit as $t \to 0$.
\end{proof}

It is important to note that we can perturb $\E[f]$ in all possible directions; this is our analogue
of the fact that for the finite-dimensional constrained-optimization theory to hold, the constraint
function should have a full-rank derivative.

\begin{lemma}\label{lem:spanning-perturbations}
    For any measurable $f: \R^n \to B^k$,
    there exists a set $W_1, \dots W_k$ of vector fields such that
    \[
        \left\{\left.\frac{d}{dt}\right|_{t=0} \E[\valued t{W_i}f]: i = 1, \dots, k\right\}
    \]
    spans $\R^k$.
\end{lemma}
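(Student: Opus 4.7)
The plan is to leverage the explicit formula from the preceding lemma,
\[
    D(W) := \left.\frac{d}{dt}\right|_{t=0}\E[\valued tWf] = \E\bigl[W - \inr{f}{W}_+\, f\, 1_{\{\|f\|=1\}}\bigr],
\]
and exhibit concrete bounded vector fields $W_i$ for which $\{D(W_i)\}$ spans $\R^k$. I would split on whether the set $S := \{x : \|f(x)\| = 1\}$ has full Gaussian measure.

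\textbf{Easy case.} If $\gamma(S) < 1$, let $A := \R^n \setminus S$ and take $W_i(x) = 1_A(x)\, e_i$. Since the indicator $1_{\{\|f\|=1\}}$ in the nonlinear term annihilates $W_i$ pointwise, we get $D(W_i) = \gamma(A)\, e_i$, and these $k$ vectors clearly span $\R^k$.

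\textbf{Main case.} Otherwise $\|f\| = 1$ almost surely. Here I would test with the constant fields $W_i(x) \equiv e_i$ and $\widetilde W_i(x) \equiv -e_i$. The key algebraic point is the elementary identity $a_+ - (-a)_+ = a$, applied pointwise to $a = f_i(x)$, which yields
\[
    D(W_i) - D(\widetilde W_i) \;=\; 2 e_i - \E[f_i\, f] \;=\; (2I - M)\, e_i,
\]
where $M := \E[f f^T] \in \R^{k\times k}$. Since $M \succeq 0$ and $\operatorname{tr}(M) = \E[\|f\|^2] = 1$, the largest eigenvalue of $M$ is at most $1$, so $2I - M \succeq I$ is invertible. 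Hence $\{(2I-M)e_i\}_{i=1}^k$ spans $\R^k$, which forces the $2k$-element set $\{D(W_i), D(\widetilde W_i)\}_{i=1}^k$ to span $\R^k$; selecting any spanning subset of size $k$ yields the desired vector fields.

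The only mild subtlety is that $W \mapsto D(W)$ is not linear in $W$ because of the positive-part nonlinearity $\inr{f}{W}_+$, so one cannot simply take linear combinations within a single family. Testing both $W_i$ and $-W_i$ is precisely what cancels this nonlinearity via $a_+ - (-a)_+ = a$, after which the trace bound on $M$ is immediate. I do not foresee any substantive obstacle beyond this bookkeeping.
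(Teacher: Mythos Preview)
Your proof is correct and takes a genuinely different route from the paper's.

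The paper handles the main case by \emph{localizing}: it picks a point $v_0$ in the support of $f$, restricts attention to a small set $A = \{|f - v_0| < \epsilon\}$, and uses the vector fields $W_i = w_i 1_A/\gamma(A)$ where $w_1,\dots,w_{k-1}$ form a basis of $v_0^\perp$ and $w_k = -v_0$. The point is that on $A$, the nonlinear term $\inr{f}{W_i}_+ f$ is either $O(\epsilon)$ (for tangential $w_i$) or has a definite sign (for $w_k = -v_0$), so $D(W_i) = w_i + O(\epsilon)$, and one concludes by a perturbation argument for small $\epsilon$.

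Your approach is more global and more algebraic: rather than killing the nonlinearity by localization, you cancel it exactly via $a_+ - (-a)_+ = a$, reducing the question to invertibility of $2I - \E[ff^T]$, which follows from the trace bound. This is arguably cleaner, since it avoids the $\epsilon$-approximation step entirely and gives an explicit linear-algebraic reason for spanning. The small cost is that you produce $2k$ vector fields and then pass to a spanning subset of size $k$, whereas the paper's construction hands you exactly $k$ fields from the start; but the lemma only asks for existence, so this is harmless.
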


\begin{proof}
    If $\{x: \|f(x)\| < 1\}$ has positive measure, the claim is clear because
    for vector fields $W$ supported on $\{x: \|f(x\| < 1\}$, $\left.\frac{d}{dt}\right|_{t=0} \E[\valued tWf] = \E[W]$. From now on, we will assume that $f: \R^n \to S^{k-1}$.

    First, choose $v_0 \in S^{k-1}$ belonging to the support of $f$. For some
    sufficiently small $\epsilon > 0$, let $A := \{x: |f(x) - v_0| < \epsilon\}$, and note
    that $A$ has positive measure.
    Let $w_1, \dots, w_{k-1}$ be a basis for $v_0^\perp$ and let $w_k = -v_0$; then $\{w_1, \dots, w_k\}$ spans
    $\R^k$. Define (for $i=1,\dots,k$) $W_i = w_i 1_A / \gamma(A)$.

    First, consider any $i = 1, \dots, k-1$.
    Since $\inr{W_i}{v_0} \equiv 0$, and since $W_i = 0$ whenever $|f(x) - v_0| \ge \epsilon$,
    \[
        |\inr{f(x)}{W_i(x)}| \le \epsilon |W_i(x)|
    \]
    for every $x$. Therefore,
    \[
        \tilde w_i := \left.\frac{d}{dt}\right|_{t=0} \E[\valued t{W_i}f]
            = \E[W_i] - \E[\inr{f}{W_i}_+ f] = w_i + O(\epsilon).
    \]

    On the other hand, for $i=k$ we have
    \[
        |\inr{f(x)}{W_k(x)} + 1| \le \epsilon |W_k(x)|,
    \]
    meaning in particular that $\inr{f}{W_k} \le 0$ pointwise as soon as $\epsilon < 1$.
    Therefore,
    \[
        \tilde w_k := \left.\frac{d}{dt}\right|_{t=0} \E[\valued t{W_k}f] = \E[W_k] = w_k.
    \]
    Since $\{\tilde w_1,\dots,\tilde w_k\}$ is an arbitrarily small perturbation of $\{w_1, \dots, w_k\}$,
    if $\epsilon > 0$ is sufficiently small then $\{\tilde w_1, \dots, \tilde w_k\}$ spans $\R^k$.
\end{proof}

The next step in establishing the first-order conditions is to show that it's enough
to consider derivatives: if it's possible to improve the objective to first-order while preserving
the constraints to first-order then it's also possible to improve the objective to first-order
while preserving the constraints exactly.

\begin{lemma}\label{lem:first-variation}
    If $\rho \in (0, 1)$ and $f: \R^n \to B^k$ is optimally stable then for every bounded, measurable vector field $W$,
    $\left.\frac{d}{dt}\right|_{t=0} \E[\valued tWf] = 0$ implies
    \[
        \left.\frac{d}{dt}\right|_{t=0} \E_{\bx \sim_\rho \by}[\inr{(\valued tWf)(\bx)}{(\valued tWf)(\by)}] = 0.
    \]
\end{lemma}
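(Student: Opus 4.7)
The overall strategy is a Lagrange-multiplier argument: given a perturbation $W$ that preserves the constraint $\E[\tilde f] = 0$ to first order, I will augment it with auxiliary perturbations chosen to restore the constraint \emph{exactly}, producing a one-parameter family of feasible competitors. Optimality of $f$ then forces the objective to have zero derivative along this family, which in turn equals the derivative along $\valued tWf$ up to vanishing corrections.

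In detail: first apply Lemma~\ref{lem:spanning-perturbations} to obtain bounded vector fields $W_1, \dots, W_k$ whose first-order effects $v_i := \left.\tfrac{d}{ds}\right|_{s=0}\E[\valued s{W_i}f]$ form a basis of $\R^k$; collect the $v_i$ as columns of an invertible matrix $V \in \R^{k\times k}$. Consider the multi-parameter family
\[
    h_{t,s}(x) = \tilde N\bigl(f(x) + tW(x) + \textstyle\sum_i s_i W_i(x)\bigr), \qquad (t,s)\in\R\times\R^k,
\]
and the induced constraint map $\Phi(t,s) := \E[h_{t,s}]$. Using the pointwise Taylor expansion~\eqref{eq:taylor-on-values} applied to $f(x) + tW(x) + \sum_i s_i W_i(x)$, together with the uniform bounds on $W$ and the $W_i$ and dominated convergence, one obtains $\Phi(t,s) = tL(W) + Vs + o(|t|+|s|)$ as $(t,s)\to 0$, where $L(W) := \left.\tfrac{d}{dt}\right|_{t=0}\E[\valued tWf] = 0$ by hypothesis.

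The next step is to find $s(t) = o(t)$ with $\Phi(t, s(t)) = 0$ for all small $t$. Since $\Phi$ is Lipschitz (because $\tilde N$ is) but not a priori $C^1$, I would bypass the implicit function theorem and instead apply Brouwer's fixed-point theorem to the continuous map $T_t(s) := s - V^{-1}\Phi(t,s)$: the Taylor expansion gives $T_t(0) = o(t)$ and $T_t(s) - T_t(0) = o(|s|)$ uniformly for $t$ small, so $T_t$ maps a sufficiently small closed ball into itself and hence admits a fixed point $s(t)$ satisfying $\Phi(t,s(t)) = 0$. Plugging back into the Taylor expansion and using invertibility of $V$ upgrades $s(t) \to 0$ to the quantitative estimate $|s(t)| = o(|t|)$. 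Optimality of $f$ applied to the feasible curve $t \mapsto h_{t,s(t)}$ then shows that the scalar objective $R(t) := \E_{\bx\sim_\rho\by}[\inr{h_{t,s(t)}(\bx)}{h_{t,s(t)}(\by)}]$ has a maximum at $t=0$, so $R'(0) = 0$. Expanding $R$ in $(t,s)$ at the origin analogously to $\Phi$ and using $s(t) = o(t)$ to kill the contributions from the $W_i$'s shows that $R'(0) = \left.\tfrac{d}{dt}\right|_{t=0}\E_{\bx\sim_\rho\by}[\inr{\valued tWf(\bx)}{\valued tWf(\by)}]$, which therefore vanishes, as desired.

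The main obstacle is making the Brouwer step rigorous: the projection $\tilde N$ fails to be $C^1$ on the unit sphere, and if $\{\|f\| = 1\}$ has positive measure then $\Phi$ may fail to be $C^1$ near the origin. Consequently the Taylor remainder $\Phi(t,s) - Vs - tL(W)$ must be shown to be $o(|t|+|s|)$ as $(t,s)\to 0$ \emph{uniformly}, not merely along one-dimensional rays through the origin; this is the key quantitative estimate enabling the fixed-point argument, and it requires a careful uniform control using the pointwise expansion of $\tilde N$ together with the boundedness of the perturbing vector fields. A minor secondary subtlety is that the formula for $L$ in the previous lemma behaves asymmetrically in $t \to 0^{\pm}$ on $\{\|f\|=1\}$; applying the argument with both $W$ and $-W$ lets one conclude vanishing of both one-sided derivatives simultaneously.
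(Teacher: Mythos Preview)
Your approach is essentially the paper's: the same spanning perturbations $W_1,\dots,W_k$, the same multi-parameter family $\tilde N\!\bigl(f + \beta W + \sum_i \alpha_i W_i\bigr)$, and the same conclusion from optimality along a constraint-preserving curve through the origin. The paper simply invokes the implicit function theorem to produce that curve (obtaining $\alpha'(0)=0$, $\beta'(0)\neq 0$) rather than your Brouwer fixed-point workaround, so the regularity concern you raise about $\tilde N$ is not separately addressed there either.
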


\begin{proof}
    Let $W$ be any vector field with
    $\left.\frac{d}{dt}\right|_{t=0} \E[\valued tWf] = 0$,
    and choose vector fields
    $W_1, \dots, W_k$ as in Lemma~\ref{lem:spanning-perturbations}.
    Consider the competitor function $f_{\alpha,\beta}(x) = \tilde N(f(x) + \sum_i \alpha_i W_i + \beta W)$.
    We define
    $L: \R^{k+1} \to \R^k$ by
    \[
        L(\alpha, \beta) = \E[f_{\alpha,\beta}].
    \]
    Then
    \begin{equation}\label{eq:diff-beta}
        \frac{\partial L}{\partial \beta} (0, 0) = \left.\frac{d}{dt}\right|_{t=0} \E[\valued tWf] = 0
    \end{equation}
    and
    \begin{equation}\label{eq:diff-alpha}
        \frac{\partial L}{\partial \alpha_i} (0, 0)
        = \left.\frac{d}{dt}\right|_{t=0} \E[\valued t{W_i}f],
    \end{equation}
    which by our choice of $W_i$ implies that $D L$ (as a $(k+1) \times k$ matrix) has rank $k$.
    By the implicit function theorem, there is some interval $(-\epsilon, \epsilon)$ and a differentiable
    curve $\eta: (-\epsilon, \epsilon) \to \R^{k+1}$ such that $\eta(0) = 0$, $\eta'(0) \ne 0$ and
    $L(\eta(t)) = 0$ for all $t \in (-\epsilon, \epsilon)$.
    We'll write $\alpha(t)$ for the first $k$ coordinates of $\eta$, and $\beta(t)$ for the last coordinate.

    Now, the fact that $\eta'(0) \ne 0$ implies that at least one of $\alpha'(0)$ or $\beta'(0)$ is non-zero.
    But the chain rule and the fact that $L(\eta(t))$ is constant gives
    \[
        0 = \left.\frac{d}{dt}\right|_{t=0} L(\eta(t)) = \sum_i \frac{\partial L}{\partial \alpha_i}(0,0) \alpha'_i(0)
            + \frac{\partial L}{\partial \beta} (0,0) \beta'(0);
    \]
    the term involving $\beta$ vanishes because of~\eqref{eq:diff-beta}, while~\eqref{eq:diff-alpha} implies
    that the vectors $\frac{\partial}{\partial \alpha_i} L(0,0)$ are linearly independent. It follows that
    $\alpha'(0) = 0$, and so then we must have $\beta'(0) \ne 0$.

    Finally, we consider the objective value
    \[
        J(\alpha, \beta) = \E_{\bx \sim_\rho \by} \left[\left\langle
            f_{\alpha,\beta}(\bx)
            ,
            f_{\alpha,\beta}(\by)
    \right\rangle\right].
    \]
    Since $f$ is optimally stable and $f_{\alpha(t),\beta(t)}$ satisfies the constraints,
    $\left.\frac{d}{dt}\right|_{t=0} J(\alpha(t), \beta(t)) = 0$.
    On the other hand, the chain rule gives
    \begin{multline*}
        0 = \left.\frac{d}{dt}\right|_{t=0} J(\alpha(t), \beta(t))
            = \sum_i \alpha'_i(0)
            \left.\frac{d}{dt}\right|_{t=0} \E_{\bx \sim_\rho \by}[\inr{(\valued t{W_i}f)(\bx)}{(\valued t{W_i}f)(\by)}] \\
            + \beta'(0)
            \left.\frac{d}{dt}\right|_{t=0} \E_{\bx \sim_\rho \by}[\inr{(\valued t{W}f)(\bx)}{(\valued t{W}f)(\by)}].
    \end{multline*}
    We showed that $\alpha'_i(0) = 0$ for all $i$, and $\beta'(0) \ne 0$, we conclude that
    \[
        \left.\frac{d}{dt}\right|_{t=0} \E_{\bx \sim_\rho \by}[\inr{(\valued t{W}f)(\bx)}{(\valued t{W}f)(\by)}] = 0.
            \qedhere
    \]
\end{proof}

There is also a Lagrangian interpretation of Lemma~\ref{lem:first-variation}:
\begin{lemma}\label{lem:first-variation-lagrangian}
    If $\rho \in (0, 1)$ and $f: \R^n \to B^k$ is optimally stable then there exists some $\lambda \in \R^k$ such that for every
    bounded, measurable vector field $W$,
    \[
        \left.\frac{d}{dt}\right|_{t=0} \E_{\bx \sim_\rho \by}[\inr{(\valued t{W}f)(\bx)}{(\valued t{W}f)(\by)}] =
            \left\langle
                \lambda,
                \left.\frac{d}{dt}\right|_{t=0} \E[\valued tWf]
            \right\rangle.
    \]
\end{lemma}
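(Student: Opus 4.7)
The plan is a Lagrange-multiplier argument. Write
\[
    \psi(W) := \left.\frac{d}{dt}\right|_{t=0} \E[\valued tWf] \in \R^k,
    \qquad
    \phi(W) := \left.\frac{d}{dt}\right|_{t=0} \E_{\bx \sim_\rho \by}[\inr{(\valued tWf)(\bx)}{(\valued tWf)(\by)}] \in \R,
\]
so that the claimed identity reads $\phi(W) = \langle \lambda, \psi(W)\rangle$ for some $W$-independent $\lambda \in \R^k$. First I would invoke \Cref{lem:spanning-perturbations} to fix vector fields $W_1, \dots, W_k$ whose images $\psi(W_1), \dots, \psi(W_k)$ form a basis of $\R^k$, and then define $\lambda$ as the unique solution of the $k \times k$ linear system $\phi(W_i) = \langle \lambda, \psi(W_i)\rangle$, $i = 1, \dots, k$. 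The remaining task is to verify that this same $\lambda$ works for every other bounded measurable test field $W$.

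Fix such a $W$ and, following the setup of \Cref{lem:first-variation}, introduce the $(k+1)$-parameter deformation
\[
    f_{\alpha,\beta}(x) = \tilde N\!\left(f(x) + \sum_i \alpha_i W_i(x) + \beta W(x)\right),
\]
together with $L(\alpha,\beta) := \E[f_{\alpha,\beta}] \in \R^k$ and $J(\alpha,\beta) := \E_{\bx \sim_\rho \by}[\inr{f_{\alpha,\beta}(\bx)}{f_{\alpha,\beta}(\by)}] \in \R$. The partial derivatives at the origin are $\partial_{\alpha_i} L(0,0) = \psi(W_i)$, $\partial_\beta L(0,0) = \psi(W)$, $\partial_{\alpha_i} J(0,0) = \phi(W_i)$, $\partial_\beta J(0,0) = \phi(W)$, so the Jacobian $DL(0,0) \in \R^{k \times (k+1)}$ has full rank $k$.

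Now the same implicit-function construction used inside the proof of \Cref{lem:first-variation} supplies, for every direction $v \in \ker DL(0,0)$, a curve $\eta: (-\delta, \delta) \to \R^{k+1}$ with $\eta(0) = 0$, $\eta'(0) = v$, and $L(\eta(t)) \equiv 0$. Optimality of $f$ then forces $\left.\frac{d}{dt}\right|_{t=0} J(\eta(t)) = DJ(0,0)\cdot v = 0$, so $DJ(0,0)$ annihilates $\ker DL(0,0)$, and elementary linear algebra produces a $\mu \in \R^k$ with $DJ(0,0) = \mu^T DL(0,0)$. Reading off the first $k$ coordinates of this identity yields $\phi(W_i) = \langle \mu, \psi(W_i)\rangle$, which pins down $\mu = \lambda$ by the uniqueness used to define $\lambda$; the last coordinate then delivers the desired $\phi(W) = \langle \lambda, \psi(W)\rangle$.

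The main obstacle is a technical one that already appears in the proof of \Cref{lem:first-variation}: because $\tilde N$ is only Lipschitz (not $\calC^1$) on $\partial B^k$, the maps $L$ and $J$ need not be jointly differentiable at $(0,0)$, and only one-sided directional partials are available in general. I would handle this exactly as the earlier lemma does---applying the implicit function theorem along single linear directions (where the partials do exist) and working with one-sided derivatives where necessary---so no genuinely new technical difficulty arises beyond what has already been resolved upstream.
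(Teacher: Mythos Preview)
Your approach is correct and reaches the same conclusion, but the paper organizes the argument differently and more economically. Rather than re-running the implicit-function construction for each test field $W$, the paper uses Lemma~\ref{lem:first-variation} purely as a black box: it records the directional-derivative maps $\phi(W)\in\R^k$ and $\psi(W)\in\R$ (your $\psi$ and $\phi$, respectively), asserts they are linear in $W$, and then argues on an arbitrary finite-dimensional subspace $\calX$ of vector fields that the linear map $W\mapsto(\phi(W),\psi(W))\in\R^{k+1}$ misses $(0,\dots,0,1)$ (this is exactly Lemma~\ref{lem:first-variation}), so its range lies in a hyperplane with normal $(-\lambda,1)$. Uniqueness of $\lambda$ and compatibility across subspaces then give a single global $\lambda$. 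In other words, the paper separates the analytic content (already done in Lemma~\ref{lem:first-variation}) from the linear-algebra content, whereas you merge them by redoing the curve construction inside a $(k{+}1)$-parameter family for each $W$ and reading off $\mu$ from $DJ=\mu^{T}DL$.

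What each buys: the paper's route is shorter and makes clear that nothing beyond Lemma~\ref{lem:first-variation} is needed analytically; your route avoids having to assert that the first-variation functionals are \emph{linear} in $W$ (a point the paper states but which is not obvious from the formula $\E[W-\inr{f}{W}_+ f\,1_{\{\|f\|=1\}}]$), since you only ever differentiate in the finite-dimensional parameters $(\alpha,\beta)$ and use linear algebra there. The residual differentiability issue for $\tilde N$ at $\partial B^k$ that you flag is the same one the paper absorbs into the proof of Lemma~\ref{lem:first-variation}, so you are right that no genuinely new obstacle appears.
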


\begin{proof}
    For a bounded, measurable vector field $W$, let
    \[
        \phi(W) = \left.\frac{d}{dt}\right|_{t=0} \E[\valued tWf] \in \R^k
    \]
    and
    \[
        \psi(W) = 
        \left.\frac{d}{dt}\right|_{t=0} \E_{\bx \sim_\rho \by}[\inr{(\valued t{W}f)(\bx)}{(\valued t{W}f)(\by)}]
            \in \R
    \]
    noting that both $\phi(W)$ and $\psi(W)$ are linear functions of $W$.
    Let $\calX$ be any finite-dimensional subspace of bounded measurable vector fields for which
    $\{\phi(W): W \in \calX\}$ spans $\R^k$,
    and consider the linear map $L: \calX \to \R^{k+1}$ given by $L(W) = (\phi(W), \psi(W))$.
    By Lemma~\ref{lem:first-variation}, $(0, \dots, 0, 1)$ does not belong to the range of $L$;
    it follows that there exists $\lambda = \lambda(\calX)$ such that $(-\lambda, 1)$ is orthogonal
    to the range of $L$ (for example, $\lambda$ can be found by rescaling the residual
    of the orthogonal projection of $(0, \dots, 0, 1)$ onto the range of $L$). For this $\lambda$, we have
    $\inr{\lambda}{\phi(W)} = \psi(W)$ for all $W \in \calX$. Note that $\lambda(\calX)$ is unique,
    because the range of $L$ has dimension at least $k$.

    Now, if $\calX \subset \calX'$ are two vector spaces satisfying the spanning property above
    then $\lambda(\calX') = \lambda(\calX)$ (because $(-\lambda(\calX'), 1)$ is orthogonal to $L(\calX')$
    and hence also $L(\calX)$,
    and $\lambda(\calX)$ is the unique vector with that property).
    It follows then that there is a $\lambda$ satisfying $\inr{\lambda}{\phi(W)} = \psi(W)$
    for all bounded, measurable $W$: take any $\calX$ for which $\{\phi(W): W \in \calX\}$ spans $\R^k$,
    and take
    $\lambda = \lambda(\calX)$. Then for any bounded, measurable $W$, consider $\calX' = \mathrm{span}(W \cup \calX)$; since $\lambda(\calX') = \lambda$, it follows that $\inr{\lambda}{\phi(W)} = \psi(W)$.
\end{proof}

To make Lemma~\ref{lem:first-variation-lagrangian} more useful, we test it on
``local'' vector fields $W$ to extract valuable \emph{pointwise} information
about optimally stable functions.
First, note that the Taylor expansion~\eqref{eq:taylor-on-values} implies that
\begin{align*}
    \left.\frac{d}{dt}\right|_{t=0} \E_{\bx \sim_\rho \by}[\inr{(\valued t{W}f)(\bx)}{(\valued t{W}f)(\by)}]
        &= 2 \E_{\bx \sim_\rho \by} [\inr{f(\bx)}{W(\by) - \inr{W(\by)}{f(\by)}_+ f(\by)}] \\
        &= 2 \E[\inr{W - \inr{W}{f}_+ f}{\U_\rho f}].
\end{align*}
Therefore, Lemma~\ref{lem:first-variation-lagrangian} implies that there exists $\lambda \in \R^k$ such that
\begin{equation}\label{eq:first-variation-in-space}
    2 \E[\inr{W - \inr{W}{f}_+ f 1_{\{\|f\| = 1\}}}{\U_\rho f - \lambda/2}] = 0
\end{equation}
for every bounded, measurable $W$.

\begin{lemma}\label{lem:first-order-conditions}
    If $\rho \in (0, 1)$ and $f: \R^k \to B^k$ is optimally stable then for the $\lambda$ of
    Lemma~\ref{lem:first-variation-lagrangian}, we have
    \[
        |\U_\rho f - \lambda/2| f = \U_\rho f - \lambda/2 \text{ a.e.}
    \]
\end{lemma}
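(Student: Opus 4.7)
The plan is to squeeze pointwise information about $g := \U_\rho f - \lambda/2$ out of the integrated identity~\eqref{eq:first-variation-in-space} by specializing the bounded, measurable test vector field $W$ in three successive ways. Throughout, I will split the domain into the two regions $\{\|f\|<1\}$ and $\{\|f\|=1\}$, since the indicator $1_{\{\|f\|=1\}}$ in~\eqref{eq:first-variation-in-space} makes the constraint behave differently on each.

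First, I would restrict to $W$ supported on $\{\|f\|<1\}$. On this set the correction term $\inr{W}{f}_+ f\,1_{\{\|f\|=1\}}$ vanishes identically, so~\eqref{eq:first-variation-in-space} collapses to $\E[\inr{W}{g}\,1_{\{\|f\|<1\}}]=0$ for every bounded measurable $W$ supported there. The usual duality argument (plug in $W = g\,1_{\{\|f\|<1\}\cap B_R}$ and let $R\to\infty$, or equivalently vary $W$ coordinatewise over indicator functions of measurable sets) forces $g=0$ almost everywhere on $\{\|f\|<1\}$. On this region the conclusion $|g|f = g$ is then trivial.

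Second, I would work on $\{\|f\|=1\}$ and decompose any test field pointwise as $W = W^\parallel + W^\perp$, where $W^\parallel = \inr{W}{f}f$ and $\inr{W^\perp}{f}=0$. For test fields with $W = W^\perp$ supported on $\{\|f\|=1\}$, one has $\inr{W}{f}_+=0$, so~\eqref{eq:first-variation-in-space} reduces to $\E[\inr{W^\perp}{g}\,1_{\{\|f\|=1\}}] = 0$. Taking $W^\perp = (I - ff^T)W_0$ for an arbitrary bounded $W_0$, duality gives $(I-ff^T)g = 0$ a.e.\ on $\{\|f\|=1\}$, i.e.\ $g = \mu f$ pointwise for some measurable scalar $\mu$.

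Third, I would test with $W = \beta f\,1_{\{\|f\|=1\}}$ for a bounded scalar $\beta\le 0$. Then $\inr{W}{f} = \beta \le 0$, so $\inr{W}{f}_+ = 0$ and the integrand in~\eqref{eq:first-variation-in-space} becomes $\inr{\beta f}{g}\,1_{\{\|f\|=1\}} = \beta\mu\,1_{\{\|f\|=1\}}$. Choosing $\beta = -1_A$ for an arbitrary measurable $A\subseteq\{\mu<0,\,\|f\|=1\}$ forces $\int_A \mu\,d\gamma = 0$, so $\mu\ge 0$ a.e.\ on $\{\|f\|=1\}$. The non-negative-$\beta$ test is redundant: there $\inr{W}{f}_+ = \beta$ and the integrand vanishes identically, consistent with $\mu$ being unconstrained from above. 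Combining the two regions, $g = \mu f$ with $\mu\ge 0$ on $\{\|f\|=1\}$ and $g=0$ on $\{\|f\|<1\}$; in both cases $|g|f = g$ a.e., completing the proof.

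The main subtlety I expect is exactly the asymmetry created by the one-sided $(\cdot)_+$ in the first-order condition: the positive-$\beta$ tests are silenced and only negative scalar perturbations yield information, which is precisely what one needs to recover the sign constraint $\mu\ge 0$ (and no more). Once the domain split and the parallel/perpendicular decomposition are set up, the rest is a routine duality argument in $L^2(\gamma)$, so no delicate approximation is required here—the heavy approximation machinery was already spent in deriving~\eqref{eq:first-variation-in-space}.
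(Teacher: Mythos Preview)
Your argument is correct and takes a genuinely different route from the paper's. The paper proves the lemma by contradiction: it assumes the set $A=\{|g|>\epsilon,\ |f-N(g)|>\epsilon\}$ (with $g=\U_\rho f-\lambda/2$ and $N(v)=v/|v|$) has positive measure, localizes to a subset $C\subset A$ on which $f$ is $\epsilon^3$-close to a fixed vector $v$, and then picks a single constant direction $w\in v^\perp$ with $\inr{N(g)}{w}\ge \epsilon/2$ on $C$. Plugging $W=w\,1_C$ into \eqref{eq:first-variation-in-space} produces a strictly positive quantity of order $\epsilon^2$, contradicting the identity.

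Your approach instead exhausts the freedom in \eqref{eq:first-variation-in-space} through a clean three-way decomposition of test fields (supported on the interior of the ball; tangential on the sphere; radial on the sphere with nonpositive scalar multiplier). This extracts exactly the three pieces of information $g=0$ on $\{\|f\|<1\}$, $(I-ff^T)g=0$ on $\{\|f\|=1\}$, and $\inr{g}{f}\ge 0$ on $\{\|f\|=1\}$, from which the conclusion follows. The argument is more structural and avoids the $\epsilon$-net/localization machinery; it also makes transparent exactly why the one-sided $(\cdot)_+$ yields only the sign constraint and nothing more. The paper's contradiction argument is shorter to write but hides this structure. Both are valid; yours is arguably the more natural first-variation reading.
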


\begin{proof}
    Suppose not, and choose some small $\epsilon > 0$ such that the set
    \[
        A := \{x: |\U_\rho f - \lambda/2| > \epsilon \text{ and } |f - N(\U_\rho f - \lambda/2)| > \epsilon\}
    \]
    has positive measure. Then find some $v \in B^k$ such that
    \[
        C := \{x: x \in A \text{ and } |f(x) - v| < \epsilon^3\}
    \]
    has positive measure. Since
    $|f - N(\U_\rho f - \lambda/2)| > \epsilon$ on $C$, it follows that $|v -
    N(\U_\rho f - \lambda/2)| > \epsilon - \epsilon^3$ on $C$, and so (for small
    enough $\epsilon > 0$) we can find some unit vector $w \in v^\perp$ 
    such that $\inr{N(\U_\rho f - \lambda/2)}{w} \ge \epsilon/2$ on $C$. Now set $W = w 1_C$.

    On the set $C$, $w \in v^\perp$ and $|f(x) - v| \le \epsilon^3$ imply that $|\inr{W}{f}_+ f| \le \epsilon^3$.
    On the other hand, we also have (still on the set $C$)
    \[
        \inr{W}{\U_\rho f - \lambda/2} = \inr{W}{N(\U_\rho f - \lambda/2)} |\U_\rho f - \lambda/2| \ge \frac{\epsilon}{2} |\U_\rho f - \lambda/2| \ge \frac{\epsilon^2}{2},
    \]
    and if $\epsilon > 0$ is small enough then this contradicts~\eqref{eq:first-variation-in-space}.
\end{proof}

\subsection{Spatial perturbations}

Next, we compute the first and second derivatives of the objectives and constraints
for the spatial perturbations $\spatial tWf$ which, recall, is defined by letting
$\{F_t: t \in \R\}$ be the flow along $W$,
and setting
\[
    (\spatial tWf)(x) = f(F_t^{-1}(x)).
\]

\begin{lemma}
    For any bounded function $f: \R^n \to \R^k$ and any tame vector field $W$,
    $\E[\spatial tWf]$ is differentiable in $t$ and satisfies
    \[
        \left.\frac{d}{dt}\right|_{t=0} \E[\spatial tWf] = \E [f \div_\gamma W].
    \]
\end{lemma}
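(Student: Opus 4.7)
The plan is to proceed by change of variables and then differentiate under the integral sign. Writing $\phi$ for the standard Gaussian density on $\R^n$, I would start by substituting $y = F_t^{-1}(x)$ in the definition of $\E[\spatial tWf]$; since $F_t$ is a $\calC^\infty$ diffeomorphism with Jacobian determinant $|DF_t|$, this gives
\begin{equation*}
    \E[\spatial tWf] \;=\; \int f(F_t^{-1}(x))\,\phi(x)\,dx \;=\; \int f(y)\,\phi(F_t(y))\,|DF_t(y)|\,dy.
\end{equation*}
The virtue of this form is that all the $t$-dependence has been moved off of $f$ (which we only assume to be measurable and bounded) and onto the very smooth objects $\phi$ and $F_t$.

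Next, I would differentiate in $t$ and evaluate at $t=0$. Since $F_0 = \mathrm{id}$, two standard flow identities apply: $\left.\tfrac{d}{dt}\right|_{t=0} F_t(y) = W(y)$, and $\left.\tfrac{d}{dt}\right|_{t=0} |DF_t(y)| = \div W(y)$ (the Liouville formula for the Jacobian along a flow). Combined with $\nabla \phi(y) = -y\,\phi(y)$, this yields
\begin{equation*}
    \left.\frac{d}{dt}\right|_{t=0}\!\bigl[\phi(F_t(y))\,|DF_t(y)|\bigr]
    \;=\; -\inr{y}{W(y)}\,\phi(y) + \div W(y)\,\phi(y)
    \;=\; \phi(y)\,\div_\gamma W(y).
\end{equation*}
Integrating against $f(y)$ and identifying the right-hand side with $\E[f\,\div_\gamma W]$ gives the claimed formula, provided the differentiation can be passed through the integral.

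The one technical point to discharge is justifying this interchange of derivative and integral, which is where tameness of $W$ comes in. The plan is to use the dominated convergence theorem applied to the difference quotient. Because $W$ and all its derivatives are bounded, the flow $F_t$ satisfies $|F_t(y) - y| \le |t|\,\|W\|_\infty$ and $|DF_t(y)|$ is uniformly bounded for $t \in [-1,1]$ (by Gronwall applied to the variational equation $\tfrac{d}{dt}DF_t = DW(F_t) \cdot DF_t$). Consequently, the integrand $f(y)\,\phi(F_t(y))\,|DF_t(y)|$ and its $t$-derivative are dominated, uniformly in $t \in [-1,1]$, by a fixed multiple of $\|f\|_\infty$ times an integrable Gaussian-type tail of the form $C \exp(-((|y|-C)_+)^2/2)$ (the same estimate used in the proof of Lemma~\ref{lem:uniform-approximation}). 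This dominator is integrable against Lebesgue measure, so dominated convergence lets me pass the derivative inside, completing the proof. The main, and essentially only, obstacle is this domination step; the rest is bookkeeping with the two well-known flow identities.
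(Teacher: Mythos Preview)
Your argument is correct and is in fact the approach the paper explicitly mentions but deliberately avoids. Just before this lemma, the paper notes that one could ``use a change of variables to pass the spatial perturbation onto the (very smooth) Gaussian density,'' but finds ``organizing the computations with this explicit change of variables \dots\ tedious.'' Instead, the paper first proves the identity for $f \in \calC^\infty_c$ by differentiating $\E[f \circ F_{-t}]$ directly (obtaining $-\E[D_W f]$) and then integrating by parts; it then extends to general bounded $f$ by invoking the uniform approximation Lemma~\ref{lem:uniform-approximation} to pass to a limit of smooth functions, using that both $\phi_n(t) = \E[\spatial tW{f_n}]$ and $\phi_n'(t) = \E[\spatial tW{f_n}\,\div_\gamma W]$ converge uniformly.

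The trade-off: your route avoids any appeal to Lemma~\ref{lem:uniform-approximation} and works directly with non-smooth $f$, at the cost of verifying the Liouville identity and a slightly more delicate domination bound (you need to control the $t$-derivative of $\phi(F_t(y))\,|DF_t(y)|$ uniformly for $t$ in a neighborhood of $0$, not just at $t=0$; this does follow from tameness, as you indicate). The paper's route trades that for a two-step argument that reuses approximation machinery already built. Both are valid; yours is arguably more self-contained for this particular lemma.
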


\begin{proof}
    For $f \in \calC_c^1$, this follows by writing out the definition of $\spatial tWf$, differentiating
    inside the integral, and integrating by parts using~\eqref{eq:derivative}:
    \[
        \left.\frac{d}{dt}\right|_{t=0} \E [\spatial tWf] = \left.\frac{d}{dt}\right|_{t=0} \E[f\circ F_{-t}]= -\E [D_W f] = \E[f \div_\gamma W].
    \]
    Because $\spatial sW{\spatial tWf} = \spatial{s+t}Wf$, this also implies that
    \begin{equation}\label{eq:smooth-derivative}
        \frac{d}{dt} \E[\spatial tWf] = \E[\spatial tWf \div_\gamma W]
    \end{equation}
    for $f \in \calC_c^1$.

    Next we handle the case of general bounded $f$.
    Take an approximating sequence $f_n$ as in Lemma~\ref{lem:uniform-approximation}.
    Defining $\phi(t) = \E[\spatial tWf]$ and $\phi_n(t) = \E[\spatial tWf_n]$,
    we see from~\eqref{eq:smooth-derivative} and the uniform boundedness
     of $f_n$ that 
    $\phi_n'(t)$ is continuous in $t$, and bounded uniformly in $n$ and $t$.
    Moreover, Lemma~\ref{lem:uniform-approximation} ensures that $\phi_n(t) \to \phi(t)$ uniformly for $t \in [-1, -1]$, and that $\phi'_n(t)$ converges uniformly.
    It follows that $\phi(t)$ is differentiable in $t$ and satisfies
    \[
        \phi'(0) = \lim_{n\to\infty} \phi_n'(0) = \E [f \div_\gamma W].
    \]
\end{proof}

Next, we do a similar computation for the objective function:
\begin{lemma}
    For any bounded, measurable function $f: \R^n \to \R^k$ and any tame vector field $W$,
    $\E_{\bx \sim_\rho \by}[\inr{\spatial tWf (\bx)}{\spatial tWf(\by)}]$ is differentiable in $t$ and satisfies
    \[
        \left.\frac{d}{dt}\right|_{t=0} \E_{\bx \sim_\rho \by}[\inr{\spatial tWf(\bx)}{\spatial tWf(\by)}]
            = 2\E [\inr{f}{\U_\rho f} \div_\gamma W + \inr{f}{D_W \U_\rho f}].
    \]
\end{lemma}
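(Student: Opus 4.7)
The plan is to mirror the proof of the preceding lemma: first establish the identity for smooth compactly supported $f$ by differentiating under the integral and integrating by parts via the Gaussian divergence, then extend to general bounded $f$ using the uniform-in-$t$ approximation from Lemma~\ref{lem:uniform-approximation}.

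For $f \in \calC_c^1$, the smoothness and compact support of both $f$ and $W$ allow one to differentiate under the integral using $\left.\frac{d}{dt}\right|_{t=0}\spatial tWf(x) = -D_{W(x)}f(x)$ uniformly in $x$. Together with the symmetry in $(\bx,\by)$ and the defining property of $\U_\rho$, this yields
\[
\left.\frac{d}{dt}\right|_{t=0} \E_{\bx \sim_\rho \by}[\inr{\spatial tWf(\bx)}{\spatial tWf(\by)}] = -2\,\E[\inr{D_W f}{\U_\rho f}].
\]
Now apply the product rule $\div_\gamma(\inr{f}{\U_\rho f} W) = \inr{f}{\U_\rho f} \div_\gamma W + \inr{D_W f}{\U_\rho f} + \inr{f}{D_W \U_\rho f}$, integrate against $d\gamma$, and use that $\int \div_\gamma V\, d\gamma = 0$ for compactly supported $V$ to rewrite $-\E[\inr{D_W f}{\U_\rho f}]$ as $\E[\inr{f}{\U_\rho f} \div_\gamma W] + \E[\inr{f}{D_W \U_\rho f}]$, giving the claimed formula. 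The semigroup identity $\spatial sW{\spatial tWf} = \spatial{s+t}Wf$ immediately upgrades this to the derivative at every $t$: for $f \in \calC_c^1$,
\[
\frac{d}{dt} \E_{\bx \sim_\rho \by}[\inr{\spatial tWf(\bx)}{\spatial tWf(\by)}] = 2\,\E\bigl[\inr{\spatial tWf}{\U_\rho \spatial tWf} \div_\gamma W + \inr{\spatial tWf}{D_W \U_\rho \spatial tWf}\bigr].
\]

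To handle arbitrary bounded, measurable $f$, choose a uniformly bounded sequence $f_n \in \calC_c^\infty$ as in Lemma~\ref{lem:uniform-approximation}, so $\spatial tW{f_n} \to \spatial tWf$ in $L^2(\gamma)$ uniformly for $t \in [-1,1]$. Writing $\phi(t)$ and $\phi_n(t)$ for the objective applied to $f$ and $f_n$, the uniform $L^\infty$ bound plus Cauchy--Schwarz yields $\phi_n \to \phi$ uniformly on $[-1,1]$. For the derivatives, the first summand in the Step~2 formula converges uniformly because the $L^\infty$ bound on $\spatial tW{f_n}$, the contractivity of $\U_\rho$ in $L^2$, and the boundedness of $\div_\gamma W$ (from tameness) reduce everything to $L^2$ differences that are uniformly small. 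The second summand is the delicate part: one must show $\E[\inr{\spatial tW{f_n}}{D_W \U_\rho \spatial tW{f_n}}] \to \E[\inr{\spatial tWf}{D_W \U_\rho \spatial tWf}]$ uniformly in $t$. Here Lemma~\ref{lem:gradient-bound} is essential: it bounds $\|\nabla \U_\rho (\spatial tW{f_n} - \spatial tWf)\|_{L^2(\gamma)}$ by a constant (depending only on $\rho$) times $\|\spatial tW{f_n} - \spatial tWf\|_{L^2(\gamma)}$, and since $D_W \U_\rho g = \inr{W}{\nabla \U_\rho g}$ with $W$ uniformly bounded, the resulting $L^2$ bound combines with the uniform $L^\infty$ bound on $\spatial tW{f_n}$ to give uniform convergence. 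A standard calculus argument (uniform convergence of $\phi_n$ together with uniform convergence of $\phi'_n$) concludes that $\phi$ is differentiable with the claimed derivative; evaluating at $t=0$ finishes the proof.

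The main obstacle is exactly this uniform-in-$t$ control of the derivative term involving $D_W \U_\rho \spatial tWf$, since for non-smooth $f$ this quantity only exists because $\U_\rho$ smooths $\spatial tWf$. The key ingredient making the argument work is that Lemma~\ref{lem:gradient-bound} controls this derivative by an $L^2$ norm, while Lemma~\ref{lem:uniform-approximation} provides $L^2$ approximation that is uniform in $t \in [-1,1]$; combining these two tools sidesteps any need to differentiate $f$ itself.
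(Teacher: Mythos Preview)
Your proposal is correct and follows essentially the same approach as the paper: compute the derivative for smooth compactly supported $f$, integrate by parts via the Gaussian divergence (the paper does this coordinate-wise using $\div_\gamma((\U_\rho f_i)W)$, while you use the equivalent product rule on $\div_\gamma(\inr{f}{\U_\rho f}W)$), use the flow semigroup to get the formula at all $t$, and then pass to general bounded $f$ by uniform-in-$t$ approximation, with Lemma~\ref{lem:gradient-bound} controlling the $D_W\U_\rho$ term. Your treatment of the approximation step is in fact more explicit than the paper's.
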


\begin{proof}
    If $f$ is $\calC_c^\infty$, we compute by calculus that
    \begin{align*}
        \left.\frac{d}{dt}\right|_{t=0} \E_{\bx \sim_\rho \by}[\inr{\spatial tWf(\bx)}{\spatial tWf(\by)}]
            &= -2 \E_{\bx \sim_\rho \by}[\inr{D_{W(\bx)} f(\bx)}{f(\by)}] \\
            &= -2 \E[\inr{D_{W} f}{\U_\rho f}]
    \end{align*}
    For each coordinate, we integrate by parts using~\eqref{eq:derivative}:
    \[
        -2 \E  [D_{W} f_i \cdot \U_\rho f_i]
        = 2\E [f_i \div_\gamma( (\U_\rho f_i) W)]
        = 2\E [f_i \U_\rho f_i \div_\gamma(W) + f_i D_W \U_\rho f_i].
    \]
    Summing over $i$ completes the proof for $f \in \calC_c^\infty$; note
    that because $\spatial sW{\spatial tWf} = \spatial{s+t}Wf$, we also have
    the derivative at $t \ne 0$:
    \begin{equation*}
        \frac{d}{dt} \E_{\bx \sim_\rho \by}[\inr{\spatial tWf(\bx)}{\spatial tWf(\by)}]
        = 2\E [\inr{\spatial tWf}{\U_\rho \spatial tWf} \div_\gamma W + \inr{\spatial tWf}{D_W \U_\rho \spatial tWf}].
    \end{equation*}

    Now consider a bounded function $f$ and choose an approximating sequence $f_n$ as in
    Lemma~\ref{lem:uniform-approximation}. Letting
    $\phi(t) = \E_{\bx \sim_\rho \by} [\inr{\spatial tWf(\bx)}{\spatial tWf(\by)}]$
    and
    $\phi_n(t) = \E_{\bx \sim_\rho \by} [\inr{\spatial tW{f_n}(\bx)}{\spatial tW{f_n}(\by)}]$,
    we note from the formula above that $\phi_n'(t)$ is uniformly bounded and converging uniformly (for $t \in [-1, 1]$,
    where the boundedness of the term involving $D_W \U_\rho \spatial tW{f_n}$
    follows from Lemma~\ref{lem:gradient-bound}).
    Since $\phi_n(t) \to \phi(t)$ uniformly for $t \in [-1, 1]$,
    it follows that $\phi(t)$ is differentiable and $\phi'(0) = \lim_{n\to\infty} \phi_n'(0)$.
\end{proof}

\subsection{Second variation}

Here, we establish the second-order optimality condition for spatial perturbations.
First, let us observe that everything is twice differentiable in $t$:
\begin{lemma}
    For any measurable $f: \R^n \to B^k$ and any tame vector field $W$,
    both $\E[\spatial tWf]$ and $\E_{\bx \sim_\rho \by} [\inr{\spatial tWf(\bx)}{\spatial tWf(\by)}]$
    are twice differentiable in $t$.
\end{lemma}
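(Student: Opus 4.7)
My plan is to extend the strategy of the first-derivative lemmas: derive explicit second-derivative formulas for smooth compactly supported $f$ by iterating integration by parts, and then extend to bounded measurable $f: \R^n \to B^k$ using the uniform-in-$t$ approximation from Lemma~\ref{lem:uniform-approximation}. The semigroup identity $\spatial{s+t}Wf = \spatial sW \spatial tWf$ together with the first-derivative formulas already established means I only need to differentiate the first-derivative expressions once more in $t$, then argue that the resulting formulas extend continuously to general bounded $f$.

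For $\phi(t) = \E[\spatial tWf]$ and $f \in \calC_c^\infty$, differentiating the first-derivative identity once more and integrating by parts against $\div_\gamma$ gives
\begin{equation*}
\phi''(t) = \E\bigl[(\spatial tWf)\,\div_\gamma(\div_\gamma W \cdot W)\bigr],
\end{equation*}
where the second factor is bounded and smooth because $W$ is tame. Taking $f_n \in \calC_c^\infty$ as in Lemma~\ref{lem:uniform-approximation}, Cauchy--Schwarz yields
\begin{equation*}
\sup_{t \in [-1,1]}|\phi_n''(t) - \phi_m''(t)| \le \sup_{t \in [-1,1]}\|\spatial tW(f_n - f_m)\|_{L^2(\gamma)}\cdot\|\div_\gamma(\div_\gamma W \cdot W)\|_{L^2(\gamma)} \to 0,
\end{equation*}
so $\phi_n''$ converges uniformly on $[-1,1]$; combined with the already-established uniform convergence of $\phi_n'$, this shows $\phi$ is twice differentiable. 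For $\psi(t) = \E_{\bx \sim_\rho \by}[\inr{\spatial tWf(\bx)}{\spatial tWf(\by)}]$, I would run the same scheme: differentiate its first-derivative formula a second time for smooth $f$. The product and chain rules produce terms involving $\spatial tWf$ and $\U_\rho \spatial tWf$ together with their first- and second-order spatial derivatives, paired with tame functions built from $W$. Lemma~\ref{lem:gradient-bound} bounds every occurrence of a derivative of $\U_\rho g$ in $L^2(\gamma)$ by $\E[g^2]$, so each term can be written as an $L^2$-pairing to which Lemma~\ref{lem:uniform-approximation} applies, yielding a uniformly Cauchy family of second derivatives.

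The main obstacle I anticipate is the bookkeeping for $\psi''$: several terms arise from the product rule between the flow derivatives and the noise operator, and verifying that each has an $L^2$-bounded tame factor takes some care. An alternative that sidesteps the combinatorics is to push the spatial perturbation entirely onto the joint Gaussian density via a change of variables, writing
\begin{equation*}
\psi(t) = \iint \inr{f(u)}{f(v)}\,M_t(u,v)\,du\,dv, \qquad M_t(u,v) = G_\rho(F_t(u), F_t(v))\,|DF_t(u)|\,|DF_t(v)|.
\end{equation*}
Because $W$ is tame, Liouville's formula controls $|DF_t|$ and its $t$-derivatives uniformly in $y$, while the uniform bound on $|F_t(y) - y|$ for $t \in [-1,1]$ produces a Gaussian envelope of the form $C\exp(-((|y|-C)_+)^2/2)$ that dominates $\partial_t^k M_t$ by an integrable function independent of $t$. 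Since $|f| \le 1$, dominated convergence then permits differentiation under the integral as many times as desired, giving the claim without further approximation; the analogous computation handles $\phi$.
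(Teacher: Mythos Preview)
Your alternative approach---changing variables so that the flow acts on the joint Gaussian density and then differentiating under the integral via dominated convergence---is exactly what the paper does; its entire proof is a one-sentence remark to this effect. Your primary approach via iterated integration by parts and the uniform approximation of Lemma~\ref{lem:uniform-approximation} would also work and parallels the first-derivative lemmas, but the paper simply takes the shorter route you sketch at the end.
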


\begin{proof}
    This can be seen simply by writing out the definitions and changing
    variables so that the derivatives fall only on the Gaussian kernel, much as in the proof
    of Lemma~\ref{lem:gradient-bound}.
\end{proof}

\begin{lemma}\label{lem:stability}
    Suppose $f$ is optimally stable. If $\rho > 0$ and $W$ is a tame vector field such that
    \[
        \left.\frac{d}{dt}\right|_{t=0} \E [\spatial tWf] = 0,
    \]
    then
    \[
        \left.\frac{d^2}{dt^2}\right|_{t=0}
        \E_{\bx \sim_\rho \by} [\inr{\spatial tWf(\bx)}{\spatial tWf(\by)}]
        - \left\langle
            \lambda
            ,
            \left.\frac{d^2}{dt^2}\right|_{t=0} \E[\spatial tWf]
        \right \rangle
        \le 0.
    \]
    If $\rho < 0$ then under the same assumptions, the left hand side above is \emph{at least} zero.
\end{lemma}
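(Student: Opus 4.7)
The plan is to reduce the constrained second-order variation to a standard Lagrangian second-order calculation, by combining the spatial perturbation $\spatial tWf$ with a $k$-parameter family of value perturbations $W_1,\ldots,W_k$ chosen as in Lemma~\ref{lem:spanning-perturbations}, so that the $k$ vectors $\left.\tfrac{d}{d\alpha_i}\right|_0 \E[\valued {\alpha_i}{W_i}{f}]$ span $\R^k$. Setting
$$g_{t,\alpha}(x) := \tilde N\!\bigl((\spatial tWf)(x) + \textstyle\sum_{i=1}^k \alpha_i W_i(x)\bigr),\quad L(t,\alpha) := \E[g_{t,\alpha}],\quad J(t,\alpha) := \E_{\bx \sim_\rho \by}\inr{g_{t,\alpha}(\bx)}{g_{t,\alpha}(\by)},$$
the implicit function theorem applied to $L$ at $(0,0)$ will produce a $\calC^2$ curve $\alpha : (-\delta,\delta) \to \R^k$ with $\alpha(0) = 0$ and $L(t,\alpha(t)) \equiv 0$. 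The hypothesis $\left.\tfrac{d}{dt}\right|_{t=0}\E[\spatial tWf] = 0$ says precisely that $L_t(0,0) = 0$, so differentiating $L(t,\alpha(t)) \equiv 0$ once at $t=0$ forces $\alpha'(0) = 0$.

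Since $g_{t,\alpha(t)}$ is a mean-zero competitor and $f$ is optimally stable, the scalar function $t \mapsto J(t,\alpha(t))$ has a local maximum at $t=0$ when $\rho > 0$ (and a local minimum when $\rho < 0$), so its second derivative at $0$ has the correct sign. To convert this into the statement of the lemma, I would invoke the Lagrange multiplier $\lambda$ from Lemma~\ref{lem:first-variation-lagrangian}. The same $\lambda$ in fact governs \emph{all} admissible one-parameter families, not just value perturbations: given any perturbation with objective derivative $a$ and constraint derivative $b$, one can cancel $b$ via a linear combination of the $W_i$, apply Lemma~\ref{lem:first-variation} to conclude that the corrected perturbation has zero objective derivative, and then recover $a = \inr{\lambda}{b}$. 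Applied to the $\alpha_i$-perturbations, this gives $J_\alpha(0,0) = \lambda^{\top} L_\alpha(0,0)$. Differentiating $L(t,\alpha(t)) \equiv 0$ twice at $t=0$ and using $\alpha'(0) = 0$ yields $L_\alpha(0,0)\,\alpha''(0) = -L_{tt}(0,0)$, and then the chain rule for $J$ collapses to
$$\left.\tfrac{d^2}{dt^2}\right|_{t=0} J(t,\alpha(t)) = J_{tt}(0,0) + J_\alpha(0,0)\,\alpha''(0) = J_{tt}(0,0) - \inr{\lambda}{L_{tt}(0,0)}.$$
Since $\tilde N$ acts as the identity on $B^k$, we have $g_{t,0} = \spatial tWf$, so $J_{tt}(0,0)$ and $L_{tt}(0,0)$ are exactly the pure spatial-perturbation second derivatives that appear on the left-hand side of the lemma, and the two cases follow from the corresponding signs of the second derivative.

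The main obstacle is showing enough joint regularity of $L$ and $J$ near $(0,0)$ to license both the implicit function theorem and the twice-differentiability of $t \mapsto J(t,\alpha(t))$. The $t$-direction is essentially covered by the first-derivative arguments already in the section, which push all $t$-derivatives onto the smooth Gaussian kernel via change of variables and control the $\U_\rho$-derivatives using Lemma~\ref{lem:gradient-bound}. The delicate direction is $\alpha$, because $\tilde N$ is only piecewise $\calC^2$ across the boundary $\{\|\cdot\| = 1\}$, and the first-order Taylor expansion~\eqref{eq:taylor-on-values} already contains the non-smooth correction $\inr{W}{f}_+ f\,\mathbf{1}_{\{\|f\|=1\}}$. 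My plan would be to argue by dominated convergence, in the spirit of Lemmas~\ref{lem:uniform-approximation} and~\ref{lem:tangential-derivative}, that the boundary contribution is pointwise uniformly bounded and that the remainders vanish fast enough after integration against the Gaussian density, so that $L$ and $J$ are classically $\calC^2$ even when the integrand is not; one can also exploit the pointwise description of $f$ on $\{\|f\| = 1\}$ given by Lemma~\ref{lem:first-order-conditions} to handle the boundary term. Once this regularity is in hand, the Lagrangian computation above is fully rigorous.
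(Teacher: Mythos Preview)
Your proposal is correct and essentially identical to the paper's proof: combine the spatial flow with the value-perturbations $W_1,\dots,W_k$ from Lemma~\ref{lem:spanning-perturbations}, apply the implicit function theorem to obtain a constraint-preserving curve with $\alpha'(0)=0$, and use the Lagrangian identity $J_\alpha=\lambda^\top L_\alpha$ to collapse $\left.\tfrac{d^2}{dt^2}\right|_{0} J(t,\alpha(t))$ to $J_{tt}(0,0)-\inr{\lambda}{L_{tt}(0,0)}$. On your regularity concern: because $\alpha'(0)=0$, only \emph{first} $\alpha$-derivatives and pure second $t$-derivatives of $L,J$ enter the final expression, so the non-smoothness of $\tilde N$ is milder than you fear; the paper handles the $t$-second-derivatives by a short change-of-variables lemma and does not dwell further on the joint $C^2$ regularity.
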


This motivates the definition:
\begin{definition}
    For a tame vector field $W$, define
    the \emph{index form}
    \[
        Q(W) =
        \left.\frac{d^2}{dt^2}\right|_{t=0}
        \E_{\bx \sim_\rho \by} [\inr{\spatial tWf(\bx)}{\spatial tWf(\by)}]
        - \left\langle
            \lambda
            ,
            \left.\frac{d^2}{dt^2}\right|_{t=0} \E[\spatial tWf]
        \right \rangle
     \]
\end{definition}

\begin{proof}
    Take vector fields $W_1, \dots, W_m$ as in the proof of Lemma~\ref{lem:first-variation},
    and define
    \[
        f_{\alpha,\beta}(x) = N(f(F_\beta(x)) + \sum_i \alpha_i W_i(x)),
    \]
    where $F_\beta$ is the flow along the vector field $W$.
    As in the proof of Lemma~\ref{lem:first-variation}, defining $L(\alpha, \beta) = \E [f_{\alpha,\beta}]$
    implies that $\frac{\partial L}{\partial \beta} (0, 0) = 0$, while $D L(0, 0)$ has rank $k$.
    Therefore (as in the proof of Lemma~\ref{lem:first-variation}) we can find smooth curves
    $\alpha(t) \in \R^k$ and $\beta(t) \in \R$ such that $\alpha'(0) = 0$, $\beta'(0) \ne 0$,
    and $L(\alpha(t), \beta(t)) \equiv 0$ for $t$ in some interval $(-\epsilon, \epsilon)$.
    Taking second derivatives with respect to $t$, we have
    \begin{equation}\label{eq:stability-second-derivative-of-constraints}
        0 = \left.\frac{d^2}{dt^2}\right|_{t=0} L(\alpha(t), \beta(t)) = \sum_i \alpha_i''(0) \frac{\partial L}{\partial \alpha_i}(0,0) + (\beta'(0))^2 \frac{\partial^2 L}{\partial \beta^2}(0,0)
    \end{equation}

    Define $J(\alpha, \beta) = \E_{\bx \sim_\rho \by} \inr{f_{\alpha,\beta}(\bx)}{f_{\alpha,\beta}(\by)}$ and
    let $K(t) = J(\alpha(t), \beta(t))$.
    First, note that (because $\alpha'(0) = 0$) $K'(0) = \beta'(0)
    \frac{\partial J}{\partial \beta}(0,0)$.  Since $\beta'(0) \ne 0$, we must
    have $\frac{\partial J}{\partial \beta}(0, 0) = 0$ -- otherwise, there
    would be some small $t$ (either positive or negative) giving a contradiction to the optimality of $f$.
    Taking another derivative, the optimality of $f$ implies that
    \[
        0 \ge K''(0)
          = \sum_i \alpha_i''(0) \frac{\partial J}{\partial \alpha_i}(0, 0) + (\beta'(0))^2 \frac{\partial^2 J}{\partial \beta^2}(0,0).
    \]
    Finally, recall from Lemma~\ref{lem:first-variation-lagrangian} that $\frac{\partial J}{\partial \alpha_i} = \inr{\lambda}{\frac{\partial L}{\partial \alpha_i}}$; going back to~\eqref{eq:stability-second-derivative-of-constraints}, we obtain
    \[
        0 \ge K''(0) = -(\beta'(0))^2 \left\langle \lambda,
        \frac{\partial^2 L}{\partial \beta^2}(0,0)
        \right \rangle
        + (\beta'(0))^2 \frac{\partial^2 J}{\partial \beta^2}(0,0);
    \]
    dropping the (positive) $(\beta'(0))^2$ terms and untangling the notation, this is equivalent to the claim.
\end{proof}

\subsection{The index form for translations}

Here, we compute the index form $Q$ for constant vector fields $W \equiv w$,
giving the rigorous, integrated-by-parts analogue of~\eqref{eq:second-variation-for-smooth}
(at least, for $W \equiv w$, which is all that we will need).

\begin{lemma}\label{lem:Q}
    For any measurable $f: \R^n \to B^k$ and any $w \in \R^n$,
    \begin{equation*}
        Q(w)
        = 2 \sum_i \left(
            \E [f_i \div_\gamma(\div_\gamma((\U_\rho f_i - \lambda/2) w) w)]
            - \frac{1}{\rho} \E [f_i \div_\gamma((D_w \U_\rho f_i) w)]
        \right).
    \end{equation*}
\end{lemma}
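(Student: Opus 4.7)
The plan is to first establish the formula for $f \in \calC_c^\infty$, where direct calculus applies, and then extend to bounded measurable $f$ via the uniform approximation of Lemma~\ref{lem:uniform-approximation}. For the constant field $W\equiv w$, the flow is $F_t(x)=x+tw$, so $\spatial twf(x)=f(x-tw)$, and the chain rule gives $\frac{d}{dt}\spatial twf = -D_w\spatial twf$ together with $\frac{d^2}{dt^2}\big|_{t=0}\spatial twf = D_w^2 f$. In particular $\frac{d^2}{dt^2}\big|_{t=0}\E[\spatial twf]=\E[D_w^2 f]$, so the Lagrangian term contributes $-\inr{\lambda}{\E[D_w^2 f]}$ to $Q(w)$, and the remaining task is to compute the second derivative of $\E[\inr{\spatial twf}{\U_\rho\spatial twf}]$.

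For smooth $f$, differentiating twice (and using self-adjointness of $\U_\rho$) gives, at $t=0$,
\[
    2\E[\inr{D_w^2 f}{\U_\rho f}] + 2\E[\inr{D_w f}{\U_\rho D_w f}].
\]
The key identity $\U_\rho D_w f = \rho^{-1} D_w\U_\rho f$, obtained by differentiating the integral representation $\U_\rho f(x)=\E_\by[f(\rho x+\sqrt{1-\rho^2}\,\by)]$ under the integral, rewrites the cross term as $\frac{2}{\rho}\E[\inr{D_w f}{D_w\U_\rho f}]$. The next step is to use~\eqref{eq:derivative} to integrate by parts so that all derivatives land on $\U_\rho f$, which is smooth by Lemma~\ref{lem:gradient-bound}. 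Concretely, applying~\eqref{eq:derivative} once with the vector field $(D_w\U_\rho f_i)w$ turns the cross term into $-\sum_i\E[f_i\,\div_\gamma((D_w\U_\rho f_i)w)]$; applying it twice ---first with $(\U_\rho f_i)w$, then with $\div_\gamma((\U_\rho f_i)w)\,w$--- turns $\E[D_w^2 f_i\cdot \U_\rho f_i]$ into $\E[f_i\,\div_\gamma(\div_\gamma((\U_\rho f_i)w)\,w)]$. An identical double application to $\E[D_w^2 f_i]$ yields $\E[f_i\,\div_\gamma(\div_\gamma(w)\,w)]$, so the Lagrangian piece becomes $-\sum_i\lambda_i\E[f_i\,\div_\gamma(\div_\gamma(w)\,w)] = 2\sum_i\E[f_i\,\div_\gamma(\div_\gamma((-\lambda_i/2)w)\,w)]$. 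Linearity of $\div_\gamma$ then merges this with the $\U_\rho f_i$ contribution, producing precisely the claimed symmetric form $\div_\gamma(\div_\gamma((\U_\rho f_i-\lambda_i/2)w)\,w)$.

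To remove the smoothness hypothesis, take smooth compactly supported $f_n$ converging to $f$ as in Lemma~\ref{lem:uniform-approximation}. The right-hand side of the formula is continuous in $f$ under $L^2(\gamma)$ convergence: every appearance of $f_i$ is paired against a function constructed from $\U_\rho f_i$, its first and second derivatives, and polynomial factors in $x$ coming from $\div_\gamma w$, all uniformly bounded in $L^2$ by Lemma~\ref{lem:gradient-bound}. The left-hand side $Q(w)$ is also $L^2$-continuous in $f$, by essentially the same argument used in the preceding lemmas of the section to establish twice-differentiability of $\E[\spatial twf]$ and $\E_{\bx\sim_\rho\by}[\inr{\spatial twf(\bx)}{\spatial twf(\by)}]$ with bounds uniform over $t$ in a neighborhood of $0$. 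Passing to the limit $n\to\infty$ then gives the identity in general.

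The hardest part, I expect, is the bookkeeping in the two rounds of integration by parts: one must verify that the objective's $\U_\rho f_i$ contribution and the Lagrangian's $\lambda_i/2$ contribution combine inside the innermost $\div_\gamma$ with exactly the right coefficient ---the factor of $2$ appearing in front of the sum is what makes $\lambda_i$ split into $2\cdot(\lambda_i/2)$--- and that no boundary terms are lost when derivatives are transferred off $f$. The approximation step is routine given the preparatory lemmas, but one must also confirm that the identity holds for \emph{every} $w\in\R^n$ simultaneously, which follows since both sides are continuous functions of $w$ for fixed $f$ and the approximating statement holds on a dense set.
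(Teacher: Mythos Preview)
Your proposal is correct and follows essentially the same route as the paper: compute the second derivatives directly for $f\in\calC_c^\infty$ using $\spatial twf(x)=f(x-tw)$, invoke the commutation $D_w\U_\rho f=\rho\,\U_\rho D_w f$ to rewrite the cross term, integrate by parts via~\eqref{eq:derivative} to push all derivatives onto $\U_\rho f_i$, and then extend to bounded measurable $f$ by the approximation of Lemma~\ref{lem:uniform-approximation}. The only cosmetic difference is that the paper groups $\U_\rho f_i$ and $-\lambda_i/2$ together before integrating by parts (since both multiply $D_w^2 f_i$), whereas you treat them separately and merge afterward by linearity; this is of course equivalent. Your closing remark about needing the identity ``for every $w$ simultaneously'' via density in $w$ is unnecessary---the argument is for a fixed $w$, with the approximation taken in $f$---but it does no harm.
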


\begin{proof}
    Note that $\spatial twf(x) = f(x - tw)$.
    If $f \in \calC^\infty$, we simply compute
    \[
        \left.\frac{d^2}{dt^2}\right|_{t=0} \E[\spatial twf]
            = \E [D_w (D_w f)]
    \]
    and
    \begin{align*}
        \left.\frac{d^2}{dt^2}\right|_{t=0}
        & \E_{\bx \sim_\rho \by} [\inr{\spatial twf(\bx)}{\spatial twf(\by)}] \\
        &= 2\E[ \inr{D^2_{w,w} f(\bx)}{f(\by)} + \inr{D_w f(\bx)}{D_w f(\by)}] \\
        &= 2\E[\inr{D^2_{w,w} f}{\U_\rho f} + \inr{D_w f}{\U_\rho D_w f}] \\
        &= 2\E[\inr{D^2_{w,w} f}{\U_\rho f}] + \frac{2}{\rho} \E[\inr{D_w f}{D_w \U_\rho f}],
    \end{align*}
    and hence
    \begin{equation}\label{eq:Q-for-smooth-functions}
        Q(w) = 
        2\E [\inr{D^2_{w,w} f}{\U_\rho f - \lambda/2}]+ \frac{2}{\rho} \E[\inr{D_w f}{D_w \U_\rho f}].
     \end{equation}
    Integrating the first term by parts twice gives
    \begin{align*}
        \E [D^2_{w,w} f_i \cdot (\U_\rho f_i - \lambda_i/2)]
        &= -\E [D_w f_i \cdot \div_\gamma((\U_\rho f_i - \lambda_i/2) w)] \\
        &= \E [f_i \div_\gamma(\div_\gamma((\U_\rho f_i - \lambda_i/2) w) w)];
    \end{align*}
    integrating the second term of~\eqref{eq:Q-for-smooth-functions} by parts gives
    \[
        \E [D_w f_i \cdot D_w \U_\rho f_i] = -\E [f_i \div_\gamma((D_w \U_\rho f_i) w)].
    \]
    Overall, we obtain
    \begin{equation*}
        Q(w)
        = 2 \sum_i \E\left[
            f_i \div_\gamma(\div_\gamma((\U_\rho f_i - \lambda_i/2) w) w)
            - \frac{1}{\rho} f_i \div_\gamma((D_w \U_\rho f_i) w)
        \right].
    \end{equation*}

    By the familiar approximation argument (noting that the terms involving
    second derivatives of $\U_\rho f_i$ are controlled by Lemma~\ref{lem:gradient-bound}),
    the same formula applies for all bounded, measurable functions $f$.
\end{proof}

Our formula for $Q(w)$ in Lemma~\ref{lem:Q} doesn't require $f$ to be optimally stable.
However, if $f$ is optimally stable, the first-order conditions
allow us to find a simpler formula.

\begin{lemma}\label{lem:Q-simpler}
    If $f: \R^n \to B^k$ is optimally stable then for any $w \in \R^n$,
    \[
        Q(w) = 2\frac{\rho - 1}{\rho} \sum_i \E[f_i \div_\gamma ((D_w \U_\rho f_i) w)].
    \]
\end{lemma}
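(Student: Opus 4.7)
The plan is to simplify Lemma~\ref{lem:Q} by exploiting the first-order Lagrangian condition from Lemma~\ref{lem:first-order-conditions}. Writing $\phi_i := \U_\rho f_i - \lambda_i/2$ and $\mu := |\U_\rho f - \lambda/2|$, that condition reads $\phi_i = \mu f_i$ almost everywhere. Since $\U_\rho f$ is smooth by Lemma~\ref{lem:gradient-bound}, both $\phi$ and $\mu^2 = \sum_i \phi_i^2$ are smooth, and $\mu$ is Lipschitz; moreover $|f| = 1$ wherever $\mu > 0$.

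First I would expand the inner double-divergence in Lemma~\ref{lem:Q} pointwise. For a constant vector field $w$ one has $\div_\gamma w = -\inr{w}{x}$, so $\div_\gamma(\phi_i w) = D_w \phi_i - \phi_i \inr{w}{x}$. Applying $\div_\gamma(\cdot\, w)$ again and using $D_w^2 \phi_i = \div_\gamma((D_w \phi_i) w) + D_w \phi_i \inr{w}{x}$ to recognize the single-divergence piece, a direct computation yields
\[
    \div_\gamma(\div_\gamma(\phi_i w) w) = \div_\gamma((D_w \phi_i) w) + \phi_i \inr{w}{x}^2 - D_w \phi_i \inr{w}{x} - \phi_i |w|^2.
\]
Plugging this into Lemma~\ref{lem:Q} and using $D_w \phi_i = D_w \U_\rho f_i$, the target identity $Q(w) = 2\frac{\rho-1}{\rho}\sum_i \E[f_i \div_\gamma((D_w \U_\rho f_i) w)]$ (i.e.\ $2 - 2/\rho$ matches $2(\rho-1)/\rho$) reduces to showing
\[
    \sum_i \int f_i \bigl[\phi_i \inr{w}{x}^2 - D_w \phi_i \inr{w}{x} - \phi_i |w|^2\bigr]\, d\gamma = 0. \quad (\star)
\]

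Next I would apply two pointwise identities coming from the first-order condition: $\sum_i f_i \phi_i = \mu$ (immediate, since $\phi_i = \mu f_i$ and $|f|^2 = 1$ where $\mu > 0$, while both sides vanish where $\mu = 0$) and $\sum_i f_i D_w \phi_i = D_w \mu$. On $\{\mu > 0\}$ the second identity follows by writing $f_i = \phi_i/\mu$ and differentiating $\mu^2 = \sum_i \phi_i^2$ to get $\sum_i \phi_i D_w \phi_i = \mu D_w \mu$. With these substitutions, $(\star)$ becomes
\[
    \int \mu \inr{w}{x}^2\, d\gamma \;-\; \int D_w \mu \cdot \inr{w}{x}\, d\gamma \;-\; \int \mu |w|^2\, d\gamma = 0.
\]
An integration by parts using $\int \mu \, D_w \inr{w}{x}\, d\gamma = -\int \inr{w}{x} \div_\gamma(\mu w)\, d\gamma = -\int \inr{w}{x}(D_w \mu - \mu \inr{w}{x})\, d\gamma$ shows $\int D_w \mu \cdot \inr{w}{x} d\gamma = \int \mu(\inr{w}{x}^2 - |w|^2)\, d\gamma$, and all three terms cancel exactly.

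The main obstacle, and the only delicate point, is justifying $\sum_i f_i D_w \phi_i = D_w \mu$ on the set $\{\mu = 0\}$, because the first-order condition leaves $f$ completely undetermined there. My plan is to invoke the Lebesgue density theorem: at almost every point of $\{\mu = 0\} = \bigcap_i \{\phi_i = 0\}$, that point is a density point of each $\{\phi_i = 0\}$, and since $\phi_i$ is $\calC^1$, its gradient must vanish there. Hence $D_w \phi_i = 0$ and $D_w \mu = 0$ almost everywhere on $\{\mu = 0\}$, so both sides of the claimed identity vanish there and agree globally. Because all integration-by-parts steps are performed against the smooth functions $\phi_i$, $D_w\phi_i$, and $\mu$ (not against $f$ itself), the integrability control from Lemma~\ref{lem:gradient-bound} is enough to make every step rigorous without requiring smoothness of $f$.
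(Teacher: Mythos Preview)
Your argument is correct. Both your proof and the paper's reduce to the same core identity: after one application of the product rule, the paper must show
\[
\sum_i \E\bigl[f_i\,\div_\gamma\bigl((\U_\rho f_i - \lambda_i/2)\,(\div_\gamma w)\, w\bigr)\bigr]=0,
\]
and expanding $\div_\gamma w = -\inr{w}{x}$ one sees this is exactly your $(\star)$. The difference lies in how this vanishing is established. The paper invokes Lemma~\ref{lem:tangential-derivative} with $\psi=\mu$ and $W=(\div_\gamma w)\,w$; that lemma packages the statement that $\inr{f}{D_W f}=0$ on $\{|f|=1\}$, and its proof carries out a careful $\epsilon$-regularization of $f$ to cope with the fact that $f$ is undetermined on $\{\mu=0\}$. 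You instead use the first-order condition $\phi=\mu f$ to eliminate $f$ entirely, turning $(\star)$ into an identity involving only the Lipschitz scalar $\mu$, which then falls out of a single Gaussian integration by parts. Your Lebesgue density argument for $D_w\phi_i=0$ and $D_w\mu=0$ a.e.\ on $\{\mu=0\}$ is exactly what is needed to make the pointwise identity $\sum_i f_i\,D_w\phi_i=D_w\mu$ hold globally.

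What each route buys: the paper's version is a one-line citation at the point of use, but it leans on Lemma~\ref{lem:tangential-derivative}, whose proof is itself a nontrivial approximation argument (and, incidentally, is stated for \emph{tame} vector fields while $W=-\inr{w}{x}\,w$ is unbounded, so a small extension is implicitly needed there). Your route is more computational but self-contained: by passing immediately to the smooth/Lipschitz objects $\phi$ and $\mu$, you never need to differentiate $f$ and hence avoid Lemma~\ref{lem:tangential-derivative} altogether. The only place you should be slightly more explicit is in justifying the Gaussian integration by parts against $D_w\mu$, but this is routine: approximate $\mu$ by $\mu_\varepsilon=\sqrt{|\phi|^2+\varepsilon^2}$, which is smooth with uniformly bounded gradient, and pass to the limit.
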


\begin{proof}
    The point is to show that
    \[
            \sum_i \E [f_i \div_\gamma(\div_\gamma((\U_\rho f_i - \lambda_i/2) w) w)]
            = \sum_i \E [f_i \div_\gamma((D_w \U_\rho f_i) w)];
    \]
    then the claim follows immediately.
    To show the identity above, note that (by the product rule)
    \[
        \div_\gamma ((\U_\rho f_i - \lambda_i/2) w) = (\U_\rho f_i - \lambda_i/2) \div_\gamma w
        + D_w \U_\rho f_i;
    \]
    plugging this in above, it suffices to show that
    \begin{equation}\label{eq:Q-simpler-subgoal}
        \sum_i \E [f_i \div_\gamma((\U_\rho f_i - \lambda_i/2) \div_\gamma w \cdot w)] = 0
    \end{equation}
    But Lemma~\ref{lem:first-variation-lagrangian} implies that $\U_\rho f -
    \lambda/2 = |\U_\rho f - \lambda/2| f$;~\eqref{eq:Q-simpler-subgoal} then
    follows from
    Lemma~\ref{lem:tangential-derivative} with $\psi = |\U_\rho f - \lambda/2|$ and $W = (\div_\gamma w) w$.
\end{proof}

\begin{proof}[Proof of Theorem~\ref{thm:dimension-reduction}]
    Suppose $f$ is optimally stable and suppose that $\rho \in (0, 1)$. By Lemma~\ref{lem:stability},
    for any $w$ with
    \[
        \left.\frac{d}{dt}\right|_{t=0} \E[\spatial twf] = 0,
    \]
    we have $Q(w) \le 0$. Using the formula for $Q$ in Lemma~\ref{lem:Q-simpler}, for such $w$ we have
    \[
        \sum_i \E[f_i \div_\gamma ((D_w \U_\rho f_i) w)] \ge 0.
    \]
    But then Lemma~\ref{lem:direction-independent} implies that $f(x)$ can be written as a function
    of $\Pi_{w^\perp} x$.

    Note that the map
    \[
        L(w) = \left.\frac{d}{dt}\right|_{t=0} \E[\spatial twf]
    \]
    is a linear map $\R^n \to \R^k$. Then $\ker L$ has dimension at least $n-k$
    After applying a change of coordinates
    in $\R^n$, we may assume that $\ker L$ contains the span of $e_{k+1}, \dots, e_n$,;
    then the previous paragraph
    implies that $f(x)$ depends only on $x_1, \dots, x_m$.
\end{proof}

\subsection{The case of negative $\rho$}

Many of the technical results we developed above apply to the case of negative $\rho$, with a few sign changes.
Notably, the sign in Lemma~\ref{lem:first-order-conditions} changes to
\[
|U_\rho f - \lambda/2| f = \lambda/2 - U_\rho f;
\]
and because the negative-$\rho$ case is a minimization problem instead of a maximization problem,
the sign of the second-order conditions flips also: if $\rho < 0$ then the final inequality of
Lemma~\ref{lem:stability} is reversed.

Because of these sign changes, the constant vector fields $w$ turn out not to contradict any stability.
In fact, with $\rho < 0$ then $Q(w) \le 0$ for every $w$, whether or not $\spatial twf$ preserves expectations
to first-order. It remains plausible that there are some other vector fields that will imply
Theorem~\ref{thm:dimension-reduction} in the case of negative $\rho$, but we were unable to find them.

\newpage
\part{Hardness of \qmaxcut}
The goal of this part is to use the results from \Cref{part:borell}
to derive hardness of approximation results for \qmaxcut
assuming the vector-valued Borell's inequality.
This part is organized as follows:
we begin with preliminaries in \Cref{sec:hardness-prelims}.
Then we show our integrality gaps in \Cref{sec:integrality-gaps} and our algorithmic gap in \Cref{sec:algorithmic-gap}.
Finally, in \Cref{sec:dictator-test} we develop a \emph{dictator test} for product states,
which is a crucial ingredient in our hardness proof, contained in \Cref{sec:ug-hardness}.

\section{Preliminaries}\label{sec:hardness-prelims}
\ignore{
\subsection{The \maxcut problem}

\begin{definition}[Weighted graph]
A \emph{weighted graph} $G = (V, E, w)$ is an undirected graph with weights on the edges specified by $w: E \rightarrow \R^{\geq 0}$.
The weights are assumed to sum to~$1$, i.e.\ $\sum_{e \in E} w(e) = 1$, and so they specify a probability distribution on the edges.
We will write $\boldsymbol{e} \sim E$ or $(\bu, \bv) \sim E$ for a random edge sampled from this distribution.
We will often use the word \emph{graph} as shorthand for weighted graph.
\end{definition}

\begin{definition}[\maxcut]\label{def:max-cut}
Given a graph $G$, \maxcut is the problem of partitioning the vertices of~$G$ into two sets
to maximize the probability that a random edge is cut.
The optimum value is given by
\begin{equation*}
\maxcut(G) = \max_{f:V \rightarrow \{\pm 1\}} \E_{(\bu, \bv) \sim E}[ \tfrac{1}{2} - \tfrac{1}{2} f(\bu) f(\bv)].
\end{equation*}
\end{definition}
}

\subsection{Pauli matrices}

\begin{definition}[Pauli matrices]\label{def:pauli}
The \emph{Pauli matrices} are the Hermitian matrices
\begin{equation*}
X =
\begin{pmatrix}
0 & 1\\
1 & 0
\end{pmatrix},
\quad
Y =
\begin{pmatrix}
0 & -i\\
i & 0
\end{pmatrix},
\quad
Z =
\begin{pmatrix}
1 & 0\\
0 & -1
\end{pmatrix}.
\end{equation*}
\end{definition}

\begin{notation}
We will generally use $P$ or $Q$ for a variable in $\{I, X, Y, Z\}$.
\end{notation}

\begin{proposition}[Properties of the Pauli matrices]\label{prop:pauli-props}
The Pauli matrices have the following properties.
\begin{enumerate}
\item $X^2 = Y^2 = Z^2 = I$.
\item $XY = i Z$, $YZ = iX$, and $ZX = i Y$.
\item $XY = -YX$, $YZ = - ZY$, and $ZX = - XZ$.
\item $\tr[X] = \tr[Y] = \tr[Z] = 0$.
\end{enumerate}
\end{proposition}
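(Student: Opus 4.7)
The plan is to verify each of the four properties by direct computation from the definitions in \Cref{def:pauli}. Since each Pauli matrix is a concrete $2 \times 2$ matrix, every assertion reduces to a finite matrix calculation.

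First I would handle property (4), since it is immediate from inspection: reading off the diagonal entries of $X$, $Y$, and $Z$ from \Cref{def:pauli} shows that each has trace $0 + 0 = 0$ for $X, Y$ and $1 + (-1) = 0$ for $Z$. Next I would dispatch property (1) by carrying out the three matrix squarings; each product $X^2$, $Y^2$, $Z^2$ expands to a $2 \times 2$ matrix whose entries one checks equal $I$. For $Y$ the only mild point is that $(-i)(i) = 1$, but this is routine.

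For property (2), I would compute the three products $XY$, $YZ$, $ZX$ directly and compare with $iZ$, $iX$, $iY$ respectively. For example,
\begin{equation*}
XY = \begin{pmatrix} 0 & 1 \\ 1 & 0 \end{pmatrix}\begin{pmatrix} 0 & -i \\ i & 0 \end{pmatrix} = \begin{pmatrix} i & 0 \\ 0 & -i \end{pmatrix} = iZ,
\end{equation*}
and the other two are analogous. Finally, property (3) can be obtained either by a parallel direct calculation of $YX$, $ZY$, $XZ$, or more efficiently by observing that since $X$, $Y$, $Z$ are Hermitian and $(XY)^\dagger = Y^\dagger X^\dagger = YX$, taking the Hermitian adjoint of $XY = iZ$ yields $YX = -iZ = -XY$, and similarly for the other two pairs.

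There is no real obstacle here: the proposition is standard and each part is a mechanical check. The only thing to be careful about is consistently using the convention from \Cref{def:pauli} (in particular the sign of $i$ in the off-diagonal entries of $Y$) so that the signs in property (2) come out correctly; a sign error there would propagate to property (3).
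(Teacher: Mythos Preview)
Your proposal is correct; the paper states this proposition without proof, treating the properties as standard facts that follow immediately from the definitions. Your direct-computation approach is exactly what one would do to verify them, and there is nothing to compare.
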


\subsection{On \qmaxcut and the Heisenberg model}\label{sec:qmaxcut-heisenberg}

The \emph{quantum Heisenberg model} is a family of Hamiltonians first studied by Heisenberg in~\cite{Hei28}.
Given an unweighted graph $G = (V, E)$,
a Hamiltonian from this model is written as
\begin{equation*}
H = 
- \E_{(u, v) \in E} (J_X \cdot X_u X_v + J_Y \cdot Y_u Y_v + J_Z \cdot Z_u Z_v) - m \sum_{u \in V} Z_u,
\end{equation*}
where $P_u$ for $P \in \{X, Y, Z\}$ refers to the Pauli matrix~$P$ applied to the $u$-th qubit,
$J_X, J_Y, J_Z$ are real-valued coefficients known as \emph{coupling constants},
and $m$ is a real-valued coefficient known as the \emph{external magnetic field}.
As is typical in Hamiltonian complexity,
and unlike in \qmaxcut,
the ground state energy of this Hamiltonian is defined to be its \emph{minimum} (rather than maximum) eigenvalue,
and the ground state is defined to be the corresponding eigenvector.
The \emph{ferromagnetic case} refers to the case when $J_X, J_Y, J_Z \geq 0$,
in which case neighboring qubits tend to have the same values in the $X,Y,Z$ bases,
and the \emph{anti-ferromagnetic case} is when $J_X, J_Y, J_Z \leq 0$,
in which case they have opposing values.

The \emph{anti-ferromagnetic Heisenberg XYZ model} (which we will henceforth simply refer to as the ``Heisenberg model''),
is the case when $J_X = J_Y = J_Z = -1$ and $m = 0$.
It is natural to allow for the graph $G = (V, E, w)$ to be weighted,
in which case we can write a Hamiltonian from this model as
\begin{equation*}
H_G^{\text{\sc Heis}} = \E_{(\bu, \bv) \sim E}[X_{\bu} X_{\bv} + Y_{\bu} Y_{\bv} + Z_{\bu} Z_{\bv}].
\end{equation*}
As we have mentioned before, \qmaxcut was defined to be a natural maximization version of the Heisenberg model.
Indeed,
if $H_G$ is the \qmaxcut instance corresponding to~$G$,
then $H_G = (I - H_G^{\text{\sc Heis}})/4$.
This means that if $\ket{\psi}$ is the minimum energy state of $H_G^{\text{\sc Heis}}$ and has energy $\nu$,
then $\ket{\psi}$ is also the \emph{maximum} energy state of $H_G$ and has energy $(1-\nu)/4$.

Where the two variants differ is in their approximability.
As we have seen throughout this work,
one can achieve a constant-factor approximation to the \qmaxcut objective in polynomial time.
On the other hand, the best known approximation algorithm for the Heisenberg model objective
is due to~\cite{BGKT19} and achieves a $1/O(\log(n))$ approximation.
In particular, if the minimum energy of $H_G^{\text{\sc Heis}}$ is $\nu$,
this algorithm finds a product state with energy no greater than $\nu/O(\log(n))$.
The source of this difference is the identity term $I \ot I$ in the \qmaxcut objective,
which ``inflates'' the energy of a state relative to its energy in the Heisenberg model.
For example, a tensor product of maximally mixed qubits always has objective value~$1/4$ in \qmaxcut due to these identity terms
(which, in turn, implies one can always trivially achieve an approximation ratio of $1/4$).
In the Heisenberg model, however, its objective value is~$0$, and so it gives no approximation to the optimum value.
An analogous situation occurs in the classical world,
where the \maxcut objective
\begin{equation*}
\max_{f:V\rightarrow \{-1, 1\}}
\E_{(\bu, \bv) \sim E}[\tfrac{1}{2} - \tfrac{1}{2} f(\bu) f(\bv)]
\end{equation*}
has a constant-factor $0.878567$-approximation~\cite{GW95},
but the shifted and rescaled objective
\begin{equation*}
\min_{f:V\rightarrow \{-1, 1\}}
\E_{(\bu, \bv) \sim E}[f(\bu) f(\bv)]
\end{equation*}
gives us the (anti-ferromagnetic) \emph{Ising model} problem, for which the best-known algorithm is due to Charikar and Wirth~\cite{CW04} and achieves an approximation ratio of $1/O(\log(n))$.

The Heisenberg model
is ``notoriously difficult to solve even on
bipartite graphs, in contrast to \maxcut''~\cite{GP19}.
Only a few explicit solutions have been found,
several of which are well-known results in the physics literature.
These include the Heisenberg model on the cycle graph, whose solution due to Bethe is known as the ``Bethe ansatz''~\cite{Bet31},
and on the complete bipartite graph, known as the ``Lieb-Mattis model''~\cite{LM62}.
To our knowledge, \cite[Section 5.2]{CM16} contains a complete list of known explicit solutions.
This difficulty of the finding solutions for the Heisenberg model
was explained by the works of~\cite{CM16,PM17},
who showed that it is a $\QMA$-complete problem.
This implies that \qmaxcut is also $\QMA$-complete.

\subsection{Alternative expressions for the \qmaxcut interaction}

There are several alternative ways of writing the \qmaxcut interaction
\begin{equation*}
h = \tfrac{1}{4}(I \ot I - X \ot X - Y \ot Y - Z \ot Z)
\end{equation*}
which are common in the literature.
The first involves the singlet state.

\begin{definition}[Singlet state]
The \emph{two-qubit singlet state} is
\begin{equation*}
\ket{s} = \tfrac{1}{\sqrt{2}} \ket{01} - \tfrac{1}{\sqrt{2}} \ket{10}.
\end{equation*}
It is also known as the \emph{two-qubit anti-symmetric state},
and as the element $\ket{\Psi^-}$ of the \emph{Bell basis}
of two-qubit states
\begin{equation*}
\ket{\Phi^{\pm}} = \tfrac{1}{\sqrt{2}} \ket{00} \pm \tfrac{1}{\sqrt{2}} \ket{11},
\quad
\ket{\Psi^{\pm}} = \tfrac{1}{\sqrt{2}} \ket{01} \pm \tfrac{1}{\sqrt{2}} \ket{10}.
\end{equation*}
\end{definition}

\ignore{
\begin{definition}[The anti-ferromagnetic Heisenberg interaction]
The \emph{anti-ferromagnetic Heisenberg interaction} is the $2$-qubit operator
$
h = \ket{s}\bra{s}.
$
\end{definition}
}

The following proposition gives a convenient expression for the \qmaxcut interaction
as the projector on the singlet state.

\begin{proposition}[Rewriting the \qmaxcut interaction]\label{prop:rewrite-interaction}
\begin{equation*}
h
 = \tfrac{1}{4}\cdot (I \ot I - X \ot X) \cdot (I \ot I - Z \ot Z)
= \ket{s}\bra{s}.
\end{equation*}
\end{proposition}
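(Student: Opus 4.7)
The plan is to verify the two claimed identities in turn; both are essentially matrix algebra, so I would keep the presentation brief. For the first equality, I would expand the product $(I \ot I - X \ot X)(I \ot I - Z \ot Z)$ by distributivity and the mixed-product property for tensors. This yields four terms: $I \ot I$, $-X \ot X$, $-Z \ot Z$, and $XZ \ot XZ$. Using item~2 of \Cref{prop:pauli-props} to get $XZ = -iY$, the cross term simplifies as $XZ \ot XZ = (-iY) \ot (-iY) = i^2\, Y \ot Y = -Y \ot Y$. Multiplying by $1/4$ and collecting terms recovers the defining expression for $h$.

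For the second equality, I see two natural approaches. The direct one is to compute $\ket{s}\bra{s}$ in the computational basis and check that it agrees entry-by-entry with the $4 \times 4$ matrix of $h$; both are supported on the $\{\ket{01},\ket{10}\}$ block and equal
\begin{equation*}
\tfrac{1}{2}\begin{pmatrix} 0 & 0 & 0 & 0 \\ 0 & 1 & -1 & 0 \\ 0 & -1 & 1 & 0 \\ 0 & 0 & 0 & 0 \end{pmatrix}.
\end{equation*}
The more conceptual approach, which I would prefer, is to recognize $h$ as the projector onto the antisymmetric subspace of $(\C^2)^{\ot 2}$. Recall the well-known identity $\swap = \tfrac{1}{2}(I \ot I + X \ot X + Y \ot Y + Z \ot Z)$, from which $h = \tfrac{1}{2}(I \ot I - \swap)$. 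Since $\swap$ has eigenvalue $+1$ on the three-dimensional symmetric subspace (spanned by $\ket{00}$, $\ket{11}$, $\ket{\Psi^+}$) and eigenvalue $-1$ on the one-dimensional antisymmetric subspace (spanned by $\ket{s} = \ket{\Psi^-}$), the operator $\tfrac{1}{2}(I \ot I - \swap)$ is precisely the orthogonal projector onto $\mathrm{span}(\ket{s})$, which is $\ket{s}\bra{s}$.

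The only real obstacle is bookkeeping: keeping track of signs and factors of $i$ arising from the Pauli multiplication table. If the $\swap$ identity is not considered standard enough to invoke without proof, it can itself be verified by exactly the $4 \times 4$ computation used in the direct approach above, so in either case the entire argument reduces to elementary linear algebra.
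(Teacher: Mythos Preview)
Your proposal is correct. For the first equality you do exactly what the paper does, just spelled out in more detail: expand the product and use $XZ \ot XZ = -Y \ot Y$.

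For the second equality your route differs from the paper's. The paper verifies the identity by computing the action of $\tfrac{1}{4}(I \ot I - X \ot X)(I \ot I - Z \ot Z)$ on a generic Bell-basis vector $\tfrac{1}{\sqrt{2}}\ket{0,a} + \tfrac{1}{\sqrt{2}}(-1)^b\ket{1,1+a}$ and checking that the result is the vector itself when $a=b=1$ and zero otherwise, so the operator and $\ket{s}\bra{s}$ share an eigendecomposition. Your preferred argument instead passes through the $\swap$ identity $h = \tfrac{1}{2}(I \ot I - \swap)$ and the symmetric/antisymmetric eigenspace decomposition of $\swap$. This is slightly more conceptual and avoids the explicit coordinate computation; amusingly, the paper records $h = \tfrac{1}{2}(I \ot I - \mathsf{SWAP})$ as a separate proposition immediately afterward (with proof left to the reader), so your argument effectively merges two of the paper's statements into one. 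Either approach is perfectly fine here; the paper's direct computation is self-contained, while yours buys a cleaner explanation at the cost of invoking the $\swap$ formula as ``well-known.'' Your fallback of checking the $4\times 4$ matrices entrywise is of course also valid and is the simplest of all three.
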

\begin{proof}
The first equality follows from $(X \ot X) (Z \ot Z) = - Y \ot Y$.
We verify the second equality by checking that both sides have the same eigendecomposition.
Let
\begin{equation}\label{eq:bell-basis}
\tfrac{1}{\sqrt{2}}\ket{0,a} + \tfrac{1}{\sqrt{2}}(-1)^b \ket{1, 1+a}
\end{equation}
be a member of the Bell basis, for $a, b \in \{0, 1\}$.
This is a $1$-eigenvector of~$\ket{s}\bra{s}$ if $a, b = 1$ and a $0$-eigenenvector otherwise.
Now we verify this holds for the LHS:
\begin{align*}
& \tfrac{1}{4} \cdot (I \ot I - X \ot X) \cdot (I \ot I - Z \ot Z) \cdot \big(\tfrac{1}{\sqrt{2}}\ket{0,a} + \tfrac{1}{\sqrt{2}}(-1)^b \ket{1, 1+a}\big)\\
={}&\tfrac{1}{4\sqrt{2}} \cdot (I \ot I - X \ot X) \cdot \big(\ket{0,a} - (-1)^a \ket{0, a} + (-1)^b \ket{1, 1+a}- (-1)^{b} \cdot (-1)^{2+a} \ket{1, 1+a}\big)\\
={}&\tfrac{1}{4\sqrt{2}} \cdot (1 - (-1)^a) \cdot (I \ot I - X \ot X) \cdot \big(\ket{0,a}  + (-1)^b \ket{1, 1+a}\big)\\
={}&\tfrac{1}{4\sqrt{2}} \cdot (1 - (-1)^a) \cdot \big(\ket{0,a} - \ket{1, 1+a} + (-1)^b \ket{1, 1+a} - (-1)^b \ket{0, a}\big)\\
={}& \tfrac{1}{4\sqrt{2}}\cdot (1 - (-1)^a) \cdot (1- (-1)^b) \cdot (\ket{0, a} - \ket{1, 1+a}).
\end{align*}
If $a, b = 1$, then this is equal to~\eqref{eq:bell-basis}, showing that it is a $1$-eigenvector.
Otherwise, this is zero, which completes the proof.
\end{proof}

\ignore{

\begin{definition}[The anti-ferromagnetic Heisenberg model]
Let $G = (V, E, w)$ be a graph known as the \emph{interaction graph}.
Consider a quantum state in $(\C^2)^{V}$
containing a qubit for each vertex $v \in V$.
For any $u, v \in V$, we write $h_{u, v}$ as shorthand for the operator
\begin{equation*}
h_{u, v} \otimes I_{V \setminus \{u, v\}},
\end{equation*}
where~$h$ acts on the~$u$ and~$v$ qubits
and~$I$ acts on the rest.
The corresponding instance of the \emph{anti-ferromagnetic Heisenberg model} is given by
\begin{equation*}
H_G = \E_{(\bu, \bv) \sim E} h_{\bu, \bv}.
\end{equation*}
\end{definition}

\begin{definition}[Energy]
Let $H_G$ be an instance of the \emph{anti-ferromagnetic Heisenberg model}.
Its \emph{maximum energy} is 
\begin{equation*}
\heis(G) = \lambda_{\mathrm{max}} (H_G) = \max_{\ket{\psi} \in (\C^2)^{V}} \bra{\psi} H_G \ket{\psi}.
\end{equation*}
We may also refer to this as the \emph{value} of $H_G$.
\end{definition}
}

Though we will not need it,
we note the following additional way of rewriting the \qmaxcut interaction
for didactic purposes.
The proof is left to the reader.

\begin{proposition}
\begin{equation*}
h
 = \tfrac{1}{2}\cdot (I \ot I - \mathsf{SWAP}),
\end{equation*}
where $\mathsf{SWAP}$ is the two-qubit swap gate.
\end{proposition}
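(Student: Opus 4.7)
The plan is to combine the previously proven Proposition~\ref{prop:rewrite-interaction}, which identifies $h$ with $\ket{s}\bra{s}$, with a standard eigendecomposition of the SWAP gate. Specifically, $\mathsf{SWAP}$ acts on $\mathbb{C}^2 \otimes \mathbb{C}^2$ with eigenvalue $+1$ on the three-dimensional symmetric (triplet) subspace, spanned by $\ket{00}$, $\ket{11}$, and $\tfrac{1}{\sqrt{2}}(\ket{01}+\ket{10})$, and with eigenvalue $-1$ on the one-dimensional antisymmetric (singlet) subspace, spanned by $\ket{s}$. Writing $\mathsf{SWAP}$ as the difference of the two corresponding orthogonal projectors then gives
\[
\mathsf{SWAP} \;=\; (I \otimes I - \ket{s}\bra{s}) \;-\; \ket{s}\bra{s} \;=\; I \otimes I - 2\ket{s}\bra{s},
\]
and rearranging together with Proposition~\ref{prop:rewrite-interaction} yields $h = \ket{s}\bra{s} = \tfrac{1}{2}(I \otimes I - \mathsf{SWAP})$, as desired.

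As a cross-check, or as a self-contained alternative that does not invoke Proposition~\ref{prop:rewrite-interaction}, I would expand $\mathsf{SWAP}$ directly in the two-qubit Pauli basis. Using the orthogonality relation $\tr(PQ) = 2$ when $P = Q$ and $\tr(PQ) = 0$ otherwise (for $P,Q \in \{I,X,Y,Z\}$), together with the standard identity $\tr((A \otimes B)\,\mathsf{SWAP}) = \tr(AB)$, each Pauli coefficient $\tfrac{1}{4}\tr((P \otimes Q)\,\mathsf{SWAP})$ evaluates to $\tfrac{1}{2}$ when $P = Q$ and vanishes otherwise. This gives
\[
\mathsf{SWAP} \;=\; \tfrac{1}{2}\bigl(I \otimes I + X \otimes X + Y \otimes Y + Z \otimes Z\bigr),
\]
and plugging this into $\tfrac{1}{2}(I \otimes I - \mathsf{SWAP})$ reproduces the defining expression $h = \tfrac{1}{4}(I \otimes I - X \otimes X - Y \otimes Y - Z \otimes Z)$ term-for-term.

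There is no real obstacle: both routes are one-line calculations, and the ``work'' consists only in recalling one of the two standard ways to view $\mathsf{SWAP}$. The one thing worth being careful about in the first route is that the antisymmetric subspace of two qubits is exactly one-dimensional, which is why a single rank-one projector $\ket{s}\bra{s}$ suffices to express $I \otimes I - \mathsf{SWAP}$.
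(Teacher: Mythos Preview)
Your proposal is correct; both routes you outline are valid and standard. The paper itself does not give a proof of this proposition---it explicitly states ``The proof is left to the reader''---so there is nothing to compare against, but either of your arguments would fill that gap cleanly.
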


\subsection{Product states}\label{sec:product-states}

\ignore{
\begin{definition}[Product state value]
The \emph{product state value of $H_G$} is
\begin{equation*}
\prodval(G) = \max_{\forall v \in V, \ket{\psi_v} \in \C^2} \bra{\psi_G} H_G \ket{\psi_G},
\end{equation*}
where $\ket{\psi_G} = \otimes_{v \in V} \ket{\psi_v}$.
\end{definition}

\begin{definition}[Balls and spheres]
Given a dimension $d \geq 1$, the $d$-dimensional unit ball and sphere are given by
\begin{equation*}
B^d = \{x \in \R^d \mid \Vert x \Vert \leq 1\},
\qquad
S^{d-1} = \{x \in \R^d \mid \Vert x \Vert = 1\}.
\end{equation*} 
\end{definition}
}

The following definition gives a convenient decomposition for single-qubit quantum states.
It can be derived using the properties in \Cref{prop:pauli-props}.

\begin{definition}[Bloch spheres and Bloch vectors]\label{def:bloch_vectors}
Let $\rho$ be a one qubit density matrix.
Then there exists a coefficient vector $c = (c_X, c_Y, c_Z) \in B^3$ such that
\begin{equation*}
\rho = \frac{1}{2}\cdot (I + c_X  X + c_Y  Y + c_Z  Z).
\end{equation*}
In addition, $\rho$ is a pure state if and only if $\Vert c\Vert = 1$; equivalently, if $c \in S^2$. We'll refer to the vector $c$ as the \textit{Bloch vector} for $\rho$.
\end{definition}

Using this, we can now prove the alternate form for the product state value.

\begin{proposition}[Rewriting the product state value; \Cref{prop:rewrite-product} restated]\label{prop:rewrite-product-restated}
\begin{equation*}
\prodval(G) = \max_{f:V \rightarrow S^2} \E_{(\bu, \bv) \sim E}[ \tfrac{1}{4} - \tfrac{1}{4} \langle f(\bu), f(\bv)\rangle].
\end{equation*}
\end{proposition}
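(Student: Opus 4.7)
The plan is to unfold both sides of the claimed equality and verify they match term by term, using the Bloch-sphere parameterization of single-qubit pure states from \Cref{def:bloch_vectors}.

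First, I would fix an edge $(u,v) \in E$ and a product state $\ket{\psi_G} = \bigotimes_{w \in V} \ket{\psi_w}$, and compute $\bra{\psi_G} h_{u,v} \ket{\psi_G}$. Writing each $\ket{\psi_w}\!\bra{\psi_w} = \tfrac{1}{2}(I + c_{w,X} X + c_{w,Y} Y + c_{w,Z} Z)$ with $c_w \in S^2$, I use the fact that for each Pauli $P \in \{X,Y,Z\}$ the traces of distinct Paulis vanish (\Cref{prop:pauli-props}) to obtain $\bra{\psi_w} P \ket{\psi_w} = c_{w,P}$. Because $\ket{\psi_G}$ is a product state, expectations of tensor products factorize: $\bra{\psi_G}(P_u \otimes Q_v)\ket{\psi_G} = c_{u,P}\, c_{v,Q}$. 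Applying this to each of the four terms in $h_{u,v} = \tfrac{1}{4}(I\otimes I - X\otimes X - Y\otimes Y - Z\otimes Z)$ gives
\[
\bra{\psi_G} h_{u,v} \ket{\psi_G} = \tfrac{1}{4} - \tfrac{1}{4}\langle c_u, c_v \rangle.
\]

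Next, I would average over $(\bu,\bv) \sim E$ to get $\bra{\psi_G} H_G \ket{\psi_G} = \E_{(\bu,\bv) \sim E}[\tfrac{1}{4} - \tfrac{1}{4}\langle c_{\bu}, c_{\bv}\rangle]$. The assignment $v \mapsto c_v$ defines a function $f: V \to S^2$, and conversely every $f: V \to S^2$ arises from some choice of pure single-qubit states $\ket{\psi_v}$ (unique up to global phase, which does not affect the energy). Taking the maximum over all such choices on both sides yields the claimed identity.

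There is essentially no obstacle: the only point that requires a word is the restriction to pure states in the optimization. Since mixed single-qubit states correspond to Bloch vectors in $B^3$ rather than $S^2$, one might a priori worry that optimizing over mixed product states gives something larger. However, the energy $\E_{(\bu,\bv) \sim E}[\tfrac{1}{4} - \tfrac{1}{4}\langle c_{\bu}, c_{\bv}\rangle]$ is a multilinear (hence affine in each coordinate) function of the individual Bloch vectors, so its maximum over $B^3$ for each coordinate is attained at the boundary $S^2$. This justifies the restriction to $S^2$ and completes the proof.
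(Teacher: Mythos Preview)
Your proof is correct and follows essentially the same route as the paper: expand each single-qubit state in the Bloch representation, use the Pauli trace identities to evaluate $\bra{\psi_G} h_{u,v} \ket{\psi_G} = \tfrac{1}{4} - \tfrac{1}{4}\langle c_u, c_v\rangle$ edge by edge, then average and identify the bijection between product states and maps $f:V\to S^2$. Your additional remark about mixed versus pure single-qubit states is a nice touch, though it is not strictly needed here since $\prodval(G)$ is already defined as an optimization over pure product states.
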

\begin{proof}
Let $\ket{\psi_G} = \otimes_{v \in V} \ket{\psi_v}$ be a product state.
For each $v \in V$, let $f(v) = (v_X, v_Y, v_Z) \in S^2$
be its Bloch sphere coefficient vector.
In other words,
\begin{equation*}
\ket{\psi_v}\bra{\psi_v} = \frac{1}{2}\cdot (I + v_X  X + v_Y  Y + v_Z  Z).
\end{equation*}
Then the energy of $\ket{\psi_G}$ is given by
\begin{equation}\label{eq:first-step-of-energy-calculation}
\tr[H_G \cdot \ket{\psi_G}\bra{\psi_G}]
= \E_{(\bu, \bv) \sim E}\tr[h_{\bu, \bv} \cdot \ket{\psi_G}\bra{\psi_G}]
= \E_{(\bu, \bv) \sim E}\tr[h_{\bu, \bv} \cdot \ket{\psi_{\bu}}\bra{\psi_{\bu}} \ot \ket{\psi_{\bv}} \bra{\psi_{\bv}}].
\end{equation}
For any edge $(u, v) \in E$, the energy of the $(u,v)$-interaction is
\begin{align*}
&\tr[h_{u, v} \cdot \ket{\psi_u}\bra{\psi_u} \ot \ket{\psi_v} \bra{\psi_v}]\\
={}& \tr\Big[\frac{1}{4} \cdot ( I \ot I - X \ot X - Y \ot Y - Z \ot Z)\\
		&\qquad \qquad \qquad \qquad\cdot \frac{1}{2}\cdot (I + u_X  X + u_Y  Y + u_Z  Z) 
												\ot \frac{1}{2}\cdot (I + v_X  X + v_Y  Y + v_Z  Z)\Big]\\
={}& \frac{1}{4} \cdot (1 - u_X v_X - u_Y v_Y - u_Z v_Z) \tag{by \Cref{prop:pauli-props}}\\
={}& \frac{1}{4} \cdot (1 - \langle f(u), f(v)\rangle).
\end{align*}
Substituting into \Cref{eq:first-step-of-energy-calculation}, the energy of $\ket{\psi_G}$ is
\begin{equation*}
\E_{(\bu, \bv) \sim E}[\tfrac{1}{4}  - \tfrac{1}{4}\langle f(\bu), f(\bv)\rangle].
\end{equation*}
This concludes the proof.
\end{proof}

\begin{remark}\label{rem:prod-state-bad}
One consequence of \Cref{prop:rewrite-product-restated} is that $\prodval(G) \leq 1/2$ always,
even though $\heis(G)$ can be as large as~$1$.
Indeed, $\bra{\psi_1, \psi_2} h \ket{\psi_1,\psi_2} \leq 1/2$ 
for any qubit states $\ket{\psi_1}, \ket{\psi_2} \in \C^2$,
even though $\bra{s} h \ket{s} = 1$ by \Cref{prop:rewrite-interaction}.
\end{remark}

Although \Cref{rem:prod-state-bad} shows that the product states can in general give a poor approximation to the energy,
there are interesting special cases in which they still give good approximations.
The following result shows that this holds
provided that the degree of~$G$ is large.

\begin{theorem}[Corollary 4 of~\cite{BH16}]\label{thm:BH}
Let $G  = (V, E, w)$ be a $D$-regular graph with uniform edge weights. Then
\begin{equation*}
\prodval(G) \geq \heis(G) - O\left(\frac{1}{D^{1/3}}\right).
\end{equation*}
\end{theorem}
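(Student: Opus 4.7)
The statement is Corollary 4 of Brand\~ao--Harrow~\cite{BH16}, so my plan is to reconstruct the structure of their argument rather than invent a new one. The core idea is monogamy of entanglement: on a high-degree interaction graph each qubit has many neighbors, so it cannot share much entanglement with any particular one, and hence the two-body marginals on edges are close to products of one-body marginals, which in turn forces a product state to achieve nearly optimal energy.

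First I would set up the reduction. Let $\ket{\psi}$ be an optimal state with $\bra{\psi}H_G\ket{\psi}=\heis(G)$, and write $\rho_v := \tr_{V\setminus\{v\}}\ket{\psi}\bra{\psi}$ and $\rho_{uv} := \tr_{V\setminus\{u,v\}}\ket{\psi}\bra{\psi}$ for its one- and two-body marginals. The energy decomposes as $\heis(G) = \E_{(\bu,\bv)\sim E}\tr(h_{\bu,\bv}\,\rho_{\bu,\bv})$, while the mixed product state $\sigma := \bigotimes_v \rho_v$ has energy $\tr(H_G\sigma) = \E_{(\bu,\bv)\sim E}\tr(h_{\bu,\bv}(\rho_{\bu}\otimes\rho_{\bv}))$. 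Since $h_{u,v}$ is a projector, $\|h_{u,v}\|_\infty = 1$, and H\"older's inequality gives
\[
\heis(G) - \tr(H_G\sigma) \;\le\; \E_{(\bu,\bv)\sim E}\bigl\|\rho_{\bu,\bv} - \rho_{\bu}\otimes\rho_{\bv}\bigr\|_1.
\]
Because $\sigma$ is a convex combination of pure product states (via the spectral decompositions of the single-qubit $\rho_v$), there is a pure product state $\ket{\psi_G}=\bigotimes_v \ket{\psi_v}$ with $\bra{\psi_G}H_G\ket{\psi_G} \ge \tr(H_G\sigma)$, so it suffices to upper bound the displayed expected trace distance by $O(D^{-1/3})$.

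Second, I would derive this trace-distance bound from monogamy. The quantum mutual information $I(u:v)_\psi := S(\rho_u)+S(\rho_v)-S(\rho_{uv})$ controls trace distance via a quantum Pinsker inequality, $\|\rho_{uv}-\rho_u\otimes\rho_v\|_1 \le O\!\bigl(\sqrt{I(u:v)_\psi}\bigr)$. Using only $S(\rho_u)\le 1$ together with the chain rule for mutual information and the fact that each vertex participates in $D$ equally weighted edges, one already obtains $\E_{(\bu,\bv)\sim E} I(\bu:\bv) = O(1/D)$, and Jensen's inequality then gives a weaker $O(D^{-1/2})$ bound. To reach the stronger $O(D^{-1/3})$ rate one uses the informationally-complete-measurement (``de Finetti'') trick of~\cite{BH16}: measure a small random subset $S\subseteq V$ of vertices in a suitable basis and condition on the outcomes; the conditioned state has small \emph{conditional} mutual information between endpoints of a typical edge, while the measurement perturbs the energy by at most $O(|S|/|V|)$. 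Balancing $|S|$ against the resulting information bound produces the cube-root exponent.

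The main obstacle is precisely this last monogamy step with the sharp rate; it is the technical heart of~\cite{BH16} and requires careful bookkeeping (chain rules for conditional mutual information, Fannes-type continuity of entropy, and optimization over the measurement subset size). Fortunately, for our applications any $o(1)$ rate as $D\to\infty$ already suffices to transfer hardness from $\prodval(G)$ to $\heis(G)$ on the high-degree instances produced by the Unique Games reduction, so the exact exponent is inessential and we simply invoke the bound as a black box from~\cite{BH16}.
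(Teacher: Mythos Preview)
The paper does not prove this theorem at all; it is stated purely as a citation of Corollary~4 in~\cite{BH16} and then immediately set aside in favor of the weighted version (Theorem~\ref{thm:BH-nonuniform}), which is what the applications actually use. Your final paragraph---treat the bound as a black box from~\cite{BH16}---is therefore exactly what the paper does, and nothing more is required.

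Your reconstruction of the Brand\~ao--Harrow argument is accurate in broad outline (reduce to bounding $\E\|\rho_{uv}-\rho_u\otimes\rho_v\|_1$, control via mutual information, then invoke the informationally-complete-measurement de~Finetti machinery to get the cube-root rate). One small imprecision: the claim that the chain rule alone yields $\E_{(\bu,\bv)\sim E} I(\bu:\bv)=O(1/D)$ is not quite right as stated. The chain rule gives $\sum_i I(u:v_i\mid v_1,\dots,v_{i-1}) \le 2S(\rho_u)\le 2$, but these are \emph{conditional} quantum mutual informations, and in the quantum setting conditioning need not decrease mutual information, so one cannot directly conclude the unconditional bound. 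This is precisely why the informationally-complete measurement step is needed even for the weaker $O(D^{-1/2})$ rate, not just for sharpening to $O(D^{-1/3})$. You already flag this step as ``the technical heart'' and defer to~\cite{BH16}, so the gap is acknowledged; just be aware that the ``naive'' $O(D^{-1/2})$ route is itself not as naive as the sketch suggests.
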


Unfortunately, we will not be able to apply this theorem directly
because our graphs will not be precisely unweighted, $D$-regular graphs,
but instead high-degree graphs with different weights on different edges.
In this case, Brandão and Harrow provide the following bound.

\begin{theorem}[Theorem 8 of~\cite{BH16}]\label{thm:BH-nonuniform}
Let $G  = (V, E, w)$ be a weighted graph.
Define
\begin{enumerate}
\item the probability distribution $(p_u)_{u \in V}$ such that $p_u = \tfrac{1}{2} \Pr_{\boldsymbol{e} \sim E}[\text{$\boldsymbol{e}$ contains $u$}]$,
\item the $|V| \times |V|$ matrix $A$ such that $A_{u, v} = \Pr_{(\bu', \bv') \sim E}[\bu' = u \mid \bv' = v].$
\end{enumerate}
Then the following inequality holds.
\begin{equation*}
\heis(G) \leq \prodval(G) + 20\cdot (\tr[A^2] \Vert p \Vert_2^2)^{1/8} + \Vert p \Vert_2^2.
\end{equation*}
\end{theorem}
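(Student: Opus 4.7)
The plan is to follow the monogamy-of-entanglement strategy of Brand\~ao--Harrow. First I would fix an optimal $n$-qubit state $\ket{\psi}$ achieving energy $\heis(G)$, and denote its two-qubit marginals on edges by $\rho_{uv}$ and its single-qubit marginals at each vertex by $\rho_u$. The energy of $\ket{\psi}$ is then $\E_{(\bu,\bv) \sim E} \tr[h_{\bu,\bv} \rho_{\bu\bv}]$; meanwhile, the mixed product state $\bigotimes_u \rho_u$ has energy $\E_{(\bu,\bv) \sim E} \tr[h_{\bu,\bv} (\rho_\bu \otimes \rho_\bv)]$, which by convexity is at most $\prodval(G)$ (since each $\rho_u$ is a convex combination of pure qubit states, and the energy is linear in each tensor factor). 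The task thus reduces to bounding the expected edge-wise trace distance $\E_{(\bu,\bv) \sim E}\|\rho_{\bu\bv} - \rho_\bu \otimes \rho_\bv\|_1$, because $\|h_{u,v}\|_{\mathrm{op}} \le 1$ turns this into the desired energy bound via Hölder.

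The core of the argument is an information-theoretic monogamy estimate controlling the average quantum mutual information $I(\bu:\bv)_\psi$ over random edges. By quantum Pinsker, $\|\rho_{uv} - \rho_u \otimes \rho_v\|_1 \le O(\sqrt{I(u:v)_\psi})$, so it suffices to bound $\E_{(\bu,\bv) \sim E}[I(\bu:\bv)_\psi]$. The key physical input is that a single qubit has entropy at most one, hence can share substantial mutual information with only a small number of other qubits. Quantifying this, I would pick a random ``conditioning'' subset $S$ of vertices and invoke the chain rule to show that after conditioning on the register $\rho_S$, the residual mutual information between $\bu$ and $\bv$ averages down. The size of $S$ that makes this work is governed by the effective degree of $G$, which is where $\|p\|_2^2$ and $\tr[A^2]$ enter: $\|p\|_2^2$ controls the probability that the same vertex is hit multiple times by a random edge endpoint (boundary / self-coincidence terms), and $\tr[A^2]$ controls the concentration of a random neighbor of a random vertex. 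Careful bookkeeping gives $\E_{(\bu,\bv) \sim E}[I(\bu:\bv)_\psi] \lesssim (\tr[A^2] \|p\|_2^2)^{1/4}$ plus an additive $\|p\|_2^2$ correction.

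Then combining quantum Pinsker with Jensen's inequality, the expected trace distance over edges is controlled by the fourth root of the expected mutual information, producing the $(\tr[A^2] \|p\|_2^2)^{1/8}$ scaling in the statement. Multiplying by $\|h\|_{\mathrm{op}} \le 1$ and absorbing all the constants from Pinsker, Jensen, and the conditional-decoupling step into the universal factor $20$ gives the theorem.

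The step I expect to be the main obstacle is the ``conditional decoupling'' estimate itself. The naive application of monogamy via the chain rule $I(u:v_1 \dots v_k) \ge \max_i I(u:v_i)$ only controls the \emph{maximum} mutual information, not the average, so one must design a clever random conditioning (or equivalently argue via squashed entanglement) so that on average the post-conditioning mutual information is small. Getting the exponent $1/8$ (rather than the $1/4$ one gets from Pinsker + one Jensen alone) and the precise boundary term $\|p\|_2^2$ requires threading Cauchy--Schwarz twice through an information-theoretic telescoping, and it is this juggling of exponents against edge-weight concentration that contains essentially all the technical content of the bound.
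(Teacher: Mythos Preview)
This theorem is not proved in the paper at all: it is simply quoted as ``Theorem 8 of~\cite{BH16}'' and used as a black box (the paper only proves the easy \Cref{cor:BH-nonuniform-easy-to-use} from it by bounding $\tr[A^2]$ and $\|p\|_2^2$ crudely). So there is no ``paper's own proof'' to compare your proposal against.

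That said, your sketch does track the actual Brand\~ao--Harrow argument fairly well: reduce to bounding $\E_{(\bu,\bv)}\|\rho_{\bu\bv}-\rho_{\bu}\otimes\rho_{\bv}\|_1$, convert to mutual information via Pinsker, and then use a random-conditioning / informational chain-rule argument to exploit the bounded entropy of single qubits. The place where your outline is vaguest is exactly where the real work lives, namely the conditional-decoupling estimate producing the $(\tr[A^2]\,\|p\|_2^2)^{1/4}$ bound on average conditional mutual information; in \cite{BH16} this is obtained by sampling a random sequence of vertices from the edge distribution, applying the chain rule $I(u{:}v_1\cdots v_t)=\sum_j I(u{:}v_j\mid v_{<j})$ together with the entropy cap $I(u{:}v_1\cdots v_t)\le 2$, and then averaging over the choice of $t$ and the random neighbors. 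The quantities $\tr[A^2]$ and $\|p\|_2^2$ enter as collision probabilities controlling how often the sampled neighbors coincide. Your ``two Cauchy--Schwarz'' heuristic for the exponent $1/8$ is the right intuition (one from Pinsker, one from Jensen on the square root over edges), but turning it into an actual proof requires reproducing those details from \cite{BH16}, not anything in the present paper.
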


We note that in the case of a $D$-regular graph,
$\tr[A^2] = n/D$ and $\Vert p \Vert_2^2 = 1/n$.
We will use the following corollary, which we will find easier to apply.

\begin{corollary}\label{cor:BH-nonuniform-easy-to-use}
In the setting of \Cref{thm:BH-nonuniform}, the following inequality holds.
\begin{equation*}
\heis(G) \leq \prodval(G) + 20\cdot (n \cdot \max_{u, v} \{A_{u, v}\} \cdot \max_u \{p_u\})^{1/8} + \max_u \{p_u\},
\end{equation*}
where $n$ is the number of vertices in~$G$.
\end{corollary}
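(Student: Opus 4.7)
The plan is to derive the corollary from \Cref{thm:BH-nonuniform} by bounding each of the two quantities $\tr[A^2]\cdot\|p\|_2^2$ and $\|p\|_2^2$ that appear in the original inequality in terms of the simpler quantities $n$, $\max_{u,v}\{A_{u,v}\}$, and $\max_u\{p_u\}$. Neither bound looks subtle, so I do not expect any genuine obstacle; this is really just a convenient repackaging of the bound in the theorem, and the proof should fit in a few lines.

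First I would handle $\|p\|_2^2$. Since $p$ is a probability distribution, $\sum_u p_u = 1$, and therefore
\[
\|p\|_2^2 \;=\; \sum_u p_u^2 \;\leq\; \Big(\max_u p_u\Big)\sum_u p_u \;=\; \max_u\{p_u\}.
\]
This immediately takes care of the last additive term in \Cref{thm:BH-nonuniform}.

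Next I would bound $\tr[A^2]$. Writing it out, $\tr[A^2] = \sum_{u,v} A_{u,v} A_{v,u}$. Pulling out the entrywise maximum gives
\[
\tr[A^2] \;\leq\; \Big(\max_{u,v} A_{u,v}\Big)\sum_{u,v} A_{v,u}.
\]
The key observation is that for each fixed $u$, the column sum $\sum_v A_{v,u} = \sum_v \Pr[\bu' = v \mid \bv' = u] = 1$, since this is summing a conditional probability distribution over its support. Therefore $\sum_{u,v} A_{v,u} = n$, and we obtain $\tr[A^2] \leq n\cdot \max_{u,v}\{A_{u,v}\}$. Combined with the previous bound, this yields
\[
\tr[A^2]\cdot\|p\|_2^2 \;\leq\; n\cdot \max_{u,v}\{A_{u,v}\}\cdot \max_u\{p_u\}.
\]

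Finally I would substitute both inequalities into \Cref{thm:BH-nonuniform}, using monotonicity of $x \mapsto x^{1/8}$ for the first term, to conclude
\[
\heis(G) \;\leq\; \prodval(G) + 20\cdot \big(n\cdot \max_{u,v}\{A_{u,v}\}\cdot\max_u\{p_u\}\big)^{1/8} + \max_u\{p_u\},
\]
as required. No step requires anything beyond elementary manipulations, so the entire argument can be written as essentially a two-line calculation once the column-sum identity for $A$ is noted.
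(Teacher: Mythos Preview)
Your proposal is correct and matches the paper's proof essentially line for line: bound $\|p\|_2^2 \le \max_u p_u$ using $\sum_u p_u = 1$, bound $\tr[A^2] \le n\cdot\max_{u,v} A_{u,v}$ using that the columns of $A$ sum to $1$, and substitute into \Cref{thm:BH-nonuniform}.
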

\begin{proof}
First, we bound the $\tr[A^2]$ term:
\begin{equation*}
\tr[A^2] = \sum_{u, v} A_{u, v} \cdot A_{v, u}
\leq \max_{v, u} \{A_{v, u}\} \cdot \sum_{u, v} A_{u, v}
=  \max_{v, u} \{A_{v, u}\} \cdot \sum_{v} 1
= \max_{v, u} \{A_{v, u}\} \cdot n.
\end{equation*}
Next, we bound the $\Vert p \Vert_2^2$ term:
\begin{equation*}
\Vert p \Vert_2^2
= \sum_{u} p_u^2 \leq \max_u \{p_u\} \cdot \sum_u p_u = \max_u \{p_u\}.
\end{equation*}
Substituting these bounds into \Cref{thm:BH-nonuniform} completes the proof.
\end{proof}

\subsection{Deriving the basic SDPs}\label{sec:sdp_proofs}

Now we will show to derive the basic SDP for \qmaxcut.
As a warm-up, we will recall a standard method for deriving the \maxcut SDP
\begin{equation}\label{eq:mcsdp-restated}
\mcsdp(G) = \max_{f:V\rightarrow S^{n-1}} \E_{(\bu, \bv) \sim E}[ \tfrac{1}{2} - \tfrac{1}{2} \langle f(\bu), f(\bv)\rangle].
\end{equation}
One way of deriving this SDP is as follows:
let $f:V\rightarrow \{-1, 1\}$ be an assignment to the vertices.
Consider the $n \times n$ matrix $M$ defined as $M(u, v) = f(u) \cdot f(v)$.
Then~$M$ is a real, PSD matrix such that $M(v, v) = 1$ for all $v \in V$. 
Furthermore, we can write the value of~$f$ in terms of~$M$ as
\begin{equation}\label{eq:objective-in-terms-of-M}
\E_{(\bu, \bv) \sim E}[ \tfrac{1}{2} - \tfrac{1}{2} M(\bu, \bv)].
\end{equation}
Now, we relax our problem and consider optimizing \Cref{eq:objective-in-terms-of-M}
over all real, PSD matrices~$M$ such that $M(v, v) = 1$.
Such an~$M$ can be written as the Gram matrix of a set of real vectors of dimension~$n$;
i.e. there is a function $f:V \rightarrow \R^{n}$ such that $M(u, v) = \langle f(u), f(v)\rangle$.
This yields the SDP in \Cref{eq:mcsdp-restated}.

\paragraph{The basic SDP for \qmaxcut.}
Now we show how to derive the basic SDP for \qmaxcut.
To begin, let $G = (V, E, w)$ be an $n$-vertex graph.
Let $\ket{\psi} \in (\C^2)^{V}$ be a quantum state.
Consider the set of $3n$ vectors
$X_u \ket{\psi}$, $Y_u \ket{\psi}$, $Z_u\ket{\psi}$,
where $P_u$ denotes the Pauli matrix~$P$ acting on qubit~$u$.
The Gram matrix of these vectors, denoted $M(\cdot,\cdot)$,
is the $3n \times 3n$ matrix whose rows and columns are indexed by
Pauli matrices $P_u$ such that
\begin{equation*}
M(P_u, Q_v) = \bra{\psi} P_u Q_v\ket{\psi}.
\end{equation*}
Using $M$,
we can express the energy of $\ket{\psi}$ as follows:
\begin{align}
\bra{\psi}H_G\ket{\psi}
 = \E_{(\bu, \bv) \sim E} \bra{\psi} h_{\bu, \bv}\ket{\psi}
& = \E_{(\bu, \bv) \sim E} \tfrac{1}{4} \cdot \bra{\psi}( I_{\bu} \ot I_{\bv} - X_{\bu} \ot X_{\bv} - Y_{\bu} \ot Y_{\bv} - Z_{\bu} \ot Z_{\bv})\ket{\psi}\nonumber\\
&= \tfrac{1}{4}\cdot\E_{(\bu, \bv) \sim E}[1 - M(X_{\bu}, X_{\bv}) - M(Y_{\bu}, Y_{\bv}) - M(Z_{\bu}, Z_{\bv})].\label{eq:new-objective}
\end{align}
Let us derive some constraints on this matrix:
\begin{enumerate}
\item \textbf{PSD:} $M$ is Hermitian and PSD.
\item \textbf{Unit length:} For each $P_u$, $M(P_u, P_u) = 1$.
\item \textbf{Commuting Paulis:} For each $P_u, Q_v$ such that $u \neq v$, $P_u$ commutes with~$Q_v$.
	This implies that 
	$
	M(P_u, Q_v) = M(Q_v, P_u)
	$
	and is therefore real because $M$ is Hermitian.
\item \textbf{Anti-commuting Paulis:} For each $P_u, Q_u$ such that $P \neq Q$, $P_u$ anti-commutes with $Q_u$.\label{item:anticommute}
	This implies that $
	M(P_u, Q_u) = - M(Q_u, P_u)
	$
	and therefore has no real part because $M$ is Hermitian.
\end{enumerate}
Now we relax our problem and consider optimizing \Cref{eq:new-objective}
over all matrices~$M$ that satisfy these four conditions.
This is a relaxation because not all matrices~$M$ correspond to Gram matrices of vectors of the form $P_u \ket{\psi}$.

Prior to stating the SDP, we perform one final simplification.
Given such an~$M$,
consider the matrix $M' = \tfrac{1}{2}(M + M^T)$.
This satisfies all four conditions, has the same energy as~$M$,
and moreover satisfies $M'(P_u, Q_u) = 0$ for $P \neq Q$.
We can therefore replace~\Cref{item:anticommute}
with this stronger condition, which implies that~$M'$ is real.
Thus, $M'$ is a real, symmetric $3n \times 3n$ PSD matrix,
so we can write it as the Gram matrix of a set of real vectors of dimension $3n$.
In other words, there are functions
\begin{equation*}
f_X, f_Y, f_Z: V \rightarrow \R^{3n}
\end{equation*}
such that $M'(P_u, Q_v) = \langle f_P(u) , f_Q(v)\rangle$.
Putting everything together, we have the following SDP.

\begin{proposition}[\qmaxcut SDP]\label{prop:sum-of-squares-equals-sdp}
Let $G = (V, E, w)$ be an $n$-vertex graph.
The value of the SDP for \qmaxcut can be written as
\begin{align}
\hsdp(G)
= \max~& \frac{1}{4} \cdot \E_{(\bu, \bv) \sim E}[1 - \langle f_X(\bu), f_X(\bv)\rangle- \langle f_Y(\bu), f_Y(\bv)\rangle - \langle f_Z(\bu), f_Z(\bv)\rangle], \label{eq:heis-first-sdp}\\
\mathrm{s.t.}~& \langle f_P(v), f_Q(v) \rangle = 0, \quad \forall v \in V,~P \neq Q \in \{X, Y, Z\},\nonumber\\
	&f_X, f_Y, f_Z:V \rightarrow S^{3n-1}.\nonumber
\end{align}
\end{proposition}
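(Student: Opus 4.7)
The proposition formalizes the derivation sketched in the preceding paragraphs and additionally needs to be reconciled with the single-function definition of $\hsdp(G)$ given in Definition~\ref{def:heis-sdp}. My plan is to split the proof into two independent parts: (a) verify that the three-function program is a valid SDP relaxation of $\heis(G)$, and (b) show that its optimum equals the simpler form $\max_{f:V\to S^{n-1}} \E_{(\bu,\bv)\sim E}[\tfrac14-\tfrac34\langle f(\bu),f(\bv)\rangle]$.

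For part (a), I would start from an arbitrary state $\ket{\psi}\in(\C^2)^V$ and form the $3n\times 3n$ Gram matrix $M$ indexed by pairs $(P,u)$ with $M(P_u,Q_v)=\bra{\psi}P_uQ_v\ket{\psi}$. Four facts need to be verified in sequence, each an immediate consequence of Proposition~\ref{prop:pauli-props}: $M$ is PSD (Gram), $M(P_u,P_u)=\bra{\psi}I\ket{\psi}=1$ (since $P^2=I$), $M(P_u,Q_v)\in\R$ for $u\neq v$ (Paulis on different qubits commute, combined with Hermiticity of $M$), and $\re M(P_u,Q_u)=0$ for $P\neq Q$ (Paulis on the same qubit anticommute). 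Symmetrize $M'=\tfrac12(M+M^T)$: the first two properties are preserved, the off-diagonal entries involved in the objective are unchanged because $M(P_u,P_v)$ is already real for $u\neq v$, and the third property now holds with equality ($M'(P_u,Q_u)=0$ for $P\neq Q$), so $M'$ is a real symmetric PSD matrix. Gram-decomposing it yields functions $f_X,f_Y,f_Z:V\to\R^{3n}$ of unit norm with $\langle f_P(v),f_Q(v)\rangle=0$ for $P\neq Q$, and plugging into the objective gives exactly $\bra{\psi}H_G\ket{\psi}$.

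For part (b), both inequalities are constructive. For $\hsdp(G)\le$ three-function form, given any $f:V\to S^{n-1}$, define $f_P(v)=e_P\otimes f(v)\in\R^{3n}$ where $e_X,e_Y,e_Z$ is the standard orthonormal basis of $\R^3$; then $f_P(v)\in S^{3n-1}$, $\langle f_P(v),f_Q(v)\rangle=0$ for $P\neq Q$, and $\sum_P\langle f_P(u),f_P(v)\rangle=3\langle f(u),f(v)\rangle$, so the three-function objective reduces to $\tfrac14-\tfrac34\langle f(u),f(v)\rangle$. Conversely, given feasible $f_X,f_Y,f_Z$, define $f(v)=\tfrac{1}{\sqrt{3}}(f_X(v),f_Y(v),f_Z(v))\in\R^{9n}$; unit normalization is immediate and $\langle f(u),f(v)\rangle=\tfrac13\sum_P\langle f_P(u),f_P(v)\rangle$, so $\tfrac14-\tfrac34\langle f(u),f(v)\rangle$ equals the three-function objective edge-by-edge.

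I do not anticipate a serious obstacle: the Pauli algebra computations in part (a) are routine, and the substitutions in part (b) are clean. The only subtle point is checking that symmetrizing $M$ does not alter the objective in part (a), which holds precisely because the objective couples only Paulis of the same type on distinct qubits, whose images already commute, so the corresponding entries of $M$ are real and fixed by symmetrization. The dimension of the ambient Euclidean space into which the $f$'s are embedded is irrelevant to the SDP value, so the jump from $S^{n-1}$ to $S^{3n-1}$ (and $S^{9n-1}$ in the converse construction) causes no issue.
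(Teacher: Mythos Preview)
Your part (a) is essentially the paper's own derivation (which appears in the paragraphs immediately preceding the proposition rather than in a formal proof environment): form the Gram matrix of the vectors $P_u\ket{\psi}$, verify the four constraints via the Pauli commutation/anticommutation relations, symmetrize, and Gram-decompose. Your observation that symmetrizing preserves the objective because the relevant entries $M(P_u,P_v)$ with $u\neq v$ are already real is exactly the point the paper relies on.

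Your part (b) is really the content of the next proposition in the paper (Proposition~\ref{prop:equivalence}), and here you diverge from the paper in one direction. For the inequality three-function~$\leq$~single-function, the paper rewrites the objective as an average $\tfrac13\sum_P \E[\tfrac14-\tfrac34\langle f_P(\bu),f_P(\bv)\rangle]$ and picks the best term $f=f_P$. You instead concatenate, setting $f(v)=\tfrac{1}{\sqrt{3}}(f_X(v),f_Y(v),f_Z(v))$, which matches the three-function objective edge-by-edge rather than only in aggregate. Both arguments are valid; yours is slightly cleaner in that it does not need an averaging step and makes no use of the orthogonality constraint (showing that constraint is inactive at the optimum), while the paper's pick-the-best argument stays in dimension $3n$ rather than $9n$ before invoking the dimension-reduction remark. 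The other direction (single to three via $f_P=e_P\otimes f$) is identical in both.
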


This SDP can also be viewed as the degree-2 relaxation for \qmaxcut
in the \emph{non-commutative Sum of Squares (ncSoS)} hierarchy.
We give a didactic treatment of this perspective in \Cref{sec:ncsos}
(which is not necessary to understand the rest of this paper).
We now further simplify the SDP relaxation for \qmaxcut
and derive the expression from \Cref{def:heis-sdp}.

\begin{proposition}[\qmaxcut SDP, simplified version; \Cref{def:heis-sdp}]\label{prop:equivalence}
\begin{equation}\label{eq:sdp-mc-heis}
\hsdp(G)
= \max_{f:V\rightarrow S^{n-1}} \E_{(\bu, \bv) \sim E}[ \tfrac{1}{4} - \tfrac{3}{4} \langle f(\bu), f(\bv)\rangle].
\end{equation}
\end{proposition}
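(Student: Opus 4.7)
The plan is to prove the equivalence by establishing two inequalities, each via an explicit construction translating between feasible solutions of the SDP in \Cref{prop:sum-of-squares-equals-sdp} and feasible solutions of the simplified form in \eqref{eq:sdp-mc-heis}. No deep ideas are needed: both constructions are essentially bookkeeping that exploits the symmetry between the three Pauli directions.

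For the direction $\hsdp(G) \geq \max_{f:V \to S^{n-1}} \E_{(\bu,\bv)\sim E}[\tfrac{1}{4} - \tfrac{3}{4}\langle f(\bu), f(\bv)\rangle]$, my plan is to lift any $f:V \to S^{n-1}$ to a feasible triple in the SDP of \Cref{prop:sum-of-squares-equals-sdp} with matching value. Writing $\R^{3n} = \R^n \oplus \R^n \oplus \R^n$, I would define
\[
f_X(v) = (f(v), 0, 0), \quad f_Y(v) = (0, f(v), 0), \quad f_Z(v) = (0, 0, f(v)).
\]
Each $f_P(v)$ is a unit vector in $\R^{3n}$, the three vectors are pairwise orthogonal at every vertex $v$ by construction, and $\langle f_P(\bu), f_P(\bv)\rangle = \langle f(\bu), f(\bv)\rangle$ for every $P \in \{X, Y, Z\}$. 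Substituting into \eqref{eq:heis-first-sdp} yields $\tfrac{1}{4}\E[1 - 3\langle f(\bu), f(\bv)\rangle]$, which is exactly the simplified objective.

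For the direction $\hsdp(G) \leq \max_{f:V \to S^{n-1}} \E_{(\bu,\bv)\sim E}[\tfrac{1}{4} - \tfrac{3}{4}\langle f(\bu), f(\bv)\rangle]$, my plan is to collapse any feasible triple $(f_X, f_Y, f_Z)$ to a single vector-valued function whose simplified value is at least as large. Let $P^\ast \in \{X, Y, Z\}$ be the index minimizing $\E_{(\bu,\bv)\sim E}\langle f_P(\bu), f_P(\bv)\rangle$; then by averaging,
\[
\E\langle f_{P^\ast}(\bu), f_{P^\ast}(\bv)\rangle \;\leq\; \tfrac{1}{3}\sum_{P \in \{X,Y,Z\}}\E\langle f_P(\bu), f_P(\bv)\rangle.
\]
Viewing $f_{P^\ast}$ as a map $V \to S^{3n-1}$, it is feasible for the simplified SDP and has value $\tfrac{1}{4} - \tfrac{3}{4}\E\langle f_{P^\ast}(\bu), f_{P^\ast}(\bv)\rangle \geq \tfrac{1}{4} - \tfrac{1}{4}\sum_{P}\E\langle f_P(\bu), f_P(\bv)\rangle$, which is precisely the value of $(f_X, f_Y, f_Z)$ in the SDP of \Cref{prop:sum-of-squares-equals-sdp}.

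The only subtlety worth flagging, and the closest thing to an obstacle, is that $f_{P^\ast}$ takes values in $S^{3n-1}$, while the simplified SDP of \Cref{def:heis-sdp} nominally uses $S^{n-1}$. I would handle this with the standard observation that the optimum of such an SDP is independent of the ambient sphere dimension above $n$: the image of any $f:V \to S^{m-1}$ spans a subspace of $\R^m$ of dimension at most $|V|=n$, which can be rotated into $\R^n$ without changing any pairwise inner product, so the two maxima agree. With this in hand, the two inequalities combine to give the claimed equality.
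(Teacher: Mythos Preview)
Your proposal is correct and follows essentially the same approach as the paper: lift $f$ to the triple $e_i \otimes f(v)$ (your block-vector notation) for one direction, and in the other direction pick the best of the three $f_P$'s by averaging, then reduce the ambient dimension from $S^{3n-1}$ to $S^{n-1}$ using that only $n$ vectors are involved. The differences are purely cosmetic.
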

\begin{proof}
We first show how to convert a solution for~\eqref{eq:heis-first-sdp} into a solution for~\eqref{eq:sdp-mc-heis}
without decreasing the value.
To begin, we can rewrite~\eqref{eq:heis-first-sdp} as
\begin{equation}\label{eq:split-up}
\tfrac{1}{3} \cdot  \E_{(\bu, \bv) \sim E}[ \tfrac{1}{4} - \tfrac{3}{4} \langle f_X(\bu), f_X(\bv)\rangle]
+\tfrac{1}{3} \cdot  \E_{(\bu, \bv) \sim E}[ \tfrac{1}{4} - \tfrac{3}{4} \langle f_Y(\bu), f_Y(\bv)\rangle]
+\tfrac{1}{3} \cdot  \E_{(\bu, \bv) \sim E}[ \tfrac{1}{4} - \tfrac{3}{4} \langle f_Z(\bu), f_Z(\bv)\rangle].
\end{equation}
Pick the term $P \in \{X, Y, Z\}$ with the largest value, and set $f = f_P$.
Then $f$ has value in \eqref{eq:sdp-mc-heis} at least the value of $f_X, f_Y, f_Z$ in \eqref{eq:split-up}.
The only caveat is that $f$ maps into $S^{3n-1}$ rather than $S^{n-1}$.
However, $f$ only outputs $n$ different vectors, so these can be represented in~$n$-dimensional space while preserving inner products.

Next, we reverse. Let $f:V\rightarrow S^{n-1}$ be a solution to \eqref{eq:sdp-mc-heis}.
We define
\begin{equation*}
f_X(v) = e_1 \otimes f(v),
\quad
f_Y(v) = e_2 \otimes f(v),
\quad
f_Z(v) = e_3 \otimes f(v),
\end{equation*}
where $e_1, e_2, e_3$ are standard basis vectors in $\R^3$.
Then $\langle f_P(v), f_Q(v)\rangle = 0$ for $P \neq Q$ because $\langle e_i, e_j\rangle = 0$ for $i \neq j$.
In addition, $\langle f_P(u), f_P(v) \rangle = \langle f(u), f(v)\rangle$.
Thus, the value of this assignment
\begin{align*}
\eqref{eq:heis-first-sdp}
& = \frac{1}{4} \cdot \E_{(\bu, \bv) \sim E}[1 - \langle f_X(\bu), f_X(\bv)\rangle- \langle f_Y(\bu), f_Y(\bv)\rangle - \langle f_Z(\bu), f_Z(\bv)\rangle]\\
& = \frac{1}{4} \cdot \E_{(\bu, \bv) \sim E}[1 - \langle f(\bu), f(\bv)\rangle- \langle f(\bu), f(\bv)\rangle - \langle f(\bu), f(\bv)\rangle]\\
& = \E_{(\bu, \bv) \sim E}[ \tfrac{1}{4} - \tfrac{3}{4} \langle f(\bu), f(\bv)\rangle].
\end{align*}
As a result, the value remains unchanged. This completes the proof.
\end{proof}

\subsection{Projection rounding}

We recall the performance of projection rounding, first stated in \Cref{eq:might-reference-later}.

\begin{theorem}[{\cite[Lemma 2.1]{BOV10}}]\label{thm:exact-formula-for-average-inner-product}
Let $-1 \leq \rho \leq 1$,
and let $u$ and $v$ be two $n$-dimensional unit vectors such that $\langle u, v\rangle = \rho$.
Let $\bZ$ be a random $k \times n$ matrix consisting of $kn$ i.i.d.\ standard Gaussians. Then
\begin{equation*}
F^*(k, \rho)
:= \E_{\bZ}\left\langle \frac{\bZ u}{\Vert \bZ u\Vert}, \frac{\bZ v}{\Vert \bZ v\Vert}\right\rangle
= \frac{2}{k}\left(\frac{\Gamma((k+1)/2)}{\Gamma(k/2)}\right)^2 \langle u, v\rangle \,_2F_1\left(1/2,1/2;k/2 + 1;\langle u, v\rangle^2\right),
\end{equation*}
where $_2F_1(\cdot, \cdot;\cdot;\cdot)$ is the Gaussian hypergeometric function.
\ignore{
\begin{equation*}
\E_{\boldf} \langle \boldf(u), \boldf(v)\rangle 
= \frac{2}{3}\left(\frac{\Gamma(2)}{\Gamma(3/2)}\right)^2\rho_{u, v}\cdot\,_2F_1\left(1/2,1/2;5/2;\rho_{u, v}^2\right),
\end{equation*}}
\end{theorem}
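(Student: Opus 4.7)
The strategy is to reduce the expectation to a degree-$1$ eigenvalue of a spherical noise operator and then evaluate that eigenvalue explicitly using the Gegenbauer formulas from \Cref{sec:spherical}. First I would exploit the right-orthogonal invariance of the distribution of $\bZ$: choosing an orthogonal change of basis on $\R^n$ that maps $u\mapsto e_1$ and $v\mapsto \rho e_1+\sqrt{1-\rho^2}\,e_2$, the random vectors $\bX:=\bZ u$ and $\bY:=\bZ v$ become the corresponding linear combinations of the first two columns of $\bZ$. In particular $(\bX,\bY)$ are $\rho$-correlated standard Gaussian vectors in $\R^k$, i.e.\ $\E[\bX\bX^T]=\E[\bY\bY^T]=I_k$ and $\E[\bX\bY^T]=\rho I_k$, and the quantity to compute is $\E[\langle \bX/\|\bX\|,\bY/\|\bY\|\rangle]$.

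Second, the joint distribution of $(\bX/\|\bX\|,\bY/\|\bY\|)$ is invariant under the diagonal $O(k)$ action on $S^{k-1}\times S^{k-1}$, so the conditional density of $\bY/\|\bY\|$ given $\bX/\|\bX\|=u$ has the form $v\mapsto g(\langle u,v\rangle)$ for some nonnegative function $g$ on $[-1,1]$. The resulting spherical operator $\U_g$ is exactly of the type~\eqref{eq:Ug}. Applied coordinatewise to the identity map $u\mapsto u$ (each of whose coordinates is a degree-$1$ spherical harmonic), the Funk--Hecke formula (\Cref{thm:funk-hecke}) yields $\U_g\,\mathrm{id}=\lambda_1\,\mathrm{id}$, and taking an inner product with $u$ and averaging gives
\[
\E\left[\left\langle \frac{\bX}{\|\bX\|},\frac{\bY}{\|\bY\|}\right\rangle\right]=\lambda_1.
\]

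Finally, I would compute $\lambda_1$ in closed form. By \Cref{cor:eigenvalue-formula} with $\alpha=(k-2)/2$ and $C_1^{(\alpha)}(t)/C_1^{(\alpha)}(1)=t$,
\[
\lambda_1=\frac{1}{Z_k}\int_{-1}^1 t\,(1-t^2)^{(k-3)/2}\,g(t)\,dt.
\]
To obtain $g$, I would write $\bY=\rho\bX+\sqrt{1-\rho^2}\,\bG$ with $\bG$ independent of $\bX$, condition on $r=\|\bX\|$ (chi-distributed with $k$ degrees of freedom), and decompose $\bG$ into components parallel and perpendicular to $\bX/\|\bX\|$, integrating out the perpendicular part using its chi-squared distribution with $k-1$ degrees of freedom. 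Substituting the resulting integral expression for $g$ into the display above and expanding $(1-\rho^2 s^2)^{-1/2}=\sum_{j\ge 0}\binom{-1/2}{j}(-\rho^2 s^2)^j$ term by term, the Beta-function moments $\int_0^1 s^{2j+1}(1-s^2)^{(k-3)/2}\,ds$ combined with standard $\Gamma$-identities for the chi moments assemble into the Pochhammer coefficients $(1/2)_j(1/2)_j/[(k/2+1)_j\,j!]$ of $\rho\cdot{}_2F_1(1/2,1/2;k/2+1;\rho^2)$, with overall constant $(2/k)\bigl(\Gamma((k+1)/2)/\Gamma(k/2)\bigr)^2$.

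The conceptual work is done by the Funk--Hecke reduction in the second step; the main obstacle is the purely bookkeeping task of matching the gamma-function combinations to Pochhammer symbols in the third step. A cleaner route, if the term-by-term identification proves cumbersome, is to match Taylor coefficients of both sides at $\rho=0$ and appeal to analyticity of $F^*(k,\cdot)$ on a neighborhood of the origin (which is immediate from the bounded convergence theorem applied to $\langle \bX/\|\bX\|,\bY/\|\bY\|\rangle$).
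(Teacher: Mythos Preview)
The paper does not prove this statement; it is quoted as Lemma~2.1 of \cite{BOV10} and simply cited. There is no in-paper argument to compare against.

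Your approach is sound. Step~1 (right-orthogonal invariance reducing to $\rho$-correlated Gaussians in $\R^k$) is standard and correct; the paper itself runs the reverse of this reduction in the proof of \Cref{prop:opt-formula}. Step~2 is correct and is a pleasant way to repurpose the paper's spherical-harmonic machinery: the identity map on $S^{k-1}$ is a vector of degree-$1$ harmonics, so by Funk--Hecke the expectation collapses to the single eigenvalue $\lambda_1$.

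The only substantive work is Step~3, which you rightly flag as bookkeeping, but your sketch has a loose end. The kernel $g$ you need is not the shell-conditioned kernel $G_\rho^{r,s}$ of \Cref{def:noise-operator-on-shell} but its average over the \emph{joint} law of $(\|\bX\|,\|\bY\|)$, which is not a product (the radii are correlated through $\rho$). When you ``condition on $r=\|\bX\|$'' you must afterwards integrate over $r$ against the chi density, and the $s=\|\bY\|$ integration is folded into your perpendicular-component step; this is fine but needs to be said. More to the point, the specific factor $(1-\rho^2 s^2)^{-1/2}$ that you propose to expand is not derived from the Gaussian setup you describe --- it is the integrand in the Euler integral representation of ${}_2F_1(1/2,1/2;k/2+1;\rho^2)$, so invoking it here is really assuming the answer. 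You should either carry out the $g$-computation far enough to see that factor emerge, or take your own advice and match Taylor coefficients in $\rho$ directly from $\E[\langle\bX,\bY\rangle/(\|\bX\|\|\bY\|)]$, which avoids naming $g$ altogether and is closer to how the cited reference proceeds.
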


In the \qmaxcut SDP,
if an edge $(u, v)$ has value $\tfrac{1}{4} - \tfrac{3}{4} \rho$,
then projection rounding will produce a solution whose value on this edge
is $\tfrac{1}{4} - \tfrac{1}{4} F^*(3, \rho)$ in expectation.
Similarly, in the product state SDP,
an edge with value $\tfrac{1}{4} - \tfrac{1}{4}\rho$
will be rounded into a solution with value $\tfrac{1}{4} - \tfrac{1}{4} F^*(3, \rho)$ on this edge.
We can then define our approximation ratios as the worst case rounding over all values of~$\rho$.

\begin{definition}[Approximation ratios]\label{def:ratios}
The constant $\apk$ is defined as the solution to the minimization problem
\begin{equation*}
\apk = 
\min_{-1 \leq \rho < 1/3} \frac{\tfrac{1}{4} - \tfrac{1}{4} F^*(3, \rho)}{\tfrac{1}{4} - \tfrac{3}{4} \rho},
\end{equation*}
and the constant $\rpk$ is defined as the minimizing value of~$\rho$.
In addition, the constant $\abov$ is defined as the solution to the minimization problem
\begin{equation*}
\abov = 
\min_{-1 \leq \rho \leq 1} \frac{\tfrac{1}{4} - \tfrac{1}{4} F^*(3, \rho)}{\tfrac{1}{4} - \tfrac{1}{4} \rho},
\end{equation*}
and the constant $\rbov$ is defined as the minimizing value of~$\rho$.
\end{definition}

We note that the minimization for $\apk$ is only over $\rho \leq 1/3$, because when $\rho \geq 1/3$ the denominator is $\leq 0$.

\begin{proposition}[Formula for the optimum value]\label{prop:opt-formula}
Let $f_{\mathrm{opt}}:\R^n \rightarrow S^{k-1}$ be defined by $f_{\mathrm{opt}}(x) = x_{\leq k}/\Vert x_{\leq k} \Vert$,
where $x_{\leq k} = (x_1, \ldots, x_k)$. Then
\begin{equation*}
\E_{\bx \sim_\rho \by}\langle f_{\mathrm{opt}}(\bx), f_{\mathrm{opt}}(\by)\rangle = F^*(k, \rho).
\end{equation*}
\end{proposition}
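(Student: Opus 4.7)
The plan is to show both sides equal the expected inner product of two normalized $\rho$-correlated Gaussian vectors in $\R^k$, and therefore are equal to each other.

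First I would analyze the left-hand side. For $\bx \sim_\rho \by$ in $\R^n$, the coordinate projections $\bg := \bx_{\leq k}$ and $\bh := \by_{\leq k}$ satisfy $(\bg, \bh) \sim \normal(\mathbf{0}, \Sigma)$ where $\Sigma = \begin{pmatrix} I_k & \rho I_k \\ \rho I_k & I_k \end{pmatrix}$, since each coordinate pair $(\bx_i, \by_i)$ with $i \le k$ is an independent $\rho$-correlated scalar Gaussian pair. Thus $\bg$ and $\bh$ are themselves $\rho$-correlated $k$-dimensional Gaussians, and by the definition of $f_{\mathrm{opt}}$,
\begin{equation*}
\E_{\bx \sim_\rho \by}\langle f_{\mathrm{opt}}(\bx), f_{\mathrm{opt}}(\by)\rangle
= \E_{\bg \sim_\rho \bh}\left\langle \frac{\bg}{\|\bg\|}, \frac{\bh}{\|\bh\|}\right\rangle,
\end{equation*}
where the latter expectation is over $(\bg, \bh)$ drawn from the joint distribution above on $\R^k \times \R^k$.

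Next I would analyze the right-hand side via the same device. Pick any unit vectors $u, v \in \R^n$ with $\langle u, v\rangle = \rho$ (e.g.\ $u = e_1$ and $v = \rho e_1 + \sqrt{1-\rho^2}\, e_2$). Let $\bZ$ be a $k \times n$ matrix of i.i.d.\ standard Gaussians, and set $\bg' := \bZ u$, $\bh' := \bZ v$. Each of $\bg', \bh'$ is a $k$-dimensional standard Gaussian (since $u, v$ are unit vectors and the rows of $\bZ$ are independent standard Gaussians on $\R^n$), and the joint covariance between them is computed coordinatewise: for any $i, j \in [k]$,
\begin{equation*}
\E[\bg'_i \bh'_j] = \E\!\left[\sum_{a, b} \bZ_{i,a} u_a \bZ_{j,b} v_b\right] = \delta_{ij}\, \langle u, v\rangle = \rho\, \delta_{ij}.
\end{equation*}
Hence $(\bg', \bh')$ has exactly the same joint Gaussian law as $(\bg, \bh)$. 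It follows from \Cref{thm:exact-formula-for-average-inner-product} that
\begin{equation*}
F^*(k, \rho) = \E_{\bZ}\!\left\langle \frac{\bZ u}{\|\bZ u\|}, \frac{\bZ v}{\|\bZ v\|}\right\rangle
= \E_{\bg' \sim_\rho \bh'}\!\left\langle \frac{\bg'}{\|\bg'\|}, \frac{\bh'}{\|\bh'\|}\right\rangle
= \E_{\bg \sim_\rho \bh}\!\left\langle \frac{\bg}{\|\bg\|}, \frac{\bh}{\|\bh\|}\right\rangle,
\end{equation*}
which matches the left-hand side.

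There is no real obstacle here; the proof is essentially a one-line check that in both expressions the underlying randomness can be packaged as the same $\rho$-correlated $k$-dimensional Gaussian pair. The only mild point to verify carefully is the covariance calculation for $(\bZ u, \bZ v)$, which relies on $u, v$ being unit vectors with inner product exactly $\rho$; this is ensured by the hypothesis of \Cref{thm:exact-formula-for-average-inner-product}.
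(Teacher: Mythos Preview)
Your proof is correct and takes essentially the same approach as the paper: both arguments identify each side with the expected inner product of normalized $\rho$-correlated $k$-dimensional Gaussians and conclude by matching the distributions. The paper packages this slightly differently—realizing $(\bx,\by)$ as $(\bZ u,\bZ v)$ for an $n\times n$ Gaussian matrix $\bZ$ and then projecting to the first $k$ coordinates—but the content is the same covariance check you carried out.
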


\begin{proof}
Let $u, v \in \R^n$ be any unit vectors with $\langle u, v \rangle = \rho$.
Let $\bZ$ be an $n \times n$ matrix consisting of~$n^2$ i.i.d.\ standard Gaussians.
Then $\tfrac{1}{\sqrt{n}} \bZ u $ and $\tfrac{1}{\sqrt{n}}\bZ v$ are distributed as $\rho$-correlated Gaussians.
If $\Pi_{\leq k}$ is the $k \times n$ matrix which projects a vector down to its first $k$ coordinates, we have
\begin{align*}
\E_{\bx \sim_\rho \by}\langle f_{\mathrm{opt}}(\bx), f_{\mathrm{opt}}(\by)\rangle
&= \E_{\bZ} \langle f_{\mathrm{opt}}(\tfrac{1}{\sqrt{n}} \bZ u), f_{\mathrm{opt}}(\tfrac{1}{\sqrt{n}} \bZ v)\rangle\\
&= \E_{\bZ} \langle f_{\mathrm{opt}}(\bZ u), f_{\mathrm{opt}}(\bZ v)\rangle
= \E_{\bZ} \left\langle \frac{\Pi_{\leq k} \bZ u}{\Vert \Pi_{\leq k} \bZ u\Vert},
	\frac{\Pi_{\leq k} \bZ v}{\Vert \Pi_{\leq k}  \bZ v\Vert}\right\rangle.
\end{align*}
Now note that $\Pi_{\leq k} \bZ$ is distributed as a random $k \times n$ matrix consisting of $kn$ i.i.d.\ standard Gaussians.
Applying \Cref{thm:exact-formula-for-average-inner-product}, this is exactly equal to $F^*(k, \rho)$.
\end{proof}

\subsection{Fourier analysis on the hypercube}\label{sec:fourier-analysis}

We will review basic concepts in the Fourier analysis of Boolean functions.
See~\cite{OD14} for further details.

\begin{definition}[Fourier transform]
Let $f:\{-1, 1\}^n \rightarrow \R$ be a function.
Then it has a unique representation as a multilinear polynomial known as the \emph{Fourier transform}, given by
\begin{equation*}
f(x) = \sum_{S \subseteq [n]} \widehat{f}(S) \chi_S(x),
\end{equation*}
where $\widehat{f}(S)$ is a real coefficient called the \emph{Fourier coefficient},
and $\chi_S(x)$ is the monomial $\prod_{i \in S} x_i$.

We extend this definition to functions $f:\{-1, 1\}^n \rightarrow \R^k$ as follows:
let $f = (f_1, \ldots, f_k)$. Then
\begin{equation*}
f(x)
= (f_1(x), \ldots, f_k(x))
= \sum_{S \subseteq [n]} (\widehat{f}_1(S), \ldots, \widehat{f}_k(S)) \chi_S(x)
= \sum_{S \subseteq [n]} \widehat{f}(S) \chi_S(x),
\end{equation*}
where $\widehat{f}(S) = (\widehat{f}_1(S), \ldots, \widehat{f}_k(S))$ is a vector-valued Fourier coefficient.
\end{definition}

\begin{definition}[Variance and influences]
Let $f:\{-1, 1\}^n \rightarrow \R^k$. Then
\begin{equation*}
\E_{\bx\sim \{-1, 1\}^n} \Vert f(\bx)\Vert_2^2 = \sum_{S \subseteq [n]} \Vert \widehat{f}(S) \Vert_2^2.
\end{equation*}
Its \emph{variance} is the quantity
\begin{equation*}
\Var[f]
= \E_{\bx \sim \{-1, 1\}^n} \Vert f(\bx) - \E[f] \Vert_2^2
= \sum_{S \subseteq [n], S \neq \emptyset} \Vert \widehat{f}(S)\Vert_2^2.
\end{equation*}
Given a coordinate~$i$,
we define its \emph{influence} as
\begin{equation*}
\Inf_i[f]
= \sum_{S \subseteq [n], S \ni i} \Vert \widehat{f}(S)\Vert_2^2
= \sum_{S \subseteq [n], S \ni i} \sum_j  \widehat{f_j}(S)^2
= \sum_j \Inf_i[f_j].
\end{equation*}
We will also need truncated versions of these two measures:
\begin{equation*}
\Inf^{\leq m}_i[f] = \sum_{|S| \leq m, S \ni i} \Vert \widehat{f}(S)\Vert_2^2,
\qquad
\Var[f^{>m}] = \sum_{|S| > m} \Vert \widehat{f}(S) \Vert_2^2.
\end{equation*}
\end{definition}

\begin{proposition}[Only few noticeable coordinates]\label{prop:influence-bound}
Let $f:\{-1, 1\}^n \rightarrow B^k$.
Then there are at most $m/\delta$ coordinates~$i$ such that $\Inf^{\leq m}_i[f] \geq \delta$.
\end{proposition}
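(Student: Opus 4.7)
The plan is to bound the sum $\sum_{i=1}^n \Inf_i^{\leq m}[f]$ from above by $m$, and then observe that if more than $m/\delta$ coordinates had influence at least $\delta$, this sum would exceed $m$.

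First I would interchange the order of summation. By definition,
\[
\sum_{i=1}^n \Inf_i^{\leq m}[f] = \sum_{i=1}^n \sum_{\substack{S \subseteq [n],\, |S| \leq m \\ S \ni i}} \|\widehat{f}(S)\|_2^2 = \sum_{\substack{S \subseteq [n] \\ |S| \leq m}} |S| \cdot \|\widehat{f}(S)\|_2^2,
\]
where the second equality comes from counting, for each fixed $S$ with $|S|\leq m$, the $|S|$ indices $i \in S$ that contribute. Since $|S| \leq m$ in every term, this is at most
\[
m \cdot \sum_{S \subseteq [n]} \|\widehat{f}(S)\|_2^2 = m \cdot \E_{\bx \sim \{-1,1\}^n}\|f(\bx)\|_2^2 \leq m,
\]
where the equality is the Parseval identity stated in the preliminaries and the final inequality uses the assumption $f: \{-1,1\}^n \to B^k$, so $\|f(\bx)\|_2 \leq 1$ pointwise.

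Now suppose for contradiction that strictly more than $m/\delta$ coordinates $i$ satisfy $\Inf_i^{\leq m}[f] \geq \delta$. Then $\sum_i \Inf_i^{\leq m}[f] > (m/\delta)\cdot \delta = m$, contradicting the bound above. Hence at most $m/\delta$ coordinates can have $\Inf_i^{\leq m}[f] \geq \delta$, as claimed.

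There is no real obstacle here: the only two ingredients are the swap of summation (which is just double counting) and the fact that $f$ maps into the unit ball, which gives the $\ell_2$ bound via Parseval. The proof is essentially the vector-valued analogue of the classical scalar statement (\cite{OD14}, Proposition 3.2), with the only change being that $\widehat{f}(S) \in \R^k$ and one uses $\|\widehat{f}(S)\|_2^2 = \sum_j \widehat{f_j}(S)^2$ coordinatewise.
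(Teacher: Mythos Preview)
Your proof is correct and follows essentially the same approach as the paper: bound the total truncated influence by $m$ via double counting and Parseval (using that $f$ maps into $B^k$), then conclude by pigeonhole. The only cosmetic difference is that the paper sums over the set $N$ of high-influence coordinates directly (obtaining $|S\cap N|\leq m$ rather than $|S|\leq m$), but the argument is the same.
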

\begin{proof}
Let $N$  be the set of all such coordinates. Then
\begin{multline*}
|N| \cdot \delta
\leq \sum_{i \in N} \Inf^{\leq m}_i[f]
= \sum_{i \in N}\sum_{|S| \leq m, S \ni i} \Vert \widehat{f}(S)\Vert_2^2
= \sum_{|S| \leq m} |S \cap N| \cdot \Vert \widehat{f}(S)\Vert_2^2\\
\leq \sum_{|S| \leq m} m\cdot \Vert \widehat{f}(S)\Vert_2^2
\leq m \cdot \sum_{S \subseteq [n]} \Vert \widehat{f}(S)\Vert_2^2
= m \cdot \E_{\bx} \Vert f(\bx) \Vert_2^2
\leq m.
\end{multline*}
Rearranging this gives $|N| \leq m /\delta$.
\end{proof}

\begin{definition}[Correlated Boolean variables~{\cite[Definition 2.40]{OD14}}]
Given a fixed $x \in \{-1, 1\}^n$,
we say that $\by \in \{-1, 1\}^n$ is \emph{$\rho$-correlated to~$x$}
if each coordinate $\by_i$ is sampled independently according to the following distribution:
\begin{equation*}
\by_i =
\left\{\begin{array}{rl}
x_i & \text{with probability $\tfrac{1}{2} + \tfrac{1}{2}\rho$}\\
-x_i & \text{with probability $\tfrac{1}{2} - \tfrac{1}{2}\rho$}.
\end{array}\right.
\end{equation*}
In addition, we say that $\bx$ and $\by$ are \emph{$\rho$-correlated $n$-dimensional Boolean strings}
if $\bx$ is sampled from $\{-1, 1\}^n$ uniformly at random
and $\by$ is $\rho$-correlated to~$\bx$.
Note that for each $i$, $\E[\bx_i \by_i] = \rho$.
\end{definition}

\begin{definition}[Noise stability]
Let $f: \{-1, 1\}^n \rightarrow \R^k$, and let $-1 \leq \rho \leq 1$.
Given an input $x \in \{-1, 1\}^n$, we write
\begin{equation*}
\T_\rho f(x) = \E_{\substack{\text{$\by$ which is}\\\text{$\rho$-correlated to~$x$}}}[f(\by)] = \sum_{S \subseteq [n]} \rho^{|S|} \widehat{f}(S) \chi_S(x).
\end{equation*}
Then the \emph{Boolean noise stability} of $f$ at~$\rho$ is
\begin{equation*}
\Stab_\rho[f]
= \E_{\substack{\text{$(\bx, \by)$ $\rho$-correlated}\\\text{$n$-dim Boolean strings}}}\langle f(\bx), f(\by)\rangle
= \E_{\bx \sim \{-1, 1\}^n} \langle f(\bx), \T_\rho f(\bx)\rangle = \sum_{S \subseteq [n]} \rho^{|S|} \widehat{f}(S)^2.
\end{equation*}
This coincides with the Gaussian noise sensitivity of~$f$.
To see this, note that the Fourier expansion allows us to extend~$f$'s domain to all of $\R^n$.
Then
\begin{equation*}
\E_{\bx \sim_\rho \by}\langle f(\bx), f(\by) \rangle = \sum_{S \subseteq [n]} \rho^{|S|} \widehat{f}(S)^2.
\end{equation*}
Hence, we use $\Stab_\rho[f]$ as for both notions.
\end{definition}

\section{Integrality gaps}\label{sec:integrality-gaps}

In this section, we prove \Cref{thm:integrality-gap-heisenberg,thm:integrality-gap-prod}, which we restate here.

\begin{theorem}[Integrality gap for the \qmaxcut SDP; \Cref{thm:integrality-gap-heisenberg} restated]
\label{thm:integrality-gap-heisenberg-restated}
Assuming \cref{conj:vector-borell-intro},
the \qmaxcut semidefinite program $\hsdp(G)$ has integrality gap $\alpha_{\mathrm{GP}}$.
\end{theorem}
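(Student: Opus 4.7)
The plan is to mimic the Feige--Schechtman integrality gap construction for \maxcut, but tailored to the ``worst case'' inner product $\rgp \approx -0.97$ for projection rounding of $\hsdp$. Concretely, I will use (a discretization of) the $\rgp$-correlated Gaussian graph $\ggraph^n_{\rgp}$. Taking the canonical embedding $f_{\mathrm{SDP}}(u) = u/\|u\|$ gives a feasible solution to $\hsdp$, and under the edge distribution one has $\langle f_{\mathrm{SDP}}(\bu), f_{\mathrm{SDP}}(\bv)\rangle \approx \rgp$ almost surely for large $n$. This immediately yields the lower bound
\[
    \hsdp(\ggraph^n_{\rgp}) \;\geq\; \tfrac{1}{4} - \tfrac{3}{4}\rgp - o_n(1).
\]

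The main task is to upper bound $\heis(\ggraph^n_{\rgp})$ by the product state value, and then to bound the product state value by applying \Cref{conj:vector-borell-intro}. For the first step, I would observe that every vertex participates in a vanishing fraction of the edge distribution (in a finite discretization with $N$ vertices sampled i.i.d.\ from $\normal(0, I_n)$, each vertex carries probability mass $O(1/N)$ and the two-step transition entries are $O(1/N)$ as well), so \Cref{cor:BH-nonuniform-easy-to-use} gives $\heis \leq \prodval + o_N(1)$. For the second step, I would rewrite the product state value via \Cref{prop:rewrite-product-restated} as
\[
    \prodval(\ggraph^n_{\rgp}) \;=\; \tfrac{1}{4} - \tfrac{1}{4}\max_{f:V\to S^2} \E_{(\bu,\bv)\sim E}\langle f(\bu), f(\bv)\rangle,
\]
and then extend (or approximate) $f$ to a measurable function $\R^n \to S^2$. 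Since $\rgp \leq 0$, \Cref{conj:vector-borell-intro} with $k=3$ implies that this expectation is minimized by $f_{\mathrm{opt}}(x) = x_{\leq 3}/\|x_{\leq 3}\|$, so
\[
    \prodval(\ggraph^n_{\rgp}) \;\leq\; \tfrac{1}{4} - \tfrac{1}{4}F^*(3, \rgp) + o(1)
\]
using \Cref{prop:opt-formula}.

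Combining the two bounds and using the definition of $\rgp$ and $\agp$ from \Cref{def:ratios}, the ratio of $\heis$ to $\hsdp$ on this instance is at most
\[
    \frac{\tfrac{1}{4} - \tfrac{1}{4} F^*(3, \rgp)}{\tfrac{1}{4} - \tfrac{3}{4}\rgp} + o(1) \;=\; \agp + o(1),
\]
which, after taking $n$ and $N$ large enough and noting that $\hsdp$ is an SDP \emph{relaxation} (so the reverse inequality $\heis \leq \hsdp$ prevents the ratio from being smaller than $\agp$ for matching instances), proves the claimed integrality gap. A matching ``upper bound'' on the integrality gap follows immediately from the GP rounding inequality~\eqref{eq:gp-ineq} averaged over edges.

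The main obstacle will be the technical plumbing around the Gaussian graph: one must (i) replace the infinite Gaussian graph by a finite weighted instance for which both the SDP value and the maximum energy are well defined, and then argue that the relevant quantities converge in the limit; (ii) handle the fact that $\langle f_{\mathrm{SDP}}(\bu), f_{\mathrm{SDP}}(\bv)\rangle$ is only concentrated near $\rgp$ rather than identically equal to it, which introduces small continuity losses in the computation of both the SDP value and $F^*(3, \cdot)$; and (iii) verify that \Cref{cor:BH-nonuniform-easy-to-use} applies with sufficiently small error even when the graph is weighted and only approximately regular. None of these is conceptually hard, but each requires a careful quantitative estimate to make sure the $o(1)$ terms really do vanish before the conjecture is invoked.
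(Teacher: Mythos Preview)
Your proposal is correct and follows essentially the same route as the paper: the $\rgp$-correlated Gaussian graph with the identity embedding gives the SDP lower bound, \Cref{conj:vector-borell-intro} (with $k=3$) together with \Cref{prop:opt-formula} gives the product-state upper bound, and the Brand\~ao--Harrow bound (\Cref{cor:BH-nonuniform-easy-to-use}) applied to a finite discretization closes the gap between $\heis$ and $\prodval$; the matching lower bound on the integrality gap comes from the GP rounding guarantee. The only substantive difference is in the discretization: the paper (\Cref{lem:discretization}, proved in \Cref{sec:card-reduc}) first pushes the Gaussian graph onto $S^{n-1}$, then uses an $\eps$-net with Voronoi cells rather than i.i.d.\ sampling, and finally duplicates each vertex $M$ times to drive the Brand\~ao--Harrow error to zero---your i.i.d.\ sampling idea would also work but needs its own concentration argument.
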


\begin{theorem}[Integrality gap for product state SDP; \Cref{thm:integrality-gap-prod} restated]
\label{thm:integrality-gap-prod-restated}
Assuming \cref{conj:vector-borell-intro},
the product state semidefinite program $\prodsdp(G)$ has integrality gap $\abov$.
\end{theorem}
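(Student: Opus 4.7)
The plan is to exhibit a family of finite instances whose integrality gap tends to $\abov$; combined with the BOV upper bound from Theorem~\ref{thm:bov-thm} (which gives $\prodval(G) \ge \abov \cdot \prodsdp(G)$ for every $G$), this pins the gap at $\abov$. The extremal instances I would use are (discretizations of) the Gaussian graphs $\mathcal{G}^n_{\rbov}$, where $\rbov \in (-1,0)$ is the worst-case correlation from Definition~\ref{def:ratios}. This mirrors the Feige--Schechtman construction for \maxcut outlined in Section~\ref{sec:tech-overview-overview}.

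For the SDP side, I would use the identity embedding $f_{\mathrm{SDP}}(u) = u/\|u\|$ on $\mathcal{G}^n_\rho$. A random edge is a pair of $\rho$-correlated Gaussians $(\bu,\bv)$, so by Proposition~\ref{prop:opt-formula} with $k=n$,
\[
    \E\langle f_{\mathrm{SDP}}(\bu), f_{\mathrm{SDP}}(\bv)\rangle \;=\; F^*(n,\rho),
\]
which tends to $\rho$ as $n \to \infty$ by concentration of the Gaussian norm. Hence $\prodsdp(\mathcal{G}^n_\rho) \geq \tfrac{1}{4} - \tfrac{1}{4} F^*(n,\rho) \to \tfrac{1}{4} - \tfrac{1}{4}\rho$. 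For the matching upper bound on $\prodval(\mathcal{G}^n_\rho)$, I would invoke Conjecture~\ref{conj:vector-borell-intro} with $k=3$ and $\rho = \rbov \le 0$: for every measurable $f:\R^n \to S^2$,
\[
    \E_{\bu \sim_\rho \bv}\langle f(\bu), f(\bv)\rangle \;\ge\; \stab_\rho[f_{\mathrm{opt}}] \;=\; F^*(3,\rho),
\]
using Proposition~\ref{prop:opt-formula} for the equality. Thus $\prodval(\mathcal{G}^n_{\rbov}) \le \tfrac{1}{4} - \tfrac{1}{4} F^*(3,\rbov)$, and the ratio $\prodval(\mathcal{G}^n_{\rbov})/\prodsdp(\mathcal{G}^n_{\rbov})$ converges to
\[
    \frac{\tfrac{1}{4} - \tfrac{1}{4} F^*(3,\rbov)}{\tfrac{1}{4} - \tfrac{1}{4}\rbov} \;=\; \abov
\]
by Definition~\ref{def:ratios}.

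The principal obstacle is that $\mathcal{G}^n_\rho$ is an infinite graph while Definition~\ref{def:integrality-gap} demands finite weighted instances. I would discretize in the Feige--Schechtman style: either take a fine $\varepsilon$-net on a Gaussian-truncated ball in $\R^n$, weighting edges by the $\rho$-correlated Gaussian density between cells, or sample a large collection of i.i.d.\ Gaussian vertices with edge weights drawn from $\rho$-correlated pairs. The SDP lower bound transfers immediately because the identity embedding is well-defined on the discrete vertex set and the expected inner product changes by $o(1)$ under the discretization. The subtler direction is transferring the Borell upper bound to the discrete side: given any assignment $f:V\to S^2$ on the net, extend it to a measurable $f:\R^n\to S^2$ by nearest-neighbor interpolation, verify via a Lipschitz/continuity argument that the continuous objective differs from the discrete one by $o(1)$ (uniformly in $f$, as the net becomes fine and the truncation captures $1-o(1)$ of the Gaussian measure), and then apply Conjecture~\ref{conj:vector-borell-intro} to the extension. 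Making this transfer quantitatively tight, while preserving the equality case of the Borell bound up to vanishing terms, is the main technical step of the proof.
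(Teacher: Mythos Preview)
Your proposal is correct and follows essentially the same approach as the paper: use the $\rho$-correlated Gaussian graph at $\rho=\rbov$, lower-bound the SDP via the identity embedding, upper-bound the product state value via Conjecture~\ref{conj:vector-borell-intro} with $k=3$, discretize via an $\varepsilon$-net to obtain finite instances, and combine with the BOV rounding guarantee for the matching direction. The paper packages the discretization step as a standalone lemma (proved in the appendix by first pushing forward to $S^{n-1}$, then Voronoi-partitioning via an $\varepsilon$-net, then tensoring to kill self-loops), but your outline of the transfer argument is the same in substance.
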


Recalling \Cref{def:integrality-gap}, our goal is to compute
$$\inf_{\text{instances }\mathcal I\text{ of }\mathcal P}\left\{\frac{\mathrm{OPT}(\mathcal I)}{\mathrm{SDP}(\mathcal I)}\right\},$$
where $\mathcal P$ is the problem of either computing the value or the product state value of a \qmaxcut instance.
To upper bound this quantity, we construct a specific instance $\mathcal I$ and give an upper-bound for $\mathrm{OPT}(\mathcal I)$ and a lower-bound for $\mathrm{SDP}(\mathcal I)$. Note that for the product state case, we only optimize over product states, whereas for the general \qmaxcut we consider all quantum states. However, the specific instance $\mathcal I$ we consider will correspond to a graph of high degree, and for such graphs Brandao and Harrow \cite{BH16} show it suffices to consider product states.

The instance we use for both integrality gaps is the
$\rho$-correlated Gaussian graph.
This was also used as an integrality gap for the \maxcut problem in the work of~\cite{OW08},
and we will follow their proof closely.
As in their proof, we will have to deal with the technicality that the Gaussian graph is actually an infinite graph,
and so our final integrality gap instance will
involve discretizing the Gaussian graph to produce a finite graph.

\subsection{The Gaussian graph as an integrality gap}
To begin, we provide a lower-bound for the \qmaxcut SDP.

\begin{lemma}[\qmaxcut SDP Lower Bound]\label{lem:heis-sdp-lower}
$$\hsdp(\mathcal G_\rho^n) \geq \tfrac{1}{4} - \tfrac{3}{4} \rho - O(\sqrt{\log n/n}).$$
\end{lemma}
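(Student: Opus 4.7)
The plan is to exhibit a specific feasible SDP assignment on $\mathcal G_\rho^n$ whose objective value is at least $\tfrac14 - \tfrac34\rho - O(\sqrt{\log n/n})$, since $\hsdp$ is defined as a maximum. The natural choice, already flagged in the technical overview, is the radial projection
\[
f_{\mathrm{SDP}}(x) = \frac{x}{\|x\|},
\]
defined arbitrarily on the measure-zero set $\{0\}$. This is a measurable map into $S^{n-1}$, hence a valid SDP assignment (interpreting Definition~\ref{def:heis-sdp} in the natural way for the infinite vertex set of $\mathcal G_\rho^n$). Substituting this $f_{\mathrm{SDP}}$ into the SDP objective reduces the problem to proving the single estimate
\[
\E_{\bu\sim_\rho\bv}\!\left[\frac{\langle\bu,\bv\rangle}{\|\bu\|\,\|\bv\|}\right] \;\le\; \rho + O(\sqrt{\log n/n});
\]
multiplying by $-\tfrac34$ and adding $\tfrac14$ then yields the lemma.

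The main step is therefore a standard Gaussian concentration argument. Writing $\bv = \rho\bu + \sqrt{1-\rho^2}\,\bw$ with $\bw\sim N(0,I_n)$ independent of $\bu$, I would use:
\begin{enumerate}
\item $1$-Lipschitz concentration of the norm on Gaussian space, giving $\|\bu\|,\|\bv\| = \sqrt n + O(\sqrt{\log n})$ (and hence $\|\bu\|\|\bv\| = n + O(\sqrt{n\log n})$) with probability $1 - O(n^{-10})$;
\item the identity $\langle\bu,\bv\rangle = \rho\|\bu\|^2 + \sqrt{1-\rho^2}\,\langle\bu,\bw\rangle$, where the first term is $n\rho + O(\sqrt{n\log n})$ by the $\chi^2_n$ tail bound, and the second term is conditionally $N(0,\|\bu\|^2)$ and so of order $O(\sqrt{n\log n})$ with the same probability.
\end{enumerate}
A union bound and the elementary expansion
\[
\frac{n\rho + O(\sqrt{n\log n})}{n + O(\sqrt{n\log n})} \;=\; \rho + O(\sqrt{\log n/n})
\]
handle the ``typical'' event; on the exceptional event the integrand is bounded by $1$ in absolute value, contributing at most $O(n^{-10})$ to the expectation, which is absorbed into the error term.

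I do not anticipate a genuine obstacle here — this calculation essentially recapitulates the lower-bound half of the classical Feige--Schechtman Gaussian-graph analysis for $\mcsdp$ that was sketched earlier, with only the coefficient $-\tfrac34$ in the objective needing to be tracked. The one mildly delicate point is that the ratio being analysed is a nonlinear function of two dependent $\chi^2$-type variables, but this is handled cleanly by the union-bound/first-order-expansion argument above, and the fact that the integrand is uniformly bounded by $1$ lets us absorb the low-probability tails without effort.
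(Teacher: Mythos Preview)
Your proposal is correct and takes essentially the same approach as the paper: both exhibit the feasible assignment $f(x)=x/\|x\|$ and reduce to the estimate $\E_{\bu\sim_\rho\bv}\langle \bu/\|\bu\|,\bv/\|\bv\|\rangle\le \rho+O(\sqrt{\log n/n})$. The only difference is that the paper simply cites~\cite[Theorem~4.3]{OW08} for this inequality, whereas you sketch the underlying Gaussian concentration argument directly.
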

\begin{proof}
Consider the feasible solution $f_\mathrm{ident}(x) = x/\Vert x \Vert$ to the \qmaxcut SDP.
It has value
\begin{equation}\label{eq:value-of-ident}
\E_{\bx \sim_{\rho} \by}\left[ \frac{1}{4} - \frac{3}{4}\left\langle \frac{\bx}{\Vert \bx \Vert}, \frac{\by}{\Vert \by \Vert}\right\rangle\right]
= \frac{1}{4} - \frac{3}{4}\E_{\bx \sim_{\rho} \by}\left\langle \frac{\bx}{\Vert \bx \Vert}, \frac{\by}{\Vert \by \Vert}\right\rangle.
\end{equation}
Intuitively, we expect $\langle \bx/ \Vert \bx \Vert, \by/ \Vert \by \Vert \rangle$ to roughly be equal to~$\rho$,
at least when~$n$ is large.
Formally, we will use the inequality
\begin{equation*}
\E_{\bx \sim_{\rho} \by}\left\langle \frac{\bx}{\Vert \bx \Vert}, \frac{\by}{\Vert \by \Vert}\right\rangle \leq \rho + O(\sqrt{\log n/n}),
\end{equation*}
which was shown in the proof of~\cite[Theorem~$4.3$]{OW08}.
This implies that
\begin{equation*}
\eqref{eq:value-of-ident} \geq  \tfrac{1}{4} - \tfrac{3}{4} \rho - O(\sqrt{\log n/n}).
\end{equation*}
As this lower-bounds the value of $f_{\mathrm{ident}}$, it also lower-bounds the value of the SDP.
\end{proof}
An essentially identical proof also yields the following lemma, which gives a lower-bound for the product state SDP.
\begin{lemma}[Product State SDP Lower Bound]
$$\prodsdp(\ggraph_\rho^n) \geq \tfrac{1}{4} - \tfrac{1}{4} \rho - O(\sqrt{\log n/n}).$$
\end{lemma}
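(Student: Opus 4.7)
The plan is to mimic the proof of the preceding \qmaxcut SDP lower bound essentially verbatim, since the only difference between $\hsdp$ and $\prodsdp$ is the coefficient in front of the inner product term (namely $\tfrac{3}{4}$ versus $\tfrac{1}{4}$). I will again use the ``identity'' feasible solution $f_{\mathrm{ident}}(x) = x/\Vert x\Vert$, which maps $\R^n$ into $S^{n-1}$ and is therefore a valid assignment for $\prodsdp(\ggraph_\rho^n)$.

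Evaluating the product state SDP objective at $f_{\mathrm{ident}}$ gives
\[
\prodsdp(\ggraph_\rho^n)
\;\geq\;
\E_{\bx \sim_\rho \by}\left[\tfrac{1}{4} - \tfrac{1}{4}\left\langle \frac{\bx}{\Vert \bx\Vert},\frac{\by}{\Vert \by\Vert}\right\rangle\right]
\;=\;
\tfrac{1}{4} - \tfrac{1}{4}\,\E_{\bx \sim_\rho \by}\left\langle \frac{\bx}{\Vert \bx\Vert},\frac{\by}{\Vert \by\Vert}\right\rangle.
\]
Next, I invoke the same concentration estimate from the proof of~\cite[Theorem~4.3]{OW08} that was used in \Cref{lem:heis-sdp-lower}, namely
\[
\E_{\bx \sim_\rho \by}\left\langle \frac{\bx}{\Vert \bx\Vert},\frac{\by}{\Vert \by\Vert}\right\rangle \;\leq\; \rho + O(\sqrt{\log n/n}),
\]
which relies only on the fact that $\bx/\sqrt{n}$ and $\by/\sqrt{n}$ are close to unit vectors with $\langle \bx,\by\rangle/n \approx \rho$ with high probability. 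Substituting this into the display above yields
\[
\prodsdp(\ggraph_\rho^n) \;\geq\; \tfrac{1}{4} - \tfrac{1}{4}\rho - O(\sqrt{\log n/n}),
\]
as claimed.

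There is essentially no obstacle here: the feasibility of $f_{\mathrm{ident}}$ is immediate, and the concentration bound has already been established and quoted for the previous lemma. I would simply flag that the $\tfrac{1}{4}$ coefficient (rather than $\tfrac{3}{4}$) only changes the constant hidden in the $O(\sqrt{\log n/n})$ term, which does not affect the asymptotic statement.
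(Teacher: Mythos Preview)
Your proposal is correct and takes essentially the same approach as the paper, which explicitly states that ``an essentially identical proof'' to \Cref{lem:heis-sdp-lower} yields this lemma. You have simply replaced the coefficient $\tfrac{3}{4}$ by $\tfrac{1}{4}$ and reused the same feasible solution $f_{\mathrm{ident}}(x)=x/\Vert x\Vert$ together with the concentration bound from~\cite[Theorem~4.3]{OW08}, exactly as intended.
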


On the other hand, assuming \Cref{conj:vector-borell-intro},
the optimal product state assignment is given by $f_{\mathrm{opt}}(x) = x_{\leq 3} / \Vert x_{\leq 3} \Vert$,
and so by~\Cref{prop:opt-formula} the product state value can be computed exactly as
\begin{equation*}
\prodval(\ggraph^n_\rho)  = \tfrac{1}{4} - \tfrac{1}{4} F^*(3, \rho).
\end{equation*}
Not only that, $\ggraph_\rho^n$ is a weighted, regular graph of infinite degree.
Thus, we should now be able to apply \Cref{thm:BH} (or one of its nonuniform analogues)
to show that its maximum energy is exactly equal to its product value.
Strictly speaking, the maximum energy $\heis(\ggraph^n_\rho)$ is not well-defined because $\ggraph^n_\rho$ is an infinite graph. However, we will define $\heis(\ggraph^n_\rho)$ to be $\prodval(\ggraph^n_\rho)$ and show in the following section that these quantities are indeed approximately equal in the discretized graph.

\subsection{Discretizing the Gaussian graph}

The following lemma shows $\ggraph^n_\rho$ can be discretized  with a negligible loss in value. The proof is given in \Cref{sec:card-reduc}. 

\begin{lemma}[Graph Discretization]\label{lem:discretization}
Let $G = \mathcal G^n_\rho$ be the $\rho$-correlated Gaussian graph.
Then for every $\eps > 0$, there exists a finite, weighted graph~$G'$ such that
\begin{IEEEeqnarray*}{rClrCl}
\hsdp(G') &\geq& \hsdp(G) - \eps, \quad&\quad\heis(G') &\leq& \heis(G) + \eps, \\
\prodsdp(G') &\geq& \prodsdp(G) - \eps, \quad&\quad \prodval(G') &\leq& \prodval(G) + \eps.
\end{IEEEeqnarray*}
\end{lemma}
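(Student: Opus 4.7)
The plan is to discretize $G = \ggraph^n_\rho$ by first truncating to a large ball $B_R \subset \R^n$ and then partitioning $B_R$ into small measurable cells $A_1,\ldots,A_m$ of diameter at most $\eta$ and Gaussian mass $\Theta(1/m)$. The finite graph $G'$ will have vertex set $\{x_1,\ldots,x_m\}$ consisting of one representative $x_i \in A_i$ per cell, with edge weights
\[
    w_{ij} = \frac{\gamma_\rho(A_i \times A_j) + \gamma_\rho(A_j \times A_i)}{2\,\gamma_\rho(B_R \times B_R)},
\]
where $\gamma_\rho$ denotes the $\rho$-correlated Gaussian measure on $\R^n \times \R^n$. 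The three parameters $R$, $\eta$, $m$ will be chosen at the end as functions of $\eps$.

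For the two SDP inequalities $\hsdp(G') \geq \hsdp(G) - \eps$ and $\prodsdp(G') \geq \prodsdp(G) - \eps$, I would take an $\eps/4$-near-optimal feasible SDP solution $f: \R^n \to S^{N-1}$ on $G$ and apply Lusin's theorem (in the form valid for maps into a metric space such as $S^{N-1}$) to obtain a continuous $\tilde f: \R^n \to S^{N-1}$ that agrees with $f$ outside a set of Gaussian probability at most $\delta$. Since the inner-product integrand is uniformly bounded, this replacement changes the SDP objective by at most $O(\delta)$. Restricted to the compact ball $B_R$, $\tilde f$ is then uniformly continuous with some modulus $\omega$, so for $\eta$ sufficiently small we have $\tilde f(x) \approx \tilde f(x_i)$ throughout each $A_i$. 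The assignment $f'(x_i) := \tilde f(x_i)$ is then feasible for the $G'$ SDP, and its objective value is within $O(\omega(\eta) + \delta + (1 - \gamma_\rho(B_R \times B_R)))$ of the $G$-objective of $f$. Choosing $R$ large, then $\delta$ small, then $\eta$ small in that order drives the total error below $\eps$. Exactly the same construction, with the product-state objective in place of the \qmaxcut one, handles $\prodsdp$.

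For the product-value bound $\prodval(G') \leq \prodval(G) + \eps$ the plan is the reverse: take an optimal Bloch-vector assignment $f': \{x_i\} \to S^2$ on $G'$ and extend it to $f: \R^n \to S^2$ by declaring $f$ constant on each $A_i$ with value $f'(x_i)$ (and arbitrary outside $B_R$). By construction, the $G$-objective of $f$ equals $\gamma_\rho(B_R \times B_R)$ times the $G'$-objective of $f'$ up to a truncation contribution of order $1 - \gamma_\rho(B_R \times B_R) = o_R(1)$, yielding the claimed inequality for $R$ large. For the energy inequality $\heis(G') \leq \heis(G) + \eps$, recall that the paper sets $\heis(\ggraph_\rho^n) := \prodval(\ggraph_\rho^n)$, so it suffices to show $\heis(G') \leq \prodval(G') + \eps/2$ and combine with the previous paragraph. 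I would obtain the first inequality by applying \Cref{cor:BH-nonuniform-easy-to-use} to $G'$: the balanced cell masses give $\max_u p_u = O(1/m)$, and since the joint density of $\gamma_\rho$ is bounded above on the compact set $B_R \times B_R$, the edge weights satisfy $w_{uv} = O(1/m^2)$, so $\max_{u,v} A_{u,v} = O((1/m^2)/(1/m)) = O(1/m)$. The Brand\~ao--Harrow error then becomes $O((m \cdot m^{-1} \cdot m^{-1})^{1/8} + m^{-1}) = O(m^{-1/8})$, which is below $\eps/2$ for $m$ sufficiently large.

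The main obstacle is the SDP direction. On the product-state side we control the extension of the $G'$-optimizer, but on the SDP side we must transfer an arbitrary measurable near-optimizer from $\R^n$ down to the finite set $\{x_1,\ldots,x_m\}$, and such a near-optimizer can in principle oscillate arbitrarily inside any fixed cell; some smoothing step like Lusin's theorem is therefore unavoidable. The most delicate technical point will be the order of choosing parameters as functions of $\eps$: first $R$ (to control the truncation mass), then $\delta$ (the Lusin disagreement probability), which determines the modulus of continuity $\omega$ on $B_R$; then $\eta$ small relative to $\omega^{-1}$; and finally $m$ large enough both to fill $B_R$ with balanced cells of diameter $\leq \eta$ and to beat the $O(m^{-1/8})$ Brand\~ao--Harrow error.
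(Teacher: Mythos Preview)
Your plan is workable, but the paper's proof takes a rather different and lighter route in two places.

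First, the paper never needs Lusin's theorem. Instead of smoothing an arbitrary measurable near-optimizer on $\R^n$, it pushes $G_0 = \ggraph^n_\rho$ forward along a fixed SDP optimizer $f:\R^n \to S^{n-1}$ (effectively $x \mapsto x/\|x\|$) to obtain a graph $G_1$ supported on the sphere, for which the \emph{identity map} is itself an optimal SDP assignment. An $\eps$-net $\mathcal N \subset S^{n-1}$ with its Voronoi partition then gives the finite graph $G_2$, and the assignment $v \mapsto v$ on the net points is automatically within $O(\eps)$ of the identity on $G_1$ because every point of $S^{n-1}$ is $\eps$-close to a net point. This replaces your Lusin-plus-uniform-continuity argument by a one-line inner-product estimate, and also eliminates the ball truncation $B_R$ (and with it the subtlety of extending the Lusin approximation continuously back into $S^{N-1}$).

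Second, for self-loop removal and the Brand\~ao--Harrow bound the paper uses a vertex blow-up rather than balanced cells: replace each vertex $v$ of $G_2$ by $M$ clones $(v,1),\ldots,(v,M)$ and spread the edge weight uniformly over clone pairs. This leaves $\prodval$ and both SDP values exactly unchanged, drives the self-loop mass below $1/M$, and gives $\max p_{(u,i)} = p_u^{(2)}/M$ and $\max A_{(u,i),(v,j)} \le 1/M$ \emph{regardless of the structure of $G_2$}; the Brand\~ao--Harrow error is then $O\bigl((|\mathcal N|\max_u p_u^{(2)}/M)^{1/8}\bigr)$, and one simply takes $M$ large after $G_2$ is fixed. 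In your scheme the implicit constants in the BH bound depend on $R$ through the minimum Gaussian density on $B_R$, which forces $m$ of order $e^{\Theta(R^2)}$---still fine, but the parameters are more entangled than in the paper's decoupled construction.

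One actual omission in your outline: your $G'$ has self-loops, since $w_{ii} = \gamma_\rho(A_i \times A_i)/\gamma_\rho(B_R \times B_R) > 0$, and these must be removed before $\heis(G')$ is even defined as a \qmaxcut instance. Their total mass is $O_{R,\rho}(1/m)$ by the same bounded-density argument you already use, so this is easy to patch, but it should be stated.
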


With the instance in hand, we now prove that it yields our desired integrality gaps. 

\begin{proof}[Proof of \Cref{thm:integrality-gap-heisenberg-restated,thm:integrality-gap-prod-restated}]
We start with the \qmaxcut SDP.
Assume \cref{conj:vector-borell-intro}.
Then combining \Cref{lem:heis-sdp-lower,lem:discretization} and taking the dimension~$n$ suitably large,
there exists a graph~$G$ such that
\begin{equation*}
\frac{\heis(G)}{\hsdp(G)} \leq \frac{\tfrac{1}{4} - \tfrac{1}{4}F^*(3,\rho)}{\tfrac{1}{4} -\tfrac{3}{4}\rho} + \eps,
\end{equation*}
for each $\eps > 0$. Taking the infimum over~$\eps$, the integrality gap of $\hsdp$ is at most
\begin{equation*}
\frac{\tfrac{1}{4} - \tfrac{1}{4}F^*(3,\rho)}{\tfrac{1}{4} - \tfrac{3}{4}\rho}.
\end{equation*}
This is minimized by $\rho = \rgp$, in which case it is equal to $\agp$.
Hence, the integrality gap of the \qmaxcut SDP is at most~$\agp$.
On the other hand, the integrality gap is at least~$\agp$
because the GP algorithm shows there always exists a solution of value at least $\agp \cdot \hsdp(G)$.
As a result, the integrality gap is exactly~$\agp$, concluding the proof.

The case of the product state SDP follows by a similar argument.
\end{proof}

\section{Algorithmic gap for the BOV algorithm}\label{sec:algorithmic-gap}

The main goal of this section is to prove \Cref{thm:algo-gap}, which we restate here.
We note that it does not require assuming \cref{conj:vector-borell-intro}.

\begin{theorem}[Algorithmic gap for product state SDP; \Cref{thm:algo-gap} restated]\label{thm:algo-gap-restated}
The Briët-Oliveira-Vallentin algorithm has algorithmic gap $\abov$.
\end{theorem}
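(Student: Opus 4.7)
The theorem has two directions. The inequality $\mathrm{Gap}_{\mathrm{BOV}} \geq \abov$ is immediate from the approximation guarantee of Theorem~\ref{thm:bov-thm}, which gives $\mathrm{BOV}(G) \geq \abov \cdot \prodval(G)$ for every instance. My plan is to establish the matching upper bound by exhibiting a family of instances on which $\mathrm{BOV}(G_n)/\prodval(G_n) \to \abov$.

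The key observation is that on any graph whose product-state SDP optimum lives in a three-dimensional subspace, the SDP is automatically tight: $\prodsdp(G) = \prodval(G)$, since any such optimal SDP assignment $f^* : V \to S^2$ is already a valid product state via the Bloch sphere (Definition~\ref{def:bloch_vectors}). On such graphs the algorithmic ratio $\mathrm{BOV}(G)/\prodval(G)$ collapses to the standard-analysis ratio $\mathrm{BOV}(G)/\prodsdp(G)$, which by the per-edge inequality~\eqref{eq:bov-ineq} approaches $\abov$ precisely when the SDP edge inner products concentrate at the critical value $\rbov$.

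To realize both properties simultaneously, I would use a discretization of the sphere graph $\mathcal{S}^2_{\rbov}$: the vertex set is a fine net on $S^2$ and edge weights concentrate on pairs $(u,v)$ with $\langle u, v\rangle \approx \rbov$. On this graph the identity $f(u) = u$ is a three-dimensional candidate SDP solution of value $\tfrac14 - \tfrac14\rbov$, and is also a product state. I would then argue using the spherical harmonic analysis of Section~\ref{sec:spherical} that the identity is in fact the SDP optimum (so that $\prodsdp = \prodval$). Finally, BOV's projection rounding with a random $3\times3$ Gaussian matrix applied to the identity assignment yields, by Theorem~\ref{thm:exact-formula-for-average-inner-product}, a random product state whose expected value equals $\tfrac14 - \tfrac14 F^*(3,\rbov) = \abov \cdot \prodval(G)$ up to $o(1)$ error from discretization. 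A finite-discretization step analogous to Lemma~\ref{lem:discretization} produces the concrete algorithmic-gap instance.

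The main obstacle will be verifying that the identity is the SDP optimum on the sphere graph. By the Funk--Hecke formula (Theorem~\ref{thm:funk-hecke}), the noise operator's eigenvalues on $S^2$ are Gegenbauer/Legendre polynomials $\lambda_d = P_d(\rbov)/P_d(1)$, and the claim reduces to showing that $P_1(\rbov) = \rbov$ is the most negative among $\{P_d(\rbov)\}_{d \geq 1}$. For a sharp delta-function edge measure this is a purely numerical fact rather than an immediate consequence of Corollary~\ref{cor:eigenvalue-bounds}, since a delta is not monotone. A cleaner workaround is to smooth the edge measure using the Gaussian noise kernel $g(t) \propto e^{ct}$ with $c < 0$, which is non-increasing and therefore fits the hypotheses of Theorem~\ref{thm:spherical-noise}; choosing $c$ so that the resulting expected edge inner product equals $\rbov$ and then passing to a concentration limit recovers the algorithmic gap of $\abov$ exactly, without any reliance on Conjecture~\ref{conj:vector-borell-intro}.
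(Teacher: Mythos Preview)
Your overall strategy---find a graph where $\prodval = \prodsdp$ and where the optimal SDP's edge correlations sit at $\rbov$---is sound, but the specific construction you propose has a real obstruction. A non-increasing kernel $g$ on $[-1,1]$ (such as $g(t)\propto e^{ct}$ with $c<0$) has its mass pushed toward $t=-1$; as you vary the parameter, the distribution of $\langle u,v\rangle$ can only \emph{concentrate} at $-1$, never at the interior point $\rbov\approx -0.584$. So ``choosing $c$ so the expected inner product equals $\rbov$ and then passing to a concentration limit'' is self-contradictory: at the $c$ giving mean $\rbov$ the edge-correlation distribution is genuinely spread out, and since the per-edge ratio $\frac{\frac14-\frac14 F^*(3,t)}{\frac14-\frac14 t}$ exceeds $\abov$ except exactly at $t=\rbov$, the overall ratio is strictly above $\abov$. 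Your alternative route through the delta kernel on $S^2$ would need $P_d(\rbov)\geq \rbov$ for every Legendre degree $d\geq 2$; this looks true numerically but is a separate fact not supplied by \Cref{cor:eigenvalue-bounds} or anything else in the paper.

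The paper sidesteps all of this by working on the noisy hypercube $\hgraph^n_\rho$ rather than the $2$-sphere. There the identity embedding $f_{\mathrm{ident}}(x)=x/\sqrt n$ is $n$-dimensional, so the equality $\prodval=\prodsdp$ is \emph{not} obtained by your ``three-dimensional optimum'' mechanism. Instead, the paper notes that embedded dictators $x\mapsto(x_i,0,0)$ are genuine product-state assignments already achieving the SDP value $\tfrac14-\tfrac14\rho$ (\Cref{lem:embedded-dictators}), which forces $\prodval=\prodsdp$. A short Boolean Fourier argument (\Cref{prop:stab-bound}) certifies that $f_{\mathrm{ident}}$ is SDP-optimal, and a Chernoff bound concentrates the pairwise inner products $\langle f_{\mathrm{ident}}(x),f_{\mathrm{ident}}(y)\rangle$ near $\rho$, so projection rounding on $f_{\mathrm{ident}}$ yields value $\tfrac14-\tfrac14 F^*(3,\rho)+o(1)$. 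Setting $\rho=\rbov$ and removing self-loops gives the gap. The hypercube delivers concentration at the desired correlation and SDP-optimality simultaneously via elementary Fourier analysis, with no spherical-harmonic eigenvalue comparison needed.
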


Recall that the algorithmic gap is the quantity
$$\inf_{\text{graphs }G}\left\{\frac{A_{\mathrm{BOV}}(G)}{\prodval(G)}\right\},$$
where $A_{\mathrm{BOV}}(G)$ is the average value of the product state output by the BOV algorithm on graph~$G$.
This is at least~$\abov$ by~\cref{thm:bov-thm},
and so we need to show that it is also at most~$\abov$,
which entails finding a graph~$G$ in which the BOV algorithm outputs a solution of value $\abov \cdot \prodval(G)$.
Our construction is based on a classic algorithmic gap instance for the Goemans-Williamson SDP
called the \emph{noisy hypercube graph}, essentially due to Karloff~\cite{Kar99} (cf.\ the exposition in \cite{OD08}). 

The BOV algorithm solves the product state SDP and rounds its solution using projection rounding.
The SDP is only guaranteed to return \emph{some} optimal (or near-optimal) solution,
but we are free to choose which of these optimal solutions to provide to the BOV algorithm (see \cite{OD08} for more details).
As in the construction of the integrality gaps in \Cref{sec:integrality-gaps},
the algorithmic gap instance and the optimal SDP solution are motivated by the fact that the BOV algorithm performs worst
on edges $(u, v)$ where the SDP vectors have inner product $\langle f_{\mathrm{SDP}}(u), f_{\mathrm{SDP}}(v)\rangle = \rbov$.
The graph is an analogue of the $\rho$-correlated sphere graph on the Boolean hypercube,
known as the noisy hypercube.

\begin{definition}[Noisy hypercube graph]
Let $n$ be a positive integer and $-1 \leq \rho \leq 1$.
We define the \emph{$\rho$-noisy hypercube} to be the graph $\hgraph^n_\rho$ with vertex set $\{-1, 1\}^n$
in which a random edge $(\bx, \by)$ is distributed as two $\rho$-correlated Boolean strings.
\end{definition}

As defined, the noisy hypercube does not correspond to a legitimate \qmaxcut instance,
as it contains self-loops.
For now, we will analyze the noisy hypercube as if this is not an issue,
and we will remove the self-loops at the end of the section.

The SDP solution we will consider is the ``identity solution'', i.e. the function $f_{\mathrm{ident}}:\{-1, 1\}^n \rightarrow S^{n-1}$
defined by $f_{\mathrm{ident}}(x) = \tfrac{1}{\sqrt{n}} x$ for each $x \in \{-1, 1\}^n$. The following three lemmas (\ref{prod-sdp-value}, \ref{lem:opt-sdp-value}, \ref{lem:proj-rounding-value}) establish a upper bound on $A_{\mathrm{BOV}}(\hgraph^n_\rho)$.

\begin{lemma}[Value of the SDP solution]\label{prod-sdp-value}
The feasible SDP solution $f_{\mathrm{ident}}(x)$ has value $\tfrac{1}{4} - \tfrac{1}{4} \rho$.
\end{lemma}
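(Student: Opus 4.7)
The plan is a short direct computation. First I would verify feasibility: since $x \in \{-1,1\}^n$, each coordinate of $f_{\mathrm{ident}}(x) = x/\sqrt{n}$ squares to $1/n$, so $\|f_{\mathrm{ident}}(x)\|^2 = \tfrac{1}{n}\sum_{i=1}^n x_i^2 = 1$, confirming that $f_{\mathrm{ident}}: \{-1,1\}^n \to S^{n-1}$ is a valid assignment for the product state SDP.

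Next I would compute the objective. For any edge $(\bx,\by)$ of $\hgraph^n_\rho$,
\[
\langle f_{\mathrm{ident}}(\bx), f_{\mathrm{ident}}(\by) \rangle = \frac{1}{n}\sum_{i=1}^n \bx_i \by_i.
\]
Taking expectations and using the fact that, for $\rho$-correlated Boolean strings, $\E[\bx_i \by_i] = \rho$ for every coordinate $i$, linearity of expectation gives
\[
\E_{(\bx,\by)} \langle f_{\mathrm{ident}}(\bx), f_{\mathrm{ident}}(\by) \rangle = \frac{1}{n}\sum_{i=1}^n \E[\bx_i \by_i] = \rho.
\]
Substituting into the product-state SDP objective from the definition of $\prodsdp$ yields the claimed value $\tfrac{1}{4} - \tfrac{1}{4}\rho$.

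There is no real obstacle here; the only things to be careful about are (i) that the distribution on edges of $\hgraph^n_\rho$ really is coordinate-wise $\rho$-correlated (so that the coordinate covariances are exactly $\rho$), and (ii) that although the noisy hypercube technically has self-loops, the identity solution's value is well defined and independent of that issue, since removing self-loops will be handled later in the section.
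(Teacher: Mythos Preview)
Your proposal is correct and follows essentially the same computation as the paper: compute $\E\langle f_{\mathrm{ident}}(\bx), f_{\mathrm{ident}}(\by)\rangle = \tfrac{1}{n}\sum_i \E[\bx_i\by_i] = \rho$ and plug into the objective. The paper omits the explicit feasibility check and the remarks about self-loops, but the core argument is identical.
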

\begin{proof}
We compute
\begin{equation*}
\E_{\substack{\text{$(\bx, \by)$ $\rho$-correlated}\\\text{$n$-dim Boolean strings}}}[\langle f_{\mathrm{ident}}(\bx), f_{\mathrm{ident}}(\by)\rangle]
= \tfrac{1}{n}\E_{\bx, \by} \langle \bx, \by\rangle
= \tfrac{1}{n} \sum_{i=1}^n \E_{\bx, \by}[\bx_i \by_i]
= \rho.
\end{equation*}
As a result, the value of $f_{\mathrm{ident}}$ is
\begin{equation*}
\E_{\bx, \by}[\tfrac{1}{4} - \tfrac{1}{4}\langle f_{\mathrm{ident}}(\bx), f_{\mathrm{ident}}(\by)\rangle] = \tfrac{1}{4} - \tfrac{1}{4} \rho.\qedhere
\end{equation*}
\end{proof}

Our next lemma shows that $f_{\mathrm{ident}}$ achieves the optimal SDP value,
and so it is fair for the BOV algorithm to receive it as a solution to the product state SDP.

\begin{lemma}[Value of the SDP]\label{lem:opt-sdp-value}
The value of the SDP is $\prodsdp(\hgraph^n_\rho) = \tfrac{1}{4} - \tfrac{1}{4}\rho$.
As a result, $f_{\mathrm{ident}}$ is an optimal SDP solution.
\end{lemma}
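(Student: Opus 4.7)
The plan is to upper-bound $\prodsdp(\hgraph^n_\rho) \le \tfrac{1}{4} - \tfrac{1}{4}\rho$ via Fourier analysis on the hypercube; combined with the matching lower bound from \Cref{prod-sdp-value}, this forces equality and certifies $f_{\mathrm{ident}}$ as an optimal SDP solution. Throughout I would assume $\rho \le 0$, which is the regime of interest since $\rbov < 0$.

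First I would fix any feasible assignment $f : \{-1,1\}^n \to S^{N-1}$ (with $N = 2^n$) and expand each coordinate in the hypercube Fourier basis from \Cref{sec:fourier-analysis}, producing vector-valued coefficients $\widehat f(S) \in \R^N$. The pointwise unit-sphere constraint together with Parseval gives $\sum_S \|\widehat f(S)\|_2^2 = \E_\bx \|f(\bx)\|_2^2 = 1$, while the standard noise-operator identity, applied coordinate-wise and then summed, yields
\[
    \E_{\substack{(\bx, \by)\\\text{$\rho$-correlated}}} \langle f(\bx), f(\by)\rangle \;=\; \sum_{S \subseteq [n]} \rho^{|S|}\, \|\widehat f(S)\|_2^2.
\]

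Next I would verify the elementary coefficient inequality $\rho^{|S|} \ge \rho$ for every $S$ when $-1 \le \rho \le 0$: this is trivial for $|S| = 0$ since $1 \ge \rho$, and holds with equality for $|S| = 1$; for even $|S| \ge 2$ it reduces to $\rho^{|S|} \ge 0 \ge \rho$; and for odd $|S| \ge 3$ we factor $\rho^{|S|} = \rho \cdot \rho^{|S|-1}$ and use $0 \le \rho^{|S|-1} \le 1$. Substituting coefficient-wise and invoking Parseval a second time gives
\[
    \E \langle f(\bx), f(\by)\rangle \;\ge\; \rho \sum_{S} \|\widehat f(S)\|_2^2 \;=\; \rho,
\]
so the SDP objective value of any feasible $f$ is at most $\tfrac{1}{4} - \tfrac{1}{4}\rho$, matching the value attained by $f_{\mathrm{ident}}$ in \Cref{prod-sdp-value}.

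There is essentially no obstacle in this argument; it is the Boolean analogue of the Gaussian reasoning used in \Cref{sec:integrality-gaps}, and relies only on Parseval plus the sign-check on $\rho^{|S|}$. The one mild subtlety is that the Fourier coefficients are vector-valued in $\R^N$ rather than scalar-valued, but both Parseval and the diagonalization of the Boolean noise operator apply coordinate-wise, so the scalar-valued derivation transfers verbatim.
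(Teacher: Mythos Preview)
Your proposal is correct and follows essentially the same approach as the paper: both reduce to the Fourier/noise-stability identity $\E\langle f(\bx),f(\by)\rangle = \sum_S \rho^{|S|}\|\widehat f(S)\|_2^2$ and then use $\rho^{|S|}\ge\rho$ for $\rho\in[-1,0]$ together with Parseval. The only cosmetic difference is that the paper packages the inequality $\Stab_\rho[f_i]\ge\rho\,\E[f_i^2]$ as a separate proposition (\Cref{prop:stab-bound}) applied coordinate-wise, whereas you inline the same argument directly for the vector-valued function.
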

\begin{proof}
The function $f_{\mathrm{ident}}$ is a feasible solution with value $\tfrac{1}{4}-\tfrac{1}{4}\rho$,
and so it suffices to show that $\prodsdp(\hgraph^n_\rho) \leq \tfrac{1}{4} - \tfrac{1}{4}\rho$.
This entails showing that
\begin{equation*}
\E_{\substack{\text{$(\bx, \by)$ $\rho$-correlated}\\\text{$n$-dim Boolean strings}}}[\tfrac{1}{4} - \tfrac{1}{4}\langle f(\bx), f(\by)\rangle]
\leq \tfrac{1}{4} - \tfrac{1}{4}\rho,
\end{equation*}
for all functions $f:\{-1, 1\}^n \rightarrow S^{N-1}$, where $N = 2^n$ is the number of vertices in $\hgraph^n_\rho$.
Equivalently, we will show $\E_{\bx, \by}\langle f(\bx), f(\by)\rangle \geq \rho$ for all such~$f$.

Write $f = (f_1,\dots,f_N)$ where $f_i : \{-1,1\}^n \rightarrow \mathbb R$ for each~$i$.
Then, 
\begin{equation*}
\E_{\substack{\text{$(\bx, \by)$ $\rho$-correlated}\\\text{$n$-dim Boolean strings}}}\langle f(\bx), f(\by)\rangle
= \sum_{i=1}^N \E_{\bx, \by} [f_i(\bx) f_i(\by)]
\geq \sum_{i=1}^N \rho \cdot \E_{\bx} [f_i(\bx)^2]
= \rho \cdot \E_{\bx}\left[\sum_{i=1}^N f_i(\bx)^2\right]
= \rho.
\end{equation*}
The inequality here is due to \Cref{prop:stab-bound}
and we defer it to the appendix. This completes the proof.
\end{proof}

Given $f_{\mathrm{opt}} = f_{\mathrm{ident}}$, the BOV algorithm performs projection rounding and outputs the solution.
The following lemma shows the value of the output.

\begin{lemma}[Value of the projection rounding]\label{lem:proj-rounding-value}
Given the SDP solution $f_{\mathrm{ident}}$,
projection rounding will produce a random product state whose average value is at most
\begin{equation*}
\tfrac{1}{4} - \tfrac{1}{4} F^*(3, \rho) + O(\sqrt{\log(n)/n}).
\end{equation*}
\end{lemma}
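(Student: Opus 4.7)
My plan is to directly compute the expected value of the rounded solution by conditioning on the Boolean edge endpoints and exploiting the fact that a Gaussian matrix applied to any fixed vector yields a Gaussian vector.

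Fix Boolean strings $x, y \in \{-1,1\}^n$. Then projection rounding produces $\boldf(x) = \bZ x / \|\bZ x\|$ and $\boldf(y) = \bZ y / \|\bZ y\|$, where $\bZ$ is a $3 \times n$ i.i.d.\ standard Gaussian matrix (the $1/\sqrt{n}$ normalization of $f_{\mathrm{ident}}$ cancels out of the expression). Since the rows of $\bZ$ are independent standard Gaussians in $\R^n$, for any fixed Boolean $x,y$, the vectors $\bZ x / \sqrt{n}$ and $\bZ y / \sqrt{n}$ are jointly Gaussian with each coordinate being standard normal, and with cross-covariance exactly $\langle x,y\rangle / n$. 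Writing $\rho'_{x,y} := \langle x,y\rangle / n$, Theorem~\ref{thm:exact-formula-for-average-inner-product} therefore gives, exactly,
\[
    \E_{\bZ} \langle \boldf(x), \boldf(y)\rangle = F^*(3, \rho'_{x,y}).
\]
Averaging over a $\rho$-correlated edge $(\bx, \by)$, the value of the rounded solution on $\hgraph^n_\rho$ is
\[
    \tfrac{1}{4} - \tfrac{1}{4} \E_{(\bx,\by)}\!\left[F^*(3, \rho'_{\bx,\by})\right].
\]

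The next step is to show $\rho'_{\bx,\by}$ concentrates around $\rho$. Since $\bx_i \by_i \in \{\pm 1\}$ are i.i.d.\ with mean $\rho$, Hoeffding's inequality yields $\Pr[|\rho'_{\bx,\by} - \rho| > t] \le 2 e^{-n t^2 / 2}$; taking $t = C\sqrt{\log n / n}$ for a large enough constant $C$ makes this probability at most, say, $1/n^{10}$. On the high-probability event, $|\rho'_{\bx,\by} - \rho| = O(\sqrt{\log n / n})$, and since $F^*(3,\cdot)$ is the noise-stability of the optimal function $f_{\mathrm{opt}}(x) = x_{\leq 3}/\|x_{\leq 3}\|$ (by \Cref{prop:opt-formula}), it is a smooth, hence Lipschitz, function of $\rho$ on $[-1,1]$; thus $|F^*(3, \rho'_{\bx,\by}) - F^*(3, \rho)| = O(\sqrt{\log n / n})$ on this event. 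On the low-probability event the two quantities still differ by at most $2$ (both being expected inner products of unit vectors, bounded in absolute value by $1$), contributing at most $O(1/n^{10})$ to the expectation.

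Combining these two estimates,
\[
    \E_{(\bx,\by)}[F^*(3, \rho'_{\bx,\by})] \ge F^*(3, \rho) - O(\sqrt{\log n / n}),
\]
which rearranges to give the claimed upper bound on the value of the rounded solution. The only step with any subtlety is the concentration argument; everything else is an application of the exact Gaussian calculation of \Cref{thm:exact-formula-for-average-inner-product} together with smoothness of $F^*$, so the proof should be short and essentially mechanical.
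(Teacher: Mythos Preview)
Your proposal is correct and follows essentially the same approach as the paper: both condition on the Boolean endpoints, apply \Cref{thm:exact-formula-for-average-inner-product} to get $F^*(3,\rho'_{x,y})$ exactly, use Chernoff/Hoeffding concentration to show $\rho'_{\bx,\by}$ is within $O(\sqrt{\log n/n})$ of $\rho$ with high probability, and then invoke Lipschitz continuity of $F^*(3,\cdot)$. The only cosmetic difference is that the paper cites its dedicated \Cref{lem:inner_prod_lipschitz} for the Lipschitz bound rather than appealing to smoothness informally.
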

\begin{proof}
By the Chernoff bound, 
\begin{equation}\label{eq:totally-sick-chernoff-bound}
\Pr_{\substack{\text{$(\bx, \by)$ $\rho$-correlated}\\\text{$n$-dim Boolean strings}}}[\langle \bx, \by\rangle = \rho \cdot n \pm O(\sqrt{n \log n})]
\leq O(1/\sqrt{n}).
\end{equation}
Let $(x, y)$ be an edge in $\hgraph^n_\rho$ such that $\langle x, y\rangle = \rho \cdot n \pm O(\sqrt{n \log n})$.
Then
\begin{equation*}
\rho_{x, y} := \langle f_{\mathrm{ident}}(x), f_{\mathrm{ident}}(y)\rangle = \rho \pm O(\sqrt{\log n/n}).
\end{equation*}
On this edge, the random product state produced by projection rounding has average value $\tfrac{1}{4} - \tfrac{1}{4} F^*(3, \rho_{x, y})$
due to \Cref{thm:exact-formula-for-average-inner-product}.
Since $\rho_{x, y}$ is within $O(\sqrt{\log n / n})$ of~$\rho$, and $F^*(3, \cdot)$ is Lipschitz by \Cref{lem:inner_prod_lipschitz},
we can bound the average value of this edge by
\begin{equation*}
\tfrac{1}{4} - \tfrac{1}{4} F^*(3, \rho) + O(\sqrt{\log n/n}).
\end{equation*}
As for the remaining edges, we can trivially bound the average value of each by~$1$,
and this contributes an extra $O(1/\sqrt{n})$
to the total value of the product state due to \Cref{eq:totally-sick-chernoff-bound}.
\end{proof}

Now, we compute the optimal product state value for $\hgraph^n_\rho$. To do so, we consider a family of product state solutions $f:\{-1, 1\}^n \rightarrow S^2$ we call \emph{embedded dictators},
in which there exists an $i \in [n]$ such that $f(x) = (x_i, 0, 0)$ for all $x$.

\begin{lemma}[Value of embedded dictators]\label{lem:embedded-dictators}
Embedded dictators $f(x) = (x_i, 0, 0)$ achieve value $\tfrac{1}{4} - \tfrac{1}{4}\rho$.
Hence, the product state value of $\hgraph_\rho^n$ is 
$\prodval(\hgraph_\rho^n) = \tfrac{1}{4}-\tfrac{1}{4}\rho$.
\end{lemma}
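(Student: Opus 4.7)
The plan is to handle the lemma in two steps: a direct computation showing that embedded dictators attain the claimed value, and then an upper bound from the SDP.

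First I would compute the value of an embedded dictator $f(x) = (x_i, 0, 0)$ by direct calculation. Since $f$ maps into $S^2$, it is a valid product state assignment. Its value on the noisy hypercube is
\[
\E_{\substack{\text{$(\bx,\by)$ $\rho$-correlated}\\\text{$n$-dim Boolean strings}}} \bigl[\tfrac{1}{4} - \tfrac{1}{4}\langle f(\bx), f(\by)\rangle\bigr]
= \tfrac{1}{4} - \tfrac{1}{4}\, \E[\bx_i \by_i] = \tfrac{1}{4} - \tfrac{1}{4}\rho,
\]
using the fact that the marginals of each pair of coordinates of $\rho$-correlated Boolean strings satisfy $\E[\bx_i \by_i] = \rho$. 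This establishes the lower bound $\prodval(\hgraph_\rho^n) \ge \tfrac{1}{4} - \tfrac{1}{4}\rho$.

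For the matching upper bound, I would simply invoke \Cref{lem:opt-sdp-value}, which gives $\prodsdp(\hgraph_\rho^n) = \tfrac{1}{4} - \tfrac{1}{4}\rho$. Since the product state SDP is a relaxation of $\prodval$---every product state assignment $f: V \to S^2 \subset \R^N$ (viewed by embedding $S^2$ into any higher-dimensional $S^{N-1}$) is a feasible SDP solution with the same objective value---we get $\prodval(\hgraph_\rho^n) \le \prodsdp(\hgraph_\rho^n) = \tfrac{1}{4} - \tfrac{1}{4}\rho$. Combined with the lower bound from the embedded dictator, this yields the claimed equality $\prodval(\hgraph_\rho^n) = \tfrac{1}{4} - \tfrac{1}{4}\rho$, and along the way shows that embedded dictators are optimal product state assignments.

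There is no real obstacle here; both steps are immediate from earlier results. The only mild subtlety is making sure the embedding $S^2 \hookrightarrow S^{N-1}$ preserves the inner products so that the product state value is genuinely upper-bounded by the SDP value, but this is standard and follows directly from the definitions.
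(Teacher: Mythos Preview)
Your proposal is correct and matches the paper's proof essentially line for line: both compute the value of an embedded dictator directly to get the lower bound, then invoke \Cref{lem:opt-sdp-value} and the fact that $\prodsdp$ relaxes $\prodval$ for the matching upper bound. The extra remark about the embedding $S^2 \hookrightarrow S^{N-1}$ is just making explicit what the paper leaves implicit.
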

\begin{proof}
Consider the product state corresponding to the function $f_{\mathrm{dict}}:\{-1,1\}^n \rightarrow S^2$
given by $f_{\mathrm{dict}}(x) = (x_i, 0, 0)$.
It has value
\begin{equation*}
\tfrac{1}{4} - \tfrac{1}{4}\E_{\substack{\text{$(\bx, \by)$ $\rho$-correlated}\\\text{$n$-dim Boolean strings}}}[\langle f_{\mathrm{dict}}(\bx), f_{\mathrm{dict}}(\by)\rangle]
= \tfrac{1}{4} - \tfrac{1}{4}\E_{\bx \sim_\rho \by}[\bx_i \by_i]
= \tfrac{1}{4} - \tfrac{1}{4} \rho.
\end{equation*}
This shows the product state value is at least $\tfrac{1}{4} - \tfrac{1}{4} \rho$.
It is also at most $\tfrac{1}{4} - \tfrac{1}{4} \rho$ because this is the value of the product state SDP,
which is an upper bound on the product state value.
\end{proof}

\begin{proof}[Proof of \Cref{thm:algo-gap-restated}]
To begin, we will remove the self-loops from $\hgraph^n_\rho$,
which have total weight
\begin{equation*}
w_{\mathrm{loops}} = \Pr_{\substack{\text{$(\bx, \by)$ $\rho$-correlated}\\\text{$n$-dim Boolean strings}}}[\bx = \by] = \left(\tfrac{1}{2} + \tfrac{1}{2}\rho\right)^n.
\end{equation*}
Let $\calH'$ be the graph with vertex set $\{-1, 1\}^n$ in which a random edge $(\bx, \by)$ is distributed as two $\rho$-correlated Boolean strings,
conditioned on $\bx \neq \by$.
Then for each edge $(x, y)$, if $w(x, y)$ is its weight in $\hgraph^n_\rho$ and $w'(x, y)$ is its weight in $\calH'$,
we have that $w'(x, x) = 0$, and
\begin{equation*}
w'(x, y) = \frac{1}{1 - w_{\mathrm{loops}}} \cdot w(x, y)
\end{equation*}
for $x \neq y$.
Consider an SDP solution $f:\{-1, 1\}^n \rightarrow S^{N-1}$.
It has value~$0$ on each self-loop in $\hgraph^n_\rho$.
As a result, if it has value~$\nu$ in $\hgraph^n_\rho$,
then it has value $\nu / (1 - w_{\mathrm{loops}})$ in $\calH'$.
This argument applies to the value of product states as well.

In summary, all values are ``scaled up'' by a factor of $1/(1-w_{\mathrm{loops}})$ in $\calH'$.
This implies that $f_{\mathrm{ident}}$ is still an optimal SDP solution, and by \Cref{lem:proj-rounding-value} and \Cref{lem:embedded-dictators}, the ratio of the average value of the resulting solution to $\prodval(\calH')$ is
\begin{equation*}
\frac{\tfrac{1}{4} - \tfrac{1}{4} F^*(3, \rho) + O(\sqrt{\log(n)/n})}{\tfrac{1}{4} - \tfrac{1}{4}\rho}.
\end{equation*}
Taking an infimum over $n$, we can upper bound the algorithmic gap by
\begin{equation*}
\frac{\tfrac{1}{4} - \tfrac{1}{4} F^*(3, \rho)}{\tfrac{1}{4} - \tfrac{1}{4}\rho}.
\end{equation*}
This is minimized at $\rho = \rbov$, in which case it is $\abov$,
matching the approximation ratio of the BOV algorithm.
\end{proof}
\ignore{
\section{A dictator test for the product state value}

Now we show that the noisy hypercube serves as a dictatorship test for functions of the form $f:\{-1, 1\}^n \rightarrow B^k$.
Informally, this means that if $f$ is an embedded dictator, it should have high value,
and if it is ``far'' 'from a dictator (in the sense that if it has no ``notable'' input coordinates),
then it should have low value.
This will be an important ingredient in our Unique-Games hardness proof in~XXX.

We have already shown that embedded dictators achieve value $\tfrac{1}{4} - \tfrac{1}{4} \rho$ on the noisy hypercube in~XXX.
Now we will upper-bound the value that functions ``far'' from dictators achieve.
We will show that their value, up to small error, is at most the optimum product state value on the Gaussian graph
$\ggraph^n_\rho$, which we have shown to be $\tfrac{1}{4} - \tfrac{1}{4} F^*(k, \rho)$.

\begin{theorem}[Dictatorship test soundness]\label{def:embedded-dic}
Let $-1 < \rho \leq 0$.
Then for any $\epsilon > 0$,
there exists a small enough $\delta = \delta(\epsilon, \rho) > 0$
and large enough $m = m(\epsilon, \rho) \geq 0$ such that the following is true.
Let $f: \{-1, 1\}^n \rightarrow B^k$ satisfy $\Inf_i^{\leq k}[f] \leq \delta$ for all $i \in [n]$.
Then  $\Stab_\rho[f] \geq F^*(k, \rho) - \eps$.
In other words, the value of~$f$ on the noisy hypercube $\hgraph^n_\rho$ is at most
$
\tfrac{1}{4} - \tfrac{1}{4} F^*(k, \rho) + \eps.
$
\end{theorem}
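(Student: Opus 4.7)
The plan is to use the standard dictator-test pipeline of Mossel--O'Donnell--Oleszkiewicz (adapted to vector-valued range) combined with the vector-valued Borell inequality from \Cref{conj:vector-borell}. First, I would pass from $f$ to its low-degree truncation $f^{\le m}$. By hypercontractivity (applied coordinate-wise to $f = (f_1, \ldots, f_k)$), we have $|\Stab_\rho[f] - \Stab_\rho[f^{\le m}]| \le |\rho|^m \sum_i \Var[f_i^{>m}] \le k \cdot |\rho|^m$, so choosing $m = m(\eps, \rho, k)$ large enough makes this error at most $\eps/4$. Write $Q(x) = f^{\le m}(x) = \sum_{|S| \le m} \widehat f(S)\chi_S(x)$ and interpret $Q$ as a multilinear polynomial that can be evaluated on either Boolean or Gaussian inputs.

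Next, by Hermite orthogonality, $\Stab_\rho[Q]$ takes the identical value $\sum_S \rho^{|S|} \|\widehat Q(S)\|^2$ whether the input is Boolean or Gaussian, so this step is an exact equality; the content of the invariance step is a distributional comparison. Specifically, I would apply the (vector-valued) MOO invariance principle coordinate-by-coordinate: since each $Q_i = f_i^{\le m}$ has low-degree influences bounded by $\Inf_j^{\le m}[f] \le \delta$, for any $\calC^3$ test function $\Psi : \R^k \to \R$ with uniformly bounded third derivatives,
\[
    \bigl|\E[\Psi(Q(\bx))] - \E[\Psi(Q(\bg))]\bigr| \le C(k,m,\Psi) \cdot \delta^{1/2}.
\]
Applying this to a smooth approximation of $t \mapsto (\|t\|-1)_+^2$ (which vanishes on $B^k$ and is nonnegative everywhere) and using $Q(\bx) = f(\bx) \in B^k$ Boolean-almost-surely, one concludes $\E[\mathrm{dist}(Q(\bg), B^k)^2] \le \eps_1(\delta)$, with $\eps_1(\delta) \to 0$ as $\delta \to 0$.

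Now define $g(y) = \Pi_{B^k}(Q(y))$, the pointwise projection of $Q$ onto the convex set $B^k$. Because the projection is 1-Lipschitz, $\|g(y) - Q(y)\|^2 = \mathrm{dist}(Q(y), B^k)^2$, so $\|g - Q\|_{L^2(\gamma)}^2 \le \eps_1(\delta)$, and therefore
\[
    \bigl|\Stab_\rho[g] - \Stab_\rho[Q]\bigr|
    \le 2\|g-Q\|_{L^2(\gamma)} \cdot \bigl(\|g\|_{L^2(\gamma)} + \|Q\|_{L^2(\gamma)}\bigr)
    \le C' \sqrt{\eps_1(\delta)},
\]
which is at most $\eps/4$ for $\delta$ small enough. (The $L^2$ norm of $Q$ is bounded because $\E\|Q(\bx)\|^2 = \E\|f^{\le m}(\bx)\|^2 \le \E\|f(\bx)\|^2 \le 1$, and the same on the Gaussian side by Hermite orthogonality.) Since $g : \R^n \to B^k$, \Cref{conj:vector-borell} (the negative-$\rho$ case) yields $\Stab_\rho[g] \ge \Stab_\rho[f_\mathrm{opt}] = F^*(k,\rho)$ by \Cref{prop:opt-formula}. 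Chaining the three error terms of size $\eps/4$ gives the claimed bound $\Stab_\rho[f] \ge F^*(k,\rho) - \eps$.

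The main obstacle I expect is the rounding step: producing a clean, quantitative version of the vector-valued invariance statement ``$Q(\bg)$ lies essentially in $B^k$.'' The scalar MOO invariance principle is a black-box, but the test function $t \mapsto (\|t\|-1)_+^2$ is not $\calC^3$, so I would either smooth it (inflating the constants by a controlled amount) or invoke the vector-valued multidimensional version of MOO directly. The remaining work -- the hypercontractive tail bound, the $L^2$-perturbation estimate for $\Stab_\rho$, and the final appeal to \Cref{conj:vector-borell} -- is routine once the invariance step is set up, and all the $\delta, m$ dependencies can be read off explicitly from the bounds above.
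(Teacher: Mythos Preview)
Your overall pipeline is the right one, but there is a genuine gap at the truncation step. You set $Q = f^{\le m}$ and then write ``using $Q(\bx) = f(\bx) \in B^k$ Boolean-almost-surely.'' This is false: hard truncation does \emph{not} preserve the range $B^k$. Already for $k=1$, the degree-$1$ part of the Boolean OR function takes the value $3/2$ at $(1,1)$. Since $Q(\bx)\in B^k$ fails on the Boolean side, the invariance principle gives you no control over $\E[\mathrm{dist}(Q(\bg),B^k)^2]$, and the projection step $g=\Pi_{B^k}(Q)$ has no reason to be $L^2$-close to $Q$. Everything downstream of this collapses. (A secondary issue: $(\|t\|-1)_+^2$ is not globally Lipschitz, so even a smoothed version does not directly fit the hypothesis of the vector-valued invariance principle; it is cleaner to work with the $1$-Lipschitz function $\mathrm{dist}(\cdot,B^k)$ or with coordinate maps $t\mapsto t_i-\mathcal R(t)_i$ as the paper does.)

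The paper avoids the truncation bug by smoothing with the noise operator rather than hard truncating: set $g=\T_{1-\gamma}f$. Since $\T_{1-\gamma}f(x)$ is an average of values of $f$ and $B^k$ is convex, $g$ genuinely maps $\{-1,1\}^n$ into $B^k$; at the same time $\Var[g_j^{>m}]\le (1-\gamma)^{2m}$, which is exactly the high-degree decay the Isaksson--Mossel invariance principle requires. The error $|\Stab_\rho[f]-\Stab_\rho[g]|$ is then controlled by $|\Stab_\rho[f]-\Stab_{\rho(1-\gamma)^2}[f]|\le \tfrac{2\gamma}{1+\rho}$ rather than by $|\rho|^m$. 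Once you make this replacement, your remaining outline is correct and in one respect cleaner than the paper's: the paper first reduces to odd $f$ so that it can write $\Stab_\rho[g]=-\E\|\T_{\sqrt{-\rho}}g\|^2$ in single-input form and then invoke the positive-$\rho$ odd-function corollary of Borell, whereas your observation that $\Stab_\rho[g]$ coincides on the Boolean and Gaussian sides (as a multilinear polynomial identity) lets you skip that detour and apply \Cref{conj:vector-borell} directly at negative $\rho$.
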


The $k = 1$ case is the Majority is Stablest theorem of~\cite{MOO10},
which serves as the soundness case for the $\maxcut$ dictatorship test;
our theorem generalizes Majority is Stablest to larger values of~$k$.
The proof follows the same outline as the proof of Majority is Stablest:
we apply an ``invariance principle'' to exchange $f$'s Boolean inputs
with Gaussians of the same mean and variance.
Then we use our results on the noise stability of functions in Gaussian space
to upper-bound the value of~$f$. 
The invariance principle we will use is the following one due to Isaksson and Mossel~\cite{IM12},
which applies to vector-valued functions.

\begin{theorem}[Vector-valued invariance principle \cite{IM12}]\label{thm:vector-invariance}
Fix $\tau, \gamma \in (0,1)$ and set $d = \tfrac{1}{18}\log\tfrac{1}{\tau} / \ln(2)$. Let $f = (f_1,\dots,f_k)$ be a $k$-dimensional multilinear polynomial such that $\Var[f_j] \leq 1$, $\Var[f_j^{> d}] < (1-\gamma)^{2d}$, and $\Inf^{\leq d}_i[f_j] \leq \tau$ for each $j \in [k]$ and $i \in [n]$. Let $\bx$ be a uniformly random string over $\{-1,1\}^n$
and $\by$ be an $n$-dimensional standard Gaussian random variable.
Furthermore, let $\Psi : \mathbb R^k \rightarrow \mathbb R$ be Lipschitz continuous with Lipschitz constant $A$. Then,
$$|\E[\Psi(f(\bx))] - \E[\Psi(f(\by))]| \leq C_kA \tau^{\gamma/(18 \ln 2)}$$
where $C_k$ is a parameter depending only on $k$.
\end{theorem}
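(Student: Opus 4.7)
The plan is to adapt the Lindeberg exchange argument of Mossel, O'Donnell, and Oleszkiewicz for the scalar invariance principle to the vector-valued setting, bundling the $k$ coordinate functions of $f$ together and using the Lipschitz control on $\Psi$ to assemble the final bound. The broad flow is: first truncate each $f_j$ to degree $\le d$, absorbing an $A\sqrt{k}(1-\gamma)^d$ error via the $\Var[f_j^{>d}]$ hypothesis and the Lipschitz bound on $\Psi$; next mollify $\Psi$ by convolution with a scaled Gaussian kernel of width $\sigma$ to produce a $\mathcal{C}^3$ approximation $\Psi_\sigma$ satisfying $\|\Psi-\Psi_\sigma\|_\infty = O(A\sigma\sqrt{k})$ and $\|\Psi_\sigma'''\|_\infty = O(A/\sigma^2)$; finally swap Boolean coordinates for Gaussian ones one at a time, bounding each swap by third-order Taylor expansion of $\Psi_\sigma$.

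For the Lindeberg step, I would define $\bz^{(i)} = (\by_1,\ldots,\by_i,\bx_{i+1},\ldots,\bx_n)$ so that the total difference telescopes into $n$ single-coordinate swaps. Writing $f(\bz^{(i)}) = A_i \bz^{(i)}_i + B_i$ with $A_i,B_i \in \R^k$ independent of the $i$-th coordinate, and Taylor-expanding $\Psi_\sigma$ to third order around $B_i$, the constant terms cancel trivially, the first-order terms cancel because $\E\bx_i = \E\by_i = 0$, the second-order terms cancel because $\E\bx_i^2 = \E\by_i^2 = 1$, and only the cubic remainder of size $O(\|\Psi_\sigma'''\|_\infty\, \E\|A_i\|^3)$ survives. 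For $\E\|A_i\|^3$ I would use Bonami hypercontractivity applied coordinatewise: $\E A_{i,j}^4 \le 9^d(\E A_{i,j}^2)^2$ for each $j$, combined with Cauchy--Schwarz, gives $\E\|A_i\|^3 \le C^d(\E\|A_i\|^2)^{3/2}$ for a universal $C$, and the influence bound $\Inf_i^{\le d}[f_j] \le \tau$ lets me pull a factor of $\tau^{1/2}$ out of $(\E\|A_i\|^2)^{3/2}$.

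Summing over $i$ using $\sum_i \Inf_i[f_j^{\le d}] \le d\Var[f_j] \le d$ for each $j$ bounds the total Lindeberg error by $O(Akd C^d \tau^{1/2}/\sigma^2)$. The final step is to choose $\sigma$ and $d$ so that the three error sources (truncation, mollification, Lindeberg) are balanced. With the prescribed $d = \frac{1}{18}\log(1/\tau)/\ln 2$, the truncation error becomes $A\sqrt{k}(1-\gamma)^d \le A\sqrt{k}\tau^{\gamma/(18\ln 2)}$; the same choice makes $C^d$ a power of $1/\tau$ with a strictly smaller exponent than $1/2$, so that the Lindeberg error is dominated by $\tau^{\beta}/\sigma^2$ for some $\beta > 0$. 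Balancing this against the mollification error $A\sigma\sqrt{k}$ by taking $\sigma \sim \tau^{\beta/3}$ and absorbing $k$-dependence into $C_k$ yields a total bound of the form $C_k A\tau^{\gamma/(18\ln 2)}$.

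I expect the main obstacle to be calibrating the three error sources so that the specific exponent $\gamma/(18\ln 2)$ emerges cleanly: the number $18$ in the choice of $d$ is the arithmetic artifact of this balance, inherited essentially unchanged from the scalar MOO argument, but it requires the hypercontractive estimate on $\E\|A_i\|^3$ to be sharp in its $d$-dependence -- a lossy coordinatewise Cauchy--Schwarz would degrade the exponent. Secondary difficulties are ensuring that mollification of a $k$-variable Lipschitz function yields uniform control on third derivatives of size $O(A/\sigma^2)$ (standard by integrating by parts once against the mollifier before differentiating) and verifying that the truncation step commutes cleanly with the Taylor expansion, which requires that the remainder terms $A_i, B_i$ from the degree-$\le d$ truncation satisfy the same tail bounds as those from the original $f$. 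The vector-valued bookkeeping itself is mostly routine once one commits to absorbing all $k$-dependence into the final constant.
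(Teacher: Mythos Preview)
The paper does not prove this theorem; it is quoted as a black-box result from Isaksson and Mossel~\cite{IM12} and then invoked in the proof of \Cref{thm:dictator-test}. Your proposal is the standard Lindeberg-exchange argument of Mossel--O'Donnell--Oleszkiewicz adapted to vector-valued outputs, which is indeed essentially what~\cite{IM12} does, so your sketch is correct in spirit and there is nothing further to compare against in this paper.
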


\begin{proposition}\label{prop:zeta-lipschitz}
Let $\mathrm{dist}_{B^k}:\R^k\rightarrow \R$ be the function defined by
\begin{equation*}
\mathrm{dist}_{B^k}(y)
 = \left\{\begin{array}{cl}
	0 & \text{if $\Vert y \Vert_2 \leq 1$,}\\
	\Vert y - \tfrac{y}{\Vert y \Vert} \Vert_2 & \text{otherwise,}
\end{array}\right.
\end{equation*}
which computes the distance of a point $y \in \mathbb R^k$ from $B^{k}$.
Then $\zeta$ is $1$-Lipschitz.
\end{proposition}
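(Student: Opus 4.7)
The plan is to recognize $\mathrm{dist}_{B^k}$ as the Euclidean distance function to the closed convex set $B^k$, for which 1-Lipschitz continuity is standard. I would first verify this identification explicitly: when $\|y\|_2 \leq 1$, the nearest point in $B^k$ to $y$ is $y$ itself, so the Euclidean distance is $0$; when $\|y\|_2 > 1$, the nearest point in $B^k$ is $y/\|y\|_2$ (this is the standard fact that the projection onto the unit ball is given by radial projection), and the distance is $\|y - y/\|y\|_2\|_2 = \|y\|_2 - 1$. So in both cases, $\mathrm{dist}_{B^k}(y) = (\|y\|_2 - 1)_+$ where $(t)_+ = \max(t, 0)$.

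Given this reformulation, I would give a direct two-line argument. By the reverse triangle inequality, $\bigl|\|y_1\|_2 - \|y_2\|_2\bigr| \leq \|y_1 - y_2\|_2$ for all $y_1, y_2 \in \R^k$. Composing with the scalar map $t \mapsto (t-1)_+$, which is itself 1-Lipschitz on $\R$ since it is the maximum of two 1-Lipschitz functions $t - 1$ and $0$, we get
\[
    |\mathrm{dist}_{B^k}(y_1) - \mathrm{dist}_{B^k}(y_2)|
    = \bigl| (\|y_1\|_2 - 1)_+ - (\|y_2\|_2 - 1)_+ \bigr|
    \leq \bigl|\|y_1\|_2 - \|y_2\|_2\bigr|
    \leq \|y_1 - y_2\|_2,
\]
as desired. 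Alternatively, one could bypass the explicit formula and argue more abstractly: for any $z \in B^k$, the triangle inequality yields $\|y_1 - z\|_2 \leq \|y_1 - y_2\|_2 + \|y_2 - z\|_2$, and taking the infimum over $z \in B^k$ on both sides (using that $\mathrm{dist}_{B^k}(y) = \inf_{z \in B^k} \|y - z\|_2$) gives $\mathrm{dist}_{B^k}(y_1) \leq \|y_1 - y_2\|_2 + \mathrm{dist}_{B^k}(y_2)$; symmetry then yields the 1-Lipschitz bound.

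There is no serious obstacle here; the only step worth verifying carefully is the explicit identification of the nearest point $y/\|y\|_2$ when $\|y\|_2 > 1$, which follows because any other $z \in B^k$ satisfies $\|y - z\|_2 \geq \|y\|_2 - \|z\|_2 \geq \|y\|_2 - 1$ by the reverse triangle inequality. I would present the explicit-formula approach as it is the most concrete and requires no appeal to convex-analytic facts beyond the reverse triangle inequality.
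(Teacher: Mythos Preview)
Your proof is correct and follows essentially the same idea as the paper. The paper's proof (which in the source is actually truncated mid-equation) sets out to verify $\zeta(y+z) \leq \zeta(y) + \|z\|_2$ by case analysis on whether $\|y+z\|_2 \leq 1$; after simplifying $\|y+z - (y+z)/\|y+z\|\|_2 = \|y+z\|_2 - 1$, this reduces to exactly your inequality $(\|y+z\|_2 - 1)_+ \leq (\|y\|_2 - 1)_+ + \|z\|_2$. Your packaging---rewrite $\mathrm{dist}_{B^k}(y) = (\|y\|_2 - 1)_+$ and observe it is a composition of the $1$-Lipschitz maps $y \mapsto \|y\|_2$ and $t \mapsto (t-1)_+$---is a cleaner way to complete the same computation.
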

\begin{proof}
It suffices to show that $\zeta(y + z) \leq \zeta(y) + \Vert z \Vert_2$ for all $y, z \in \R^n$.
This is clearly true if $\Vert y + z \Vert_2 \leq 1$, as $\zeta(y+1) = 0$ in that case.
On the other hand, when $\Vert y + z \Vert_2 \geq 1$,
\begin{equation*}
\zeta(y+z)
= \Vert y + z- \tfrac{y+z}{\Vert y+z \Vert} \Vert_2
= \Vert y 
\end{equation*}
\end{proof}

\begin{theorem}[Vector Valued Majority is Stablest]\label{thm:vec-mis}
Fix $\rho \in (-1,0]$. Then for any $\epsilon > 0$, there exists small enough $\delta = \delta(\epsilon, \rho)$ and large enough $k = k(\epsilon, \rho) \geq 0$ such that if $f : \{-1,1\}^n \rightarrow B^{m-1}$ is any function satisfying
$$\Inf^{\leq k}_i[f] = \sum_{j=1}^m \Inf^{\leq k}_i[f_j] \leq \delta \text{ for all }i = 1, \dots, n$$
then
$$\E_{x \sim_\rho y} [f(x)f(y)] = \Stab_\rho[f] \geq F^*(m,\rho) - \epsilon$$
\end{theorem}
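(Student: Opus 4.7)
The plan is to follow the standard ``invariance principle plus Borell'' template of Mossel--O'Donnell--Oleszkiewicz, substituting Theorem~\ref{thm:vector-invariance} for scalar invariance and Conjecture~\ref{conj:vector-borell-intro} for scalar Borell. The target is to lower bound the Boolean noise stability $\Stab_\rho[f]$ by the Gaussian stability of the conjectured optimizer $f_{\mathrm{opt}}$, which equals $F^*(k,\rho)$ by Proposition~\ref{prop:opt-formula}.

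First I would smooth: replace $f$ by $g := \T_{1-\gamma}f$ for a small $\gamma = \gamma(\epsilon,\rho) > 0$. Since $\T_{1-\gamma}$ is an averaging operator and $B^{k}$ is convex, $g$ still maps into $B^{k}$ and has coordinate-wise low-degree influences no larger than those of $f$. Moreover $\Stab_\rho[g] = \Stab_{\rho(1-\gamma)^2}[f]$ differs from $\Stab_\rho[f]$ by $O(\gamma)$, since $f$ is uniformly bounded. The essential gain is that $g$ now has exponentially decaying Fourier tails, $\Var[g_j^{>d}] \le (1-\gamma)^{2d}$ for every coordinate $j$, so by choosing $d$ appropriately the hypotheses of Theorem~\ref{thm:vector-invariance} can be met with $\tau = \delta$.

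Next I would invoke the invariance principle to transfer to Gaussian space, using the radial projection $\pi : \R^k \to B^k$ (with $\pi(y) = y$ for $\|y\| \le 1$ and $\pi(y) = y/\|y\|$ otherwise), which is $1$-Lipschitz. The idea is to apply Theorem~\ref{thm:vector-invariance} twice: once with the $1$-Lipschitz test $y \mapsto \mathrm{dist}_{B^k}(y)$ from Proposition~\ref{prop:zeta-lipschitz}, which vanishes on Boolean inputs and hence yields an $L^2$ bound $\|g - \tilde g\|_{L^2(\gamma_n)} \le o_\delta(1)$ for $\tilde g := \pi \circ g$; and once with the bounded Lipschitz test $\Psi(y,z) = \inr{\pi(y)}{\pi(z)}$ applied to the pair $(g, \U_\rho g)$ (viewed as a single multilinear polynomial in the Fourier coefficients of $g$), which transfers the Boolean stability $\Stab_\rho[g]$ to the corresponding Gaussian quantity up to additive error $O_k(\delta^{c(\rho,\gamma)})$. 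Because $g$ actually lies in $B^k$ on Boolean inputs, $\pi$ acts as the identity on that side, so the Boolean expectation of $\Psi(g,\U_\rho g)$ equals $\Stab_\rho[g]$ exactly.

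Finally, $\tilde g$ is admissible for Conjecture~\ref{conj:vector-borell-intro}, so combined with Proposition~\ref{prop:opt-formula} one obtains
\[
\Stab_\rho[\tilde g] \;\ge\; \Stab_\rho[f_{\mathrm{opt}}] \;=\; F^*(k,\rho).
\]
Chaining this with the approximate equalities above---and absorbing the gap between $\E\inr{\pi(g)}{\pi(\U_\rho g)}$ (produced by invariance) and $\Stab_\rho[\tilde g]$ via Cauchy--Schwarz using the $L^2$ control on $g - \tilde g$ and on $\U_\rho g - \pi \circ \U_\rho g$---then choosing $\gamma$ and $\delta$ small as functions of $\epsilon$, will yield $\Stab_\rho[f] \ge F^*(k,\rho) - \epsilon$. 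The main obstacle, compared to the scalar case, is precisely that the projection $\pi$ does not commute with $\U_\rho$, so $\pi \circ \U_\rho g \neq \U_\rho \tilde g$ in general; bridging this gap requires the ball-escape estimate applied to both $g$ and $\U_\rho g$, the latter estimate following from the former via contractivity of $\U_\rho$ on $L^2(\gamma_n)$. This is a genuinely vector-valued technicality with no analogue in the scalar $\{-1,1\}$-valued MOO proof, where the corresponding truncation is trivial.
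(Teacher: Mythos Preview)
Your template is correct and would go through, but you are making the transfer step harder than it is, and the paper's actual argument avoids the ``commutation'' issue you flag by a different route.

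First, the unnecessary complication: you apply invariance a second time with $\Psi(y,z)=\inr{\pi(y)}{\pi(z)}$ on the $2k$-dimensional pair $(g,\U_\rho g)$ in order to move $\Stab_\rho[g]$ from Boolean to Gaussian inputs. But for a multilinear polynomial $g$, noise stability is a formal identity in the Fourier coefficients, $\Stab_\rho[g]=\sum_S\rho^{|S|}\|\widehat g(S)\|^2$, and this is literally the same number whether you regard $g$ as a function of Boolean or of Gaussian variables. So no invariance is needed here; you already have $\Stab_\rho[g(\text{Gaussian})]=\Stab_\rho[g(\text{Boolean})]$ exactly. Then one Cauchy--Schwarz/contractivity step bounds $|\Stab_\rho[g]-\Stab_\rho[\tilde g]|\le 2\|g-\tilde g\|_{L^2(\gamma)}$, with the right-hand side controlled by your single ball-escape invariance application. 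The ``$\pi$ does not commute with $\U_\rho$'' obstacle disappears entirely.

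Second, the paper takes a genuinely different shortcut. Rather than apply Conjecture~\ref{conj:vector-borell-intro} directly at negative $\rho$, it first replaces $f$ by its odd part $\tfrac12(f(x)-f(-x))$; this can only decrease $\Stab_\rho$ and the low-degree influences. For an odd $g$ one has the identity $\Stab_\rho[g]=-\E_{\bx}\big[\|\T_{\sqrt{-\rho}}\,g(\bx)\|_2^2\big]$, which expresses stability as the expectation of a Lipschitz functional of a \emph{single} evaluation. Invariance is then applied once to $\T_{\sqrt{-\rho}}g$ with the truncated squared-norm test, and the Gaussian bound comes from the positive-$\rho$ Borell for odd functions (an immediate corollary of the negative-$\rho$ conjecture). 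This bypasses any correlated-pair or $2k$-dimensional invariance and any rounding of $\U_\rho g$. Your route is fine; the paper's oddness reduction is what buys the cleaner bookkeeping.
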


\begin{proof}
Throughout, we'll use $\bx$ to denote string in $\{-1,1\}^n$ and $\by$ to denote a vector in $\mathbb R^n$.
Since the statement of  the vector-valued invariance principle (\Cref{thm:vector-invariance}) requires a function with low high-degree variance, we consider $g = \T_{1-\gamma} f$. Then for each $j \in [m]$,
$$\Var[g_j^{\geq d}] = \sum_{|S| \geq d} (1-\gamma)^{2|S|}\widehat {f}_j(S)^2 \leq (1-\gamma)^{2d} \Var[f_j^{\geq d}] \leq (1-\gamma)^{2d}.$$
Also, $\Inf_i[g_j] \leq \Inf_i[f_j]$. Next, we bound the error in the quantity $\Stab_\rho[f]$ when we consider $g$ in place $f$.
\begin{align}
\ABS{\Stab_\rho[f] - \Stab_\rho[g]} &= \ABS{\sum_S \rho^{|S|} \norm{\widehat{f}(S)}_2^2 - \sum_S (\rho(1-\gamma)^2)^{|S|} \norm{\widehat{f}(S)}_2^2}\nonumber\\
&=\sum_S \ABS{\rho^{|S|}\big(1 - (1-\gamma)^{2|S|}\big)} \norm{\widehat{f}(S)}_2^2\nonumber\\
&\leq \rho(1-(1-\gamma)^2) \sum_S\norm{\hat f(S)}_2^2\nonumber\\
&\leq \rho(1-(1-\gamma)^2) \E[\norm{f}_2^2]\label{eq:bound1}
\end{align}
$f$ has range $B^{m-1}$ and thus $\E[\norm{f}_2^2] \leq 1$. By choosing $\gamma(\epsilon, \rho)$ to be sufficiently small we can bound this quantity by $\epsilon/2$.
Next, since $\Stab_{\rho}$ is a function of the coefficients of the polynomial corresponding to $f$, we have that $\Stab_\rho[g(\bx)] = \Stab_\rho[g(\by)]$.
However, although $g(x) \in B^{k}$ for all $x \in \{-1, 1\}^n$ (since $f(x) \in B^{k}$ by assumption),
$g(y)$ may take values outside the unit ball for $y \in \R^n$.
This prevents us from directly applying \Cref{thm:k-dim-borell} to~$g$.
We'll instead apply the theorem to the function $g':\R^n \rightarrow B^k$ defined as
$$g'(y) = 
\left\{\begin{array}{cl}
g(y) & \text{if $g(y) \in B^{k}$,}\\
\frac{g(y)}{\NORM{g(y)}} & \text{otherwise.}
\end{array}\right.$$
Applying \Cref{thm:k-dim-borell} yields that $\Stab_{\rho}[g'] \geq F^\star(k,\rho)$. It remains to bound $\ABS{\Stab_{\rho}[g] - \Stab_{\rho}[g']}$.
\begin{align*}
\ABS{\Stab_\rho[g] - \Stab_\rho[g']} &= \ABS{\langle g, \T_\rho g\rangle - \langle g', \T_\rho g'\rangle}\\
&= \ABS{\langle g, \T_\rho g\rangle - \langle g', \T_\rho g\rangle + \langle g', \T_\rho g\rangle - \langle g', \T_\rho g'\rangle}\\
&= \ABS{\langle g - g', \T_\rho g\rangle +\langle g', \T_\rho g - \T_\rho g'\rangle}\\
&= \ABS{\langle g - g', \T_\rho g\rangle} +\ABS{\langle g', \T_\rho g - \T_\rho g'\rangle}\\
&\leq \norm{g - g'}_2\norm{\T_\rho g}_2 + \norm{g'}_2\norm{\T_\rho g - \T_\rho g'}_2 \tag{by Cauchy-Schwartz}\\
&\leq \norm{g - g'}_2\norm{g}_2 + \norm{g'}_2\norm{g - g'}_2 \tag{because $\T_\rho$ is a contraction}\\
&= \norm{g - g'}_2(\norm{g}_2 + \norm{g'}_2) \\
&\leq 2\norm{g - g'}_2. \tag{because $\Vert g' \Vert_2 \leq \Vert g \Vert_2 \leq 1$}
\end{align*}
Now we just need to bound $\norm{g - g'}_2$ by $\epsilon/4$.
Note that $\norm{g - g'}_2^2 = \E[\zeta(g(y))]$, where $\zeta$ is the 1-Lipschitz function from \Cref{prop:zeta-lipschitz}.
Let $\delta = (\epsilon^2/(36\cdot C_m))^{18/\gamma}$. We can now apply \Cref{thm:vector-invariance} with $\Psi = \zeta$ and $\tau = \delta$, which yields
$$\E_{\by}[\zeta(g)] = |\E_{\bx}[\zeta(g(\bx))] - \E_{\by}[\zeta(g(\by))]| \leq \epsilon^2/36$$
Then, $\norm{g-g'}_2 \leq \sqrt{\epsilon^2/36} = \epsilon/6$ and $\ABS{\Stab_\rho[g] - \Stab_\rho[g']} \leq \epsilon/2$. Combining this with our bound of $\epsilon/2$ for \Cref{eq:bound1} concludes the proof.
\end{proof}

\section{Unique Games hardness of the quantum Heisenberg model}

\begin{definition}[Unique Label Cover]\label{def:unique-games}
The Unique Label Cover problem, $\mathcal L(V, W, E, [M], \{\sigma_{v,w}\}_{(v,w) \in E})$ is defined as follows: Given is a bipartite graph with left side vertices $V$, right side vertices $W$, and a set of edges $E$. The goal is to assign one `label' to every vertex of the graph, where $[M]$ is the set of allowed labels. The labeling is supposed to satisfy certain constraints given by bijective maps $\sigma_{v,w} : [M] \rightarrow [M]$. There is one such map for every edge $(v,w) \in E$. A labeling $L : V \cup W \rightarrow [M]$ `satisfies' an edge $(v,w)$ if
$$\sigma_{v,w}(L(w)) = L(v)$$
\end{definition}

\begin{conjecture}[Unique Games Conjecture \cite{Kho02}]
For any $\eta, \gamma > 0$, there exists a constant $M = M(\eta, \gamma)$ such that it is $\NP$-hard to distinguish whether a Unique Label Cover problem with label set size $M$ has optimum at least $1 - \eta$ or at most $\gamma$.
\end{conjecture}

Our hardness result is stated as follows.

\begin{theorem}[UG-Hardness of Approximating Quantum \maxcut]\label{thm:ug-hardness}
For any $\rho \in [-1,0)$ and $\epsilon > 0$, there exists an instance of the Heisenberg Hamiltonian such that deciding whether the maximum energy is greater than $\frac{1- \rho}{4} - \epsilon$ or less than $\frac{1 - F^\star(3, \rho)}{4} + \epsilon$ is Unique Games-Hard. In more standard notation, we say that it is UG-Hard to $(\frac{1 - F^\star(3,\rho)}{4} + \epsilon, \frac{1- \rho}{4} - \epsilon)$-approximate Quantum \maxcut.
\end{theorem}

\begin{remark}
$\frac{1 - F^\star(3,\rho)}{4}$ is exactly the product state value obtained by the rounding algorithm of \cite{GP19} on the integrality gap instance in \Cref{sec:integrality-gaps}.
\end{remark}

\begin{remark}
Minimizing the ratio $\frac{1-\rho}{4}/\frac{1-F^\star(3,\rho)}{4}$ w.r.t $\rho$ yields $\alpha_\mathrm{BOV}$, which shows \Cref{thm:main-inapprox}.
\end{remark}

\begin{theorem}\label{thm:quant-ug-hardness}
Fix a CSP over domain $S^{m-1}$ with predicate set $\Psi$. Suppose there exists an $(\alpha,\beta)$-Embedded-Dictator-vs.-No-Notables test using predicate set $\Psi$. Then for all $\delta > 0$, it is ``UG-hard'' to $(\alpha+\delta,\beta- \delta)$-approximate $\maxcsp(\Psi)$.\footnote{For simplicity, we assume each constraint in $T$ has width 2, but we could easily extend this to width $c$ constraints.}
\end{theorem}

\begin{proof}
The proof is by reduction from the Unique Games problem.
Pick constants $\eta, \gamma$ and $M = M(\eta, \gamma)$ satisfying the Unique Games Conjecture.
Let $\mathcal L(U, V, E, [M], \{\pi_{u,v}\}_{(u,v) \in E})$ be a Unique Games instance.
The reduction produces a Heisenberg Hamiltonian instance with graph~$G$
whose vertex set is $U \times \{-1, 1\}^M$.
A random edge in $G$ is sampled as follows: pick~$\bu \in U$ uniformly at random,
and sample two uniformly random neighbors $\bv, \bv' \sim N(\bu)$ independently.
Let $\bx$ and $\by$ be $\rho$-correlated $M$-dimensional Boolean strings.
Output the edge between $(\bv, \bx \circ \pi_{\bu, \bv'})$ and $(\bv', \by \circ \pi_{\bu, \bv'})$,
where $w \circ \sigma$ is the string in which $(w \circ \sigma)_i = w_{\sigma(i)}$.

A product state assignment to~$G$ corresponds to a function $f_v:\{-1, 1\}^M \rightarrow S^2$ for each $v \in V$.
It has value
\begin{equation*}
\E_{\bu \sim U}\E_{\bv, \bv' \sim N(\bu)} \E_{\substack{\text{$(\bx, \by)$ $\rho$-correlated}\\\text{$n$-dim Boolean strings}}}
	\left[\tfrac{1}{4} - \tfrac{1}{4}\langle f_{\bv}(\bx \circ \pi_{\bu, \bv}), f_{\bv}(\by \circ \pi_{\bu, \bv'})\rangle\right].
\end{equation*}

\textit{Completeness.} Assume $\mathcal L$ has a labeling $L:U \cup V \rightarrow [m]$ satisfying more than $(1- \eta)$-fraction of the edges.
For each $v \in V$, let $f_v(x) = (x_{L(v)}, 0, \dots, 0)$.
To analyze the performance of~$f$,
let us first fix a vertex $u \in U$ and two neighbors $v, w \in N(u)$,
and condition on the case that $L$ satisfies both edges $(u, v)$ and $(u, w)$.
This means that $\pi_{v\rightarrow u}(L(v)) = L(u) = \pi_{w\rightarrow u}(L(w))$.
Thus, for each $x, y \in \{-1, 1\}^M$,
\begin{align*}
f_v(x \circ \pi_{v\rightarrow u}) &= ((x \circ \pi_{v \rightarrow u})_{L(v)}, 0, 0) = (x_{\pi_{v \rightarrow u}(L(v))},0,0) = (x_{L(u)},0,0),
\end{align*}
and similarly $f_w(y \circ \pi_{w \rightarrow u}) = (y_{L(u)}, 0, 0)$.
As a result, the value of $f$ conditioned on $u$, $v$, and $w$ is
\begin{align*}
\E_{\bx, \by}
	\left[\tfrac{1}{4} - \tfrac{1}{4}\langle f_{v}(\bx \circ \pi_{v \rightarrow u}), f_{w}(\by \circ \pi_{w \rightarrow u})\rangle\right]
=\E_{\bx, \by}
	\left[\tfrac{1}{4} - \tfrac{1}{4}\langle (\bx_{L(u)}, 0, 0), (\by_{L(u)}, 0, 0)\rangle\right],
\end{align*}
which is just the value of the $L(u)$-th embedded dictator on the noisy hypercube, i.e.\ $1/4 - 1/4 \rho$.

Now we average over $\bu, \bv, \bw$.
Because $\mathcal L$ is a biregular Unique Games instance,
picking a random vertex $\bu \in U$ and neighbor $\bv \in N(\bu)$ is equivalent to picking a uniformly random edge from~$E$.
Therefore, by the union bound, the probability that the assignment~$L$
satisfies both edges $(\bu, \bv)$ and $(\bu, \bw)$ is at least $1-2\eta$.
As we have seen, conditioned on this event, the assignment~$f$ has value at least $1/4 - 1/4\rho$.
As a result, we can lower-bound the value of~$f$ by
\begin{equation*}
(1-2\eta) \cdot (\tfrac{1}{4} - \tfrac{1}{4}\rho) \geq \tfrac{1}{4} - \tfrac{1}{4}\rho - \eta.
\end{equation*}
This completes the completeness case.

\textit{Soundness.}
We will show the contrapositive.
Suppose there is a product state assignment $\{f_v\}_{v \in V}$ to $G$ with value at least $\tfrac{1}{4} - \tfrac{1}{4} F^*(3, \rho) + \eps$.
We will use this to construct a randomized assignment $\bL:U \cup V \rightarrow [M]$ whose average value is at least~$\gamma$,
which implies that the Unique Games instance has value at least~$\gamma$.

For each $u \in U$, we define the function $g_u: \{-1, 1\}^M \rightarrow B^2$ as
\begin{equation*}
g_u(x) = \E_{\bv \sim N(u)}[ f_{\bv}(x \circ \pi_{\bv \rightarrow u}) ].
\end{equation*}
Then we can rewrite the value of the assignment $\{f_v\}$ as
\begin{align*}
\E_{\bu} \E_{\bv, \bw \sim N(\bu)} \E_{\bx, \by}\left[\tfrac{1}{4} - \tfrac{1}{4}\langle f_{\bv}(\bx \circ \pi_{\bv \rightarrow \bu}), f_{\bw}(\by \circ \pi_{\bw \rightarrow \bu})\rangle\right]
&= \E_{\bu} \E_{\bx, \by}[\tfrac{1}{4} - \tfrac{1}{4}\langle g_{\bu}(\bx), g_{\bu}(\by)\rangle]\\
&= \E_{\bu}[\tfrac{1}{4} - \tfrac{1}{4}  \Stab_\rho[g_{\bu}]].
\end{align*}
Since~$f$ has value at least $\tfrac{1}{4} - \tfrac{1}{4} F^*(3, \rho) + \eps$,
an averaging argument implies that at least an $\eps/2$ fraction of $u \in U$ satisfy 
$\tfrac{1}{4} - \tfrac{1}{4}  \Stab_\rho[g_{u}] \geq \tfrac{1}{4} - \tfrac{1}{4} F^*(3, \rho) + \eps/2$.
Rearranging, these $u$'s satisfy
\begin{equation*}
\Stab_\rho[g_u] \leq F^*(3, \rho) + 2 \eps.
\end{equation*}
We call any such~$u$ ``good''.
We now apply the soundness of our dictatorship test to the good $u$'s:
by XXX, any such $u$ has a ``notable'' coordinate, i.e. an $i$ such that $\Inf^{\leq m}_i[g_u] > \delta$.
Our random assignment will then use this~$i$ as its label for the vertex~$u$:
$\bL(u) = i$. (If~$u$ has multiple notable coordinates, then we pick one of these arbitrarily as the label for~$u$.)

Next we'll need to obtain labels for the neighbors of~$u$.
We will use the condition that~$u$ is good to derive that many of~$u$'s neighbors~$v$ have notable coordinates.
This requires relating the Fourier spectrum of $g_u$ to the Fourier spectra of the neighboring $f_v$'s.
To begin,
\begin{equation*}
\chi_S(x \circ \pi_{v \rightarrow u})
 = \prod_{i \in S}(x \circ \pi_{v \rightarrow u})_i
 = \prod_{i \in S} x_{\pi_{v\rightarrow u}(i)}
 = \prod_{j \in T} x_j
 = \chi_T(x),
\end{equation*}
where $T = \pi_{v \rightarrow u}(S) = \{\pi_{v\rightarrow u}(i) : i \in S\}$.
As a result,
\begin{equation*}
f_v(x \circ \pi_{v \rightarrow u})
= \sum_{S \subseteq [n]} \widehat{f}(S) \chi_{S}(x \circ \pi_{v \rightarrow u})
= \sum_{T \subseteq [n]} \widehat{f}(\pi_{u \rightarrow v}(T)) \chi_T(x).
\end{equation*}
Averaging over all $\bv \in N(u)$,
\begin{equation*}
g_u(x)
= \E_{\bv \sim N(u)}[ f_{\bv}(x \circ \pi_{\bv \rightarrow u}) ]
= \sum_{T \subseteq [n]} \E_{\bv \sim N(u)}[\widehat{f}(\pi_{u \rightarrow v}(T))] \chi_T(x)
= \sum_{T \subseteq [n]} \widehat{g}(T) \chi_T(x).
\end{equation*}
Hence,
\begin{align*}
\delta
&< \Inf^{\leq m}_i[g_u]\\
&= \sum_{|T| \leq m:  T \ni i} \Vert \widehat{g}(T) \Vert^2_2\\
&= \sum_{|T| \leq m:  T \ni i} \left\Vert \E_{\bv \sim N(u)}[\widehat{f}(\pi_{\bu \rightarrow v}(T))] \right\Vert^2_2\\
& \leq \sum_{|T| \leq m:  T \ni i} \E_{\bv \sim N(u)} \left\Vert \widehat{f}(\pi_{\bu \rightarrow v}(T)) \right\Vert^2_2\tag{by XXX}\\
& = \E_{\bv \sim N(u)}\left[\sum_{|T| \leq m:  T \ni i} \left\Vert \widehat{f}(\pi_{\bu \rightarrow v}(T)) \right\Vert^2_2\right]\\
& =  \E_{\bv \sim N(u)}[\Inf^{\leq m}_{\pi_{\bv \rightarrow u}(i)}[f_{\bv}]]
\end{align*}
By another averaging argument,
a $\delta/2$-fraction of $u$'s neighbors~$v$ satisfy $\Inf^{\leq m}_{\pi_{v \rightarrow u}(i)}[f_{\bv}] \geq \delta/2$.
We call these the ``good neighbors''.
For each good neighbor~$v$, the set of possible labels
$$S_v = \{j : \Inf^{\leq m}_j[f_v] \geq \delta/2 \}$$
is non-empty.
In addition, one of these labels~$j$ satisfies $j = \pi_{v \rightarrow u}(i)$.
On the other hand, by \Cref{lem:influence-bound},  $|S_v| \leq XXX$ and so this set is not too large either.
For each good neighbor,
we assign the label of $\bL(v)$ by picking a uniformly random $j \in S_v$.
For all other vertices (i.e.\ those which are not good or good neighbors),
we assign $\bL$ a random label.

Now we consider the expected number of edges in $\mathcal{L}$ satisfied by $\bL$.
Given a random edge $(\bu, \bv)$,
the probability that $\bu$ is good is at least $\eps/2$;
conditioned on this, the probability that $\bv$ is a good neighbor is at least $\delta/2$.
Assuming both hold, since $S_{\bv}$ is of size at most XXX and contains one label equal to $\pi_{v \rightarrow u}(L(u))$,
then $\bL$ satisfies the edge $(\bu, \bv)$ with probability at least XXX.
In total, $\bL$ satisfies at least an
$$XXX$$
fraction of the edges. Setting $\gamma =XXX$ concludes the proof.
\end{proof}
}

\ignore{

\section{Gap Instances}

In this section, we prove theorems \ref{thm:integrality-gap-heisenberg}, \ref{thm:integrality-gap-prod}, and \ref{thm:algo-gap}.

\subsection{Integrality Gaps}\label{sec:integrality-gaps}

\sdpintgap*

\prodintgap*

Recalling \Cref{def:integrality-gap}, our goal is to bound
$$\inf_{\text{instances }\mathcal I\text{ of }\mathcal P}\left\{\frac{\mathrm{OPT}(\mathcal I)}{\mathrm{SDP}(\mathcal I)}\right\}$$
where $\mathcal P$ is either the Heisenberg Hamiltonian SDP, or the product state SDP. To upper bound this quantity, we construct a specific instance $\mathcal I$, and give an upperbound for $\mathrm{OPT}(\mathcal I)$ and a lowerbound for $\mathrm{SDP}(\mathcal I)$. Note that for product state approximation, we only optimize over product state, whereas for the general Heisenberg Hamiltonian we consider all quantum states. However, the specific instance $\mathcal I$ we consider will correspond to a high-degree graph, and for such graphs Brandow and Harrow \cite{BH16} show it suffices to consider product states.

\begin{theorem}[Product-State Approximations \cite{BH16}]
Let $G = (V,E)$ be a $D$-regular graph, and let $H$ be a Hamiltonian on $n = |V|$ qubits given by
$$H = \E_{(i,j) \in E} H_{i,j} = \frac{2}{nD}\sum_{(i,j) \in E} H_{i,j}$$
where $H_{i,j}$ acts only on qubits $i,j$ and $\norm{H_{i,j}} \leq 1$. Then there exists a state $\ket \psi = \ket{\psi_1} \otimes \cdots \otimes \ket{\psi_n}$ such that
$$\tr[H \psi] \geq \lambda_{\textrm{max}}(H) - O(D^{-1/3})$$
\end{theorem}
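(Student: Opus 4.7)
The plan is to follow the Brand\~ao--Harrow strategy, which exploits monogamy of entanglement: a qubit cannot be strongly entangled with too many of its neighbors at once, so in a $D$-regular graph the two-qubit marginals of the ground state should, on average, be close to products of the one-qubit marginals once we condition on a suitable classical auxiliary.

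Concretely, let $\ket{\phi}$ achieve $\lambda_{\max}(H)$, with one- and two-qubit marginals $\rho_i$ and $\rho_{ij}$. By H\"older's inequality and Pinsker's inequality,
\[
    \tr[H_{ij}\rho_{ij}] - \tr[H_{ij}(\rho_i \otimes \rho_j)] \leq \|\rho_{ij} - \rho_i \otimes \rho_j\|_1 \leq C\sqrt{I(i:j)}.
\]
So producing a product state whose energy is within $\eps$ of $\lambda_{\max}(H)$ reduces to finding a (possibly post-processed) state whose one-qubit marginals satisfy a small average mutual-information bound $\E_{(i,j) \sim E} I(i:j) = O(\eps^2)$. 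A pure product state is then obtained from the mixed product $\bigotimes_i \rho_i$ via convexity of $\ket{\psi} \mapsto \bra{\psi} H \ket{\psi}$ applied to the spectral decomposition of each $\rho_i$: some pure product state in the support must do at least as well as the mixture.

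The core difficulty is that pairwise mutual informations need not be small in $\ket{\phi}$: monogamy only gives a bound on the \emph{conditional} mutual information once we condition on a classical witness, because the quantum chain rule only chains conditional terms. To obtain such a witness, I would pick a random subset $T \subseteq V$ and measure each qubit of $T$ with an informationally-complete POVM, conditioning on the outcomes $\mathbf{m}$. Informational completeness is crucial: it forces any residual quantum correlations between qubits in $T^c$ to register as classical correlations with $\mathbf{m}$, so the classical chain rule for the entropies of the outcomes bounds the total ``mutual-information budget'' by $O(|T|)$. Averaging this budget across the $\Theta(nD)$ edges of $G$, and handling separately edges that have at least one endpoint in $T$, yields a quantitative bound on $\E_{(i,j)\sim E,\mathbf{m}} I(i:j \mid \mathbf{m})$.

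Finally, balancing the resulting decoupling error, which scales like the square root of this conditional mutual information, against the $O(|T|/n)$ cost of measuring the qubits in $T$ and then replacing them by arbitrary pure states, one picks $|T|$ to equalize the two and recovers the claimed $O(D^{-1/3})$ bound. The main obstacle is the quantitative mutual-information bound: the averaging over edges of $G$ must be done carefully to take advantage of the regularity of $G$ rather than giving a bound that only depends on the total number of qubits, and one has to check that ``boundary'' edges with endpoints both in $T$ and $T^c$ contribute only lower-order terms. Once this is in hand, fixing an outcome $m$ that attains the average and restricting to the optimal pure product state on $T^c$ (extended arbitrarily on $T$) gives the theorem.
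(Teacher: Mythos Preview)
The paper does not supply a proof of this theorem: it is quoted as a black-box result from Brand\~ao--Harrow (stated there as Corollary~4), and the version in the main text is simply \Cref{thm:BH} with no accompanying argument. So there is no ``paper's own proof'' against which to compare your proposal.

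That said, what you have sketched is broadly the Brand\~ao--Harrow strategy: take the optimal state, measure a random subset $T$ with an informationally-complete POVM, use the chain rule to bound the average conditional mutual information across edges by something like $O(|T|/(nD))$, convert this into a trace-distance bound via quantum Pinsker, and trade off against the $O(|T|/n)$ energy lost on measured qubits to get the $D^{-1/3}$. The main places where your outline is thinner than the actual proof are (i) the precise mechanism by which informational completeness of the POVM converts quantum conditional mutual information into a classical quantity that obeys the chain rule you need---this step in \cite{BH16} goes through a specific lemma relating quantum and measured mutual informations and is not just ``residual quantum correlations register as classical correlations''---and (ii) the averaging argument over edges, where you correctly flag that regularity of $G$ must be used but do not say how (in \cite{BH16} this is where the distribution over edges enters, and one needs the bound $\sum_j I(i:j) \le O(1)$ per vertex combined with the fact that each vertex has exactly $D$ neighbors). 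Neither of these is a fatal gap, but they are the parts where a full write-up would require real work beyond what you have written.
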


Therefore, by considering Hamiltonians on graphs with sufficiently large degree, we have that $\prodval(G) \geq \mathrm{OPT}(G) - \epsilon$ and for both theorems \ref{thm:integrality-gap-heisenberg} and \ref{thm:integrality-gap-prod}, it suffices to upper bound $\prodval(G)$, for the same instance $G$. Next, recall from equations \ref{eq:bov-ineq} and \ref{eq:gp-ineq} that the SDP values are quite similar, with just a multiplicative difference. As such, the analysis for the upper bound in both cases will be highly similar.

The instance we construct bears resemblance to the integrality instance for the \maxcut SDP, given by \cite{FS02}. In particular, we take our graph $G$ to correspond to all points in $S^{n-1}$, the $n$-dimensional sphere, with edges drawn from a $\rho$-correlated multivariate Gaussian normal distribution.

\begin{definition}[$\rho$-correlated Sphere Graph]\label{def:graph}
The $\rho$-correlated sphere graph, denoted $\mathcal G^n_\rho$, has vertices on $S^{n-1}$ and edges $(\bx/||\bx||,\by/||\by||)$, where, recalling the notation from \Cref{sec:notation}, $\bx \sim_\rho \by$ are $\rho$-correlated $n$-dimensional Gaussians.
\end{definition}

Although one might object that $\mathcal G^n_\rho$ is an infinite graph, the following lemma shows $G_{\rho,n}$ can be discretized  with a negligible loss in value. The proof is given in \Cref{sec:card-reduc}. 

\begin{lemma}[Graph Discretization]\label{lem:discretization}
Let $G = \mathcal G^n_\rho$ be a $\rho$-correlated sphere graph such that $\mathrm{SDP}(G) \geq c$ and $\prodval(G) \leq s$. Then, there exists a finite, weight graph $G'$ (with $n = 1/\epsilon^{O(d)}$ vertices) containing no self-loops and satisfying $\mathrm{SDP}(G') \geq c - \epsilon$ and $\prodval(G') \leq s + \epsilon$.\ynote{Do we need loopless?}
\end{lemma}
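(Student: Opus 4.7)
The plan is to build $G'$ as a careful discretization of $\mathcal{G}^n_\rho$ on a compact annulus. First I would fix parameters $R > \eta > 0$ and $\delta > 0$ to be chosen later, let $A = \{x \in \R^n : \eta \leq \Vert x \Vert \leq R\}$, and partition $A$ into measurable cells $C_1, \ldots, C_N$ of diameter at most $\delta$, each with approximately equal Gaussian mass (between $c/N$ and $C/N$ for fixed constants). Pick a representative $x_i \in C_i$ for each $i$, set $V(G') = \{x_1, \ldots, x_N\}$, and define edge weights
\[
    w'(x_i, x_j) = \frac{1}{Z}\Pr_{\bx \sim_\rho \by}[\bx \in C_i,~\by \in C_j],\qquad Z = \Pr_{\bx \sim_\rho \by}[\bx, \by \in A].
\]
Gaussian concentration makes $1 - Z = O(\eta^n + e^{-\Omega(R^2)})$ as small as we like by taking $R$ large and $\eta$ small.

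For the two SDP lower bounds, I would use the identity embedding $f_{\mathrm{SDP}}(x_i) = x_i/\Vert x_i\Vert$ as a feasible solution on $G'$. A Hermite-expansion argument gives $\hsdp(G) \leq 1/4 - 3\rho/4$: for any $f:\R^n\to S^\infty$ with $\Vert f\Vert = 1$ a.s., $\E_{\bx\sim_\rho\by}[\langle f(\bx),f(\by)\rangle] = \sum_d \rho^d \Vert H_d f\Vert^2 \geq \rho$ since $\rho^d$ is maximally negative at $d=1$ for $\rho \in [-1,0]$. Meanwhile, the proof of \Cref{lem:heis-sdp-lower} shows the identity embedding on $G$ achieves $1/4 - 3\rho/4 - O(\sqrt{\log n/n})$, and since $x \mapsto x/\Vert x \Vert$ is $(1/\eta)$-Lipschitz on $A$, evaluating the same embedding on $G'$ incurs only an additional $O(\delta/\eta) + O(1-Z)$ error. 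Choosing parameters so all these errors are at most $\eps$ yields $\hsdp(G') \geq 1/4 - 3\rho/4 - \eps \geq \hsdp(G) - \eps$. The identical analysis with $1/4$ in place of $3/4$ yields $\prodsdp(G') \geq \prodsdp(G) - \eps$.

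For the bound $\prodval(G') \leq \prodval(G) + \eps$, I would take any near-optimal product-state assignment $f': V(G') \to S^2$ with value $v'$ on $G'$ and extend it piecewise-constantly to $\tilde f:\R^n \to S^2$ by setting $\tilde f(x) = f'(x_i)$ when $x \in C_i$ and $\tilde f(x) = e_1$ otherwise. By construction of $w'$, the joint distribution of $(\tilde f(\bx), \tilde f(\by))$ under a $\rho$-correlated pair conditioned on $\bx,\by \in A$ equals the joint distribution of $(f'(\bu), f'(\bv))$ for a random edge $(\bu,\bv)\sim E'$, so the value of $\tilde f$ on $G$ equals $Z \cdot v' + O(1-Z)$. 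Rearranging gives $\prodval(G) \geq v' - \eps \geq \prodval(G') - \eps$, and combining with the convention $\heis(G) = \prodval(G)$ for the infinite graph yields the claimed inequality.

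Finally, for $\heis(G') \leq \heis(G) + \eps$, I would apply \Cref{cor:BH-nonuniform-easy-to-use} to the finite graph $G'$:
\[
    \heis(G') \leq \prodval(G') + 20\bigl(N \cdot \max_{i,j} A_{i,j} \cdot \max_i p_i\bigr)^{1/8} + \max_i p_i.
\]
The equal-mass choice of cells makes $\max_i p_i = O(1/N)$, and $A_{i,j} = w'(x_i,x_j)/(2 p_j) = O(1/N)$ because the conditional Gaussian density of one endpoint given the other lies in $C_j$ is uniformly bounded on $A$ (by a constant depending on $\rho,\eta,R$, which are fixed at this stage). So the Brandão–Harrow correction vanishes as $N \to \infty$ with $\eta, R, \delta$ fixed, giving $\heis(G') \leq \prodval(G') + \eps \leq \heis(G) + 2\eps$. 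The main technical obstacle is the quantifier order: we must first fix $\eta, R, \delta$ small enough to handle the SDP and product-state errors, and only then take $N$ large enough for the Brandão–Harrow correction to be negligible; after a final rescaling of $\eps$ by a universal constant, all four inequalities hold simultaneously.
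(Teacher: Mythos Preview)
Your approach is essentially correct but differs from the paper's in two structural ways, and there is one small omission.

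\textbf{Comparison with the paper.} The paper proceeds in three stages rather than one. First it pushes the Gaussian edge distribution forward to $S^{n-1}$ via an \emph{optimal} SDP embedding $f$, obtaining a graph $G_1$ on the sphere for which the identity map already attains $\hsdp(G_0)$ exactly; this avoids your separate Hermite-expansion upper bound $\hsdp(G)\le\tfrac14-\tfrac34\rho$ (which is correct, but an extra ingredient). Second it discretizes $S^{n-1}$ with an $\eps$-net and Voronoi cells, exactly as you discretize the annulus. Third, and this is the main difference, it applies a vertex blow-up: each vertex of the discretized graph is replaced by $M$ clones, and a random edge picks independent clone indices at both endpoints. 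This blow-up leaves both the SDP value and the product-state value unchanged (shown by an averaging argument), forces the self-loop mass to be at most $1/M$, and makes the Brand\~ao--Harrow parameters satisfy $\max A_{(u,i),(v,j)}\le 1/M$ and $\max p_{(u,i)}\le 1/M$ \emph{regardless} of the coarse discretization. Your route instead drives the Brand\~ao--Harrow correction to zero by refining the discretization itself (taking $N\to\infty$ with $\eta,R$ fixed), which works because the conditional density is bounded on the annulus and the cells have Lebesgue measure $O_{n,R}(1/N)$. The paper's blow-up cleanly decouples the BH parameter $M$ from the geometric discretization, whereas your approach requires tracking that the $O(1/N)$ constant in $A_{i,j}$ depends on $n,R,\rho$ but not on $N$; both are valid, but the blow-up is the more reusable trick.

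\textbf{Omission.} You never remove self-loops, though the lemma requires $G'$ to be loopless. In your construction the total self-loop weight is $\sum_i w'(x_i,x_i)=\Pr[\bx,\by\text{ in the same cell}]/Z\le \max_j A_{j,j}=O_{n,R,\rho}(1/N)$, so deleting them and renormalizing perturbs every quantity by $O(1/N)$; you should state this explicitly before applying \Cref{cor:BH-nonuniform-easy-to-use}.
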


With the instance $G_{\rho,n}$ in hand, we state our upperbound for $\prodval(G_{\rho,n})$.

\begin{lemma}[Product State Value Upper Bound]\label{lem:prod-upper}
Recall that a product state assignment to $G = \mathcal G^n_\rho$ is given by a function $f : S^{n-1} \rightarrow S^2$ and obtains value,
\begin{equation}\label{eq:prod-val}
\val_{G}(f) = \frac{1}{4}\PAREN{1-\E_{\bx \sim_\rho \by}\Inner{f\PAREN{\Normed{\bx}}, f\PAREN{\Normed{\by}}}} 
\end{equation}
Then,
$$\prodval(G) = \max_{f : S^{n-1} \rightarrow S^2} \val_{G}(f) \leq \val_{G}(f_\mathrm{opt})$$
\end{lemma}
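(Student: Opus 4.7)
The plan is to reduce the claim to a direct application of the vector-valued Borell's inequality (Conjecture~\ref{conj:vector-borell-intro}) with $k = 3$. To maximize $\val_G(f)$ we equivalently need to minimize $\E_{\bx \sim_\rho \by}\langle f(\bx/\|\bx\|), f(\by/\|\by\|)\rangle$ over all $f : S^{n-1} \to S^2$. The key observation is that this expression is precisely the Gaussian noise stability of the radial extension of $f$ to all of $\R^n$.

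First I would define, for any $f : S^{n-1} \to S^2$, its radial extension $\tilde f : \R^n \to S^2 \subseteq B^3$ by $\tilde f(x) = f(x/\|x\|)$ for $x \ne 0$ (with any measurable choice at the origin). Then by \Cref{def:noise-stab},
\[
\Stab_\rho[\tilde f] = \E_{\bx \sim_\rho \by}\langle \tilde f(\bx), \tilde f(\by)\rangle = \E_{\bx \sim_\rho \by}\Big\langle f\big(\tfrac{\bx}{\|\bx\|}\big), f\big(\tfrac{\by}{\|\by\|}\big)\Big\rangle,
\]
so that $\val_G(f) = \tfrac14(1 - \Stab_\rho[\tilde f])$.

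Next I would apply \Cref{conj:vector-borell-intro} with $k = 3$ to the function $\tilde f : \R^n \to B^3$, using the hypothesis $-1 \le \rho \le 0$ from \Cref{def:graph} (the relevant negative values of $\rho$ are $\rbov$ and $\rgp$). The conjecture gives
\[
\Stab_\rho[\tilde f] \;\geq\; \Stab_\rho[F_\mathrm{opt}],
\]
where $F_\mathrm{opt}(x) = x_{\le 3}/\|x_{\le 3}\|$ is the Borell optimizer on $\R^n$. Rearranging yields $\val_G(f) \le \tfrac14(1 - \Stab_\rho[F_\mathrm{opt}])$.

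To finish, I would observe that the Borell optimizer $F_\mathrm{opt}$ is scale-invariant, since $F_\mathrm{opt}(\lambda x) = F_\mathrm{opt}(x)$ for every $\lambda > 0$. Consequently $F_\mathrm{opt}$ is itself the radial extension $\widetilde{f_\mathrm{opt}}$ of the function $f_\mathrm{opt} : S^{n-1} \to S^2$ defined by $f_\mathrm{opt}(u) = u_{\le 3}/\|u_{\le 3}\|$, matching the $f_\mathrm{opt}$ used throughout this section as the SDP-rounded product-state assignment. Therefore $\Stab_\rho[F_\mathrm{opt}] = \E_{\bx \sim_\rho \by}\langle f_\mathrm{opt}(\bx/\|\bx\|), f_\mathrm{opt}(\by/\|\by\|)\rangle$, and the chain of inequalities becomes $\val_G(f) \le \val_G(f_\mathrm{opt})$, as desired. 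Since every step is essentially bookkeeping once the conjecture is invoked, the only conceptual content in the proof is the invocation of \Cref{conj:vector-borell-intro} itself; there is no additional obstacle, and in particular no need to separately argue about the radial structure of the optimizer beyond noting its scale invariance.
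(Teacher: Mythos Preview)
Your proposal is correct and follows essentially the same route as the paper's proof: both radially extend $f:S^{n-1}\to S^2$ to a function $\tilde f:\R^n\to B^3$, invoke the $k=3$ case of \cref{conj:vector-borell-intro} to bound $\Stab_\rho[\tilde f]$, and then use the scale-invariance $f_{\mathrm{opt}}(x/\|x\|)=f_{\mathrm{opt}}(x)$ to identify the Borell optimizer with $\val_G(f_{\mathrm{opt}})$. Your write-up of the radial-extension step is in fact somewhat cleaner than the paper's, which leaves that passage a bit implicit.
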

Setting $k = 3$ in the following theorem gives a characterization of $\val_{\mathcal G^n_\rho}(f_\mathrm{opt})$ solely as a function of $\rho$.
\begin{theorem}[\cite{BOV10}]\label{thm:hypergeometric}
Let $u,v$ be unit vectors in $\mathbb R^n$ with $\langle u,v\rangle = \rho$ and let $\bZ \in \mathbb R^{k\times n}$ be a random matrix whose entries are distributed independently according to the standard normal distribution with mean $0$ and variance $1$. Then,
$$F^\star(k,\rho) := \E\left[\frac{\bZ u}{||\bZ u||}\cdot\frac{\bZ v}{||\bZ v||}\right] = \frac{2}{k}\left(\frac{\Gamma((k+1)/2)}{\Gamma(k/2)}\right)^2\rho  \cdot {}_2F_1(1/2,1/2; k/2+1;\rho^2)$$
where ${}_2F_1$ is the hypergeometric function and $\Gamma$ is the standard gamma function.
\end{theorem}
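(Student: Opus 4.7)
The plan is to compute $F^*(k,\rho)$ directly, reducing it to a one-dimensional integral that matches the Gaussian hypergeometric function via standard transformations.

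First, I would note that $\ba := \bZ u$ and $\bb := \bZ v$ are jointly Gaussian vectors in $\mathbb{R}^k$: each is marginally $N(0, I_k)$, and $\mathrm{Cov}((\bZ u)_i, (\bZ v)_j) = \delta_{ij}\langle u,v\rangle = \delta_{ij}\rho$, so we may write $\bb = \rho \ba + \sqrt{1-\rho^2}\,\bc$ with $\bc \sim N(0, I_k)$ independent of $\ba$. Conditioning on $\ba$ and using rotational symmetry of $\bc$ about the axis $\ba$, the vector $\mathbb{E}[\bb/\|\bb\| \mid \ba]$ is parallel to $\ba$, so $\mathbb{E}[\langle \ba/\|\ba\|, \bb/\|\bb\|\rangle \mid \ba]$ depends only on $\|\ba\|$. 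WLOG taking $\ba = r e_1$, the conditional law of $\bb$ is $N(\rho r\,e_1, (1-\rho^2)I_k)$, and after rescaling (which does not change $\bb/\|\bb\|$) this reduces to $\Psi_k(\mu) := \mathbb{E}[\bb'_1/\|\bb'\|]$ for $\bb' \sim N(\mu e_1, I_k)$, evaluated at $\mu = \rho r/\sqrt{1-\rho^2}$.

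Next, I would compute $\Psi_k(\mu)$ using the identity $1/\|\bb'\| = \pi^{-1/2}\int_0^\infty t^{-1/2}e^{-t\|\bb'\|^2}\,dt$, which turns $\Psi_k(\mu)$ into a Gaussian moment problem. The coordinates of $\bb'$ factor: the coordinates $\bb'_2,\ldots,\bb'_k$ contribute factors $(1+2t)^{-1/2}$ each, while completing the square in the $\bb'_1$-integral yields a factor $\mu(2t+1)^{-3/2} e^{-\mu^2 t/(2t+1)}$. Combining and then substituting $v = 2t/(2t+1)$ turns the outer integral into
\[
\Psi_k(\mu) \;=\; \frac{\mu}{\sqrt{2\pi}}\int_0^1 v^{-1/2}(1-v)^{(k-1)/2} e^{-\mu^2 v/2}\,dv,
\]
which is exactly Kummer's integral representation for the confluent hypergeometric function, giving $\Psi_k(\mu) = \frac{\mu\,\Gamma((k+1)/2)}{\sqrt{2}\,\Gamma(k/2+1)}\,{}_1F_1(1/2;\,k/2+1;\,-\mu^2/2)$.

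Finally, I would take the expectation over $R = \|\ba\|$, which is chi with $k$ degrees of freedom: after the substitution $x = R^2/2$ and absorbing the $\mu = \rho R/\sqrt{1-\rho^2}$ factor, the expression becomes a constant times $\int_0^\infty x^{(k-1)/2} e^{-x}\,{}_1F_1\!\left(1/2;\,k/2+1;\,-\tfrac{\rho^2 x}{1-\rho^2}\right)dx$. Applying the identity $\int_0^\infty x^{c-1}e^{-x}\,{}_1F_1(a;b;\lambda x)\,dx = \Gamma(c)\,{}_2F_1(a,c;b;\lambda)$ (verified by swapping the series and integral and using $\int x^{c+n-1}e^{-x}\,dx = \Gamma(c+n)$) produces ${}_2F_1\!\left(1/2,(k+1)/2;\,k/2+1;\,-\tfrac{\rho^2}{1-\rho^2}\right)$. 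I would then apply Pfaff's transformation ${}_2F_1(a,b;c;z) = (1-z)^{-a}\,{}_2F_1(a,c-b;c;z/(z-1))$ with $z = -\rho^2/(1-\rho^2)$, noting that $1-z = 1/(1-\rho^2)$ and $z/(z-1) = \rho^2$ and $c-b = 1/2$, yielding a factor $\sqrt{1-\rho^2}\cdot{}_2F_1(1/2,1/2;k/2+1;\rho^2)$. The $\sqrt{1-\rho^2}$ cancels the one in the denominator coming from $\mu = \rho R/\sqrt{1-\rho^2}$, and the remaining gamma-function constants simplify using $\Gamma(k/2+1) = (k/2)\Gamma(k/2)$ to give the coefficient $\frac{2}{k}(\Gamma((k+1)/2)/\Gamma(k/2))^2$.

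The main obstacle is bookkeeping: the result is essentially forced once one conditions on $\ba$ and uses the standard integral representations, but keeping the gamma-function prefactors straight through three successive changes of variable, one ${}_1F_1$ integral identity, and one Pfaff transformation requires care. The only nonmechanical step is recognizing that after integrating out the radius $R$, the resulting ${}_2F_1$ has argument $-\rho^2/(1-\rho^2)$ rather than the target $\rho^2$, which forces the use of Pfaff's transformation and in turn explains why the $\sqrt{1-\rho^2}$ factors conspire to cancel.
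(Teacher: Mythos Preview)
The paper does not prove this statement; it is quoted from \cite{BOV10} (as Lemma~2.1 there) and used as a black box both here and in the equivalent \Cref{thm:exact-formula-for-average-inner-product}. So there is no ``paper's own proof'' to compare against.

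That said, your derivation is correct and self-contained. The reduction to the pair of $\rho$-correlated $k$-dimensional Gaussians $(\ba,\bb)$, the conditioning on $\ba$ with rotational symmetry, the Schwinger-type identity $1/\|\bb'\| = \pi^{-1/2}\int_0^\infty t^{-1/2}e^{-t\|\bb'\|^2}\,dt$ leading to a ${}_1F_1$, the Laplace-transform identity turning the chi-radius average into a ${}_2F_1$, and the final Pfaff transformation all check out, and the gamma-function bookkeeping you describe collapses exactly to $\tfrac{2}{k}\bigl(\Gamma((k+1)/2)/\Gamma(k/2)\bigr)^2$. The one place to be slightly careful is the validity of $\int_0^\infty x^{c-1}e^{-x}{}_1F_1(a;b;\lambda x)\,dx = \Gamma(c)\,{}_2F_1(a,c;b;\lambda)$ at $\lambda = -\rho^2/(1-\rho^2)$, which can be large in absolute value; termwise integration is justified for $|\lambda|$ small and the result then extends by analytic continuation in $\rho$ (both sides being analytic on $(-1,1)$), which you may want to mention explicitly.
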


Next, the following theorem gives an upperbound of for the Heisenberg Hamiltonian SDP.

\begin{lemma}[Heisenberg Hamiltonian SDP Lower Bound]\label{lem:heis-sdp-lower}
The assignment $f_\mathrm{ident}(x) = x$ for $x \in S^{n-1}$ is a feasible solution to the Heisenberg Hamiltonian SDP on $G^n_\rho$ and obtains value $\frac{1}{4}(1 -3\rho)$. Then,
$$\frac{1}{4}(1-3\rho) \leq \hsdp(\mathcal G_\rho^n)$$
\end{lemma}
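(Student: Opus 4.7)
The plan is to check feasibility of $f_{\mathrm{ident}}$, then compute its SDP value by reducing to a standard concentration fact about normalized Gaussians.

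First, since every vertex of $\ggraph^n_\rho$ lies in $S^{n-1}$, the map $f_{\mathrm{ident}}(x)=x$ automatically takes values in $S^{n-1}$ and is therefore a feasible solution to the semidefinite program $\hsdp$ as formulated in \Cref{def:heis-sdp}. Next, a random edge of $\ggraph^n_\rho$ has the form $(\bx/\|\bx\|,\by/\|\by\|)$ with $(\bx,\by)$ $\rho$-correlated $n$-dimensional Gaussians, so substituting $f_{\mathrm{ident}}$ into the SDP objective gives
\[
\tfrac{1}{4} - \tfrac{3}{4}\,\E_{\bx\sim_\rho \by}\!\left\langle \tfrac{\bx}{\|\bx\|},\tfrac{\by}{\|\by\|}\right\rangle.
\]
The entire content of the lemma is therefore reduced to a bound on this single expectation.

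The key analytic input is the estimate
\[
\E_{\bx\sim_\rho \by}\!\left\langle \tfrac{\bx}{\|\bx\|},\tfrac{\by}{\|\by\|}\right\rangle \;\le\; \rho + O\!\left(\sqrt{\tfrac{\log n}{n}}\right),
\]
which is precisely the calculation appearing in the proof of~\cite[Theorem 4.3]{OW08} for the Gaussian \maxcut integrality gap. The sketch is to write $\by=\rho\bx + \sqrt{1-\rho^2}\,\bz$ for a Gaussian $\bz$ independent of $\bx$, observe that the cross term $\langle \bx,\bz\rangle$ has mean zero by rotational symmetry, and apply chi-square concentration of $\|\bx\|$ and $\|\by\|$ near $\sqrt n$ to reduce the normalizations to their typical value. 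Plugging back yields $\hsdp(\ggraph^n_\rho) \ge \tfrac14 - \tfrac34\rho - O(\sqrt{\log n/n})$.

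The only ``obstacle'' is cosmetic: the normalized inner product is not literally $\rho$ on the nose, so the clean form $\tfrac14(1-3\rho)$ in the lemma statement should be interpreted asymptotically in $n$. The vanishing $O(\sqrt{\log n/n})$ error is harmless because it will be absorbed by the $\eps$ slack of the discretization step (\Cref{lem:discretization}) when assembling the final integrality gap instance. Aside from this, both the feasibility check and the objective computation are immediate, and the concentration input is off-the-shelf, so I expect no real difficulty.
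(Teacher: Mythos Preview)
Your proposal is correct and takes essentially the same approach as the paper: plug in the identity embedding as a feasible SDP solution and bound $\E\langle \bx/\|\bx\|, \by/\|\by\|\rangle$ by $\rho + O(\sqrt{\log n/n})$ via the concentration estimate from \cite[Theorem~4.3]{OW08}. Your remark that the exact value $\tfrac14(1-3\rho)$ should be read asymptotically, with the $O(\sqrt{\log n/n})$ error absorbed into the later discretization slack, is also exactly how the paper handles it.
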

An essentially identical proof also yields the following corollary, which gives a upper bound for the product state SDP.
\begin{corollary}[Product State SDP Lower Bound]\label{lem:prod-sdp-lower}
The assignment $f_\mathrm{ident}(x) = x$ for $x \in S^{n-1}$ is a feasible solution to the product state SDP on $G_{\rho,n-1}$ and obtains value $\frac{1}{4}(1 -\rho)$. Then,
$$\frac{1}{4}(1-\rho) \leq \prodsdp(\mathcal G^n_\rho)$$
\end{corollary}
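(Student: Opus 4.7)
The plan is to mirror the proof of Lemma~\ref{lem:heis-sdp-lower} (the \qmaxcut SDP lower bound) essentially verbatim, changing only the coefficient on the inner-product term from $\tfrac{3}{4}$ to $\tfrac{1}{4}$.

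First, feasibility is immediate: the vertices of $\mathcal G^n_\rho$ already live on $S^{n-1}$, so $f_\mathrm{ident}(x)=x$ maps into $S^{n-1}$, which is the legal codomain for the product state SDP (Definition of $\prodsdp$). Next, recalling that a random edge $(\bu,\bv)$ of $\mathcal G^n_\rho$ is obtained by normalizing a pair of $\rho$-correlated Gaussians $\bx,\by$, the objective value attained by $f_\mathrm{ident}$ equals
\[
\E_{\bx\sim_\rho\by}\!\left[\tfrac{1}{4} - \tfrac{1}{4}\Bigl\langle \tfrac{\bx}{\|\bx\|},\, \tfrac{\by}{\|\by\|}\Bigr\rangle\right]
= \tfrac{1}{4} - \tfrac{1}{4}\,\E_{\bx\sim_\rho\by}\!\Bigl\langle \tfrac{\bx}{\|\bx\|},\, \tfrac{\by}{\|\by\|}\Bigr\rangle.
\]
The entire content of the argument is therefore to control the expected inner product of two normalized $\rho$-correlated Gaussians. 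This is exactly the same quantity that appears in the proof of Lemma~\ref{lem:heis-sdp-lower}, where the inequality
\[
\E_{\bx\sim_\rho\by}\Bigl\langle \tfrac{\bx}{\|\bx\|},\, \tfrac{\by}{\|\by\|}\Bigr\rangle \;\le\; \rho + O\!\bigl(\sqrt{\log n/n}\bigr)
\]
from the proof of Theorem~4.3 of~\cite{OW08} was invoked. Substituting this bound gives objective value at least $\tfrac{1}{4}(1-\rho) - O(\sqrt{\log n/n})$, and since $\prodsdp$ is a maximum over feasible assignments, this is a lower bound on $\prodsdp(\mathcal G^n_\rho)$.

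There is no real obstacle: for any fixed $n$ the bound has a vanishing additive error term, and the corollary statement should be read in this asymptotic sense (indeed, this is the form in which the lemma is actually used, since the subsequent discretization step of Lemma~\ref{lem:discretization} absorbs an additive $\eps$ anyway). Alternatively, one can take $n\to\infty$ before the discretization and recover the clean expression $\tfrac{1}{4}(1-\rho)$. The only way the argument could deviate from the Heisenberg case is in the coefficient, and the factor of $\tfrac{1}{4}$ is dictated directly by the definition of $\prodsdp$, so nothing else changes.
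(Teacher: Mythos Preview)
Your proposal is correct and takes essentially the same approach as the paper: the paper states that ``an essentially identical proof'' to that of Lemma~\ref{lem:heis-sdp-lower} yields this corollary, meaning exactly the coefficient change from $\tfrac{3}{4}$ to $\tfrac{1}{4}$ that you carry out. Your added discussion of the $O(\sqrt{\log n/n})$ error term (and why it is harmless for the downstream application) is in fact more careful than the corollary's phrasing, which suppresses that term.
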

Then, the proof of \Cref{thm:integrality-gap-heisenberg} follows simply from \Cref{lem:prod-upper} and \Cref{lem:heis-sdp-lower}.
\begin{proof}[Proof of \Cref{thm:integrality-gap-heisenberg}]
Combining \Cref{lem:prod-upper} and \Cref{lem:heis-sdp-lower} yields
\begin{equation}
\label{eq:prod-ratio}
\frac{\prodval(\mathcal G_\rho^n)}{\hsdp(\mathcal G_\rho^n)} \leq \frac{\val_{\mathcal G_\rho^n}(f_\mathrm{opt})}{\frac{1}{4}(1-3\rho)} = \frac{ \frac{1}{4}\PAREN{1 - F^\star(3,\rho)}}{\frac{1}{4}(1-3\rho)}\\
\end{equation}
This quantity is minimzied at $\rho = \rgp$ and is equal to $\agp$.
\end{proof}
Finally, using $\frac{1}{4}(1-\rho)$ in place of $\frac{1}{4}(1-3\rho)$ yields a value of $\abov$, obtained at $\rho = \rbov$. To conclude, we give the proofs from \Cref{lem:prod-upper} and \Cref{lem:heis-sdp-lower}.
\begin{proof}[Proof of \Cref{lem:prod-upper}]
This case is a fairly straightforward consequence of \Cref{thm:k-dim-borell}, with $k = 3$. However, note that the theorem considers functions $f : \R^n \rightarrow B^3$, whereas we have $f : \mathbb S^{n-1} \rightarrow S^2$. Observe that since $S^{n-1} \subseteq \R^n$ and $S^2 \subseteq B^3$, functions of the latter class are always a member of the former. Therefore,
\begin{align}
\label{eq:stab-upper}
\prodval(G_{\rho,n}) = \max_{f : S^{n-1} \rightarrow S^2} \val_{G_{\rho,n}}(f) &\leq \max_{f : \R^n \rightarrow B^3} \frac{1}{4}\PAREN{1 - \E_{\bx\sim_\rho \by} \Inner{f(\bx), f(\by)}}\nonumber\\
&= \max_{f : \R^n \rightarrow B^3} \frac{1}{4}\PAREN{1 - \Stab_\rho[f]}\nonumber\\
&= \frac{1}{4}\PAREN{1 - \Stab_\rho[f_\mathrm{opt}]}
\end{align}
Where (\ref{eq:stab-upper}) follows from the $\rho < 0$ case of \Cref{thm:k-dim-borell}. In order to show that this is in fact equal to $\val_{G_{\rho,n}}(f_\mathrm{opt})$, we need to restrict the domain of $f_\mathrm{opt}$ to $S^{n-1}$. Note that
$$f_\mathrm{opt}\PAREN{\Normed{x}} = \Normed{\Pi^{(k)}\Normed{x}} = \frac{\frac{1}{\norm x}\Pi^{(k)}x}{\frac{1}{\norm x} \NORM{\Pi^{(k)}x}} = \Normed{\Pi^{(k)}x} = f_\mathrm{opt}(x)$$
Therefore, $\frac{1}{4}(1-\Stab_{\rho}[f_\mathrm{opt}]) = \val_{G_{\rho,n}}(f_\mathrm{opt})$ and we get the desired upperbound.
\end{proof}

\begin{proof}[Proof of \Cref{lem:heis-sdp-lower}]
Given the assignment $f_\mathrm{ident}$, the SDP obtains value
$$\frac{1}{4}\E_{\bx \sim_\rho \by}[1 - 3\inner{\bx,\by}] = \frac{1}{4}\PAREN{1-\E_{\bx \sim_\rho \by} \inner{\bx,\by}} = \frac{1}{4}(1 - 3\rho)$$
Since $f_\mathrm{ident}$ is a feasible solution, this is a lower bound on $\hsdp(G_{\rho,n})$.
\end{proof}

\subsection{Algorithmic Gaps}

Next we prove \Cref{thm:algo-gap},

\algogap*

This proof closely resembles that of \cite{kar99}, who gave an algorithmic gap instance for the \maxcut SDP. Recall, we are trying to bound the quantity,
$$\inf_{\text{instances }\mathcal I\text{ of }\mathcal P}\left\{\frac{\mathrm{A}(\mathcal I)}{\mathrm{OPT}(\mathcal I)}\right\}$$
Furthermore, the algorithm of \cite{BOV10} shows that for any instance $\mathcal I$,
$$\abov \leq \frac{\mathrm{A}(\mathcal I)}{\hsdp(\mathcal I)} \leq \frac{\mathrm{A}(\mathcal I)}{\mathrm{OPT}(\mathcal I)}$$
Thus, to show an algorithmic gap of $\abov$, it suffices to demonstrate an instance $\mathcal I$ such that $\frac{\mathrm{A}(\mathcal I)}{\mathrm{OPT}(\mathcal I)} \leq \abov$. As before our instance will be a graph, with a Hamiltonian term on each edge. Much like the rounding algorithm of \cite{GW95}, the analysis of \cite{BOV10} is tight for edges $(u,v)$ where $\langle u,v \rangle = \rbov$. This informs our graph construction.
\begin{definition}[Discrete Embedded Graph]\label{def:discrete-graph}
The discrete embedded graph, denoted $\mathcal B^n_\rho$ has vertex set $\{-\tfrac{1}{\sqrt n},\tfrac{1}{\sqrt n}\}^n$. The edge distribution corresponds to the following random experiment,
\begin{enumerate}
\item Pick $v \in \{-\tfrac{1}{\sqrt n},\tfrac{1}{\sqrt n}\}^n$
\item Set $u = v$
\item For each $i \in [n]$, set $u_i = -u_i$ with probability $\tfrac{1}{2}-\tfrac{1}{2}\rho$
\end{enumerate}
\end{definition}
As desired, $\E_{(\bu,\bv) \sim \mathcal B^n_\rho} [\inner{\bu, \bv}] = \rho$. On the graph $\mathcal B^n_\rho$, value of the SDP relaxation is equal to the identity assignment $f_\mathrm{ident}(v) = v$.
\begin{lemma}[Value of the SDP Relaxation]
Let $G(V,E) = \mathcal B^n_\rho$. Then,
$$\hsdp(G) = \max_{f:V\rightarrow S^{n-1}} \E_{(\bu, \bv) \sim E}[ \tfrac{1}{4} - \tfrac{3}{4} \inner{f(\bu), f(\bv)}] \leq  \frac{1}{4}(1 - 3\rho) = \E_{(\bu, \bv) \sim E}[ \tfrac{1}{4} - \tfrac{3}{4} \langle f_\mathrm{ident}(\bu), f_\mathrm{ident}(\bv)\rangle]$$
\end{lemma}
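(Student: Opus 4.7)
The plan is to split the proof into two independent calculations: first, verify that the identity assignment $f_{\mathrm{ident}}(v) = v$ achieves value $\tfrac{1}{4}(1-3\rho)$, and then show that no feasible SDP solution can do better.

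For the first step I would unpack the sampling procedure from \Cref{def:discrete-graph}. A random edge $(\bu, \bv)$ is obtained by starting from a uniformly random $\bv \in \{-1/\sqrt n, 1/\sqrt n\}^n$ and independently flipping each coordinate with probability $\tfrac{1}{2} - \tfrac{1}{2}\rho$ to produce $\bu$, so $\E[\bu_i \bv_i] = \rho/n$ for each $i$ and therefore $\E\langle \bu, \bv\rangle = \rho$. Substituting into the objective gives the claimed value $\tfrac{1}{4}(1 - 3\rho)$, proving that this is a lower bound on $\hsdp(G)$.

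For the matching upper bound, the goal is to show that for any $f: V \to S^{N-1}$ with $N = 2^n$, $\E_{(\bu,\bv) \sim E}\langle f(\bu), f(\bv)\rangle \geq \rho$. After the trivial rescaling $\bx = \sqrt n\, \bu$, $\by = \sqrt n\, \bv$ that identifies $V$ with $\{-1, 1\}^n$, a random edge becomes a pair of $\rho$-correlated Boolean strings, so I can mimic the argument from the proof of \Cref{lem:opt-sdp-value}. Writing $f = (f_1, \dots, f_N)$ with each coordinate $f_i: \{-1, 1\}^n \to \R$ and expanding in the Fourier basis yields
\[
    \E[f_i(\bx)\,f_i(\by)] \;=\; \sum_{S \subseteq [n]} \rho^{|S|}\, \widehat{f_i}(S)^2.
\]
Because $\rho \in [-1, 0]$ implies $\rho^{|S|} \geq \rho$ for every $S \subseteq [n]$ (with equality when $|S| = 1$), the right-hand side is at least $\rho\, \E[f_i(\bx)^2]$. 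Summing over $i$ and using that $\|f(\bx)\|^2 = 1$ pointwise then gives $\E\langle f(\bx), f(\by)\rangle \geq \rho$, as required.

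There is no real obstacle here; the argument essentially reuses the Fourier-analytic noise-stability bound already invoked for the noisy hypercube. The one point to be careful about is the sign convention, since the inequality $\rho^{|S|} \geq \rho$ relies on $\rho \leq 0$. This restriction is harmless for the intended application because the values $\rgp$ and $\rbov$ that drive the algorithmic gap are both negative.
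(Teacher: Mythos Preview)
Your proposal is correct and follows essentially the same approach as the paper: rescale to $\{-1,1\}^n$, decompose $f$ into coordinate functions, and bound each coordinate's noise stability by $\rho\,\E[f_i^2]$ via the Fourier formula. The paper packages the inequality $\sum_S \rho^{|S|}\widehat{f_i}(S)^2 \ge \rho\,\E[f_i^2]$ as a separate proposition (\Cref{prop:stab-bound}) rather than writing it out inline, but the content is identical.
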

\begin{proof}
The last equality follows by construction of $G$, so we just need to show the inequality. To do so, we will show,
$$\min_{f\colon V \rightarrow S^{n-1}} \E_{(u,v) \sim E}[\langle f(\bu),f(\bv)\rangle] \geq \rho$$
Fix any assignment $f$. Although the domain for $f$ is $\{-\tfrac{1}{\sqrt n}, \tfrac{1}{\sqrt n}\}^n$, by scaling, it suffices to consider functions $f : \{-1,1\}^n \rightarrow \mathbb S^{n-1}$. Write $f = (f_1,\dots,f_n)$ where $f_1 : \{-1,1\}^n \rightarrow \mathbb R$ with the added constraint that for all $x$, $\sum_i f_i(x)^2 = 1$. For any edge $(\bu,\bv)$, $\bu$ and $\bv$ are $\rho$ correlated and by definition, $\E_{(\bu,\bv) \sim E)}[\langle f(\bu),f(\bv)\rangle] = \Stab_\rho[f]$. Then, 
\begin{align*}
\E_{(\bu,\bv) \sim E)}[\langle f(\bu),f(\bv)\rangle] &= \sum_{i=1}^n \E_{(\bu,\bv) \sim E)}[ f_i(\bu),f_i(\bv)]\\
&= \sum_{i=1}^n \Stab_\rho[f_i]\\
&\geq \sum_{i=1}^n \rho \cdot \E[f_j(x)^2] && \text{By \Cref{prop:stab-bound}}\\
&= \rho \E[\sum_{i=1}^n f_j(x)^2] = \rho
\end{align*}
\end{proof}
But we also have that $f_\mathrm{ident}$ is a feasible SDP relaxation so $\val_{\mathcal B^n_\rho}(f_\mathrm{ident}) = \hsdp(\mathcal B^n_\rho)$ and $f_\mathrm{ident}$ is a possible output for the SDP assignment. Letting $\rho = \rbov$ be the worst angle for the rounding scheme of \cite{BOV10}, we get that
\begin{equation}\label{eq:alg-upper-bound}
\mathrm A(\mathcal B^n_\rho) = \abov \cdot \val_{\mathcal B^n_\rho}(f_\mathrm{ident}) = \abov \cdot \hsdp(\mathcal B^n_\rho)
\end{equation}
Next, consider the assignment $f^{(1)}(x) = (x_1/|x_1|,0,0)$. This assignment achieves value,
\begin{align*}
\val_{\mathcal B^n_\rho}(f^{(1)}) = \E_{(\bu,\bv) \sim E}[\tfrac{1}{4}-\tfrac{3}{4}\langle f^{(1)}(\bu),f^{(1)}(\bv)\rangle] = \E_{(\bu,\bv) \sim E}[\tfrac{1}{4}-\tfrac{3}{4}\tfrac{\bu_1}{|\bu_1|}\tfrac{\bv_1}{|\bv_1|}] = \tfrac{1}{4}(1-3\rho) = \hsdp(\mathcal B^n_\rho)
\end{align*}
Since $\val_{\mathcal B^n_\rho}(f^{(1)})$ is an lower bound on $\mathrm{OPT}(\mathcal B^n_\rho)$, we get that 
\begin{equation}\label{eq:opt-lower-bound}
\hsdp(\mathcal B^n_\rho) \leq \mathrm{OPT}(\mathcal B^n_\rho).
\end{equation}
Combining equations \ref{eq:alg-upper-bound} and \ref{eq:opt-lower-bound} yields
$$\mathrm A(\mathcal B^n_\rho) \leq \abov \mathrm{OPT}(\mathcal B^n_\rho) \implies \frac{\mathrm A(\mathcal B^n_\rho)}{\mathrm{OPT}(\mathcal B^n_\rho)} \leq \abov$$
}
\section{A dictator test for the product state value}\label{sec:dictator-test}

Now we show that the noisy hypercube serves as a dictatorship test for functions of the form $f:\{-1, 1\}^n \rightarrow B^k$, assuming \cref{conj:vector-borell-intro}.
Informally, this means that if $f$ is an embedded dictator, it should have high value,
and if it is ``far'' from a dictator (in the sense that it has no ``notable'' input coordinates)
then it should have low value.
This will be an important ingredient in our Unique-Games hardness proof in~\Cref{sec:ug-hardness} below.

We have already shown that embedded dictators achieve value $\tfrac{1}{4} - \tfrac{1}{4} \rho$ on the noisy hypercube in~\Cref{lem:embedded-dictators}.
Now we will upper-bound the value that functions ``far'' from dictators achieve.
We will show that their value, up to small error, is at most the optimum product state value on the Gaussian graph
$\ggraph^n_\rho$, which we have shown to be $\tfrac{1}{4} - \tfrac{1}{4} F^*(k, \rho)$. Throughout this section, we will make heavy use of the various Fourier analytic quantites defined in \Cref{sec:fourier-analysis}.

\begin{theorem}[Dictatorship test soundness]\label{thm:dictator-test}
Assume \cref{conj:vector-borell-intro}.
Let $-1 < \rho \leq 0$.
Then for any $\epsilon > 0$,
there exists a small enough $\delta = \delta(\epsilon, \rho) > 0$
and large enough $m = m(\epsilon, \rho) \geq 0$ such that the following is true.
Let $f: \{-1, 1\}^n \rightarrow B^k$ be any function satisfying 
$$\Inf^{\leq m}_i[f] = \sum_{j=1}^k \Inf^{\leq m}_i[f_j] \leq \delta,\quad \text{for all $i = 1, \dots, n$}.$$
Then
$$\E_{\substack{\text{$(\bx, \by)$ $\rho$-correlated}\\\text{$n$-dim Boolean strings}}} [f(\bx)f(\by)] = \Stab_\rho[f] \geq F^*(k,\rho) - \epsilon.$$
In other words, the value of~$f$ on the noisy hypercube $\hgraph^n_\rho$ is at most
$
\tfrac{1}{4} - \tfrac{1}{4} F^*(k, \rho) + \eps.
$
\end{theorem}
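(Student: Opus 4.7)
The plan is to adapt the ``invariance principle'' approach of Mossel, O'Donnell, and Oleszkiewicz~\cite{MOO10}, extended to vector-valued functions via the invariance principle of Isaksson-Mossel~\cite{IM12}, and then to invoke the vector-valued Borell's inequality (\cref{conj:vector-borell-intro}) in place of the scalar one. The three main moves are: (i) smooth $f$ by applying a small amount of noise so that its Fourier mass concentrates on low degrees; (ii) transfer to Gaussian space via the invariance principle; and (iii) project the resulting Gaussian function back into $B^k$ and apply \cref{conj:vector-borell-intro}.

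In more detail, first I would introduce the smoothed function $g = \T_{1-\gamma} f$ for a suitably small $\gamma = \gamma(\eps, \rho) > 0$. A short Fourier calculation shows $|\Stab_\rho[f] - \Stab_\rho[g]| \leq O(\gamma) \cdot \E[\|f\|^2] \leq O(\gamma)$, so choosing $\gamma$ small enough makes this error at most $\eps/4$. The benefit of working with $g$ instead of $f$ is that each coordinate $g_j$ now satisfies $\Var[g_j^{>d}] \leq (1-\gamma)^{2d}$, which is the hypothesis needed to apply the vector-valued invariance principle (\Cref{thm:vector-invariance}). The low-degree influences of $g$ are still bounded by those of $f$, hence by $\delta$.

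Next, I would apply the vector-valued invariance principle to compare $\E_{\bx \in \{-1,1\}^n}[\Psi(g(\bx))]$ with $\E_{\by \sim \gamma}[\Psi(g(\by))]$ (where we view $g$ as a multilinear polynomial that extends to $\R^n$). To control the noise stability, I would take $\Psi$ to be a Lipschitz approximation to the quadratic map involved in the stability expression; more precisely, I would write
\[
    \Stab_\rho[g] = \E[\langle g, \U_\rho g\rangle]
\]
and apply invariance coordinatewise to the polynomials $g_j \cdot \U_\rho g_j$. Since the noise stability is a polynomial functional of the Fourier coefficients, it is in fact preserved exactly in passing from the Boolean to the Gaussian domain: $\Stab_\rho[g(\bx)] = \Stab_\rho[g(\by)]$ when $\by$ is replaced by a standard Gaussian. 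What is not preserved is the constraint that $g$ take values in $B^k$. This is where the auxiliary Lipschitz functional enters: I would use invariance with the $1$-Lipschitz ``distance-to-$B^k$'' function $\dist_{B^k}(y) = \|y - y/\max(1,\|y\|)\|$ to show that
\[
    \E_{\by \sim \gamma}[\dist_{B^k}(g(\by))] \leq C_k \cdot \delta^{\Omega(1)},
\]
since the same expectation over $\bx \in \{-1,1\}^n$ is zero (because $g(\bx) \in B^k$).

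Finally, I would define $g' : \R^n \to B^k$ by $g'(y) = g(y) / \max(1, \|g(y)\|)$. The bound above on $\E[\dist_{B^k}(g)]$ translates to $\|g - g'\|_{L^2(\gamma)}^2 \leq C_k \cdot \delta^{\Omega(1)}$, and then using Cauchy-Schwarz together with the fact that $\U_\rho$ is a contraction gives $|\Stab_\rho[g] - \Stab_\rho[g']| \leq 2\|g-g'\|_{L^2(\gamma)} \leq \eps/4$, provided $\delta$ is chosen sufficiently small as a function of $\eps$, $\gamma$, and $k$. At this point $g' : \R^n \to B^k$, and by \cref{conj:vector-borell-intro} applied to $g'$ (recalling that for $\rho \leq 0$ the conjecture gives $\Stab_\rho[g'] \geq \Stab_\rho[f_{\mathrm{opt}}] = F^*(k, \rho)$ by \Cref{prop:opt-formula}), we conclude $\Stab_\rho[g'] \geq F^*(k,\rho)$. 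Chaining the three inequalities yields $\Stab_\rho[f] \geq F^*(k,\rho) - \eps$, as desired.

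The main obstacle is bookkeeping: one must correctly order the choices of $\gamma$, then $m$ (the degree cutoff, which appears via $d$ in the invariance principle), and then $\delta$ so that all three error terms are at most $\eps/4$. The subtle step is handling the fact that invariance only directly applies to low-degree truncations of $g$, which is why smoothing with $\T_{1-\gamma}$ is needed first; and the conceptually novel step compared to the $k=1$ setting of~\cite{MOO10} is the projection argument for $g \mapsto g'$, which requires the vector-valued invariance principle of~\cite{IM12} applied to the Lipschitz functional $\dist_{B^k}$. Assuming \cref{conj:vector-borell-intro} is black-boxed, no other genuinely new Gaussian inequality is required.
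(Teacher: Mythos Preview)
Your overall plan is right and tracks the paper closely, but there is a genuine gap in the rounding step. Via invariance applied to the $1$-Lipschitz function $\mathrm{dist}_{B^k}$ you correctly get that $\E_{\by}[\mathrm{dist}_{B^k}(g(\by))] = \E_\by[\|g(\by)-g'(\by)\|]$ is small. But your Cauchy--Schwarz bound $|\Stab_\rho[g] - \Stab_\rho[g']| \leq 2\|g-g'\|_{L^2(\gamma)}$ requires the $L^2$ quantity $(\E[\|g-g'\|^2])^{1/2}$, and the assertion that the $L^1$ bound ``translates to'' an $L^2$ bound is where the argument breaks. The squared-distance function $v \mapsto \mathrm{dist}_{B^k}(v)^2 = ((\|v\|-1)_+)^2$ is not Lipschitz, so invariance cannot be applied to it; and the smoothed polynomial $g = \T_{1-\gamma} f$, evaluated on Gaussians, does not have uniformly bounded $L^4$ norm when $\gamma$ is small (hypercontractivity would need $(1-\gamma)^2 \leq 1/3$), so you cannot upgrade $L^1$ to $L^2$ via H\"older either. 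This is not a bookkeeping issue.

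The paper closes this gap with an additional reduction you are missing: first replace $f$ by its odd part $\tfrac12(f(x)-f(-x))$, which can only decrease $\Stab_\rho$ when $\rho \le 0$ and does not increase low-degree influences (\Cref{lem:decreasing-stab}). Once $g$ is odd one has the identity $\Stab_\rho[g] = -\E[\|\T_{\sqrt{-\rho}}\,g\|_2^2]$, so noise stability becomes (the negative of) an expected squared norm. On Boolean inputs $\T_{\sqrt{-\rho}}\,g$ lands in $B^k$, hence $\|\T_{\sqrt{-\rho}}\,g\|_2^2 = \Psi(\T_{\sqrt{-\rho}}\,g)$ for the $2$-Lipschitz function $\Psi(v) = \min(\|v\|_2^2,1)$; now the invariance principle transfers \emph{the noise stability itself} to Gaussian space with small error, and the subsequent rounding comparison is carried out entirely with $L^1$-type estimates using the Lipschitz functions $\Psi$ and $\Phi_i(v) = v_i - \mathcal{R}(v)_i$. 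The final appeal to Borell is then to the positive-$\rho$, odd-function corollary of \cref{conj:vector-borell-intro} (\Cref{cor:vecor-borell-positive-rho}), rather than to the conjecture in its negative-$\rho$ form.
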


The $k = 1$ case is the negative~$\rho$ case of the Majority is Stablest theorem of~\cite{MOO10},
which serves as the soundness case for the $\maxcut$ dictatorship test;
our theorem generalizes the negative~$\rho$ case of Majority is Stablest to larger values of~$k$.
The proof follows the same outline as the proof of Majority is Stablest appearing in~\cite[Chapter 11.7]{OD14}:
we apply an ``invariance principle'' to exchange $f$'s Boolean inputs
with Gaussians of the same mean and variance.
Then we use our \cref{conj:vector-borell-intro} on the noise stability of functions in Gaussian space
to upper-bound the value of~$f$. 
The invariance principle we will use is the following one due to Isaksson and Mossel~\cite{IM12},
which applies to vector-valued functions.

\begin{theorem}[Vector-valued invariance principle \cite{IM12}]\label{thm:vector-invariance}
Fix $\delta, \gamma \in (0,1)$ and set $m = \tfrac{1}{18}\log\tfrac{1}{\delta}$. Let $f = (f_1,\dots,f_k)$ be a $k$-dimensional multilinear polynomial such that $\Var[f_j] \leq 1$, $\Var[f_j^{> m}] \leq (1-\gamma)^{2m}$, and $\Inf^{\leq m}_i[f_j] \leq \delta$ for each $j \in [k]$ and $i \in [n]$. Let $\bx$ be a uniformly random string over $\{-1,1\}^n$
and $\by$ be an $n$-dimensional standard Gaussian random variable.
Furthermore, let $\Psi : \mathbb R^k \rightarrow \mathbb R$ be Lipschitz continuous with Lipschitz constant $A$. Then,
$$|\E[\Psi(f(\bx))] - \E[\Psi(f(\by))]| \leq C_kA \delta^{\gamma/(18 \ln 2)}$$
where $C_k$ is a parameter depending only on $k$.
\end{theorem}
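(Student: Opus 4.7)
The plan is to follow the Lindeberg replacement strategy of Mossel--O'Donnell--Oleszkiewicz, adapted coordinate-wise to vector-valued outputs and to a Lipschitz test function. The proof reduces to three ingredients: (i) truncation to low degree, (ii) mollification of $\Psi$, and (iii) a hybrid swap from the Boolean to the Gaussian ensemble, one coordinate at a time, with the error controlled by third-order Taylor remainders and the low-degree influence bound.

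First I would replace $f$ by its degree-$m$ truncation $f^{\le m}$. Since $\Psi$ is $A$-Lipschitz and $\Var[f_j^{>m}] \le (1-\gamma)^{2m}$, Cauchy--Schwarz gives $|\E[\Psi(f)] - \E[\Psi(f^{\le m})]| \le A\sqrt{k}(1-\gamma)^{m}$ in both the Boolean and Gaussian ensembles; with $m = \tfrac{1}{18}\log(1/\delta)$ this contributes a term of the target order. Next I would mollify the test function: set $\tilde\Psi_\eta = \Psi * \varphi_\eta$, where $\varphi_\eta$ is a smooth bump of scale $\eta$. Standard estimates give $\|\Psi - \tilde\Psi_\eta\|_\infty \le CA\eta$ and $\|\nabla^r \tilde\Psi_\eta\|_\infty \le C_{k,r} A \eta^{-(r-1)}$; from now on all calculations are done with $\tilde\Psi_\eta$ in place of $\Psi$, and we pay $O(A\eta)$ at the end.

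The core step is the telescoping swap. Define hybrids $\bz^{(0)} = \bx$ (Boolean) and $\bz^{(n)} = \by$ (Gaussian), with $\bz^{(i)}$ agreeing with $\by$ on coordinates $1,\dots,i$ and with $\bx$ on the rest, so that $\bz^{(i)}$ and $\bz^{(i-1)}$ differ only in coordinate $i$. Since $f$ is multilinear of degree $\le m$, freezing all coordinates except $i$ writes $f(\bz^{(i-1)}) = A_i + B_i \bx_i$ and $f(\bz^{(i)}) = A_i + B_i \by_i$ with $A_i,B_i \in \R^k$ depending on the frozen coordinates. Taylor-expanding $\tilde\Psi_\eta$ to second order around $A_i$, the zeroth, first, and second moments of $\bx_i$ and $\by_i$ match (both are mean $0$, variance $1$), so all three terms cancel in the difference of expectations. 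The remainder is controlled by
\[
\bigl|\E[\tilde\Psi_\eta(f(\bz^{(i-1)}))] - \E[\tilde\Psi_\eta(f(\bz^{(i)}))]\bigr|
\;\lesssim\; \|\nabla^3 \tilde\Psi_\eta\|_\infty \cdot \E[\|B_i\|^3].
\]
For each $j$, $B_{i,j}$ is the discrete/Hermite derivative of $f_j$ in the $i$-th coordinate, so $\E[\|B_i\|^2] = \Inf_i[f]$. Because $f$ has degree $\le m$, Bonami--Beckner hypercontractivity upgrades the second moment to a third: $\E[\|B_i\|^3] \le 3^{3m/2}\,\Inf_i[f]^{3/2}$. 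Summing over $i$ and using $\Inf_i[f] \le \delta$ together with $\sum_i \Inf_i[f] \le m\sum_j \Var[f_j] \le mk$ yields
\[
\sum_{i=1}^{n} \E[\|B_i\|^3] \;\le\; 3^{3m/2}\,\delta^{1/2}\,mk.
\]

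Combining the three error sources, the total discrepancy is $\lesssim A\sqrt{k}(1-\gamma)^m + A\eta + A\eta^{-2}\cdot 3^{3m/2}\delta^{1/2}mk$; optimizing $\eta$ and plugging in $m = \tfrac{1}{18}\log(1/\delta)$ makes the dominant term $\delta^{\gamma/(18\ln 2)}$ times a constant depending only on $k$. The main obstacle is the bookkeeping in this final optimization: the hypercontractive constant $3^{3m/2}$ grows as $\delta^{-O(1)}$, and one must verify that the specific choice $m = \tfrac{1}{18}\log(1/\delta)$ (and the matching exponent $\gamma/(18\ln 2)$) balances mollification loss, truncation loss, and the swap error simultaneously. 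The vector-valued aspect is largely cosmetic---it only enters through treating $B_i \in \R^k$ as a vector in the Taylor remainder and through the dimensional constants $C_{k,r}$ from mollification---so everything passes through the MOO framework with a $k$-dependent constant absorbed into $C_k$.
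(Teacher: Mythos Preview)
The paper does not prove this theorem: it is quoted verbatim from Isaksson--Mossel~\cite{IM12} and used as a black box in the proof of \Cref{thm:dictator-test}. There is therefore nothing in the paper to compare against.

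That said, your outline is the standard Lindeberg replacement argument of Mossel--O'Donnell--Oleszkiewicz, which is exactly the method used in~\cite{IM12} as well; the vector-valued extension is, as you note, essentially cosmetic (the Taylor remainder becomes a multi-index sum over the $k$ output coordinates, and the mollification constants pick up $k$-dependence). One small point worth tightening: the hypercontractive constant for degree-$m$ multilinear polynomials going from $L^2$ to $L^3$ is $2^{m/2}$ (not $3^{m/2}$) in both the Rademacher and Gaussian/mixed ensembles, so your third-moment bound should read $\E[\|B_i\|^3] \le 2^{3m/2}\Inf_i[f]^{3/2}$; with $m = \tfrac{1}{18}\log_2(1/\delta)$ this gives $2^{3m/2} = \delta^{-1/12}$, which is what makes the final exponent balancing work out cleanly. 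Otherwise the sketch is sound.
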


As a first step in proving \Cref{thm:dictator-test}, we claim that we can assume $f$ is odd (i.e.\ $f(x) = -f(-x)$) without loss of generality.

\begin{lemma}\label{lem:decreasing-stab}
Fix $\rho \in (-1,0]$ and $f : \{-1,1\}^n \rightarrow B^k$.
Define the function $g(x) = \tfrac{1}{2}(f(x) - f(-x))$. Then, $g$ is odd, has range~$B^k$, and satisfies
$$\Stab_\rho[f] \geq \Stab_\rho[g]$$
and
$$\forall i \in [n],\ \Inf_i^{\leq m}[f] \geq \Inf_i^{\leq m}[g].$$
\end{lemma}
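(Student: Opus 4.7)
The plan is to reduce everything to Fourier analysis on the hypercube and read off both inequalities from the Fourier expansion of $g$. First I would check the easy structural properties: $g(-x) = \tfrac{1}{2}(f(-x)-f(x)) = -g(x)$ so $g$ is odd, and the triangle inequality gives $\|g(x)\| \le \tfrac{1}{2}\|f(x)\| + \tfrac{1}{2}\|f(-x)\| \le 1$, so $g$ does take values in $B^k$ (this is needed so that $g$ is a valid input when we later invoke \Cref{thm:dictator-test} on it).

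Next I would expand $g$ in the Fourier basis. Using $\chi_S(-x) = (-1)^{|S|}\chi_S(x)$, the Fourier expansion of $f$ yields
\[
g(x) = \tfrac{1}{2}\sum_{S \subseteq [n]} \widehat{f}(S)\bigl(1 - (-1)^{|S|}\bigr)\chi_S(x) = \sum_{\substack{S \subseteq [n]\\ |S|\text{ odd}}} \widehat{f}(S)\chi_S(x).
\]
In other words, $\widehat{g}(S) = \widehat{f}(S)$ when $|S|$ is odd and $\widehat{g}(S) = 0$ when $|S|$ is even. With this identity in hand, both inequalities are immediate. The influence bound follows since the sum defining $\Inf_i^{\leq m}[g]$ is a restriction of the sum defining $\Inf_i^{\leq m}[f]$ to sets of odd size. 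For the stability bound, using the Fourier formula for $\Stab_\rho$,
\[
\Stab_\rho[f] - \Stab_\rho[g] = \sum_{|S|\text{ even}} \rho^{|S|}\,\|\widehat{f}(S)\|_2^2,
\]
and every term on the right is non-negative, because for even $|S|$ (including $|S|=0$) we have $\rho^{|S|} \geq 0$ regardless of the sign of $\rho$.

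There is no real obstacle, and in particular the hypothesis $\rho \in (-1,0]$ is used only through the sign of $\rho^{|S|}$ at even $|S|$, which is automatic. The only mild thing to be careful about is the application to $S = \emptyset$: the ``missing'' term in $\Stab_\rho[g]$ is $\|\widehat{f}(\emptyset)\|_2^2 = \|\E[f]\|_2^2$, which is harmlessly non-negative. So the conclusion is that passing from $f$ to its odd part $g$ can only shrink influences and can only lower noise stability, which is exactly what is needed to reduce the proof of \Cref{thm:dictator-test} to the case of odd functions.
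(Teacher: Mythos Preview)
Your proof is correct and matches the paper's own proof essentially line for line: both verify oddness directly, observe that $g$ lands in $B^k$ by convexity (your triangle inequality), compute $\widehat{g}(S)=\widehat{f}(S)\cdot \mathbf{1}[|S|\text{ odd}]$, and then read off both inequalities from the Fourier formulas for $\Inf_i^{\leq m}$ and $\Stab_\rho$, using $\rho^{|S|}\ge 0$ for even $|S|$. Your remark that the sign hypothesis on $\rho$ is not actually needed for this lemma is also correct.
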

\begin{proof}
Oddness follows from $g(-x) = \tfrac{1}{2}(f(-x) - f(x)) = -g(x)$.
In addition, $g(x)$ maps into $B^k$ because it is the average of two points in $B^k$.
The Fourier transform of $f$ is $\sum_{S \subseteq [n]} \hat f(S) \chi_S(x)$.
Noting that $\chi_S(x) = \chi_S(-x)$ for $|S|$ even and $-\chi_S(-x)$ for $|S|$ odd,
the Fourier transform of $g$ is
$$g(x) = \sum_{\text{$|S|$ odd}} \hat f(S) \chi_S(x).$$
Thus, for any $i \in [n]$,
$$\Inf_i^{\leq m}[g] = \sum_{\substack{|S| \leq m, S \ni i, \\ \text{$|S|$ odd}}} \norm{\hat f(S)}_2^2
\leq  \sum_{|S| \leq m, S \ni i} \norm{\hat f(S)}_2^2 = \Inf_i^{\leq m}[f].$$
Furthermore, using that $\rho^{|S|} \geq 0$ for $|S|$ even,
\begin{equation*}\Stab_\rho[g] = \sum_{\text{$|S|$ odd}} \rho^{|S|}\norm{\hat f(S)}_2^2
\leq  \sum_{S \subseteq [n]} \rho^{|S|} \norm{\hat f(S)}_2^2 = \Stab_\rho[f].\qedhere
\end{equation*}
\end{proof}

Therefore, to lower bound $\Stab_\rho[f]$, it suffices to lower bound the noise stability of the odd function $g$.
Then, the proof of \Cref{thm:dictator-test} is essentially by reduction to case of \emph{positive} $\rho$, i.e.\ $\rho \in [0,1)$, but only for odd functions.
This requires an analogue of \cref{conj:vector-borell-intro}
for the case of positive $\rho$ and odd~$f$, which we derive as follows.

\begin{corollary}[Vector-valued Borell's inequality; positive $\rho$ and odd $f$ case]\label{cor:vecor-borell-positive-rho}
Assume \cref{conj:vector-borell-intro}.
Then it holds in the reverse direction when $\rho \in [0,1]$, with the additional assumption that $f$ is odd. In particular,
$$\Stab_\rho[f] \leq \Stab_\rho[f_\mathrm{opt}],$$
where $f_{\mathrm{opt}}(x) = x_{\leq k}/\Vert x_{\leq k}\Vert$ and $x_{\leq k} = (x_1, \ldots, x_k)$.
\end{corollary}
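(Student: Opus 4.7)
The plan is to reduce the positive-$\rho$ case for odd $f$ directly to the negative-$\rho$ case (\Cref{conj:vector-borell}) via the sign-flip $\by \mapsto -\by$ in the second coordinate of a correlated Gaussian pair. The key observation is that if $(\bx, \by)$ are $\rho$-correlated Gaussians with $\rho \in [0,1]$, then setting $\by' = -\by$ yields a pair $(\bx, \by')$ of $(-\rho)$-correlated Gaussians, and $-\rho \in [-1, 0]$ lies in the regime where \Cref{conj:vector-borell} applies.

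First I would compute, using oddness of $f$,
\[
\Stab_\rho[f] \;=\; \E_{\bx \sim_\rho \by}\langle f(\bx), f(\by)\rangle
\;=\; \E_{\bx \sim_{-\rho} \by'}\langle f(\bx), f(-\by')\rangle
\;=\; -\E_{\bx \sim_{-\rho} \by'}\langle f(\bx), f(\by')\rangle
\;=\; -\Stab_{-\rho}[f],
\]
which reduces the problem to the negative-$\rho$ regime. Applying \Cref{conj:vector-borell} with correlation parameter $-\rho \in [-1,0]$ then gives $\Stab_{-\rho}[f] \geq \Stab_{-\rho}[f_{\mathrm{opt}}]$, so after negating both sides, $\Stab_\rho[f] \leq -\Stab_{-\rho}[f_{\mathrm{opt}}]$.

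To conclude, I would observe that the conjectured optimizer $f_{\mathrm{opt}}(x) = x_{\leq k}/\|x_{\leq k}\|$ is itself odd. Running the same sign-flip computation for $f_{\mathrm{opt}}$ yields $\Stab_\rho[f_{\mathrm{opt}}] = -\Stab_{-\rho}[f_{\mathrm{opt}}]$, and substituting this into the bound above gives $\Stab_\rho[f] \leq \Stab_\rho[f_{\mathrm{opt}}]$, which is exactly the desired inequality. No additional obstacles arise here: the proof is essentially a symmetry argument and requires only the oddness hypothesis on $f$ (to flip the sign of $\rho$) together with the already-assumed \Cref{conj:vector-borell}. The hypothesis $\E[f] = 0$ that appears in \Cref{conj:vector-borell-positive-rho} is automatic for odd $f$, so this corollary is strictly easier than the general positive-$\rho$ conjecture.
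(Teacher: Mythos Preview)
Your proposal is correct and follows essentially the same approach as the paper's proof: use oddness of $f$ together with the sign-flip $(\bx,\by)\mapsto(\bx,-\by)$ to show $\Stab_\rho[f]=-\Stab_{-\rho}[f]$, apply the negative-$\rho$ conjecture, and then use that $f_{\mathrm{opt}}$ is itself odd to convert back. The paper's argument is identical in structure and detail.
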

\begin{proof}
Because $f$ is odd,
\begin{equation}\label{eq:odd-stab}
\Stab_\rho[f] = \E_{\bx \sim_\rho \by}[\langle f(\bx),f(\by)\rangle] =  -\E_{\bx \sim_\rho \by}[\langle f(\bx),f(-\by)\rangle] =  -\E_{\bx \sim_{-\rho} \by}[\langle f(\bx),f(\by)\rangle] = -\Stab_{-\rho}[f],
\end{equation}
where we have used the fact that if $\bx$ and $\by$ are $\rho$-correlated, then $\bx$ and $-\by$ are $(-\rho)$-correlated.
Recalling that $\rho \geq 0$, we apply \cref{conj:vector-borell-intro} and obtain $\Stab_{-\rho}[f] \geq \Stab_{-\rho}[f_\mathrm{opt}]$.
Observing that $f_{\mathrm{opt}}$ is an odd function, we further have
$$\Stab_{-\rho}[f] \geq \Stab_{-\rho}[f_\mathrm{opt}] = -\Stab_\rho[f_\mathrm{opt}].$$
Thus, negating the above inequality and combining with \Cref{eq:odd-stab} yields
$$\Stab_{\rho}[f] = -\Stab_{-\rho}[f] \leq \Stab_{\rho}[f_\mathrm{opt}].$$
This concludes the proof.
\end{proof}

Now we prove the main theorem of this section.

\begin{proof}[Proof of \Cref{thm:dictator-test}]
To begin, we choose parameters
\begin{align*}
\gamma &= \tfrac{1}{6}(1+\rho)\epsilon, \tag{dictated by \Cref{eq:apply-gamma}}\\
\delta &= \PAREN{\tfrac{\epsilon}{12k C_k}}^{(18 \ln 2) / \gamma}, \tag{so that the error in \Cref{thm:vector-invariance} is at most $\epsilon/(6k)$}\\
m &= \tfrac{1}{18}\log \tfrac{1}{\delta}. \tag{dictated by \Cref{thm:vector-invariance}}
\end{align*}
Using \Cref{lem:decreasing-stab}, assume without loss of generality that $f$ is odd.
Throughout, we'll use $x$ to denote a string in $\{-1,1\}^n$ and $y$ to denote a vector in $\mathbb R^n$. To prove the claim, we'll perform a series of modifications to the initial function $f$ so that we can apply \Cref{thm:vector-invariance}. At each step, we'll show that each modified function has noise stability close to $\Stab_\rho [f]$. In particular, we'll consider the following functions:
\begin{enumerate}
\item $g(x) = \T_{1-\gamma} f(x)$,
\item $g(y)$, where $y \in \mathbb R^n$,
\item $\mathcal R \circ g(y)$, which is $g(y)$ rounded to $B^{k}$.
\end{enumerate}

\textbf{Step 1.} Since the statement of the vector-valued invariance principle (\Cref{thm:vector-invariance}) requires a function with low high-degree variance, we consider $g = \T_{1-\gamma} f$. Then for each $j \in [k]$,
$$\Var[g_j^{> m}] = \sum_{|S| > m} (1-\gamma)^{2|S|}\widehat {f}_j(S)^2 \leq (1-\gamma)^{2m} \sum_{S \subseteq [n]} \widehat{f}_j(S)^2 \leq (1-\gamma)^{2m}.$$
Also, $\Inf_i^{\leq m}[g_j] \leq \Inf_i^{\leq m}[f_j]$ for all $i \in [n]$.
Furthermore, note that since $f$ is odd, $g$ is also odd. Next, we bound the error in the quantity $\Stab_\rho[f]$ when we consider $g$ in place $f$. Since $g = \T_{1-\gamma} f$, we see that $\Stab_\rho[g] = \Stab_{\rho(1-\gamma)^2}[f]$, and so it suffices to bound $|\Stab_{\rho}[f] - \Stab_{\rho(1-\gamma)^2}[f]|$. To do so, we use \Cref{eq:rho-stab-diff}. However, note this proposition only applies for $\rho > 0$. But since $f$ is odd, we have,
$$|\Stab_{\rho}[f] - \Stab_{\rho(1-\gamma)^2}[f]|=|-\Stab_{-\rho}[f] + \Stab_{-\rho(1-\gamma)^2}[f]|=|\Stab_{-\rho}[f] - \Stab_{-\rho(1-\gamma)^2}[f]|,$$
which allows us to apply \Cref{eq:rho-stab-diff} with $\rho' = -\rho$. This yields
\begin{equation}\label{eq:apply-gamma}
|\Stab_{\rho}[f] - \Stab_{\rho(1-\gamma)^2}[f]| \leq \tfrac{2\gamma}{1+\rho} \Var[f] \leq \tfrac{2\gamma}{1+\rho}\E[\norm{f}_2^2]
\leq\tfrac{2\gamma}{1+\rho},
\end{equation}
where we used $\Var[f] \leq \E[\norm{f}_2^2]$ for all functions
and the fact that $f$'s range is $B^k$.
For our choice of $\gamma$, this is equal to $\epsilon/3$.\\

\textbf{Step 2.} Next, we bound the error accrued when we apply $g$ on Gaussian inputs. Consider the function
$$\Psi(v) = 
\left\{\begin{array}{cl}
\norm{v}_2^2 & \text{if $\norm{v}_2 \leq 1$,}\\
1 & \text{otherwise.}
\end{array}\right.$$
By \Cref{lem:psi-lipschitz}, $\Psi$ is $2$-Lipschitz. We'll rewrite $\Stab_\rho[g]$ using $\Psi$ in order to apply \Cref{thm:vector-invariance}.
\begin{align*}
\Stab_\rho[g(\bx)] &= \E_\bx [\langle g(\bx), \T_\rho g(\bx)\rangle]\\
&= \E_\bx [\langle \T_{\sqrt{-\rho}} g(\bx), \T_{-\sqrt{-\rho}} g(\bx) \rangle]\\
&= -\E_\bx [\langle \T_{\sqrt{-\rho}} g(\bx), \T_{\sqrt{-\rho}} g(\bx) \rangle] \tag{Since $g$ is odd}\\
&= -\E_\bx [\langle \norm{\T_{\sqrt{-\rho}} g(\bx)}_2^2 \rangle]\\
&= -\E_{\bx}\BRAC{\Psi(\T_{\sqrt{-\rho}}g(\bx))},
\end{align*}
where the last step used that $\T_{\sqrt{-\rho}} g(\bx) \in B^k$ and hence is unchanged by $\Psi(\cdot)$.
Applying \Cref{thm:vector-invariance}, we get
$$\ABS{\E_{\bx}\BRAC{\Psi(\T_{\sqrt{-\rho}}[g(\bx)])} - \E_{\by}\BRAC{\Psi(\T_{\sqrt{-\rho}}[g(\by)])}} \leq \epsilon/(6k) \leq \epsilon/3.$$
Here, we chose the ``$f$'' function in \Cref{thm:vector-invariance} to be $T_{\sqrt{-\rho}} g$,
which satisfies the low variance and influence properties because~$g$ does.

\textbf{Step 3.} The term $-\E_{\by}[\Psi(\T_{\sqrt{-\rho}}[g(\by)])]$ is \textit{almost} ready for application of \cref{conj:vector-borell-intro} through \Cref{cor:vecor-borell-positive-rho}. However, although $g(x)$ is bounded in $B^k$, the same might not hold for $g(y)$, where $y \in \mathbb R^n$. To fix this, we consider $\mathcal R \circ g$, where
$$\mathcal R(v) = 
\left\{\begin{array}{cl}
v & \text{if $\norm{v}_2 \leq 1$,}\\
\tfrac{v}{\norm{v}} & \text{otherwise.}
\end{array}\right.$$
In other words, $\calR(v)$
rounds a vector $v$ to the unit ball $B^m$. Then
\begin{align}
&\ABS{\E_{\by}\BRAC{\Psi(\T_{\sqrt{-\rho}}[g(\by)])} - \E_{\by}\BRAC{\Psi(\T_{\sqrt{-\rho}}[\mathcal R \circ g(\by)])}}\nonumber\\
&\leq \E_{\by} \ABS{\Psi(\T_{\sqrt{-\rho}}[g(\by)]) -\Psi(\T_{\sqrt{-\rho}}[\mathcal R \circ g(\by)])}\nonumber\\
&\leq 2\E_{\by} \NORM{\T_{\sqrt{-\rho}}[g(\by)] -\T_{\sqrt{-\rho}}[\mathcal R \circ g(\by)]}_2\tag{$\Psi$ is $2$-Lipschitz}\nonumber\\
&\leq 2\sum_{i=1}^k \E_{\by} \ABS{\T_{\sqrt{-\rho}}[g_i(\by)] -\T_{\sqrt{-\rho}}[(\mathcal R \circ g)_i(\by)]}\tag{$\norm{\cdot}_2 \leq \norm{\cdot}_1$}\nonumber\\
&\leq 2\sum_{i=1}^k \E_{\by} \ABS{g_i(\by) -(\mathcal R \circ g)_i(\by)}\tag{$\T_{\sqrt{-\rho}}$ is a contraction}\nonumber\\
&= 2\sum_{i=1}^k \E_{\by} |\Phi_i(g(\by))|,\label{eq:exp-sum}
\end{align}
where we define the map $\Phi_i : \mathbb R^k \rightarrow \mathbb R$ to be $\Phi_i(v) = v_i - \mathcal R(v)_i$, which is $2$-Lipschitz by \Cref{cor:phi-lipschitz}. To evaluate this, we first recall that on \emph{Boolean} inputs~$x$,
$g(x) \in B^k$, and so we have that $\Phi_i(g(x)) = 0$. Therefore
$$\ABS{\E_{\bx}[\Phi_i(g(\bx))] - \E_{\by}[\Phi_i(g(\by))]} = \ABS{\E_{\by}[\Phi_i(g(\by))]},$$
and applying \Cref{thm:vector-invariance} one last time, we can upper bound this by $\epsilon/{6k}$. The sum in \Cref{eq:exp-sum} is in turn upper bounded by $\epsilon/3$.
Finally, applying \Cref{cor:vecor-borell-positive-rho} to
\begin{equation*}
-\E_{\by}\BRAC{\Psi(\T_{\sqrt{-\rho}}[\mathcal R \circ g(\by)])} = -\Stab_{-\rho}[\mathcal R \circ g]
\end{equation*}
yields a lower bound of $-\Stab_{-\rho}[f_\mathrm{opt}] = \Stab_\rho[f_\mathrm{opt}]$, for which \Cref{prop:opt-formula} and \Cref{thm:exact-formula-for-average-inner-product} give an explicit formula. Through the three transformations, we accrue an error of at most $\epsilon$, which proves the claim.
\end{proof}

\section{Unique Games hardness of \qmaxcut}\label{sec:ug-hardness}

Now we prove hardness of \qmaxcut.
Our starting point is the Unique Games problem.

\begin{definition}[Unique Games]\label{def:unique-games}
The Unique Games problem is defined as follows.
An instance is a tuple $\calI(U, V, E, [M], \{\pi_{u\rightarrow v}\}_{(u, v) \in E})$, corresponding to a bipartite graph with left side vertices~$U$,
right side vertices~$V$,
and edges~$E$,
in addition to a bijection $\pi_{u \rightarrow v}:[M] \rightarrow [M]$ for each $(u, v) \in E$. We will also write $\pi_{v \rightarrow u}$ for $\pi_{u \rightarrow v}^{-1}$.
A labeling of the vertices is a function $L : U \cup V \rightarrow [M]$,
which satisfies the edge $(u, v) \in E$ if $\pi_{u \rightarrow v}(L(u)) = L(v)$.
The value of~$L$ is the fraction of edges it satisfies,
and the value of the instance~$\calI$ is the maximum value of any labeling.
\end{definition}

\begin{conjecture}[Unique Games Conjecture \cite{Kho02}]\label{con:ugc}
For any $\gamma > 0$, there exists a constant $M = M(\gamma)$ such that it is $\NP$-hard to distinguish whether an instance of the Unique Games problem with label set size $M$ has value at least $1 - \gamma$ or at most $\gamma$.
Furthermore, we may assume the constraint graph $\mathcal{C} = (U \cup V, E)$ is biregular.
\end{conjecture}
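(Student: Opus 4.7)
The ``statement'' here is the Unique Games Conjecture itself, which is one of the central open problems in complexity theory and has no known proof; accordingly, my proposal is not a plan to prove the main hardness assertion, but rather to describe what partial work could realistically be carried out and what the main obstacle is. The only part of the statement that admits a genuine proof (conditional on the main assertion) is the additional structural promise that the constraint graph $\calC = (U \cup V, E)$ may be taken biregular. I would treat this as a standard preprocessing reduction from the ``plain'' UGC.

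For the biregularity reduction, the plan is to take any instance $\calI(U, V, E, [M], \{\pi_{u \rightarrow v}\})$ that witnesses a $(1-\gamma, \gamma)$ gap and transform it into a biregular one $\calI'$ with essentially the same gap. The standard trick is to replace each vertex $v$ by $\Theta(\deg(v))$ copies, all carrying the same label, and to redistribute the edges (or equivalently, to put weights on edges and then discretize) so that every vertex in the new instance has the same degree on each side. A short calculation verifies that any labeling of $\calI$ lifts to a labeling of $\calI'$ with the same value, and that any labeling of $\calI'$ projects to a (randomized) labeling of $\calI$ whose expected value is at least as large. Hence the $(1-\gamma, \gamma)$ gap is preserved up to an additive loss that can be absorbed by adjusting $\gamma$, and the alphabet size $M$ is unchanged. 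The only subtlety is to ensure the result is unweighted and biregular simultaneously, which is handled by a rounding step at the cost of a slightly larger polynomial blowup in the instance size.

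For the main NP-hardness assertion there is no proposal I can responsibly offer. The strongest partial result currently available is the 2-to-2 Games Theorem of Khot--Minzer--Safra--Khot--Dinur--Khot--Kindler--Minzer--Safra and collaborators, which establishes an analogous hardness statement in the ``imperfect completeness'' regime but does not give completeness $1-\gamma$ for arbitrarily small $\gamma$. The hard part, and the reason the UGC remains open, is exactly this: known PCP gap-amplification techniques either destroy near-perfect completeness or increase predicate arity beyond what a unique predicate allows. A proof would presumably require a new PCP construction -- perhaps via an algebraic or high-dimensional expander route -- whose soundness analysis achieves $\gamma \to 0$ while the completeness remains $1 - \gamma$. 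I would flag this as the main obstacle and not attempt to grind against it here; for the purposes of the remainder of the paper, the UGC is simply assumed, and only the biregularity post-processing above is actually carried out.
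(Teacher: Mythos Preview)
Your assessment is correct: the statement is a conjecture and the paper does not attempt to prove its main hardness assertion. For the biregularity clause, the paper simply remarks that it follows from Khot--Regev~\cite{KR08}, as observed by Bansal--Khot~\cite{BK10}, rather than giving any reduction; your vertex-replication sketch is in the spirit of those references but goes beyond what the paper itself does.
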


The fact that the constraint graph may be taken to be biregular is a consequence of the result of Khot and Regev~\cite{KR08},
as pointed out by Bansal and Khot~\cite{BK10}.
Our hardness result is stated as follows.

\begin{theorem}[UG-Hardness of Approximating \qmaxcut]\label{thm:ug-hardness}
Assume \cref{conj:vector-borell-intro}.
For any $\rho \in (-1,0)$ and $\epsilon > 0$,
given an instance of \qmaxcut,
the Unique Games Conjecture implies that the following two tasks are $\NP$-hard:
\begin{enumerate}
\item distinguishing if the product state value is greater than $\tfrac{1}{4} - \tfrac{1}{4}\rho - \eps$
or less than $\tfrac{1}{4} - \tfrac{1}{4}F^*(3, \rho) + \eps$,
\item distinguishing if the maximum energy is greater than $\tfrac{1}{4} - \tfrac{1}{4}\rho - \eps$
or less than $\tfrac{1}{4} - \tfrac{1}{4}F^*(3, \rho) + \eps$.
\end{enumerate}
Choosing $\rho = \rbov$, these imply that approximating the product state value and maximum energy
to a factor $\abov+\eps$ is $\NP$-hard, assuming the Unique Games Conjecture.
\end{theorem}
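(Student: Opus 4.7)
The plan is to follow the standard Khot--Kindler--Mossel--O'Donnell paradigm, using the dictator test of \Cref{thm:dictator-test} as the soundness analyzer. Starting from a Unique Games instance $\calI(U,V,E,[M],\{\pi_{u\to v}\})$ guaranteed by \Cref{con:ugc}, I would construct the \qmaxcut instance whose interaction graph $G$ has vertex set $V\times\{-1,1\}^M$ and whose edge distribution is: sample $\bu\sim U$ uniformly, independently sample two neighbors $\bv,\bv'\sim N(\bu)$, sample $(\bx,\by)$ $\rho$-correlated $M$-dimensional Boolean strings, and output the edge
\[
\big((\bv,\bx\circ\pi_{\bv\to\bu}),(\bv',\by\circ\pi_{\bv'\to\bu})\big),
\]
where $(w\circ\sigma)_i = w_{\sigma(i)}$. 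A product state assignment corresponds to a collection $\{f_v:\{-1,1\}^M\to S^2\}_{v\in V}$ and, by \Cref{prop:rewrite-product-restated}, has value
\[
\E_{\bu}\,\E_{\bv,\bv'\sim N(\bu)}\,\E_{\bx\sim_\rho\by}\Big[\tfrac14-\tfrac14\langle f_{\bv}(\bx\circ\pi_{\bv\to\bu}),\,f_{\bv'}(\by\circ\pi_{\bv'\to\bu})\rangle\Big].
\]

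For \emph{completeness}, given a UG labeling $L$ satisfying $\ge 1-\gamma$ fraction of edges, I would choose the embedded dictators $f_v(x)=(x_{L(v)},0,0)$. Conditioning on the event that $L$ satisfies both $(\bu,\bv)$ and $(\bu,\bv')$---which happens with probability at least $1-2\gamma$ by a union bound and the biregularity from \Cref{con:ugc}---the projection constraints $\pi_{\bv\to\bu}(L(\bv))=L(\bu)=\pi_{\bv'\to\bu}(L(\bv'))$ align the two dictators onto the common coordinate $L(\bu)$, so by \Cref{lem:embedded-dictators} the conditional contribution is $\tfrac14-\tfrac14\rho$. Taking $\gamma$ small enough yields product state value at least $\tfrac14-\tfrac14\rho-\eps$.

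For \emph{soundness}, assume the product state value exceeds $\tfrac14-\tfrac14 F^*(3,\rho)+\eps$, and define the averaged functions $g_u(x)=\E_{\bv\sim N(u)}[f_{\bv}(x\circ\pi_{\bv\to u})]$. A direct calculation rewrites the \qmaxcut value as $\E_{\bu}[\tfrac14-\tfrac14\Stab_\rho[g_{\bu}]]$, so by averaging an $\eps/2$ fraction of $u$'s (the ``good'' vertices) satisfy $\Stab_\rho[g_u]\le F^*(3,\rho)-\eps/2$. By the contrapositive of \Cref{thm:dictator-test}, each good $u$ admits some coordinate $i$ with $\Inf_i^{\le m}[g_u]\ge\delta$, for $\delta,m$ depending only on $\eps,\rho$. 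Using $\widehat{g_u}(T)=\E_{\bv\sim N(u)}[\widehat{f_{\bv}}(\pi_{u\to\bv}(T))]$, convexity of $\|\cdot\|_2^2$ and Fourier expansion give
\[
\delta\le\Inf_i^{\le m}[g_u]\le\E_{\bv\sim N(u)}\big[\Inf_{\pi_{\bv\to u}(i)}^{\le m}[f_{\bv}]\big],
\]
so a $\delta/2$-fraction of neighbors $v$ have some ``notable'' coordinate $j=\pi_{v\to u}(i)$ with influence $\ge\delta/2$. By \Cref{prop:influence-bound} each $f_v$ has at most $2m/\delta$ such coordinates. The randomized labeling that assigns $u\mapsto i$ at good $u$ and $v\mapsto$ a uniformly random notable coordinate of $f_v$ (arbitrary otherwise) then satisfies at least an $(\eps/2)\cdot(\delta/2)\cdot(\delta/(2m))$-fraction of UG edges, contradicting \Cref{con:ugc} for $\gamma$ small enough. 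This proves part (1).

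For \emph{part (2)}, I would argue that on this instance, product state value and maximum energy are within $\eps$. The interaction graph $G$ is effectively high-degree: after fixing $\bu$, neighbors and noisy Boolean strings are freshly resampled, so each vertex $(v,x)$ participates in edges with extremely small maximum probability (of order $\max_u p_u$ exponentially small in $M$ and polynomially small in the instance size), and the same holds for the conditional transition $A_{(v,x),(v',y)}$. Applying \Cref{cor:BH-nonuniform-easy-to-use} then gives $\heis(G)\le\prodval(G)+\eps$ for sufficiently large instances, so the UG-hardness of approximating $\prodval$ transfers to $\heis$. Taking $\rho=\rbov$ in both parts and optimizing the ratio $(\tfrac14-\tfrac14 F^*(3,\rho))/(\tfrac14-\tfrac14\rho)$ as in \Cref{def:ratios} yields the $\abov+\eps$ hardness factor.

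The main obstacle is the soundness analysis: verifying that the instance is expressive enough for the averaged function $g_u$ to faithfully inherit the influential-coordinate structure of the $f_v$'s, and ensuring that the parameters $\delta,m,\gamma$ and the number of labels $M(\gamma)$ can be chosen in the correct order so that the dictator test soundness, the influence-transfer step, and the decoded UG labeling all survive together. The secondary obstacle is the Brandão--Harrow bound: since $G$ is weighted and not literally regular, I must verify $\max_u p_u$ and $\max_{u,v}A_{u,v}$ are small enough (by choosing $M$ and the dimension of the underlying hypercube sufficiently large relative to $\eps$) to apply \Cref{cor:BH-nonuniform-easy-to-use}.
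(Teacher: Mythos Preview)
Your proposal is correct and follows essentially the same approach as the paper: the same UG-to-noisy-hypercube reduction, the same completeness via embedded dictators, the same soundness via the averaged functions $g_u$ together with \Cref{thm:dictator-test} and the influence-decoding argument, and the same appeal to \Cref{cor:BH-nonuniform-easy-to-use} for part~(2). The only step the paper carries out explicitly that you leave implicit is removing self-loops (by conditioning on $\bx\neq\by$, which costs at most $(\tfrac12+\tfrac12\rho)^M$ in value) and then computing the precise bounds $\max_{(v,x)} p_{(v,x)} = |V|^{-1}2^{-M}$ and $\max A_{(v,x),(w,y)} \le 2(\tfrac12-\tfrac12\rho)^M$, so that the Brand\~ao--Harrow error is driven below $\eps$ by taking $M$ large enough.
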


\ignore{
\begin{remark}
We note that our proof also extends to the $\maxcut_k$ problem.
In particular, it shows that for $\rho \in (-1, 0)$, it is $\NP$-hard (assuming the UGC)
to distinguish if the $\maxcut_k$ value is greater than $\tfrac{1}{2} - \tfrac{1}{2} \rho - \eps$
or smaller than $\tfrac{1}{2} - \tfrac{1}{2} F^*(k, \rho) + \eps$.
This shows that the rank-$k$ projection rounding algorithm of~\cite{BOV10}
is optimal for each fixed~$k$.  We note that the worst~$\rho$ for each fixed~$k$ is in the range $(-1,0)$,
which follows because the sign of $\rho$ and $F^*(k, \rho)$ are the same and $|F^*(k, \rho)| \leq |\rho|$; so it suffices to just prove the inapproximability result for this range.
\end{remark}
}

The proof mostly follows the standard outline for UG-hardness proofs introduced in~\cite{KKMO07}.
In particular, the graph produced by the reduction is exactly the same as the one produced in their \maxcut reduction,
with the one exception that we will eventually eliminate all self-loops so that it is a well-defined \qmaxcut instance.
The chief new difficulty is that in order to estimate the maximum energy of the graph produced by the reduction,
we use \Cref{cor:BH-nonuniform-easy-to-use} to relate it to the product state value;
however, the error term this theorem produces is a somewhat odd analogue of degree for weighted graphs,
and bounding it is slightly tedious.

\begin{proof}[Proof of \Cref{thm:ug-hardness}]
The proof is by reduction from the Unique Games problem.
To begin, we choose parameters
\begin{align*}
\delta &= \delta(\eps/2, \rho), \tag{$\delta(\cdot, \cdot)$ from \Cref{thm:dictator-test}}\\
m &= m(\eps/2, \rho), \tag{$m(\cdot, \cdot)$ from \Cref{thm:dictator-test}}\\
M &= \max\{M(\gamma), \tfrac{8 \log(\eps/200)}{\log(1/2 - \rho/2)}\}. \tag{$M(\cdot)$ from \Cref{con:ugc}; dictated by \Cref{eq:m-fact,eq:m-fact-2}}\\
\gamma & =  \frac{\eps \delta^2}{16M}, \tag{dictated by \Cref{eq:used-gamma,eq:used-gamma-again}}
\end{align*}
Let $\calI(U, V, E, [M], \{\pi_{u\rightarrow v}\}_{(u,v) \in E})$ be a biregular instance of the Unique Games problem.
The reduction produces a \qmaxcut instance with graph~$G$
whose vertex set is $V \times \{-1, 1\}^M$.
A random edge in $G$ is sampled as follows: pick~$\bu \in U$ uniformly at random,
and sample two uniformly random neighbors $\bv, \bw \sim N(\bu)$ independently,
where $N(\bu)$ is the set of $\bu$'s neighbors.
Let $\bx$ and $\by$ be $\rho$-correlated $M$-dimensional Boolean strings.
Output the edge between $(\bv, \bx \circ \pi_{\bv\rightarrow \bu})$ and $(\bw, \by \circ \pi_{\bw \rightarrow \bu})$.
Given $w \in \{-1, 1\}^M$ and $\sigma:[M]\rightarrow [M]$,
we write $w \circ \sigma \in \{-1, 1\}^M$ for the string in which $(w \circ \sigma)_i = w_{\sigma(i)}$.

A product state assignment to~$G$ corresponds to a function $f_v:\{-1, 1\}^M \rightarrow S^2$ for each $v \in V$.
It has value
\begin{equation*}
\E_{\bu \sim U}\E_{\bv, \bw \sim N(\bu)} \E_{\substack{\text{$(\bx, \by)$ $\rho$-correlated}\\\text{$n$-dim Boolean strings}}}
	\left[\tfrac{1}{4} - \tfrac{1}{4}\langle f_{\bv}(\bx \circ \pi_{\bv \rightarrow \bu}), f_{\bw}(\by \circ \pi_{\bw \rightarrow \bu})\rangle\right].
\end{equation*}

\textit{Completeness.} Assume $\mathcal I$ has a labeling $L:U \cup V \rightarrow [M]$ satisfying more than $(1- \gamma)$-fraction of the edges.
For each $v \in V$, let $f_v(x) = (x_{L(v)}, 0, \dots, 0)$.
To analyze the performance of~$f$,
let us first fix a vertex $u \in U$ and two neighbors $v, w \in N(u)$,
and condition on the case that $L$ satisfies both edges $(u, v)$ and $(u, w)$.
This means that $\pi_{v\rightarrow u}(L(v)) = L(u) = \pi_{w\rightarrow u}(L(w))$.
Thus, for each $x \in \{-1, 1\}^M$,
\begin{align*}
f_v(x \circ \pi_{v\rightarrow u}) &= ((x \circ \pi_{v \rightarrow u})_{L(v)}, 0, 0) = (x_{\pi_{v \rightarrow u}(L(v))},0,0) = (x_{L(u)},0,0),
\end{align*}
and similarly $f_w(y \circ \pi_{w \rightarrow u}) = (y_{L(u)}, 0, 0)$ for each $y \in \{-1, 1\}^M$.
As a result, the value of $f$ conditioned on $u$, $v$, and $w$ is
\begin{align*}
\E_{\bx, \by}
	\left[\tfrac{1}{4} - \tfrac{1}{4}\langle f_{v}(\bx \circ \pi_{v \rightarrow u}), f_{w}(\by \circ \pi_{w \rightarrow u})\rangle\right]
=\E_{\bx, \by}
	\left[\tfrac{1}{4} - \tfrac{1}{4}\langle (\bx_{L(u)}, 0, 0), (\by_{L(u)}, 0, 0)\rangle\right],
\end{align*}
which is just the value of the $L(u)$-th embedded dictator on the noisy hypercube, i.e.\ $1/4 - 1/4 \rho$.

Now we average over $\bu, \bv, \bw$.
Because $\mathcal L$ is a biregular Unique Games instance,
it is in particular left-regular,
and so
picking a random vertex $\bu \in U$ and neighbor $\bv \in N(\bu)$ is equivalent to picking a uniformly random edge from~$E$.
Therefore, by the union bound, the probability that the assignment~$L$
satisfies both edges $(\bu, \bv)$ and $(\bu, \bw)$ is at least $1-2\gamma$.
As we have seen, conditioned on this event, the assignment~$f$ has value at least $1/4 - 1/4\rho$.
Due to our choice of $\gamma$, we can lower-bound the value of~$f$ by
\begin{align}
(1-2\gamma) \cdot (\tfrac{1}{4} - \tfrac{1}{4}\rho)
&\geq \tfrac{1}{4} - \tfrac{1}{4}\rho - \gamma \nonumber\\
&\geq \tfrac{1}{4} - \tfrac{1}{4}\rho - \tfrac{1}{2} \eps.\label{eq:used-gamma}
\end{align}
This completes the completeness case.

\textit{Soundness.}
We will show the contrapositive.
Suppose there is a product state assignment $\{f_v\}_{v \in V}$ to $G$ with value at least $\tfrac{1}{4} - \tfrac{1}{4} F^*(3, \rho) + \tfrac{1}{2}\eps$.
We will use this to construct a randomized assignment $\bL:U \cup V \rightarrow [M]$ whose average value is at least~$\gamma$,
which implies that the Unique Games instance has value at least~$\gamma$.

For each $u \in U$, we define the function $g_u: \{-1, 1\}^M \rightarrow B^3$ as
\begin{equation*}
g_u(x) = \E_{\bv \sim N(u)}[ f_{\bv}(x \circ \pi_{\bv \rightarrow u}) ].
\end{equation*}
Then we can rewrite the value of the assignment $\{f_v\}$ as
\begin{align*}
\E_{\bu} \E_{\bv, \bw \sim N(\bu)} \E_{\bx, \by}\left[\tfrac{1}{4} - \tfrac{1}{4}\langle f_{\bv}(\bx \circ \pi_{\bv \rightarrow \bu}), f_{\bw}(\by \circ \pi_{\bw \rightarrow \bu})\rangle\right]
&= \E_{\bu} \E_{\bx, \by}[\tfrac{1}{4} - \tfrac{1}{4}\langle g_{\bu}(\bx), g_{\bu}(\by)\rangle]\\
&= \E_{\bu}[\tfrac{1}{4} - \tfrac{1}{4}  \Stab_\rho[g_{\bu}]].
\end{align*}
Since~$f$ has value at least $\tfrac{1}{4} - \tfrac{1}{4} F^*(3, \rho) + \tfrac{1}{2}\eps$,
an averaging argument implies that at least an $\eps/4$ fraction of $u \in U$ satisfy 
$\tfrac{1}{4} - \tfrac{1}{4}  \Stab_\rho[g_{u}] \geq \tfrac{1}{4} - \tfrac{1}{4} F^*(3, \rho) + \eps/4$.
Rearranging, these $u$'s satisfy
\begin{equation*}
\Stab_\rho[g_u] \leq F^*(3, \rho) - \eps.
\end{equation*}
We call any such~$u$ ``good''.
We now apply the soundness of our dictatorship test to the good $u$'s:
by \Cref{thm:dictator-test}, any such $u$ has a ``notable'' coordinate, i.e.\ an $i$ such that $\Inf^{\leq m}_i[g_u] > \delta$.
Our random assignment will then use this~$i$ as its label for the vertex~$u$:
$\bL(u) = i$. (If~$u$ has multiple notable coordinates, then we pick one of these arbitrarily as the label for~$u$.)

Next we'll need to obtain labels for the neighbors of~$u$.
We will use the condition that~$u$ is good to derive that many of~$u$'s neighbors~$v$ have notable coordinates.
This requires relating the Fourier spectrum of $g_u$ to the Fourier spectra of the neighboring $f_v$'s.
To begin, for any subset $S \subseteq [M]$,
\begin{equation*}
\chi_S(x \circ \pi_{v \rightarrow u})
 = \prod_{i \in S}(x \circ \pi_{v \rightarrow u})_i
 = \prod_{i \in S} x_{\pi_{v\rightarrow u}(i)}
 = \prod_{j \in T} x_j
 = \chi_T(x),
\end{equation*}
where $T = \pi_{v \rightarrow u}(S) = \{\pi_{v\rightarrow u}(i) : i \in S\}$.
As a result,
\begin{equation*}
f_v(x \circ \pi_{v \rightarrow u})
= \sum_{S \subseteq [n]} \widehat{f}_v(S) \chi_{S}(x \circ \pi_{v \rightarrow u})
= \sum_{T \subseteq [n]} \widehat{f}_v(\pi_{u \rightarrow v}(T)) \chi_T(x).
\end{equation*}
Averaging over all $\bv \in N(u)$,
\begin{equation*}
g_u(x)
= \E_{\bv \sim N(u)}[ f_{\bv}(x \circ \pi_{\bv \rightarrow u}) ]
= \sum_{T \subseteq [n]} \E_{\bv \sim N(u)}[\widehat{f}_{\bv}(\pi_{u \rightarrow \bv}(T))] \chi_T(x)
= \sum_{T \subseteq [n]} \widehat{g}(T) \chi_T(x).
\end{equation*}
Hence,
\begin{align*}
\delta
&< \Inf^{\leq m}_i[g_u]\\
&= \sum_{|T| \leq m:  T \ni i} \Vert \widehat{g}(T) \Vert^2_2\\
&= \sum_{|T| \leq m:  T \ni i} \left\Vert \E_{\bv \sim N(u)}[\widehat{f}_{\bv}(\pi_{u \rightarrow \bv}(T))] \right\Vert^2_2\\
& \leq \sum_{|T| \leq m:  T \ni i} \E_{\bv \sim N(u)} \left\Vert \widehat{f}_{\bv}(\pi_{u \rightarrow \bv}(T)) \right\Vert^2_2\tag{because $\Vert \cdot \Vert^2_2$ is convex}\\
& = \E_{\bv \sim N(u)}\left[\sum_{|T| \leq m:  T \ni i} \left\Vert \widehat{f}_{\bv}(\pi_{u \rightarrow \bv}(T)) \right\Vert^2_2\right]\\
& =  \E_{\bv \sim N(u)}[\Inf^{\leq m}_{\pi_{u \rightarrow \bv}(i)}[f_{\bv}]].
\end{align*}
By another averaging argument,
a $\delta/2$-fraction of $u$'s neighbors~$v$ satisfy $\Inf^{\leq m}_{\pi_{u \rightarrow v}(i)}[f_{v}] \geq \delta/2$.
We call these the ``good neighbors''.
For each good neighbor~$v$, the set of possible labels
$$S_v = \{j : \Inf^{\leq m}_j[f_v] \geq \delta/2 \}$$
is non-empty.
In addition, one of these labels~$j$ satisfies $j = \pi_{u \rightarrow v}(i)$.
On the other hand, by \Cref{prop:influence-bound},  $|S_v| \leq 2m/\delta$ and so this set is not too large either.
For each good neighbor,
we assign the label of $\bL(v)$ by picking a uniformly random $j \in S_v$.
For all other vertices (i.e.\ those which are not good or good neighbors),
we assign $\bL$ a random label.

Now we consider the expected number of edges in $\mathcal{I}$ satisfied by $\bL$.
Given a random edge $(\bu, \bv)$,
the probability that $\bu$ is good is at least $\eps/4$;
conditioned on this, the probability that $\bv$ is a good neighbor is at least $\delta/2$.
Assuming both hold, since $S_{\bv}$ is of size at most $2M/\delta$ and contains one label equal to $\pi_{\bu \rightarrow \bv}(L(\bu))$,
then $\bL$ satisfies the edge $(\bu, \bv)$ with probability at least $\delta/2M$.
In total, $\bL$ satisfies at least an
\begin{equation}\label{eq:used-gamma-again}
\frac{\eps \delta^2}{16 M} = \gamma
\end{equation}
fraction of the edges. This concludes the proof.

\textit{Moving from the product state value to the maximum energy.}
First, we modify the graph~$G$ to remove any self-loops.
To do this, we modify the distribution on edges $(\bv, \bx \circ \pi_{\bv \rightarrow \bu})$ and $(\bw, \by \circ \pi_{\bw \rightarrow \bu})$
so that $\bx$ and $\by$ are distributed as $\rho$-correlated Boolean strings \emph{conditioned on them not being equal}.
This removes all self-loops, as any self-loop in the graph must have $\bv = \bw$ and $\bx \circ \pi_{\bv \rightarrow \bu} = \by \circ \pi_{\bw \rightarrow \bu}$,
which implies that $\bx = \by$.
(Note that this also removes some edges which are \emph{not} self-loops,
namely those for which $\bv \neq \bw$.)
Given that $\rho$-correlated $\bx$ and $\by$ are equal with probability
\begin{equation}\label{eq:m-fact}
(\tfrac{1}{2} + \tfrac{1}{2} \rho)^M \leq \eps/4,
\end{equation}
removing this event can only change the product state value of the graph by at most~$\eps/4$.

Next, we apply~\Cref{cor:BH-nonuniform-easy-to-use} to bound the value of~$H_G$ over general states in the soundness case.
To do so, let us write $E$ for the edges of~$G$,
and define $p_{v, x} = \tfrac{1}{2} \Pr_{\boldsymbol{e} \sim E}[\text{$\boldsymbol{e}$ contains $(v, x)$}]$.
Note that the distribution of a random edge $(\bv, \bx \circ \pi_{\bv\rightarrow \bu})$ and $(\bw, \by \circ \pi_{\bw \rightarrow \bu})$
is symmetric and never contains self-loops, and so
$p_{v, x} = \Pr[(\bw, \by \circ \pi_{\bw\rightarrow \bu}) = (v, x)]$.
But the UG instance $\calI$ is biregular,
and so $\bw$ is just a uniformly random element of~$V$,
and $\by$ is just a uniformly random string in $\{-1, 1\}^M$.
Hence, $p_{v, x} =|V|^{-1}2^{-M}$ for each $v, x$,
and so $\max_{v, x}\{p_{v, x}\} = |V|^{-1}2^{-M}$.

The next thing we have to bound to apply~\Cref{cor:BH-nonuniform-easy-to-use}
is the maximum of
\begin{equation*}
\Pr[(\bv, \bx \circ \pi_{\bv\rightarrow \bu}) = (v', x') \mid (\bw, \by \circ \pi_{\bw \rightarrow \bu}) = (w', y')]
\end{equation*}
over all $v', w' \in V$ and $x', y' \in \{-1, 1\}^M$.
Note that if we condition on a fixed value for $\bu$
and on the event that $\bv = v'$,
then this is just the maximum probability that $\bx \circ \pi_{\bv \rightarrow \bu}$
equals a fixed string, given that $\bx$ is $\rho$-correlated but not equal to~$y'$.
Given that $\bx$ is most likely to be $-y'$ since $\rho$ is negative, this probability is
\begin{equation*}
\tfrac{1}{1 - (\tfrac{1}{2} + \tfrac{1}{2} \rho)^M} \cdot (\tfrac{1}{2} - \tfrac{1}{2} \rho)^M \leq 2 \cdot  (\tfrac{1}{2} - \tfrac{1}{2} \rho)^M,
\end{equation*}
where we normalized by $1 - (\tfrac{1}{2} + \tfrac{1}{2} \rho)^M$ due to the condition that $\bx \neq \by$.
Then averaging over $\bu$ and $\bv$ can only decrease this bound.

\ignore{
Next, let $A$ be the matrix such that
\begin{equation*}
A_{(v, x), (w, y)} = \Pr_{((\bv', \bx'), (\bw', \by')) \sim E}[(\bv', \bx') = (v, x) \mid (\bw', \by') = (w, y)].
\end{equation*}
$A$ can be viewed as the transition matrix of a random walk on the vertices of~$G$,
in which the probability of transitioning from vertex $(w, y)$ to $(v, x)$ is given by $A_{(v, x), (w, y)}$.
Let $(\bv_0, \bx_0)$ be a uniformly random starting vertex,
and let $(\bv_1, \bx_1)$ and $(\bv_2, \bx_2)$ be the next two steps of the walk.
Then
\begin{equation*}
\tr[A^2] = \sum_{(w, x)} \Pr[(\bv_2, \bx_2) = (w, x) \mid (\bv_0, \bx_0) = (w, x)].
\end{equation*}
As a result, because $(\bv_0, \bx_0)$ is uniformly random,
\begin{equation*}
\tr[A^2] \Vert p \Vert_2^2
= \sum_{(w, x)} \Pr[(\bv_0, \bx_0) = (\bv_2, \bx_2) = (w, x)]
= \Pr[(\bv_0, \bx_0) = (\bv_2, \bx_2)].
\end{equation*}
Let us condition this probability on fixed values of $(v_0, x_0)$ and $(v_1, x_1)$.
Then we can bound the probability that $(\bv_2, \bx_2) = (v_0, x_0)$
by the probability that $\bx_2 = x_0$.
But given $x_1$, $\bx_2$ will be generated by taking a $\rho$-correlated sample $\by \sim_{\rho} x_1$
and applying some permutation to $\by$.
So we can bound this by the maximum probability that a~$\rho$-correlated sample equals a fixed string, which is
\begin{equation*}
\tfrac{1}{1 - (\tfrac{1}{2} + \tfrac{1}{2} \rho)^M} \cdot (\tfrac{1}{2} - \tfrac{1}{2} \rho)^M \leq 2 \cdot  (\tfrac{1}{2} - \tfrac{1}{2} \rho)^M,
\end{equation*}
where we recall that all edge weights are normalized by $1 - (\tfrac{1}{2} + \tfrac{1}{2} \rho)^M$
because we removed self-loops.
}

Now we can apply \Cref{cor:BH-nonuniform-easy-to-use},
which states that
\begin{equation*}
\heis(G) \leq \prodval(G) + 20\cdot (2 \cdot  (\tfrac{1}{2} - \tfrac{1}{2} \rho)^M)^{1/8} + \tfrac{1}{|V| 2^M},
\end{equation*}
which is at most $\eps/4$ by our choice of~$M$.
Hence,
\begin{equation}\label{eq:m-fact-2}
\heis(G) \leq \tfrac{1}{4} - \tfrac{1}{4} F^*(3, \rho) + \eps,
\end{equation}
which completes the proof.
\end{proof}

\ignore{
\section{Unique Games Hardness of the Quantum Heisenberg Model}

\subsection{Definitions and Preliminaries}

In this section, we apply \Cref{thm:vector-borell} to show that solving ``difficult'' instances of the Unique Label Cover reduces to $\alpha_\mathrm{BOV}$-approximating the product state value of the Heisenberg model. Assuming the Unique Games conjecture of Khot \cite{Kho02}, this yields a $\NP$-hardness result for product state approximation of the Heisenberg model.

\begin{definition}[Unique Label Cover]\label{def:unique-games}
The Unique Label Cover problem, $\mathcal L(V, W, E, [M], \{\sigma_{v,w}\}_{(v,w) \in E})$ is defined as follows: Given is a biregular, bipartite graph with left side vertices $V$, right side vertices $W$, and a set of edges $E$. The goal is to assign one `label' to every vertex of the graph, where $[M]$ is the set of allowed labels. The labeling is supposed to satisfy certain constraints given by bijective maps $\sigma_{v,w} : [M] \rightarrow [M]$. There is one such map for every edge $(v,w) \in E$. A labeling $L : V \cup W \rightarrow [M]$ `satisfies' an edge $(v,w)$ if
$$\sigma_{v,w}(L(w)) = L(v)$$
\end{definition}
	
\begin{conjecture}[Unique Games Conjecture \cite{Kho02}]
For any $\eta, \gamma > 0$, there exists a constant $M = M(\eta, \gamma)$ such that it is $\NP$-hard to distinguish whether a Unique Label Cover problem with label set size $M$ has optimum at least $1 - \eta$ or at most $\gamma$.
\end{conjecture}

Our hardness result is stated as follows,

\begin{theorem}[UG-Hardness of Approximating Quantum \maxcut]\label{thm:ug-hardness}
For any $\rho \in [-1,0)$ and $\epsilon > 0$, there exists an instance of the Heisenberg Hamiltonian such that deciding whether the maximum energy is greater than $\frac{1- \rho}{4} - \epsilon$ or less than $\frac{1 - F^\star(3, \rho)}{4} + \epsilon$ is Unique Games-Hard. In more standard notation, we say that it is UG-Hard to $(\frac{1 - F^\star(3,\rho)}{4} + \epsilon, \frac{1- \rho}{4} - \epsilon)$-approximate Quantum \maxcut.
\end{theorem}

\begin{remark}
$\frac{1 - F^\star(3,\rho)}{4}$ is exactly the product state value obtained by the rounding algorithm of \cite{GP19} on the integrality gap instance in \Cref{sec:integrality-gaps}.
\end{remark}

\begin{remark}
Minimizing the ratio $\frac{1-\rho}{4}/\frac{1-F^\star(3,\rho)}{4}$ w.r.t $\rho$ yields $\alpha_\mathrm{BOV}$, which shows \Cref{thm:main-inapprox}.
\end{remark}

To show this result, we slightly generalize the framework of showing Unique Games hardness results for CSPs, developed in \cite{KKMO07} for the case of classical $\maxcut$. In that work, the authors show that in order to show hardness of approximation results for a general CSP over domain $\{-1,1\}$, it suffices to develop a ``Dictator-vs.-No-Notables'' test (also known as Dicator-vs.-Quasirandomness). Such a test distinguishes between ``dictators'', which are functions of the form $f(x_1,\dots,x_n) = x_i$, and functions which have no influential coordinates.fi

\begin{definition}[$(\alpha,\beta)$-Dictator-vs.-No-Notables \cite{KKMO07,OD14}]\label{def:dictator-no-notables}
Let $\Psi$ be a finite set of predicates over the domain $\Omega = \{-1,1\}$. Let $0 < \alpha < \beta \leq 1$ and let $\lambda : [0,1] \rightarrow [0,1]$ satisfy $\lambda(\epsilon) \rightarrow 0$ as $\epsilon \rightarrow 0$. Suppose that for each $n \in \mathbb N^+$ there is a local tester for functions $f : \{-1,1\}^n \rightarrow \{-1,1\}$ with the following properties:
\begin{itemize}
	\item \textbf{Completeness}. If $f$ is a dictator then the test accepts with probability at least $\beta$.
	\item \textbf{Soundness}. If $f$ has no $(\epsilon,\epsilon)$-notable coordinates -- i.e., $\Inf_i^{\leq 1-\epsilon}[f] \leq \epsilon$ for all $i \in [n]$ then the test accepts with probability at most $\alpha + \lambda(\epsilon)$.
	\item The tester’s accept/reject decision uses predicates from $\Psi$; i.e., the tester can be viewed as an instance of Max-CSP($\Psi$).
\end{itemize}
\end{definition}

In the case of \maxcut, the appropriate test is the Long Code test, which queries $f$ at two $\rho$-correlated points $x,x'$ then checks $f(x) \not= f(x')$. Completeness is straightforward to verify. However, the showing soundness requires the Majority is Stablest theorem, which bounds the maximum stability of a function with no influential coordinates. The proof of Majority is Stablest uses an ``invariance principle'' to relate functions over discrete domains (e.g. $\{-1,1\}^n$) and no highly influential coordinates to functions over continuous domains ($\mathbb R^n$).

\begin{theorem}[Invariance Principle \cite{MOO10, OD14}]\label{thm:invariance}
Fix $\epsilon, \gamma \in [0,1]$ and set $d = \log\tfrac{1}{\epsilon}$. Let $f$ be a $n$-variate multilinear polynomial 
$$f(x) = \sum_{S \subseteq N} \hat f(S) \prod_{i \in S} x_i$$
such that $\Var[f] \leq 1$, $\Var[f^{> d}] < (1-\gamma)^{2d}$, and $\Inf^{\leq d}_i[f] \leq \epsilon$ for all $i \in [n]$. Let $\bx$ be a uniformly random string over $\{-1,1\}^n$ and $\by$ a $n$-dimensional standard Gaussian random variable. Assume $\psi : \mathbb R \rightarrow \mathbb R$ is $c$-Lipschitz continuous, then
$$|\E[\psi(f(\bx))] - \E[\psi(f(\by))] \leq O(c) \cdot 2^k \epsilon^{1/4}$$
\end{theorem}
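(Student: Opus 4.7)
The plan is to follow the classical Lindeberg-style hybrid argument. I would introduce intermediate random vectors $\bz^{(i)} = (\by_1, \ldots, \by_i, \bx_{i+1}, \ldots, \bx_n)$ that interpolate between the all-Boolean input $\bx$ and the all-Gaussian input $\by$, and then telescope
\[
\E[\psi(f(\bx))] - \E[\psi(f(\by))] = \sum_{i=1}^n \bigl(\E[\psi(f(\bz^{(i-1)}))] - \E[\psi(f(\bz^{(i)}))]\bigr),
\]
reducing the task to bounding the expected change from a single coordinate swap and then summing.

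For each coordinate $i$ I would write $f(z) = A_i(z_{\neq i}) + z_i \cdot B_i(z_{\neq i})$ using multilinearity, where $B_i$ captures the $i$-th ``derivative'' so that $\E[B_i^2]$ is exactly $\Inf_i[f]$. Since $\bx_i$ and $\by_i$ are independent of the remaining coordinates and share their first two moments (both mean $0$, variance $1$), a Taylor expansion of $\psi$ around $A_i$ to second order makes the zeroth- and first-order terms cancel exactly between the two hybrids, leaving a remainder governed by $\psi'''$ applied at an intermediate point and the third absolute moment of $z_i B_i$. Summing over $i$ then gives a total error controlled by $\max_i \Inf_i[f]^{1/2}$ multiplied by $\sum_i \Inf_i[f] = \sum_S |S|\,\|\hat f(S)\|^2$, and the degree-truncation hypothesis $\Var[f^{>d}] < (1-\gamma)^{2d}$ is used to argue that we may assume $f$ is essentially of degree at most $d$, so the latter sum is at most $d \cdot \Var[f] \le d$. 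Splitting $f = f^{\le d} + f^{>d}$ and handling the tail via Lipschitz continuity of $\psi$ plus the variance bound lets me absorb the cost of the truncation into the $(1-\gamma)^{2d}$ factor.

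The remaining technical hurdle is that $\psi$ is assumed only Lipschitz, not $\calC^3$, so one cannot directly invoke the third-derivative Taylor bound. To fix this I would mollify $\psi$ by convolving with a narrow Gaussian kernel at scale $\delta$, apply the hybrid argument to the smoothed $\psi_\delta$ (whose third derivative is bounded by $O(c/\delta^2)$), and then bound the approximation error $|\E[\psi(f(\cdot))] - \E[\psi_\delta(f(\cdot))]|$ by $O(c\delta)$ using the Lipschitz property. For the low-degree part I would use the standard $(2,4)$-hypercontractivity bound $\|g\|_4 \le 3^{d/2} \|g\|_2$ on degree-$d$ multilinear polynomials (valid both on the uniform cube and on Gaussian space) to control the third moment $\E|z_i B_i|^3$ in terms of $\Inf_i[f]^{3/2}$ with only a $2^{O(d)}$ overhead.

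The main obstacle will be optimizing the smoothing parameter $\delta$: small $\delta$ gives a faithful approximation of $\psi$ but amplifies $\|\psi_\delta'''\|_\infty$, so the hybrid error and the mollification error pull in opposite directions. Choosing $\delta$ as an appropriate power of the maximum low-degree influence $\epsilon$, and tracking the $2^{O(d)}$ factors from hypercontractivity through this optimization, is what yields the stated $O(c)\cdot 2^{O(d)} \epsilon^{1/4}$ rate; getting a cleaner exponent than $1/4$ would require a higher-order (and therefore more costly) Taylor expansion, which I would not attempt here.
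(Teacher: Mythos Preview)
The paper does not prove this statement: Theorem~\ref{thm:invariance} is quoted from \cite{MOO10,OD14} as a black-box result (and in fact appears only inside an \texttt{\textbackslash ignore} block that was superseded by the vector-valued version, Theorem~\ref{thm:vector-invariance}, which is likewise cited without proof). So there is no ``paper's own proof'' to compare against.

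That said, your sketch is precisely the standard Lindeberg replacement argument used in the original sources: the telescoping hybrid $\bz^{(i)}$, the decomposition $f = A_i + z_i B_i$ with $\E[B_i^2] = \Inf_i[f]$, matching the first two moments so the Taylor expansion cancels to second order, hypercontractivity to control $\E|B_i|^3$ with the $2^{O(d)}$ blow-up, Gaussian mollification of $\psi$ at scale $\delta$ to obtain a $C^3$ test function, and finally optimizing $\delta$ against $\epsilon$ to extract the $\epsilon^{1/4}$ rate. This is exactly how \cite{MOO10} proceeds (see also \cite[Chapter~11]{OD14}), so your outline is correct and faithful to the cited proofs.
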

The Majority is Stablest theorem is the following,
\begin{theorem}[Majority is Stablest \cite{MOO10, KKMO07}\footnote{This theorem was proved in \cite{MOO10} for positive $\rho$ and the condition $\Inf_i(f) \leq \delta$. The version we state here is due to \cite{KKMO07} and was originally given as a conjecture.}]\label{thm:mis}
Fix $\rho \in (-1,0]$. For any $\epsilon > 0$, there is a small enough $\delta = \delta(\epsilon, \rho)$ and large enough $k = k(\epsilon, \rho)$ such that if $f : \{-1,1\}^n \rightarrow [-1,1]$ is any function satisfying $\Inf^{\leq k}_i[f] \leq \delta$ for all $i \in [n]$, then
$$\mathbb S_\rho(f) \geq \Lambda_\rho(\mu) + \epsilon$$
\end{theorem}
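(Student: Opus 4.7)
The plan is to adapt the strategy from the vector-valued dictator test (Theorem~\ref{thm:dictator-test}), specialized to $k=1$, thereby reducing the statement to Borell's isoperimetric theorem (Theorem~\ref{thm:borell}) via the invariance principle (Theorem~\ref{thm:invariance}). First I would replace $f$ by the smoothed function $g := \T_{1-\gamma}f$ for a small parameter $\gamma = \gamma(\epsilon,\rho)$ to be chosen. Smoothing achieves two things: it yields the rapid high-degree decay $\Var[g^{>d}] \leq (1-\gamma)^{2d}$ demanded by Theorem~\ref{thm:invariance}, and it only decreases influences, so the hypothesis $\Inf^{\leq k}_i[f] \leq \delta$ transfers to $g$. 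A direct Fourier calculation analogous to equation~\eqref{eq:apply-gamma} bounds $|\mathbb S_\rho(f) - \mathbb S_\rho(g)| = O(\gamma/(1+\rho))$, so it suffices to lower bound $\mathbb S_\rho(g)$.

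Next, I would rewrite $\mathbb S_\rho(g) = \E[(\T_{\sqrt{-\rho}}g(\bx))\cdot(\T_{-\sqrt{-\rho}}g(\bx))]$, and, using the even/odd decomposition together with Lemma~\ref{lem:decreasing-stab} specialized to the scalar case, reduce to estimating a single expectation of the form $\pm\E[(\T_{\sqrt{-\rho}}g(\bx))^2]$ on Boolean inputs. Because $g$ maps $\{-1,1\}^n$ into $[-1,1]$, the argument of the squaring function lies in $[-1,1]$ on Boolean inputs, so one may replace squaring by its $2$-Lipschitz truncation $\widetilde{\psi}(t) = (\mathrm{clip}_{[-1,1]}(t))^2$ without changing the Boolean expectation. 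Applying Theorem~\ref{thm:invariance} with test function $\widetilde{\psi}$ transfers this quantity from Boolean inputs $\bx$ to Gaussian inputs $\by$ with error $O(\delta^{\Omega(\gamma)})$, which can be made smaller than $\epsilon/4$ by choice of $\delta$.

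On the Gaussian side $g(\by)$ may leave $[-1,1]$, so I would round to $\tilde g(y) := \mathrm{clip}_{[-1,1]}(g(y))$, exactly mirroring Step~3 of the proof of Theorem~\ref{thm:dictator-test}: the rounding error $\E|g(\by) - \tilde g(\by)|$ is controlled by another invocation of Theorem~\ref{thm:invariance} against the $1$-Lipschitz test function $y \mapsto \mathrm{dist}(y,[-1,1])$, whose Boolean expectation is identically zero since $g$ is $[-1,1]$-valued on $\{-1,1\}^n$. This yields a further $O(\delta^{\Omega(\gamma)})$ error. Finally, $\tilde g : \R^n \to [-1,1]$ has mean $\mu' = \mu + o_{\gamma,\delta}(1)$, and Borell's theorem (Theorem~\ref{thm:borell}, combined with its standard mean-constrained refinement which identifies the optimizer as a halfspace $x_1 \geq t(\mu')$) gives $\E_{\bx \sim_\rho \by}[\tilde g(\bx)\tilde g(\by)] \geq \Lambda_\rho(\mu')$. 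By continuity of $\mu \mapsto \Lambda_\rho(\mu)$, this is within $\epsilon/4$ of $\Lambda_\rho(\mu)$, and choosing $\gamma, \delta, k$ sufficiently small absorbs all accumulated errors into the target $\epsilon$.

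The main obstacle will be the truncation/rounding step: one must carefully sequence smoothing, invariance, and clipping so that each Lipschitz test function invoked by Theorem~\ref{thm:invariance} has bounded Lipschitz constant that does not blow up the error budget, and so that the final mean $\mu'$ remains close to the original $\mu$. A secondary technicality is invoking the mean-$\mu$ version of Borell's inequality, since Theorem~\ref{thm:borell} as stated gives only the mean-zero optimizer; this extension is standard and follows by considering halfspaces $\{x : x_1 \geq t(\mu)\}$ with threshold chosen so that the halfspace has the desired mean, plus the fact that the Gaussian noise operator commutes with affine shifts in the threshold direction.
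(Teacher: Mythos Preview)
The paper does not prove this theorem; it appears only inside a commented-out block and is stated there purely as a citation to \cite{MOO10,KKMO07}. The nearest thing the paper actually proves is Theorem~\ref{thm:dictator-test}, whose $k=1$ specialization yields only the mean-zero conclusion $\Stab_\rho[f]\ge F^*(1,\rho)-\epsilon$, i.e.\ $\Lambda_\rho(0)-\epsilon$, not the mean-$\mu$ bound.

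Your proposal has a genuine internal inconsistency. You invoke Lemma~\ref{lem:decreasing-stab} (the odd reduction) in order to rewrite the stability as $\pm\E[(\T_{\sqrt{-\rho}}g)^2]$; that identity is valid \emph{only} after passing to the odd part, since for general $g$ one has $\T_{-\sqrt{-\rho}}g\neq -\T_{\sqrt{-\rho}}g$. But the odd part has mean zero. Yet in your final step you assert that the clipped Gaussian function $\tilde g$ has mean $\mu'=\mu+o_{\gamma,\delta}(1)$ and invoke the mean-constrained Borell inequality to recover $\Lambda_\rho(\mu)$. These two moves are incompatible: once you have symmetrized to the odd part, the mean is identically zero and Borell can give you at best $\Lambda_\rho(0)$. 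This is exactly why the paper's proof of Theorem~\ref{thm:dictator-test} produces only the mean-independent value $F^*(k,\rho)$.

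If you want the general-$\mu$ statement you must abandon the odd reduction and follow the original \cite{MOO10} strategy: apply invariance to a Lipschitz functional that encodes $\Stab_\rho[g]$ \emph{without} first symmetrizing (e.g.\ via the two-point formulation $\E_{\bx\sim_\rho\by}[\psi(g(\bx),g(\by))]$ with a truncation of $(s,t)\mapsto st$), then clip on the Gaussian side, and only then invoke mean-constrained Borell on a function whose mean is genuinely close to $\mu$.
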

Here, $\Lambda_\rho(\mu)$ is the \textit{Gaussian quadrant probability function} \cite{OD14}. Now, given a $(\alpha,\beta)$-Dictator-vs.-No-Notables test one can show this yields a UG-hardness result for a related constraint satisfaction problem.

\begin{theorem}[\cite{KKMO07}]\label{thm:classical-ug-hardness}
Fix a CSP over domain $\Omega = \{-1,1\}$ with predicate set $\Psi$. Suppose there exists an $(\alpha,\beta)$-Dictator-vs.-No-Notables test using predicate set $\Psi$. Then for all $\delta > 0$, it is ``UG-hard'' to $(\alpha+\delta,\beta- \delta)$-approximate $\maxcsp(\Psi)$.
\end{theorem}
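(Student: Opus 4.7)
The plan is to follow the standard KKMO composition framework, reducing Unique Label Cover to $\maxcsp(\Psi)$ via the given dictator test. Let $T$ be the $(\alpha,\beta)$-Dictator-vs.-No-Notables tester; each invocation of $T$ on a function $f:\{-1,1\}^n\to\{-1,1\}$ queries $f$ at a (random) tuple of points and applies a predicate $\psi\in\Psi$. Given a biregular UG instance $\mathcal{L}(V,W,E,[M],\{\sigma_{v,w}\})$ with parameters $\eta,\gamma$ to be fixed, I would produce a $\maxcsp(\Psi)$ instance $\mathcal{J}$ whose variables are indexed by $W\times\{-1,1\}^M$. A random constraint of $\mathcal{J}$ is generated as follows: pick $\bv\in V$ uniformly, independently pick two neighbors $\bw_1,\bw_2\sim N(\bv)$, sample a random tester instance which prescribes a predicate $\bpsi\in\Psi$ together with query points $\bx,\by\in\{-1,1\}^M$ (for simplicity I will describe the width-$2$ case; higher widths are identical), and add the constraint $\bpsi\bigl(z_{\bw_1,\bx\circ\sigma_{\bv,\bw_1}},\,z_{\bw_2,\by\circ\sigma_{\bv,\bw_2}}\bigr)$, where $(u\circ\sigma)_i=u_{\sigma(i)}$. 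An assignment to $\mathcal{J}$ is equivalently a family of Long Codes $f_w:\{-1,1\}^M\to\{-1,1\}$ for each $w\in W$.

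For completeness, starting from a labeling $L:V\cup W\to[M]$ satisfying $1-\eta$ of the edges, I set $f_w(x)=x_{L(w)}$. Conditioning on the event that $L$ satisfies both UG edges $(\bv,\bw_1),(\bv,\bw_2)$, the permutations align so that $\bx\circ\sigma_{\bv,\bw_j}$ is effectively queried at coordinate $L(\bv)$, making the constraint exactly the tester $T$ applied to the dictator $x\mapsto x_{L(\bv)}$, which accepts with probability at least $\beta$. A union bound gives CSP value at least $(1-2\eta)\beta\ge\beta-\delta$ for $\eta$ small enough.

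For soundness I argue the contrapositive. Assume an assignment $\{f_w\}$ achieves CSP value $\geq\alpha+\delta$. For each $v\in V$ define the averaged function $g_v(x)=\E_{\bw\sim N(v)}[f_{\bw}(x\circ\sigma_{v,\bw})]$, taking values in $[-1,1]$. The CSP value equals $\E_v[\mathrm{Val}_T(g_v)]$ because each tester predicate is a multilinear expression in its query values and the two neighbors $\bw_1,\bw_2$ are i.i.d. conditional on $\bv$. By averaging, an $\Omega(\delta)$ fraction of $v$'s are ``good'' in the sense that $\mathrm{Val}_T(g_v)\ge\alpha+\delta/2$. By the tester soundness (contrapositive of \Cref{def:dictator-no-notables}), every good $v$ admits a notable coordinate $i_v\in[M]$, i.e.\ $\Inf_{i_v}^{\le 1/\epsilon}[g_v]\ge\epsilon$ for an appropriate $\epsilon=\epsilon(\delta)$. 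Using convexity of $\|\cdot\|_2^2$ on Fourier coefficients exactly as in \Cref{thm:ug-hardness},
\[
    \epsilon \le \Inf_{i_v}^{\le 1/\epsilon}[g_v] \le \E_{\bw\sim N(v)}\bigl[\Inf_{\sigma_{v,\bw}(i_v)}^{\le 1/\epsilon}[f_{\bw}]\bigr],
\]
so an $\Omega(\epsilon)$ fraction of $\bw\sim N(v)$ (the ``good neighbors'') have $\sigma_{v,\bw}(i_v)$ notable in $f_{\bw}$. Define a randomized labeling $\bL$: for each good $v$ set $\bL(v)=i_v$; for each $w\in W$ let $S_w=\{j:\Inf_j^{\le 1/\epsilon}[f_w]\ge\epsilon/2\}$, pick $\bL(w)$ uniformly from $S_w$ if nonempty (else arbitrary); for all other vertices assign uniformly random labels. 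Since $|S_w|\le 2/\epsilon^2$ by \Cref{prop:influence-bound}, and since for any good $v$ and any good neighbor $w$ the ``correct'' label $\sigma_{v,w}(i_v)$ lies in $S_w$, the expected fraction of UG edges satisfied is at least $\Omega(\delta\cdot\epsilon\cdot\epsilon^2)$. Choosing $\gamma$ smaller than this lower bound contradicts the UGC for label set size $M=M(\eta,\gamma)$, proving the theorem.

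The main obstacle I anticipate is the \emph{influence-decoding} step: the bound $\E_{\bw}[\Inf_{\sigma_{v,\bw}(i_v)}^{\le 1/\epsilon}[f_{\bw}]]\ge\Inf_{i_v}^{\le 1/\epsilon}[g_v]$ is delicate because $g_v$ averages \emph{permuted} codes, so one must expand the Fourier coefficient $\hat g_v(T)=\E_{\bw}[\hat f_{\bw}(\sigma_{v,\bw}(T))]$ and apply convexity of squared norm coefficient-wise. A secondary subtlety is that \Cref{def:dictator-no-notables} is stated for $f:\{-1,1\}^n\to\{-1,1\}$ whereas $g_v$ takes values in $[-1,1]$; this is standard and handled either by rounding $g_v$ probabilistically into $\pm 1$ (which preserves test expectations if the tester's accept probability is a multilinear function of the query answers, as is the case for predicates $\Psi$) or by noting that the usual soundness statement for dictator tests extends verbatim to $[-1,1]$-valued functions via Fourier expansion.
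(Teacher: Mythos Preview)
Your proposal is correct and follows the standard KKMO composition framework. The paper does not give its own proof of this particular theorem---it is stated as a citation from~\cite{KKMO07} (and in fact lives inside an \texttt{\textbackslash ignore} block)---but the paper's proof of the vector-valued analogue (\Cref{thm:ug-hardness}, and the closely related \Cref{thm:quant-ug-hardness} in the same commented-out section) follows exactly the outline you describe: the same reduction via $V$-side averaging to form $g_{\bu}$, the same completeness calculation yielding $(1-2\eta)\beta$, the same contrapositive soundness argument with the Fourier/convexity step $\Inf_i^{\le m}[g_u]\le \E_{\bv}[\Inf_{\pi_{u\to\bv}(i)}^{\le m}[f_{\bv}]]$, and the same randomized decoding with the $|S_w|\le 2m/\delta$ bound from \Cref{prop:influence-bound}.
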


\subsubsection{Example: Max Cut}

Assuming the existence of a test satisfying \Cref{def:dictator-no-notables}, let's consider how \Cref{thm:classical-ug-hardness} works for classical $\maxcut$. Given a Unique Label Cover instance on graph $G = (V,E)$ with labels $[M]$, we observe that a labeling to vertex, i.e. $L(v)$ for $v \in V$, can be represented by the dicator $f_{L(v)}(x) = x_{L(v)}$, where $f : \{-1,1\}^M \rightarrow \{-1,1\}$. When $G$ has a satisfying assignment, given two neighbors $w,w'$ of $v$, we have that $L(v)) = \sigma_{(v,w)}(L(w)) = \sigma_{(v,w')}(L(w'))$. Therefore, we have that,
$$f_{L(v)}(x) = f_{L(w)}(x \circ \sigma_{(v,w)}) = f_{L(w')}(x \circ \sigma_{(v,w')})$$
where $x \circ \sigma$ is used to represent the string $(x_{\sigma(1)},x_{\sigma(2)},\dots,x_{\sigma(n)})$. In particular, $g_w = f_{L(w)}(x \circ \sigma_{(v,w)})$ yields the same dicator as $g_w'$ and applying the $(\alpha,\beta)$-Dictator-vs.-No-Notables test yields success with probability $\geq \beta$. Note that since $\maxcut$ is $\maxcsp(\not=)$, we use the $\not=$ predicate.

On the other hand, if the test succeeds with probability $\alpha + \lambda(\epsilon)$, then for some constant fraction of $v \in V$, the test passes with high probability on neighbors $w,w'$ of $v$. By the second property in \Cref{def:dictator-no-notables}, $g_w, g_{w'}$ have some influential coordinates. Decoding these coordinates as a labeling for $v$ yields a labeling for $G$ which satisfies a large fraction of the edges.

\subsection{Vector-Valued Majority is Stablest}

\subsubsection{Definitions}

Throughout this section, we'll work extensively with functions expressed as multilinear polynomials. Given the function $f : \Omega_1 \times \cdots \times \Omega_n \rightarrow \mathbb R$, the corresponding polynomial is of the form
$$Q(x) = \sum_S c_S \chi_S$$
where $\chi_S = \Pi_{i \in S} x_i$, and $x_i$ is a variable over $\Omega_i$. The \textit{degree} of $Q(x)$ is $\max \{|S| : c_S \not= 0\}$. We also use the notation
$$Q^{\leq d}(x) = \sum_{S : |S| \leq d} c_S \chi_S$$
These polynomials will be evaluated on independent, identically distributed collections of random variables $(x_1,\dots,x_n)$. Furthermore, these variables are restricted to be \textit{orthonormal}. Then, we are able to obtain the following formulas, familiar from the analysis of boolean functions,

\begin{proposition}\label{prop:fourier-properties}
Let $\bx = (x_1,\dots,x_n)$ be a collection of orthonormal, i.i.d. random variables. Then,
$$\E[Q(\bx)] = c_\emptyset \qquad \E[Q(\bx)^2] = \sum_S c^2_S \qquad \Var[Q(\bx)] = \sum_{S : |S| > 0} c^2_S$$
$$\Inf_i[Q(\bx)] = \sum_{S : i \in S} c^2_S \qquad \T_\rho[Q(\bx)] = \sum_S \rho^{|S|} c_S \chi_S \qquad \Stab_{\rho}[Q(\bx)] = \sum_S \rho^{|S|} c^2_S$$
\end{proposition}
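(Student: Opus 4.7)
The plan is to treat the identities as a direct consequence of the fact that the monomials $\{\chi_S\}$ form an orthonormal family in $L^2$ under the product distribution, and then to read off each of the six identities by linearity.

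First I would establish the key lemma that $\E[\chi_S \chi_T] = \mathbf{1}[S = T]$. Because $\bx$ is a vector of orthonormal i.i.d.\ variables, we have $\E[x_i x_j] = \delta_{ij}$; applied with $i = j$, this gives $\E[x_i^2] = 1$, and applied with $i \ne j$, independence gives $\E[x_i]\E[x_j] = 0$, so (using that the $x_i$ are identically distributed) $\E[x_i] = 0$ for every $i$. Now $\chi_S \chi_T = \prod_{i \in S \cap T} x_i^2 \cdot \prod_{i \in S \triangle T} x_i$, so by independence its expectation factors as a product of $1$'s (over $S \cap T$) and $0$'s (over $S \triangle T$), which vanishes unless $S \triangle T = \emptyset$, i.e.\ $S = T$. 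The specialization $T = \emptyset$ gives $\E[\chi_S] = \mathbf{1}[S = \emptyset]$.

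From these two facts, the first four identities are immediate. Linearity gives $\E[Q(\bx)] = \sum_S c_S \E[\chi_S] = c_\emptyset$. Expanding the square and applying orthonormality gives $\E[Q(\bx)^2] = \sum_{S,T} c_S c_T \E[\chi_S \chi_T] = \sum_S c_S^2$. Then $\Var[Q(\bx)] = \E[Q^2] - \E[Q]^2 = \sum_S c_S^2 - c_\emptyset^2 = \sum_{|S| > 0} c_S^2$. For the influence, I would invoke the definition $\Inf_i[Q(\bx)] = \E_{\bx_{-i}}[\Var_{x_i}[Q(\bx) \mid \bx_{-i}]]$: fixing $\bx_{-i}$, the monomials $\chi_S$ with $i \notin S$ become constants and those with $i \in S$ equal $x_i \cdot \chi_{S \setminus \{i\}}(\bx_{-i})$, so the inner variance equals $\sum_{S \ni i} c_S^2 \chi_{S \setminus \{i\}}(\bx_{-i})^2$; averaging over $\bx_{-i}$ and using orthonormality again produces $\sum_{S \ni i} c_S^2$.

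The remaining two identities are really definitions phrased in Fourier language. The noise operator $\T_\rho$ is defined on orthonormal polynomial bases by acting as $\rho^{|S|}$ on level-$|S|$ basis elements, so $\T_\rho Q = \sum_S \rho^{|S|} c_S \chi_S$ follows by linearity; one can alternatively verify this by writing $\T_\rho Q(x) = \E_{\by \sim_\rho x}[Q(\by)]$ (using correlated copies in the ambient product space) and expanding. Finally, $\Stab_\rho[Q(\bx)] = \E[\langle Q(\bx), \T_\rho Q(\bx)\rangle] = \sum_{S,T} \rho^{|T|} c_S c_T \E[\chi_S \chi_T] = \sum_S \rho^{|S|} c_S^2$ by the orthonormality lemma one more time. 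No step here is genuinely difficult — the only thing to be a little careful about is deducing $\E[x_i] = 0$ from the combination of orthonormality and independence, and handling the noise operator cleanly (either by taking it as the spectral definition, or by exhibiting a product-space construction of $\rho$-correlated copies under which the identity follows from the same computation).
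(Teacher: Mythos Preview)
Your proof is correct. The paper does not actually supply a proof of this proposition: it is stated as a list of ``formulas, familiar from the analysis of boolean functions'' and left without argument, so there is nothing to compare against beyond noting that your orthonormality-based derivation is exactly the standard one.
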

Note that the above formulas, with the exception of $\T_\rho$, do not depend on the random variables $\bx$, and instead on the polynomial $Q$. $\T_\rho$ should be interpretted as an operator on multilinear polynomials. $Q$ will often correspond to a function $f$, and as a slight abuse of notation, we will use $f$ in place of $Q$ as an argument to the above definitions (e.g. $\Stab_\rho[f]$). In this case, $\hat f(S)$ refers to the coefficient $c_S$. The standard definition for the norm of a function defined over random variables $\bx$ is,
$$\norm{f}_2^2 = \langle f, f \rangle = \E_{x\sim \bx}[f(x)^2]$$
Observe that this is exactly $\sum_S c_S^2 = \sum_S \hat f(S)^2$. The inner product can be extended to two different functions $f,g$ and yields $\langle f, g \rangle = \sum_S \hat f(S) \hat g(S)$. Next, we consider the case of vector-valued functions. We will work most often with functions mapping boolean strings $\{-1,1\}^n$ to $\mathbb R^m$. Such a function can be expressed as $m$ different functions for each output coordinate, written as $f = (f_1,\dots,f_m)$. In general, the definitions in \Cref{prop:fourier-properties} can be applied to each $f_i$, and extended to $f$ by taking the sum. For instance,
\begin{equation}\label{eq:stab}
\Stab_\rho[f] = \sum_{i\in[n]} \Stab_\rho[f_i] = \sum_{i\in[n]} \sum_S \rho^{|S|} {\hat f_i(S)}^2
\end{equation}
We might want to consider vector-valued polynomial, rather than $m$ different polynomials for each $f_i$. We define,
$$f = \sum_S \hat{f}(S) \chi_S$$
where $\hat{f} = (\hat f_1,\dots,\hat f_m)$. Noting that $\norm{(x_1,\dots,x_m)}_2 = \sqrt{x_1^2 + \cdots + x_m^2}$, we can simplify \Cref{eq:stab} as
$$\Stab_\rho[f] = \sum_{i\in[n]} \sum_S \rho^{|S|} {\hat f_i(S)}^2 = \sum_S \rho^{|S|}\norm{\hat{f}(S)}_2^2$$
A special case of one of the above definitions will be ``low-degree'' influences.
$$\Inf^{\leq d}_i[Q(\bx)] = \sum_{S : |S| \leq d, i \in S} c^2_S$$

\subsubsection{Majority is Stablest}

Now, we will generalize the Majority is Stablest theorem to vector-valued functions. Fortunately, a suitable generalization of the invariance principle already exists.

\begin{theorem}[Vector Invariance Principle \cite{IM12}]\label{thm:vector-invariance}
Fix $\tau, \gamma \in [0,1]$ and set $d = \log\tfrac{1}{\tau}$. Let $f = (f_1,\dots,f_m)$ be a $m$-dimensional multilinear polynomial such that $\Var[f_j] \leq 1$, $\Var[f_j^{> d}] < (1-\gamma)^{2d}$, and $\Inf^{\leq d}_i[f_j] \leq \tau$ for each $j \in [m]$ and $i \in [n]$. Let $\bx$ be a uniformly random string over $\{-1,1\}^n$ and $\by$ a $n$-dimensional standard Gaussian random variable. Furthermore, let $\Psi : \mathbb R^m \rightarrow \mathbb R$ be Lipschitz continuous with Lipschitz constant $A$. Then,
$$|\E[\Psi(f(\bx))] - \E[\Psi(f(\by))]| \leq C_mA \tau^{\gamma/18}$$
where $C_m$ is a parameter depending only on $m$.
\end{theorem}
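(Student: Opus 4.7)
The plan is to follow a Lindeberg-style hybrid argument, the vector-valued extension of the scalar invariance principle of Mossel--O'Donnell--Oleszkiewicz. First I would reduce to the low-degree case: write $f = f^{\le d} + f^{>d}$ and use the hypothesis $\Var[f_j^{>d}] \le (1-\gamma)^{2d}$ together with the Lipschitz constant $A$ of $\Psi$ to show that both $\E[\Psi(f(\bx))]$ and $\E[\Psi(f(\by))]$ differ from the corresponding expectations with $f^{\le d}$ by at most $A\sqrt{m(1-\gamma)^{2d}}$, which after our choice $d = \tfrac{1}{18}\log(1/\tau)$ is absorbed into the final bound. So henceforth assume $f$ has degree $\le d$.

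Next I would set up the hybrid sequence $\bz^{(i)} = (\by_1,\ldots,\by_i,\bx_{i+1},\ldots,\bx_n)$, so that $\bz^{(0)}=\bx$ and $\bz^{(n)}=\by$, and telescope
\[
\bigl|\E[\Psi(f(\bx))] - \E[\Psi(f(\by))]\bigr|
\le \sum_{i=1}^n \bigl|\E[\Psi(f(\bz^{(i-1)}))] - \E[\Psi(f(\bz^{(i)}))]\bigr|.
\]
Because $f$ is multilinear, for each coordinate $i$ we can decompose $f(z) = g_i(z_{\sim i}) + z_i h_i(z_{\sim i})$, where $g_i, h_i$ are vector-valued and independent of $z_i$. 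The two hybrids differ only in whether we plug in $\bx_i$ or $\by_i$ into this decomposition. To control the per-coordinate error I would first replace $\Psi$ by a mollification $\Psi_\lambda = \Psi * \varphi_\lambda$ (convolution with a Gaussian of width $\lambda$), so that $\Psi_\lambda$ is $\mathcal{C}^3$ with $\|\nabla^k \Psi_\lambda\|_\infty \lesssim A \lambda^{1-k}$, while $\|\Psi-\Psi_\lambda\|_\infty \lesssim A\lambda$. A third-order Taylor expansion of $\Psi_\lambda$ around the base point $g_i(\widetilde{\bz}^{(i)})$ then yields
\[
\E[\Psi_\lambda(f(\bz^{(i)}))] - \E[\Psi_\lambda(f(\bz^{(i-1)}))]
= O\!\left(\tfrac{A}{\lambda^2}\right) \cdot \E\bigl[(|\by_i|^3 + |\bx_i|^3)\,\|h_i(\widetilde{\bz}^{(i)})\|^3\bigr],
\]
because the zeroth, first, and second moments of $\bx_i$ and $\by_i$ all coincide (so those Taylor terms cancel in expectation).

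The main obstacle is controlling $\sum_i \E\|h_i\|^3$ sharply; this is where the low-degree hypothesis does the real work via hypercontractivity. Applying the Bonami-Beckner inequality componentwise to the degree-$(d-1)$ polynomial $h_i$ gives $\E\|h_i\|^3 \le K^{d} \bigl(\E\|h_i\|^2\bigr)^{3/2}$ for an absolute constant $K$, and $\E\|h_i\|^2 = \sum_j \Inf_i[f_j] = \Inf_i[f] \le \tau$. Summing and using $\sum_i \Inf_i[f] \le d \cdot \Var[f] \le d m$,
\[
\sum_{i=1}^n \bigl(\Inf_i[f]\bigr)^{3/2} \le \sqrt{\tau}\cdot \sum_i \Inf_i[f] \le d m \sqrt{\tau}.
\]
Combining with the mollification error $O(A\lambda)$, the hybrid bound gives
\[
\bigl|\E[\Psi(f(\bx))]-\E[\Psi(f(\by))]\bigr| \;\lesssim\; A\lambda + \tfrac{A}{\lambda^2}\cdot K^d d m \sqrt{\tau}.
\]
Optimizing in $\lambda$ and plugging in $d = \tfrac{1}{18}\log(1/\tau)$ yields a bound of the form $C_m A \tau^{\gamma/(18\ln 2)}$, matching the statement. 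Throughout, the vector-valued character of $f$ only enters through taking $\|\cdot\|_2$ norms of the gradients and Hessians of $\Psi_\lambda$ acting on $h_i \in \R^m$, and by collecting constants that depend on $m$ into the prefactor $C_m$; the key probabilistic/hypercontractive estimates are applied to each scalar coordinate separately.
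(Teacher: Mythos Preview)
The paper does not prove this theorem; it is quoted verbatim as a black-box result from Isaksson--Mossel~\cite{IM12} and used without proof in the dictatorship-test section. So there is no ``paper's own proof'' to compare against. That said, your sketch is precisely the Lindeberg replacement strategy that Isaksson and Mossel use (and that Mossel--O'Donnell--Oleszkiewicz use in the scalar case): truncate to degree $d$ using the tail-variance hypothesis, mollify $\Psi$, swap coordinates one at a time with a third-order Taylor expansion, and control the residual via Bonami--Beckner hypercontractivity applied to each coordinate $f_j$. The vector-valued aspect enters exactly as you say, only through summing over $j$ and absorbing $m$-dependent constants into $C_m$.

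One point to be more careful about: your final paragraph optimizes $\lambda$ and asserts the result ``matches the statement,'' but you have not actually shown where the exponent $\gamma$ comes from in the final bound. The Lindeberg piece, after your optimization, is of order $A\,(K^d\, d m \sqrt{\tau})^{1/3}$, which by itself has nothing to do with $\gamma$; the $\gamma$ enters only through the truncation error $A\sqrt{m}\,(1-\gamma)^{d}$. With $d = \Theta(\log(1/\tau))$ this truncation term becomes $\tau^{\Theta(\gamma)}$, and it is this term that dictates the exponent $\gamma/(18\ln 2)$ once the constant in $d$ is chosen so that the Lindeberg term is at least as small. Your write-up folds this balancing into one sentence; in a full proof you would need to track the hypercontractivity constant (effectively $K=9$) and verify that the choice $d = \tfrac{1}{18}\log_2(1/\tau)$ makes the Lindeberg error dominated by the truncation error. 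Also note that the statement you were given sets $d = \log(1/\tau)$, not $\tfrac{1}{18}\log(1/\tau)$; the version actually used later in the paper has the $1/18$ factor, so this is just a discrepancy between two statements of the same cited theorem.
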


Carrying out a proof similar to that of \Cref{thm:mis} yields the following, 

\begin{theorem}[Vector Valued Majority is Stablest]\label{thm:dictator-test}
Fix $\rho \in (-1,0]$. Then for any $\epsilon > 0$, there exists small enough $\delta = \delta(\epsilon, \rho)$ and large enough $k = k(\epsilon, \rho) \geq 0$ such that if $f : \{-1,1\}^n \rightarrow B^{m-1}$ is any function satisfying
$$\Inf^{\leq k}_i[f] = \sum_{j=1}^m \Inf^{\leq k}_i[f_j] \leq \delta \text{ for all }i = 1, \dots, n$$
then
$$\E_{x \sim_\rho y} [f(x)f(y)] = \Stab_\rho[f] \geq F^*(m,\rho) - \epsilon$$
\end{theorem}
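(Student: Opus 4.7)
The plan is to follow the standard Majority-is-Stablest strategy of Mossel--O'Donnell--Oleszkiewicz, adapted to vector-valued functions: use the vector-valued invariance principle (\Cref{thm:vector-invariance}) to reduce the Boolean statement to a Gaussian statement, then finish in the Gaussian world by invoking the conjectured vector-valued Borell's inequality (\Cref{conj:vector-borell-intro}). The first move would be to show that without loss of generality $f$ can be taken to be \emph{odd}: replacing $f$ by $g(x)=\tfrac{1}{2}(f(x)-f(-x))$ does not increase stability, does not increase influences, and keeps the range inside $B^k$, since all the even Fourier levels are being killed and they contribute non-negatively to $\Stab_\rho$ when $\rho<0$. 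This step is formalized in \Cref{lem:decreasing-stab} and is essentially free.

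Next I would smooth $f$ slightly, setting $g = \T_{1-\gamma} f$ for a small $\gamma=\gamma(\eps,\rho)>0$, which drains the high-degree Fourier mass so that $\Var[g_j^{>m}] \le (1-\gamma)^{2m}$ once $m$ is large. The cost in stability is $O(\gamma/(1+\rho))$, which is $O(\eps)$ for a well-chosen $\gamma$. Now I want to exchange the Boolean input $\bx$ for a Gaussian input $\by$. The obstacle is that $\Stab_\rho[g]=\E_{\bx}\|\T_{\sqrt{-\rho}}g(\bx)\|^2$ (using oddness to write it with the same operator on both sides) is a quadratic, not Lipschitz, quantity. I would therefore pair this with a $O(1)$-Lipschitz truncation $\Psi$ that agrees with $\|\cdot\|^2$ on $B^k$. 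Since $g(\bx) \in B^k$ for Boolean $\bx$, $\Stab_\rho[g] = -\E_{\bx}\Psi(\T_{\sqrt{-\rho}}g(\bx))$, and the vector-valued invariance principle applied to $\T_{\sqrt{-\rho}}g$ lets me swap $\bx$ for $\by$ at a cost of at most $C_k \cdot \delta^{\gamma/(18\ln 2)}$.

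Once we are in Gaussian space, I would push $g(\by)$ back into $B^k$ by composing with the radial retraction $\mathcal{R}(v) = v/\max\{1,\|v\|\}$, and bound the displacement $\|g(\by)-\mathcal{R}\circ g(\by)\|$ coordinatewise via another application of the invariance principle (now to a Lipschitz map measuring how far $g$ sticks out of $B^k$); since $g(\bx)\in B^k$ deterministically for Boolean $\bx$, the Boolean side is $0$ and invariance gives smallness of the Gaussian side. Finally, for the $B^k$-valued Gaussian function $\mathcal{R}\circ g$ I would apply \Cref{conj:vector-borell-intro} (repackaged for odd functions and positive $\rho$ as in \Cref{cor:vecor-borell-positive-rho}, since the oddness rewrite has turned the $\rho<0$ stability into a $-\rho>0$ stability on $B^k$). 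The conjecture gives $\stab_{-\rho}[\mathcal{R}\circ g] \ge \stab_{-\rho}[f_{\mathrm{opt}}]$, which by \Cref{prop:opt-formula} evaluates to $F^*(k,-\rho)$ and, after the oddness sign flip, yields $\stab_\rho[f] \ge F^*(k,\rho)-\eps$ upon tracing the three $O(\eps)$-errors through.

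The main obstacle I anticipate is the mismatch in codomains: invariance is clean for Lipschitz test functions, but the natural objects (the inner product $\langle g(\bx),\T_\rho g(\bx)\rangle$ and the radial retraction $\mathcal{R}$) are only well-behaved on $B^k$, while in Gaussian space $g$ can leave $B^k$. Resolving this requires carefully choosing the Lipschitz surrogates $\Psi$ and $\mathcal{R}$ so that they (i) are globally Lipschitz with a controlled constant, (ii) agree with the ``true'' objects on $B^k$ so the Boolean side is computed exactly, and (iii) interact well with the noise operator (since $\T_{\sqrt{-\rho}}$ is a contraction we can move it through $\mathcal{R}$ modulo a Lipschitz error). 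Once these surrogates are in place the chain of three invariance/rounding estimates and one application of the conjecture closes, and tuning $(\gamma,\delta,m)$ against a fixed $\eps$ gives the quantitative bounds claimed in \Cref{thm:dictator-test}.
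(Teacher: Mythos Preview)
Your proposal is correct and follows essentially the same route as the paper: reduce to odd $f$ via \Cref{lem:decreasing-stab}, smooth with $\T_{1-\gamma}$, rewrite $\Stab_\rho[g]$ as $-\E[\Psi(\T_{\sqrt{-\rho}}g)]$ for a Lipschitz truncation $\Psi$ of $\|\cdot\|^2$, invoke the vector invariance principle to pass to Gaussian inputs, round back into the ball with $\mathcal{R}$ controlling the error coordinatewise by a second invariance application, and finish with \Cref{cor:vecor-borell-positive-rho}. One small slip: the inequality you get from the corollary is $\Stab_{-\rho}[\mathcal{R}\circ g] \le \Stab_{-\rho}[f_{\mathrm{opt}}]$ (not $\ge$), which after negation gives the lower bound you want.
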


\begin{proof}
Throughout, we'll use $\bx$ to denote string in $\{-1,1\}^n$ and $\by$ to denote a vector in $\mathbb R^n$. Since the statement of \Cref{thm:vector-invariance} requires a function with low high-degree variance, we consider $g = \T_{1-\gamma} f$. Then for each $j \in [m]$,
$$\Var[g_j^{\geq d}] = \sum_{|S| \geq d} (1-\gamma)^2\hat {g}_j(S)^2 \leq (1-\gamma)^{2d} \Var[g_j^{\geq d}] \leq (1-\gamma)^{2d}$$
Also, $\Inf_i[g_j] \leq \Inf_i[f_j]$. Next, we bound the error in the quantity $\Stab_\rho[f]$ when we consider $g$ in place $f$.
\begin{align}
\ABS{\Stab_\rho[f] - \Stab_\rho[g]} &= \ABS{\sum_S \rho^{|S|} \norm{\widehat{f}(S)}_2^2 - \sum_S (\rho(1-\gamma)^2)^{|S|} \norm{\widehat{f}(S)}_2^2}\nonumber\\
&=\sum_S \ABS{(\rho(1 - (1-\gamma)^2))^{|S|}} \norm{\widehat{f}(S)}_2^2\nonumber\\
&\leq \rho(1-(1-\gamma)^2) \sum_S\norm{\hat f(S)}_2^2\nonumber\\
&\leq \rho(1-(1-\gamma)^2) \E[\norm{f}_2^2]\label{eq:bound1}
\end{align}
$f$ has range $B^{m-1}$ and thus $\E[\norm{f}_2^2] \leq 1$. By choosing $\gamma(\epsilon, \rho)$ to be sufficiently small we can bound this quantity by $\epsilon/2$.
Next, since $\Stab_{\rho}$ is a function of the coefficients of the polynomial corresponding to $f$, we have that $\Stab_\rho[g(\bx)] = \Stab_\rho[g(\by)]$. However, although $g(\bx) \in B^{m-1}$ (since $f(\bx) \in B^{m-1}$ by definition), $g(y)$ may take values outside the unit ball. This prevents us from directly applying \Cref{thm:vector-borell}. We'll instead apply the theorem to the function
$$g'(\by) = \begin{cases} 
g(\by) & \text{if }g(\by) \in B^{m-1}\\
\frac{g(\by)}{\NORM{g(\by)}} & \text{otherwise}
\end{cases}$$
Applying \Cref{thm:vector-borell} yields that $\Stab_{\rho}[g'] \geq \Stab_{\rho}[f_\mathrm{opt}] = F^\star(m,\rho)$. It remains to bound $\ABS{\Stab_{\rho}[g] - \Stab_{\rho}[g']}$.
\begin{align*}
\ABS{\Stab_\rho[g] - \Stab_\rho[g']} &= \ABS{\langle g, \T_\rho g\rangle - \langle g', \T_\rho g'\rangle}\\
&= \ABS{\langle g, \T_\rho g\rangle - \langle g', \T_\rho g\rangle - \PAREN{\langle g', \T_\rho g'\rangle - \langle g', \T_\rho g\rangle}}\\
&= \ABS{\langle g - g', \T_\rho g\rangle +\langle g', \T_\rho g - \T_\rho g'\rangle}\\
&\leq \ABS{\norm{g - g'}_2\norm{\T_\rho g}_2 + \norm{g'}_2\norm{\T_\rho g - \T_\rho g'}_2} && \text{Cauchy-Schwartz}\\
&\leq \ABS{\norm{g - g'}_2\norm{g}_2 + \norm{g'}_2\norm{g - g'}_2} && \text{$\T_\rho$ is a contraction}\\
&\leq \norm{g - g'}_2(\norm{g}_2 + \norm{g'}_2)\\
&\leq \norm{g - g'}_2(\norm{g - g' + g'}_2 + \norm{g'}_2)\\
&\leq \norm{g - g'}_2(\norm{g - g'}_2 + 2\norm{g'}_2)\\
&\leq 3\norm{g - g'}_2
\end{align*}
where we have used that for $\norm{g - g'}_2 \leq 1$, $\norm{g - g'}_2^2 \leq \norm{g - g'}_2$ and that $\norm{g'}_2 \leq 1$. Now we just need to bound $\norm{g - g'}_2$ by $\epsilon/6$. Note that $\norm{g - g'}_2^2 = \E[\zeta(g(y))]$, where $\zeta$ is the 1-Lipschitz \ynote{Should we include a proof of this?} function yielding the squared distance of a point $y \in \mathbb R^n$ from $B^{m-1}$
\begin{equation*}
\zeta(y) = \begin{cases}
	0 & \text{if }\norm{y}_2 \leq 1\\
	\norm{y - \tfrac{y}{\norm{y}}}_2^2 & \text{otherwise}
\end{cases}
\end{equation*}
Let $\delta = (\epsilon^2/(36\cdot C_m))^{18/\gamma}$. We can now apply \Cref{thm:vector-invariance} with $\Psi = \zeta$ and $\tau = \delta$, which yields
$$\E_{\by}[\zeta(g)] = |\E_{\bx}[\zeta(g(\bx))] - \E_{\by}[\zeta(g(\by))]| \leq \epsilon^2/36$$
Then, $\norm{g-g'}_2 \leq \sqrt{\epsilon^2/36} = \epsilon/6$ and $\ABS{\Stab_\rho[g] - \Stab_\rho[g']} \leq \epsilon/2$. Combining this with our bound of $\epsilon/2$ for \Cref{eq:bound1} concludes the proof.
\end{proof}

\subsection{Embedded Dictator-vs.-No-Notables}

Recall that in the case of quantum \maxcut, a product state assignment corresponds to a Bloch vector $v \in S^2$. With this in mind, we extend the domain of predicates in \Cref{def:dictator-no-notables} to $S^2$. Additionally, rather than working with a tester $T$ which output either \texttt{YES}/\texttt{NO} (equivalently $1$/$0$), our tester will output a value $v \in [-1,1]$. As such, rather than characterizing $T$ using a success probability, we work with the expected value of $T$.\footnote{Note for $1$/$0$-valued testers, the expected value and success probability are exactly the same.}.

\begin{definition}[Real-Valued Tester]
A real-valued tester $T$ for functions $f : \{-1,1\}^n \rightarrow S^{m-1}$ of type $\Psi : (S^{m-1})^r \rightarrow [-1,1]$ is a randomized algorithm which selects $r$ strings $x_1,\dots,x_r \in \{-1,1\}^n$ from a distribution $\mathcal D$, queries $f(x_1),\dots,f(x_r)$, then outputs the result of applying predicate $\Psi$. The value of such a tester is
$$\val_T(f) = \E_{\bx_1,\dots,\bx_r \sim \mathcal D}[\Psi(f(\bx_1),\dots,f(\bx_r))]$$
\end{definition}
\begin{notation} Given a tester $T$ with a length $l$ predicate and a collection of $l$ functions $F = \{f^{(1)},\dots,f^{(l)}\}$ we use the notation $T(F)$ to denote evaluating $T$ on the collection $F$ as follows,
$$\E_{\bx_1,\dots,\bx_r \sim \mathcal D}[\Psi(f^{(1)}(\bx_1),\dots,f^{(l)}(\bx_r))]$$
\end{notation}
Next, to generalize the notion of a dictator to vector-valued functions, we introduce the \textit{embedded dictators}.

\begin{definition}[Embedded Dictator]
Let $f : \{-1,1\}^n \rightarrow S^{m-1}$ be a vector-valued function, with $f(x) = (f_1(x),\dots,f_{n+1}(x))$. Then, we say $f$ is a embedded dictator if $f_i$ is a dictator, for some $i$ and $f_j = 0$ for all other $j \not= i$.
\end{definition}

Then, we have the following test.

\begin{definition}[$(\alpha,\beta)$-Embedded-Dictator-vs.-No-Notables]\label{def:embedded-dic}
Let $\Psi$ be a finite set of predicates over the domain $S^2$. Let $-1 \leq \beta < 0 < \alpha \leq 1$ and let $\lambda : [0,1] \rightarrow [0,1]$ satisfy $\lambda(\epsilon) \rightarrow 0$ as $\epsilon \rightarrow 0$. Suppose that for each $n \in \mathbb N^+$ there is a real-valued tester for functions $f : \{-1,1\}^n \rightarrow S^2$ with the following properties:
\begin{itemize}
	\item \textbf{Completeness}. If $f$ is an embedded dictator then the test has expected value at least $\beta$.
	\item \textbf{Soundness}. If $f$ has no $(\epsilon,\epsilon)$-notable coordinates -- i.e., $\Inf_j^{\leq 1-\epsilon}[f]  = \sum_{i \in [3]} \Inf_j^{\leq 1-\epsilon}[f_i] \leq \epsilon$ for all $j \in [n]$ then the test has expected value at most $\alpha + \lambda(\epsilon)$.
\end{itemize}
Then this yields a family of testers called a $(\alpha,\beta)$-Embedded-Dicator-vs.-No-Notables test using predicate $\Psi$.
\end{definition}

We'll also need a simple lemma, which bounds the maximum number of coordinates with large low degree influence. For a proof see \Cref{sec:fourier-proofs}.

\begin{lemma}[Bounds on Influential Coordinates]\label{lem:influence-bound}
Fix a function $f : \{-1,1\}^n \rightarrow S^{m-1}$. Then the set,
$$S = \{i \in [n] : \Inf^{\leq 1-\epsilon}_i[f] > \tau\}$$
has size at most $m(1-\epsilon)/\tau$.
\end{lemma}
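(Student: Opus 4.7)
The plan is to use the standard Parseval/counting argument for low-degree influences, applied component-wise. First I would decompose the vector-valued low-degree influence into its coordinates: $\Inf_j^{\leq 1-\epsilon}[f] = \sum_{i=1}^m \Inf_j^{\leq 1-\epsilon}[f_i]$, where each $f_i : \{-1,1\}^n \to [-1,1]$ since $f$ takes values in $S^{m-1}$. This reduces the problem to controlling the sum of scalar-valued low-degree influences, which is the setting of the classical ``few noticeable coordinates'' bound (see \Cref{prop:influence-bound}).

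Next I would lower-bound $\sum_{j \in S} \Inf_j^{\leq 1-\epsilon}[f]$ by $|S|\tau$ using the defining property of $S$, and upper-bound the same sum via Fourier swapping. Explicitly, for each coordinate $i$,
\[
\sum_{j \in [n]} \Inf_j^{\leq 1-\epsilon}[f_i] \;=\; \sum_{|T| \leq 1-\epsilon}\, |T|\cdot \widehat{f_i}(T)^2 \;\leq\; (1-\epsilon)\sum_{T}\widehat{f_i}(T)^2 \;=\; (1-\epsilon)\,\E[f_i^2] \;\leq\; 1-\epsilon,
\]
where the last inequality uses $|f_i| \leq 1$ pointwise (a consequence of $f(x) \in S^{m-1}$). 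Summing this bound over the $m$ output coordinates yields $\sum_{j\in S}\Inf_j^{\leq 1-\epsilon}[f] \leq m(1-\epsilon)$. Combining with $|S|\tau \leq \sum_{j\in S}\Inf_j^{\leq 1-\epsilon}[f]$ gives $|S| \leq m(1-\epsilon)/\tau$, as desired.

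This is a routine calculation with no real obstacle; the only subtlety worth flagging is that we bound $\E[f_i^2] \leq 1$ separately for each coordinate rather than exploiting the tighter constraint $\sum_i \E[f_i^2] = 1$ that comes from the unit-sphere output, which is what produces the factor of $m$ in the statement (using the tighter constraint would remove the factor of $m$, yielding $|S| \leq (1-\epsilon)/\tau$, but the weaker bound in the lemma is already sufficient for the applications to the Unique Games reduction).
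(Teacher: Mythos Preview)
Your proof is correct and follows essentially the same argument as the paper's: both decompose the vector-valued influence into its $m$ scalar components, swap the order of summation to obtain $\sum_{|T|\le 1-\epsilon}|T|\,\widehat{f_i}(T)^2 \le (1-\epsilon)\E[f_i^2]$, sum over the $m$ output coordinates, and compare against the lower bound $|S|\tau$. Your closing remark that the sphere constraint $\sum_i \E[f_i^2]=1$ would in fact yield the sharper bound $(1-\epsilon)/\tau$ is correct and worth noting, though as you say the stated bound suffices for the application.
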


Before giving a tester satisfying these properties, we state our analogue of \Cref{thm:classical-ug-hardness}, which shows that such a tester is useful for showing hardness-of-approximation results.

\begin{theorem}\label{thm:quant-ug-hardness}
Fix a CSP over domain $S^{m-1}$ with predicate set $\Psi$. Suppose there exists an $(\alpha,\beta)$-Embedded-Dictator-vs.-No-Notables test using predicate set $\Psi$. Then for all $\delta > 0$, it is ``UG-hard'' to $(\alpha+\delta,\beta- \delta)$-approximate $\maxcsp(\Psi)$.\footnote{For simplicity, we assume each constraint in $T$ has width 2, but we could easily extend this to width $c$ constraints.}
\end{theorem}

\begin{proof}
As in \cite{KKMO07}, the proof will be a reduction from a hard instance of Unique Label Cover $\mathcal L$ to a Max-CSP($\Psi$) instance $\mathcal I$. Pick constants $\eta, \gamma$ and $M = M(\eta, \gamma)$ satisfying the Unique Games Conjecture. Then, take an instance of Unique Label Cover $\mathcal L(V, W, E, [M], \{\sigma_{v,w}\}_{(v,w) \in E})$ and for each vertex $v \in V \cup W$, associate the function $f_v : \{-1,1\}^M \rightarrow S^{m-1}$. Furthermore, let $T = \Psi(f(x_1),f(x_2))$ be a tester satisfying the properties given in \Cref{def:embedded-dic}. To form our instance $\mathcal I$, consider all triples $(v,w, w')$ where $v \in V$ and $w, w' \in N(v) \subseteq W$. Then, evaluate $T$ on the collection $\{g_w,g_{w'}\}$, which yields an instance of \maxcsp($\Psi$) of the form
\begin{equation}\label{eq:constraint}
\mathcal I = \E_{(\bv,\bw,\bw')}[T(\{g_{\bw},g_{\bw'}\})]
\end{equation}
As before, $g_{w}(x) = f_w(x \circ \sigma_{v,w})$. Observe when $(v,w)$ is satisfied under $\sigma_{v,w}$, $g_w(x) = f_v(x)$.

In the completeness case, we show that when $\mathcal L$ has value at least $1- \eta$, we can construct $f_v$ such that evaluating each constraint of \Cref{eq:constraint} is analogous to applying $T$ to an embedded dictator $g$. For soundness, we use the contrapositive and show that if $\mathcal I$ has value more than $\alpha + 2\lambda(\epsilon)$, then there is a labeling satisfying at least $\gamma$ portion of the edges.\\

\textit{Completeness.} Assume $\mathcal L$ has a labeling $L$ satisfying more than $(1- \eta)$-fraction of the edges. For each $v \in W \cup V$, let $f_v(x) = (x_{L(v)}, 0, \dots, 0)$. Select triples $(v,w,w')$ by first picking a vertex $v \in V$ and neighbors $w,w' \in N(v) \subseteq W$. By a result in \cite{KR08}, we can assume graph for $\mathcal L$ is uniform on $V$'s side; then, picking a random vertex $v \in V$ and neighbor $w \in N(v)$ is equivalent to picking a uniformly random edge from $E$. Therefore, we by applying the union bound, we see that the probability edges $(v,w)$ and $(v,w')$ are both satisfied in the labeling is at least $1 - 2\eta$. Denote this event $\mathrm{Sat}_{v,w,w'}$. Then we can rewrite $\mathcal I$ as
\begin{align*}
\E_{(\bv,\bw,\bw')}[T(\{g_{\bw}, g_{\bw'}\})] &= \E_{(\bv,\bw,\bw')}[T(\{g_{\bw}, g_{\bw'}\}) | \mathrm{Sat}_{\bv,\bw,\bw'}]\Pr[\mathrm{Sat}_{\bv,\bw,\bw'}]\\
&\qquad + \E_{(\bv,\bw,\bw')}[T(\{g_{\bw}, g_{\bw'}\}) | \neg\mathrm{Sat}_{\bv,\bw,\bw'}]\Pr[\neg\mathrm{Sat}_{\bv,\bw,\bw'}]\\
&\geq \E_{(\bv,\bw,\bw')}[T(\{g_{\bw}, g_{\bw'}\}) | \mathrm{Sat}_{\bv,\bw,\bw'}](1-2\eta)
\end{align*}
When $(v,w)$ and $(v,w')$ are both satisfied, $g_w(x) = g_{w'}(x) = f_v(x)$ and this is equivalent to applying $T$ to the embedded dictator function $g_v$. Thus, $\mathcal I$ has value at least $(1-2\eta)\beta = \beta - O(\eta)$.

\textit{Soundness.} Assume instance $\mathcal I$ has value more than $\alpha + 2\lambda(\epsilon)$. Let $F = \{g_v\}_{v \in V \cup W}$ be an assignment which obtains this value. We observe that,
\begin{align*}
\val_\mathcal{I}(F) &= \E_{(\bv,\bw,\bw')}[T(\{g_{\bw},g_{\bw'}\})]\\
&= \E_{(\bv,\bw,\bw')}\BRAC{\E_{\bx_1,\bx_2 \sim \mathcal D}[\Psi(g_{\bw}(\bx_1),g_{\bw'}(\bx_2))]}\\
&= \E_{\bv}\BRAC{\E_{\bx_1,\bx_2 \sim \mathcal D}[\Psi(\E_{\bw} g_{\bw}(\bx_1),\E_{\bw} g_{\bw}(\bx_2))]}\\
&= \E_{\bv} \BRAC{T\BRAC{\E_{\bw \in N({\bv})} g_{\bw}}}
\end{align*}
Where the second to last equality is because $(w,w') \in N(v)$ are selected uniformly and independently at random. Thus, we can write $\val_\mathcal{I}(F)$ as $\E_{\bv \in V}[T(h_{\bv})]$, where $h_v = \E_{\bw \in N(v)}g_{\bw}$. Let $V'$ be a subset of $V$ such that for each $v \in V'$, $T(h_v) \geq \alpha + \lambda(\epsilon)$. We call these \textit{good} vertices. Furthermore, let $c = |V'|/|V|$. By writing $\val_\mathcal{I}(F)$ as $c\E_{\bv \in V'}[T(h_{\bv})] + (1-c)\E_{\bv \in V \setminus V'}[T(h_{\bv})]$, one can see that $c \geq \lambda(\epsilon)$ and we can select $\lambda(\epsilon)$ good vertices. Thus, since $T$ a $(\alpha,\beta)$-Embedded-Dicatator-vs.-No-Notables test, for each $v$ the set
$$\mathrm{Notables}_v = \{j \in [M] : \Inf_j^{\leq 1-\epsilon}[h_v] > \epsilon\}$$
is non-empty. Selecting just one $j$ from each $\mathrm{Notables}_v$ yields an assignment to the $\lambda(\epsilon)$ good vertices $v \in V'$.

Next we'll need to obtain labels for neighbors of the good vertices. Fix a good vertex $v$. Note that the function $\Inf_j$ is convex and therefore, $\epsilon < \Inf_j^{\leq 1-\epsilon}[h_v] \leq \mathbb E_{\bw \in N(v)}[\Inf_j^{\leq 1-\epsilon}[g_{\bw}]]$. Applying the same averaging argument yields that for $\epsilon/2$ many neighbors $w$, $\Inf_j^{\leq 1-\epsilon}[g_w] \geq \epsilon/2$. Recalling that $g_w(x) = f_w(x \circ \sigma_{(v,w)})$, we equivalently obtain $\Inf_{\sigma^{-1}_{(v,w)}(j)}^{\leq 1-\epsilon}[f_w] \geq \epsilon/2$. This shows that the set of good labels
$$S_w = \{j : \Inf^{\leq 1-\epsilon}_j[f_w] \geq \epsilon/2 \}$$
is non-empty. Next, by \Cref{lem:influence-bound}, we note $|S_w| \leq 6(1-\epsilon)/\epsilon$. Assign a label to $w$ by picking a uniformly random $j \in S_w$.

Finally, we consider the expected number of satisfied edges. The probability of selecting a good vertex $v \in V$ is at least $\lambda(\epsilon)$. For each such vertex, $\epsilon/2$ of its neighbors are assigned a label $j \in S_w$. Since this set has size at most $6(1-\epsilon)/\epsilon$, this yields a probability of $\epsilon/(6(1-\epsilon))$ of selecting the appropriate label. Thus, in expectation we satisfy
$$\frac{\lambda(\epsilon)\epsilon^2}{6(1-\epsilon)}$$
of the edges. Setting $\gamma =\frac{\lambda(\epsilon)\epsilon^2}{6(1-\epsilon)}$ concludes the proof.
\end{proof}

\subsection{Test for Quantum Heisenberg Model}

With the reassurance that constructing a test satisfying \Cref{def:dictator-no-notables} yields a $(\alpha+\delta, \beta-\delta)$-hardness result, we give our test $T$ \ynote{name this test?} for the predicate $\Psi(\bx,\by) = 1 - \langle \bx, \by \rangle$ on functions $f \colon \{-1,1\}^n \rightarrow S^{m-1}$. Let $\mathcal D_T$ be a distribution corresponding to the following sampling procedure
\begin{enumerate}
	\item Pick $x \in \{-1,1\}^n$ randomly.
	\item Construct $y \in \{-1,1\}^n$ by copying $x$, then flipping each coordinate independently with probability $\rho$.
\end{enumerate}
Given $\bx,\by \sim \mathcal D_T$, return $1 - \langle f(\bx), f(\by)\rangle$. Then, $\val_T(f) = \E_{\bx,\by \sim \mathcal D_T}[1 - \langle f(\bx), f(\by)\rangle]$.
\begin{lemma}
Test $T$ is a $(1-F^\star(m, \rho),1- \rho)$-Embedded-Dictators-vs.-No-Notables test for the predicate $\Psi(\bx,\by) = 1 - \langle \bx, \by \rangle$.
\end{lemma}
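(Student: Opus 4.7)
The plan is to separate the verification into a completeness case (embedded dictators) and a soundness case (functions with no notable coordinates), and to invoke results already established earlier in the paper for each. This reduces the lemma to little more than a bookkeeping translation between the ``test value'' formulation in the definition and the ``noise stability'' formulation of the vector-valued Majority-is-Stablest statement.

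For the completeness direction, I would take $f$ to be the embedded dictator $f(x) = (x_i, 0, \ldots, 0)$ for some coordinate $i$. Then $\langle f(\bx), f(\by)\rangle = \bx_i \by_i$, and since $(\bx, \by) \sim \calD_T$ are $\rho$-correlated $n$-dimensional Boolean strings, $\E[\bx_i \by_i] = \rho$. Hence $\val_T(f) = \E_{\bx, \by}[1 - \langle f(\bx), f(\by)\rangle] = 1 - \rho$, exactly matching the claimed completeness threshold. This step is essentially the same computation already performed in \Cref{lem:embedded-dictators}.

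For the soundness direction, suppose $f : \{-1,1\}^n \to S^{m-1}$ has no notable coordinates in the sense of \Cref{def:embedded-dic}. I would set the parameters of \Cref{thm:dictator-test} via $\delta_0 = \delta(\epsilon, \rho)$ and $m_0 = m(\epsilon, \rho)$, and check that by choosing the ``$\epsilon$'' in the definition of ``no notables'' sufficiently small relative to $\delta_0$ and $m_0$, the hypothesis $\Inf_j^{\leq 1-\epsilon}[f] \leq \epsilon$ implies $\Inf_j^{\leq m_0}[f] \leq \delta_0$ for every $j$. Applying \Cref{thm:dictator-test} then gives $\Stab_\rho[f] \geq F^*(m, \rho) - \epsilon$, so
\[
\val_T(f) \;=\; 1 - \Stab_\rho[f] \;\leq\; 1 - F^*(m,\rho) + \epsilon,
\]
which is the required soundness bound with $\lambda(\epsilon) = \epsilon$.

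The real mathematical content sits in \Cref{thm:dictator-test} (and hence ultimately in the vector-valued Borell conjecture), not in the present lemma. The only point that needs care is the parameter-matching between two slightly different notations for ``low-degree influence'': the degree cutoff $1-\epsilon$ with influence bound $\epsilon$ used in \Cref{def:embedded-dic}, versus the integer degree cutoff $m_0 = m(\epsilon, \rho)$ with separate bound $\delta_0 = \delta(\epsilon, \rho)$ used in \Cref{thm:dictator-test}. I expect this to be routine: one picks the ``$\epsilon$'' of the definition small enough that both $\epsilon \leq \delta_0$ and $1/(1-\epsilon) \geq m_0$. No genuine obstacle is anticipated, and the lemma should follow in essentially the length outlined above.
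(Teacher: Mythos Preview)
Your proposal is correct and follows exactly the paper's approach: completeness by direct computation on an embedded dictator $f(x)=(x_i,0,\dots,0)$, and soundness by invoking the vector-valued Majority-is-Stablest theorem (\Cref{thm:dictator-test}) to get $\Stab_\rho[f]\ge F^\star(m,\rho)-\lambda(\epsilon)$. The only quibble is your parameter-matching line ``$1/(1-\epsilon)\ge m_0$'': here $\Inf_i^{\le 1-\epsilon}$ denotes the $(1-\epsilon)$-\emph{noisy} influence rather than a hard degree cutoff, so the correct bridge is $\Inf_i^{\le m_0}[f]\le (1-\epsilon)^{-m_0}\Inf_i^{\le 1-\epsilon}[f]$, which is indeed routine but not via the inequality you wrote.
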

\begin{proof}
\textit{Completeness}. Let $f$ be an embedded dictator. WLOG, let $f(x) = (x_i, 0, \dots, 0)$. Then,
$$\val_T(f) = \E_{\bx,\by \sim \mathcal D_T}[1 - \langle f(\bx), f(\by)\rangle] = \E_{\bx,\by \sim \mathcal D_T}[1 - \bx_i \by_i] = 1-\rho$$
\textit{Soundness}. Assume $f$ has no $(\epsilon, \epsilon)$-notable coordinates. Then, by the Vector Majority is Stablest Theorem in the form of \Cref{thm:dictator-test} yields,
\begin{align*}
\E_{x,y \sim \mathcal D_T}[1 - \langle f(x), f(y)\rangle] \leq 1 - F^\star(m, \rho) + \lambda(\epsilon)
\end{align*}
where $\lambda$ is a function depending only on $\epsilon$ and approaches $0$ as $\epsilon \rightarrow 0$.
\end{proof}
}
\newpage
\part{Appendix}

\appendix 

\section{The non-commutative Sum of Squares hierarchy}\label{sec:ncsos}

The Sum of Squares (SoS) hierarchy gives a canonical method for strengthening the basic SDP to achieve better approximation ratios.
It features a tunable parameter $d$;
as~$d$ is increased, the quality of the approximation improves,
but the runtime needed to compute the optimum increases as well.
We will give a didactic overview of the SoS hierarchy
in order to explain how our basic SDP for \qmaxcut
arises naturally as the level-$2$ SoS relaxation.
For a more extensive treatment of Sum of Squares,
consult the excellent notes at~\cite{BS16}.

\subsection{Sum-of-squares relaxations for \maxcut}

We begin with the sum-of-squares relaxation for the \maxcut problem,
which generalizes the basic SDP from \Cref{def:mcsdp}.
In fact, we will state the SoS hierarchy in terms of
a general polynomial optimization problem over Boolean (i.e.\ $\pm 1$) variables.
Let $\calI$ be a finite set and $x = \{x_i\}_{i \in \calI}$ be a set of indeterminates indexed by~$\calI$.
Let $p(x)$ be a polynomial in the $x_i$'s,
and consider the optimization problem
\begin{align*}
\max &~~p(x) \\
\text{s.t.} &~~x_i^2 = 1,~\forall i \in \calI.
\end{align*}
For example, \maxcut is the case when $\calI = V$ and $p(x) = \E_{(\bu, \bv) \sim E}[\frac{1}{2}- \frac{1}{2} x_{\bu} x_{\bv}]$.
An alternative way to write this maximization is over probability distributions $\mu$ on $\pm 1$ assignments,
i.e. functions $\mu: \{-1, 1\}^{\calI} \rightarrow \R^{\geq 0}$ such that $\sum_x \mu(x) = 1$.
Then the optimum value is equal to
\begin{align*}
\max &~~\E_{\mu}[p(\bx)] = \sum_{x \in \{-1, 1\}^{\calI}} \mu(x) \cdot p(x) \\
\text{s.t.} &~~\text{$\mu$ is a probability distribution,}
\end{align*}
because we can take $\mu$ to have support only on the optimizing $x$'s.
Note that because~$\mu$ is a probability distribution,
it satisfies two properties:
(i) $\E_{\mu}[1] = 1$,
(ii) for each polynomial~$q(x)$, $\E_{\mu}[q(\bx)^2] \geq 0$.
Indeed, both of these properties hold pointwise, for all~$x$.
The SoS hierarchy replaces this optimization over probability distributions
with an optimization over ``pseudo-distributions'' while partially maintaining these two properties.

\begin{definition}[The SoS hierarchy for Boolean optimization problems]
Let $\mu: \{-1, 1\}^\calI \rightarrow \R$ be a function. Given a polynomial $q(x)$, we write
\begin{equation*}
\pE_\mu[q(x)] = \sum_{x \in \{-1, 1\}^{\calI}} \mu(x) \cdot q(x).
\end{equation*}
We say that $\mu$ is a \emph{degree-$d$ pseudo-distribution}
if $\pE_\mu[1] = 1$ and $\pE_\mu[q(x)^2] \geq 0$ for all polynomials~$q$ of degree at most~$d/2$.
In this case, we say that $\pE_\mu[\cdot]$ is a \emph{degree-$d$ pseudo-expectation}.
The value of the degree-$d$ SoS relaxation is simply the maximum of $\pE_\mu[p(x)]$
over all pseudo-distributions~$\mu$.
\end{definition}

It can be shown that the value of the degree-$2$ SoS is equal to the basic SDP.

\subsection{Sum-of-squares relaxations for \qmaxcut}\label{sec:heis-sos}

Now, we extend the Sum of Squares hierarchy to optimization problems over quantum states.
We will consider states consisting of qubits indexed by a finite set $\calI$, i.e.\ unit vectors in $(\C^2)^{\otimes \calI}$.
Let $H$ be a square matrix acting on $(\C^2)^{\otimes \calI}$ and consider the optimization problem
\begin{align*}
\max &~~\tr[\rho \cdot H] \\
\text{s.t.} &~~\text{$\rho$ is a density matrix.}
\end{align*}
For example, \qmaxcut is the case when $\calI = V$ and $H = \E_{(\bu, \bv) \sim E}[h_{\bu, \bv}]$.
Because~$\rho$ is a density matrix, it satisfies three properties:
(1) $\rho$ is Hermitian,
(2) $\tr[\rho \cdot I] = 1$,
and (3) for any matrix $M$, $\tr[\rho \cdot M^\dagger M] \geq 0$.
The last of these is because $A = M^\dagger M$ is PSD,
and a matrix $\rho$ is PSD if and only if $\tr[\rho \cdot A] \geq 0$ for all PSD matrices~$A$.
The SoS hierarchy will instead optimize over ``pseudo-density matrices'' while partially maintaining these three properties.
To begin, we first define a matrix analogue of degree.

\begin{definition}[The basis of Pauli matrices]
The Pauli matrices $\{I, X, Y, Z\}^{\otimes n}$ form an orthogonal basis for the set of $2^n \times 2^n$ matrices.
Given $P, Q \in \{I, X, Y, Z\}^{\otimes n}$, they satisfy
\begin{equation*}
\tr[P Q]
= \left\{\begin{array}{cl}
2^n & \text{if $P = Q$,}\\
0 & \text{otherwise}.
\end{array}\right.
\end{equation*}
Given a $2^n \times 2^n$ matrix $M$, we write $\widehat{M}(P)$ for the coefficient of~$M$ on~$P$ in this basis.
In other words,
\begin{equation*}
M = \sum_{P \in \{I, X, Y, Z\}^{\otimes n}} \widehat{M}(P) \cdot P.
\end{equation*}
In addition, $M$ is Hermitian if and only if $\widehat{M}(P)$ is real, for all~$P$.
\end{definition}

\begin{definition}[Degree of a matrix]\label{def:matrix-degree}
Given $P \in \{I, X, Y, Z\}^{\otimes n}$,
the \emph{degree} of $P$, denoted $|P|$, is the number of qubits on which~$P$ is not the $2 \times 2$ identity matrix.
More generally, we say that a $2^n \times 2^n$ matrix~$M$ has degree-$d$
if $\widehat{M}(P) = 0$ for all $|P| > d$.
\end{definition}

Now we describe the analogue of the sum-of squares hierarchy for quantum states,
which is known as the \emph{NPA} or \emph{non-commutative Sum of Squares (ncSoS) hierarchy}. 

\begin{definition}[The ncSoS hierarchy for quantum optimization problems]
Let $\rho$ and $M$ be square matrices acting on $(\C^2)^{\calI}$.
We write
\begin{equation*}
\pE_{\rho}[M]= \tr[\rho \cdot M].
\end{equation*}
We say that $\rho$ is a \emph{degree-$d$ pseudo-density matrix}
if $\rho$ is Hermitian, $\pE_\rho[I] = 1$,
and $\pE_\rho[M^\dagger M] \geq 0$ for all matrices~$M$ of degree at most~$d/2$ (cf.\ \Cref{def:matrix-degree}).
In this case, we say that $\pE_\rho[\cdot]$ is a \emph{degree-$d$ pseudo-expectation}.
The value of the degree-$d$ ncSoS relaxation is simply the maximum of $\pE_\rho[H]$
over all pseudo-distributions~$\rho$.
\end{definition}

\begin{remark}[Convergence of the SoS relaxation]
When $d = 2 n$, where $n = |\calI|$, the SoS relaxation solves the optimization problem exactly.
This is because every square matrix $M$ acting on $(\C^2)^\calI$ is degree-$n$;
thus $ \tr[\rho \cdot M^\dagger M] = \pE_\rho[M^\dagger  M] \geq 0$, and so~$\rho$ must be positive semidefinite.
\end{remark}

\subsection{Degree-two non-commutative Sum of Squares}

Now we analyze the degree-2 ncSoS relaxation for \qmaxcut
and show that it coincides with the basic SDP we considered in \Cref{sec:sdp_proofs}.
We begin with a definition.

\begin{definition}[Degree-$d$ slice of a matrix]\label{def:slice}
Given a $2^n \times 2^n$ matrix~$M$, we write $M^{=d}$ for its degree-$d$ component, i.e.
\begin{equation*}
M^{=d}
= \sum_{P : |P| = d} \widehat{M}(P) \cdot P.
\end{equation*}
\end{definition}

Let~$\rho$ be a feasible solution to the degree-2 ncSoS relaxation.
We will begin by showing that we may assume without loss of generality
that $\rho$ only has degree~$0$ and~$2$ components,
i.e.\ that $\rho = \rho^{=0} + \rho^{=2}$.
Prior to showing this, we will need a technical lemma.

\begin{lemma}\label{prop:pauli-commute-negation}
Let $P, Q, R \in \{I, X, Y, Z\}^{\otimes n}$.
Then $\tr(P Q R) = (-1)^{|P| + |Q| + |R|} \cdot \tr(P R Q)$.
\end{lemma}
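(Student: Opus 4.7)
The starting observation is that for any tensor products $Q, R$ of single-qubit Paulis, either-commute-or-anticommute at each site gives a global relation $QR = (-1)^{d(Q,R)} RQ$, where $d(Q,R)$ counts the number of qubits $i$ at which $Q_i$ and $R_i$ are both non-identity and distinct. Multiplying on the left by $P$ and taking the trace immediately gives $\tr(P Q R) = (-1)^{d(Q,R)} \tr(P R Q)$. So the plan reduces to verifying the parity identity $d(Q,R) \equiv |P| + |Q| + |R| \pmod 2$, but only in the cases where these traces can be nonzero; when both vanish, the claimed equation holds trivially (with either sign of $(-1)^{|P|+|Q|+|R|}$).

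To pin down the parity in the nonzero case, I would use the tensor-product factorization $\tr(ABC) = \prod_{i=1}^n \tr(A_i B_i C_i)$ available when $A, B, C$ are each tensor products of single-qubit Paulis. Thus $\tr(P Q R) \neq 0$ forces $\tr(P_i Q_i R_i) \neq 0$ at every site $i$, which in turn forces $P_i Q_i R_i$ to be a scalar multiple of $I$. A short single-qubit enumeration shows that this happens in exactly three families: (i) $P_i = Q_i = R_i = I$; (ii) exactly one of $P_i, Q_i, R_i$ equals $I$ and the other two are the same non-identity Pauli; (iii) $P_i, Q_i, R_i$ are three distinct non-identity Paulis (so their product is $\pm i I$).

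In each of these three local cases I would verify $|P_i| + |Q_i| + |R_i| + d_i \equiv 0 \pmod 2$ by inspection, where $d_i \in \{0, 1\}$ is the indicator that $Q_i, R_i$ are distinct non-identity Paulis. Case (i) contributes $0 + 0$; case (ii) contributes $2 + 0$ (since either one of $Q_i, R_i$ is $I$, or else $Q_i = R_i$, so $d_i = 0$); case (iii) contributes $3 + 1$. Summing over sites yields $|P| + |Q| + |R| \equiv d(Q,R) \pmod 2$, which combined with the commutation identity of the first paragraph completes the proof.

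I expect no real obstacle beyond the small bookkeeping in the single-qubit case analysis. The only mildly counterintuitive feature is that the sign $(-1)^{d(Q,R)}$ produced by the commutation argument does not reference $P$, so the dependence on $|P|$ in the statement can only be made to match by invoking the trace-nonvanishing constraint that forces $|P|$ to equal $|QR|$ modulo $2$; this is precisely what the site-by-site count exploits.
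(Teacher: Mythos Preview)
Your proof is correct and follows essentially the same approach as the paper: both reduce to a single-qubit case analysis (the three cases you list match the paper's dichotomy of ``at least one of $P_i,Q_i,R_i$ is $I$'' versus ``none is $I$'') and then use the tensor factorization $\tr(PQR)=\prod_i \tr(P_iQ_iR_i)$ to assemble the global sign. The only cosmetic difference is that you first isolate the global commutation sign $(-1)^{d(Q,R)}$ and then match parities, whereas the paper verifies the $n=1$ identity directly and multiplies; these are just two orderings of the same computation.
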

\begin{proof}
First, we prove this for~$n = 1$.
When $n = 1$, both sides are zero unless $QR$ is the same Pauli matrix as~$P$, up to a multiplicative constant.
Suppose this is so.
If one of~$P$, $Q$, or $R$ is the identity matrix,
then the other two are equal to each other, and so $\tr(P Q R) = \tr(P R Q)$.
This satisfies the equality because $|P| + |Q| + |R|$ is either $0$ or $2$ in this case.
Otherwise, none of $P$, $Q$, or $R$ is the identity matrix,
and so they are distinct Pauli matrices,
which means~$Q$ and~$R$ anticommute.
So $\tr(P Q R) = - \tr(P R Q)$, satisfying the equality because $|P| + |Q| + |R| =3$  in this case.

Now, the general~$n$ case follows from the~$n= 1$ case because
\begin{equation*}
\tr(P Q R)
= \prod_{i=1}^n \tr(P_1 Q_1 R_1)
= \prod_{i=1}^n \Big((-1)^{|P_1| + |Q_1| + |R_1|} \cdot\tr(P_1 R_1 Q_1)\Big)
= (-1)^{|P| + |Q| + |R|} \cdot \tr(P R Q).
\end{equation*}
This completes the proof.
\end{proof}

\begin{proposition}[Restricting to the degree-0 and degree-2 slices]
Let $\rho$ be a feasible solution. Then $\rho^{=0} + \rho^{=2}$ is a feasible solution with the same value as~$\rho$.
\end{proposition}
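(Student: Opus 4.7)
My plan is to verify the two parts of the claim---that $\rho'$ is feasible, and that it has the same value as $\rho$---separately, handling the easier components first.

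The value equality $\pE_{\rho'}[H_G] = \pE_\rho[H_G]$ follows from Pauli orthogonality: each interaction $h_{uv}$ is supported in the Pauli basis only on degrees $0$ and $2$, so $\tr[\rho H_G]$ depends only on $\rho^{=0} + \rho^{=2}$. Hermiticity of $\rho'$ is clear because each $\rho^{=d}$ is a real linear combination of Hermitian Paulis, and $\tr[\rho']=1$ follows from $\tr[\rho^{=d}]=0$ for $d\ge 1$ by tracelessness of non-identity Paulis. The real work is to check the PSD condition $\pE_{\rho'}[M^\dagger M] \geq 0$ for every degree-$\leq 1$ operator $M$. Writing $M = M_0 + M_1$ with $M_0 = \widehat M(I) I$ the degree-$0$ part, orthogonality gives
\[
    \pE_{\rho'}[M^\dagger M] = \tr[\rho^{=0}(M^\dagger M)^{=0}] + \tr[\rho^{=2}(M^\dagger M)^{=2}],
\]
while the hypothesis $\pE_\rho[M^\dagger M] \geq 0$ produces the same expression plus an unsigned cross term $\tr[\rho^{=1}(M^\dagger M)^{=1}]$ that must be eliminated.

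The plan for that cancellation is to exhibit a convex combination of pseudo-DM constraints on $\rho$ whose average reproduces $\pE_{\rho'}[M^\dagger M]$. The natural first step is to average with the sign-flipped companion $M_- := \sum_Q (-1)^{|Q|} \widehat M(Q) Q$: a short calculation gives $\tfrac{1}{2}(M^\dagger M + M_-^\dagger M_-) = M_0^\dagger M_0 + M_1^\dagger M_1$, which kills the ``$(I,P)$''-type contributions to $(M^\dagger M)^{=1}$. This leaves a residual coming from same-qubit products like $\widehat M(X_i)\widehat M(Y_i)\, X_i Y_i = \pm i\,\widehat M(X_i)\widehat M(Y_i)\,Z_i$ inside $M_1^\dagger M_1$, and this residual is exactly what \Cref{prop:pauli-commute-negation} is tailored to handle: for triples $(P,Q,Q')$ with $|P|=|Q|=|Q'|=1$ on the same qubit, the lemma gives $\tr[PQQ'] = -\tr[PQ'Q]$, and this antisymmetry enables a further averaging step, either via Pauli conjugations $M \mapsto P_k M P_k$ on individual qubits (each of which preserves the degree-$\leq 1$ property and hence yields another pseudo-DM constraint on $\rho$) or via a direct sum-of-squares decomposition of $(M^\dagger M)^{=0} + (M^\dagger M)^{=2}$ in the algebra of degree-$\leq 1$ operators, to eliminate the remaining degree-$1$ contribution.

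I expect the principal obstacle to be finding the correct convex combination. Naive per-qubit averaging schemes, such as uniform averaging over all Pauli conjugations on a single qubit, tend to destroy cross-qubit degree-$2$ moments as well, so the combination must exploit the particular antisymmetric structure supplied by the lemma. The final construction will likely assemble squares $\tilde M_k^\dagger \tilde M_k$ whose components are carefully correlated across qubits rather than being independent Pauli conjugates of $M$, with \Cref{prop:pauli-commute-negation} being used to certify that the same-qubit degree-$1$ residuals cancel exactly in the sum while all degree-$2$ cross-qubit contributions survive.
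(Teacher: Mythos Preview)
Your high-level approach matches the paper's exactly: express $\pE_{\rho^{=0}+\rho^{=2}}[M^\dagger M]$ as a nonnegative combination of quantities $\pE_\rho[\tilde M^\dagger \tilde M]$ with $\deg \tilde M\le 1$. You also correctly identify \Cref{prop:pauli-commute-negation} as the key tool and correctly isolate the residual degree-$1$ term coming from same-qubit products in $M_1^\dagger M_1$. Where your proposal falls short is in the resolution of that residual: you reach for per-qubit Pauli conjugations or ``carefully correlated'' multi-term constructions, acknowledge that the naive versions destroy the degree-$2$ cross terms you need to keep, and then leave the actual construction unspecified (``will likely assemble \ldots''). As written, the argument is incomplete.

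The missing observation is much simpler than the schemes you contemplate, and it is exactly how the paper uses \Cref{prop:pauli-commute-negation}. A single companion suffices: instead of averaging $M^\dagger M$ with $M_-^\dagger M_-$, average it with $M_- M_-^{\dagger}$ (equivalently, take $\tilde M = M_-^{\dagger}$). The lemma says $\tr(PQR)=(-1)^{|P|+|Q|+|R|}\tr(PRQ)$; for degree-$\le 1$ Paulis $Q,R$ this means that reversing the order $QR\to RQ$ flips the sign of precisely the terms contributing to $(M^\dagger M)^{=1}$ (both the $(I,P)$-type and the same-qubit $(P_i,P_i')$-type, since in both cases $|P|+|Q|+|R|$ is odd for the nonvanishing $\tr(PQR)$), while leaving the degree-$0$ and degree-$2$ contributions unchanged. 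Hence
\[
\tfrac12\bigl(M^\dagger M + M_- M_-^{\dagger}\bigr)=(M^\dagger M)^{=0}+(M^\dagger M)^{=2},
\]
and nonnegativity of $\pE_{\rho^{=0}+\rho^{=2}}[M^\dagger M]$ follows immediately from $\pE_\rho[M^\dagger M]\ge 0$ and $\pE_\rho[(M_-^{\dagger})^\dagger M_-^{\dagger}]\ge 0$. The paper packages this same computation dually: it shows $\rho''=\rho^{=0}-\rho^{=1}+\rho^{=2}$ is feasible by proving $\pE_{\rho''}[M^\dagger M]=\pE_{\rho^{\le 2}}[M_- M_-^{\dagger}]$, then averages $\rho^{\le 2}$ with $\rho''$.
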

\begin{proof}
To begin, we claim that $\rho' = \rho^{=0} + \rho^{=1} + \rho^{=2}$ is a feasible solution with the same value as~$\rho$.
This is because the constraints $\pE_\rho[I] = 1$ and $\pE_\rho[M^\dagger M] \geq 0$
and the objective $\pE_\rho[H_G]$ feature matrices of degree at most~$2$,
and so these values are unchanged if we replace~$\rho$ with $\rho^{=0} + \rho^{=1} + \rho^{=2}$.

Next, we claim that $\rho'' = \rho^{=0} - \rho^{=1} + \rho^{=2}$ is a feasible solution with the same value as $\rho'$.
The constraint $\pE_{\rho''}[I] = 1$ and the value $\pE_{\rho''}[H_G]$ feature matrices which have no degree-1 terms,
so negating $\rho^{=1}$ doesn't affect these expressions.
As for the remaining constraint, for each degree-1 matrix~$M$,
\begin{align*}
\pE_{\rho''}[M^\dagger M]
& = \tr((\rho^{=0} - \rho^{=1} + \rho^{=2}) \cdot M^\dagger M)\\
& = \sum_{|P| \leq 2} \sum_{|Q|, |R| \leq 1} (-1)^{|P|} \cdot \widehat{\rho}(P) \widehat{M}^\dagger(Q)\widehat{M}(R) \cdot \tr(P Q R) \\
& = \sum_{|P| \leq 2} \sum_{|Q|, |R| \leq 1} (-1)^{|R| + |Q|} \cdot \widehat{\rho}(P) \widehat{M}^\dagger(Q)\widehat{M}(R) \cdot \tr(P R Q) \tag{by \Cref{prop:pauli-commute-negation}}\\
& = \tr((\rho^{=0} + \rho^{=1} + \rho^{=2}) \cdot (M^{=0} - M^{=1}) (M^{=0} - M^{=1})^\dagger)\\
& = \pE_{\rho'}[(M^{=0} - M^{=1}) (M^{=0} - M^{=1})^\dagger] \geq 0.
\end{align*}
Hence, $\rho''$ satisfies this constraint as well.

We conclude by noting that $\tfrac{1}{2}(\rho' + \rho'') = \rho^{=0} + \rho^{=2}$
is a feasible solution with the same value as~$\rho$,
because the constraints and the objective are linear functions of~$\rho$.
\end{proof}

Henceforth, we assume $\rho = \rho^{=0} + \rho ^{=2}$.
Using this, we note that the constraint $\pE_\rho[M^\dagger M] \geq 0$
holding for all $M$ which are degree-$1$ is equivalent to it holding only for~$M$ which are homogeneous degree-$1$
(i.e.\ with no degree-$0$ term).
This is because if we write $M = M^{=0} + M^{=1} = \widehat{M}(I) \cdot I + M^{=1}$, then
\begin{align*}
\tr(\rho \cdot M^\dagger M)
&= |\widehat{M}(I)|^2 \cdot \tr(\rho) + \widehat{M}(I)^\dagger \cdot \tr(\rho \cdot M^{=1})
	+ \widehat{M}(I) \cdot \tr(\rho \cdot (M^{=1})^\dagger)
	+ \tr(\rho \cdot (M^{=1})^\dagger M^{=1})\\
&= |\widehat{M}(I)|^2 + \tr(\rho \cdot (M^{=1})^\dagger M^{=1}) \tag{because $\rho$ has no degree-$1$ component}\\
&\geq \tr(\rho \cdot (M^{=1})^\dagger M^{=1}),
\end{align*}
which is $\geq 0$ because $\pE_\rho[(M^{=1})^\dagger M^{=1}] \geq 0$.

Now, we let $R(\cdot, \cdot)$ be the $3n \times 3n$ matrix whose rows and columns are indexed by degree-$1$ Pauli matrices such that
\begin{equation*}
R(P_i, Q_j) = \tr(\rho \cdot P_i Q_j).
\end{equation*}
for all $P, Q \in \{X, Y, Z\}$ and $i, j \in \{1, \ldots, n\}$.
When $i \neq j$ or $i = j$ and $P = Q$, then $P_i Q_j$ is a Pauli matrix in $\{I, X, Y, Z\}^{\otimes n}$,
and so
\begin{equation*}
R(P_i, Q_j) = \tr(\rho \cdot P_i Q_j)
= 2^n \cdot \widehat{\rho}(P_i Q_j),
\end{equation*}
which is a real number. On the other hand, when $i = j$ but $P \neq Q$,
then $P_i Q_j = P_i Q_i$ is a degree-$1$ Pauli matrix times a phase of $i$ or $-i$.
In this case,
\begin{equation*}
R(P_i, Q_i) = \tr(\rho \cdot P_i Q_i) = 0,
\end{equation*}
because~$\rho$ has no degree-$1$ component.
Put together, these imply that $R$ is a real-valued matrix.

We can now rewrite our constraints and objective function in terms of this matrix.
First, the constraint $\tr(\rho) = 1$ corresponds to
\begin{equation*}
R(P_i, P_i) = \tr(\rho \cdot P_i P_i) = \tr(\rho) = 1
\end{equation*}
for any $P_i$.
Next, the objective function is
\begin{align*}
\tr(\rho \cdot H_G)
& = \E_{(\bi, \bj) \sim E} \tr(\rho \cdot h_{\bi, \bj})\\
& = \E_{(\bi, \bj) \sim E} \tr(\rho \cdot \tfrac{1}{4} \cdot( I_{\bi} \ot I_{\bj} - X_{\bi} \ot X_{\bj} - Y_{\bi} \ot Y_{\bj} - Z_{\bi} \ot Z_{\bj}))\\
&= \tfrac{1}{4} - \tfrac{1}{4} \E_{(\bi, \bj) \sim E} \sum_{P \in \{X, Y, Z\}} \tr(\rho \cdot P_{\bi} \ot P_{\bj})\\
&= \tfrac{1}{4} - \tfrac{1}{4} \E_{(\bi, \bj) \sim E} \sum_{P \in \{X, Y, Z\}} R(P_{\bi}, P_{\bj}).
\end{align*}
For the last constraint, let $M = \sum_{P_i} \widehat{M}(P_i) \cdot P_i$ be any homogeneous degree-1 matrix.
Then
\begin{align*}
0 \leq \tr(\rho \cdot M^\dagger M)
&= \sum_{P_i, Q_j} \widehat{M}(P_i)^\dagger \widehat{M}(Q_j) \cdot \tr(\rho \cdot P_i Q_j)\\
&= \sum_{P_i, Q_j} \widehat{M}(P_i)^\dagger \widehat{M}(Q_j) \cdot R(P_i,Q_j)
= \mathrm{vec}(M)^\dagger\cdot R \cdot\mathrm{vec}(M),
\end{align*}
where $\mathrm{vec}(M)$ is the height-$3n$ vector with $\mathrm{vec}(M)(P_i) = \widehat{M}(P_i)$.
As the $\widehat{M}(P_i)$'s are allowed to be arbitrary complex numbers,
this condition is equivalent to $R$ being positive semidefinite.
 As a result, this matrix has the exact same form and objective as the matrix $M'(\cdot, \cdot)$ from \Cref{sec:sdp_proofs};
 following the steps in that proof, one can then convert $R$ into a solution to the basic SDP.
 This completes the proof.

\ignore{
We claim that $\rho' = \rho^{=0} + \rho^{=1} + \rho^{=2}$ is a feasible solution with the same value as~$\rho$.
This is because the constraints $\pE_\rho[I] = 1$ and $\pE_\rho[M^\dagger M] \geq 0$
and the objective $\pE_\rho[H_G]$ feature matrices of degree at most~$2$,
and so these values are unchanged if we replace~$\rho$ with $\rho^{=0} + \rho^{=1} + \rho^{=2}$.

}
\ignore{
Next, consider the $(3n + 1) \times (3n+1)$ matrix $R$ whose rows and columns are indexed by the 
degree-0 and degree-1 Pauli matrices $P, Q$ in which $R(P, Q) = \tr[\rho' \cdot P Q]$.
Using~$R$, we may express the SoS objective as
\begin{align}
\tr[\rho \cdot H_G]
& = \E_{(\bi, \bj) \sim E} \tr[\rho \cdot h_{\bi, \bj}]\nonumber\\
& = \E_{(\bi, \bj) \sim E} \tr[\rho \cdot \tfrac{1}{4} \cdot( I_{\bi} \ot I_{\bj} - X_{\bi} \ot X_{\bj} - Y_{\bi} \ot Y_{\bj} - Z_{\bi} \ot Z_{\bj})]\nonumber\\
&= \tfrac{1}{4} - \tfrac{1}{4} \E_{(\bi, \bj) \sim E} \sum_{P \in \{X, Y, Z\}} \tr[\rho \cdot P_{\bi} \ot P_{\bj}]\nonumber\\
&= \tfrac{1}{4} - \tfrac{1}{4} \E_{(\bi, \bj) \sim E} \sum_{P \in \{X, Y, Z\}} R(P_{\bi}, P_{\bj}).\label{eq:new-objective}
\end{align}
Let us derive some constraints on~$R$:
\begin{enumerate}
\item \textbf{PSD:} $R$ is Hermitian and PSD.
			It is Hermitian because $\rho'$ is Hermitian,
			and so $$R(P, Q)^\dagger = \tr[\rho' \cdot PQ]^\dagger = \tr[\rho' \cdot QP] = R(Q, P).$$
			It is PSD because for any vector $v = (v_P)_P$ with a component for each degree-0 and degree-1 Pauli matrix~$P$,
			\begin{equation*}
			v^\dagger R v
			= \sum_{P, Q} v_P^\dagger v_Q \tr[\rho' \cdot P Q]
			= \tr\Big[\rho' \cdot \Big(\sum_P v_P P\Big)^\dagger \Big(\sum_Q v_Q Q\Big)\Big] \geq 0,
			\end{equation*}
			because $M = \sum_P v_P P$ is a degree-$1$ matrix. 
\item \textbf{Unit length:} For each $P$, $R(P, P) = 1$.
\item \textbf{Commuting Paulis:} If $P$ and $Q$ are commuting then $R(P, Q) = R(Q, P)$.
\item \textbf{Anti-commuting Paulis:} If $P$ and $Q$ are anti-commuting then $R(P, Q) = R(Q, P)$.
\end{enumerate}
Indeed, $R$ satisfies these constraints if and only if $\rho'$ is a feasible solution to the level-2 ncSoS relaxation.
Thus, the problem of maximizing~\eqref{eq:new-objective} subject to~$R$ satisfying these four constraints
is equivalent to the level-2 ncSoS relaxation.

We note that the objective~\eqref{eq:new-objective} only contains degree-$2$ terms,
and so it remains unchanged if we replace it with~$R'$,
in which we zero out all entries in~$R$ of the form $R(I, P)$ and $R(P, I)$,
where $P$ is a degree-$1$ Pauli matrix.
In addition, $R'$ satisfies the above four properties if and only if~$R$ does,
and so it remains a feasible solution.
(The chief difficulty is showing that~$R'$ is still PSD,
but this follows from the fact that $v^\dagger R v \geq 0$ for all vectors in which $v_I = 0$.)
 Now consider the $3n \times 3n$ submatrix of $R'$ indexed by the degree-1 Pauli matrices.
 Then this matrix has the exact same form and objective as the matrix $M(\cdot, \cdot)$ from \Cref{sec:sdp_proofs};
 following the steps in that proof, one can then convert $R$ into a solution to the basic SDP.
 This completes the proof.
}
\section{Other Lemmas}

\subsection{Cardinality Reduction}\label{sec:card-reduc}
\ignore{
\begin{lemma}
Let $0 \leq \eps \leq 1$. Given a vertex $u \in S^{n-1}$,
consider the hyperspherical cap
\begin{equation*}
B(u, \eps) = \{v \in S^{n-1} : \Vert u - v \Vert_2^2 \leq \eps\}.
\end{equation*}
Then the measure of $B(u, \eps)$ can be lower-bounded by~XXX
and upper-bounded by~XXX.
\end{lemma}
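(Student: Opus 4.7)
The plan is to reduce the computation of $\unifS(B(u,\eps))$ to a one-dimensional integral using the rotational invariance of $\unifS$, then to bound that integral by elementary estimates. First, note that $\Vert u - v \Vert^2 = 2 - 2\langle u, v\rangle$, so $v \in B(u,\eps)$ if and only if $\langle u, v \rangle \geq 1 - \eps/2$. By rotational invariance we may take $u = e_1$, and then $\unifS(B(u,\eps)) = \Pr_{\bv \sim \unifS}[\bv_1 \geq 1 - \eps/2]$. Using the density of the marginal $\ultraS$ as stated in~\eqref{eq:ultraspheric-density}, this equals
\[
    \unifS(B(u,\eps)) = \frac{1}{Z_n} \int_{1 - \eps/2}^1 (1 - t^2)^{(n-3)/2}\, dt.
\]

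Next I would change variables by $t = 1 - r$ so that $1 - t^2 = r(2-r)$, giving
\[
    \unifS(B(u,\eps)) = \frac{1}{Z_n} \int_0^{\eps/2} (r(2-r))^{(n-3)/2}\, dr.
\]
For $r \in [0,1]$ we have the sandwich $r \le r(2-r) \le 2r$. Plugging these into the integral and evaluating $\int_0^{\eps/2} r^{(n-3)/2}\,dr = \tfrac{2}{n-1}(\eps/2)^{(n-1)/2}$ yields
\[
    \frac{1}{Z_n} \cdot \frac{2}{n-1}\left(\frac{\eps}{2}\right)^{\!(n-1)/2}
    \;\le\; \unifS(B(u,\eps)) \;\le\;
    \frac{2^{(n-3)/2}}{Z_n} \cdot \frac{2}{n-1}\left(\frac{\eps}{2}\right)^{\!(n-1)/2}.
\]

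The final step is to control the normalizing constant $Z_n = \int_{-1}^1 (1-t^2)^{(n-3)/2}\,dt$. This is a standard Beta integral equal to $B(1/2, (n-1)/2) = \sqrt{\pi}\,\Gamma((n-1)/2)/\Gamma(n/2)$, and by Stirling's approximation one has $Z_n = \Theta(1/\sqrt{n})$. Substituting this estimate into the display above gives clean bounds of the form
\[
    c_1 \sqrt{n}\,(\eps/2)^{(n-1)/2}/(n-1) \;\le\; \unifS(B(u,\eps)) \;\le\; c_2 \sqrt{n}\,\eps^{(n-1)/2}/(n-1)
\]
for universal constants $c_1, c_2 > 0$, which is the form I expect the XXXs to take (up to the exact constants the authors care to track).

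There is not really a hard step here: the argument is essentially a direct computation. The only mild subtlety is keeping track of the correct power of~$2$ between the upper and lower bounds on $r(2-r)$, and deciding how sharp the authors want the estimate of $Z_n$ to be. If a fully explicit (non-asymptotic) constant is needed, I would replace the Stirling estimate by the bound $\Gamma((n-1)/2)/\Gamma(n/2) \ge \sqrt{2/n}$ (valid for all $n \ge 2$ via log-convexity of $\Gamma$), which gives usable constants without any asymptotic hand-waving.
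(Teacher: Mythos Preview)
Your argument is correct and, in fact, goes further than the paper does. In the paper this lemma and its proof sit inside an \verb|\ignore{...}| block and were never finished: the statement still has literal ``XXX'' placeholders for the bounds, and the proof stops after deriving the exact expression
\[
\unifS(B(u,\eps)) \;=\; \frac{\Gamma(n/2)}{\sqrt{\pi}\,\Gamma((n-1)/2)}\int_0^{\phi}\sin^{n-2}\theta\,d\theta,
\qquad \cos\phi = \langle u, v\rangle,
\]
without ever extracting upper or lower bounds from it. So there is nothing to compare against beyond that point; your sandwich $r \le r(2-r) \le 2r$ and the Beta/Stirling control of $Z_n$ are exactly the kind of step the paper never filled in.

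Two small remarks. First, the paper parametrizes by the angle $\phi$ and writes the integrand as $\sin^{n-2}\theta$, whereas you parametrize by $t = \langle u, v\rangle$ using the $\ultraS$ density $(1-t^2)^{(n-3)/2}$ from~\eqref{eq:ultraspheric-density}; these are the same integral under $t = \cos\theta$, and your choice meshes better with the notation already set up in \Cref{sec:spherical}. Second, the paper's draft proof actually computes the boundary angle as $\cos\phi = 1 - \eps^2/2$, i.e.\ it silently treats the cap as $\{\|u-v\|\le \eps\}$ rather than $\{\|u-v\|^2\le \eps\}$ as written in the lemma. You followed the lemma statement literally and got $\langle u, v\rangle \ge 1-\eps/2$; either convention is fine for a draft lemma with XXXs in it, but be aware of the mismatch if you try to reconcile the two.
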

\begin{proof}
Suppose without loss of generality that $u = (1, 0, \ldots, 0)$.
Then the point
\begin{equation*}
v = (1-\eps^2/2, \sqrt{\eps^2-\eps^4/4},0, \ldots, 0)
\end{equation*}
has distance exactly~$\eps$ from~$u$.
Let $\phi$ be the angle between $u$ and $v$,
i.e.\ $\cos(\phi) = \langle u, v\rangle = 1 - \eps^2/2$.
By~XXX, the surface areas of the $B(u, \eps)$ is exactly
\begin{equation*}
\frac{2\pi^{(n-1)/2}}{\Gamma((n-1)/2)} \int_0^{\phi} \sin^{n-2}\theta \mathrm{d}\theta.
\end{equation*}
respectively. Dividing the first by the second, this means the measure of $B(u, \eps)$ on the sphere is
\begin{equation*}
\frac{\Gamma(n/2)}{\sqrt{\pi}\Gamma((n-1)/2)} \int_0^{\phi} \sin^{n-2}\theta \mathrm{d}\theta.
\end{equation*}
\end{proof}
}

In this section, we prove \Cref{lem:discretization}, using an argument which closely follows \cite[Appendix B]{OW08}.
All of these transformations are standard.

\begin{proof}
First, we give a series of transformations to yield a well-behaved finite graph. For each transformation, we argue that SDP value and product state value are within $\epsilon$ of the original graph. Finally, we show that in our final graph $G'$, using \Cref{cor:BH-nonuniform-easy-to-use}, $\prodval(G') \geq \heis(G') - \epsilon$, concluding the proof.

Let $G_0 = G$. We will first construct $G_1$, which restricts the Gaussian graph $\mathcal G^n_\rho$ to the sphere graph $S^{n-1}$. Next, $G_2$ will be a graph on a finite vertex set. Following that, our final graph $G' = G_3$ will remove self-loops, leaving us with a weighted, simple graph with finite vertex set. In this proof, we will identify a graph by the distribution on its edges. For $u,v \subseteq S^{n-1}$, we will write $G(u,v)$ for the probability weight $G$ puts on edges $(u,v)$.

We start with the construction for $G_1$.
For $G_0$, let $f : \mathbb R^n \rightarrow S^{n-1}$ be an SDP assignment obtaining $\hsdp(G_0)$.
(Note that this is also an optimal SDP assignment for $\prodsdp(G_0)$.)
Let $G_1$ be the graph in which $G_1(u,v) = G_0(f^{-1}(u),f^{-1}(v))$. Then if we take the identity map as the SDP embedding, we see that
$$\hsdp(G_1)
\geq \E_{(\bu,\bv) \sim G_1}[\tfrac{1}{4}-\tfrac{3}{4}\langle \bu,\bv\rangle]
= \E_{\bu \sim_\rho \bv}[\tfrac{1}{4} - \tfrac{3}{4}\langle f(\bu), f(\bv)\rangle]
= \hsdp(G_0) =: c_{\mathrm{H}}.$$
A similar argument shows that $\prodsdp(G_1) \geq \prodsdp(G_0) =: c_{\text{\sc Prod}}$.
Furthermore, for any assignment $h : S^{n-1} \rightarrow S^{2}$ on $G_1$, the assignment $h \circ f$ yields an assignment for $G_0$ and thus $\prodval(G_1) \leq \prodval(G_0)$.

To construct $G_2$, we use an argument originally from \cite{FS02}.
Pick some $\epsilon$-net $\mathcal N$ over $S^{n-1}$,
so that every point in $S^{n-1}$ is within distance $\eps$ to some point in~$\mathcal N$;
it is known that constructions exist with $|\mathcal{N}| \leq 1/\epsilon^{O(d)}$.
Then partition $S^{n-1}$ using Voronoi cells $\{C_v\}_{v \in \mathcal N}$ based on $\mathcal N$. For each $v \in \mathcal N$, the corresponding cell $C_v \subseteq S^{n-1}$ consists of all points in $S^{n-1}$ which are closer to $v$ than any other $u \in \mathcal N$.
Then $G_2$ is the finite graph on vertex set $\mathcal N$ in which $G_2(u,v) = G_1(C_u,C_v)$. We first observe that
$$\prodval(G_2) \leq \prodval(G_1) = s$$
since any assignment $f$ on $G_2$ can be extended to an assignment of equal value on $G_1$. Furthermore, we claim
$$\hsdp(G_2) \geq  c_{\mathrm{H}} - 3\epsilon.$$
To see this, consider the SDP assignment $f:\mathcal N \rightarrow S^{n-1}$
which maps each $v \in \mathcal N$ to itself.
We can extend this to a function with domain all of $S^{n-1}$ by setting $f(u) = v$ for each $u \in C_v$.
Then
\begin{equation}\label{eq:using-weird-assignment}
\hsdp(G_2)
\geq \E_{(\bu, \bv) \sim G_2}[\tfrac{1}{4} - \tfrac{3}{4} \langle f(\bu), f(\bv) \rangle]
= \E_{(\bx, \by) \sim G_1}[\tfrac{1}{4} - \tfrac{3}{4} \langle f(\bx), f(\by) \rangle].
\end{equation}
Let $C_{\bu}$ be the Voronoi cell $\bx$ falls inside and $C_{\bv}$ be the Voronoi cell $\by$ falls inside.
Then because $\mathcal N$ is an $\eps$-net, we can write $\bx = (\bu + \boldsymbol{\eta}_1)$ and $\by = (\bv + \boldsymbol{\eta}_2)$,
where $\boldsymbol{\eta}_1$ and $\boldsymbol{\eta}_2$ have length at most $\epsilon$.
Thus,
\begin{equation*}
\langle \bx,\by\rangle = \langle \bu + \boldsymbol{\eta}_1, \bv+\boldsymbol{\eta}_2\rangle \geq \langle \bu,\bv\rangle - 3\epsilon
= \langle f(\bx), f(\by) \rangle - 3\epsilon.
\end{equation*}
As a result,
\begin{equation*}
\eqref{eq:using-weird-assignment}
\geq \E_{(\bx, \by) \sim G_1}[\tfrac{1}{4} - \tfrac{3}{4} \langle \bx, \by \rangle] - 3\eps
= c_{\mathrm{H}} - 3\eps.
\end{equation*}
A similar argument shows that $\prodsdp(G_2) \geq c_{\text{\sc Prod}} - 3\eps$.

Finally, we use a simple construction appearing in \cite{KO06} (and originally due to \cite{ABH+05})
in order to remove self-loops.
Conveniently,
this construction will also make it easy to show that the product state value and maximum energy are close.
Our graph $G_3$ will be parameterized by an integer $M$ which we will select later but which is at least $1/\eps$.
For each vertex $v \in G_2$,
we will create $M$ many vertices $\{(v, j)\}_{j \in [M]}$,
each with weight $\frac{1}{M}$ of the original.
To sample a random edge in $G_3$, we simply sample $(\bu, \bv)$ from $G_2$,
let $\bi, \bj \in [M]$ be independent, uniformly random,
and output the edge between $(\bu, \bi)$ and $(\bv, \bj)$.

It is clear that $\prodval(G_3) \geq \prodval(G_2)$
because any assignment $f: \mathcal V \rightarrow S^{2}$
can be converted into an assignment $f'$ for $G_3$ of equal value
by setting $f'(u, i) = f$.
On the other hand, $\prodval(G_2)  \geq \prodval(G_3)$ as well.
To see this, consider a product state assignment $f: \mathcal V \times [M] \rightarrow S^2$ for $G_3$.
It has value
\begin{align*}
\E_{(\bu, \bv) \sim G_2}\E_{\bi, \bj \sim [M]}[\tfrac{1}{4} - \tfrac{1}{4} \langle f(\bu, \bi), f(\bv, \bj) \rangle]
= \E_{\bu, \bv \sim G_2}[\tfrac{1}{4} - \tfrac{1}{4} \langle \E_{\bi} f(\bu, \bi), \E_{\bj} f(\bv, \bj)\rangle].
\end{align*}
This is the value that the assignment $f': \mathcal V \rightarrow B^3$ defined as $f'(u) = \E_{\bi} f(u, \bi)$ achieves on the graph $G_2$.
As $f'$ has range $B^3$, there exists a function with range $S^2$ whose value is at least as high.
As a result, $\prodval(G_3) = \prodval(G_2)$.
A similar argument shows that the two SDP values remain the same as well.
Now, the total weight of self-loops in this graph is at most $\frac{1}{M}$,
and so by removing these edges and scaling the remaining weights to sum to one
we produce a graph $G'$ with no self-loops in which the SDP and product state values have increased by at most $\frac{1}{M} \leq \epsilon$.

Next, we use \Cref{cor:BH-nonuniform-easy-to-use} to relate the product state value to the optimal state value. 
Recall that we need to bound the quantity
\begin{equation}\label{eq:bh-bound}
20\cdot (n \cdot \max_{(u,i), (v,j)} \{A_{(u,i), (v,j)}\} \cdot \max_{(u,i)} \{p_{(u,i)}\})^{1/8} + \max_{(u,i)} \{p_{(u,i)}\}
\end{equation}
for $G'$. Here, $n$ is the number of vertices in $G_3$, $A_{(u,i), (v,j)}$ is the probability of an edge ending in $(u,i)$ conditioned on starting from $(v,j)$, and $p_{(u,i)}$ is one half the total weight of edges on $(u,i)$.
We'll give a bound for the above quantity \textit{before} removing self loops (i.e.\ for the graph $G_3$). However, observe that self loops consist of at most $1/M$ of the total edge weight and thus,
$$p'_{u,i} \leq p_{u,i}/(1 - \tfrac{1}{M}),$$
$$A'_{(u,i),(v,j)} \leq A'_{(u,i),(v,j)}/(1 - \tfrac{1}{M}),$$
where $p'_{u,i}$ and $A'_{(u,i),(v,j)}$ are the quantities after removing self-loops (i.e.\ for the graph $G'$).
Choosing $M$ sufficiently large makes this difference negligible.

Now, observe that if $p^{(2)}_{u}$ is the weight function associated with $G_2$, then $p_{u,i} = p^{(2)}_u/M$, since each vertex $(u,i)$ in $G_3$ inherits $1/M$ of the total edge weight of vertex $u$. Thus, by choosing $M$ sufficiently large, we can bound the last additive term by $\epsilon/2$.
Next, since whenever $(v,j)$ is connected to a vertex $(u, i)$, it is in fact connected to all vertices $\{(u, k)\}_{k \in [M]}$ with an equal weight,
we have an easy upper bound of $1/M$ on $A_{(u,i),(v,j)}$.

Finally, using that $n = M|\mathcal V|$, where $\mathcal V$ is $G_2$'s vertex set, we can rewrite \Cref{eq:bh-bound} as,
\begin{align*}
20\cdot \Paren{M|V| \cdot \max_{(u,i), (v,j)} \{A_{(u,i), (v,j)}\} &\cdot \frac{1}{M}\max_{u} \{p_u\}}^{1/8} + \epsilon/2\\
&= 20\cdot \PAREN{|V| \cdot \max_{u} \{p_u\} \cdot \frac{1}{M}}^{1/8} + \epsilon/2
\end{align*}
Thus, noting that $|V| \cdot \max_u \{p_u\}$ is just a constant $C$ which is independent of~$M$,
we can bound this equation by~$\epsilon$ by taking~$M$ sufficiently large.
\end{proof}

\subsection{Proofs of Various Fourier Properties}\label{sec:fourier-proofs}

\begin{proposition}\label{eq:rho-stab-diff}
Let $f : \{-1,1\}^n \rightarrow \mathbb R^k$ be a function and let $\rho,\gamma \in [0,1)$. Then
$$\ABS{\Stab_\rho [f] - \Stab_{\rho(1-\gamma)^2}[f]} \leq \frac{2\gamma}{1-\rho}\Var[f].$$
\end{proposition}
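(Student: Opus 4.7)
The plan is to expand both stabilities in the Fourier basis, use the fact that for $\rho \ge 0$ the difference is a non-negative sum, and then bound each Fourier coefficient's contribution by a uniform constant times $\|\widehat f(S)\|_2^2$.

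First I would write
\[
\Stab_\rho[f] - \Stab_{\rho(1-\gamma)^2}[f] \;=\; \sum_{S \subseteq [n],\, |S| \geq 1} \rho^{|S|}\bigl(1 - (1-\gamma)^{2|S|}\bigr)\,\|\widehat f(S)\|_2^2 ,
\]
noting that the $S = \emptyset$ term cancels. For $\rho,\gamma \in [0,1)$ every summand is non-negative, so the absolute value can be dropped. Then I would apply Bernoulli's inequality $(1-\gamma)^{2|S|} \geq 1 - 2|S|\gamma$ to obtain the pointwise bound $1 - (1-\gamma)^{2|S|} \leq 2|S|\gamma$, yielding
\[
\Stab_\rho[f] - \Stab_{\rho(1-\gamma)^2}[f] \;\leq\; 2\gamma \sum_{|S| \geq 1} |S|\,\rho^{|S|}\,\|\widehat f(S)\|_2^2.
\]

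The key technical step is the uniform inequality $|S|\,\rho^{|S|}(1-\rho) \leq 1$ for all $|S| \geq 1$ and $\rho \in [0,1)$, which I would verify by calculus: the function $k \mapsto k\rho^k(1-\rho)$ is maximized (in $\rho$) at $\rho = k/(k+1)$, giving the value $(k/(k+1))^{k+1} \leq 1/e < 1$. This gives $|S|\,\rho^{|S|} \leq 1/(1-\rho)$ for every $|S| \geq 1$, and substituting produces
\[
\Stab_\rho[f] - \Stab_{\rho(1-\gamma)^2}[f] \;\leq\; \frac{2\gamma}{1-\rho}\sum_{|S| \geq 1}\|\widehat f(S)\|_2^2 \;=\; \frac{2\gamma}{1-\rho}\,\Var[f],
\]
which is exactly the claim.

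I do not anticipate any real obstacle here—the proof is a short calculation once the two elementary inequalities (Bernoulli and the one-variable calculus bound on $k\rho^k(1-\rho)$) are in hand. The only mildly delicate point is the uniform-in-$|S|$ bound on $|S|\rho^{|S|}$, which is why the $(1-\rho)$ factor in the denominator appears; a naive bound like $|S|\rho^{|S|} \leq |S|$ would not give a finite constant, so one really does need to exploit the geometric decay.
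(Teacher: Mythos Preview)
Your proof is correct and takes essentially the same approach as the paper: expand in Fourier, note the $S=\emptyset$ term vanishes and the rest are nonnegative, and then use the uniform bound $k\rho^{k}(1-\rho)\le 1$ (equivalently, the paper's $k(1-x)^{k-1}\le 1/x$) to pull out the $1/(1-\rho)$ factor. The only cosmetic difference is that the paper linearizes $\rho^{|S|}-(\rho-\epsilon)^{|S|}$ via the Mean Value Theorem and inserts $\epsilon\le 2\gamma$ at the end, whereas you linearize $1-(1-\gamma)^{2|S|}$ via Bernoulli up front; the core calculus inequality is identical.
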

\begin{proof}
Let $\eps = \rho - (1-\gamma)^2\rho \in [0,\rho)$,
so that $\rho(1-\gamma)^2 = \rho - \epsilon$. Then 
$$\ABS{\Stab_\rho [f] - \Stab_{\rho - \epsilon}[f]} = \sum_{S \subseteq [n]} (\rho^{|S|} - (\rho - \epsilon)^{|S|})\norm{\widehat f(S)}_2^2.$$
When $\epsilon = 0$, the proposition clearly holds. Assume $\epsilon \in (0,\rho)$. For $S \not= \emptyset$, we'll bound the term $(\rho^{|S|} - (\rho - \epsilon)^{|S|})$. When $S = \emptyset$ this quantity is just $0$. Let $k = |S|$, and define the function $g(\delta) = (\rho - \epsilon + \delta)^k$. This function is continuous in $\delta$, and applying the Mean Value Theorem on $\delta \in [0, \epsilon]$ yields a $\delta'$ such that
$$\frac{\mathrm{d} g}{\mathrm d \delta}(\delta')= \frac{g(\epsilon) - g(0)}{\epsilon - 0} = \frac{\rho^k - (\rho - \epsilon)^k}{\epsilon}$$
Furthermore, $g'(\delta') = k(\rho - \epsilon + \delta')^{k-1}$. For any $x \in (0,1]$ and $k \in \mathbb N^+$, we have that $(1-x)^{k-1}k \leq 1/x$. Letting $1-x = \rho - \epsilon + \delta'$, we get
$$\frac{\rho^k - (\rho - \epsilon)^k}{\epsilon} = k(\rho - \epsilon + \delta')^{k-1} \leq \frac{1}{1-\rho + \epsilon - \delta'} \leq \frac{1}{1-\rho} \implies \rho^k - (\rho - \epsilon)^k \leq \frac{\epsilon}{1- \rho}.$$
Thus, 
$$\ABS{\Stab_\rho [f] - \Stab_{\rho - \epsilon}[f]} \leq \frac{\epsilon}{1-\rho} \sum_{S \subseteq [n] : S \not= \emptyset} \norm{\hat f(S)}_2^2 = \frac{\epsilon}{1- \rho} \Var[f].$$
Substituting in $\epsilon = \rho - \rho(1-\gamma)^2 \leq 2\gamma$ concludes the proof.
\end{proof}

\begin{proposition}[Stability Bound]\label{prop:stab-bound}
Let $f : \{-1,1\}^n \rightarrow \mathbb R$ and $\rho \in [-1,0]$. Then,
$$\Stab_\rho[f] \geq \rho \cdot \E[f(\bx)^2].$$
\end{proposition}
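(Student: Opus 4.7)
The plan is to prove this by comparing the Fourier expansion of $\Stab_\rho[f]$ with $\rho \cdot \E[f^2]$ termwise. Recall that
\[
\Stab_\rho[f] = \sum_{S \subseteq [n]} \rho^{|S|} \widehat{f}(S)^2, \qquad \E[f(\bx)^2] = \sum_{S \subseteq [n]} \widehat{f}(S)^2,
\]
so since each $\widehat{f}(S)^2 \geq 0$, it suffices to show the pointwise inequality $\rho^{|S|} \geq \rho$ for every integer $|S| \geq 0$ whenever $\rho \in [-1, 0]$.

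The key elementary lemma I would verify is: for $\rho \in [-1, 0]$ and any integer $k \geq 0$, $\rho^k \geq \rho$. I would split this into three cases: when $k = 0$, $\rho^0 = 1 \geq 0 \geq \rho$; when $k = 1$, we have equality; when $k$ is even and at least 2, $\rho^k \geq 0 \geq \rho$; and when $k$ is odd and at least 3, $\rho^k = -|\rho|^k \geq -|\rho| = \rho$, using that $|\rho| \leq 1$ implies $|\rho|^k \leq |\rho|$.

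Once this lemma is in hand, multiplying both sides by the non-negative quantity $\widehat{f}(S)^2$ preserves the inequality, and summing over all $S$ yields
\[
\Stab_\rho[f] = \sum_{S} \rho^{|S|} \widehat{f}(S)^2 \geq \sum_{S} \rho \cdot \widehat{f}(S)^2 = \rho \cdot \E[f(\bx)^2],
\]
as desired. There is no real obstacle here; the whole argument is a one-line Fourier computation plus the elementary observation about powers of negative numbers in $[-1, 0]$.
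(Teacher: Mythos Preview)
Your proof is correct and takes essentially the same approach as the paper: expand $\Stab_\rho[f]$ in the Fourier basis, use $\rho^{|S|} \geq \rho$ for $\rho \in [-1,0]$, and apply Parseval. The paper's version is more terse (it simply asserts the lower bound without spelling out the case analysis for $\rho^k \geq \rho$), but the argument is identical.
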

\begin{proof}
First, note that $\Stab_\rho[f] = \sum_S \rho^{|S|} \widehat f(S)^2$. Since $\rho \in [-1,0]$, we can lower bound this by $\rho\cdot \sum_S \widehat f(S)^2$. Finally, using Parseval's theorem, this is exactly equal to $\rho\cdot \E[f(\bx)^2]$.
\end{proof}

\subsection{Proofs of Lipschitz Properties}

\begin{lemma}[Lipschitz Property of $\Psi$]\label{lem:psi-lipschitz}
Let $\Psi : \mathbb R^n \rightarrow \mathbb R$ be defined as,
$$\Psi(v) = 
\left\{\begin{array}{cl}
\norm{v}_2^2 & \text{if $\norm{v}_2 \leq 1$,}\\
1 & \text{otherwise.}
\end{array}\right.$$
Then, $\Psi$ is Lipschitz continuous with constant $2$. In particular, for any $u,v \in \mathbb R^n$,
$$\abs{\Psi(u) - \Psi(v)} \leq 2 \norm{u - v}_2$$
\end{lemma}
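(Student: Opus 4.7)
The plan is to establish the Lipschitz bound by a short case analysis based on whether $u$ and $v$ lie inside or outside the closed unit ball $B^n = \{x : \|x\|_2 \le 1\}$. Write $B^n$ and its complement $(B^n)^c$; the three possibilities $\{u,v\} \subseteq B^n$, $\{u,v\} \subseteq (B^n)^c$, and the mixed case will each be handled separately.

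First I would dispose of the two easy cases. If both $u,v \in B^n$, then $\Psi(u)-\Psi(v) = \|u\|_2^2 - \|v\|_2^2 = (\|u\|_2 - \|v\|_2)(\|u\|_2 + \|v\|_2)$, so $|\Psi(u)-\Psi(v)| \le 2\,\bigl|\|u\|_2 - \|v\|_2\bigr| \le 2\|u-v\|_2$, where the first inequality uses $\|u\|_2 + \|v\|_2 \le 2$ and the second is the reverse triangle inequality. If both $u,v \notin B^n$, then $\Psi(u) = \Psi(v) = 1$ and the bound is trivial.

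The only remaining case is the mixed one; say $\|u\|_2 \le 1 < \|v\|_2$. Here my plan is to ``interpolate to the sphere.'' By continuity of $t \mapsto \|(1-t)u + tv\|_2$ and the intermediate value theorem, there exists $t^\star \in [0,1]$ such that the point $w = (1-t^\star)u + t^\star v$ satisfies $\|w\|_2 = 1$. Then $\Psi(w) = 1 = \Psi(v)$, so $|\Psi(u) - \Psi(v)| = |\Psi(u) - \Psi(w)|$, and since both $u$ and $w$ lie in $B^n$ the previous case gives $|\Psi(u)-\Psi(w)| \le 2\|u-w\|_2$. Finally $\|u - w\|_2 = t^\star \|u-v\|_2 \le \|u-v\|_2$, completing the bound.

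There is no real obstacle here; the only mild subtlety is that one cannot directly compare $\|u\|_2^2$ to $1$ via a product-of-sums identity (since $\|v\|_2$ may be arbitrarily large), which is precisely why the interpolation step is needed to reduce the mixed case to the both-inside case.
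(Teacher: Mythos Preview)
Your proof is correct. The paper takes a slightly different, more compressed route: it observes that $\Psi(v) = \mathrm{Sq}(\|v\|_2)$ where $\mathrm{Sq}:\R\to\R$ is the one-dimensional function that is $0$ for $x<0$, $x^2$ on $[0,1]$, and $1$ for $x>1$. Since $\mathrm{Sq}$ is $2$-Lipschitz (cited from \cite{OD14}) and $\|\cdot\|_2$ is $1$-Lipschitz by the reverse triangle inequality, the composition is $2$-Lipschitz. This handles all three of your cases at once, with the mixed case absorbed into the one-dimensional Lipschitz bound for $\mathrm{Sq}$. Your interpolation-to-the-sphere step is essentially a direct, self-contained substitute for that one-dimensional fact; it is a bit longer but has the virtue of not appealing to an external reference.
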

\begin{proof}
The proof is by reduction to the function $\mathrm{Sq} : \mathbb R \rightarrow \mathbb R$, defined as,
$$\mathrm{Sq}(x) = 
\left\{\begin{array}{cl}
0 & \text{if $x < 0$,}\\
x^2 & \text{if $x \in [0,1]$,}\\
1 & \text{if $x > 1$.}
\end{array}\right.$$
As in the proof of the Majority is Stablest theorem in \cite{OD14}, we see that $\mathrm{Sq}$ is $2$-Lipschitz. Fix any $u,v \in \mathbb R^n$. By applying the Lipschitz property of $\mathrm{Sq}$, we can write
$$\abs{\Psi(u) - \Psi(v)} = \abs{\mathrm{Sq}(\norm{u}_2) - \mathrm{Sq}(\norm{v}_2)} \leq 2\abs{\norm{u}_2 - \norm{v}_2}.$$
Finally, applying the reverse triangle inequality, we obtain an upper bound of $2\norm{u-v}_2$, which concludes the proof.
\end{proof}

\begin{lemma}[Lipschitz Property of $\Phi$]\label{lem:phi-lipschitz}
Let $\Phi : \mathbb R^n \rightarrow \mathbb R$ be defined as $\Phi(v) = v - \mathcal R(v)$, where $\mathcal R$ rounds vectors to the unit ball $B^n$ and is defined as
$$\mathcal{R}(v) = 
\left\{\begin{array}{cl}
v & \text{if $\norm{v}_2 < 1$,}\\
\tfrac{v}{\norm{v}_2} & \text{otherwise.}
\end{array}\right.$$
Then $\Phi$ is Lipschitz continuous with constant $2$. In particular, for any $u,v \in \mathbb R^n$,
$$\norm{\Phi(u) - \Phi(v)}_2 \leq 2 \norm{u - v}_2.$$
\end{lemma}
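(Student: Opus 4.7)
The plan is to observe that $\mathcal{R}$ is exactly the Euclidean nearest-point projection onto the closed convex set $B^n$, and then exploit the well-known non-expansivity of projections onto convex sets. Once we have that $\mathcal{R}$ is $1$-Lipschitz, the bound on $\Phi = \mathrm{id} - \mathcal{R}$ follows immediately from the triangle inequality: for any $u, v \in \R^n$,
\[
\|\Phi(u) - \Phi(v)\|_2 = \|(u - v) - (\mathcal{R}(u) - \mathcal{R}(v))\|_2 \leq \|u - v\|_2 + \|\mathcal{R}(u) - \mathcal{R}(v)\|_2 \leq 2 \|u - v\|_2,
\]
yielding the claimed Lipschitz constant of $2$.

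The only nontrivial step is to justify that $\mathcal{R}$ is $1$-Lipschitz. First I would verify that $\mathcal{R}(v)$ is indeed $\arg\min_{w \in B^n} \|v - w\|_2$: when $\|v\|_2 \le 1$ the minimum is attained at $v$ itself, and when $\|v\|_2 > 1$ a Lagrangian (or direct) argument shows that the minimizer lies on the boundary at the point $v/\|v\|_2$. Then I would appeal to the standard fact that projection onto any closed convex set $C$ is non-expansive, which follows from the variational characterization $\langle u - \mathcal{R}(u), w - \mathcal{R}(u)\rangle \le 0$ for all $w \in C$: applying this with $w = \mathcal{R}(v)$ and with the roles of $u$ and $v$ swapped, then adding the two inequalities, gives $\|\mathcal{R}(u) - \mathcal{R}(v)\|_2^2 \le \langle u - v, \mathcal{R}(u) - \mathcal{R}(v)\rangle$, and Cauchy--Schwarz finishes the bound.

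Since the paper may prefer a self-contained argument, I would alternatively give a short direct case analysis. If $\|u\|_2, \|v\|_2 \le 1$, then $\mathcal{R}(u) - \mathcal{R}(v) = u - v$ and the bound is trivial. If $\|u\|_2, \|v\|_2 > 1$, a short computation using $\mathcal{R}(w) = w/\|w\|_2$ shows $\|\mathcal{R}(u) - \mathcal{R}(v)\|_2 \le \|u - v\|_2$ by expanding squared norms and using $\|u\|_2, \|v\|_2 \ge 1$. The mixed case $\|u\|_2 \le 1 < \|v\|_2$ reduces to the previous ones by moving $v$ radially inward to the boundary point $v/\|v\|_2$ and noting that this move decreases the distance to any point in $B^n$, hence in particular to $u = \mathcal{R}(u)$.

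There is no real obstacle here; the main thing to be careful about is simply identifying $\mathcal{R}$ as the projection and recording the standard non-expansivity result, after which the bound on $\Phi$ is immediate from the triangle inequality. As a corollary, the coordinate-wise maps $\Phi_i(v) = v_i - \mathcal{R}(v)_i$ referenced in the dictator-test soundness proof are also $2$-Lipschitz, since $|\Phi_i(u) - \Phi_i(v)| \le \|\Phi(u) - \Phi(v)\|_2 \le 2 \|u - v\|_2$.
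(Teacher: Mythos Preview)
Your proposal is correct and follows the same overall scaffold as the paper: first establish that $\mathcal{R}$ is $1$-Lipschitz, then apply the triangle inequality to $\Phi = \mathrm{id} - \mathcal{R}$ to get the constant~$2$. The difference lies in how the $1$-Lipschitz property of $\mathcal{R}$ is proved. You identify $\mathcal{R}$ as the metric projection onto the closed convex set $B^n$ and invoke (with a sketch) the standard non-expansivity of such projections via the obtuse-angle criterion and Cauchy--Schwarz. The paper instead gives a fully elementary, self-contained argument: it reduces without loss of generality to $\mathbb{R}^2$, fixes $u = r(1,0)$ and $v = s(v_1,v_2)$, and then does a direct case-by-case computation (both norms $\leq 1$, both $\geq 1$, and the mixed case), in each case expanding $\|\mathcal{R}(u)-\mathcal{R}(v)\|_2^2$ and $\|u-v\|_2^2$ and comparing by algebra. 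Your route is shorter and more conceptual, and it generalizes immediately to projections onto arbitrary closed convex sets; the paper's route avoids appealing to any outside fact at the cost of a page of algebra. Your aside about the corollary for $\Phi_i$ also matches what the paper records immediately afterward.
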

\begin{proof}
First, we show that $\mathcal R(v)$ is in fact $1$-Lipschitz.
Without loss of generality, we can assume vectors $u,v \in \mathbb R^2$ and take $u = r\cdot (1,0)$ and $v = s\cdot(v_1,v_2)$ where $\norm{(v_1,v_2)}_2 = 1$. We want to show
$$\norm{\mathcal R(u) - \mathcal R(v)}_2 \leq \norm{u - v}_2.$$
Certainly, this holds when $r,s \leq 1$. Consider the case when $r,s \geq 1$. Then
\begin{align*}
\norm{\mathcal R(u) - \mathcal R(v)}_2 &= \norm{(1,0) - (v_1,v_2)}_2\\
&= \sqrt{(1-v_1)^2 + v_2^2}\\
&= \sqrt{(1-v_1)^2 + 1 - v_1^2}\\
&= \sqrt{2 - 2 v_1}.
\end{align*}
On the other hand,
\begin{align*}
\norm{u - v}_2 &= \sqrt{(r-s v_1)^2 + s^2 v_2^2}\\
&= \sqrt{r^2 - 2rs v_1 + s^2v_1^2 + s^2(1-v_1^2)}\\
&= \sqrt{r^2 - 2rs v_1 + s^2}.
\end{align*}
Thus, it suffices to show $2-2v_1 \leq r^2 - 2rs v_1 + s^2$. Rearranging so that all terms including $v_1$ are on the LHS, we get
$$v_1 \cdot 2(rs-1) \leq^? r^2 + s^2 - 2.$$
Since $s,r \geq 1$, the LHS is maximized for $v_1 = 1$ and thus this is true if and only if
$$0 \leq^? r^2 + s^2 -2rs.$$
Factoring the RHS yields $(r-s)^2$, which is indeed at least $0$.

Finally, we consider the case when $r \leq 1$ and $s \geq 1$ (the case of $r \geq 1$ and $s \leq 1$ is symmetric and we omit it). Then we again have $\norm{u-v}_2 = \sqrt{r^2 - 2rs v_1 + s^2}$. However, we now have
$$\norm{\mathcal R(u) - \mathcal R (v)}_2 = \sqrt{(r-v_1)^2 + v_2} = \sqrt{(r-v_1)^2 + 1 - v_2^2} = \sqrt{r^2 - 2rv_1 + 1}.$$
As a result, we want to show
\begin{align*}
r^2 - 2rv_1 + 1 &\leq^? r^2 - 2rsv_1 + s^2\\
v_1 \cdot 2(rs-r) + 1 &\leq^? s^2\\
2(rs-r) + 1 &\leq^? s^2\tag{LHS maximized when $v_1 = 1$}\\
2s - 2 + 1 &\leq^? s^2\tag{LHS maximized when $r = 1$}\\
0 &\leq^? s^2 - 2s + 1 = (s-1)^2.
\end{align*}
We conclude that $\mathcal R(\cdot)$ is 1-Lipschitz. Now we show that $\Phi(u) = u - \mathcal R(u)$ is $2$-Lipschitz.
\begin{align*}
\norm{\Phi(u) - \Phi(v)}_2 = \norm{u - \mathcal R(u) - v + \mathcal R(v)}_2
&\leq \norm{u - v}_2 + \norm{\mathcal R(u) - \mathcal R(v)}_2\tag{by the triangle inequality}\\
&\leq 2 \norm{u-v}_2.\tag{by the Lipschitz property of $\mathcal R(\cdot)$}
\end{align*}
This concludes the proof.
\end{proof}

\begin{corollary}\label{cor:phi-lipschitz}
The function $\Phi_i(v) : \mathbb R^n \rightarrow \mathbb R$, defined as $\Phi_i(v) = \Phi(v)_i$, is $2$-Lipschitz.
\end{corollary}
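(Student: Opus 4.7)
The plan is to deduce this corollary directly from Lemma~\ref{lem:phi-lipschitz}, which already establishes that the vector-valued map $\Phi : \mathbb R^n \to \mathbb R^n$ is $2$-Lipschitz with respect to the Euclidean norm on both sides. The key observation is that $\Phi_i$ is simply the composition of $\Phi$ with the $i$-th coordinate projection, and a coordinate projection is a $1$-Lipschitz map from $(\mathbb R^n, \|\cdot\|_2)$ to $\mathbb R$.

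Concretely, I would fix arbitrary $u, v \in \mathbb R^n$ and estimate
\[
    |\Phi_i(u) - \Phi_i(v)| = |(\Phi(u) - \Phi(v))_i| \leq \|\Phi(u) - \Phi(v)\|_2 \leq 2\|u - v\|_2,
\]
where the first inequality uses that a single coordinate is bounded in absolute value by the Euclidean norm of the vector, and the second inequality is precisely the content of Lemma~\ref{lem:phi-lipschitz}. Since $u$ and $v$ were arbitrary, this establishes that $\Phi_i$ is $2$-Lipschitz.

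There is essentially no obstacle here: the corollary is a one-line consequence of the lemma plus the elementary fact $|w_i| \leq \|w\|_2$ for any $w \in \mathbb R^n$. The only thing to double-check is that the Lipschitz constant in Lemma~\ref{lem:phi-lipschitz} is measured in the $\ell_2$ norm on the codomain, which it is (the statement there reads $\|\Phi(u) - \Phi(v)\|_2 \leq 2\|u-v\|_2$), so the inequality chain above goes through cleanly.
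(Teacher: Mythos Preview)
Your proposal is correct and matches the paper's proof essentially line for line: both bound $|\Phi_i(u)-\Phi_i(v)|$ by $\|\Phi(u)-\Phi(v)\|_2$ via the coordinate-vs-norm inequality and then invoke Lemma~\ref{lem:phi-lipschitz}.
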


\begin{proof}
Take any $u,v \in \mathbb R^n$. Then
\begin{equation*}
|\Phi_i(u) - \Phi_i(v)| \leq \sqrt{\sum_{i}^n (\Phi_i(u) - \Phi_i(v))^2}
= \norm{\Phi(u) - \Phi(v)}_2
\leq 2\norm{u -v}_2.\qedhere
\end{equation*}
\end{proof}

\begin{lemma}\label{lem:inner_prod_lipschitz}
Let $\bx \sim_\rho \by$ be $\rho$-correlated random variables in $k$ dimensions.  The function:
$$
\E_{\bx \sim_\rho \by} \left \langle \frac{\bx}{||\bx||}, \frac{\by}{||\by||} \right \rangle =\frac{2}{k} \left( \frac{\Gamma((k+1)/2)}{\Gamma(k/2)}\right)^2 \rho\,\, \,_2 F_1[1/2, 1/2, k/2+1, \rho^2]
$$
in $\rho$ is:
\begin{enumerate}
\item non-negative for $\rho \in [0,1]$,
\item an odd function,
\item $C$-Lipschitz for $\rho\in [-1, 1]$ for some constant $C$ which is a function of $k$ if $k\geq 3$,
\item $C$-Lipschitz for $\rho\in [-1+\epsilon, 1-\epsilon]$ for some constant $C(k, \epsilon)$ for any $\epsilon>0$ if $k=1, 2$, and
\item convex for $\rho \in [0,1]$.
\end{enumerate}
\end{lemma}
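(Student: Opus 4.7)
My plan is to reduce every claim to a manipulation of the explicit power-series expansion. Let $A = \tfrac{2}{k}\bigl(\Gamma((k+1)/2)/\Gamma(k/2)\bigr)^2 > 0$ and write
\[
F(\rho) := \E_{\bx \sim_\rho \by}\left\langle \tfrac{\bx}{\|\bx\|}, \tfrac{\by}{\|\by\|}\right\rangle = A\,\rho\, G(\rho^2), \qquad G(z) = \sum_{n \ge 0} c_n z^n, \qquad c_n = \tfrac{(1/2)_n^2}{(k/2+1)_n\, n!}.
\]
Each Pochhammer symbol here is a product of strictly positive numbers, so $c_n > 0$ for all $n \ge 0$, and $F(\rho)/A = \sum_n c_n \rho^{2n+1}$ is a power series containing only odd powers, with nonnegative coefficients.

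Items (1), (2), and (5) follow immediately from this representation. For (1), on $\rho \in [0,1]$ every term $c_n \rho^{2n+1}$ is nonnegative. For (2), the only-odd-powers expansion gives $F(-\rho) = -F(\rho)$; alternatively, $(\bx, -\by)$ is $(-\rho)$-correlated and $\langle \bx/\|\bx\|, -\by/\|\by\|\rangle = -\langle \bx/\|\bx\|, \by/\|\by\|\rangle$. For (5), termwise differentiation yields
\[
F''(\rho)/A = \sum_{n \ge 1} c_n (2n+1)(2n)\, \rho^{2n-1},
\]
where the $n=0$ summand vanishes from the $(2n)$ factor and every remaining summand is nonnegative on $[0,1]$.

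For the Lipschitz claims (3)–(4) the central object is
\[
F'(\rho)/A = \sum_{n \ge 0} c_n (2n+1)\, \rho^{2n},
\]
an even power series with nonnegative coefficients. Since $F$ is odd, $F'$ is even, and since all coefficients of $F'$ are nonnegative, $F'(\rho) \ge 0$ and $F'$ is monotone nondecreasing on $[0,1]$. Hence $\sup_{\rho \in [-r,r]}|F'(\rho)| = F'(r)$ for every $r \in [0,1]$, and it suffices to decide when the series defining $F'(1)$ converges. Applying Stirling's formula to $(a)_n = \Gamma(a+n)/\Gamma(a)$ gives $(1/2)_n/n! \sim 1/\sqrt{\pi n}$ and $(1/2)_n/(k/2+1)_n \sim C_k n^{-k/2-1/2}$, hence
\[
c_n(2n+1) \sim C'_k \cdot n^{-k/2}.
\]
This series converges at $\rho = 1$ precisely when $k/2 > 1$, i.e.\ when $k \ge 3$. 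For $k \ge 3$, therefore, $F'(1) < \infty$, and the mean value theorem yields item (3) with $C = F'(1)$. For $k = 1, 2$ the series still converges absolutely on $|\rho| < 1$, so $F'$ is continuous and uniformly bounded on any compact subinterval of $(-1,1)$; taking $C = F'(1-\epsilon)$ gives item (4).

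I do not anticipate any substantive obstacle: the argument is a routine application of the ratio test and Stirling's formula to a hypergeometric series, together with the observation that nonnegative power-series coefficients make sign, monotonicity, and convexity claims transparent. The only step requiring care is the asymptotic $c_n(2n+1) \sim n^{-k/2}$, which is what pins down the dichotomy between $k \ge 3$ (Lipschitz on all of $[-1,1]$) and $k \in \{1,2\}$ (Lipschitz only on strict sub-intervals $[-1+\epsilon, 1-\epsilon]$).
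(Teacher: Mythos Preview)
Your proof is correct and follows essentially the same strategy as the paper: expand the hypergeometric function as a power series with nonnegative coefficients, read off items (1), (2), (5) immediately, and establish (3)--(4) by bounding the termwise derivative. The one noteworthy difference is how you pin down the $k\ge 3$ versus $k\in\{1,2\}$ dichotomy. The paper invokes the closed-form differentiation identity $\frac{d}{dz}{}_2F_1[a,b;c;z]=\frac{ab}{c}{}_2F_1[a+1,b+1;c+1;z]$ from Abramowitz--Stegun and then (implicitly) appeals to the Gauss convergence criterion for ${}_2F_1[a,b;c;1]$ to decide when the derivative is finite at the endpoint; you instead compute the Stirling asymptotic $c_n(2n+1)\sim n^{-k/2}$ directly. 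Your route is more self-contained and makes the threshold $k>2$ completely explicit, at the cost of a short asymptotic computation; the paper's route is terser but leaves the reader to recall why the relevant ${}_2F_1$ at $z=1$ converges exactly when $k\ge 3$.
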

\begin{proof}
From the definition of $\,_2 F_1$:
\begin{equation}\label{eq:hypergeometric-sum-def}
\,_2F_1 [a, b, c, z]=\sum_{n=0}^\infty \frac{(a)_n (b)_n}{(c)_n} \frac{z^n}{z!},
\end{equation}
we see that $\,_2 F_1[1/2, 1/2, k/2+1, \rho^2] \geq 0$, establishing the first two properties.

For the third and fourth property, since the function $f(\rho) = \rho\,\, \,_2 F_1[1/2, 1/2, k/2+1, \rho^2]$ is differentiable in $\rho$ so it is Lipschitz if we can upper bound the absolute value of the derivative in the interval containing $\rho$.  By 15.2.1 in \cite{AS72},
\begin{equation*}
    \frac{d}{dz}\,_2F_1 [a, b, c, z] = \frac{ab}{c}\,_2F_1 [a+1, b+1, c+1, z]
\end{equation*}
\noindent Applying the product rule, taking the derivative of $f$ yields:
\begin{align}\label{eq:d_of_hypergeo}
\frac{d}{d\rho} f(\rho) =\frac{\rho^2 \, _2F_1\left(\frac{3}{2},\frac{3}{2};\frac{k}{2}+2;\rho^2\right)}{2 \left(\frac{k}{2}+1\right)}+\, _2F_1\left(\frac{1}{2},\frac{1}{2};\frac{k}{2}+1;\rho^2\right).
\end{align}
It is clear the derivative is non-negative where defined, from the definition of $\,_2F_1$, \Cref{eq:hypergeometric-sum-def}, since it is a convergent sum of non-negative numbers as in the first property.  For $k\geq 3$ the derivative is defined at all $\rho\in[-1, 1]$, whereas for $k=1, 2$ the derivative is defined for all $\rho\in (-1, 1)$.  Hence, we will consider an interval $\rho\in [-1+\delta, 1-\delta]$ where $\delta=\epsilon$ for $k=1, 2$ and $\delta=0$ for $k\geq 3$.  The derivative is increasing in $z$ for fixed $a$, $b$ and $c$ by \Cref{eq:hypergeometric-sum-def} so we may upper bound the derivative in the interval $\rho\in [-1+\delta, 1-\delta]$ as:
$$
\left| \frac{d}{d\rho} f(\rho)\right| \leq \left| \frac{d}{d\rho} f(\rho)\right|_{\rho=1-\delta}= \frac{(1-\delta)^2 \, _2F_1\left(\frac{3}{2},\frac{3}{2};\frac{k}{2}+2;(1-\delta)^2\right)}{2 \left(\frac{k}{2}+1\right)}+\, _2F_1\left(\frac{1}{2},\frac{1}{2};\frac{k}{2}+1;(1-\delta)^2\right),
$$
establishing the third and fourth property.

For the last property, we compute the second derivative of $f$ using \Cref{eq:d_of_hypergeo}:
\begin{equation*}
\frac{d^2}{d\rho^2}f(\rho) =  \frac{9\rho^3 \, _2F_1\left(\frac{5}{2},\frac{5}{2};\frac{k}{2}+3;\rho^2\right)}{4 \left(\frac{k}{2}+1\right) \left(\frac{k}{2}+2\right)} + \frac{(\rho^2 + 2\rho) \, _2F_1\left(\frac{3}{2},\frac{3}{2};\frac{k}{2}+2;\rho^2\right)}{2 \left(\frac{k}{2}+1\right)} + \, _2F_1\left(\frac{1}{2},\frac{1}{2};\frac{k}{2}+1;\rho^2\right),  
\end{equation*}
which is non-negative for $\rho \in [0,1)$ by the definition \Cref{eq:hypergeometric-sum-def}. Note that $\frac{d^2}{d\rho^2}f(\rho)$ fails to be defined at $\rho=1$ in general since $\,_2F_1[a, b;c;1]$ fails to be absolutely convergent when $a+b>c$ and $5/2+5/2$ may be larger than $k/2+3$ depending on $k$.  However, convexity in $[0, 1)$ and continuity in $[0, 1]$ of the function itself imply convexity in $[0, 1]$.
\end{proof}

\newcommand{\akmc}[1]{\alpha_{#1\mathrm{MC}}}
\newcommand{\rkmc}[1]{\rho_{#1\mathrm{MC}}}

\section{Rank-constrained \maxcut}\label{sec:rank-constrained}

We now show how our results extend to the rank-constrained \maxcut problem.
Recall from \Cref{def:rank-k-maxcut} that rank-$k$ \maxcut is the problem of computing the value
\begin{equation*}
\maxcut_k(G) = \max_{f:V \rightarrow S^{k-1}} \E_{(\bu, \bv) \sim E}[ \tfrac{1}{2} - \tfrac{1}{2} \langle f(\bu), f(\bv)\rangle].
\end{equation*}
This was introduced in the work of Briët, Oliveira, and Vallentin~\cite[Section 6]{BOV10}
as the Laplacian special case of a more general problem known as the \emph{rank-constrained Grothendieck problem}.
Their work lists numerous applications of the general rank-constrained Grothendieck problem,
though to our knowledge there are no applications of the Laplacian special case (i.e.\ the rank-constrained \maxcut problem)
aside from the $k = 3$ case, which corresponds to the product state value of \qmaxcut, as we have seen.

The BOV algorithm that we have already seen for rank-3 \maxcut
(equivalently, for the product state value of \qmaxcut)
is actually the $k=3$ special case of an algorithm for rank-$k$ \maxcut for general~$k$,
which we will also refer to as the ``BOV algorithm'' in this section.
For general $k$, the BOV algorithm first solves the standard \maxcut SDP
to produce a vector solution $f_{\mathrm{SDP}}:V \rightarrow S^{n-1}$,
which it then rounds into a random function $\boldf:V \rightarrow S^{k-1}$ using projection rounding.
To compute the algorithm's approximation ratio, they go edge-by-edge:
for each edge $(u, v) \in E$,, if we set $\rho_{u, v} = \langle f_{\mathrm{SDP}}(u), f_{\mathrm{SDP}}(v)\rangle$,
then the $f_{\mathrm{SDP}}$'s value for that edge is $\tfrac{1}{2} - \tfrac{1}{2} \rho_{u, v}$,
whereas the expectation of $\boldf$'s value for that edge is $\tfrac{1}{2} - \tfrac{1}{2} F^*(k, \rho)$
(see \Cref{thm:exact-formula-for-average-inner-product} for a definition of $F^*(k, \rho)$). 
This motivates studying the following quantity.

\begin{definition}[Approximation ratio for rank-$k$ \maxcut]
Let $k \geq 1$.
The constant $\akmc{k}$ is defined as the solution to the minimization problem
\begin{equation*}
\akmc{k} = \min_{-1 \leq \rho \leq 1} \frac{\tfrac{1}{2} - \tfrac{1}{2} F^*(k,\rho)}{\tfrac{1}{2} - \tfrac{1}{2}\rho},
\end{equation*}
and the constant $\rkmc{k}$ is defined as the minimizing value of~$\rho$.
\end{definition}

For $k = 1$, $\akmc{1} = 0.8785\ldots$ and $\rkmc{1} = -0.689\ldots$,
corresponding to the Goemans-Williamson algorithm.
For $k = 3$, $\akmc{3} = 0.9563\ldots$ and $\rkmc{3} = -0.584\ldots$,
corresponding to the rank-3 BOV algorithm.
\cite{BOV10} also compute the $k = 2$ values numerically and find $\akmc{2} = 0.9349\ldots$ and $\rkmc{2} = -0.617\ldots$.
Having defined these quantities, the expected value of $\boldf$ is
\begin{equation*}
\E_{\boldf} \E_{(\bu, \bv) \sim E}[\tfrac{1}{2} - \tfrac{1}{2}\langle \boldf(\bu), \boldf(\bv)\rangle]
= \E_{(\bu, \bv)\sim E}[\tfrac{1}{2} - \tfrac{1}{2} F^*(k, \rho_{\bu, \bv})]
\geq \akmc{k} \cdot \E_{(\bu, \bv)\sim E}[\tfrac{1}{2} - \tfrac{1}{2} \rho_{\bu, \bv}]
= \mcsdp(G),
\end{equation*}
and so the BOV algorithm has approximation ratio at least $\akmc{k}$.
This gives the following theorem.

\begin{theorem}[Performance of the BOV algorithm for rank-$k$ \maxcut~\cite{BOV10}]
The BOV algorithm for rank-$k$ \maxcut achieves approximation ratio $\akmc{k}$.
\end{theorem}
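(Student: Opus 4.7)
The proof essentially mirrors the analysis already carried out for the rank-$3$ case (i.e.\ the BOV algorithm for the product state value), just with the parameter $k$ left general, so my plan is to transcribe that edge-by-edge argument with the appropriate $k$-dependence. First I would observe that $\mcsdp(G)$ is a legitimate relaxation of $\maxcut_k(G)$: since $S^{k-1}$ embeds isometrically into $S^{n-1}$ for $k \le n$, any feasible rank-$k$ assignment $f:V \to S^{k-1}$ is also feasible for the \maxcut SDP, so $\maxcut_k(G) \le \mcsdp(G)$. Hence any lower bound on $\E[\val(\boldf)]$ of the form $\akmc{k} \cdot \mcsdp(G)$ immediately yields an $\akmc{k}$-approximation to $\maxcut_k(G)$.

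Next I would fix an optimal SDP solution $f_{\mathrm{SDP}}:V \to S^{n-1}$ (computed up to additive $\eps$ error in polynomial time), draw a random $k \times n$ Gaussian matrix $\bZ$, and set $\boldf(u) = \bZ f_{\mathrm{SDP}}(u) / \|\bZ f_{\mathrm{SDP}}(u)\|$. This is a well-defined function from $V$ to $S^{k-1}$ almost surely. For an arbitrary edge $(u,v) \in E$, write $\rho_{u,v} = \langle f_{\mathrm{SDP}}(u), f_{\mathrm{SDP}}(v)\rangle$. By \Cref{thm:exact-formula-for-average-inner-product},
\[
\E_{\bZ}\langle \boldf(u), \boldf(v)\rangle = F^*(k, \rho_{u,v}),
\]
so the contribution of edge $(u,v)$ to the expected rounded value is exactly $\tfrac{1}{2} - \tfrac{1}{2} F^*(k, \rho_{u,v})$.

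By the definition of $\akmc{k}$, for every $-1 \le \rho \le 1$ we have
\[
\tfrac{1}{2} - \tfrac{1}{2} F^*(k,\rho) \ge \akmc{k} \cdot \bigl(\tfrac{1}{2} - \tfrac{1}{2} \rho\bigr),
\]
applied edge-wise with $\rho = \rho_{u,v}$. Averaging over $(\bu,\bv) \sim E$ and swapping the order of expectation with respect to $\bZ$ and the random edge yields
\[
\E_{\bZ}\E_{(\bu,\bv)\sim E}[\tfrac{1}{2} - \tfrac{1}{2}\langle \boldf(\bu),\boldf(\bv)\rangle] \ge \akmc{k} \cdot \E_{(\bu,\bv)\sim E}[\tfrac{1}{2} - \tfrac{1}{2}\rho_{\bu,\bv}] = \akmc{k}\cdot\mcsdp(G) \ge \akmc{k}\cdot \maxcut_k(G).
\]
Hence some realization of the randomness produces a rank-$k$ assignment of value at least $\akmc{k} \cdot \maxcut_k(G)$, giving the claimed approximation ratio.

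There is really no substantive obstacle here; the tight edge-wise inequality is built into the definition of $\akmc{k}$ and the exact formula for projection rounding is recorded in \Cref{thm:exact-formula-for-average-inner-product}. The only minor things to take care of are (i) noting that $\bZ f_{\mathrm{SDP}}(u) = 0$ occurs with probability zero so projection rounding is well-defined almost surely, and (ii) absorbing the $\eps$ error from solving the SDP into the approximation ratio, which costs an arbitrarily small additive $O(\eps)$ and can be handled by standard boosting.
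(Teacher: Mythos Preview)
Your proposal is correct and follows essentially the same edge-by-edge argument the paper gives just before the theorem statement: apply \Cref{thm:exact-formula-for-average-inner-product} to compute $\E_{\bZ}\langle \boldf(u),\boldf(v)\rangle = F^*(k,\rho_{u,v})$, invoke the definition of $\akmc{k}$ as the worst-case edge ratio, and average over edges to conclude $\E[\val(\boldf)] \ge \akmc{k}\cdot \mcsdp(G) \ge \akmc{k}\cdot \maxcut_k(G)$. Your additional remarks about the relaxation inequality $\maxcut_k(G) \le \mcsdp(G)$, the almost-sure well-definedness of projection rounding, and the handling of the SDP's additive $\eps$ are all correct and only make the argument more explicit than the paper's sketch.
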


Our results on the product state value of \qmaxcut
imply that the BOV algorithm is optimal for rank-3 \maxcut.
In fact, our proofs extend in a straightforward manner
to show that the BOV algorithm is optimal for all constant values of~$k$.
One slight technicality is that our proofs require the worst-case~$\rho$ to be negative,
as this is the only regime for which our vector-valued Borell's inequality applies.
The following proposition establishes this for rank-$k$ \maxcut.

\begin{proposition}[Negative $\rho$ is the worst case]
For all $k \geq 1$, $-1 \leq \rkmc{k} \leq 0$.
\end{proposition}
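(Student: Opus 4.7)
\smallskip

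\noindent\textbf{Plan.} I will show that the objective
\[
R(\rho) \;:=\; \frac{\tfrac{1}{2} - \tfrac{1}{2}F^*(k,\rho)}{\tfrac{1}{2} - \tfrac{1}{2}\rho}
\]
satisfies $R(\rho)\ge 1$ for all $\rho\in[0,1)$, with $R(0)=1$, while $R(\rho)<1$ for at least one $\rho\in(-1,0)$. Because the minimum value is then strictly less than $1$ and cannot be attained in $[0,1)$, the minimizer $\rkmc{k}$ must lie in $[-1,0]$. The endpoint $\rho=1$ is harmless: for $k\ge 3$, L'H\^opital gives $\lim_{\rho\to 1^-}R(\rho)=F^{*\prime}(k,1)\ge 1$ by convexity, and for $k\in\{1,2\}$ the derivative blows up, so the limit is $+\infty$.

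\smallskip

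\noindent\textbf{Step 1: positive $\rho$.} From the probabilistic definition (\Cref{thm:exact-formula-for-average-inner-product}), $F^*(k,0)=0$ (independence) and $F^*(k,1)=1$ (the two random unit vectors coincide). By property 5 of \Cref{lem:inner_prod_lipschitz}, $F^*(k,\cdot)$ is convex on $[0,1]$. A convex function joining $(0,0)$ and $(1,1)$ lies weakly below the chord $y=\rho$, so $F^*(k,\rho)\le \rho$ for $\rho\in[0,1]$. This immediately yields $R(\rho)\ge 1$ on $[0,1)$, with equality at $\rho=0$.

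\smallskip

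\noindent\textbf{Step 2: negative $\rho$.} Using the odd symmetry $F^*(k,-\sigma)=-F^*(k,\sigma)$ (property 2), for $\sigma\in(0,1)$
\[
R(-\sigma) \;=\; \frac{1 + F^*(k,\sigma)}{1 + \sigma}.
\]
To see that $R(-\sigma)<1$ for some $\sigma\in(0,1)$ it suffices to establish the strict inequality $F^*(k,\sigma)<\sigma$ on $(0,1)$. I will get this from \emph{strict} convexity of $F^*(k,\cdot)$ on $(0,1)$: the formula for the second derivative displayed in the proof of \Cref{lem:inner_prod_lipschitz} is a sum of hypergeometric series with strictly positive coefficients (by the series definition~\eqref{eq:hypergeometric-sum-def}), each of which is strictly positive once $\sigma>0$. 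Any strictly convex function agreeing with $\sigma\mapsto\sigma$ at both endpoints $0$ and $1$ must lie strictly below the chord on the open interval, giving $F^*(k,\sigma)<\sigma$ for $\sigma\in(0,1)$ and hence $R(-\sigma)<1$.

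\smallskip

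\noindent\textbf{Conclusion and main obstacle.} Steps 1 and 2 together show $\inf_{[0,1]} R=1$ while $\inf_{[-1,0]} R<1$, so the infimum over $[-1,1]$ is attained in $[-1,0]$. The only nontrivial point is the strict-convexity claim used in Step 2; I expect this to be the main (but mild) obstacle, and the cleanest route is to inspect the coefficients in the power series of $\rho\mapsto\rho\cdot{}_2F_1(1/2,1/2;k/2+1;\rho^2)$, all of whose odd powers $\rho^{2j+1}$ with $j\ge 1$ have strictly positive coefficient. Alternatively one can differentiate twice as in the proof of \Cref{lem:inner_prod_lipschitz} and observe that the term $\,_2F_1(1/2,1/2;k/2+1;\rho^2)$ there is bounded below by $1$.
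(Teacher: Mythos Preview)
Your proof is correct and uses the same core idea as the paper: convexity of $F^*(k,\cdot)$ on $[0,1]$ together with the endpoint values $F^*(k,0)=0$, $F^*(k,1)=1$ gives $F^*(k,\rho)\le\rho$ on $[0,1]$, hence $R(\rho)\ge 1$ there. The paper's version dispenses with your Step~2 entirely: it simply notes that $\akmc{k}\le 1$ because it is an approximation ratio (equivalently, $R(0)=1$), so the minimizer cannot lie in $(0,1]$ and must be in $[-1,0]$. Your strict-convexity argument yielding $R(-\sigma)<1$ is valid but unnecessary for the stated proposition, since $\rho=0$ already witnesses $\min R\le 1$; your extra work does buy the slightly sharper conclusion that the minimizer is strictly negative.
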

\begin{proof}
The proposition follows from two claims about $F^*(k, \rho)$:
(i) that $F^*(k, \rho)$ always has the same sign as~$\rho$,
and (ii) that $|F^*(k, \rho)| \leq |\rho|$.
Together, these imply that $\tfrac{1}{2} - \tfrac{1}{2} F^*(k, \rho) \geq \tfrac{1}{2} - \tfrac{1}{2} \rho$
whenever $0 \leq \rho \leq 1$,
and so their ratio is always at least~$1$ for~$\rho$ in this range.
But $\akmc{k}$ is an approximation ratio and so is always between $0$ and~$1$,
and thus the minimizing value of~$\rho$ must be in the interval $[-1, 0]$.

Now we prove the claims. We recall from \Cref{thm:exact-formula-for-average-inner-product} that
\begin{equation*}
F^*(k, \rho) =\frac{2}{k} \left( \frac{\Gamma((k+1)/2)}{\Gamma(k/2)}\right)^2 \rho\,\, \,_2 F_1[1/2, 1/2, k/2+1, \rho^2].
\end{equation*}
Property (i) follows because, by \Cref{lem:inner_prod_lipschitz}, $F^*(k, \rho) \geq 0$ for $\rho \in [0,1]$, and $F^*$ is an odd function in $\rho$.  \Cref{lem:inner_prod_lipschitz} also establishes that  $F^*$ is convex for $\rho \in [0,1]$.  One may directly evaluate the expression for $F^*(k, \rho)$ above to see that $F^*(k, 0) = 0$ and $F^*(k, 1) = 1$.  Thus $F^*(k, \rho) \leq \rho$ for $\rho \in [0,1]$; because $F^*$ is odd in $\rho$, $F^*(k, \rho) \geq \rho$ for $\rho \in [-1,0]$, establishing (ii).
\end{proof}

Having established this, it is straightforward to extend our proofs to the case of rank-$k$ \maxcut,
and we omit the details.
(One very minor difference in the $k = 1,2$ case for the algorithmic gap is that the Lipschitz guarantee from \Cref{lem:inner_prod_lipschitz} only holds when bounded away from $-1$ and $1$. However, inspecting the proof of the algorithmic gap shows that this suffices.)
Our results for rank-$k$ \maxcut are stated as follows.

\begin{theorem}[Hardness for rank-$k$ \maxcut]
Let $k \geq 1$ be fixed.
Then the following three statements hold.
\begin{enumerate}
\item The \maxcut semidefinite program $\mcsdp(G)$, when viewed as a relaxation of $\maxcut_k(G)$, has integrality gap $\akmc{k}$.
\item The BOV algorithm for rank-$k$ \maxcut has algorithmic gap $\akmc{k}$.
\item Assuming the UGC, it is $\NP$-hard to approximate $\maxcut_k(G)$ to within a factor of $\akmc{k}+\eps$, for all $\eps > 0$.
\end{enumerate}
\end{theorem}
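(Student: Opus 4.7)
The plan is to mimic the proofs of Theorems~\ref{thm:integrality-gap-prod-restated}, \ref{thm:algo-gap-restated}, and \ref{thm:main-inapprox} essentially verbatim, replacing the role of the product state value $\prodval(G) = \tfrac{1}{2}\maxcut_3(G)$ by the general rank-$k$ value $\maxcut_k(G)$, and replacing $\abov = \akmc{3}$ everywhere by $\akmc{k}$. The preceding proposition ensures that $\rkmc{k} \in [-1, 0]$, so \cref{conj:vector-borell-intro} (taken for the relevant dimension $k$) is applicable.

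For statement~1, I would take the integrality gap instance to be the discretized $\rkmc{k}$-correlated Gaussian graph $\ggraph^n_{\rkmc{k}}$ from \Cref{sec:integrality-gaps}. The identity embedding $f_{\mathrm{SDP}}(x) = x/\|x\|$ shows $\mcsdp(\ggraph^n_{\rkmc{k}}) \geq \tfrac{1}{2} - \tfrac{1}{2}\rkmc{k} - O(\sqrt{\log n / n})$, exactly as in \Cref{lem:heis-sdp-lower}. For the upper bound on $\maxcut_k(\ggraph^n_{\rkmc{k}})$, the shell decomposition argument of \Cref{thm:n-dim-borell} combined with \cref{conj:vector-borell-intro} (for outputs into $S^{k-1}$) identifies $f_{\mathrm{opt}}(x) = x_{\leq k}/\|x_{\leq k}\|$ as the minimizer of the inner product, giving $\maxcut_k(\ggraph^n_{\rkmc{k}}) \leq \tfrac{1}{2} - \tfrac{1}{2} F^*(k, \rkmc{k})$ by \Cref{prop:opt-formula}. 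Applying \Cref{lem:discretization} and taking the ratio yields integrality gap at most $\akmc{k}$; the matching lower bound is the BOV algorithm itself.

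For statement~2, the algorithmic gap instance is the noisy hypercube $\hgraph^n_{\rkmc{k}}$ (with self-loops removed as in \Cref{thm:algo-gap-restated}). The identity assignment $f_{\mathrm{ident}}(x) = x/\sqrt{n}$ is an optimal SDP solution with value $\tfrac{1}{2} - \tfrac{1}{2}\rkmc{k}$ by the same $\Stab_\rho$-based argument as \Cref{lem:opt-sdp-value}, and projection rounding yields expected value $\tfrac{1}{2} - \tfrac{1}{2} F^*(k, \rkmc{k}) + O(\sqrt{\log n / n})$ via \Cref{thm:exact-formula-for-average-inner-product} and a Chernoff bound. The matching lower bound $\maxcut_k(\hgraph^n_{\rkmc{k}}) \geq \tfrac{1}{2} - \tfrac{1}{2}\rkmc{k}$ is achieved by an embedded dictator $f(x) = (x_i, 0, \dots, 0) \in S^{k-1}$. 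One minor wrinkle is that \Cref{lem:inner_prod_lipschitz} only gives a Lipschitz bound for $F^*(k, \cdot)$ bounded away from $\pm 1$ when $k \in \{1, 2\}$; since $\rkmc{1}, \rkmc{2} \in (-1, 0)$ the inner product concentrates well within this interior range, so the argument still goes through.

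For statement~3, I would follow the framework of \Cref{sec:ug-hardness} unchanged, using the same reduction from Unique Games (taking $S^{k-1}$-valued product state assignments replaced by $S^{k-1}$-valued cut assignments, and dropping the passage from product state to general state). The essential ingredient is the analogue of the dictator test \Cref{thm:dictator-test} for functions $f: \{-1,1\}^n \to B^k$: embedded dictators on the noisy hypercube $\hgraph^n_{\rkmc{k}}$ achieve value $\tfrac{1}{2} - \tfrac{1}{2}\rkmc{k}$, while any function with no notable coordinates has value at most $\tfrac{1}{2} - \tfrac{1}{2}F^*(k, \rkmc{k}) + \eps$. Inspecting the proof of \Cref{thm:dictator-test}, the argument is stated for general $k$ already (using the vector-valued invariance principle \Cref{thm:vector-invariance} and \cref{conj:vector-borell-intro} in dimension $k$), so no modification is needed. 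The soundness analysis then plugs into the UG reduction verbatim, using \Cref{prop:influence-bound} to bound the label-set sizes.

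The main obstacle, as for the other theorems, is not in the rank-$k$ extension itself but in the fact that everything is conditional on \cref{conj:vector-borell-intro}: every step above uses this conjecture in the relevant dimension. The only genuinely new technical content beyond the $k=3$ case is verifying that $\rkmc{k} \in [-1, 0]$, which was handled in the preceding proposition, and checking the mild Lipschitz issue for $k \in \{1, 2\}$ noted above.
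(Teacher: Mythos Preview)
Your proposal is correct and mirrors the paper's approach exactly: the paper explicitly states that the proofs for $\maxcut_k$ are straightforward extensions of the $k=3$ arguments and omits the details, noting the same Lipschitz caveat for $k\in\{1,2\}$ that you flag. Two small corrections: for the integrality gap upper bound you only need \cref{conj:vector-borell-intro} directly (the shell decomposition of \Cref{thm:n-dim-borell} is evidence for the $n=k$ special case, not a separate ingredient), and statement~2 (the algorithmic gap) is unconditional, just as \Cref{thm:algo-gap-restated} is.
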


\bibliographystyle{alpha}
\bibliography{wright}

\end{document}